\newtheorem{theorem}{Theorem}
\newtheorem{definition}{Definition}
\newtheorem{proposition}[theorem]{Proposition}
\newtheorem{lemma}[theorem]{Lemma}
\newtheorem{remark}{Remark}
\newtheorem{assumption}{Assumption}
\newenvironment{customthm}[1]
{\innercustomthm}
{\endinnercustomthm}
\newenvironment{customprop}[1]
{\innercustomprop}
{\endinnercustomprop}
\newenvironment{customlem}[1]
{\innercustomlem}
{\endinnercustomlem}
\newtheorem{example}{Example}
\DeclareMathOperator*{\argmin}{arg\,min}
\newcommand{\ind}{{\perp\!\!\!\perp}}
\newcommand{\norm}[1]{\left|\left| #1 \right|\right|}
\newcommand{\Diag}{\mathtt{Diag}}
\renewcommand{\hat}{\widehat}
\renewcommand{\tilde}{\widetilde}
\definecolor{emerald}{rgb}{0.31, 0.78, 0.47}
\title{\textsc{Efficient Inference on High-Dimensional Linear Models with Missing Outcomes}}
\author{Yikun Zhang$^{1,\ast}$, \;\; Alexander Giessing$^{2,\dagger}$, \;\;\text{and}\;  Yen-Chi Chen$^{1,\ddagger}$}
\date{
	{\small	$^1$Department of Statistics, University of Washington\\
		$^{\ast}$\href{mailto:yikun@uw.edu}{yikun@uw.edu}\; $^{\ddagger}$\href{mailto:yenchic@uw.edu}{yenchic@uw.edu}\\	
		$^2$Department of Statistics and Data Science, National University of Singapore\\
		$^{\dagger}$\href{mailto:giessing@nus.edu.sg}{giessing@nus.edu.sg} }\\~\\	
	\today}
\begin{document}
\maketitle

\vspace{-5mm}

\begin{abstract}
	This paper is concerned with inference on the regression function of a high-dimensional linear model when outcomes are missing at random. We propose an estimator which combines a Lasso pilot estimate of the regression function with a bias correction term based on the weighted residuals of the Lasso regression. The weights depend on estimates of the missingness probabilities (propensity scores) and solve a convex optimization program that trades off bias and variance optimally.
	Provided that the propensity scores can be pointwise consistently estimated at in-sample data points, our proposed estimator for the regression function is asymptotically normal and semi-parametrically efficient among all asymptotically linear estimators. Furthermore, the proposed estimator keeps its asymptotic properties even if the propensity scores are estimated by modern machine learning techniques.
    We validate the finite-sample performance of the proposed estimator through comparative simulation studies and the real-world problem of inferring the stellar masses of galaxies in the Sloan Digital Sky Survey.
	\\~\\
	\noindent \textbf {Keywords:} {High-dimensional inference; Missing data; Semi-parametric efficiency; Lasso.}
\end{abstract}

\section{Introduction}
\label{sec:Intro}

In this paper, we develop a novel method to conduct statistical inference on the regression function (or conditional mean) of a sparse high-dimensional linear model when outcomes are missing at random. Specifically, let $Y \in \mathbb{R}$ be an outcome variable and $X \in \mathbb{R}^d$ be a high-dimensional covariate vector. The object of interest is the regression function $m_0(x)=\mathrm{E}(Y|X=x)$.

Valid inference on $m_0(x)$ in the presence of high-dimensional covariates and missing outcomes is of practical importance. For one thing, collecting high-dimensional data has become common practice across the board in science and engineering. Examples range from portfolio optimization and cross-sectional home price analysis in finance \citep{fan2011sparse} to the development of biomarker classifiers in biology \citep{baek2009development}. Moreover, some recent works in causal inference lean toward generating high-dimensional covariates in order to control for potential confounders \citep{wyss2022machine}.
For another, semi-automatic processing and storing of vast amounts of unstructured data inevitably entails missingness \citep{huang2016unforeseen}. Such missingness may result from the study dropouts of participants in clinical trials \citep{higgins2008imputation} or noncompliance with the assigned treatments in a survey \citep{frumento2012evaluating}. The emerging field of semi-supervised learning in computer science, where additional data samples with the same distribution are given but without labels, is also a missing-outcome problem \citep{oliver2006semi}.

\begin{figure}
	\centering
	\includegraphics[width=0.995\linewidth]{./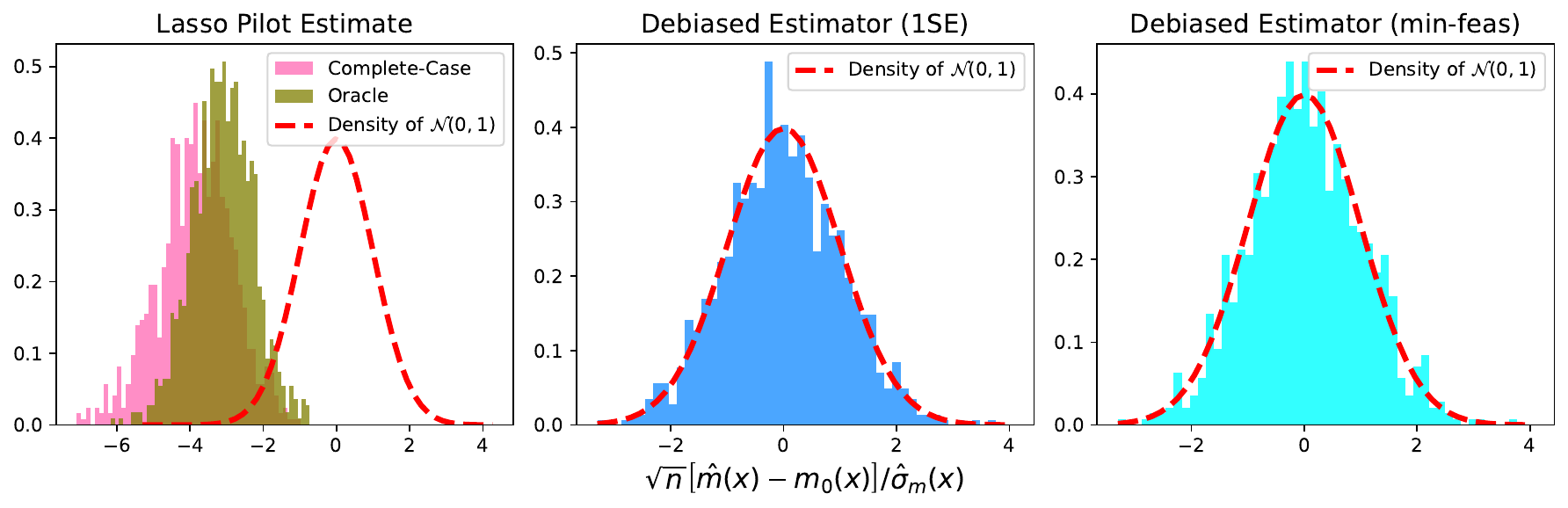}
	\caption{Comparison of our debiased estimators under two different choices of the tuning parameters (``1SE'' and ``min-feas'') with the conventional Lasso estimates based on complete-case or oracle data. Here, we adopt the sparse $\beta_0^{sp}$ and dense $x^{(4)}$ with $d=1000, n=900$, $X\sim\mathcal{N}_d\left(0,\Sigma^{\mathrm{cs}} \right)$, and $\epsilon\sim \mathcal{N}(0,1)$ under the MAR setting \eqref{MAR_correct}; see \autoref{subsec:sim_design} for details.}
	\label{fig:debias_illustration}
\end{figure}

When the outcome variables $Y_i,i=1,...,n$ are fully observed given a data sample (which is known as the oracle data setting), statistical estimation on $m_0(x)$ is tractable with regularization and sparsity constraints under high-dimensional settings \citep{wainwright2019high}. One of the most well-studied approaches is Lasso \citep{tibshirani1996regression}, which assumes the linear model and imposes $L_1$-regularization on the regression coefficients. When it comes to statistical inference, however, the Lasso solution leads to a biased estimate of $m_0(x)$ even when the linear model assumption is correct \citep{van2014asymptotically,zhang2014confidence}. Missing outcomes further exacerbate the bias of the Lasso solution; see the first panel of \autoref{fig:debias_illustration} for an illustration. While this bias could be partially mitigated via sample splitting or re-fitting, doing so would reduce the sample size and increase the computational cost, leading to less efficient estimates. Consequently, the goal of this paper is to address one central question: 
\begin{center}
	\emph{``How can we conduct statistically and computationally efficient inference on $m_0(x)$ despite missing outcomes?''} 
\end{center}
This is a challenging question; to answer it, we need to impose some structural assumptions. Throughout the paper, we will assume that the data sample follows a high-dimensional sparse linear regression model with outcomes ``missing at random'' (MAR).
\begin{assumption}[High-dimensional sparse linear regression model with MAR outcomes] 
	\noindent 
	\begin{enumerate}[label=(\alph*)]
		\label{assump:basic}
		\item The data sample $\{(Y_i,R_i,X_i)\}_{i=1}^n$ consists of independent and identically distributed (i.i.d.) observations drawn from $(Y, R, X) \in \mathbb{R} \times \{0,1\} \times \mathbb{R}^d$ generated by the linear model
		\begin{equation}
			\label{linear_reg}
			Y = X^T\beta_0 + \epsilon, \quad \quad \mathrm{E}\left(\epsilon \big| X\right) = 0, \quad\quad \mathrm{E}\left(\epsilon^2 \big| X\right) = \sigma_{\epsilon}^2, 
		\end{equation}
		where $\norm{\beta_0}_0 = \sum_{k=1}^d \mathbbm{1}_{\{\beta_{0k} \neq 0\}} = s_{\beta} \ll d$. 
		
		\item The missingness indicator $R$ is conditionally independent of $Y$ given $X$.
	\end{enumerate}
\end{assumption}

Assumption~\ref{assump:basic}(a) is standard in high-dimensional statistics. On the one hand, the sparse linear model can be extended to sparse additive models \citep{ravikumar2009sparse} and partially linear models \citep{muller2015partial,belloni2019valid}. On the other hand, our proposed debiasing method can be generalized to handle heteroscedastic errors; see \autoref{subsec:debias_method}. However, developing theory for these generalizations is beyond the scope of this paper. Assumption~\ref{assump:basic}(b) is common in the missing data literature \citep{tsiatis2007semiparametric,little2019statistical} and is related to the ignorability or unconfoundedness condition in causal inference \citep{imbens2004nonparametric}. Under the MAR assumption, the propensity score $\pi(Y, X) := \mathrm{P}(R=1|Y, X)$ depends only on the covariate vector $X$ so that $\pi(Y, X) \equiv \pi(X) := \mathrm{P}(R=1| X)$, which can thus be estimated from the fully observed data $\{(X_i,R_i)\}_{i=1}^n$ \citep{rosenbaum1983central}. 

Under Assumption~\ref{assump:basic}, our proposed debiasing method directly infers the regression function $m_0(x)=x^T\beta_0$ by combining a Lasso pilot estimate of $\beta_0\in \mathbb{R}^d$ based on the complete-case data with a bias correction term based on the weighted residuals of the Lasso regression. The weights depend on estimates of the propensity scores $\pi(X_i),i=1,...,n$ and are obtained as the solution to a convex debiasing program which trades off bias and variance in a mean-squared-error-optimal way; see \autoref{fig:debias_illustration}(b,c) for a preview.

A crucial difference between the existing works of debiasing the Lasso estimate in the literature \citep{zhang2014confidence,van2014asymptotically,javanmard2014confidence,javanmard2014hypothesis} and our proposal is that we focus our inference on the scalar regression function $m_0(x)=x^T\beta_0$ but not the high-dimensional regression (coefficient) vector $\beta_0\in \mathbb{R}^d$. By considering $m_0(x)$, we are able to conduct inference on any individual regression coefficient by setting $x$ equal to a standard basis vector in $\mathbb{R}^d$ as well as on joint effects of several regression coefficients. Additionally, inferring $m_0(x)$ itself is more computationally efficient than inferring the regression vector $\beta_0$.

\subsection{Contribution and Outline of the Paper}

\hspace{4pt} {\bf 1. Methodology:} We describe the detailed procedures of our proposed debiasing method and reveal its computational feasibility through the dual formulation. We also provide interpretations of our debiasing method from the perspective of a bias-variance trade-off as well as Neyman near-orthogonality; see \autoref{sec:Method} for details.	
	
{\bf 2. Asymptotic Theory:} We prove that our proposed debiased estimator is asymptotically unbiased and normally distributed as long as the propensity scores are pointwise consistently estimable (\emph{i.e.}, sample-splitting or cross-fitting are not neeeded); see \autoref{subsec:asymp_normal}. Moreover, the asymptotic variance of our debiased estimator attains the semi-parametric efficiency bound among all asymptotically linear estimators for any given dimension $d$. To establish the asymptotic normality of our debiased estimator, we derive the consistency of the Lasso pilot estimate with complete-case data and the dual solution of our debiasing program as building blocks; see \autoref{subsec:consistency}.
	
{\bf 3. Simulations and Real-World Applications:} To demonstrate the finite-sample performance of our debiasing method, we compare it with the existing debiasing methods in the literature through extensive Monte Carlo experiments; see \autoref{subsec:sim_design} and \autoref{subsec:sim_results}. We also apply the proposed debiasing method to the problem of inferring the stellar masses of galaxies in the Sloan Digital Sky Survey; see \autoref{sec:real_application}.

{\bf 4. Algorithmic Implementation:} We describe the implementation details of our debiasing method in \autoref{app:imple_debias} of the supplement and encapsulate them into both Python package ``\texttt{Debias-Infer}'' and R package ``\texttt{DebiasInfer}''. All other codes for our experiments are available at \url{https://github.com/zhangyk8/Debias-Infer/tree/main/Paper_Code}. 

\subsection{Related Work}
\label{subsec:related_works}

Statistical inference for the high-dimensional linear model with complete data has been studied by \cite{zhang2014confidence,van2014asymptotically,javanmard2014confidence,javanmard2014hypothesis,javanmard2018debiasing,buhlmann2013statistical,ning2017,bellec2022biasing}, where they proposed a debiased Lasso estimator to construct confidence intervals and multiple testing adjustments for individual regression coefficients or their projections to low-dimensional components. The main idea of their approaches is to correct the bias of the Lasso or ridge regression vector \citep{hoerl1970ridge} by estimating a relaxed inverse of the gram matrix or the projection matrix. This is also known as the one-step update in the classical study of semi-parametric inference \citep{VDV2000}. More recently, \cite{zhang2017simultaneous,sai2020debiasing} provided a bootstrap procedure for debiasing the Lasso solution, while \cite{javanmard2020flexible} presented a flexible framework for testing general hypotheses of regression coefficients. \cite{battey2023inference} considered inferring the linear model by treating each regression coefficient in turn as the parameter of interest and finding an optimal transformation to orthogonalize the remaining coefficients. Moreover, some of the above ideas have been extended to parameter estimation and inference for high-dimensional generalized linear models by \cite{belloni2016post,guo2016tests,xia2020revisit,shi2021statistical,cai2021statistical,guo2021inference,ma2021global}.

The asymptotic normality of the aforementioned debiasing methods requires the regression vector $\beta_0$ to lie in the ultra-sparse regime $s_{\beta}=o\left(\frac{\sqrt{n}}{\log d}\right)$ as our proposed debiasing method. Under the known covariance matrix and Gaussian design, \cite{javanmard2018debiasing} alleviated the sparsity requirement to $s_{\beta} =o\left(\frac{n}{(\log d)^2}\right)$. In addition, \cite{cai2016confidence} considered constructing confidence intervals for the regression function $m_0(x)=x^T\beta_0$ in the moderate-sparse region $\frac{\sqrt{n}}{\log d} \ll s_{\beta} \lesssim \frac{n}{\log d}$ and establish the minimaxity and adaptivity for their confidence intervals with a prior knowledge of the sparsity level $s_{\beta}$; see also \cite{nickl2013confidence,cai2018accuracy} for the related discussion.

Statistical estimation of the regression function under the MAR mechanism and its semi-parametric efficiency have also been investigated in the literature \citep{robins1994estimation,robins1995semiparametric,robins1995analysis,graham2011efficiency,muller2012efficient}. These studies addressed the problems from missing outcomes to unobserved covariates under the low-dimensional data setting. In the high-dimensional data setting, \cite{loh2012high} proposed an estimator of the linear regression coefficient based on projected gradient descent in the presence of MAR covariates. High-dimensional estimation and inference with missing covariates were also studied by \cite{wang2019rate}, but the covariates are assumed to be missing complete at random (MCAR). Recently, \cite{celentano2023challenges} considered semi-parametric estimation of the population mean under MAR outcomes in the high-dimensional inconsistency regime and proposed a debiasing remedy. The most closely related work to our paper is \cite{chakrabortty2019high}, which also studied the high-dimensional estimation and inference problems with missing outcomes. They proposed a general M-estimation framework to conduct statistical inference via a Lasso-type debiased and doubly robust estimator. Computationally, their debiased estimator requires the estimation of a $d\times d$ debiasing matrix, while our debiasing method is more efficient because we only solve for a $n$-dimensional debiasing vector or the $d$-dimensional dual vector through a convex program. Theoretically, both \cite{chakrabortty2019high} and we impose the same condition on the rate of convergence of the propensity score estimation in pursuit of the asymptotic normality. Nevertheless, we demonstrate through simulation studies that our debiased estimator is still asymptotically normal even when the propensity scores are estimated by nonparametric methods at in-sample points. Furthermore, the debiased estimator in \cite{chakrabortty2019high} asymptotically achieves the coordinatewise semi-parametric efficiency bound, which may not be efficient in the worst possible query direction \citep{jankova2018semiparametric}, while our debiased estimator is asymptotically efficient whatever the query direction is.

\subsection{Notation}

The general probability measure and expectation with respect to the distribution of $(Y,R,X)$ are denoted by $\mathrm{P}$ and $\mathrm{E}$, respectively. We write $Y\ind R$ when the random variables $Y$ and $R$ are independent. We use $h(x)=O(g(x))$ if there exists an absolute constant $A>0$ such that $|h(x)| \leq A g(x)$ for all sufficiently large $x$. In contrast, $h(x)=o(g(x))$ when $\lim_{x\to\infty} |h(x)|/g(x)=0$. The notation $o_P(1)$ is short for a sequence of random vectors that converges to zero in probability. The expression $O_P(1)$ denotes a sequence that is bounded in probability. 
The norm $\norm{x}_q = \left(\sum_{k=1}^d x_k^q\right)^{1/q}$ with $q>0$ stands for the $L_q$-norm in the Euclidean space $\mathbb{R}^d$, though it is no longer a norm when $0<q<1$. In particular, $\norm{x}_{\infty} = \max_{1\leq k \leq d} |x_k|$ and $\norm{x}_0=\sum_{k=1}^n \mathbbm{1}_{\{x_k\neq 0\}}$ indicates the number of nonzero elements in $x\in \mathbb{R}^d$. Furthermore, $\norm{Z}_q =\left(\mathrm{E}|Z|^q\right)^{1/q}$ with $q\geq 1$ is the $L_q$-norm for a random variable $Z$. We define the $\psi_{\alpha}$-Orlicz norm for any random variable $Z$ and $\alpha \in (0,2]$ by
\begin{equation}
	\label{Orlicz}
	\norm{Z}_{\psi_{\alpha}}=\inf\left\{t>0: \mathrm{E}\left[\exp\left(\frac{|Z|^{\alpha}}{t^{\alpha}} \right) \right] \leq 2 \right\}.
\end{equation} 
For $\alpha \in (0,1)$, there is an alternative definition that modifies \eqref{Orlicz} into a legitimate norm which can be used interchangeably with \eqref{Orlicz} up to an absolute constant that only depends on $\alpha$; see page 266 in \cite{van1996weak}. We also use the notation $a_n\lesssim b_n$ or $b_n\gtrsim a_n$ when there exists an absolute constant $A>0$ such that $a_n\leq A b_n$ when $n$ is large. If $a_n\gtrsim b_n$ and $a_n \lesssim b_n$, then $a_n,b_n$ are asymptotically equal and it is denoted by $a_n\asymp b_n$. Finally, we denote the unit sphere in $\mathbb{R}^d$ by $\mathbb{S}^{d-1} = \{x\in \mathbb{R}^d: \norm{x}_2=1\}$ and a ball centered at $x$ with radius $r$ in $\mathbb{R}^d$ by $B_d(x,r)=\{y\in \mathbb{R}^d: \norm{y-x}_2\leq r\}$.


\section{Methodology}
\label{sec:Method}

In this section, we outline our debiasing inference method and discuss a feasible solution to the key debiasing program through its dual form. Furthermore, we motivate our debiasing method from the perspectives of a bias-variance trade-off as well as Neyman near-orthogonality. 

\subsection{Debiasing Inference Procedure}
\label{subsec:debias_method}

Recall that we aim to conduct statistical inference on the regression function $m_0(x) = \mathrm{E}(Y|X=x)=x^T\beta_0$ given a random sample $\{(Y_i,R_i,X_i)\}_{i=1}^n \subset \mathbb{R}\times \{0,1\} \times \mathbb{R}^d$, where $Y_i$ is the outcome variable with missing indicator $R_i$ and $X_i$ is the fully observed high-dimensional covariate vector for $i=1,...,n$. To this end, we propose the following debiasing procedure.

\begin{itemize}
	\item {\bf Step 1:} Compute the Lasso pilot estimate $\hat{\beta} \in \mathbb{R}^d$ with complete-case data
	\begin{equation}
		\label{lasso_pilot}
		\hat{\beta} = \argmin\limits_{\beta\in \mathbb{R}^d} \left[\frac{1}{n}\sum\limits_{i=1}^n R_i (Y_i-X_i^T \beta)^2+ \lambda\norm{\beta}_1 \right],
	\end{equation}
	where $\lambda>0$ is a regularization parameter.
	
	\item {\bf Step 2:} Obtain consistent estimates $\hat{\pi}_i,i=1,...,n$ of the propensity scores $\pi_i=\pi(X_i)=\mathrm{P}(R_i=1|X_i), i=1,...,n$ by any machine learning method (not necessarily a parametric model) on the data $\{(X_i,R_i)\}_{i=1}^n \subset \mathbb{R}^d \times \{0,1\}$. 
	
	\item {\bf Step 3:} Solve for the debiasing weight vector $\hat{\bm{w}} \equiv \hat{\bm{w}}(x) \in \mathbb{R}^n$ through the following debiasing program with a tuning parameter $\gamma >0$ as:
	\begin{equation}
		\label{debias_prog}
		\min_{\bm{w} \in \mathbb{R}^n} \left\{\sum_{i=1}^n \hat{\pi}_iw_i^2: \norm{x- \frac{1}{\sqrt{n}}\sum_{i=1}^n w_i\cdot \hat{\pi}_i\cdot X_i }_{\infty} \leq \frac{\gamma}{n} \right\}.
	\end{equation}
	
	\item {\bf Step 4:} Define the debiased estimator for $m_0(x)$ as:
	\begin{equation}
		\label{debias_est}
		\hat{m}^{\text{debias}}(x;\hat{\bm{w}}) = x^T \hat{\beta} + \frac{1}{\sqrt{n}} \sum_{i=1}^n \hat{w}_i(x)R_i \left(Y_i-X_i^T \hat{\beta} \right).
	\end{equation}
	
	\item {\bf Step 5:} Construct the asymptotic $(1-\tau)$-level confidence interval for $m_0(x)$ as:
	\begin{equation}
		\label{asym_ci}
		\left[\hat{m}^{\text{debias}}(x;\hat{\bm{w}}) \pm \Phi^{-1}\left(1-\frac{\tau}{2}\right) \cdot \sigma_{\epsilon}\cdot \sqrt{\frac{1}{n}\sum_{i=1}^n \hat{\pi}_i \hat{w}_i(x)^2}\right],
	\end{equation}
	where $\Phi(\cdot)$ denotes the cumulative distribution function (CDF) of $\mathcal{N}(0,1)$. If the noise level $\sigma_{\epsilon}^2$ is unknown, then replace it by any consistent estimator $\hat{\sigma}_{\epsilon}^2$. 
\end{itemize}

{\bf Step 3} above is motivated by the principle of bias-variance trade-off; see \autoref{subsec:bias_var_interp}.  In \autoref{subsec:dual_form}, we derive the dual formulation of our debiasing program \eqref{debias_prog} in {\bf Step 3}. The dual is an unconstrained convex program which facilitates both the theoretical analysis of the statistical properties and the practical implementation of our debiasing procedure. Finally, we prove in \autoref{subsec:asymp_normal} that the debiased estimator $\hat{m}^{\text{debias}}(x;\hat{\bm{w}})$ is asymptotically normal and its asymptotic variance can be estimated consistently via $\sum_{i=1}^n \hat{\pi}_i \hat{w}_i^2(x)$. This guarantees the asymptotic validity of the confidence interval in {\bf Step 5}.

\begin{remark}[Debiasing program with heteroscedastic errors]
	\label{remark:hetero_error}
	Under the heteroscedastic errors $\mathrm{E}\left(\epsilon_i^2|X_i\right) = \sigma_{\epsilon,i}^2$, our debiasing program \eqref{debias_prog} needs modifying as follows:
	\begin{equation}
		\label{debias_prog_modified}
		\min_{\bm{w} \in \mathbb{R}^n} \left\{\sum_{i=1}^n \hat{\pi}_i \hat{\sigma}_{\epsilon,i}^2 w_i^2: \norm{x- \frac{1}{\sqrt{n}}\sum_{i=1}^n w_i\cdot \hat{\pi}_i\cdot X_i }_{\infty} \leq \frac{\gamma}{n} \right\},
	\end{equation}
	where $\hat{\sigma}_{\epsilon,i}^2$ is an estimate of $\sigma_{\epsilon,i}^2$ for $i=1,...,n$. One simple choice is $\hat{\sigma}_{\epsilon,i}^2 = \left(Y_i-X_i^T\hat{\beta}\right)^2$, where $\hat{\beta}$ is any consistent estimate of $\beta_0$, such as the output from the Lasso pilot estimate as in {\bf Step 1}, the refitted Lasso~\citep{belloni2013least}, or the weighted adaptive Lasso \citep{wagener2013adaptive}.
\end{remark}

\subsection{Dual Formulation of the Debiasing Program \eqref{debias_prog}}
\label{subsec:dual_form}

By standard duality theory in convex optimization, we obtain the following result.
\begin{proposition}
	\label{prop:dual_debias}
	The dual form of the debiasing program \eqref{debias_prog} is given by
	\begin{equation}
		\label{debias_prog_dual}
		\min_{\ell \in \mathbb{R}^d} \left\{\frac{1}{4n} \sum_{i=1}^n \hat{\pi}_i \left[X_i^T \ell\right]^2 + x^T \ell +\frac{\gamma}{n}\norm{\ell}_1 \right\}.
	\end{equation}
	If strong duality holds, then the solution $\hat{\bm{w}}(x) \in \mathbb{R}^n$ to the primal debiasing program \eqref{debias_prog} and the solution $\hat{\ell}(x) \in \mathbb{R}^d$ to the dual debiasing program \eqref{debias_prog_dual} satisfy the following identity:
	\begin{equation}
		\label{primal_dual_sol2}
		\hat{w}_i(x) = -\frac{1}{2\sqrt{n}} \cdot X_i^T\hat{\ell}(x), \quad i=1,...,n.
	\end{equation}
\end{proposition}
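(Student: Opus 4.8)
The plan is to derive the dual via Lagrangian duality for the quadratic program with box constraints, then read off the primal–dual relationship from the stationarity (KKT) conditions. First I would rewrite the constraint $\norm{x - \frac{1}{\sqrt n}\sum_i w_i \hat\pi_i X_i}_\infty \le \gamma/n$ as the $2d$ linear inequalities $-\frac{\gamma}{n}\mathbf{1} \le x - \frac{1}{\sqrt n}\sum_i w_i \hat\pi_i X_i \le \frac{\gamma}{n}\mathbf{1}$, and introduce multipliers $\mu^+,\mu^- \in \mathbb{R}^d_{\ge 0}$ for the two sides. Writing $\ell := \mu^+ - \mu^-$ (so that $\norm{\ell}_1 \le \mathbf{1}^T(\mu^+ + \mu^-)$, with equality achievable at the optimum since we are minimizing), the Lagrangian is
\begin{equation*}
  L(\bm w, \mu^+,\mu^-) = \sum_{i=1}^n \hat\pi_i w_i^2 + (\mu^+ - \mu^-)^T\!\left(x - \tfrac{1}{\sqrt n}\sum_{i=1}^n w_i \hat\pi_i X_i\right) + \tfrac{\gamma}{n}\,\mathbf{1}^T(\mu^+ + \mu^-).
\end{equation*}
Minimizing over $\bm w$ is an unconstrained strictly convex quadratic in each $w_i$; setting $\partial L/\partial w_i = 2\hat\pi_i w_i - \frac{1}{\sqrt n}\hat\pi_i X_i^T\ell = 0$ gives $\hat\pi_i w_i = \frac{1}{2\sqrt n}\hat\pi_i X_i^T\ell$, hence (on the support where $\hat\pi_i > 0$) $w_i = \frac{1}{2\sqrt n}X_i^T\ell$. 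I will need to be slightly careful about indices with $\hat\pi_i = 0$: those terms drop out of both objective and constraint, so the relation $\hat\pi_i w_i = \frac{1}{2\sqrt n}\hat\pi_i X_i^T\ell$ holds in all cases and is what actually enters the computation. Note my $\ell$ here has the opposite sign to the paper's; I would simply substitute $\ell \mapsto -\ell$ at the end to match equation~\eqref{primal_dual_sol2}.

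Next I would substitute this minimizer back into $L$. The quadratic term becomes $\sum_i \hat\pi_i w_i^2 = \frac{1}{4n}\sum_i \hat\pi_i (X_i^T\ell)^2$, the cross term $-\frac{1}{\sqrt n}\ell^T\sum_i w_i\hat\pi_i X_i = -\frac{1}{2n}\sum_i \hat\pi_i(X_i^T\ell)^2$, and the $x^T\ell$ term persists; combining the first two gives $-\frac{1}{4n}\sum_i\hat\pi_i(X_i^T\ell)^2 + x^T\ell$. Thus the dual function (before the sign flip) is $g(\mu^+,\mu^-) = -\frac{1}{4n}\sum_i\hat\pi_i(X_i^T\ell)^2 + x^T\ell + \frac{\gamma}{n}\mathbf{1}^T(\mu^+ + \mu^-)$, to be maximized over $\mu^\pm \ge 0$. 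For fixed $\ell = \mu^+ - \mu^-$, the only place $\mu^+ + \mu^-$ appears is the last term with a positive coefficient being \emph{subtracted} after we flip to a minimization — i.e. maximizing $g$ forces $\mathbf{1}^T(\mu^+ + \mu^-)$ to its smallest value consistent with $\mu^+ - \mu^- = \ell$ and $\mu^\pm \ge 0$, which is exactly $\norm{\ell}_1$ (split $\ell$ into positive and negative parts). Reparametrizing the maximization over $\ell \in \mathbb{R}^d$ and negating to turn $\max$ into $\min$ yields $\min_\ell \big\{\frac{1}{4n}\sum_i\hat\pi_i(X_i^T\ell)^2 - x^T\ell + \frac{\gamma}{n}\norm{\ell}_1\big\}$; replacing $\ell$ by $-\ell$ gives precisely \eqref{debias_prog_dual}, and correspondingly turns $w_i = \frac{1}{2\sqrt n}X_i^T\ell$ into $\hat w_i(x) = -\frac{1}{2\sqrt n}X_i^T\hat\ell(x)$, which is \eqref{primal_dual_sol2}.

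The one genuine subtlety — the step I expect to be the main obstacle to write cleanly rather than the computations above — is justifying that the primal optimizer $\bm w$ is recovered from a dual optimizer $\hat\ell$ via the stationarity identity. This requires strong duality, which the proposition explicitly hypothesizes ("If strong duality holds"), so in principle I may just invoke it; but to be self-contained I would note that the primal is a convex QP with only affine (inequality) constraints, so Slater's condition reduces to \emph{feasibility} of \eqref{debias_prog}, and whenever the primal is feasible strong duality and attainment of both optima hold, with the KKT conditions — in particular the stationarity condition I used to solve for $w_i$ — being necessary and sufficient. I would close by remarking that feasibility of \eqref{debias_prog} for appropriate $\gamma$ is itself addressed elsewhere in the paper (it is part of what the consistency analysis in \autoref{subsec:consistency} establishes), so the conditional phrasing of the proposition is the natural one.
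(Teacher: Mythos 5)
Your proposal is correct and takes essentially the same route as the paper's proof: Lagrangian duality for the quadratic program with the box constraint split into $2d$ linear inequalities, stationarity in $\bm{w}$ giving $\hat{\pi}_i w_i = \frac{1}{2\sqrt{n}}\hat{\pi}_i X_i^T\ell$, and the reparametrization $\ell=\mu^+-\mu^-$ turning $\mathbf{1}^T(\mu^++\mu^-)$ into $\norm{\ell}_1$ (the paper obtains this via complementary slackness, you by minimizing over the split directly --- a cosmetic difference, and your explicit treatment of indices with $\hat{\pi}_i=0$ is if anything more careful than the paper's inversion of $\mathtt{Diag}(2\hat{\pi}_1,\dots,2\hat{\pi}_n)$). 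One sign slip to fix in the write-up: with the constraints in ``$\le 0$'' form and $\mu^{\pm}\ge 0$, the last term of the Lagrangian is $-\frac{\gamma}{n}\mathbf{1}^T(\mu^++\mu^-)$ rather than $+$; your subsequent verbal reasoning and the final dual \eqref{debias_prog_dual} with identity \eqref{primal_dual_sol2} are consistent with the corrected sign.
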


The proof of Proposition~\ref{prop:dual_debias} is in \autoref{subapp:cond_MSE}. This result has two important implications: First, we can select the tuning parameter $\gamma >0$ via a cross-validation on the loss function of the dual program. Indeed, since the primal debiasing program \eqref{debias_prog} is a constrained quadratic program, it is difficult to select $\gamma > 0$ based on the primal formulation alone: if $\gamma > 0$ is too small, the program will be infeasible; if $\gamma > 0$ is too large, the estimate will have a non-negligible bias (see \autoref{subsec:bias_var_interp}).  Second, if strong duality holds, the debiased estimator \eqref{debias_est} can be expressed via~\eqref{primal_dual_sol2} in terms of the dual solution $\hat{\ell}(x)$ as:
\begin{equation}
	\label{debias_est_dual}
	\hat{m}^{\text{debias}}(x;\hat{\bm{w}}) = x^T\hat{\beta} - \frac{1}{2n} \sum_{i=1}^n R_i X_i^T\hat{\ell}(x) \left(Y_i-X_i^T \hat{\beta}\right),
\end{equation}
where the second bias-corrected term reduces to a linear combination of the sample average of $n$ random variables. This representation is the key to deriving the asymptotic normality of our debiased estimator; see \autoref{subsec:heuristic} and \autoref{subsec:asymp_normal} for details. Notice that strong duality holds whenever $\gamma > 0$ is sufficiently large; see Lemma~\ref{lem:strong_duality} in~\autoref{subsec:consistency} for more details.

\subsection{Bias-Variance Trade-off Perspective}
\label{subsec:bias_var_interp}

The design of our debiasing program \eqref{debias_prog} is motivated by controlling the conditional mean squared error $\mathrm{E}\left[\left(\sqrt{n}\cdot \hat{m}^{\text{debias}}(x;\hat{\bm{w}}) - \sqrt{n}\cdot m_0(x) \right)^2 \big| \bm{X}\right]$
of our debiased estimator \eqref{debias_est} and can thus be interpreted as solving a (conditional) bias-variance trade-off. To explicate this motivation, we consider the generic debiased estimator $m^{\text{debias}}(x;\bm{w})$ from \eqref{debias_est} with an arbitrary weight vector $\bm{w} =(w_1,...,w_n)^T\in \mathbb{R}^n$ as:
\begin{equation}
	\label{debias_generic}
	m^{\text{debias}}(x;\bm{w}) = x^T\beta + \frac{1}{\sqrt{n}} \sum_{i=1}^n w_i R_i \left(Y_i-X_i^T \beta \right),
\end{equation}
where one would plug in the Lasso pilot estimate $\hat{\beta}$ from \eqref{lasso_pilot} and the solution $\hat{\bm{w}}\equiv \hat{\bm{w}}(x) \in \mathbb{R}^n$ from our debiasing program \eqref{debias_prog} to obtain a concrete debiased estimator \eqref{debias_est} in practice. We decompose the conditional mean squared error of \eqref{debias_generic} into three explicit terms as follows.

\begin{proposition}
	\label{prop:cond_MSE}
	Under Assumption~\ref{assump:basic}, the conditional mean squared error of $\sqrt{n}\,m^{\mathrm{debias}}(x;\bm{w})$ given $\bm{X}=(X_1,...,X_n)^T$ has the following decomposition as:
	\begin{align}
		\label{cond_MSE}
		\begin{split}
			&\mathrm{E}\left[\left(\sqrt{n}\,m^{\mathrm{debias}}(x;\bm{w}) - \sqrt{n}\, m_0(x) \right)^2 \Big| \bm{X} \right] \\
			&= \underbrace{\sigma_{\epsilon}^2\sum_{i=1}^n w_i^2 \pi(X_i)}_{\text{Conditional variance I}} + \underbrace{\left(\beta_0 - \beta\right)^T \left[\sum_{i=1}^n w_i^2\pi(X_i)\left(1-\pi(X_i)\right)X_iX_i^T \right] \left(\beta_0 - \beta\right)}_{\text{Conditional variance II}} \\
			&\quad + \underbrace{\left[\left(\frac{1}{\sqrt{n}}\sum_{i=1}^n w_i \pi(X_i) X_i -x \right)^T \sqrt{n}\left(\beta_0 - \beta\right) \right]^2}_{\text{Conditional bias}}.
		\end{split}
	\end{align}
\end{proposition}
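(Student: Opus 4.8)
The plan is to substitute the linear model \eqref{linear_reg} into the generic debiased estimator \eqref{debias_generic}, isolate the two sources of randomness that remain after conditioning on $\bm{X}$ — the missingness indicators $\bm{R}=(R_1,\dots,R_n)$ and the regression noise $\bm{\epsilon}=(\epsilon_1,\dots,\epsilon_n)$ — and expand the resulting square. Since $Y_i - X_i^T\beta = X_i^T(\beta_0-\beta)+\epsilon_i$, collecting terms gives
\begin{equation*}
	\sqrt{n}\,m^{\mathrm{debias}}(x;\bm{w}) - \sqrt{n}\,m_0(x) = \underbrace{\left(\sum_{i=1}^n w_i R_i X_i - \sqrt{n}\,x\right)^{T}(\beta_0-\beta)}_{=:\,A} + \underbrace{\sum_{i=1}^n w_i R_i \epsilon_i}_{=:\,B},
\end{equation*}
where $A$ is a function of $(\bm{X},\bm{R})$ only and $B$ also involves $\bm{\epsilon}$. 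It then suffices to show that $\mathrm{E}[B^2\mid\bm{X}]$ equals Conditional variance~I, that the cross term $\mathrm{E}[AB\mid\bm{X}]$ vanishes, and that $\mathrm{E}[A^2\mid\bm{X}]$ equals the sum of Conditional variance~II and Conditional bias.

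For the cross term and the $B^2$ term I would condition first on $(\bm{X},\bm{R})$ and integrate out $\bm{\epsilon}$. Because $\epsilon_i = Y_i - X_i^T\beta_0$ is a function of $(Y_i,X_i)$, Assumption~\ref{assump:basic}(b) gives $R\ind\epsilon\mid X$, and the i.i.d.\ sampling then yields $\mathrm{E}[\epsilon_i\mid\bm{X},\bm{R}]=\mathrm{E}[\epsilon_i\mid X_i]=0$ and, by \eqref{linear_reg}, $\mathrm{E}[\epsilon_i\epsilon_j\mid\bm{X},\bm{R}]=\sigma_\epsilon^2\,\mathbbm{1}_{\{i=j\}}$. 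Since $A$ is $(\bm{X},\bm{R})$-measurable, $\mathrm{E}[AB\mid\bm{X},\bm{R}]=A\sum_i w_i R_i\,\mathrm{E}[\epsilon_i\mid\bm{X},\bm{R}]=0$, hence $\mathrm{E}[AB\mid\bm{X}]=0$; and $\mathrm{E}[B^2\mid\bm{X},\bm{R}]=\sigma_\epsilon^2\sum_i w_i^2 R_i^2=\sigma_\epsilon^2\sum_i w_i^2 R_i$ using $R_i\in\{0,1\}$, so taking $\mathrm{E}[\,\cdot\mid\bm{X}]$ and $\mathrm{E}[R_i\mid\bm{X}]=\pi(X_i)$ produces Conditional variance~I.

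For the $A^2$ term, write $A=(u-\sqrt{n}\,x)^T(\beta_0-\beta)$ with $u:=\sum_i w_i R_i X_i$, so that $\mathrm{E}[A^2\mid\bm{X}]=(\beta_0-\beta)^T\,\mathrm{E}\left[(u-\sqrt{n}\,x)(u-\sqrt{n}\,x)^T\mid\bm{X}\right](\beta_0-\beta)$. Conditionally on $\bm{X}$ the $R_i$ are independent Bernoulli with means $\pi(X_i)$, so with $\mu:=\sum_i w_i\pi(X_i)X_i$ one has $\mathrm{E}[u\mid\bm{X}]=\mu$ and $\mathrm{Var}(u\mid\bm{X})=\sum_i w_i^2\pi(X_i)(1-\pi(X_i))X_iX_i^T$, whence $\mathrm{E}[(u-\sqrt{n}\,x)(u-\sqrt{n}\,x)^T\mid\bm{X}]=\mathrm{Var}(u\mid\bm{X})+(\mu-\sqrt{n}\,x)(\mu-\sqrt{n}\,x)^T$. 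Substituting this back reproduces Conditional variance~II together with $\big[(\mu-\sqrt{n}\,x)^T(\beta_0-\beta)\big]^2$, and factoring $\mu-\sqrt{n}\,x=\sqrt{n}\big(\tfrac{1}{\sqrt{n}}\sum_i w_i\pi(X_i)X_i-x\big)$ rewrites the latter as the stated Conditional bias. Adding the three contributions gives \eqref{cond_MSE}.

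The argument is entirely elementary; the only point that needs care is the bookkeeping of the nested conditional expectations — integrating $\bm{\epsilon}$ out given $(\bm{X},\bm{R})$ and then $\bm{R}$ out given $\bm{X}$ — and translating the MAR hypothesis of Assumption~\ref{assump:basic}(b) into $R\ind\epsilon\mid X$ so that both the cross term disappears and the conditional second moment of $\epsilon_i$ on $\{R_i=1\}$ remains $\sigma_\epsilon^2$. I do not anticipate any genuine obstacle beyond this accounting of the double sums over $\{R_iR_j\}$.
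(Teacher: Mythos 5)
Your proposal is correct: the conditional moment facts you invoke ($\mathrm{E}[\epsilon_i\mid\bm{X},\bm{R}]=0$, $\mathrm{E}[\epsilon_i^2\mid\bm{X},\bm{R}]=\sigma_\epsilon^2$, and $R_i\mid\bm{X}\sim\mathrm{Bernoulli}(\pi(X_i))$ independently across $i$) all follow from Assumption~\ref{assump:basic} and the i.i.d.\ sampling, and the three terms you obtain match \eqref{cond_MSE} exactly. This is essentially the paper's own argument — the paper first subtracts the conditional mean (standard variance-plus-squared-bias split) and then expands the fluctuation $\sum_i w_i(R_i-\pi(X_i))X_i^T(\beta_0-\beta)+\sum_i w_iR_i\epsilon_i$, whereas you split into the $A$ and $B$ pieces first and then decompose $\mathrm{E}[A^2\mid\bm{X}]$; the two orderings involve the same elementary computations.
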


The derivation for Proposition~\ref{prop:cond_MSE} can be found in \autoref{subapp:cond_MSE}. By H{\"o}lder's inequality, we can upper bound the ``Conditional bias'' term as:
\begin{align*}
		&\left[\left(\frac{1}{\sqrt{n}}\sum_{i=1}^n w_i \pi(X_i) X_i -x \right)^T \sqrt{n}\left(\beta_0 - \beta\right) \right]^2 \leq \left[\norm{\frac{1}{\sqrt{n}}\sum_{i=1}^n w_i \pi(X_i) X_i -x }_{\infty} \sqrt{n}\norm{\beta_0 - \beta}_1 \right]^2.
\end{align*}
The debiasing program \eqref{debias_prog} is hence designed to optimize the weights $w_i, i=1,...,n$ so that the estimated ``Conditional variance I'' term $\sum_{i=1}^n \hat{\pi}_i w_i^2$ is minimized as the objective function while an upper bound of the estimated ``Conditional bias'' term $\norm{x- \frac{1}{\sqrt{n}} \sum_{i=1}^n w_i\hat{\pi}_i X_i}_{\infty}$ is well-controlled by the constraint. We ignore the ``Conditional variance II'' term because, by Lemma~\ref{lem:negligible_cond_var}, it is asymptotically negligible and dominated by the ``Conditional variance I'' term if $\beta = \hat{\beta}$ (solution Lasso program \eqref{lasso_pilot}) and  $\bm{w} =\hat{\bm{w}}(x)$  (solution to the debiasing program \eqref{debias_prog}). Similar bias-variance trade-offs also appear in \cite{cai2021statistical} for generalized linear models and \cite{giessing2021inference} for quantile regression models. However, the motivation for the procedures in these two papers is ad-hoc and not grounded in a rigorous decomposition of the conditional mean squared error. Upon minimizing the (conditional) variance, we expect our debiased estimator to be asymptotically more efficient than other estimators; see our discussions in \autoref{subsec:asymp_normal}.

\subsection{Neyman Near-Orthogonalization Viewpoint}
\label{subsec:neyman_ortho}

Our debiasing method in \autoref{subsec:debias_method} can also be motivated from the perspective of Neyman near-orthogonalization \citep{neyman1959optimal,neyman1979c,chernozhukov2018double}. To this end, we regard the regression function $m\equiv m(x) \in \mathbb{R}$ as the parameter of interest and the regression vector $\beta \in \mathbb{R}^d$ as a high-dimensional nuisance parameter. Under Assumption~\ref{assump:basic}, the true values of these two parameters are $m_0=m_0(x)=x^T\beta_0$ and $\beta_0$, respectively. We define the generic score function as:
$$\Xi_x(Y,R,X; m,\beta) = m-x^T\beta - \sqrt{n} \cdot w \cdot R(Y-X^T\beta),$$ 
where $w \in \mathbb{R}$ is a symbolic weight. Given the observed data $\{(Y_i,R_i,X_i)\}_{i=1}^n$ and $\bm{w} \in \mathbb{R}^n$, $\beta \in \mathbb{R}^d$ arbitrary, the generic debiased estimator $m^{\text{debias}}(x, \bm{w})$ solves the sample-based estimating equation
\begin{equation}
	\label{neyman_score_func}
	\frac{1}{n}\sum_{i=1}^n \Xi_x(Y_i,R_i,X_i; m^{\text{debias}},\beta) =m^{\text{debias}}(x; \bm{w}) -x^T\beta - \frac{1}{\sqrt{n}}\sum_{i=1}^n w_i \cdot R_i\left(Y_i-X_i^T\beta\right) =0.
\end{equation}
Provided that Assumption~\ref{assump:basic} holds and $\bm{w} \in \mathbb{R}^n$ satisfies the box constraint in \eqref{debias_prog}, the Neyman near-orthogonalization condition given $\bm{X}=(X_1,...,X_n)^T \in \mathbb{R}^{n\times d}$ at $(m_0,\beta_0)$ requires that
\begin{align}
	\label{neyman_near_ortho}
	\begin{split}
	&\mathrm{E}\left[\frac{1}{n}\sum_{i=1}^n \Xi_x(Y_i,R_i,X_i; m_0,\beta_0) \bigg| \bm{X}\right] = 0,\\
	& \sup_{\beta \in \mathcal{T}_n} \left|\left\{\frac{\partial}{\partial \beta} \mathrm{E}\left[\frac{1}{n}\sum_{i=1}^n \Xi_x(Y_i,R_i,X_i; m,\beta) \Big| \bm{X}\right] \bigg|_{(m_0,\beta_0)} \right\}^T (\beta-\beta_0)\right| \leq \frac{\delta_n}{\sqrt{n}},  
	\end{split}
\end{align}
where $\mathcal{T}_n$ is a properly shrinking neighborhood of $\beta_0$ and $\delta_n=o(1)$; see Definition 2.2 in \cite{chernozhukov2018double}. Indeed, the first condition obviously holds true, while the second condition is also satisfied because for any $\beta \in \mathcal{T}_n$,
\begin{align*}
	&\left|\left\{\frac{1}{n}\sum_{i=1}^n \frac{\partial}{\partial \beta} \mathrm{E}\left[\Xi_x(Y_i,R_i,X_i; m,\beta) | \bm{X}\right]\big|_{(m_0,\beta_0)} \right\}^T (\beta-\beta_0)\right| \\
	&= \left|\left[x-\frac{1}{\sqrt{n}} \sum_{i=1}^n w_i \cdot \pi(X_i) X_i\right]^T (\beta_0 - \beta) \right| \\
	&\stackrel{\text{(a)}}{``\leq"} \norm{x-\frac{1}{\sqrt{n}} \sum_{i=1}^n w_i \cdot \hat{\pi}_i\cdot X_i}_{\infty} \norm{\beta-\beta_0}_1\\
	&\: \: \stackrel{\text{(b)}}{\leq} \frac{\gamma}{n} \norm{\beta-\beta_0}_1 \leq \frac{\delta_n}{\sqrt{n}},
\end{align*}
where the approximate inequality (a) follows from H{\"o}lder's inequality and substituting the estimated propensity scores $\hat{\pi}_i,i=1,...,n$ while (b) holds since $\bm{w} \in \mathbb{R}^n$ satisfies the box constraint in \eqref{debias_prog}. Therefore, the nuisance realization set $\mathcal{T}_n$ can be chosen as $\left\{\beta \in \mathcal{B}\subset \mathbb{R}^d: \norm{\beta-\beta_0}_1 \leq \frac{\sqrt{n}\delta_n}{\gamma} \right\}$ for some convex set $\mathcal{B}$ containing $\beta_0$. We show in \autoref{thm:lasso_const} that the Lasso pilot estimate $\hat{\beta}$ satisfies $\norm{\hat{\beta} - \beta_0}_1=O_P\left(s_{\beta}\sqrt{\frac{\log d}{n}} \right)$ and thus, our final debiased estimator \eqref{debias_est} satisfies the Neyman near-orthogonality condition \eqref{neyman_near_ortho} with a fine-tuned parameter $\gamma >0$.

By Theorem 3.1 in \cite{chernozhukov2018double}, the (asymptotic) variance of $m^{\text{debias}}(x, \bm{w})$ is $\sigma_{\epsilon}^2\, \mathrm{E}\left[\sum_{i=1}^n w_i^2 R_i^2|\bm{X}\right] = \sigma_{\epsilon}^2 \sum_{i=1}^n w_i^2 \pi(X_i)$. Therefore, our debiasing program \eqref{debias_prog} is designed to yield the debiased estimator with the smallest (estimated) variance among all the estimators that (approximately) satisfies the constraints in \eqref{neyman_near_ortho}. Consequently, we can expect that our debiased estimator is more efficient than most other estimators even without explicitly constructing an efficient score function satisfying the exact Neyman
orthogonality; see Section 2.2.2 in \cite{chernozhukov2018double} with references therein. 
Finally, in light of \eqref{neyman_near_ortho}, our debiasing method de-correlates the Lasso pilot regression from the propensity score estimation and the optimization of weights for the debiased estimator so that we are able to establish the asymptotic normality of our debiased estimator without resorting to sample splitting.

\section{Asymptotic Theory}
\label{sec:theory}

In this section, we study the consistency results of our debiasing inference method in \autoref{subsec:debias_method} and its asymptotic normality property. The high-level motivation and overview of the technical results are given in \autoref{subsec:heuristic} with more detailed studies in the follow-up subsections.

\subsection{Heuristics and Overview}
\label{subsec:heuristic}

In order to derive the asymptotic normality of our debiased estimator \eqref{debias_est}, we need to analyze the asymptotic behavior of $\sqrt{n}\left[\hat{m}^{\text{debias}}(x;\hat{\bm{w}}) - m_0(x) \right]$. Under Assumption~\ref{assump:basic} and the definition \eqref{debias_est} of $\hat{m}^{\text{debias}}(x;\hat{\bm{w}})$, we deduce that
\begin{align*}
	\sqrt{n}\left[\hat{m}^{\text{debias}}(x;\hat{\bm{w}}) - m_0(x) \right] &= \sum_{i=1}^n \hat{w}_i(x) R_i \epsilon_i + \left[x-\frac{1}{\sqrt{n}} \sum_{i=1}^n \hat{w}_i(x) R_i X_i\right]^T \sqrt{n}\left(\hat{\beta} - \beta_0\right),
\end{align*}
where $\epsilon_i,i=1,...,n$ are i.i.d. noise variables under model \eqref{linear_reg}. From the above expression, it is unclear how one can apply the central limit theorem (CLT) or related asymptotic theories to  derive the asymptotic distribution of $\sqrt{n}\left[\hat{m}^{\text{debias}}(x;\hat{\bm{w}}) - m_0(x) \right]$, because the limiting distribution of the weight vector $\hat{\bm{w}}\equiv \hat{\bm{w}}(x)= \left(\hat{w}_1(x),...,\hat{w}_n(x)\right)^T \in \mathbb{R}^n$ is intractable and may not even exist. However, the dual formulation in Proposition~\ref{prop:dual_debias} provides a promising direction of approaching this problem. By the dual representation \eqref{debias_est_dual} of the debiased estimator, we have 
\begin{align}
	\label{bias_var_dual}
		\sqrt{n}\left[\hat{m}^{\text{debias}}(x;\hat{\bm{w}}) - m_0(x)\right] 
		&= -\frac{1}{2\sqrt{n}} \sum_{i=1}^n R_i\epsilon_i X_i^T\hat{\ell}(x) + \left[x + \frac{1}{2n}\sum_{i=1}^n R_i X_i^T\hat{\ell}(x) X_i \right]^T \sqrt{n}\left(\beta_0 -\hat{\beta} \right) \nonumber\\
		&= -\frac{1}{2\sqrt{n}} \sum_{i=1}^n R_i\epsilon_i X_i^T\ell_0(x) + \text{``Bias terms''},
\end{align}
where $\ell_0(x) \in \mathbb{R}^d$ is the deterministic solution to the population dual program for which the dual solution $\hat{\ell}(x) \in \mathbb{R}^d$ is consistent; see \eqref{debias_prog_popul_dual} in \autoref{subsec:assump} for its definition. The leading term in \eqref{bias_var_dual} is a sample average of $n$ i.i.d. scalar random variables, which can be handled via the standard CLT, while the ``Bias terms'' in \eqref{bias_var_dual} are of order $o_P(1)$ under mild regularity conditions; see \autoref{thm:asym_normal}.

The rest of this section corroborates the above heuristic arguments. After introducing the notations and necessary assumptions, we derive the following main results:
\begin{itemize}
	\item {\bf Consistency of the Lasso pilot estimate \eqref{lasso_pilot}:} Under a suitable choice of $\lambda > 0$, it holds that $\norm{\hat{\beta}-\beta_0}_2= O_P\left(\frac{1}{\kappa_R^2}\sqrt{\frac{s_{\beta}\log d}{n}}\right)$ when $\log d=o(n)$, where $\kappa_R^2 >0$ is the minimum eigenvalue of the complete-case gram matrix $\mathrm{E}\left(RXX^T\right)$; see \autoref{thm:lasso_const} for details.
	
	\item {\bf Consistency of the solution to the dual program \eqref{debias_prog_dual}:} Under a suitable choice of $\gamma >0$, we deduce that $\norm{\hat{\ell}(x) -\ell_0(x)}_2 = O_P\left(\frac{1}{\kappa_R^3} \sqrt{\frac{s_{\ell}(x)\log d}{n}} + \left(\frac{1+\kappa_R^2}{\kappa_R^4}\right) r_{\ell} + \frac{r_{\pi}\sqrt{s_{\ell}(x)}}{\kappa_R^4} \right)$, where $\ell_0(x)$ is the solution to the population dual program, $r_{\ell}>0$ is a parameter for the sparse approximation to $\ell_0(x)$, and $r_{\pi}>0$ is the rate of convergence of $\max_{1 \leq i \leq n}\left|\hat{\pi}_i-\pi(X_i) \right|$; see \autoref{thm:dual_consist2} and its related discussions.
	
	\item {\bf Asymptotic normality of the debiased estimator \eqref{debias_est}:} Combining the above consistency results with other mild regularity conditions, we prove that 
	$$\sqrt{n}\left[\hat{m}^{\mathrm{debias}}(x;\hat{\bm{w}}) - m_0(x)\right] \stackrel{d}{\to} \mathcal{N}\left(0,\, \sigma_m^2(x) \right)$$ 
	and discuss the semi-parametric efficiency of $\sigma_m^2(x) = \lim_{n\to\infty} \sigma_{\epsilon}^2 \cdot x^T\left[\mathrm{E}\left(R XX^T\right)\right]^{-1}x$; see \autoref{thm:asym_normal} for details.
\end{itemize}

\subsection{Notations and Assumptions}
\label{subsec:assump}

We enumerate the regularity conditions under the high-dimensional sparse linear modeling Assumption~\ref{assump:basic} for our subsequent theoretical analysis. 

\begin{assumption}[Sub-Gaussian covariate]
	\label{assump:sub_gaussian}
	The covariate vector $X\in \mathbb{R}^d$ is sub-Gaussian, \emph{i.e.}, $\norm{u^TX}_{\psi_2} \lesssim \left[\mathrm{E}\left(u^TXX^T u\right)\right]^{1/2}$ for any $u\in \mathbb{R}^d$.
\end{assumption}

We introduce the sub-Gaussian condition as Assumption~\ref{assump:sub_gaussian} on the covariate vector $X$ in order to control the exponential tail behavior of $X$ by its second moment and facilitate our proofs. 
Notice that the covariate vector $X \in \mathbb{R}^d$ is not necessarily centered in our definition of sub-Gaussian covariate vector.
We denote the $s$-sparse maximum eigenvalue of the population gram matrix $\mathrm{E}\left(XX^T\right) \in \mathbb{R}^{d\times d}$ by
\begin{equation}
	\label{max_eigen_gram}
	\phi_s := \sup_{u\in \mathbb{S}^{d-1}, \norm{u}_0\leq s} \mathrm{E}\left[(X^Tu)^2\right].
\end{equation}
In particular, $\mathrm{E}(X_{ik}^2) \leq \max\limits_{1\leq i \leq n}\max\limits_{1\leq k\leq d} \mathrm{E}(X_{ik}^2) \leq \phi_1$ for each covariate vector $X_i=(X_{i1},...,X_{id})^T \in \mathbb{R}^d$. It is noteworthy that under the high-dimensional setting with $s\ll d$, the $s$-sparse maximum eigenvalue $\phi_s$ is substantially smaller than the unrestricted maximum eigenvalue $\phi_d$ of $\mathrm{E}\left(XX^T\right)$ and can even be independent of the dimension $d$.

\begin{assumption}[Sub-Gaussian noise]
	\label{assump:subgau_error}
	The noise variable $\epsilon\in \mathbb{R}$ is sub-Gaussian, \emph{i.e.}, $\norm{\epsilon}_{\psi_2} \lesssim \left[\mathrm{E}(\epsilon^2) \right]^{1/2} = \sigma_{\epsilon}$ with $\sigma_{\epsilon}^2>0$ being the noise level. 
\end{assumption}

We introduce Assumption~\ref{assump:subgau_error} to simplify the theoretical analysis. In principle, this assumption can be relaxed to a lower-order moment condition; see~\cite{belloni2013least}. We conduct numerical experiments on our debiasing method in which the sub-Gaussian noise assumption is violated in \autoref{subsec:heavy_noises} and demonstrate that our proposed method is robust to various heavy-tailed noise distributions. To establish the consistency of the Lasso pilot estimate \eqref{lasso_pilot} and the dual solution $\hat{\ell}(x) \in \mathbb{R}^d$ to \eqref{debias_prog_dual}, we impose the following eigenvalue lower bound on the complete-case population gram matrix $\mathrm{E}\left(RXX^T\right) \in \mathbb{R}^{d\times d}$.

\begin{assumption}[Lower eigenvalue bound on $\mathrm{E}\left(RXX^T\right)$]
	\label{assump:gram_lower_bnd}
	The minimum eigenvalue of the complete-case gram matrix $\mathrm{E}\left(RXX^T\right) \in \mathbb{R}^{d\times d}$ is bounded away from zero, \emph{i.e.}, there exists a constant $\kappa_R>0$ such that 
	$$ \inf_{v\in \mathbb{S}^{d-1}} \mathrm{E}\left[R(X^Tv)^2\right] \geq \kappa_R^2.$$
\end{assumption}

Assumption~\ref{assump:gram_lower_bnd} ensures that the regression vector $\beta_0\in \mathbb{R}^d$ is identifiable. We explicitly introduce the ``constant'' $\kappa_R >0$ to handle cases in which the minimum eigenvalue of $\mathrm{E}\left(RXX^T\right)$ vanishes (slowly) as $d,n\to\infty$. When proving the consistency of the Lasso pilot estimate \eqref{lasso_pilot}, however, we allow the common relaxation of Assumption~\ref{assump:gram_lower_bnd} that the minimum eigenvalue of $\mathrm{E}\left(RXX^T\right)$ is only bounded away from zero over the cone $\mathcal{C}_1(S_{\beta},3)$:
\begin{equation}
	\label{res_eigen_cond}
	\inf_{v\in \mathcal{C}_1(S_{\beta},3) \cap \mathbb{S}^{d-1}} \mathrm{E}\left[R(X^Tv)^2\right] \geq \kappa_R^2,
\end{equation}
where $\mathcal{C}_q(S,\vartheta) = \left\{v\in \mathbb{R}^d: \norm{v_{S^c}}_q \leq \vartheta \norm{v_S}_q\right\}$ is a cone of dominant coordinates (Section 7.2 in \citealt{koltchinskii2011oracle}) for some index set $S \subset \{1,...,d\}$ and the parameter $\vartheta >0$ with respect to the norm $\norm{\cdot}_q$ for $q\geq 1$. Here, $v_S \in \mathbb{R}^{|S|}$ denotes the subvector with elements of $v$ indexed by $S$ and $v_{S^c}$ is defined in an analogous manner. This relaxed assumption \eqref{res_eigen_cond} is known as the restricted eigenvalue condition in the literature; see Section 7.3.1 in \cite{wainwright2019high} and also \cite{bickel2009simultaneous,van2009conditions}. We show by Lemma~\ref{lem:gram_mat_conc} in \autoref{subapp:auxi_res} that under Assumption~\ref{assump:sub_gaussian}, the sample-based complete-case gram matrix $\frac{1}{n}\sum_{i=1}^n R_iX_iX_i^T$ is close to the population one with high probability, so \eqref{res_eigen_cond} holds interchangeably on these two gram matrices in the setting of this paper.

As for the propensity score estimation, we introduce a mild assumption on the consistency of the estimated propensity scores $\hat{\pi}_i$ to the true propensity scores $\pi_i=\pi(X_i)$ for $i=1,...,n$. 

\begin{assumption}[Stochastic control of the propensity score estimation]
	\label{assump:prop_consistent_rate}
	Given any $n\geq 1$ and $\delta \in (0,1)$, there exists $r_{\pi} \equiv r_{\pi}(n,\delta) >0$ such that $\mathrm{P}\left(\left|\hat{\pi}_i -\pi_i \right| > r_{\pi}\right) <\delta$ for all $i=1,...,n$.
\end{assumption}

Assumption~\ref{assump:prop_consistent_rate} is a non-asymptotic version of the stochastic boundedness for the estimation errors of the propensity scores evaluated at the covariate vectors $X_i,i=1,...n$. In applications, the exact rate of consistency $r_{\pi}$ depends on how we estimate the propensity scores, which can be parametric, nonparametric, or semiparametric methods. As an example, we derive an explicit formula for $r_{\pi}$ under the Lasso-type generalized linear regression estimator \eqref{glm_lasso} in Proposition~\ref{prop:rel_const_glm} of \autoref{app:prop_score_lasso}. We demonstrate in \autoref{thm:dual_consist2} and \autoref{thm:asym_normal} below that as long as $r_{\pi}(n,\delta)\to 0$ for some $\delta \to 0$ as $n\to \infty$, the solution to the dual program \eqref{debias_prog_dual} is consistent and consequently, the debiased estimator \eqref{debias_est} is asymptotically normal; see \autoref{subsec:consistency} and \autoref{subsec:asymp_normal} for details. The rate requirement at which $r_{\pi} \to 0$ is essentially irrelevant as shown in \eqref{rate_required}. These results unveil the possibility of combining our debiased estimator with consistent estimates of the propensity scores obtained from any machine learning approach; see \autoref{subsec:nonpar_prop} for numerical experiments.

\begin{remark}
	In contrast to the existing literature on regression models with missing outcomes, incomplete covariates, or potential outcomes \citep{rosenbaum1983central,robins1994estimation,chakrabortty2019high}, we do not require the positivity condition $\pi(x)=\mathrm{P}(R=1|X=x) > \pi_{\min} > 0$ for all $x\in \mathrm{support}(X) \subset \mathbb{R}^d$. As noted by \cite{d2021overlap}, the positivity condition is too stringent under the context of high-dimensional covariates. Instead, we merely assume that the complete-case population gram matrix $\mathrm{E}\left(RXX^T\right)$ is positive definite as in Assumption~\ref{assump:gram_lower_bnd}, which is valid even when the positivity condition is violated. For example, $\mathrm{E}\left(RXX^T\right)$ will be positive definite if $\mathrm{support}(X)$ has a nonzero Lebesgue measure in $\mathbb{R}^d$ and $\pi(x)$ is positive within a subset of $\mathrm{support}(X)$ with a nonzero Lebesgue measure.
\end{remark}

The population dual program for any $x\in \mathbb{R}^d$ is defined by taking $n\to \infty$ in \eqref{debias_prog_dual} under the true propensity scores as:
\begin{align}
	\label{debias_prog_popul_dual}
	\begin{split}
		\min_{\ell \in \mathbb{R}^d} \left\{ \frac{1}{4}\, \mathrm{E}\left[R \left(X^T\ell\right)^2\right] + x^T \ell  \right\}.
	\end{split}
\end{align}
Under Assumption~\ref{assump:gram_lower_bnd}, the unique solution to the above population dual program \eqref{debias_prog_popul_dual} is given by
\begin{equation}
	\label{popul_dual_exact}
	\ell_0(x) := -2\left[\mathrm{E}\left(R XX^T\right)\right]^{-1} x \in \mathbb{R}^d.
\end{equation}
The population dual solution $\ell_0(x)$ already appears as the ``orthogonalization parameter'' in the debiased machine learning literature~\citep[e.g.][Example 2.1]{chernozhukov2018double} and (up to a scaling factor) as the ``least favorable sub-model'' in~\cite{bellec2022biasing}. However, in both cases, $\ell_0(x)$ does not arise as the solution to a dual program whose corresponding primal program solves a bias variance trade-off; see \autoref{subsec:bias_var_interp}. We assume that $\ell_0(x)$ is approximately sparse in the following sense; see also Assumption~\ref{assump:sparse_dual}.

\begin{definition}[$r_{\ell}$-approximation to $\ell_0(x)$]
	\label{defn:popul_dual}
	For any $x\in \mathbb{R}^d$, we define a $r_{\ell}$-approximation $\tilde{\ell}(x)$ to the population dual solution $\ell_0(x)$ as:
	\begin{equation}
		\label{popul_dual_approx}
		\tilde{\ell}(x) \equiv \tilde{\ell}(x;r_{\ell}):= \argmin_{\ell(x)\in \mathbb{R}^d} \left\{\norm{\ell(x)}_0: \norm{\ell(x) - \ell_0(x)}_2 \leq r_{\ell}\norm{\ell_0(x)}_2 \right\},
	\end{equation}
	where $r_{\ell} \in \left[0,\frac{1}{2}\right]$ controls the accuracy of approximation.
\end{definition}

\begin{remark}
	\label{remark:res_eigen_ell}
	Under the stronger Assumption~\ref{assump:gram_lower_bnd}, $\mathrm{E}(RXX^T)$ also satisfies the $\mathcal{C}_1(S_{\ell}(x),3)$-restricted eigenvalue condition with some parameter $\kappa_R^2>0$ defined in \eqref{res_eigen_cond}, where $S_{\ell}(x) := \mathrm{support}\left(\tilde{\ell}(x)\right) = \left\{1\leq k\leq d: \tilde{\ell}_k(x) \neq 0\right\}$.
\end{remark}

A larger value of $r_{\ell}$ gives rise to a sparser approximation $\tilde{\ell}(x)$ to the vector $\ell_0(x)$ of interest. Furthermore, the consistency of $\hat{\ell}(x)$ requires the approximation parameter $r_{\ell}$ for $\tilde{\ell}(x)$ to converge to 0 in a certain rate as $n\to \infty$; see the discussion after \autoref{thm:dual_consist2} in \autoref{subsec:consistency} below. In other words, as the sample size $n$ increases, the unique population dual solution $\ell_0(x)$ must tend to a sparse vector in order for the consistency result to hold. We furnish several sufficient conditions and motivating examples under which Assumption~\ref{assump:sparse_dual} below holds in \autoref{app:suff_cond_dual}. 

\begin{assumption}[Sparsity of $\tilde{\ell}(x)$]
	\label{assump:sparse_dual}
The $r_{\ell}$-approximation $\tilde{\ell}(x) \in \mathbb{R}^d$ to the population dual solution to $\ell_0(x)\in \mathbb{R}^d$ exists and satisfies
$s_{\ell}(x) := \norm{\tilde{\ell}(x)}_0 \ll \min\{n,d\}$.
\end{assumption}

\subsection{Consistency Results}
\label{subsec:consistency}

In this section, we unroll the consistency results of our Lasso pilot estimate \eqref{lasso_pilot} and the solution to the dual program \eqref{debias_prog_dual}. Additionally, we discuss the strong duality theory of our debiasing inference method and validate the asymptotic negligibility of the ``Conditional variance II'' term in Proposition~\ref{prop:cond_MSE} of \autoref{subsec:bias_var_interp}.

\begin{theorem}[Consistency of the Lasso pilot estimate with complete-case data]
	\label{thm:lasso_const}
	Let $\delta \in (0,1)$ and $A_1,A_2>0$ be some absolute constants. Suppose that Assumptions~\ref{assump:basic}, \ref{assump:sub_gaussian}, and \ref{assump:gram_lower_bnd} (more precisely, Eq.\eqref{res_eigen_cond}) hold and $n$ is sufficiently large so that 
	$$\sqrt{\frac{s_{\beta}\log(ed/s_{\beta})}{n}} + \sqrt{\frac{\log(1/\delta)}{n}} < \min\left\{\frac{\kappa_R^2}{A_1 \phi_{s_{\beta}}}, 1 \right\}.$$
	\begin{enumerate}[label=(\alph*)]
		\item If $\lambda> \frac{4}{n}\norm{\sum_{i=1}^n R_iX_i\epsilon_i}_{\infty} >0$ and $d\geq s_{\beta}+2$, then with probability at least $1-\delta$,
		$$\norm{\hat{\beta}-\beta_0}_2 \lesssim \frac{\sqrt{s_{\beta}}\lambda}{\kappa_R^2}.$$ 
		
		\item If, in addition, Assumption~\ref{assump:subgau_error} holds and $\lambda = A_2\sigma_{\epsilon} \sqrt{\frac{\phi_1\log(d/\delta)}{n}}$, then with probability at least $1-\delta$,
		\begin{align*}
			\norm{\hat{\beta}-\beta_0}_2 &\lesssim \frac{ \sigma_{\epsilon}}{\kappa_R^2} \sqrt{\frac{\phi_1 s_{\beta}\log(d/\delta)}{n}}.
		\end{align*}
	\end{enumerate}
\end{theorem}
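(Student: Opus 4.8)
The plan is to run the standard Lasso ``basic inequality'' argument, but with the \emph{effective design} $R_iX_i$ and the observation that, under model~\eqref{linear_reg}, $R_i(Y_i - X_i^T\beta_0) = R_i\epsilon_i$. Write $\hat{\Delta} := \hat{\beta} - \beta_0$. Since $\hat{\beta}$ minimizes the complete-case objective in~\eqref{lasso_pilot}, expanding $R_i(Y_i-X_i^T\hat{\beta})^2 = R_i(\epsilon_i - X_i^T\hat{\Delta})^2$ and cancelling terms gives the quadratic inequality
\[
\frac{1}{n}\sum_{i=1}^n R_i\bigl(X_i^T\hat{\Delta}\bigr)^2 \;\le\; \frac{2}{n}\sum_{i=1}^n R_i\epsilon_i\,X_i^T\hat{\Delta} \;+\; \lambda\bigl(\norm{\beta_0}_1 - \norm{\hat{\beta}}_1\bigr).
\]
On the event $\lambda > \frac{4}{n}\norm{\sum_{i=1}^n R_iX_i\epsilon_i}_{\infty}$, H{\"o}lder's inequality bounds the first term on the right by $\frac{\lambda}{2}\norm{\hat{\Delta}}_1$; combining with the decomposition $\norm{\beta_0}_1 - \norm{\hat{\beta}}_1 \le \norm{\hat{\Delta}_{S_{\beta}}}_1 - \norm{\hat{\Delta}_{S_{\beta}^c}}_1$ and rearranging yields simultaneously (i) the cone membership $\hat{\Delta} \in \mathcal{C}_1(S_{\beta}, 3)$ (because the left-hand side is nonnegative) and (ii) the bound $\frac{1}{n}\sum_i R_i(X_i^T\hat{\Delta})^2 \le \frac{3\lambda}{2}\norm{\hat{\Delta}_{S_{\beta}}}_1 \le \frac{3\lambda}{2}\sqrt{s_{\beta}}\,\norm{\hat{\Delta}}_2$.

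For part~(a) it then remains to lower-bound the quadratic form. By the cone membership (i), I would invoke the restricted eigenvalue condition~\eqref{res_eigen_cond} on the \emph{sample} complete-case gram matrix $\frac{1}{n}\sum_i R_iX_iX_i^T$. Lemma~\ref{lem:gram_mat_conc} in \autoref{subapp:auxi_res} shows that under Assumption~\ref{assump:sub_gaussian} this sample gram matrix concentrates around $\mathrm{E}(RXX^T)$ uniformly over $\mathcal{C}_1(S_{\beta},3)\cap\mathbb{S}^{d-1}$ with deviation controlled by a multiple of $\phi_{s_{\beta}}\bigl(\sqrt{s_{\beta}\log(ed/s_{\beta})/n} + \sqrt{\log(1/\delta)/n}\,\bigr)$, so the sample-size hypothesis of the theorem guarantees $\frac{1}{n}\sum_i R_i(X_i^T\hat{\Delta})^2 \gtrsim \kappa_R^2\norm{\hat{\Delta}}_2^2$ on an event of probability at least $1-\delta$. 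Chaining this with (ii) gives $\kappa_R^2\norm{\hat{\Delta}}_2^2 \lesssim \lambda\sqrt{s_{\beta}}\,\norm{\hat{\Delta}}_2$, hence $\norm{\hat{\beta}-\beta_0}_2 \lesssim \sqrt{s_{\beta}}\,\lambda/\kappa_R^2$; the condition $d \ge s_{\beta}+2$ is used only to keep $\log(ed/s_{\beta})$ comparable to $\log d$ in bookkeeping.

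Part~(b) follows by instantiating part~(a) with a concrete, high-probability choice of $\lambda$. The summands $R_iX_{ik}\epsilon_i$ are mean zero: under Assumption~\ref{assump:basic}(b) the indicator $R_i$ is conditionally independent of $\epsilon_i$ given $X_i$, so $\mathrm{E}(R_iX_{ik}\epsilon_i\,|\,X_i) = X_{ik}\,\pi(X_i)\,\mathrm{E}(\epsilon_i\,|\,X_i) = 0$. Since $|R_i|\le 1$ and $X_{ik},\epsilon_i$ are sub-Gaussian (Assumptions~\ref{assump:sub_gaussian} and~\ref{assump:subgau_error}), the product $R_iX_{ik}\epsilon_i$ is sub-exponential with $\norm{R_iX_{ik}\epsilon_i}_{\psi_1} \lesssim \norm{X_{ik}}_{\psi_2}\norm{\epsilon_i}_{\psi_2} \lesssim \sqrt{\phi_1}\,\sigma_{\epsilon}$. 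Bernstein's inequality for sums of independent sub-exponentials together with a union bound over $k=1,\dots,d$ then gives, for $n$ large enough that the Gaussian regime is in force, $\frac{1}{n}\norm{\sum_{i=1}^n R_iX_i\epsilon_i}_{\infty} \lesssim \sigma_{\epsilon}\sqrt{\phi_1\log(d/\delta)/n}$ with probability at least $1-\delta$. Choosing the absolute constant $A_2$ large enough, the prescribed $\lambda = A_2\sigma_{\epsilon}\sqrt{\phi_1\log(d/\delta)/n}$ then exceeds $\frac{4}{n}\norm{\sum_i R_iX_i\epsilon_i}_{\infty}$ on this event, and substituting into part~(a) yields $\norm{\hat{\beta}-\beta_0}_2 \lesssim \frac{\sigma_{\epsilon}}{\kappa_R^2}\sqrt{\phi_1 s_{\beta}\log(d/\delta)/n}$.

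The main obstacle is the passage in step~(ii) from the population restricted eigenvalue condition to its sample analogue: one must check both that $\hat{\Delta}$ genuinely lies in the cone $\mathcal{C}_1(S_{\beta},3)$ \emph{before} the eigenvalue lower bound is invoked (so there is no circularity), and that the threshold $\sqrt{s_{\beta}\log(ed/s_{\beta})/n} + \sqrt{\log(1/\delta)/n} < \kappa_R^2/(A_1\phi_{s_{\beta}})$ assumed in the theorem is exactly what Lemma~\ref{lem:gram_mat_conc} needs to transfer the eigenvalue bound with only a constant-factor loss in $\kappa_R^2$. The remaining ingredients --- H{\"o}lder's inequality, the $\ell_1$--$\ell_2$ comparison $\norm{\hat{\Delta}_{S_{\beta}}}_1 \le \sqrt{s_{\beta}}\norm{\hat{\Delta}}_2$, and the sub-exponential Bernstein bound --- are routine.
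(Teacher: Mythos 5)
Your proposal is correct and follows essentially the same route as the paper's proof: the basic inequality with the complete-case design, H\"older's inequality plus the choice of $\lambda$ to obtain both the cone membership $\hat{\Delta}\in\mathcal{C}_1(S_{\beta},3)$ and the bound $\tfrac{3\lambda}{2}\sqrt{s_{\beta}}\norm{\hat{\Delta}}_2$, Lemma~\ref{lem:gram_mat_conc} to transfer the restricted eigenvalue condition \eqref{res_eigen_cond} to the sample gram matrix under the stated sample-size condition, and for part (b) the sub-exponential Bernstein bound with a union bound on $\tfrac{4}{n}\norm{\sum_i R_iX_i\epsilon_i}_{\infty}$. Your explicit verification that $\mathrm{E}(R_iX_{ik}\epsilon_i)=0$ via the MAR assumption is a detail the paper leaves implicit but is exactly what justifies applying Bernstein's inequality there.
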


\begin{remark}
	\label{remark:lasso_const}
	If $\log d=o(n)$ with $d>n$ and $\sigma_{\epsilon},\phi_1>0$ are independent of $n$ and $d$, then we obtain the asymptotic consistency result $\norm{\hat{\beta}-\beta_0}_2 = O_P\left(\frac{1}{\kappa_R^2}\sqrt{\frac{s_{\beta}\log d}{n}}\right)$ by setting $\delta=\frac{1}{d}$.
\end{remark}

The proof of \autoref{thm:lasso_const} is in \autoref{subapp:proof_consistency}. It implies that the Lasso pilot estimate $\hat{\beta}$ in \eqref{lasso_pilot} based on the complete-case data is consistent. This property is a consequence of our MAR assumption and the lower eigenvalue bound condition on $\mathrm{E}(RXX^T)$ (Assumption~\ref{assump:gram_lower_bnd}), because the conditional distribution of $Y$ given $X$ remains unchanged on the complete-case data ($R=1$) and the loss function in \eqref{lasso_pilot} is ``strictly convex'' with the cone in which $\hat{\beta}$ lies. 

Next, we present the consistency result for the dual solution $\hat{\ell}(x)$ to the sparse $r_{\ell}$-approximation $\tilde{\ell}(x)$. Among other things, the rate of consistency for $\hat{\ell}(x)$ depends on the rate of consistency $r_{\pi}=r_{\pi}(n,\delta)$ for estimating propensity scores.

\begin{theorem}[Consistency of the dual solution $\hat{\ell}(x)$]
	\label{thm:dual_consist2}
	Let $\delta \in (0,1)$, $x\in \mathbb{R}^d$ be a fixed covariate vector, and $A_1,A_2,A_3>0$ be some large absolute constants. Suppose that Assumptions~\ref{assump:basic}, \ref{assump:sub_gaussian}, \ref{assump:gram_lower_bnd}, \ref{assump:prop_consistent_rate}, and \ref{assump:sparse_dual} hold
	as well as
	$$\sqrt{\frac{s_{\ell}(x)\log(ed/s_{\ell}(x))}{n}} + \sqrt{\frac{\log(1/\delta)}{n}} + r_{\pi} < \min\left\{\frac{\kappa_R^2}{A_1 \phi_{s_{\ell}(x)}}, 1 \right\} \quad \text{ and } \quad d\geq s_{\ell}(x) + 2.$$
	We denote $r_{\gamma}\equiv r_{\gamma}(n,\delta) = \frac{\sqrt{\phi_1}}{\kappa_R}\norm{x}_2\sqrt{\frac{\log(d/\delta)}{n}} + \frac{\sqrt{\phi_1 \phi_{s_{\ell}(x)}}\norm{x}_2}{\kappa_R^2} \cdot r_{\pi}$. If $\frac{\gamma}{n} = A_2\cdot r_{\gamma}$ and 
	$$r_{\ell} \leq \min\left\{\frac{1}{2},\, \left[\left(\frac{A_2^2 \,\kappa_R^6\, \phi_1}{A_3^2\norm{x}_2 \phi_{s_{\ell}(x)}}\right)^{\frac{1}{4}} \left(\frac{\log(d/\delta)}{n}\right)^{\frac{1}{4}} + \left(\frac{A_2^2 \phi_1 \kappa_R^4}{A_3^2 \phi_{s_{\ell}(x)} \norm{x}_2^2}\right)^{\frac{1}{4}} \sqrt{r_{\pi}}\right]\right\},$$
	then with probability at least $1-\delta$,
	\begin{align*}
		\norm{\hat{\ell}(x) - \tilde{\ell}(x)}_2
		&\lesssim \frac{\norm{x}_2}{\kappa_R^3}\left[ \sqrt{\frac{\phi_1 s_{\ell}(x)\log(d/\delta)}{n}} + \frac{\phi_{s_{\ell}(x)}}{\kappa_R} \cdot r_{\ell} + \frac{\sqrt{\phi_1 \phi_{s_{\ell}(x)} s_{\ell}(x)}}{\kappa_R} \cdot  r_{\pi} \right].
	\end{align*}
\end{theorem}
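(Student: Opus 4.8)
The plan is to analyze the dual program \eqref{debias_prog_dual} as a Lasso-type penalized $M$-estimation problem and run the classical restricted-eigenvalue argument, carefully tracking the two features that are special here: the estimation error of the propensity scores and the fact that $\ell_0(x)$ is only approximately sparse. Write $\hat{\Sigma} := \frac1n\sum_{i=1}^n \hat{\pi}_i X_iX_i^T$ and $\Sigma_R := \mathrm{E}(RXX^T)$, so that $\hat{\ell}(x)$ minimizes $F(\ell) = \frac14\ell^T\hat{\Sigma}\ell + x^T\ell + \frac{\gamma}{n}\norm{\ell}_1$, whereas $\ell_0(x) = -2\Sigma_R^{-1}x$ is the unpenalized minimizer of the population objective, i.e.\ $\frac12\Sigma_R\ell_0(x) + x = 0$. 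From the optimality $F(\hat{\ell}(x)) \le F(\tilde{\ell}(x))$, with $\hat{\Delta} := \hat{\ell}(x) - \tilde{\ell}(x)$ and effective score $G := \frac12\hat{\Sigma}\tilde{\ell}(x) + x$, a short expansion yields the basic inequality
\[
\tfrac14\,\hat{\Delta}^T\hat{\Sigma}\,\hat{\Delta} \;\le\; \norm{G}_{\infty}\norm{\hat{\Delta}}_1 + \tfrac{\gamma}{n}\bigl(\norm{\tilde{\ell}(x)}_1 - \norm{\hat{\ell}(x)}_1\bigr),
\]
whose left-hand side is nonnegative.

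The next step is to bound the effective score by splitting $G = \bigl(\tfrac12\hat{\Sigma}\ell_0(x) + x\bigr) + \tfrac12\hat{\Sigma}\bigl(\tilde{\ell}(x) - \ell_0(x)\bigr)$. Because $x = -\tfrac12\Sigma_R\ell_0(x)$, the first summand equals $\tfrac12(\hat{\Sigma} - \Sigma_R)\ell_0(x)$, which I would further decompose into $\tfrac12\bigl(\tfrac1n\sum_i \pi_i X_iX_i^T - \Sigma_R\bigr)\ell_0(x)$ --- a coordinatewise average of i.i.d.\ mean-zero sub-exponential variables, controlled by a Bernstein inequality and a union bound over the $d$ coordinates (Assumption~\ref{assump:sub_gaussian}), giving a $\sqrt{\log(d/\delta)/n}$ rate with scale governed by $\phi_1$ and $\norm{\ell_0(x)}_2 \le 2\norm{x}_2/\kappa_R^2$ --- plus $\tfrac{1}{2n}\sum_i(\hat{\pi}_i - \pi_i)X_iX_i^T\ell_0(x)$, which by Assumption~\ref{assump:prop_consistent_rate} and Cauchy--Schwarz is at most $r_{\pi}$ times an empirical second-moment quantity that concentrates around a $\phi_1,\phi_{s_{\ell}(x)}$-controlled constant. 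Together these give $\norm{\tfrac12\hat{\Sigma}\ell_0(x)+x}_{\infty} \lesssim r_{\gamma}$ with probability at least $1-\delta$, which is precisely what makes the deterministic hypothesis $\gamma \ge \norm{\tfrac52\sum_i\hat{\pi}_iX_iX_i^T\ell_0(x) + 5nx}_{\infty}$ compatible with the choice $\gamma/n = A_2 r_{\gamma}$ in part (b). The remaining summand $\tfrac12\hat{\Sigma}(\tilde{\ell}(x)-\ell_0(x))$ I would bound via Lemma~\ref{lem:res_cross_poly} by a multiple of $\nu\,r_{\ell}\norm{\ell_0(x)}_2$; the stated restriction on $r_{\ell}$ in part (a) ensures it is dominated by a fixed fraction of $\gamma/n$. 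With $\norm{G}_{\infty} \le \tfrac{\gamma}{2n}$ (up to the explicit $r_{\ell}$-piece, which is carried along), the usual argument using $\mathrm{support}(\tilde{\ell}(x)) = S_{\ell}(x)$ applied to the basic inequality forces $\norm{\hat{\Delta}_{S_{\ell}(x)^c}}_1 \le 3\norm{\hat{\Delta}_{S_{\ell}(x)}}_1$, i.e.\ $\hat{\Delta} \in \mathcal{C}_1(S_{\ell}(x),3)$.

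It then remains to lower-bound $\hat{\Delta}^T\hat{\Sigma}\hat{\Delta}$ on this cone. By Remark~\ref{remark:res_eigen_ell}, $\Sigma_R$ satisfies the $\mathcal{C}_1(S_{\ell}(x),3)$-restricted eigenvalue condition with constant $\kappa_R^2$, so I write $\hat{\Delta}^T\hat{\Sigma}\hat{\Delta} \ge \kappa_R^2\norm{\hat{\Delta}}_2^2 - \bigl|\hat{\Delta}^T(\hat{\Sigma}-\Sigma_R)\hat{\Delta}\bigr|$ and control the perturbation by splitting it into a sample-fluctuation part, bounded on the cone via Lemma~\ref{lem:gram_mat_conc} by $\lesssim s_{\ell}(x)\sqrt{\log d/n}\,\norm{\hat{\Delta}}_2^2$, and a propensity-error part $\bigl|\tfrac1n\sum_i(\hat{\pi}_i-\pi_i)(X_i^T\hat{\Delta})^2\bigr| \le r_{\pi}\cdot\tfrac1n\sum_i(X_i^T\hat{\Delta})^2 \lesssim r_{\pi}\phi_{s_{\ell}(x)}\norm{\hat{\Delta}}_2^2$; the sample-size hypothesis and $r_{\pi} < \kappa_R^2/(A_1\phi_{s_{\ell}(x)})$ make both a small fraction of $\kappa_R^2\norm{\hat{\Delta}}_2^2$, so $\hat{\Delta}^T\hat{\Sigma}\hat{\Delta} \gtrsim \kappa_R^2\norm{\hat{\Delta}}_2^2$. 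Feeding this into the basic inequality together with $\norm{\hat{\Delta}_{S_{\ell}(x)}}_1 \le \sqrt{s_{\ell}(x)}\norm{\hat{\Delta}}_2$, the cone relation, and the $r_{\ell}$-piece of $\norm{G}_{\infty}$ yields $\kappa_R^2\norm{\hat{\Delta}}_2^2 \lesssim \bigl(\tfrac{\gamma}{n}\sqrt{s_{\ell}(x)} + \tfrac{\phi_{s_{\ell}(x)}\norm{x}_2}{\kappa_R^2}r_{\ell}\bigr)\norm{\hat{\Delta}}_2$; dividing by $\norm{\hat{\Delta}}_2$ gives part (a). Part (b) then follows by substituting $\gamma/n = A_2 r_{\gamma}$ and the high-probability score bound of the previous paragraph into (a), and simplifying $r_{\gamma}$. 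I expect the main obstacle to be the two propensity-error terms: uniformly controlling $\max_{1\le i\le n}|\hat{\pi}_i-\pi_i|$ against the empirical second-moment quantities it multiplies --- once inside the effective score and once inside the restricted-eigenvalue transfer --- while keeping track of how the dense vector $\ell_0(x)$ enters only through its sparse proxy $\tilde{\ell}(x)$ and the approximation radius $r_{\ell}$ (via Lemma~\ref{lem:res_cross_poly}). This is constant-chasing rather than a single conceptual hurdle, but it is where essentially all of the work lies, and where the exact powers of $\kappa_R$, $\phi_1$, $\phi_{s_{\ell}(x)}$, and $\norm{x}_2$ in the final rates are pinned down.
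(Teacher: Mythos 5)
Your overall architecture (basic inequality from optimality of $\hat{\ell}(x)$, $\ell_\infty$-control of the score at $\ell_0(x)$ via Bernstein plus a union bound and the propensity-error split, restricted-eigenvalue transfer from $\mathrm{E}(RXX^T)$ to $\hat{\Sigma}$ on a cone, then substitution of $\gamma/n \asymp r_\gamma$ for part (b)) matches the paper's proof and its supporting Lemmas~\ref{lem:gamma_rate}, \ref{lem:rel_poly_cost}, and \ref{lem:gram_mat_conc}. However, there is a genuine gap in how you obtain the cone membership $\hat{\Delta}\in\mathcal{C}_1(S_\ell(x),3)$, and it propagates into the $r_\ell$-term of the final bound. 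You fold the approximation error into an ``effective score'' $G=\tfrac12\hat{\Sigma}\tilde{\ell}(x)+x$ and assert that the extra piece $\tfrac12\hat{\Sigma}(\tilde{\ell}(x)-\ell_0(x))$ is, under the stated restriction on $r_\ell$, a fixed fraction of $\gamma/n$ in $\ell_\infty$-norm. That piece is \emph{linear} in $r_\ell$: by Cauchy--Schwarz it is of order $\sqrt{\phi_1}\,\sqrt{\nu/n}\,r_\ell\norm{\ell_0(x)}_2$, and at the largest $r_\ell$ permitted by the hypothesis, $r_\ell \asymp \tfrac{\kappa_R^2}{\norm{x}_2}\sqrt{\gamma/\nu}$, this is of order $\sqrt{\phi_1\gamma/n}$, which dominates $\gamma/n$ since $\gamma/n\asymp r_\gamma\to 0$. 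So the coefficient multiplying $\norm{\hat{\Delta}_{S_\ell(x)^c}}_1$ in your basic inequality need not be negative, and the standard cone argument does not close under the theorem's fourth-root-type condition on $r_\ell$; it would only work under a strictly stronger (square-root-type) condition, i.e.\ it does not prove the theorem as stated. The paper avoids this by deriving the geometry in Lemma~\ref{lem:res_cross_poly} from a Taylor expansion around $\ell_0(x)$ rather than $\tilde{\ell}(x)$: there the approximation error enters only through the quadratic $\tfrac14(\tilde{\ell}-\ell_0)^T\hat{\Sigma}(\tilde{\ell}-\ell_0)\le r_\ell^2\nu\norm{\ell_0(x)}_2^2/(4a_0 n)$, which is \emph{quadratic} in $r_\ell$ and is compared to $\gamma/n$, yielding membership in the union of $\mathcal{C}_1(S_\ell(x),3)$ with a small $\ell_1$-ball; the condition $r_\ell<\tfrac{\kappa_R^2}{\norm{x}_2}\sqrt{2\gamma/(13\nu)}$ is only needed to make that ball have radius below one so that on $\mathbb{S}^{d-1}$ just the cone survives.

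A second, related discrepancy: even if you strengthened the $r_\ell$ condition to salvage the cone step, routing the approximation bias through $\norm{G}_\infty\cdot\norm{\hat{\Delta}}_1\lesssim\norm{G}_\infty\sqrt{s_\ell(x)}\norm{\hat{\Delta}}_2$ produces an $r_\ell$-term of order $\sqrt{\phi_1\phi_{s_\ell(x)}s_\ell(x)}\,\norm{x}_2\,r_\ell/\kappa_R^4$, i.e.\ an extra $\sqrt{s_\ell(x)}$ factor compared with the claimed $\phi_{s_\ell(x)}\norm{x}_2\,r_\ell/\kappa_R^4$ in parts (a) and (b). The paper instead keeps the cross term $\hat{\Delta}^T\hat{\Sigma}(\tilde{\ell}(x)-\ell_0(x))$ on the quadratic (left-hand) side of the basic inequality and bounds it as a bilinear form uniformly over the product of cones $\mathcal{C}_1(S_\ell(x),3)\times\mathcal{C}_2(S_\ell(x),1)$, using Lemma~\ref{lem:dual_approx_cone} to place $\tilde{\ell}(x)-\ell_0(x)$ in the second cone and Lemmas~\ref{lem:size_dom_cone}, \ref{lem:max_biconvex}, \ref{lem:gram_mat_conc} for the concentration and population parts; this gives $\lesssim\phi_{s_\ell(x)}r_\ell\norm{\ell_0(x)}_2\norm{\hat{\Delta}}_2$ with no $\sqrt{s_\ell(x)}$ loss. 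You should restructure your argument along these lines: expand around $\ell_0(x)$ to get the geometry (cone union small $\ell_1$-ball), and treat the $(\tilde{\ell}-\ell_0)$ cross term as a restricted bilinear form rather than through the $\ell_\infty$ score.
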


\begin{remark}
	\label{remark:dual_const}
	If $\log d=o(n)$, $\norm{x}_2=O(1)$, and $\phi_1,\phi_{s_{\ell}(x)}$ are independent of $n$ and $d$, then we obtain the asymptotic consistency result
	$$\norm{\hat{\ell}(x) -\tilde{\ell}(x)}_2 = O_P\left(\frac{1}{\kappa_R^3} \sqrt{\frac{s_{\ell}(x)\log d}{n}} + \frac{r_{\ell}}{\kappa_R^4} + \frac{r_{\pi}\sqrt{s_{\ell}(x)}}{\kappa_R^4} \right)$$
	by setting $\delta = \frac{1}{d}$. This asymptotic rate of convergence comprises two parts. The first part is of the order $O_P\left(\frac{1}{\kappa_R^3}\sqrt{\frac{s_{\ell}(x)\log d}{n}} + \frac{r_{\ell}}{\kappa_R^4}\right)$, which is the oracle rate of convergence for $\norm{\hat{\ell}(x) -\tilde{\ell}(x)}_2$ when the true propensity scores are known; see Lemma~\ref{lem:dual_consist} of \autoref{subapp:proof_consist_dual}. The second part is of the order $O_P\left(\frac{r_{\pi}\sqrt{s_{\ell}(x)} }{\kappa_R^2}\right)$, which is related to the consistency of the propensity score estimation.
\end{remark}

The proof of \autoref{thm:dual_consist2} is in \autoref{subapp:proof_consist_dual}. By triangle inequality, this result implies that
\begin{align*}
	\norm{\hat{\ell}(x) - \ell_0(x)}_2 &\leq \norm{\hat{\ell}(x) - \tilde{\ell}(x)}_2 + \norm{\tilde{\ell}(x) - \ell_0(x)}_2 \\
	&= O_P\left(\frac{1}{\kappa_R^3} \sqrt{\frac{s_{\ell}(x)\log d}{n}} + \left(\frac{1+\kappa_R^2}{\kappa_R^4}\right) r_{\ell} + \frac{r_{\pi}\sqrt{s_{\ell}(x)}}{\kappa_R^4} \right).
\end{align*}
Thus, the dual solution $\hat{\ell}(x)$ is consistent for the population dual solution $\ell_0(x)$ under the setup in Remark~\ref{remark:dual_const} if $\frac{1}{\kappa_R^3}\sqrt{\frac{s_{\ell}(x)\log d}{n}} \to 0$, $\left(\frac{1+\kappa_R^2}{\kappa_R^4}\right) r_{\ell} \to 0$, and $\frac{r_{\pi}\sqrt{s_{\ell}(x)}}{\kappa_R^4} \to 0$ as $n\to \infty$. In particular, we only need to consistently estimate the propensity scores $\pi_i, i=1,...,n$ because the rate of consistency $r_{\pi}$ depends on the in-sample prediction; see Remark~\ref{remark:benign_overfit} below.

The assumptions in \autoref{thm:dual_consist2} are also sufficient to establish strong duality between the primal debiasing program \eqref{debias_prog} and its dual \eqref{debias_prog_dual} as follows, whose proof is in \autoref{subapp:proofs_dual}.

\begin{lemma}[Sufficient conditions for strong duality]
	\label{lem:strong_duality}
	Let $\delta \in (0,1)$. Under the assumptions of \autoref{thm:dual_consist2}, there exists $\gamma>0$ such that $\frac{\gamma}{n} \asymp r_{\gamma}\equiv r_{\gamma}(n,\delta)$ and strong duality as well as the relation \eqref{primal_dual_sol2} in Proposition~\ref{prop:dual_debias} hold with probability at least $1-\delta$.
\end{lemma}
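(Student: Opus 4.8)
The plan is to verify a constraint qualification for the primal debiasing program \eqref{debias_prog} and then invoke the standard strong-duality theorem for convex quadratic programs with affine (box) constraints. Since \eqref{debias_prog} minimizes a convex quadratic objective $\sum_i \hat\pi_i w_i^2$ subject to the finitely many affine inequality constraints encoded by $\norm{x - \frac{1}{\sqrt n}\sum_i w_i \hat\pi_i X_i}_\infty \le \gamma/n$, Slater's condition is \emph{not} required: for such problems strong duality holds as soon as the primal is feasible (this is the refined Slater condition for affine constraints, e.g. Boyd--Vandenberghe §5.2.3). Hence the entire task reduces to exhibiting a feasible point $\bm w$ of \eqref{debias_prog} with probability at least $1-\delta$ for a suitable choice of $\gamma$ with $\gamma/n \asymp r_\gamma$.

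\textbf{Step 1 (candidate feasible point).} I would take the natural candidate $\bm w^\circ$ induced by the population dual solution via the primal--dual relation \eqref{primal_dual_sol2}, i.e. $w_i^\circ = -\frac{1}{2\sqrt n} X_i^T \ell_0(x)$, or, slightly more robustly, the one induced by the $r_\ell$-approximation $\tilde\ell(x)$. Plugging this into the constraint, the quantity $x - \frac{1}{\sqrt n}\sum_i w_i^\circ \hat\pi_i X_i$ becomes $x + \frac{1}{2n}\sum_i \hat\pi_i X_i X_i^T \ell_0(x)$. The point is that this is a sample average (with estimated weights) approximating the population first-order stationarity condition $x + \frac14\,\mathrm E[R\,(X^T\ell_0(x))\,X]\,\cdot\,(\text{factor}) = 0$ coming from \eqref{debias_prog_popul_dual}--\eqref{popul_dual_exact}; by construction $\mathrm E(RXX^T)\ell_0(x) = -2x$, so the population version of the residual is exactly zero.

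\textbf{Step 2 (controlling the residual).} I would then bound $\norm{x + \frac{1}{2n}\sum_i \hat\pi_i X_i X_i^T \ell_0(x)}_\infty$ in sup-norm by splitting into (i) the deviation of $\frac1n\sum_i \pi_i X_i X_i^T$ from $\mathrm E(RXX^T) = \mathrm E(\pi(X) XX^T)$, controlled by a sub-Gaussian matrix-concentration / max-of-entries bound (Assumption~\ref{assump:sub_gaussian}), contributing a term of order $\frac{\sqrt{\phi_1}}{\kappa_R}\norm{x}_2\sqrt{\log(d/\delta)/n}$ after accounting for $\norm{\ell_0(x)}_2 \le 2\norm{x}_2/\kappa_R^2$; and (ii) the propensity-estimation error $\frac1n\sum_i(\hat\pi_i-\pi_i)X_iX_i^T\ell_0(x)$, bounded by $\max_i|\hat\pi_i-\pi_i|$ times an operator-norm factor, which under Assumption~\ref{assump:prop_consistent_rate} is of order $\frac{\sqrt{\phi_1\phi_{s_\ell(x)}}\norm{x}_2}{\kappa_R^2}\,r_\pi$ on an event of probability $\ge 1-\delta$. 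These are precisely the two pieces defining $r_\gamma(n,\delta)$, so on the intersection of the two good events the residual is $\le C\,r_\gamma$; choosing $\gamma = A_2 n\, r_\gamma$ with $A_2 \ge C$ makes $\bm w^\circ$ feasible, giving $\gamma/n \asymp r_\gamma$. Much of this bookkeeping is already done inside the proof of \autoref{thm:dual_consist2}(b), so I would cite those intermediate bounds rather than redo them.

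\textbf{Step 3 (conclusion).} Feasibility of \eqref{debias_prog} plus the affine-constraint form then yields strong duality and attainment of both optima, whence the KKT stationarity condition gives the primal--dual identity \eqref{primal_dual_sol2} (this implication is already recorded in Proposition~\ref{prop:dual_debias} under strong duality). The main obstacle is Step 2: making sure the sup-norm concentration of the weighted empirical Gram matrix against its population counterpart, and the propagation of the propensity error through $\ell_0(x)$, produce exactly the stated rate $r_\gamma$ with the correct dependence on $\kappa_R$, $\phi_1$, $\phi_{s_\ell(x)}$, and $\norm{x}_2$ — but since this is the same computation underpinning the feasibility-of-$\gamma$ hypothesis in \autoref{thm:dual_consist2}(b), it should largely be a matter of reassembling existing estimates rather than new work.
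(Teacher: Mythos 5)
Your proposal is correct and follows essentially the same route as the paper: the paper also takes the candidate weights $w_i^*(x) = -\tfrac{1}{2\sqrt{n}}X_i^T\ell_0(x)$ (equivalently $\tfrac{1}{\sqrt{n}}\bm{X}[\mathrm{E}(RXX^T)]^{-1}x$), splits the sup-norm residual into an oracle-propensity concentration term and a propensity-estimation term bounded exactly as in Lemma~\ref{lem:gamma_rate}, and then picks $\gamma$ with $\gamma/n \asymp r_\gamma$ so that the constraint set is (strictly) nonempty with probability at least $1-\delta$. The only cosmetic difference is that the paper verifies strict feasibility and invokes Slater's condition, whereas you note that for affine box constraints plain feasibility already suffices; this refinement does not alter the substance of the argument.
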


The consistency results for the Lasso pilot estimate $\hat{\beta}$ and the dual solution $\hat{\ell}(x)$ also imply that under mild regularity conditions, the ``Conditional variance II'' term in Proposition~\ref{prop:cond_MSE} is asymptotically negligible; see Lemma~\ref{lem:negligible_cond_var} and its proof in \autoref{subapp:proofs_dual}. This result completes the motivation of the debiased program \eqref{debias_prog} from the perspective of (conditional) bias-variance trade-off in \autoref{subsec:bias_var_interp}.

\begin{lemma}[Negligible conditional variance term]
	\label{lem:negligible_cond_var}
	Suppose that Assumptions~\ref{assump:basic}, \ref{assump:sub_gaussian}, \ref{assump:subgau_error}, \ref{assump:gram_lower_bnd}, \ref{assump:prop_consistent_rate}, and \ref{assump:sparse_dual} hold
	as well as 
	$$\left[s_{\beta} \log(d/s_{\beta}) + s_{\ell}(x) \log(d/s_{\ell}(x))\right]^2 =O\left(\sqrt{n}\right), \quad \frac{\norm{x}_2}{\kappa_R^2}\sqrt{\phi_{2s_{\ell}(x)} \phi_{2s_{\beta}}} = O(1), \quad \text{ and } \quad r_{\ell}=O(1).$$
	If $\norm{\hat{\beta}-\beta_0}_2=o_P(1)$ and $\norm{\hat{\ell}(x) - \tilde{\ell}(x)}_2 =o_P(1)$, then the ``Conditional variance II'' term in Proposition~\ref{prop:cond_MSE} is asymptotically negligible in the sense that
	\begin{align*}
		\left(\beta_0 - \hat{\beta}\right)^T \left[\sum_{i=1}^n \hat{w}_i^2(x)\hat{\pi}_i\left(1-\hat{\pi}_i\right)X_iX_i^T \right] \left(\beta_0 - \hat{\beta}\right) =o_P(1).
	\end{align*} 
\end{lemma}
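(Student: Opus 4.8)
The plan is to bound the quadratic form by pulling out the operator norm of the matrix $M := \sum_{i=1}^n \hat w_i^2(x)\hat\pi_i(1-\hat\pi_i)X_iX_i^T$, so that the target expression is at most $\norm{M}_{\mathrm{op}}\cdot\norm{\hat\beta-\beta_0}_2^2$... wait, that loses too much because $M$ is a sum of $n$ rank-one terms and its operator norm need not be $o(1)$. Let me reconsider.

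Let me think about this more carefully before committing.

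Actually: $(\beta_0-\hat\beta)^T M (\beta_0-\hat\beta) = \sum_i \hat w_i^2 \hat\pi_i(1-\hat\pi_i) \big(X_i^T(\beta_0-\hat\beta)\big)^2$. Since $\hat\pi_i(1-\hat\pi_i)\le 1/4$, this is $\le \frac14 \sum_i \hat w_i^2 (X_i^T(\beta_0-\hat\beta))^2$. Now use $\hat w_i = -\frac{1}{2\sqrt n}X_i^T\hat\ell(x)$ (strong duality, Lemma~\ref{lem:strong_duality}), so $\hat w_i^2 = \frac{1}{4n}(X_i^T\hat\ell(x))^2$, giving the bound $\frac{1}{16n}\sum_i (X_i^T\hat\ell(x))^2 (X_i^T(\beta_0-\hat\beta))^2$. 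This is where I'd want to invoke a "restricted cross-polynomial" concentration inequality (the hypothesis references $\nu$ from Lemma~\ref{lem:res_cross_poly}, which evidently controls $\frac1n\sum_i (X_i^Tu)^2(X_i^Tv)^2$ uniformly over sparse cones). Since $\hat\ell(x)-\tilde\ell(x)$ and $\hat\beta-\beta_0$ both lie in the relevant cones $\mathcal C_1(S_\ell(x),3)$ and $\mathcal C_1(S_\beta,3)$ (standard Lasso/dual KKT argument), and $\tilde\ell(x)$, $\beta_0$ are $s_\ell(x)$- and $s_\beta$-sparse respectively, I can split $X_i^T\hat\ell = X_i^T\tilde\ell + X_i^T(\hat\ell-\tilde\ell)$ and similarly for $\hat\beta$, expand the product into four cross terms, and bound each via Cauchy–Schwarz together with the cross-polynomial bound $\frac1n\sum_i(X_i^Tu)^2(X_i^Tv)^2 \lesssim \phi_{\|u\|_0 + \cdot}\phi_{\|v\|_0+\cdot}\norm{u}_2^2\norm{v}_2^2$ (up to the cone correction, hence the $2s_\ell(x)$, $2s_\beta$ subscripts in the hypothesis).

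The key steps, in order: (1) reduce to $\frac{1}{16n}\sum_i (X_i^T\hat\ell(x))^2(X_i^T(\hat\beta-\beta_0))^2$ using strong duality and $\hat\pi_i(1-\hat\pi_i)\le \frac14$; (2) establish that $\hat\ell(x)-\tilde\ell(x)\in\mathcal C_1(S_\ell(x),3)$ and $\hat\beta-\beta_0\in\mathcal C_1(S_\beta,3)$ from the optimality conditions of the dual program and the Lasso; (3) decompose $\hat\ell = \tilde\ell + (\hat\ell-\tilde\ell)$, $\hat\beta-\beta_0 = (\hat\beta-\beta_0)$ and expand $(a_i+b_i)^2 c_i^2$ into pieces, applying Lemma~\ref{lem:res_cross_poly} and Lemma~\ref{lem:gram_mat_conc}-type concentration to each; (4) collect bounds: the dominant "oracle" piece is $\frac{1}{16n}\sum_i(X_i^T\tilde\ell)^2(X_i^T(\hat\beta-\beta_0))^2 \lesssim \frac{\phi_{2s_\ell(x)}\phi_{2s_\beta}}{?}\norm{\tilde\ell}_2^2\norm{\hat\beta-\beta_0}_2^2$; since $\norm{\tilde\ell(x)}_2 \asymp \norm{\ell_0(x)}_2 \lesssim \norm{x}_2/\kappa_R^2$ and $\norm{\hat\beta-\beta_0}_2 = O_P(\kappa_R^{-2}\sqrt{s_\beta\log d/n})$, this is $\lesssim \frac{\norm{x}_2^2}{\kappa_R^4}\phi_{2s_\ell(x)}\phi_{2s_\beta}\cdot\frac{s_\beta\log d}{n}$; (5) verify this $\to 0$ under the stated growth conditions, noting $\frac{\norm{x}_2}{\kappa_R^2}\sqrt{\phi_{2s_\ell(x)}\phi_{2s_\beta}}=O(1)$ kills the prefactor and the $[s_\beta\log(d/s_\beta) + s_\ell(x)\log(d/s_\ell(x))]^2 = O(\sqrt n)$ condition (together with $\norm{\hat\beta-\beta_0}_2,\norm{\hat\ell-\tilde\ell}_2 = o_P(1)$ and Remarks~\ref{remark:lasso_const}, \ref{remark:dual_const}) controls the remaining factors including the cross terms involving $\hat\ell-\tilde\ell$.

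The main obstacle is step (3)–(4): handling the fourth-order (cross-polynomial) empirical process $\frac1n\sum_i (X_i^Tu)^2(X_i^Tv)^2$ uniformly over the two intersecting cones, and in particular bookkeeping the cross terms that pair $\tilde\ell$ (sparse) with $\hat\ell-\tilde\ell$ (only approximately sparse, living in a cone) — here I must use that membership in $\mathcal C_1(S_\ell(x),3)$ lets me replace $\norm{\hat\ell-\tilde\ell}_1$ by $\lesssim \sqrt{s_\ell(x)}\norm{\hat\ell-\tilde\ell}_2$, then invoke the sub-Gaussian fourth-moment / Hanson–Wright-type bound furnished by Lemma~\ref{lem:res_cross_poly} with the enlarged sparsity $2s_\ell(x)$ and $2s_\beta$ (explaining those subscripts in the hypothesis). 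Once each of the (finitely many) pieces is shown to be $o_P(1)$ using $\norm{\hat\ell-\tilde\ell}_2 = o_P(1)$, $\norm{\hat\beta-\beta_0}_2 = o_P(1)$, and the explicit rates from Theorems~\ref{thm:lasso_const} and \ref{thm:dual_consist2}, the conclusion follows.
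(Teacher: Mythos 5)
Your overall route is the same as the paper's: use strong duality to write $\hat w_i(x)=-\tfrac{1}{2\sqrt n}X_i^T\hat\ell(x)$, control the factor $\hat\pi_i(1-\hat\pi_i)$, split $\hat\ell(x)=\tilde\ell(x)+(\hat\ell(x)-\tilde\ell(x))$ with both error vectors living in the cones $\mathcal C_1(S_\ell(x),3)$ and $\mathcal C_1(S_\beta,3)$, and then bound the resulting fourth-order empirical process $\tfrac1n\sum_i(X_i^Tu)^2(X_i^Tv)^2$ uniformly over those cones, concluding from $\norm{\hat\beta-\beta_0}_2=o_P(1)$, $\norm{\hat\ell(x)-\tilde\ell(x)}_2=o_P(1)$ and $\norm{\tilde\ell(x)}_2\lesssim(1+r_\ell)\norm{x}_2/\kappa_R^2$. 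Two points, however, need repair. First, your reduction via $\hat\pi_i(1-\hat\pi_i)\le\tfrac14$ only yields a one-sided bound: the paper deliberately does not assume $\hat\pi_i\in[0,1]$, so the summands can be negative and the quadratic form is not automatically nonnegative; you need two-sided control, which the paper gets from Assumption~\ref{assump:prop_consistent_rate} via the identity $a(1-a)-b(1-b)=(1-2b)(a-b)-(a-b)^2$, giving $\max_i|\hat\pi_i(1-\hat\pi_i)-\pi_i(1-\pi_i)|\lesssim r_\pi$ and hence $|\hat\pi_i(1-\hat\pi_i)|\lesssim 1+r_\pi$.

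Second, the key concentration step is not ``furnished by Lemma~\ref{lem:res_cross_poly}'': that lemma (and its constant $\nu$, which in fact does not appear in the hypotheses of Lemma~\ref{lem:negligible_cond_var}) only establishes the cone/cross-polytope membership of $\hat\ell(x)-\tilde\ell(x)$ and a second-order bound on $\tfrac1n\sum_i\hat\pi_i(X_i^Tv)^2$; it says nothing about degree-four cross moments. The uniform bound on $\tfrac1n\sum_i(X_i^Tu)^2(X_i^Tv)^2$ over the two cones is the hardest part of the argument and the paper proves it from scratch: it discretizes the cones via Lemma~\ref{lem:size_dom_cone} and the biconvexity Lemma~\ref{lem:max_biconvex}, bounds the mean by $\phi_{s_\ell(x)}\phi_{s_\beta}$ via Cauchy--Schwarz, and controls the deviation with a $\psi_{1/2}$-Orlicz chaining inequality (Lemma~\ref{lem:empirical_bound}) --- necessary because products of four sub-Gaussians are only $\psi_{1/2}$, so no Bernstein-type bound applies directly --- and this is exactly where the growth condition $\left[s_\beta\log(d/s_\beta)+s_\ell(x)\log(d/s_\ell(x))\right]^2=O(\sqrt n)$ enters, to make the chaining remainder $O_P(1)$. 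Your Hanson--Wright gesture points in the right direction, but as written the central estimate is assumed rather than proved; with that estimate supplied, the rest of your bookkeeping (indeed, only $o_P(1)$ consistency of the two norms is needed, not the explicit rates you invoke) goes through.
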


\subsection{Asymptotic Normality}
\label{subsec:asymp_normal}

In this section, we leverage the consistency results from \autoref{subsec:consistency} to establish the asymptotic normality of the debiased estimator $\hat{m}^{\text{debias}}(x;\hat{\bm{w}})$ in \eqref{debias_est}. As mentioned in \autoref{subsec:heuristic}, the key observation for applying the CLT is the linear expansion \eqref{bias_var_dual} based on the dual solution.

\begin{theorem}[Asymptotic normality of the debiased estimator]
	\label{thm:asym_normal}
	Let $x\in \mathbb{R}^d$ be a fixed query point and $s_{\max}=\max\left\{s_{\beta},s_{\ell}(x)\right\}$. Suppose that Assumptions~\ref{assump:basic}, \ref{assump:sub_gaussian}, \ref{assump:subgau_error}, \ref{assump:gram_lower_bnd}, \ref{assump:prop_consistent_rate}, and \ref{assump:sparse_dual} hold and $\lambda, \gamma, r_{\ell} >0$ are specified as in \autoref{thm:lasso_const}(b) and \autoref{thm:dual_consist2}, respectively. Furthermore, we assume that $\sigma_{\epsilon}\phi_1\phi_{s_{\max}}^{3/2}\norm{x}_2=O(1)$,
	\begin{align}
		\label{rate_required}
		\frac{(1+\kappa_R^2)s_{\max}\log(nd)}{\kappa_R^4} = o\left(\sqrt{n}\right), \quad \text{ and }  \quad \frac{(1+\kappa_R^4)\sqrt{\log(nd)}}{\kappa_R^6}\left(\sqrt{s_{\max}}\cdot r_{\ell} +s_{\max}r_{\pi}\right)=o(1).
	\end{align}
Then, when $\sigma_m^2(x) = \lim_{n\to\infty} \sigma_{\epsilon}^2 \cdot x^T\left[\mathrm{E}\left(R XX^T\right)\right]^{-1}x$ exists, we have that
$$\sqrt{n}\left[\hat{m}^{\mathrm{debias}}(x;\hat{\bm{w}}) - m_0(x)\right] \stackrel{d}{\to} \mathcal{N}\left(0,\, \sigma_m^2(x) \right).$$
\end{theorem}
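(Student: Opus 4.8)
The plan is to turn the heuristic identity \eqref{bias_var_dual} into a rigorous Slutsky argument. Under the hypotheses of \autoref{thm:asym_normal} (which include those of \autoref{thm:dual_consist2}(b)), Lemma~\ref{lem:strong_duality} gives strong duality for the chosen $\gamma$, so the dual representation \eqref{debias_est_dual} of $\hat{m}^{\mathrm{debias}}(x;\hat{\bm{w}})$ is available. Subtracting $m_0(x)=x^T\beta_0$, multiplying by $\sqrt{n}$ and inserting the deterministic population dual vector $\ell_0(x)=-2[\mathrm{E}(RXX^T)]^{-1}x$ of \eqref{popul_dual_exact},
\begin{align*}
	\sqrt{n}\left[\hat{m}^{\mathrm{debias}}(x;\hat{\bm{w}}) - m_0(x)\right]
	&= \underbrace{-\frac{1}{2\sqrt{n}}\sum_{i=1}^n R_i\epsilon_i X_i^T\ell_0(x)}_{=:\,T_n}
	\;+\;\underbrace{\left(-\frac{1}{2\sqrt{n}}\sum_{i=1}^n R_i\epsilon_i X_i^T\bigl[\hat{\ell}(x)-\ell_0(x)\bigr]\right)}_{=:\,\Delta_n}\\
	&\quad+\underbrace{\left[x+\frac{1}{2n}\sum_{i=1}^n R_i X_i X_i^T\hat{\ell}(x)\right]^T\sqrt{n}\bigl[\beta_0-\hat{\beta}\bigr]}_{=:\,B_n},
\end{align*}
so $\Delta_n+B_n$ is exactly the ``Bias terms'' of \eqref{bias_var_dual}. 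It then suffices to prove (i) $T_n\stackrel{d}{\to}\mathcal{N}\!\left(0,\sigma_m^2(x)\right)$ with $\sigma_m^2(x)=\sigma_{\epsilon}^2\,x^T[\mathrm{E}(RXX^T)]^{-1}x$, (ii) $\Delta_n=o_P(1)$, and (iii) $B_n=o_P(1)$; the theorem then follows by Slutsky's theorem.

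Step (i) is the easy one. Since $\ell_0(x)$ is deterministic, $T_n=\frac{1}{\sqrt{n}}\sum_i\xi_i$ is a sum of i.i.d.\ mean-zero terms $\xi_i=-\tfrac12 R_i\epsilon_i X_i^T\ell_0(x)$: indeed $\mathrm{E}[R_i\epsilon_i\mid X_i]=\mathrm{E}[R_i\mid X_i]\,\mathrm{E}[\epsilon_i\mid X_i]=0$ by the conditional independence in Assumption~\ref{assump:basic}(b) and \eqref{linear_reg}, and likewise $\mathrm{E}[\xi_i^2]=\tfrac{\sigma_{\epsilon}^2}{4}\,\ell_0(x)^T\mathrm{E}(RXX^T)\ell_0(x)=\sigma_{\epsilon}^2\,x^T[\mathrm{E}(RXX^T)]^{-1}x$ by \eqref{popul_dual_exact}. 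As the law of $\xi_i$ depends on $n$ (through $d$ and $\kappa_R$), I would invoke a triangular-array CLT; the required Lindeberg/Lyapunov condition follows from sub-Gaussianity of $\xi_i$ (bounded $(2+\delta)$-th moments relative to the variance), using $\norm{\ell_0(x)}_2\lesssim\norm{x}_2/\kappa_R^2$, the approximation of $\ell_0(x)$ by the $s_{\max}$-sparse $\tilde{\ell}(x)$ (Assumption~\ref{assump:sparse_dual}, Definition~\ref{defn:popul_dual}), and the scaling hypotheses, in particular $\sigma_{\epsilon}\phi_1\phi_{s_{\max}}^{3/2}\norm{x}_2=O(1)$ and \eqref{rate_required}.

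For (ii) and (iii) I would condition on $\mathcal{G}_n:=\sigma(\bm{X},\bm{R})$, with respect to which $\hat{\beta}$, $\hat{\ell}(x)$ and the $\hat{\pi}_i$ are measurable while each $\epsilon_i$ remains mean-zero with variance $\sigma_{\epsilon}^2$ (again by Assumption~\ref{assump:basic}(b)). Then $\mathrm{E}[\Delta_n^2\mid\mathcal{G}_n]=\tfrac{\sigma_{\epsilon}^2}{4n}\bigl[\hat{\ell}(x)-\ell_0(x)\bigr]^T\bigl(\sum_i R_iX_iX_i^T\bigr)\bigl[\hat{\ell}(x)-\ell_0(x)\bigr]$; splitting $\hat{\ell}(x)-\ell_0(x)=(\hat{\ell}(x)-\tilde{\ell}(x))+(\tilde{\ell}(x)-\ell_0(x))$ and using $(u+v)^TM(u+v)\le 2u^TMu+2v^TMv$, the first piece lies in the cone on which the restricted-eigenvalue sample/population Gram comparison (Lemma~\ref{lem:gram_mat_conc}) applies, contributing $\lesssim\phi_{2s_{\ell}(x)}\norm{\hat{\ell}(x)-\tilde{\ell}(x)}_2^2=o_P(1)$ by \autoref{thm:dual_consist2}(b), while the second, deterministic piece has $\ell_2$-norm at most $r_{\ell}\norm{\ell_0(x)}_2$ and its contribution is controlled under the $r_{\ell}$-conditions in \eqref{rate_required}; conditional Markov then gives $\Delta_n=o_P(1)$. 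For $B_n$, Hölder's inequality gives $|B_n|\le\norm{x-\tfrac{1}{\sqrt{n}}\sum_i\hat{w}_i R_iX_i}_{\infty}\cdot\sqrt{n}\,\norm{\hat{\beta}-\beta_0}_1$, where $\sqrt{n}\,\norm{\hat{\beta}-\beta_0}_1=O_P\!\bigl(\sigma_{\epsilon}\sqrt{\phi_1}\,s_{\beta}\sqrt{\log d}/\kappa_R^2\bigr)$ by \autoref{thm:lasso_const}(b) and the cone bound $\norm{\cdot}_1\lesssim\sqrt{s_{\beta}}\norm{\cdot}_2$, and the $\ell_{\infty}$-factor is at most $\gamma/n+\norm{\tfrac{1}{2n}\sum_i(R_i-\hat{\pi}_i)X_iX_i^T\hat{\ell}(x)}_{\infty}$ by the box constraint in \eqref{debias_prog}; the residual is handled by writing $R_i-\hat{\pi}_i=(R_i-\pi_i)+(\pi_i-\hat{\pi}_i)$, with the $r_{\pi}$-part dominated by $r_{\pi}\max_i\norm{X_i}_{\infty}\cdot\tfrac1n\sum_i|X_i^T\hat{\ell}(x)|$ and the mean-zero part by a maximal/Bernstein inequality after replacing $\hat{\ell}(x)$ by $\tilde{\ell}(x)$. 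Multiplying the two factors and using $\gamma/n\asymp r_{\gamma}$ from Lemma~\ref{lem:strong_duality} and \eqref{rate_required} yields $B_n=o_P(1)$.

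The step I expect to be the main obstacle is the negligibility of $\Delta_n$, equivalently controlling $\tfrac1n\sum_i R_i\bigl(X_i^T(\tilde{\ell}(x)-\ell_0(x))\bigr)^2$. The population dual vector $\ell_0(x)$ is in general \emph{not} sparse, whereas the restricted-eigenvalue machinery that keeps all sample Gram matrices uniformly under control only bounds quadratic forms over vectors in a suitable cone; one is therefore forced to route every estimate through the sparse surrogate $\tilde{\ell}(x)$ of Definition~\ref{defn:popul_dual}, and the deterministic remainder $\tilde{\ell}(x)-\ell_0(x)$ comes with only the $\ell_2$-guarantee $r_{\ell}\norm{\ell_0(x)}_2$, so bounding its $\mathrm{E}(RXX^T)$-weighted norm forces one to exploit both the sufficient conditions behind Assumption~\ref{assump:sparse_dual} and the accuracy conditions on $r_{\ell}$ in \eqref{rate_required}. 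A secondary but pervasive subtlety is bookkeeping: $\hat{\beta}$, $\hat{\ell}(x)$ and the $\hat{\pi}_i$ all depend on $(\bm{X},\bm{R})$, so independence across $i$ — and the identification of $\epsilon_i$ as the only residual randomness in $\Delta_n$ and $B_n$ — is recovered only after conditioning on $\mathcal{G}_n$, and verifying the triangular-array Lindeberg condition with $d$- and $\kappa_R$-dependent constants is precisely where the rate conditions in \eqref{rate_required} and the moment condition $\sigma_{\epsilon}\phi_1\phi_{s_{\max}}^{3/2}\norm{x}_2=O(1)$ are consumed.
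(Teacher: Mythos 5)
Your overall strategy is sound and matches the paper's in its skeleton: pass to the dual representation \eqref{debias_est_dual}, center at the population dual vector $\ell_0(x)$, apply a CLT to the i.i.d.\ leading term (the paper simply invokes the classical CLT; your triangular-array/Lyapunov caution is fine but not needed in the paper's framework), and kill the remainders using \autoref{thm:lasso_const}, \autoref{thm:dual_consist2}, and the cone/restricted-eigenvalue machinery. Where you genuinely diverge is in the treatment of the two remainders. For the $\epsilon$-weighted term $\Delta_n$ (the paper's ``Bias I'') you use a conditional second moment given $\sigma(\bm{X},\bm{R})$ plus Markov, whereas the paper bounds it directly by Bernstein's inequality together with a supremum over $\mathcal{C}_1(S_\ell(x),3)\cap\mathbb{S}^{d-1}$; these are equivalent in substance, and you correctly identify the delicate point, namely that the non-sparse remainder $\tilde{\ell}(x)-\ell_0(x)$ (supported on $S_\ell(x)^c$, hence \emph{not} in a sparse cone) must be handled by the cone-transfer device — in the paper this is exactly Lemma~\ref{lem:dual_approx_cone} combined with Lemma~\ref{lem:gram_mat_conc}. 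For the $(\hat\beta-\beta_0)$-weighted term $B_n$ you keep $\hat{\ell}(x)$ intact and invoke the primal box constraint plus an explicit $(R_i-\hat\pi_i)$ correction (the classical debiased-Lasso route), whereas the paper splits $\hat{\ell}=\ell_0+(\hat{\ell}-\ell_0)$ into ``Bias II'' (controlled by Lemma~\ref{lem:gamma_rate}, exploiting that $\mathrm{E}(RXX^T)\ell_0(x)=-2x$ exactly) and ``Bias III'' (a cone-indexed empirical-process bound). Both routes work; the $\gamma/n\asymp r_\gamma$ factor in yours reproduces the same $r_\pi$- and $\sqrt{\log d/n}$-orders the paper obtains.

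Two small repairs are needed in your sketch. First, $\hat\beta$ is \emph{not} measurable with respect to $\mathcal{G}_n=\sigma(\bm{X},\bm{R})$, since the complete-case Lasso uses the observed $Y_i$ and hence the $\epsilon_i$; only $\hat{\ell}(x)$ and the $\hat\pi_i$ are $\mathcal{G}_n$-measurable. This is harmless as you actually use the conditioning only for $\Delta_n$, and $B_n$ is dispatched by H\"older with unconditional high-probability bounds, but the blanket measurability claim should be removed. Second, your bound $r_\pi\max_i\norm{X_i}_\infty\cdot n^{-1}\sum_i|X_i^T\hat{\ell}(x)|$ for the $(\pi_i-\hat\pi_i)$ remainder is lossier than necessary (it costs an extra $\sqrt{\log(nd)}$ from $\max_i\norm{X_i}_\infty$); to stay within \eqref{rate_required} you should bound $\max_{1\le k\le d} n^{-1}\sum_i|X_{ik}\,X_i^T\hat{\ell}(x)|$ columnwise as in Term II of Lemma~\ref{lem:gamma_rate}, after replacing $\hat{\ell}(x)$ by $\tilde{\ell}(x)$ plus a cone term, which yields the order $r_\pi\sqrt{\phi_1\phi_{s_\ell(x)}}\norm{x}_2/\kappa_R^2$ matching the paper's bookkeeping.
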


The proof of \autoref{thm:asym_normal} can be found in \autoref{subapp:proof_asym_normal}, and we make four remarks for this paramount result. 

\begin{remark}[Semi-parametric efficiency]
	Given any fixed dimension $d>0$, the asymptotic variance 
	$$\sigma_{m,d}^2(x) = \sigma_{\epsilon}^2 \cdot x^T\left[\mathrm{E}\left(R XX^T\right)\right]^{-1}x = \sigma_{\epsilon}^2\cdot x^T\left\{\mathrm{E}\left[\pi(X) XX^T\right]\right\}^{-1}x$$ 
	for our debiased estimator achieves the semi-parametric efficiency bound among all asymptotically linear estimators with the MAR outcome \citep{muller2012efficient}. It also attains the efficiency bound for de-sparsified Lasso in Theorem 3 of \cite{jankova2018semiparametric} under Gaussian covariates and noises. This key property demonstrates the superiority of our debiasing inference method in terms of statistical efficiency.
\end{remark}

\begin{remark}[Requirement on the query point $x \in \mathbb{R}^d$]
	We require $\norm{x}_2=O(1)$ to control higher-order remainder terms and establish asymptotic normality of $\hat{m}^{\text{debias}}(x;\hat{\bm{w}})$. This requirement may seem contradictory to the standard result $\norm{x}_2=O_P\left(\sqrt{d}\right)$ in high-dimensional statistics when $x$ is drawn from a sub-Gaussian distribution; see Theorem 3.1.1 in \cite{vershynin2018high}. However, for those scenarios where $\norm{x}_2$ grows with dimension $d$, we can always apply our debiasing method to the normalized query point $\frac{x}{\norm{x}_2}$. Since the solution $\hat{\bm{w}}\equiv \hat{\bm{w}}\left(\frac{x}{\norm{x}_2}\right)$ to our debiasing program \eqref{debias_prog} is adaptive to the normalization, the resulting debiased estimator $\hat{m}^{\text{debias}}\left(\frac{x}{\norm{x}_2};\hat{\bm{w}} \right)$ and associated asymptotic confidence intervals can be rescaled with $\norm{x}_2$ in order to obtain asymptotically valid confidence intervals for $m_0(x)=x^T\beta_0$.
\end{remark}

\begin{remark}[In-sample consistency of the propensity score estimation]
\label{remark:benign_overfit}
Our debiased estimator is asymptotically normal as long as the propensity scores are pointwise consistently estimated at in-sample data points, because the rate requirement \eqref{rate_required} on $r_{\pi}$ relies only on the in-sample prediction. Furthermore, unlike double/debiased machine learning \citep{chernozhukov2018double}, there is no need to leverage sample-splitting or cross-fitting when estimating the nuisance propensity score $\pi(X)=\mathrm{P}(R=1|X)$ in order to guarantee the asymptotic normality. While we derive $r_{\pi} = O_P\left(\frac{\log d}{\sqrt{n}}\right)$ under the Lasso-type generalized linear regression estimator \eqref{glm_lasso} in Proposition~\ref{prop:rel_const_glm} of \autoref{app:prop_score_lasso}, this in-sample prediction rate of consistency is, in general, slower than those rates from more complicated machine learning methods such as deep neural networks \citep{farrell2021deep}, random forests \citep{gao2022towards}, and support vector machines with universal kernels (such as the Gaussian radial basis function) \citep{steinwart2001influence}. The higher learnability of these propensity score estimation methods leads to better performances of our debiasing method; see \autoref{subsec:nonpar_prop} for numerical experiments.
\end{remark}

\begin{remark}[Necessary sparsity condition]
Compared with the consistency results in \autoref{thm:dual_consist2}, we impose a stricter growth condition $s_{\max}=\max\{s_{\beta}, s_{\ell}(x)\}=o\left(\frac{\sqrt{n}}{\log d} \right)$ on the sparsity level in order to establish the asymptotic normality, provided that other model parameters $\sigma_{\epsilon},\phi_{s_{\max}},\kappa_R$ are independent of $n$ and $d$. The requirement on the sparsity level $s_{\beta}$ is a standard and essentially necessary condition in the high-dimensional debiased inference literature in pursuit of asymptotic normality \citep{van2014asymptotically,javanmard2014confidence,javanmard2018debiasing,cai2021statistical}; see also the discussion in Section 8.6 of \cite{jankova2018semiparametric}. 
The condition on the sparsity level $s_{\ell}(x)$ of the population dual solution $\ell_0(x)=-2\left[\mathrm{E}\left(RXX^T\right)\right]^{-1} x$ to \eqref{debias_prog_popul_dual} is novel and due to the fact that we establish the asymptotic normality of our debiased estimator for $m_0(x)=x^T\beta_0$ but not the debiased estimator of the regression vector $\beta_0\in \mathbb{R}^d$. Notably, \cite{bellec2022biasing} imposed a slightly weaker sparsity condition on essentially the same object as $\ell_0(x)$ and only required $s_\ell(x) = o\left(\frac{n}{\log d} \right)$. We plan to investigate in future research whether we can further improve our proof technique and weaken our sparsity condition to match theirs.
\end{remark}

To utilize the asymptotic normality of \autoref{thm:asym_normal} and conduct inference on $m_0(x)$ in practice, we need to estimate the asymptotic variance $\sigma_m^2(x)$ or $\sigma_{m,d}^2(x)$ under the current dimension $d$. Proposition~\ref{prop:var_const_est} below verifies that the objective function $\sum_{i=1}^n \hat{\pi}_i \hat{w}_i^2(x)$ in our debiasing program is a consistent estimator of $\sigma_{m,d}^2(x)$ under the known noise level $\sigma_{\epsilon}^2$; see \autoref{subapp:proof_var_const_est} for its proof. 

\begin{proposition}[Consistent estimate of the asymptotic variance]
	\label{prop:var_const_est}
Let $x\in \mathbb{R}^d$ be a fixed query point. Suppose that Assumptions~\ref{assump:basic}, \ref{assump:sub_gaussian}, \ref{assump:gram_lower_bnd}, \ref{assump:prop_consistent_rate}, and \ref{assump:sparse_dual} hold and $\gamma, r_{\ell} >0$ are specified as in \autoref{thm:dual_consist2}. Furthermore, we assume that $\norm{x}_2^2 \phi_{s_{\ell}(x)}^2 = O(1)$, $\frac{(1+\kappa_R^3)}{\kappa_R^5} \sqrt{\frac{s_{\ell}(x)\log(nd)}{n}} = o(1)$, and $\frac{(1+\kappa_R^4)}{\kappa_R^6} \left[r_{\ell} + r_{\pi}\sqrt{s_{\ell}(x)}\right]=o(1)$. Then,
\begin{align*}
	&\left|\sum_{i=1}^n \hat{\pi}_i \hat{w}_i^2(x) - x^T\left[\mathrm{E}\left(RXX^T\right)\right]^{-1}x \right|=o_P(1).
\end{align*}
\end{proposition}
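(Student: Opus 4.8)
The plan is to avoid any direct analysis of the random weights $\hat{\bm w}(x)$ and instead exploit strong duality together with the consistency of the dual solution. Under the assumptions of \autoref{thm:dual_consist2}(b) we are exactly in the regime of Lemma~\ref{lem:strong_duality}, so strong duality holds and the primal--dual identity \eqref{primal_dual_sol2} applies. Using $\hat w_i(x)=-\tfrac{1}{2\sqrt n}X_i^T\hat\ell(x)$, the primal optimal value equals $\sum_{i=1}^n\hat\pi_i\hat w_i^2(x)=\tfrac{1}{4n}\sum_{i=1}^n\hat\pi_i[X_i^T\hat\ell(x)]^2$, which is also the first term of the dual objective \eqref{debias_prog_dual} evaluated at $\hat\ell(x)$. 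Since strong duality forces the primal optimal value to equal the negative of the dual optimal value, a one-line rearrangement gives the key identity
$$\sum_{i=1}^n\hat\pi_i\hat w_i^2(x)=-\tfrac{1}{2}\,x^T\hat\ell(x)-\tfrac{\gamma}{2n}\,\norm{\hat\ell(x)}_1 .$$
On the population side \eqref{popul_dual_exact} gives $\ell_0(x)=-2[\mathrm{E}(RXX^T)]^{-1}x$, so $x^T[\mathrm{E}(RXX^T)]^{-1}x=-\tfrac12 x^T\ell_0(x)$. Subtracting, the target discrepancy reduces to
$$\sum_{i=1}^n\hat\pi_i\hat w_i^2(x)-x^T[\mathrm{E}(RXX^T)]^{-1}x=-\tfrac12\,x^T\bigl(\hat\ell(x)-\ell_0(x)\bigr)-\tfrac{\gamma}{2n}\,\norm{\hat\ell(x)}_1 ,$$
so it remains to show both terms on the right are $o_P(1)$. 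Note that all dependence on the estimated propensity scores $\hat\pi_i$ is now hidden inside $\hat\ell(x)$ and is already accounted for by \autoref{thm:dual_consist2}(b).

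For the first term, Cauchy--Schwarz gives $|x^T(\hat\ell(x)-\ell_0(x))|\le\norm{x}_2\norm{\hat\ell(x)-\ell_0(x)}_2$, and by the triangle inequality $\norm{\hat\ell(x)-\ell_0(x)}_2\le\norm{\hat\ell(x)-\tilde\ell(x)}_2+\norm{\tilde\ell(x)-\ell_0(x)}_2$ with $\norm{\tilde\ell(x)-\ell_0(x)}_2\le r_\ell\norm{\ell_0(x)}_2\lesssim r_\ell\norm{x}_2/\kappa_R^2$. Plugging in the bound of \autoref{thm:dual_consist2}(b) and the assumption $\norm{x}_2^2\phi_{s_\ell(x)}^2=O(1)$, the displayed rate conditions of the Proposition are precisely what make $\norm{x}_2\norm{\hat\ell(x)-\ell_0(x)}_2=o_P(1)$. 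For the second term, recall $\gamma/n\asymp r_\gamma\equiv r_\gamma(n,\delta)$; write $\norm{\hat\ell(x)}_1\le\norm{\hat\ell(x)-\tilde\ell(x)}_1+\norm{\tilde\ell(x)}_1$. Since $\tilde\ell(x)$ is $s_\ell(x)$-sparse, $\norm{\tilde\ell(x)}_1\le\sqrt{s_\ell(x)}\,\norm{\tilde\ell(x)}_2\le\sqrt{s_\ell(x)}\,(1+r_\ell)\norm{\ell_0(x)}_2=O_P(\sqrt{s_\ell(x)}\,\norm{x}_2/\kappa_R^2)$; and because the proof of \autoref{thm:dual_consist2} shows $\hat\ell(x)-\tilde\ell(x)$ lies in a cone $\mathcal{C}_1(S_\ell(x),c)$ for an absolute constant $c$, one has $\norm{\hat\ell(x)-\tilde\ell(x)}_1\le(1+c)\sqrt{s_\ell(x)}\,\norm{\hat\ell(x)-\tilde\ell(x)}_2=o_P(\sqrt{s_\ell(x)})$. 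Hence $\tfrac{\gamma}{2n}\norm{\hat\ell(x)}_1=O_P(r_\gamma\sqrt{s_\ell(x)})$; substituting the definition of $r_\gamma$ and again invoking $\norm{x}_2^2\phi_{s_\ell(x)}^2=O(1)$ and the two rate conditions shows this is $o_P(1)$. Combining the two bounds yields the claim.

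The main obstacle is the control of $\norm{\hat\ell(x)}_1$: the population dual solution $\ell_0(x)=-2[\mathrm{E}(RXX^T)]^{-1}x$ is generally dense, so one cannot compare $\norm{\hat\ell(x)}_1$ to $\norm{\ell_0(x)}_1$ directly; the argument must route through the sparse surrogate $\tilde\ell(x)$ and the cone geometry of the $\ell_1$-penalized dual program (exactly the machinery behind \autoref{thm:dual_consist2}), while carefully tracking the dependence on $\kappa_R$, $\phi_{s_\ell(x)}$, $\norm{x}_2$, and $r_\pi$ so that the product $r_\gamma\sqrt{s_\ell(x)}$ is absorbed by the assumed rates. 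Everything else --- the strong-duality value identity and the Cauchy--Schwarz step --- is routine given Lemma~\ref{lem:strong_duality} and \autoref{thm:dual_consist2}(b). If one prefers to sidestep the value identity, the same conclusion follows from the direct decomposition $\tfrac{1}{4n}\sum_i\hat\pi_i[X_i^T\hat\ell(x)]^2-\tfrac14\mathrm{E}[\pi(X)(X^T\ell_0(x))^2]$ into $\tfrac{1}{4n}\sum_i(\hat\pi_i-\pi_i)[X_i^T\hat\ell(x)]^2$, $\tfrac{1}{4n}\sum_i\pi_i\bigl([X_i^T\hat\ell(x)]^2-[X_i^T\ell_0(x)]^2\bigr)$, and $\tfrac{1}{4n}\sum_i\pi_i[X_i^T\ell_0(x)]^2-\tfrac14\mathrm{E}[\pi(X)(X^T\ell_0(x))^2]$, handled respectively via Assumption~\ref{assump:prop_consistent_rate}, Cauchy--Schwarz with the sparse/restricted-eigenvalue concentration of the sample gram matrix (Lemma~\ref{lem:gram_mat_conc}), and a sub-exponential law of large numbers; but the duality route is shorter and reuses the already-established \autoref{thm:dual_consist2}.
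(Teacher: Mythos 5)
Your proposal is correct, but it takes a genuinely different route from the paper. The paper proves Proposition~\ref{prop:var_const_est} by the direct decomposition you mention only as a fallback: it writes $\frac{1}{4n}\sum_i\hat\pi_i[X_i^T\hat\ell(x)]^2 - x^T[\mathrm{E}(RXX^T)]^{-1}x$ as a sum of ten terms (splitting $\hat\ell(x)$ into $\ell_0(x)$, $\tilde\ell(x)-\ell_0(x)$, $\hat\ell(x)-\tilde\ell(x)$ and $\hat\pi_i$ into $\pi_i$, $\hat\pi_i-\pi_i$) and controls each one with Bernstein's inequality, Lemma~\ref{lem:gram_mat_conc}, Lemma~\ref{lem:dual_approx_cone} and the cone geometry, before invoking \autoref{thm:dual_consist2}(b). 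You instead exploit the strong-duality value identity: equating the primal optimal value $\sum_i\hat\pi_i\hat w_i^2(x)=\frac{1}{4n}\sum_i\hat\pi_i[X_i^T\hat\ell(x)]^2$ with the negative of the dual optimum indeed gives $\sum_i\hat\pi_i\hat w_i^2(x)=-\tfrac12 x^T\hat\ell(x)-\tfrac{\gamma}{2n}\norm{\hat\ell(x)}_1$, and since $x^T[\mathrm{E}(RXX^T)]^{-1}x=-\tfrac12 x^T\ell_0(x)$ the problem collapses to $|x^T(\hat\ell(x)-\ell_0(x))|$ plus $\tfrac{\gamma}{n}\norm{\hat\ell(x)}_1$, both handled by Cauchy--Schwarz, \autoref{thm:dual_consist2}(b), and the sparsity of $\tilde\ell(x)$. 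What your route buys is brevity and the elimination of all fresh concentration arguments (no new Bernstein or empirical-process bounds beyond those already inside \autoref{thm:dual_consist2}); what the paper's route buys is that it does not need the value identity or Lemma~\ref{lem:strong_duality} beyond the weight--dual relation, and it exhibits explicitly how each source of error (propensity estimation, sparse approximation, dual estimation) enters the variance estimate, which mirrors the structure of the other proofs. One small imprecision in your write-up: Lemma~\ref{lem:res_cross_poly} puts $\hat\ell(x)-\tilde\ell(x)$ in $\mathcal{C}_1(S_\ell(x),3)\cup B_d^{(1)}\bigl(0,\tfrac{13 r_\ell^2\norm{x}_2^2}{2\kappa_R^4}\cdot\tfrac{\nu}{\gamma}\bigr)$, not in the cone alone; in the $\ell_1$-ball branch your bound $\norm{\hat\ell(x)-\tilde\ell(x)}_1\lesssim\sqrt{s_\ell(x)}\norm{\hat\ell(x)-\tilde\ell(x)}_2$ does not apply, but the argument survives because there $\tfrac{\gamma}{n}\norm{\hat\ell(x)-\tilde\ell(x)}_1\lesssim\tfrac{r_\ell^2\norm{x}_2^2\phi_{s_\ell(x)}}{\kappa_R^4}=o(1)$ directly under the stated rate conditions; you should record both branches when writing this out in full.
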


This result again motivates the design of our debiasing program as in \autoref{subsec:bias_var_interp} and supports the formulation of our asymptotically valid confidence interval \eqref{asym_ci}. When the noise level $\sigma_{\epsilon}^2$ is unknown in practice, one can always replace it with a consistent estimator $\hat{\sigma}_{\epsilon}^2$ without losing any asymptotic efficiency and obtain the final estimator $\hat{\sigma}_{\epsilon}^2 \sum_{i=1}^n \hat{\pi}_i \hat{w}_i^2(x)$ for the asymptotic variance. There are various approaches available in the literature to construct such an estimator for $\sigma_{\epsilon}^2$ \citep{zhang2010mcp,fan2012variance,dicker2012residual,bayati2013estimating,belloni2013least,reid2016study}. In our simulation studies and real-world application, we use the scaled Lasso \citep{sun2012scaled} with automatically selected regularization parameter $\lambda>0$ to obtain $\hat{\sigma}_{\epsilon}^2$ (and the Lasso pilot estimate $\hat{\beta}$ simultaneously); see \autoref{app:imple_debias} for details.

\section{Experiments}
\label{sec:experiments}

In this section, we evaluate the empirical performance of our proposed debiasing inference method in \autoref{subsec:debias_method} and compare it with several existing high-dimensional inference methods in the literature through comprehensive simulation studies. 

\subsection{Basic Simulation Design and Methods to Be Compared}
\label{subsec:sim_design}

We generate the i.i.d. data $\{(Y_i,R_i,X_i)\}_{i=1}^n \subset \mathbb{R}\times \{0,1\} \times\mathbb{R}^d$ from the following linear model 
\begin{equation}
	\label{sim_setting}
	Y_i=X_i^T \beta_0 + \epsilon_i \quad \text{ with } \quad X_i\ind \epsilon_i, \quad Y_i\ind R_i|X_i, \quad \text{ and } \quad X_i \sim \mathcal{N}_d(\bm{0},\Sigma),
\end{equation}
where $d=1000$ and $n=900$ unless stated otherwise. In order to investigate the effect of different covariance structures $\Sigma\in \mathbb{R}^{d\times d}$, sparsity patterns of the true regression vector $\beta_0\in \mathbb{R}^d$, and designs of the query point $x \in \mathbb{R}^d$ on the finite-sample performances of the candidate high-dimensional inference methods, we consider all possible combinations of the following simulation designs.

{\bf Choices of the covariance matrix:} We consider two standard designs of $\Sigma=\left(\Sigma_{jk}\right)_{j,k=1}^d$ from the literature:
\begin{enumerate}[label=(\roman*)]
	\item The circulant symmetric matrix $\Sigma^{\mathrm{cs}}$ in \cite{javanmard2014confidence} defined by $\Sigma_{jj}=1$, $\Sigma_{jk}=0.1 \text{ when } j+1\leq k\leq j+5 \text{ or } j+d-5\leq k \leq j+d-1$ with $\Sigma_{jk}=0$ elsewhere for $j\leq k$, and $\Sigma_{jk}=\Sigma_{kj}$;
	
	\item The Toeplitz (or auto-regressive) matrix $\Sigma^{\mathrm{ar}}$ in \cite{van2014asymptotically} defined by $\Sigma_{jk}=0.9^{|j-k|}$.
\end{enumerate}

{\bf Designs of the regression vector:} For the true value of $\beta_0 \in \mathbb{R}^d$, we consider three different scenarios as:
\begin{enumerate}[label=(\roman*)]
	\item $\beta_0^{sp}=\left(\underbrace{\sqrt{5},...,\sqrt{5}}_5,0,...,0\right)^T \in \mathbb{R}^d$ is sparse;
	
	\item $\beta_0^{de}\propto \left(1,\frac{1}{\sqrt{2}},...,\frac{1}{\sqrt{d}}\right)^T \in \mathbb{R}^d$ is dense with $\norm{\beta_0^{de}}_2=5$;
	
	\item $\beta_0^{pd}\propto \left(1,\frac{1}{2},...,\frac{1}{d}\right)^T \in \mathbb{R}^d$ is pseudo-dense $\norm{\beta_0^{pd}}_2=5$. 
\end{enumerate}

{\bf Designs of the query point:} We conduct experiments on the following four different choices of the query point $x\in \mathbb{R}^d$:
\begin{enumerate}[label=(\roman*)]
	\item $x^{(1)}=(1,0,...,0)^T\in \mathbb{R}^d$ to infer the individual effect of the first (significantly nonzero) component of $\beta_0$;
	
	\item $x^{(2)}=\left(1,\frac{1}{2},\frac{1}{4},0,0,0,\frac{1}{2},\frac{1}{8},0,...,0\right)^T\in\mathbb{R}^d$ to infer the joint effects of a few components of $\beta_0$;
	
	\item $x^{(3)}=\left(0,...,0,\underbrace{1}_{100^{th}},0,...,0\right)^T\in\mathbb{R}^d$ to infer an inactive (\emph{i.e.}, truly zero or close to zero) component of $\beta_0$;
	
	\item $x^{(4)} = \left(1,\frac{1}{2^2},...,\frac{1}{d^2}\right)^T \in \mathbb{R}^d$ for the circulant symmetric covariance cases, and $x^{(4)} = \left(1,\frac{1}{2},...,\frac{1}{d}\right)^T \in \mathbb{R}^d$ for the Toeplitz covariance cases. The purpose of choosing a relatively dense query point $x^{(4)}$ is to study the joint effect of all the components of $\beta_0$.  
\end{enumerate}

{\bf MAR mechanism:} After $Y_i,i=1,...,n$ are sampled according to \eqref{sim_setting}, we define the missingness indicators $R_i,i=1,...,n$ for $Y_i,i=1,...,n$ through the MAR mechanism by 
\begin{equation}
	\label{MAR_correct}
	\mathrm{P}(R_i=1|X_i) = \frac{1}{1+\exp\left(-1+X_{i7}-X_{i8}\right)} \quad \text{ for } \quad i=1,...,n.
\end{equation}
In general, the above MAR mechanism yields around 28\% of missingness for the outcome variables $Y_i,i=1,...,n$, making the complete-case data even more high-dimensional. Additional simulation results under a simpler MCAR setting are deferred to \autoref{subapp:mcar_res}.

{\bf Noise distribution:} We first generate the noise variables $\epsilon_i,i=1,...,n$ independently from $\mathcal{N}(0,1)$. Other types of the noise distributions that violate the sub-gaussian noise condition (Assumption~\ref{assump:subgau_error}) are considered in \autoref{subsec:heavy_noises}. 

{\bf Methods to be compared:} To demonstrate the statistical efficiency of our proposed debiasing method ({\bf ``Debias''}), we compare it with several existing inference methods in the literature. The detailed implementation of our proposed method can be found in \autoref{app:imple_debias}, where we also explain the three different criteria ``min-CV'', ``1SE'', and ``min-feas'' for selecting the tuning parameter $\gamma>0$ via cross-validation. For a fair comparison, we implement all of the following comparative methods on (i) the complete-case (CC) data $(X_i,Y_i)\in \mathbb{R}^d \times \mathbb{R}$ with $R_i=1$ for $i=1,...,n$, (ii) the inverse probability weighted (IPW) data $\left(\frac{X_i}{\sqrt{\hat{\pi}_i}}, \frac{Y_i}{\sqrt{\hat{\pi}_i}}\right) \in \mathbb{R}^d\times \mathbb{R}$ with $R_i=1$ for $i=1,...,n$, and (iii) the oracle fully observed data $(X_i,Y_i),i=1,...,n$. Here, $\hat{\pi}_i,i=1,...,n$ are estimated propensity scores by the default Lasso-type logistic regression as described in \autoref{app:prop_score_lasso}.
\begin{enumerate}[label=(\roman*)]
	\item {\bf ``DL-Jav''} is the debiased Lasso proposed by \cite{javanmard2014confidence} that solves $d$ convex programs for an unbiased estimator for $\beta_0$ and a consistent estimate for the asymptotic covariance matrix. The method is implemented by the source code \texttt{sslasso} provided in their paper. We select the tuning parameters as their Section 5.1.  
	
	\item {\bf ``DL-vdG''} is the debiased Lasso proposed by \cite{van2014asymptotically} that solves $d$ nodewise regressions for the debiased estimator of $\beta_0$ and the estimate of its asymptotic covariance matrix. This method is implemented via the function \texttt{lasso.proj} in the R package \texttt{hdi} \citep{dezeure2015high}.
	
	\item {\bf ``R-Proj''} is the ridge projection method proposed by \cite{buhlmann2013statistical} that produces a bias-corrected estimator of $\beta_0$ via a ridge regression estimator. This method is implemented by the function \texttt{ridge.proj} in the R package \texttt{hdi}. We initialize the estimation of $\beta_0$ with an estimate obtained from the scaled Lasso. Since it is difficult to obtain a consistent estimate of the asymptotic covariance matrix from the function \texttt{ridge.proj}, we only run ``R-Proj'' when the query point is $x^{(1)}$ or $x^{(3)}$. This means that we only conduct inference on single coordinates of $\beta_0$ under these scenarios.
	
	\item {\bf ``Refit''} first runs the scaled Lasso to obtain a pilot estimate $\hat{\beta}$ and then computes the final least-square estimate based on covariates in the support set of $\hat{\beta}$ \citep{belloni2013least}.
\end{enumerate}
Confidence intervals for ``DL-Jav'', ``DL-vdG'', and ``R-Proj'' are constructed according to their asymptotic normality results, while the confidence intervals from ``Refit'' are based on the assumption that the selected model by the Lasso pilot estimate $\hat{\beta}$ contains the true support set. Contrary to our asymptotic normality theory in \autoref{subsec:asymp_normal}, these estimators need not be semi-parametrically efficient.

\subsection{Simulation Results Under the Gaussian Noise Distribution}
\label{subsec:sim_results}

We evaluate the performances of our proposed debiasing method and the existing inference methods stated above via the average absolute bias $\left|\hat{m}^{\mathrm{debias}}(x) - m_0(x) \right|$, the average coverage of confidence intervals with the 95\% nominal coverage probability, and the average length of confidence intervals over 1000 Monte Carlo experiments for each simulation scenario stated in \autoref{subsec:sim_design}. 

\begin{figure}[!ht]
	\captionsetup[subfigure]{justification=centering}
	\begin{subfigure}[t]{0.49\linewidth}
		\centering
		\includegraphics[width=1\linewidth]{./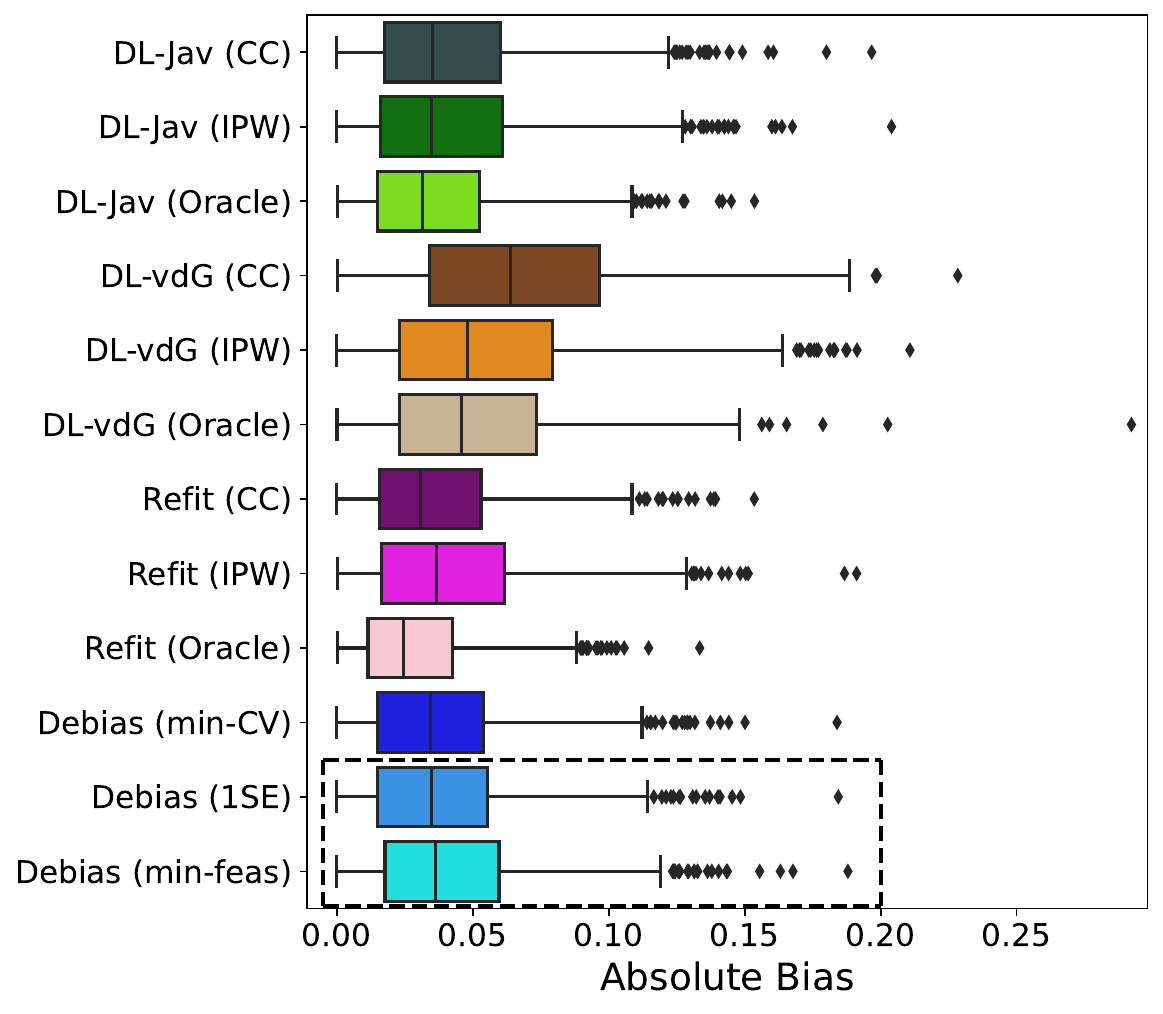}
	\end{subfigure}
	\hfil
	\begin{subfigure}[t]{0.49\linewidth}
		\centering		\includegraphics[width=1\linewidth]{./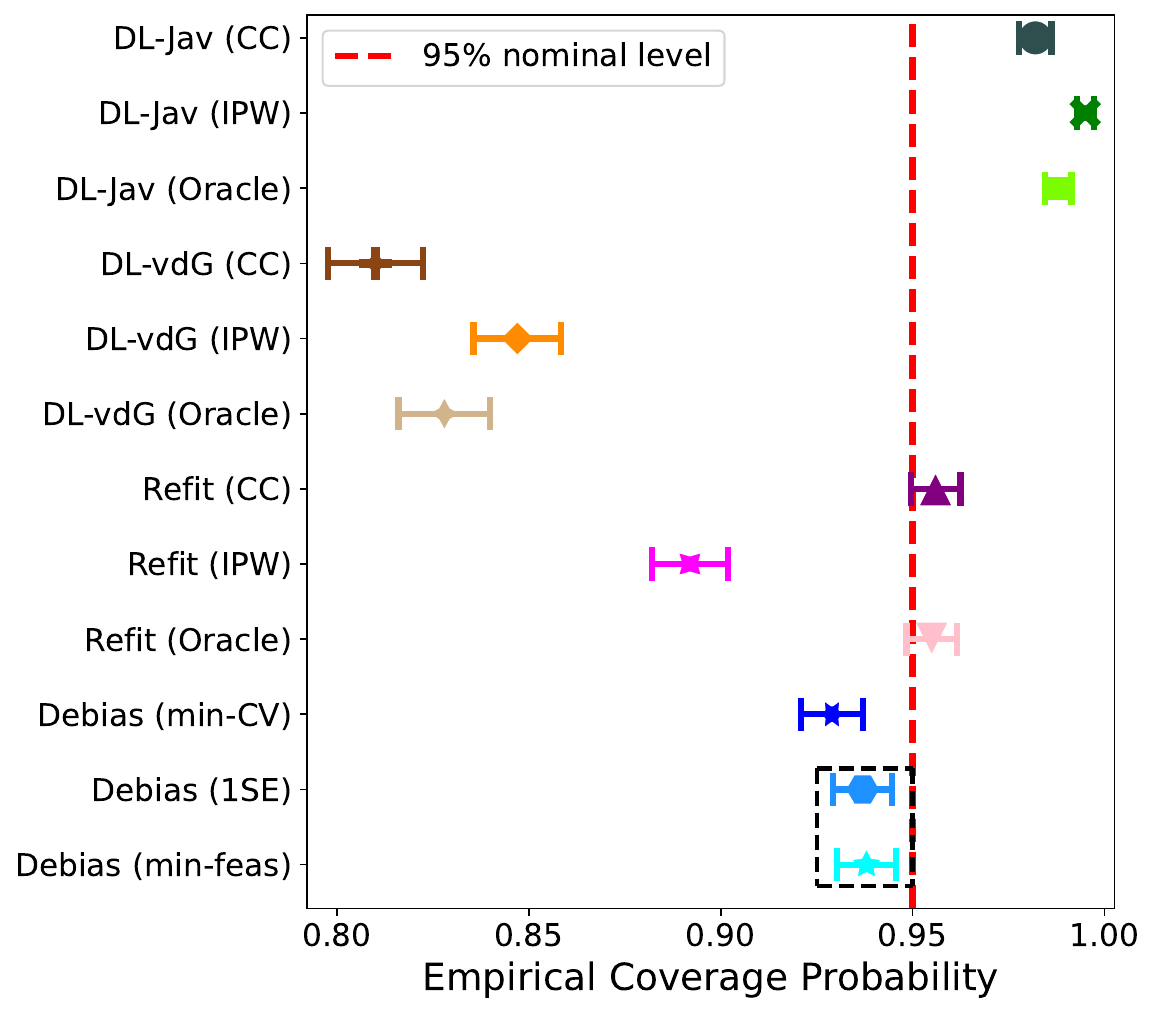}
	\end{subfigure}
	\begin{subfigure}[c]{0.485\linewidth}
		\centering
		\includegraphics[width=1\linewidth]{./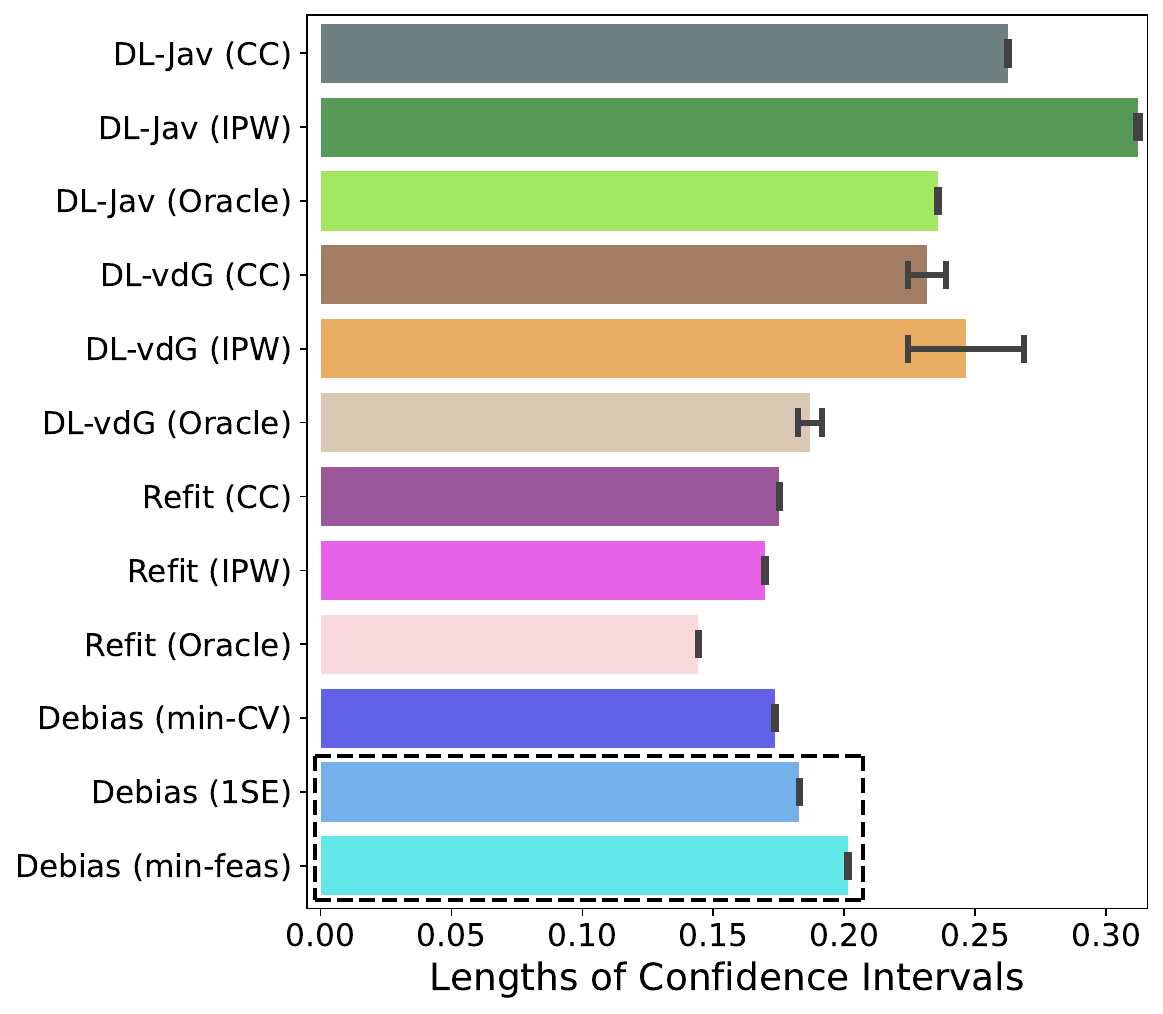}
	\end{subfigure}
	\hspace{1mm}
	\begin{subfigure}[c]{0.492\linewidth}
		\centering		\includegraphics[width=1\linewidth]{./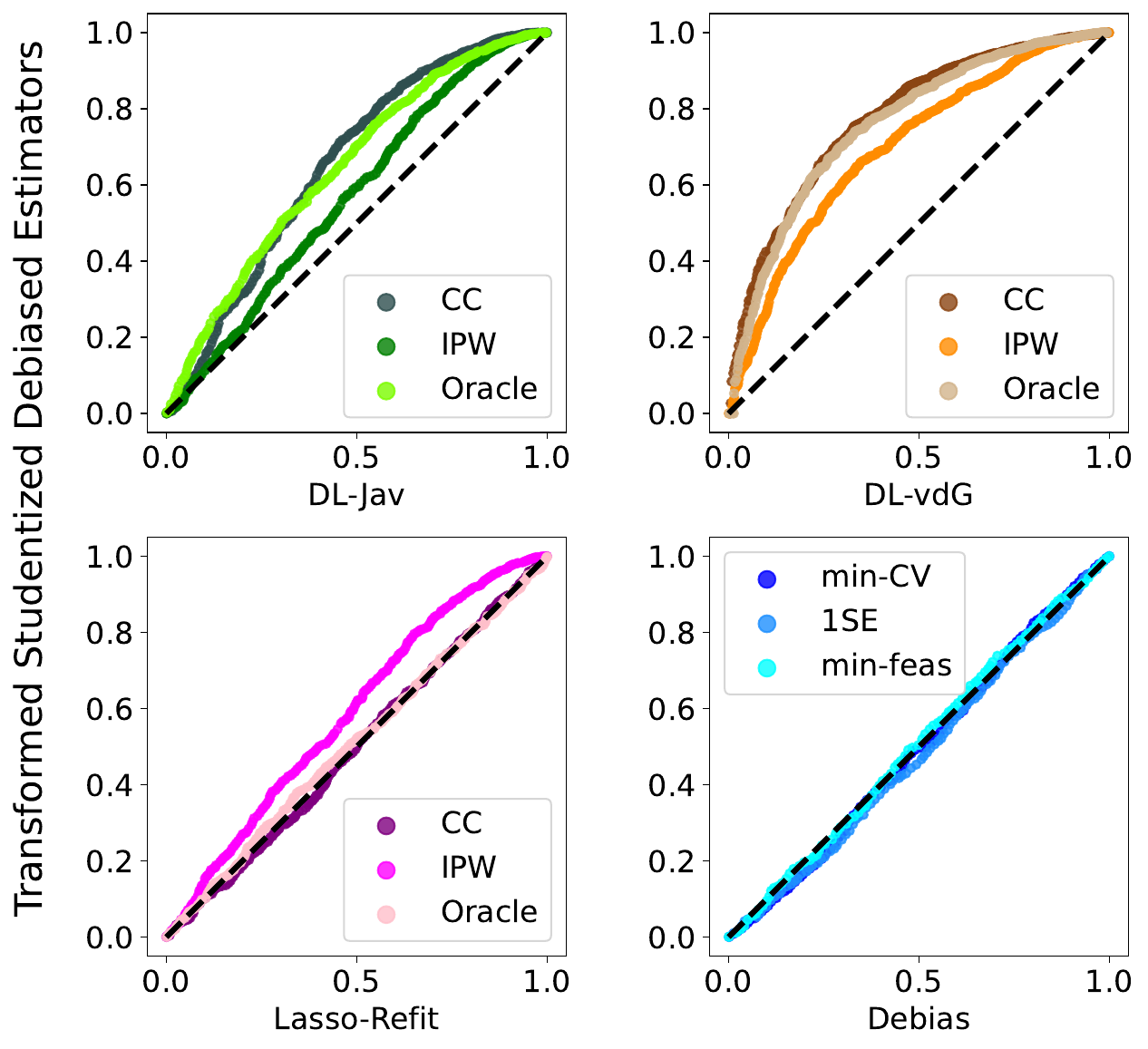}
	\end{subfigure}
	\caption{Simulation results with sparse $\beta_0^{sp}$ and sparse $x^{(2)}$ for the circulant symmetric covariance matrix $\Sigma^{\mathrm{cs}}$ under the Gaussian noise $\mathcal{N}(0,1)$ and MAR setting \eqref{MAR_correct}. {\bf Top Left:} Boxplots of absolute bias. {\bf Top Right:} Coverage probabilities with standard error bars. {\bf Bottom Left:} Average lengths of confidence intervals with standard error bars. {\bf Bottom Right:} Four comparative ``QQ-plots'' of the transformed studentized debiased estimators $\Phi\left(\sqrt{n} \cdot\hat{\sigma}_n^{-1}(x) \left[\hat{m}(x) -m_0(x)\right] \right)$ obtained from different debiasing methods, where $\hat{\sigma}_n(x)^2$ is the estimated (asymptotic) variance of $\hat{m}(x)$ by each method and $\Phi(\cdot)$ is the CDF of $\mathcal{N}(0,1)$. We also highlight the recommended rules ``1SE'' and ``min-feas'' for our proposed debiasing method via dashed rectangles in the first three panels.}
	\label{fig:cirsym_MAR_gauss}
\end{figure}

We provide comparative plots in terms of the three evaluation metrics and normality test on a selected simulation setting in \autoref{fig:cirsym_MAR_gauss} while deferring the full simulation results for both the circulant symmetric and Toeplitz (or auto-regressive) covariance designs to \autoref{fig:AR_MAR_gauss} and \autoref{table:cirsym_MAR_res}-\ref{table:AR_MAR_res} in \autoref{subapp:full_sim_res}, respectively.
In summary, our proposed debiasing method removes roughly the same amount of
bias as the debiased Lasso ``DL-Jav'' and the Lasso refitting ``Refit'' but yields relatively shorter confidence intervals and more accurate coverage probabilities than these two methods. While the ridge projection ``R-Proj'' sometimes produces better coverage probabilities than other methods, it always has larger biases and more conservative confidence intervals.

The superiority of our debiasing method is better revealed when the query point $x\in \mathbb{R}^d$ has many nonzero entries. Since the competing methods are asymptotically normal only coordinate-wise, they suffer from substantial biases and inflated variances if applied to a dense query point. In contrast, the debiased estimator from our method is asymptotically semi-parametrically efficient for arbitrary $x$ in \autoref{thm:asym_normal} and its asymptotic variance can be naturally estimated in Proposition~\ref{prop:var_const_est}. Hence, the asymptotic normality of our debiased estimator can better exemplify in finite samples than the competing methods. All these simulation results consolidate the derived theoretical properties of our debiasing method in \autoref{sec:theory} and demonstrate its effectiveness in conducting valid statistical inference under missing outcomes.

\subsection{Simulation Results Under Heavy-Tailed Noise Distributions}
\label{subsec:heavy_noises}

The consistency and asymptotic normality results for our debiasing inference methods are derived under the sub-gaussian noise condition (Assumption~\ref{assump:subgau_error}), and our previous simulation results in \autoref{subsec:sim_results} are conducted with normally distributed noises accordingly. We showcase through additional simulations that the asymptotic normality of our debiased estimator \eqref{debias_est} is valid even when the distribution of the noise $\epsilon$ in model \eqref{linear_reg} is not sub-gaussian. To this end, we follow the simulation designs in \autoref{subsec:sim_design} but explore two heavy-tailed noise distributions as:
\begin{enumerate}[label=(\roman*)]
	\item $\epsilon$ follows a Laplace distribution with mean 0 and scale parameter $\frac{1}{\sqrt{2}}$. Note that $\epsilon$ becomes sub-exponentially distributed and has variance $\mathrm{Var}(\epsilon)=1$. 
	
	\item $\epsilon$ follows a mean-zero $t$-distribution with 2 degrees of freedom. In this case, $\epsilon$ is not even sub-exponential and has infinite variance. 
\end{enumerate}
We present the selected plots and asymptotic normality validations in \autoref{fig:cirsym_MAR_laperr} and \autoref{fig:AR_MAR_terr}, while again deferring the full comparative simulation results for both the circulant symmetric and Toeplitz covariance designs to  \autoref{table:cirsym_MAR_laperr_res}-\ref{table:cirsym_MAR_terr_res} and \autoref{table:AR_MAR_laperr_res}-\ref{table:AR_MAR_terr_res} in \autoref{subapp:full_sim_res}, respectively. In short, the debiased estimator appears to be approximately normally distributed in finite samples even when the noise distribution is heavy-tailed, while other competing methods fail to maintain their asymptotic normality. Other conclusions from these simulation results are similar to what we have described in \autoref{subsec:sim_results}.


\begin{figure}[!ht]
	\captionsetup[subfigure]{justification=centering}
	\begin{subfigure}[t]{0.49\linewidth}
		\centering
		\includegraphics[width=1\linewidth]{./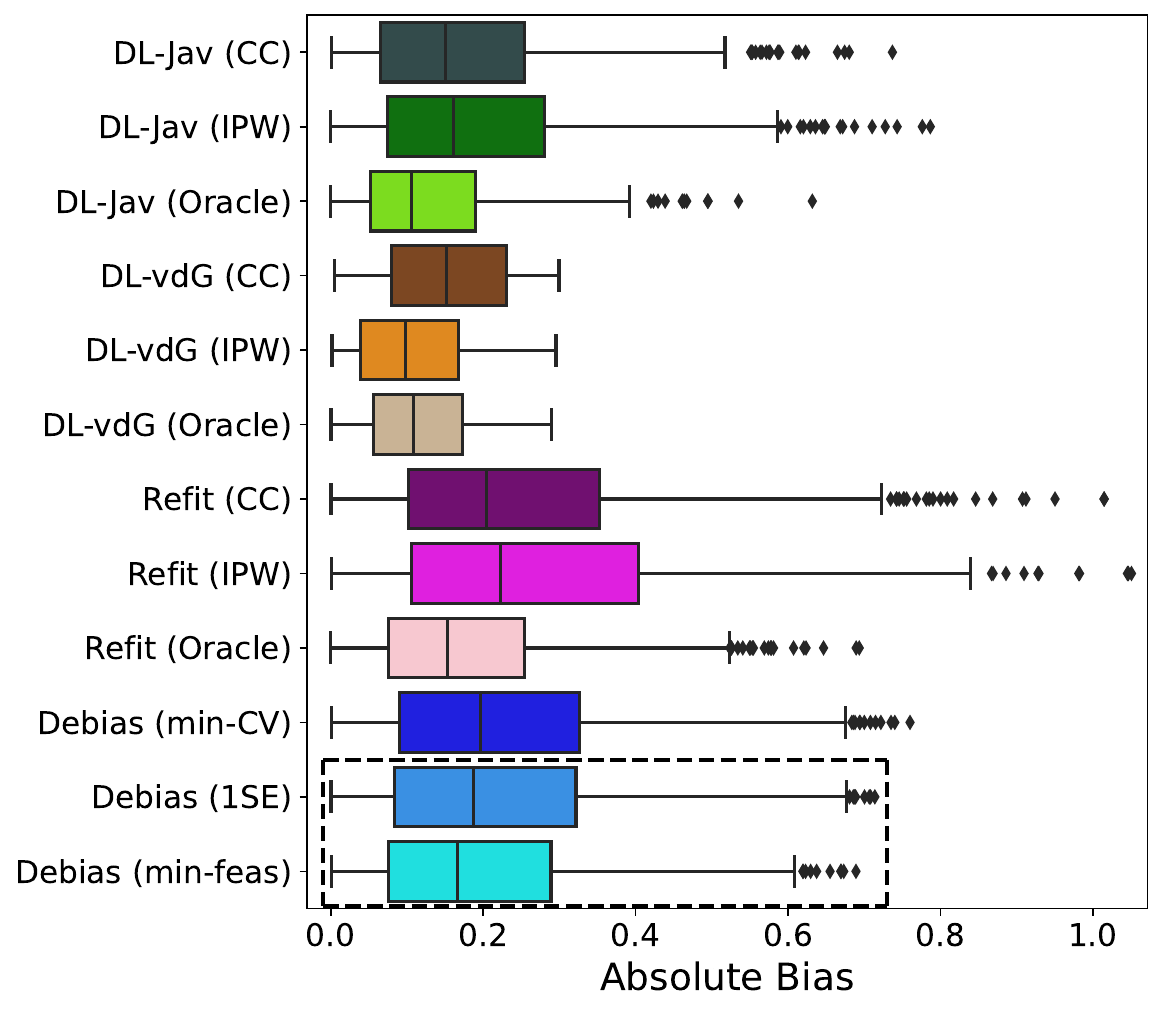}
	\end{subfigure}
	\hfil
	\begin{subfigure}[t]{0.49\linewidth}
		\centering		\includegraphics[width=1\linewidth]{./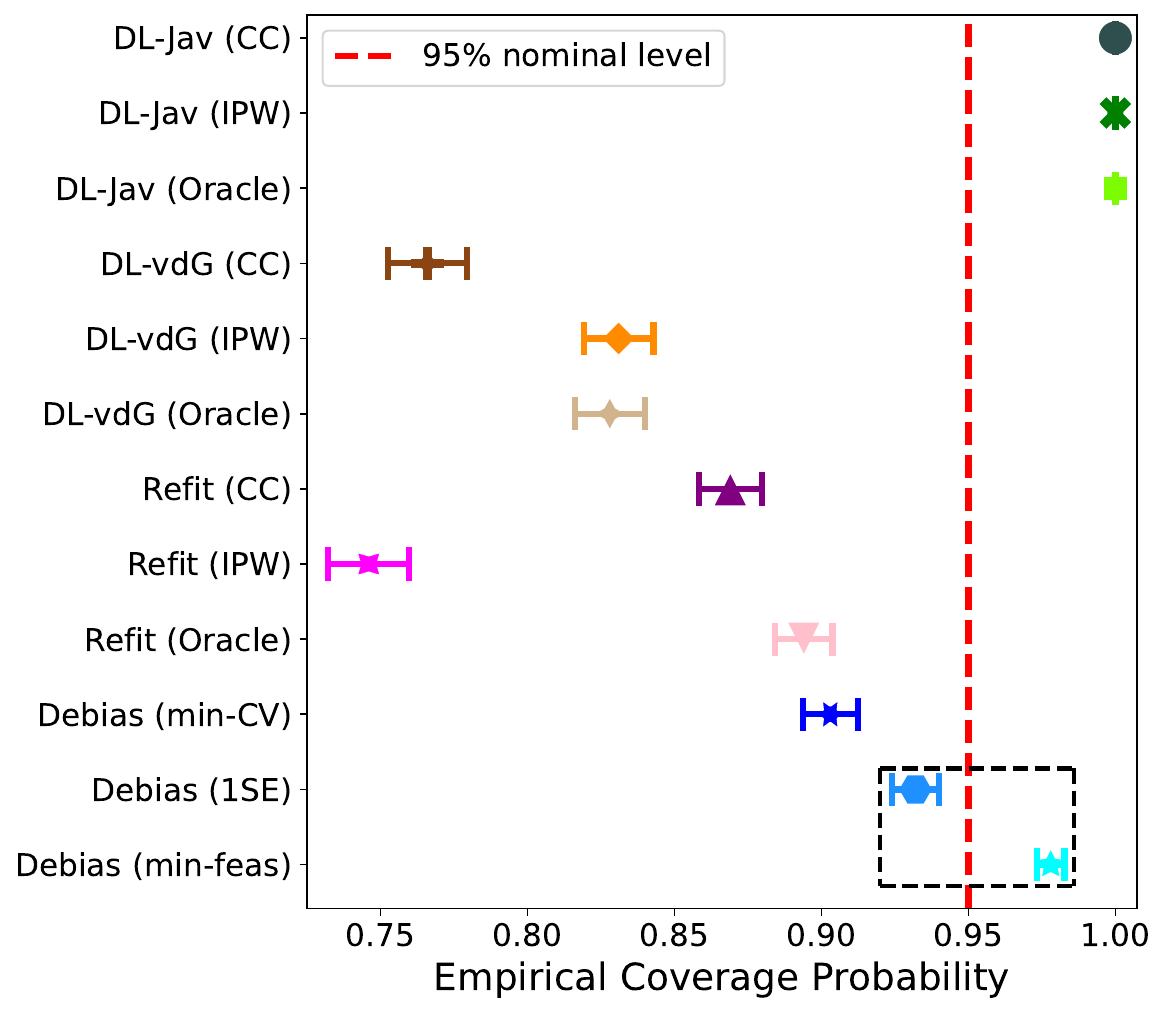}
	\end{subfigure}
	\begin{subfigure}[c]{0.485\linewidth}
		\centering
		\includegraphics[width=1\linewidth]{./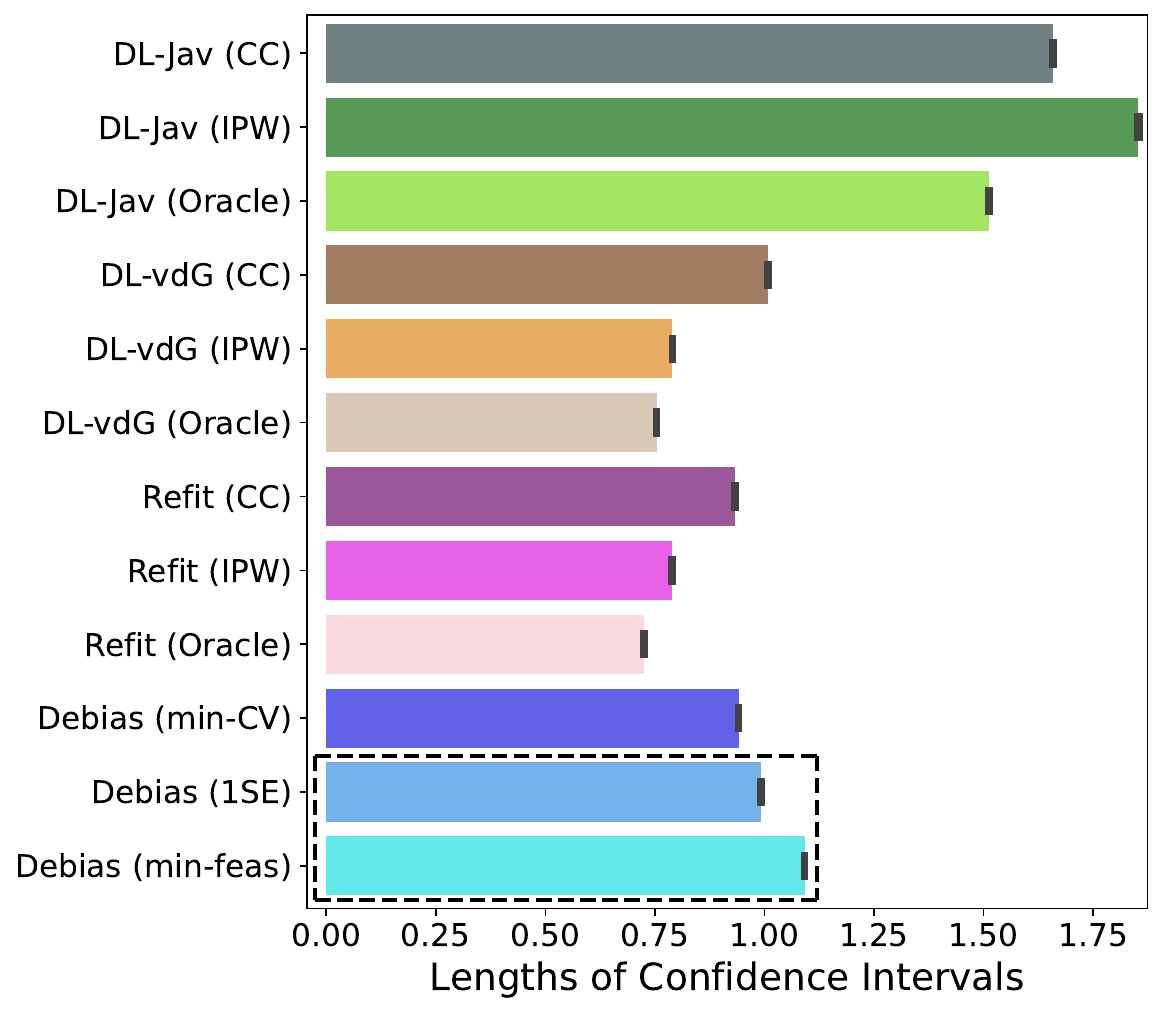}
	\end{subfigure}
	\hspace{1mm}
	\begin{subfigure}[c]{0.492\linewidth}
		\centering		\includegraphics[width=1\linewidth]{./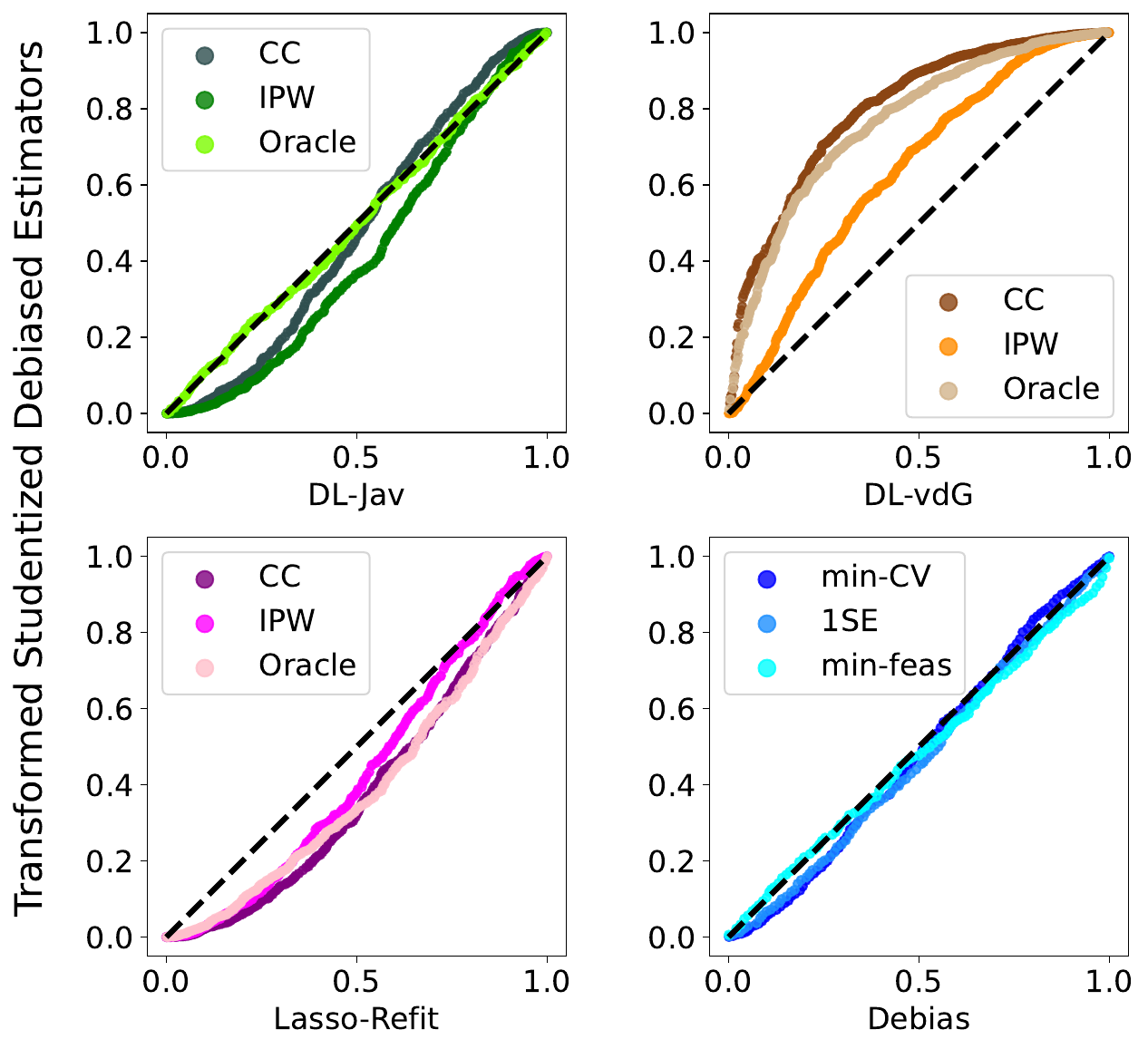}
	\end{subfigure}
	\caption{Simulation results with dense $\beta_0^{de}$ and sparse $x^{(2)}$ for the circulant symmetric covariance matrix $\Sigma^{\mathrm{cs}}$ under the $\mathrm{Laplace}\left(0, \frac{1}{\sqrt{2}} \right)$ distributed noise and the MAR setting \eqref{MAR_correct}. {\bf Top Left:} Boxplots of the absolute bias. {\bf Top Right:} Coverage probabilities with standard error bars. {\bf Bottom Left:} Average lengths of confidence intervals with standard error bars. {\bf Bottom Right:} Four comparative ``QQ-plots'' of the transformed studentized debiased estimators $\Phi\left(\sqrt{n} \cdot\hat{\sigma}_n^{-1}(x) \left[\hat{m}(x) -m_0(x)\right] \right)$ obtained from different debiasing methods, where $\hat{\sigma}_n(x)^2$ is the estimated (asymptotic) variance of $\hat{m}(x)$ by each method and $\Phi(\cdot)$ is the CDF of $\mathcal{N}(0,1)$.}
	\label{fig:cirsym_MAR_laperr}
\end{figure}



\begin{figure}[!ht]
	\captionsetup[subfigure]{justification=centering}
	\begin{subfigure}[t]{0.49\linewidth}
		\centering
		\includegraphics[width=1\linewidth]{./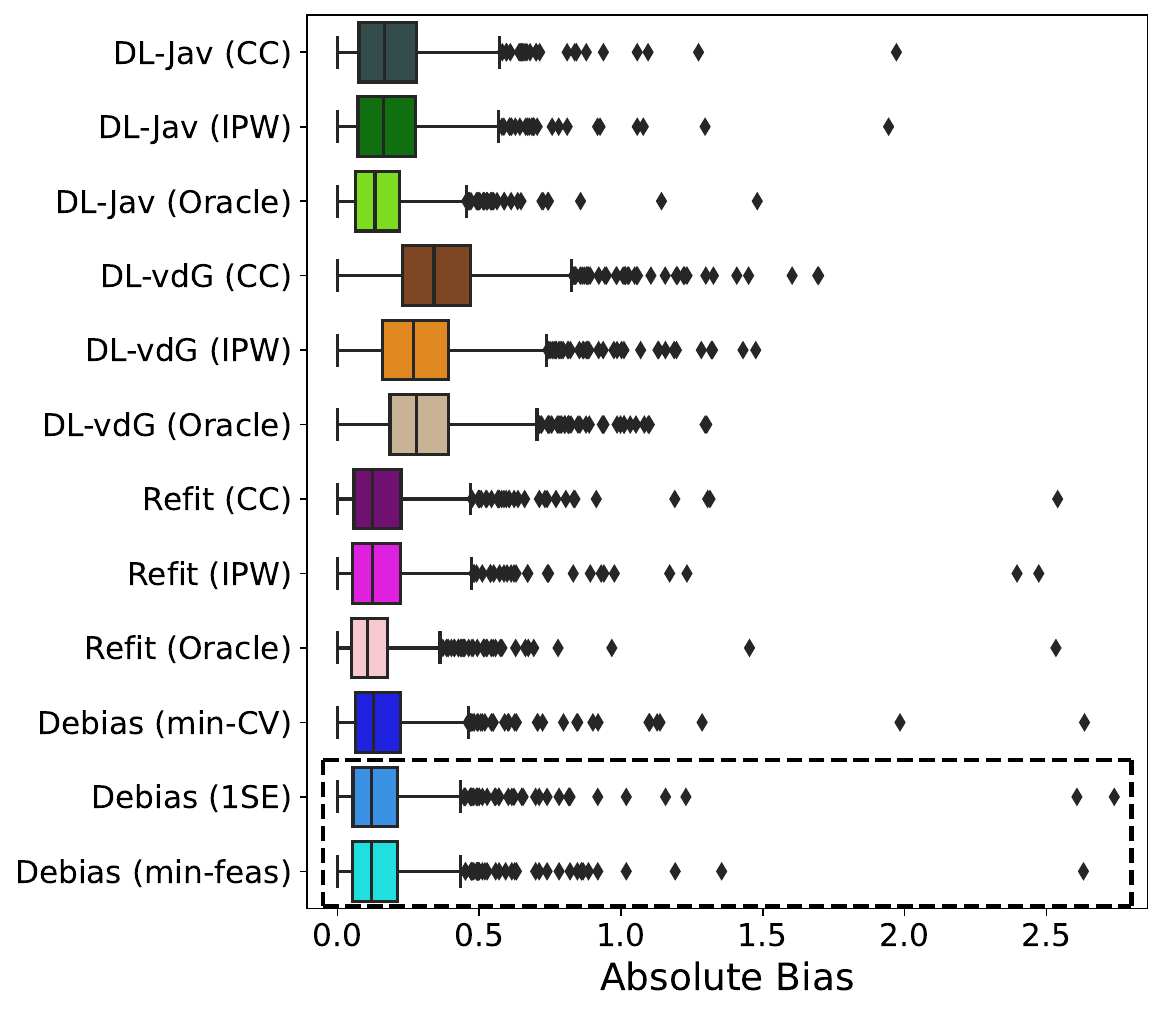}
	\end{subfigure}
	\hfil
	\begin{subfigure}[t]{0.49\linewidth}
		\centering		\includegraphics[width=1\linewidth]{./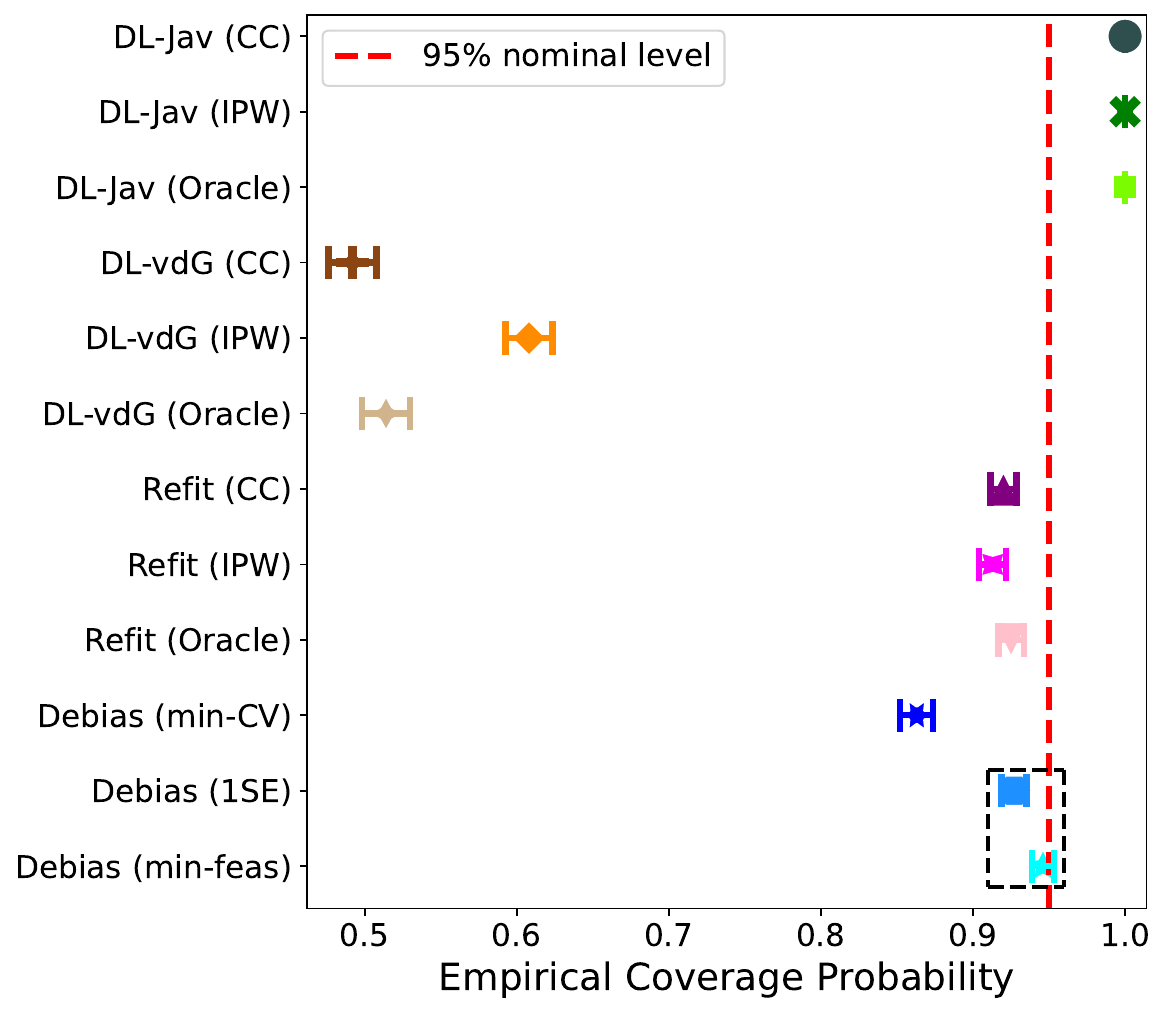}
	\end{subfigure}
	\begin{subfigure}[c]{0.485\linewidth}
		\centering
		\includegraphics[width=1\linewidth]{./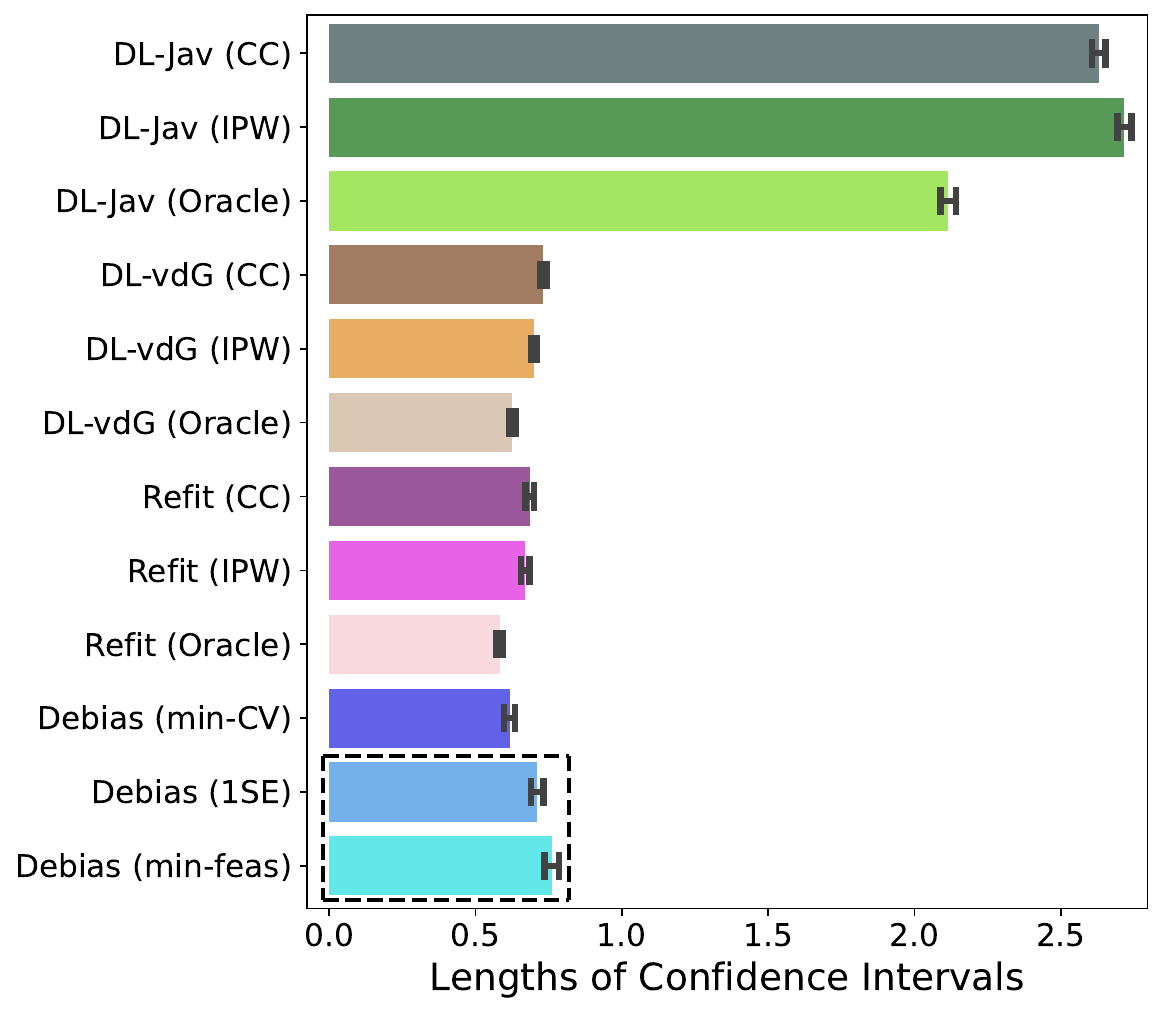}
	\end{subfigure}
	\hspace{1mm}
	\begin{subfigure}[c]{0.492\linewidth}
		\centering		\includegraphics[width=1\linewidth]{./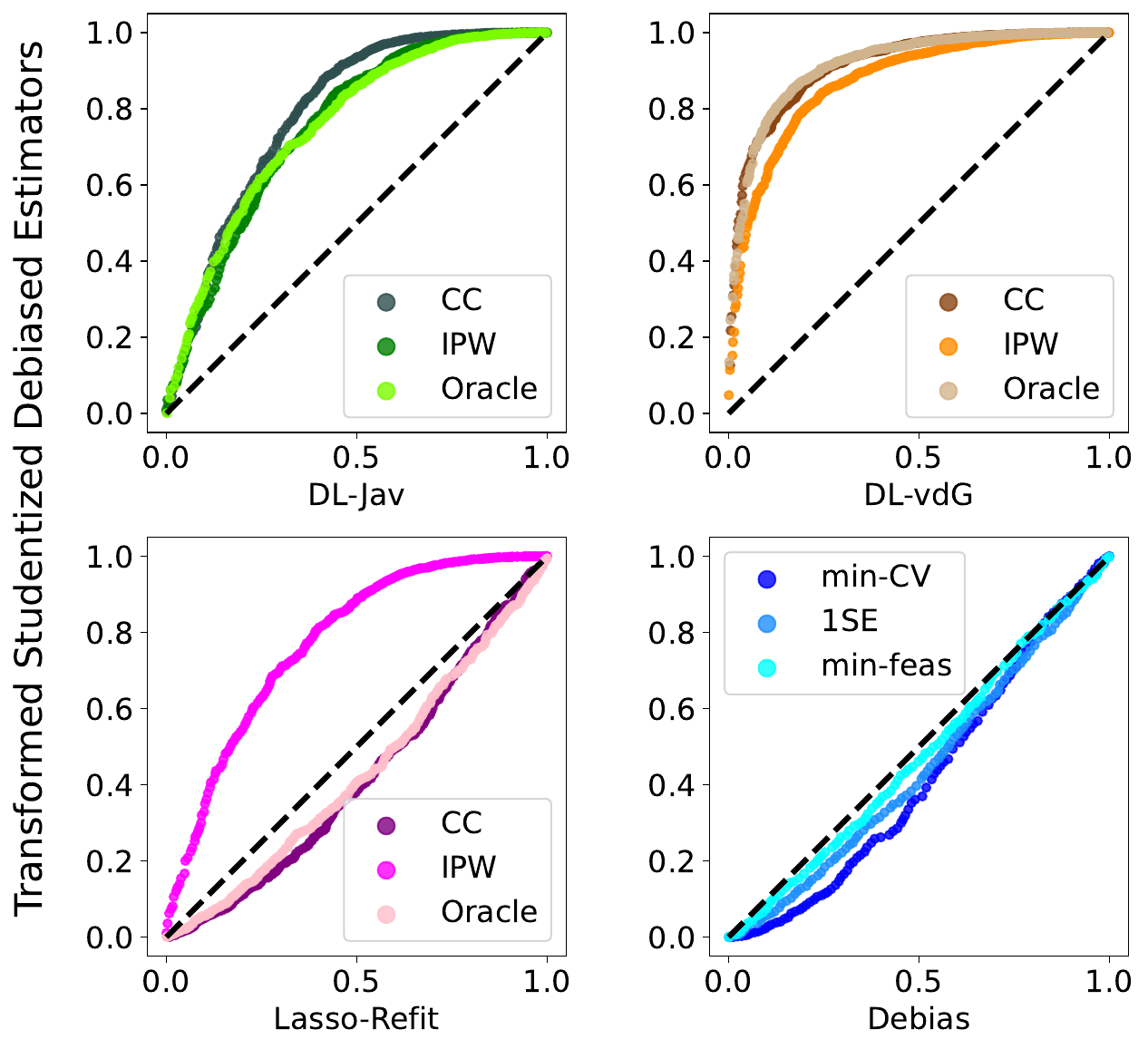}
	\end{subfigure}
	\caption{Simulation results with pseudo-dense $\beta_0^{pd}$ and dense $x^{(4)}$ for the Toeplitz (or auto-regressive) covariance matrix $\Sigma^{\mathrm{ar}}$ under the $t_2$ distributed noise and the MAR setting \eqref{MAR_correct}. {\bf Top Left:} Boxplots of the absolute bias. {\bf Top Right:} Coverage probabilities with standard error bars. {\bf Bottom Left:} Average lengths of confidence intervals with standard error bars. {\bf Bottom Right:} Four comparative ``QQ-plots'' of the transformed studentized debiased estimators $\Phi\left(\sqrt{n} \cdot\hat{\sigma}_n^{-1}(x) \left[\hat{m}(x) -m_0(x)\right] \right)$ obtained from different debiasing methods, where $\hat{\sigma}_n(x)^2$ is the estimated (asymptotic) variance of $\hat{m}(x)$ by each method and $\Phi(\cdot)$ is the CDF of $\mathcal{N}(0,1)$.}
	\label{fig:AR_MAR_terr}
\end{figure}


\subsection{Simulation Results for the Proposed Debiasing Method With Nonparametrically Estimated Propensity Scores}
\label{subsec:nonpar_prop}

As shown in \autoref{thm:dual_consist2} and \autoref{thm:asym_normal}, the consistency and asymptotic normality of the proposed debiasing method do not require any parametric assumption on the propensity score $\mathrm{P}(R=1|X)$ under MAR condition. Therefore, we now conduct additional simulation studies for our debiasing method in which the propensity scores $\pi(X_i),i=1,...,n$ are estimated by the following nonlinear/nonparametric machine learning methods based on the observed data $\{(X_i,R_i)\}_{i=1}^n$. All of these methods are implemented in the \texttt{scikit-learn} package \citep{scikit-learn} in Python.

{\bf Naive Bayes (``NB''):} We adapt Gaussian naive Bayes method \citep{zhang2004} to model the propensity scores through $\pi(X_i)=\mathrm{P}(R_i|X_{i1},...,X_{in}) = \frac{\mathrm{P}(R_i) \cdot \prod_{k=1}^n \mathrm{P}(X_{ik}|R_i)}{\mathrm{P}(X_{i1},...,X_{in})}$ for any $i=1,...,n$, where $\mathrm{P}(X_{ik}|R_i)$ is the density of $\mathcal{N}(\mu_R, \sigma_R^2)$ and $\mu_R,\sigma_R$ are estimated by the maximum likelihood approach.

{\bf Random Forest (``RF''):} We implement the random forest method \citep{breiman2001random} with 100 trees, bootstrapping samples, and the Gini impurity to measure the quality of a split.

{\bf Support Vector Machine (``SVM''):} We fit a support vector machine \citep{chen2005tutorial} with the Gaussian radial basis function to classify the missingness indicator $R_i$ based on the observed covariate vector $X_i\in \mathbb{R}^d$ for $i=1,...,n$. The outputs of the trained SVM are regarded as the surrogates of primitive estimated propensity scores.

{\bf Neural Network (``NN''):} We train a neural network model \citep{hinton1990connectionist} with two hidden layers of size $80\times 50$ and use the rectified linear unit function $h(x)=\max\{x,0\}$ as the activation function. The learning rate $\eta_{\text{NN}}$ is initially set to $0.001$ as long as the training loss keeps decreasing and is adaptively changed to $\eta_{\text{NN}}/5$ when two consecutive epochs fail to decrease the training loss.

For the above methods, we also consider their calibrated versions through the Platt's logistic model \citep{platt1999probabilistic} that fits an extra logistic regression on the estimated propensity scores $\hat{\pi}_i,i=1,...,n$. However, since the errors of estimated propensity scores and the final performances of the proposed debiased estimator worsens after we apply the calibration to the estimated propensity scores, we choose not to report them here.

The covariate vectors $X_i \in \mathbb{R}^d,i=1,...,n$ are again sampled independently from $\mathcal{N}_d(\bm{0}, \Sigma^{\mathrm{cs}})$ with $d=1000$ and $n=900$, and the noise variables $\epsilon_i,i=1,...,n$ are generated independently from $\mathcal{N}(0,1)$ as in \autoref{subsec:sim_design}. To increase the complexity of estimating the propensity scores, we generate the missing indicators $R_i,i=1,...,n$ for the outcome variables $Y_i,i=1,...,n$ through a different MAR mechanism than the one in \autoref{subsec:sim_design} as:
\begin{equation}
	\label{MAR_mis}
	\mathrm{P}(R_i=1|X_i) = \Phi\left(-4+\sum_{k=1}^K Z_{ik} \right),
\end{equation}
where $\Phi(\cdot)$ is the CDF of $\mathcal{N}(0,1)$ and the vector $(Z_{i1},...,Z_{iK})$ contains all polynomial combinations of the first eight components $X_{i1},...,X_{i8}$ of the covariate vector $X_i$ with degrees less than or equal to two (\emph{i.e.}, including the linear, quadratic, and one-way interaction terms).

The simulation results for our debiasing method under the oracle propensity scores (``Oracle''), the Lasso-type logistic regression \eqref{lr_lasso} (``LR''), and the aforementioned nonparametric methods for the propensity score estimation are shown in \autoref{fig:Cir_Nonpar_Prop_Score2}. We provide more comprehensive results with the evaluation metrics from \autoref{subsec:sim_results} and an additional measure for the average mean absolute error (``Avg-MAE'') for the estimated propensity scores in \autoref{fig:Cir_Nonpar_Prop_Score1}, \autoref{table:CirSym_MAR_Non_Prop_Score}, as well as additional ``QQ-plots'' in \autoref{fig:Cir_Nonpar_Prop_qqplot} of \autoref{subapp:full_sim_res}.

\begin{figure}[hbtp!]
	\captionsetup[subfigure]{justification=centering}
	\begin{subfigure}[t]{0.49\linewidth}
		\centering
		\includegraphics[width=1\linewidth]{./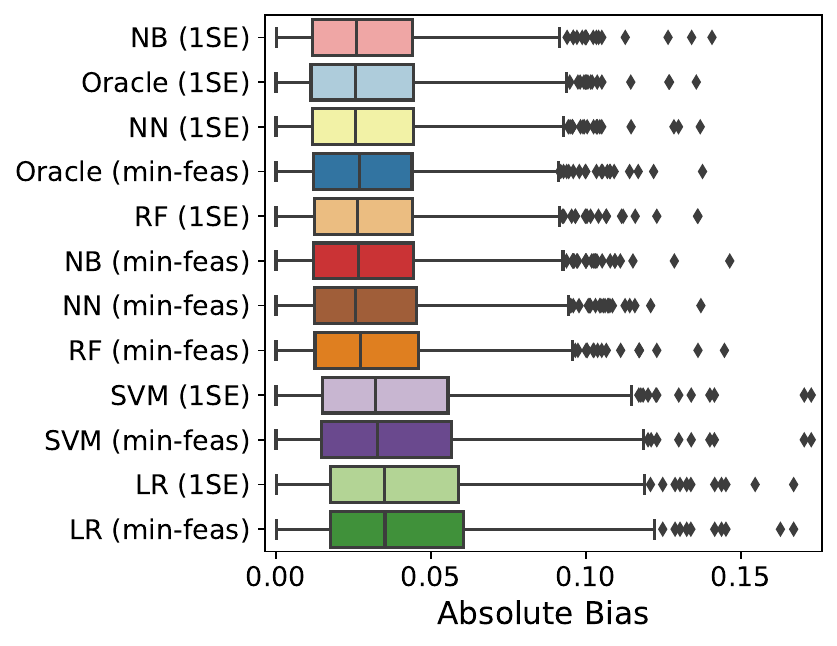}
	\end{subfigure}
	\hfil
	\begin{subfigure}[t]{0.49\linewidth}
		\centering		\includegraphics[width=1\linewidth]{./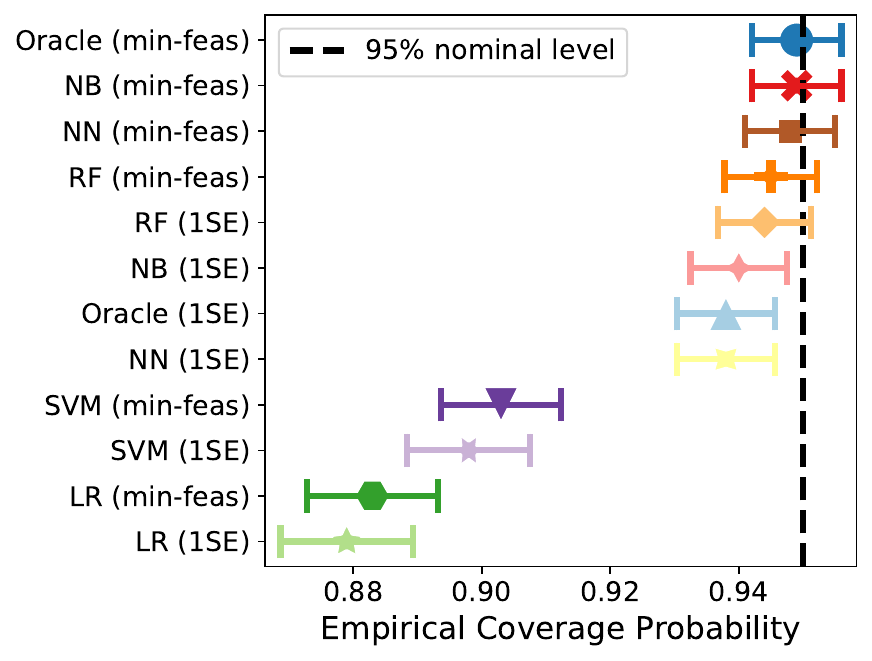}
	\end{subfigure}
	\begin{subfigure}[c]{0.49\linewidth}
		\centering
		\includegraphics[width=1\linewidth]{./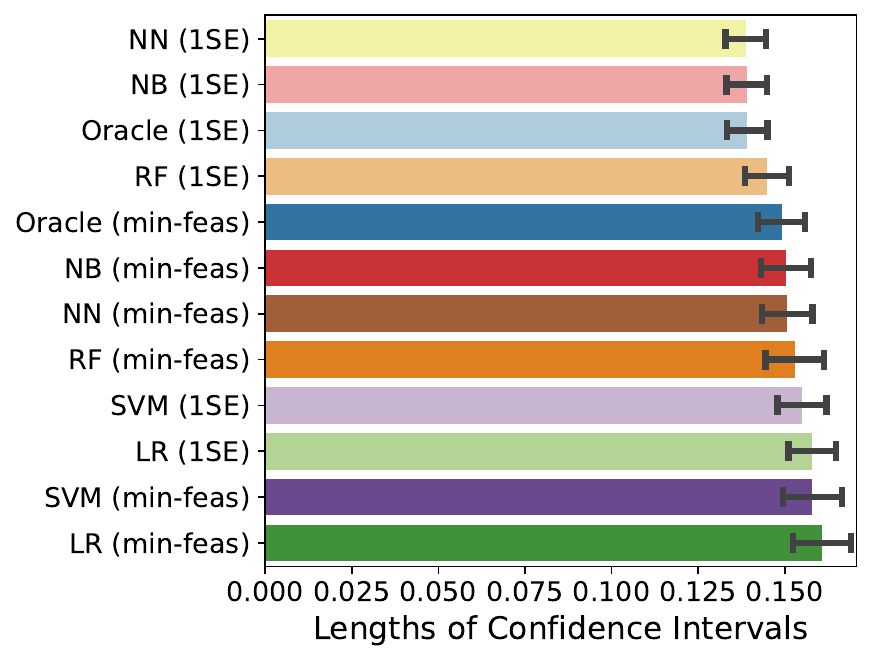}
	\end{subfigure}
	\hfil
	\begin{subfigure}[c]{0.49\linewidth}
		\centering		\includegraphics[width=1\linewidth]{./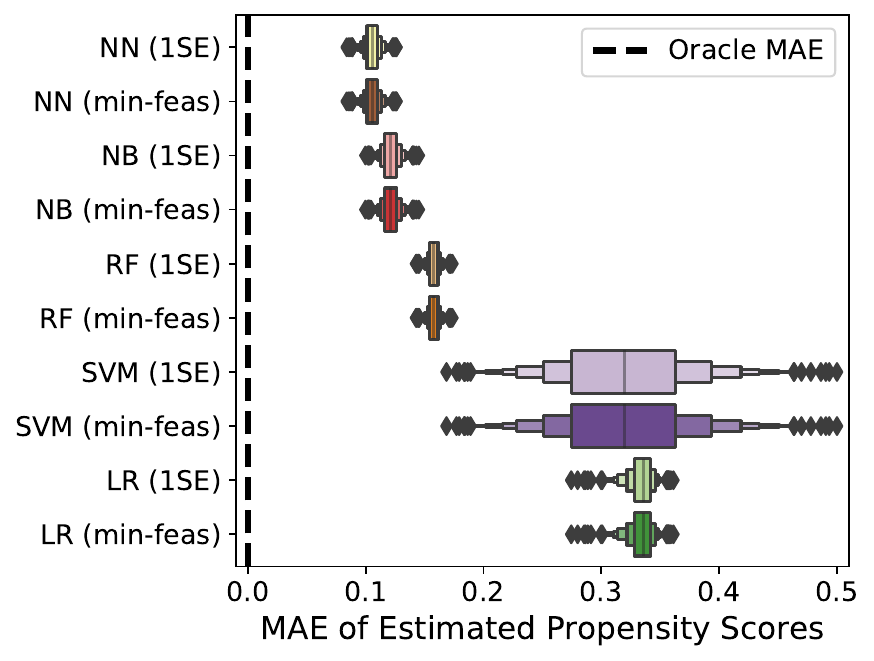}
	\end{subfigure}
	\begin{subfigure}[c]{0.995\linewidth}
		\centering		\includegraphics[width=1\linewidth]{./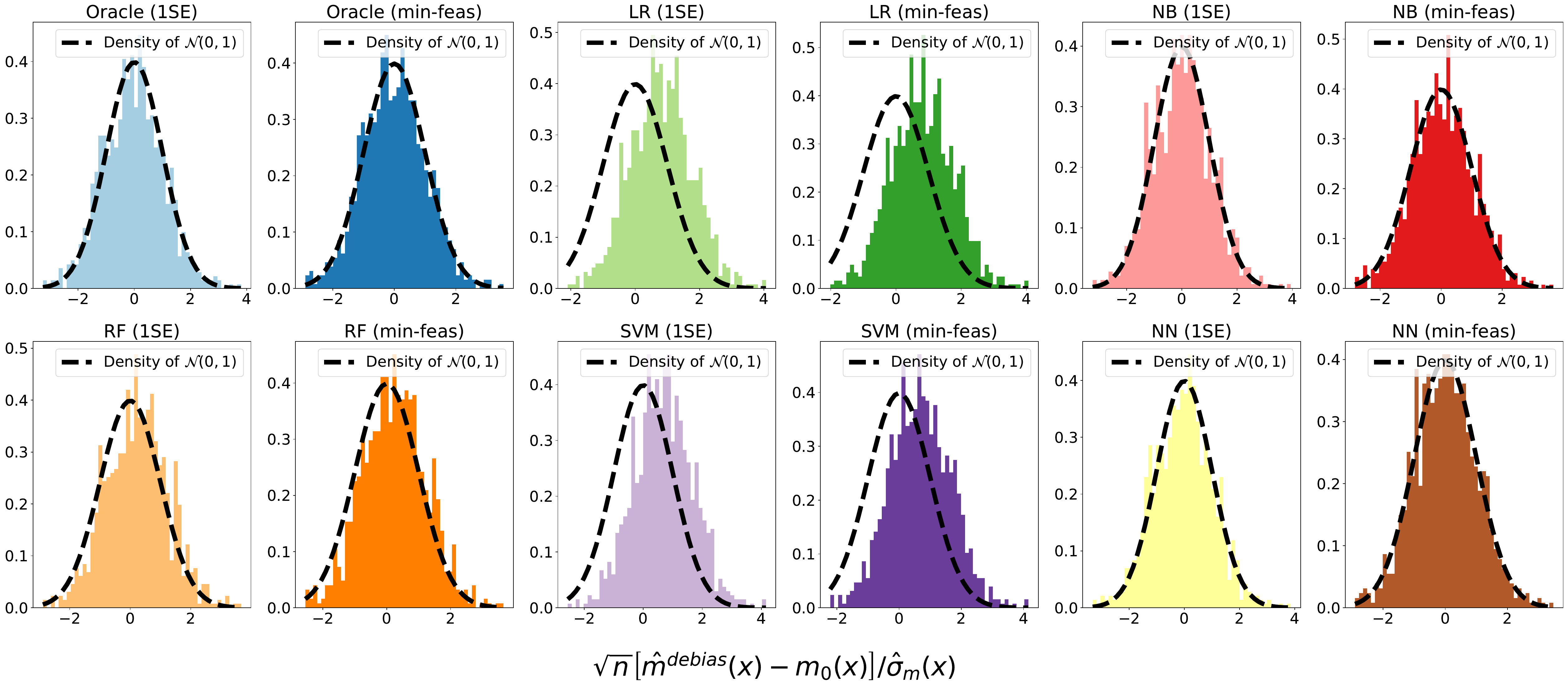}
	\end{subfigure}
	\caption{Simulation results for our debiasing method with sparse $\beta_0^{sp}$ and (weakly) dense $x^{(4)}$ when the propensity scores are estimated by various machine learning methods under the new MAR setting \eqref{MAR_mis}. {\bf Top Left:} Boxplots of the absolute bias. {\bf Top Right:} Coverage probabilities with standard error bars. {\bf Middle Left:} Average lengths of confidence intervals with standard deviation bars. {\bf Middle Right:} Letter-valued plots of the MAE of estimated propensity scores. {\bf Bottom Two Rows:} Histograms of the studentized debiased estimators under different methods. We also order the performance metrics according to their means in the first four panels.}
	\label{fig:Cir_Nonpar_Prop_Score2}
\end{figure}

The main conclusion from this simulation study is that the performance of our debiasing method can be further improved when the propensity scores are better estimated. Under the MAR mechanism \eqref{MAR_mis} and Gaussian design on covariate vectors $X_i,i=1,...,n$, neural network and Gaussian naive Bayes models outperform other machine learning methods in estimating the propensity scores and lead to the subsequent debiased estimators with lower biases, better coverages, and shorter confidence intervals in general. Conversely, the Lasso-type logistic regression is mis-specified and consequently degrades the performance of the resulting debiased estimator. Another interesting phenomenon from these simulation results is that our debiasing method sometimes has better performances when using the estimated propensity scores $\hat{\pi}_i,i=1,...,n$ than plugging in the oracle propensity scores $\pi_i=\pi(X_i),i=1,...,n$. Such a paradox has been analyzed in the propensity score matching \citep{rosenbaum1987model}, the IPW estimator for causal inference \citep{robins1992estimating,su2023estimated} and other general parameter estimation problems in the presence of a nuisance parameter \citep{henmi2004paradox,hitomi2008puzzling,lok2021estimating}. A rigorous study of this paradox on our debiasing method is beyond the scope of this paper.

\section{Real-World Application}
\label{sec:real_application}

We showcase a real application of our proposed debiasing method to the stellar mass inference on selected galaxies in the Sloan Digital Sky Survey, Fourth Phase and Data Release 16 (SDSS-IV DR16; \citealt{ahumada202016th})\footnote{See \url{https://www.sdss4.org/dr16}.}. 

\subsection{Background and Study Design}
\label{subsec:real_app_design}

SDSS-IV DR16 contains three main survey data: the Extended Baryon Oscillation Spectroscopic Survey (eBOSS; \citealt{eBOSS2016}), Mapping Nearby Galaxies at APO (MaNGA; \citealt{MaNGA2014}), and APO
Galactic Evolution Experiment 2 (APOGEE-2; \citealt{APOGEE2017}), among which the eBOSS survey and its predecessor BOSS cover a broad range of cosmological scales and observes most galaxies in the Universe. While a couple of value added catalogs for the estimated stellar masses of observed galaxies based on the BOSS spectra are available \citep{blanton2005new,conroy2009propagation,chang2015stellar}, we focus on the eBOSS Firefly value-added catalog \citep{Comparat2017}
\footnote{See \url{https://www.sdss4.org/dr17/spectro/galaxy_firefly/}.} 
for our stellar mass inference study here given its timeliness and completeness. In more details, this catalog fits combinations of single-burst stellar population models to spectroscopic data that have a positive definite redshift and are classified as galaxies by SDSS-IV DR16, following an iterative best-fitting process controlled by the Bayesian Information Criterion in order to produce stellar masses and other stellar properties of the galaxies \citep{wilkinson2017firefly}. There are a fraction of missing data in the estimated stellar masses by the Firefly catalog due to the limiting usage of the observational run in SDSS-IV DR16 dedicated to galaxy targets, data contamination, and misclassification of galaxies as stars \citep{Comparat2017}. Given that we will incorporate various spectroscopic and photometric properties of the observed galaxies into the covariate vector for our stellar mass inference task, it is reasonable that the estimated stellar masses are missing at random.

Since nearly all the existing catalogs only yield point estimates to the stellar mass, we intend to leverage our debiasing inference method to answer two scientific questions:
\begin{itemize}
	\item {\it Question 1:} How can we produce uncertainty measures (or confidence intervals) for the estimated stellar mass of a newly observed galaxy based on its observed spectroscopic and photometric properties?
	
	\item {\it Question 2:} Is it statistically significant that the stellar mass is negatively correlated with its distance to the nearby cosmic filament structures?
\end{itemize}

To this end, we fetch the spectroscopic and photometric properties of $n=1185$ observed galaxies within a thin redshift slice $0.4\sim 0.4005$ from SDSS-IV DR16 database as basic covariates. Such a thin redshift slice helps control the redshift distortions caused by the Kaiser effect (\emph{i.e.}, galaxies falling towards the galaxy cluster; \citealt{kaiser1987clustering}) and the ``Finger-Of-God'' effects (\emph{i.e.}, the elongation of galaxy distributions along the light-of-sight direction; \citealt{jackson1972critique}). Among the galaxies in the selected redshift slice, 30.2\% of their stellar masses are missing in the Firefly catalog under the ``Chabrier'' initial stellar mass function \citep{chabrier2003galactic} and the ``MILES'' stellar population model \citep{maraston2011stellar}. In addition, the covariate vector for each galaxy consists of right ascension, declination, ugriz bands and their measurement errors, model and cmodel magnitudes for ugriz bandpasses, galactic extinction corrections for ugriz bandpass, median signal-to-noise per pixel within each ugriz bandpass, original and best-fit template spectra projected onto ugriz filters as well as their inverse variances, sky flux in each of the ugriz imaging filters, and signal-to-noise squared for spectrograph \#1 and \#2, at g=20.20, r=20.20, i=20.20 for SDSS spectrograph spectra. To capture the nonlinear patterns of stellar mass estimation and handle those extreme values, we generate additional covariates by applying the logarithmic transformations $x\mapsto \text{sign}(x)\cdot \log(1+|x|)$ to ugriz bands, galactic extinction corrections, and those flux features. We also remove those covariates whose correlation coefficients are higher than 0.95 before generating univariate B-spline base covariates of polynomial order 3 with 40 knots. The reason why we adopt the B-spline base covariates instead of the usual polynomial combinations is that the covariates generated by univariate B-spline bases tends to be less linearly correlated and can facilitate the subsequent inference task. To address \emph{Question 2} above, we compute the angular diameter distances from the galaxies to the two-dimensional spherical cosmic filaments constructed by the directional subspace constrained mean shift algorithm in \cite{zhang2023linear,zhang2022sconce} on SDSS-IV DR16 data.\footnote{This cosmic web catalog can be downloaded at \url{https://doi.org/10.5281/zenodo.6244866}.} We also control the confounding effects of nearby galaxy clusters by including the angular diameter distances from galaxies to the estimated local modes and intersections of filaments as extra covariates. The final dimension of the covariate vector for each galaxy in the selected redshift slice is $d= 1409$, and we take the usual logarithm of the Firefly stellar mass of each galaxy as the outcome variable.

\subsection{Results}

\begin{figure}[t]
	\captionsetup[subfigure]{justification=centering}
	\begin{subfigure}[t]{0.32\linewidth}
		\centering
		\includegraphics[width=1\linewidth]{./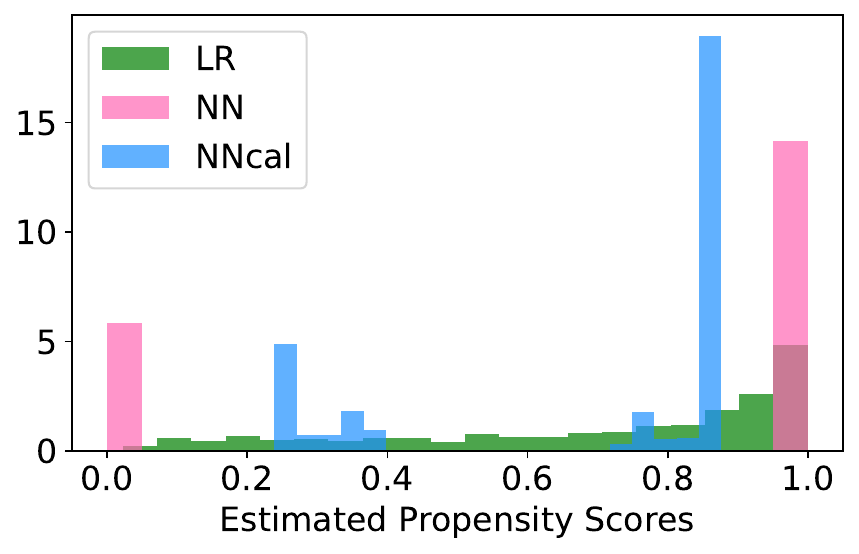}
	\end{subfigure}
	\hfil
	\begin{subfigure}[t]{0.32\linewidth}
		\centering		\includegraphics[width=1\linewidth]{./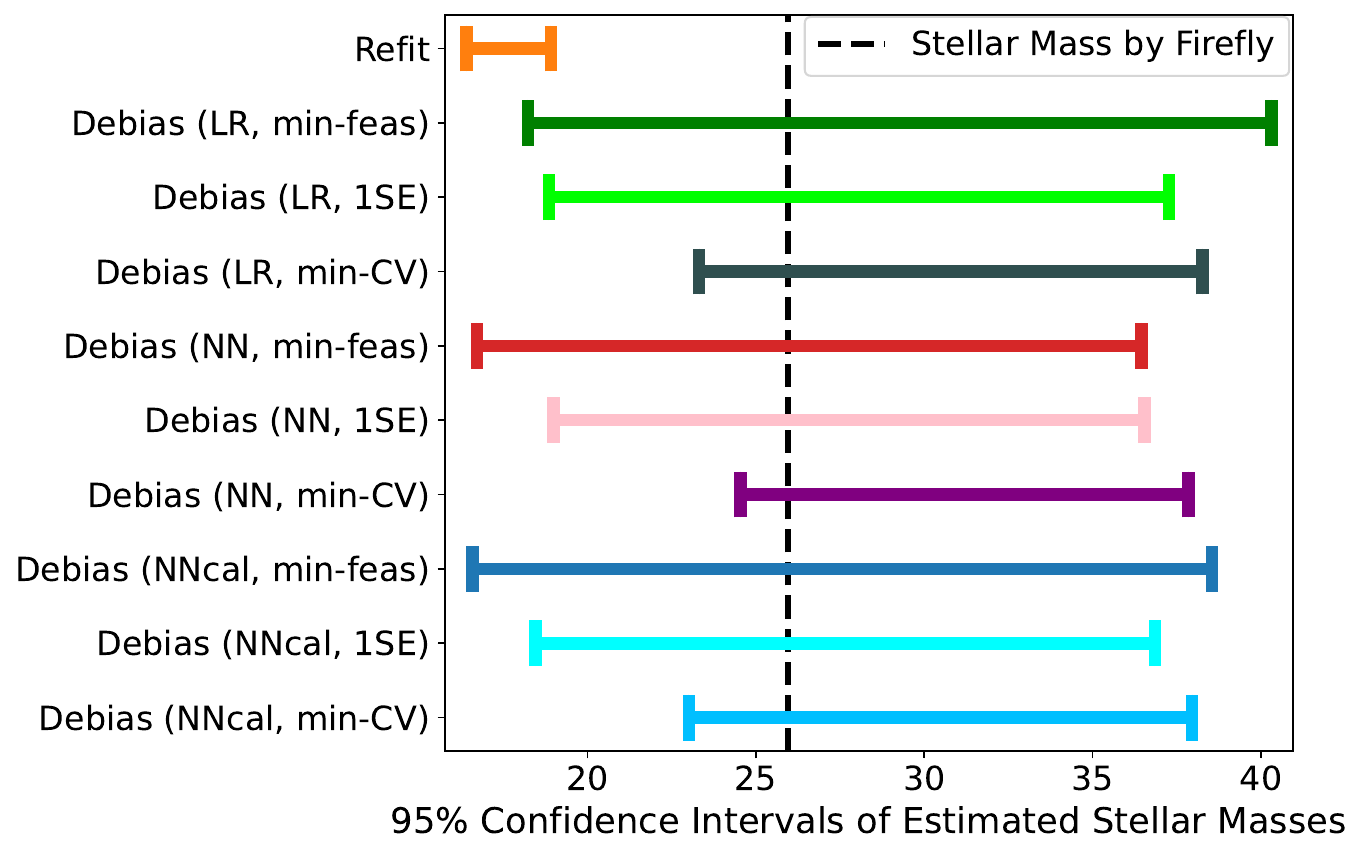}
	\end{subfigure}
	\hfil
	\begin{subfigure}[t]{0.32\linewidth}
		\centering
		\includegraphics[width=1\linewidth]{./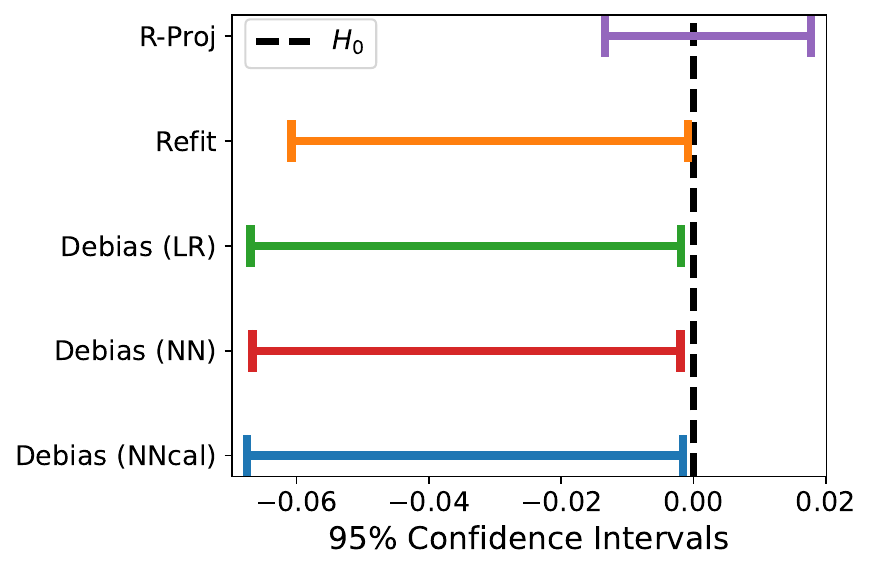}
	\end{subfigure}
	\caption{Stellar mass inference with our debiasing and related methods. {\bf Left Panel:} Estimated propensity scores by ``LR'', ``NN'', and ``NNcal''. {\bf Middle Panel:} 95\% confidence intervals by different methods for the estimated stellar mass of the new galaxy, addressing \emph{Question 1}. {\bf Right Panel:} 95\% confidence intervals by different methods for the estimated regression coefficient associated with the distance to cosmic filaments, addressing \emph{Question 2}.}
	\label{fig:stellar_mass_res}
\end{figure}

When approaching \emph{Question 1}, we randomly select a galaxy in the nearby redshift slice $0.4005\sim 0.401$ as a new observation and construct its associated covariate vector as the query point $x^{(Q1)}$ following the procedures in \autoref{subsec:real_app_design}. As for \emph{Question 2}, we set the query point as $x^{(Q2)}=(0,0,1,0,...,0)^T\in \mathbb{R}^d$, where the only nonzero entry corresponds to the covariate of the angular diameter distance to cosmic filaments. We estimate the propensity scores (\emph{i.e.}, the probabilities of having a non-missing stellar masses) on the galaxies in the redshift slice by the Lasso-type logistic regression (``LR'') in \autoref{app:prop_score_lasso} and the neural network (``NN'') model described in \autoref{subsec:nonpar_prop}. Nevertheless, as shown by the left panel of \autoref{fig:stellar_mass_res} that the estimated propensity scores by the neural network model are too extreme, \emph{i.e.}, close to 0 or 1, we also consider calibrating them through the Platt's logistic model \citep{platt1999probabilistic}, which is denoted by ``NNcal''.

\autoref{fig:stellar_mass_res} displays the inference results by our proposed debiasing method and the classical approaches in \autoref{subsec:sim_design} applied to the complete-case data. Due to the sparsity structure of the design matrix created by B-spline bases, ``DL-Jav'' and ``DL-vdG'' fail to produce their inference results. For \emph{Question 1}, the 95\% confidence intervals obtained from our debiasing method cover the stellar mass of the new galaxy in the Firefly catalog no matter how we estimate the propensity scores or select the tuning parameter, while the Lasso refitting method underestimates the stellar mass in a statistically significant level. For \emph{Question 2}, the 95\% confidence intervals produced by our debiasing and Lasso refitting methods for the regression coefficient of the distance to cosmic filaments are both below 0, aligning with the existing findings that galaxies are less massive when they are farther away from the cosmic filaments \citep{alpaslan2016galaxy,kraljic2018galaxy,malavasi2022relative}. On the contrary, the 95\% confidence interval obtained from the ridge projection method contains zero and cannot draw a statistically significant conclusion about the relation between stellar masses of galaxies and their distances to cosmic filaments. These results consolidate the usefulness of our proposed debiasing method in practice.

\section{Discussion}
\label{sec:conclusion}

This paper proposes a novel debiasing method for conducting valid inference on high-dimensional linear models with MAR outcomes. We establish the asymptotic normality of the debiasing method and illuminate its statistical and computational efficiencies through the dual formulation of the key debiasing program. Simulation studies and real-world applications of our proposed method demonstrate its finite-sample superiority over the existing high-dimensional inference methods in the literature. There are several potential applications and extensions that can further advance the impacts of our debiasing method.

{\bf 1. Applications in causal inference:} As discussed by \cite{ding2018causal,chakrabortty2019high}, the missing outcome setting in this paper incorporates the classical setup in causal inference under the framework of potential outcome \citep{rubin1974estimating} as a special case. Specifically, the observable data in causal inference problems are of the form $\left(\mathbb{Y},T,X \right) \in \mathbb{R}\times \{0,1\}\times \mathbb{R}^d$, where $T\in \{0,1\}$ denotes a binary treatment assignment indicator and $\mathbb{Y}= T\cdot Y(1) + (1-T)\cdot Y(0)$ with $Y(0),Y(1)$ being the potential outcomes of $Y$ within the control group $T=0$ and the treatment group $T=1$, respectively. Given that at most one of the potential outcomes $Y(0),Y(1)$ for each subject is observed, the potential outcome framework can be related to our problem setting by taking $(Y,R)=(Y(1), T)$ in the treatment group or $(Y,R)=(Y(0),1-T)$ in the control group. Therefore, our debiasing inference method can be applied to high-dimensional causal inference problems.

First, our debiasing method can be used to conduct valid statistical inference on the regression function (or the conditional mean outcome) of the treatment group if we focus on the observational data $\{(Y_i, R_i,X_i)\}_{i=1}^n = \{(Y(1)_i, T_i,X_i)\}_{i=1}^n$ or the control group if the data of interest is $\{(Y_i, R_i,X_i)\}_{i=1}^n = \{(Y(0)_i, 1-T_i,X_i)\}_{i=1}^n$. Similar to the regression adjustment in causal inference \citep{freedman2008regression,lin2013agnostic,negi2021revisiting}, we utilize all the covariate vectors $X_i\in \mathbb{R}^d,i=1,...,n$ in both treatment and control groups to address the above inference problem, which could provide extra efficiency gains.

Second, under Assumption~\ref{assump:basic}, our proposed debiasing program \eqref{debias_prog} can also be extended to conduct statistical inference on the linear average conditional treatment effect (ACTE) $\mathrm{E}[Y(1) - Y(0) | X]$ with no unmeasured confounding. To this end, borrowing an idea from~\cite{giessing2021inference}, the debiasing program \eqref{debias_prog} can be modified as follows:
\begin{align*}
	&\argmin_{\bm{w}_{(0)},\bm{w}_{(1)} \in \mathbb{R}^n}\sum_{i=1}^n \left[ \hat{\pi}_i w_{i(1)}^2 + (1-\hat{\pi}_i) w_{i(0)}^2\right]\\
	&\text{ subject to } \norm{x- \frac{1}{\sqrt{n}}\sum_{i=1}^n w_{i(1)}\cdot \hat{\pi}_i\cdot X_i }_{\infty} \leq \frac{\gamma_1}{n} \quad \text{ and } \quad \norm{x- \frac{1}{\sqrt{n}}\sum_{i=1}^n w_{i(0)}\left(1- \hat{\pi}_i\right) X_i }_{\infty} \leq \frac{\gamma_2}{n},
\end{align*}
where $\gamma_1,\gamma_2 >0$ are the tuning parameters and $\hat{\pi}_i,i=1,...,n$ are the estimated treatment assignment probabilities (\emph{i.e.}, propensity scores). The debiased estimator \eqref{debias_est} will become
\begin{align*}
		&\hat{m}^{\text{debias}}(x;\hat{\bm{w}}_{(1)},\hat{\bm{w}}_{(0)})\\ &= x^T \left(\hat{\beta}_{(1)} - \hat{\beta}_{(0)} \right) + \frac{1}{\sqrt{n}} \sum_{i=1}^n \left[\hat{w}_{i(1)} \cdot T_i \left(\mathbb{Y}_i-X_i^T \hat{\beta}_{(1)} \right) - \hat{w}_{i(0)} \cdot \left(1-T_i\right) \left(\mathbb{Y}_i-X_i^T \hat{\beta}_{(0)} \right) \right].
\end{align*}
However, this new estimator may no longer embrace the asymptotically semi-parametric efficiency, and we will leave the thorough theoretical investigation to the future work. Other related works for inferring the linear CATE with high-dimensional covariates can be found in \cite{chernozhukov2018simulaneous}.

{\bf 2. Model mis-specification and robustness:} Theoretical guarantees of our debiasing method are validated under the linear model \eqref{linear_reg} and sub-Gaussianity assumptions. While we have conducted additional simulations in \autoref{subsec:heavy_noises} to certify the robustness of our debiasing method against the violations of the sub-Gaussian noise assumption, it is of research interest to investigate the performances and theoretical implications of our debiasing method when the assumed linear model is misspecified. Given that our analysis of the inference with random design is unconditional on the covariate vector $X \in \mathbb{R}^d$, the modification technique in \cite{buhlmann2015high} might provide some insights.

{\bf 3. Missing covariates:} Extending our debiasing method to address issues related to incomplete covariates remains an open problem. Notably, the problem is more challenging under high-dimensional settings because there are numerous covariates, resulting in a large number of potential missing patterns (up to $2^d$ at most).
While there have been some efforts to leverage the MCAR assumption \citep{wang2019rate} to tackle this problem, it is noteworthy that MCAR is a strong assumption that may not hold in many real-world scenarios.
A possible remedy is to use the multiple imputation method \citep{carpenter2023multiple} with a well-designed debiasing program similar to \cite{xue2021semi}, but it is unclear how to design a reliable imputation model that is compatible with the downstream debiasing program.

\section*{Acknowledgement}

We thank Armeen Taeb for helpful discussions about the computational part of this paper and Thomas Richardson for insightful comments on causal inference applications. YZ is supported in part by YC's NSF grant DMS-2141808. AG is supported by NSF grant DMS-2310578. YC is supported by NSF grants DMS-1952781, 2112907, 2141808, and NIH U24-AG07212.

Funding for the Sloan Digital Sky Survey IV has been provided by the 
Alfred P. Sloan Foundation, the U.S. Department of Energy Office of 
Science, and the Participating Institutions. SDSS-IV acknowledges support and resources from the Center for High Performance Computing  at the University of Utah. The SDSS website is \url{www.sdss.org}.

SDSS-IV is managed by the Astrophysical Research Consortium for the Participating Institutions of the SDSS Collaboration including 
the Brazilian Participation Group, the Carnegie Institution for Science, 
Carnegie Mellon University, Center for Astrophysics | Harvard \& Smithsonian, the Chilean Participation Group, the French Participation Group, Instituto de Astrof\'isica de Canarias, The Johns Hopkins University, Kavli Institute for the Physics and Mathematics of the 
Universe (IPMU) / University of Tokyo, the Korean Participation Group, Lawrence Berkeley National Laboratory, Leibniz Institut f\"ur Astrophysik Potsdam (AIP), Max-Planck-Institut f\"ur Astronomie (MPIA Heidelberg), Max-Planck-Institut f\"ur Astrophysik (MPA Garching), Max-Planck-Institut f\"ur Extraterrestrische Physik (MPE), National Astronomical Observatories of China, New Mexico State University, 
New York University, University of Notre Dame, Observat\'orio Nacional / MCTI, The Ohio State University, Pennsylvania State 
University, Shanghai Astronomical Observatory, United Kingdom Participation Group, Universidad Nacional Aut\'onoma de M\'exico, University of Arizona, University of Colorado Boulder, 
University of Oxford, University of Portsmouth, University of Utah, 
University of Virginia, University of Washington, University of Wisconsin, Vanderbilt University, and Yale University.

\vspace{1cm}
\bibliography{HighD_Inf}

\begin{thebibliography}{133}
\providecommand{\natexlab}[1]{#1}
\providecommand{\url}[1]{\texttt{#1}}
\expandafter\ifx\csname urlstyle\endcsname\relax
  \providecommand{\doi}[1]{doi: #1}\else
  \providecommand{\doi}{doi: \begingroup \urlstyle{rm}\Url}\fi

\bibitem[Agrawal et~al.(2018)Agrawal, Verschueren, Diamond, and
  Boyd]{agrawal2018rewriting}
A.~Agrawal, R.~Verschueren, S.~Diamond, and S.~Boyd.
\newblock A rewriting system for convex optimization problems.
\newblock \emph{Journal of Control and Decision}, 5\penalty0 (1):\penalty0
  42--60, 2018.

\bibitem[Ahumada et~al.(2020)Ahumada, Prieto, Almeida, Anders, Anderson,
  Andrews, Anguiano, Arcodia, Armengaud, Aubert, et~al.]{ahumada202016th}
R.~Ahumada, C.~A. Prieto, A.~Almeida, F.~Anders, S.~F. Anderson, B.~H. Andrews,
  B.~Anguiano, R.~Arcodia, E.~Armengaud, M.~Aubert, et~al.
\newblock The 16th data release of the sloan digital sky surveys: first release
  from the apogee-2 southern survey and full release of eboss spectra.
\newblock \emph{The Astrophysical Journal Supplement Series}, 249\penalty0
  (1):\penalty0 3, 2020.

\bibitem[Alpaslan et~al.(2016)Alpaslan, Grootes, Marcum, Popescu, Tuffs,
  Bland-Hawthorn, Brough, Brown, Davies, Driver, et~al.]{alpaslan2016galaxy}
M.~Alpaslan, M.~Grootes, P.~M. Marcum, C.~Popescu, R.~Tuffs, J.~Bland-Hawthorn,
  S.~Brough, M.~J. Brown, L.~J. Davies, S.~P. Driver, et~al.
\newblock Galaxy and mass assembly (gama): stellar mass growth of spiral
  galaxies in the cosmic web.
\newblock \emph{Monthly Notices of the Royal Astronomical Society},
  457\penalty0 (3):\penalty0 2287--2300, 2016.

\bibitem[Baek et~al.(2009)Baek, Tsai, and Chen]{baek2009development}
S.~Baek, C.-A. Tsai, and J.~J. Chen.
\newblock Development of biomarker classifiers from high-dimensional data.
\newblock \emph{Briefings in bioinformatics}, 10\penalty0 (5):\penalty0
  537--546, 2009.

\bibitem[Battey and Reid(2023)]{battey2023inference}
H.~S. Battey and N.~Reid.
\newblock On inference in high-dimensional regression.
\newblock \emph{Journal of the Royal Statistical Society Series B: Statistical
  Methodology}, 85\penalty0 (1):\penalty0 149--175, 2023.

\bibitem[Bayati et~al.(2013)Bayati, Erdogdu, and
  Montanari]{bayati2013estimating}
M.~Bayati, M.~A. Erdogdu, and A.~Montanari.
\newblock Estimating lasso risk and noise level.
\newblock \emph{Advances in Neural Information Processing Systems}, 26, 2013.

\bibitem[Bellec and Zhang(2022)]{bellec2022biasing}
P.~C. Bellec and C.-H. Zhang.
\newblock De-biasing the lasso with degrees-of-freedom adjustment.
\newblock \emph{Bernoulli}, 28\penalty0 (2):\penalty0 713--743, 2022.

\bibitem[Belloni and Chernozhukov(2013)]{belloni2013least}
A.~Belloni and V.~Chernozhukov.
\newblock Least squares after model selection in high-dimensional sparse
  models.
\newblock \emph{Bernoulli}, 19\penalty0 (2):\penalty0 521--547, 2013.

\bibitem[Belloni et~al.(2016)Belloni, Chernozhukov, and Wei]{belloni2016post}
A.~Belloni, V.~Chernozhukov, and Y.~Wei.
\newblock Post-selection inference for generalized linear models with many
  controls.
\newblock \emph{Journal of Business \& Economic Statistics}, 34\penalty0
  (4):\penalty0 606--619, 2016.

\bibitem[Belloni et~al.(2019)Belloni, Chernozhukov, and Kato]{belloni2019valid}
A.~Belloni, V.~Chernozhukov, and K.~Kato.
\newblock Valid post-selection inference in high-dimensional approximately
  sparse quantile regression models.
\newblock \emph{Journal of the American Statistical Association}, 114\penalty0
  (526):\penalty0 749--758, 2019.

\bibitem[Bickel et~al.(2009)Bickel, Ritov, and
  Tsybakov]{bickel2009simultaneous}
P.~J. Bickel, Y.~Ritov, and A.~B. Tsybakov.
\newblock Simultaneous analysis of lasso and dantzig selector.
\newblock \emph{The Annals of statistics}, 37\penalty0 (4):\penalty0
  1705--1732, 2009.

\bibitem[Blanton et~al.(2005)Blanton, Schlegel, Strauss, Brinkmann, Finkbeiner,
  Fukugita, Gunn, Hogg, Ivezi{\'c}, Knapp, et~al.]{blanton2005new}
M.~R. Blanton, D.~J. Schlegel, M.~A. Strauss, J.~Brinkmann, D.~Finkbeiner,
  M.~Fukugita, J.~E. Gunn, D.~W. Hogg, {\v{Z}}.~Ivezi{\'c}, G.~Knapp, et~al.
\newblock New york university value-added galaxy catalog: a galaxy catalog
  based on new public surveys.
\newblock \emph{The Astronomical Journal}, 129\penalty0 (6):\penalty0 2562,
  2005.

\bibitem[Breiman(2001)]{breiman2001random}
L.~Breiman.
\newblock Random forests.
\newblock \emph{Machine learning}, 45:\penalty0 5--32, 2001.

\bibitem[Breiman et~al.(1984)Breiman, Friedman, Stone, and
  Olshen]{BreiFrieStonOlsh84}
L.~Breiman, J.~Friedman, C.~J. Stone, and R.~Olshen.
\newblock \emph{Classification and Regression Trees}.
\newblock Chapman and Hall/CRC, 1984.

\bibitem[B{\"u}hlmann(2013)]{buhlmann2013statistical}
P.~B{\"u}hlmann.
\newblock {Statistical significance in high-dimensional linear models}.
\newblock \emph{Bernoulli}, 19\penalty0 (4):\penalty0 1212 -- 1242, 2013.

\bibitem[B{\"u}hlmann and van De~Geer(2011)]{buhlmann2011statistics}
P.~B{\"u}hlmann and S.~van De~Geer.
\newblock \emph{Statistics for high-dimensional data: methods, theory and
  applications}.
\newblock Springer Science \& Business Media, 2011.

\bibitem[B{\"u}hlmann and van~de Geer(2015)]{buhlmann2015high}
P.~B{\"u}hlmann and S.~van~de Geer.
\newblock {High-dimensional inference in misspecified linear models}.
\newblock \emph{Electronic Journal of Statistics}, 9\penalty0 (1):\penalty0
  1449 -- 1473, 2015.

\bibitem[Bundy et~al.(2014)Bundy, Bershady, Law, Yan, Drory, MacDonald, Wake,
  Cherinka, S{\'a}nchez-Gallego, Weijmans, et~al.]{MaNGA2014}
K.~Bundy, M.~A. Bershady, D.~R. Law, R.~Yan, N.~Drory, N.~MacDonald, D.~A.
  Wake, B.~Cherinka, J.~R. S{\'a}nchez-Gallego, A.-M. Weijmans, et~al.
\newblock Overview of the sdss-iv manga survey: mapping nearby galaxies at
  apache point observatory.
\newblock \emph{The Astrophysical Journal}, 798\penalty0 (1):\penalty0 7, 2014.

\bibitem[Cai and Guo(2017)]{cai2016confidence}
T.~T. Cai and Z.~Guo.
\newblock {Confidence intervals for high-dimensional linear regression: Minimax
  rates and adaptivity}.
\newblock \emph{The Annals of Statistics}, 45\penalty0 (2):\penalty0 615 --
  646, 2017.

\bibitem[Cai and Guo(2018)]{cai2018accuracy}
T.~T. Cai and Z.~Guo.
\newblock Accuracy assessment for high-dimensional linear regression.
\newblock \emph{The Annals of Statistics}, 46\penalty0 (4):\penalty0
  1807--1836, 2018.

\bibitem[Cai et~al.(2016)Cai, Liu, and Zhou]{cai2016estimating}
T.~T. Cai, W.~Liu, and H.~H. Zhou.
\newblock Estimating sparse precision matrix: Optimal rates of convergence and
  adaptive estimation.
\newblock \emph{The Annals of Statistics}, 44\penalty0 (2):\penalty0 455--488,
  2016.

\bibitem[Cai et~al.(2021)Cai, Guo, and Ma]{cai2021statistical}
T.~T. Cai, Z.~Guo, and R.~Ma.
\newblock Statistical inference for high-dimensional generalized linear models
  with binary outcomes.
\newblock \emph{Journal of the American Statistical Association}, pages 1--14,
  2021.

\bibitem[Carpenter et~al.(2023)Carpenter, Bartlett, Morris, Wood, Quartagno,
  and Kenward]{carpenter2023multiple}
J.~R. Carpenter, J.~Bartlett, T.~Morris, A.~Wood, M.~Quartagno, and M.~G.
  Kenward.
\newblock \emph{Multiple imputation and its application}.
\newblock John Wiley \& Sons, 2023.

\bibitem[Celentano and Wainwright(2023)]{celentano2023challenges}
M.~Celentano and M.~J. Wainwright.
\newblock Challenges of the inconsistency regime: Novel debiasing methods for
  missing data models.
\newblock \emph{arXiv preprint arXiv:2309.01362}, 2023.

\bibitem[Chabrier(2003)]{chabrier2003galactic}
G.~Chabrier.
\newblock Galactic stellar and substellar initial mass function1.
\newblock \emph{Publications of the Astronomical Society of the Pacific},
  115\penalty0 (809):\penalty0 763, 2003.

\bibitem[Chakrabortty et~al.(2019)Chakrabortty, Lu, Cai, and
  Li]{chakrabortty2019high}
A.~Chakrabortty, J.~Lu, T.~T. Cai, and H.~Li.
\newblock High dimensional m-estimation with missing outcomes: A
  semi-parametric framework.
\newblock \emph{arXiv preprint arXiv:1911.11345}, 2019.

\bibitem[Chang et~al.(2015)Chang, van~der Wel, da~Cunha, and
  Rix]{chang2015stellar}
Y.-Y. Chang, A.~van~der Wel, E.~da~Cunha, and H.-W. Rix.
\newblock Stellar masses and star formation rates for 1 m galaxies from sdss+
  wise.
\newblock \emph{The Astrophysical Journal Supplement Series}, 219\penalty0
  (1):\penalty0 8, 2015.

\bibitem[Chapelle et~al.(2006)Chapelle, Sch{\"o}lkopf, and
  Zien]{oliver2006semi}
O.~Chapelle, B.~Sch{\"o}lkopf, and A.~Zien.
\newblock \emph{Semi-Supervised Learning}.
\newblock The MIT Press, 2006.

\bibitem[Chen et~al.(2005)Chen, Lin, and Sch{\"o}lkopf]{chen2005tutorial}
P.-H. Chen, C.-J. Lin, and B.~Sch{\"o}lkopf.
\newblock A tutorial on $\nu$-support vector machines.
\newblock \emph{Applied Stochastic Models in Business and Industry},
  21\penalty0 (2):\penalty0 111--136, 2005.

\bibitem[Chen and Yang(2021)]{chen2021one}
Y.~Chen and Y.~Yang.
\newblock The one standard error rule for model selection: Does it work?
\newblock \emph{Stats}, 4\penalty0 (4):\penalty0 868--892, 2021.

\bibitem[Chernozhukov and Semenova(2018)]{chernozhukov2018simulaneous}
V.~Chernozhukov and V.~Semenova.
\newblock {Simultaneous inference for Best Linear Predictor of the Conditional
  Average Treatment Effect and other structural functions}.
\newblock CeMMAP working papers CWP40/18, Centre for Microdata Methods and
  Practice, Institute for Fiscal Studies, July 2018.

\bibitem[Chernozhukov et~al.(2018)Chernozhukov, Chetverikov, Demirer, Duflo,
  Hansen, Newey, and Robins]{chernozhukov2018double}
V.~Chernozhukov, D.~Chetverikov, M.~Demirer, E.~Duflo, C.~Hansen, W.~Newey, and
  J.~Robins.
\newblock {Double/debiased machine learning for treatment and structural
  parameters}.
\newblock \emph{The Econometrics Journal}, 21\penalty0 (1):\penalty0 C1--C68,
  01 2018.

\bibitem[{Comparat} et~al.(2017){Comparat}, {Maraston}, {Goddard},
  {Gonzalez-Perez}, {Lian}, {Meneses-Goytia}, {Thomas}, {Brownstein},
  {Tojeiro}, {Finoguenov}, {Merloni}, {Prada}, {Salvato}, {Zhu}, {Zou}, and
  {Brinkmann}]{Comparat2017}
J.~{Comparat}, C.~{Maraston}, D.~{Goddard}, V.~{Gonzalez-Perez}, J.~{Lian},
  S.~{Meneses-Goytia}, D.~{Thomas}, J.~R. {Brownstein}, R.~{Tojeiro},
  A.~{Finoguenov}, A.~{Merloni}, F.~{Prada}, M.~{Salvato}, G.~B. {Zhu},
  H.~{Zou}, and J.~{Brinkmann}.
\newblock {Stellar population properties for 2 million galaxies from SDSS DR14
  and DEEP2 DR4 from full spectral fitting}.
\newblock \emph{arXiv preprint arXiv:1711.06575}, 2017.

\bibitem[Conroy et~al.(2009)Conroy, Gunn, and White]{conroy2009propagation}
C.~Conroy, J.~E. Gunn, and M.~White.
\newblock The propagation of uncertainties in stellar population synthesis
  modeling. i. the relevance of uncertain aspects of stellar evolution and the
  initial mass function to the derived physical properties of galaxies.
\newblock \emph{The Astrophysical Journal}, 699\penalty0 (1):\penalty0 486,
  2009.

\bibitem[Dawson et~al.(2016)Dawson, Kneib, Percival, Alam, Albareti, Anderson,
  Armengaud, Aubourg, Bailey, Bautista, et~al.]{eBOSS2016}
K.~S. Dawson, J.-P. Kneib, W.~J. Percival, S.~Alam, F.~D. Albareti, S.~F.
  Anderson, E.~Armengaud, {\'E}.~Aubourg, S.~Bailey, J.~E. Bautista, et~al.
\newblock The sdss-iv extended baryon oscillation spectroscopic survey:
  overview and early data.
\newblock \emph{The Astronomical Journal}, 151\penalty0 (2):\penalty0 44, 2016.

\bibitem[Dezeure et~al.(2015)Dezeure, B{\"u}hlmann, Meier, and
  Meinshausen]{dezeure2015high}
R.~Dezeure, P.~B{\"u}hlmann, L.~Meier, and N.~Meinshausen.
\newblock High-dimensional inference: confidence intervals, p-values and
  r-software hdi.
\newblock \emph{Statistical science}, pages 533--558, 2015.

\bibitem[Diamond and Boyd(2016)]{diamond2016cvxpy}
S.~Diamond and S.~Boyd.
\newblock {CVXPY}: {A} {P}ython-embedded modeling language for convex
  optimization.
\newblock \emph{Journal of Machine Learning Research}, 17\penalty0
  (83):\penalty0 1--5, 2016.

\bibitem[Dicker(2012)]{dicker2012residual}
L.~H. Dicker.
\newblock Residual variance and the signal-to-noise ratio in high-dimensional
  linear models.
\newblock \emph{arXiv preprint arXiv:1209.0012}, 2012.

\bibitem[Ding and Li(2018)]{ding2018causal}
P.~Ding and F.~Li.
\newblock {Causal Inference: A Missing Data Perspective}.
\newblock \emph{Statistical Science}, 33\penalty0 (2):\penalty0 214 -- 237,
  2018.

\bibitem[D’Amour et~al.(2021)D’Amour, Ding, Feller, Lei, and
  Sekhon]{d2021overlap}
A.~D’Amour, P.~Ding, A.~Feller, L.~Lei, and J.~Sekhon.
\newblock Overlap in observational studies with high-dimensional covariates.
\newblock \emph{Journal of Econometrics}, 221\penalty0 (2):\penalty0 644--654,
  2021.

\bibitem[Fan et~al.(2011)Fan, Lv, and Qi]{fan2011sparse}
J.~Fan, J.~Lv, and L.~Qi.
\newblock Sparse high-dimensional models in economics.
\newblock \emph{Annual Review of Economics}, 3\penalty0 (1):\penalty0 291--317,
  2011.

\bibitem[Fan et~al.(2012)Fan, Guo, and Hao]{fan2012variance}
J.~Fan, S.~Guo, and N.~Hao.
\newblock Variance estimation using refitted cross-validation in ultrahigh
  dimensional regression.
\newblock \emph{Journal of the Royal Statistical Society: Series B (Statistical
  Methodology)}, 74\penalty0 (1):\penalty0 37--65, 2012.

\bibitem[Farrell et~al.(2021)Farrell, Liang, and Misra]{farrell2021deep}
M.~H. Farrell, T.~Liang, and S.~Misra.
\newblock Deep neural networks for estimation and inference.
\newblock \emph{Econometrica}, 89\penalty0 (1):\penalty0 181--213, 2021.

\bibitem[Feller(1968)]{feller1968prob}
W.~Feller.
\newblock \emph{An introduction to probability theory and its application}.
\newblock John Wiley \& Sons Inc., New York, 3rd edition, 1968.

\bibitem[Foucart and Rauhut(2013)]{foucart2013mathematical}
S.~Foucart and H.~Rauhut.
\newblock \emph{A Mathematical Introduction to Compressive Sensing}.
\newblock Applied and Numerical Harmonic Analysis. Springer Birkh\"{a}user
  Basel, 2013.

\bibitem[Freedman(2008)]{freedman2008regression}
D.~A. Freedman.
\newblock On regression adjustments to experimental data.
\newblock \emph{Advances in Applied Mathematics}, 40\penalty0 (2):\penalty0
  180--193, 2008.

\bibitem[Friedman et~al.(2008)Friedman, Hastie, and
  Tibshirani]{friedman2008sparse}
J.~Friedman, T.~Hastie, and R.~Tibshirani.
\newblock Sparse inverse covariance estimation with the graphical lasso.
\newblock \emph{Biostatistics}, 9\penalty0 (3):\penalty0 432--441, 2008.

\bibitem[Frumento et~al.(2012)Frumento, Mealli, Pacini, and
  Rubin]{frumento2012evaluating}
P.~Frumento, F.~Mealli, B.~Pacini, and D.~B. Rubin.
\newblock Evaluating the effect of training on wages in the presence of
  noncompliance, nonemployment, and missing outcome data.
\newblock \emph{Journal of the American Statistical Association}, 107\penalty0
  (498):\penalty0 450--466, 2012.

\bibitem[Fu et~al.(2020)Fu, Narasimhan, and Boyd]{cvxr2020}
A.~Fu, B.~Narasimhan, and S.~Boyd.
\newblock {CVXR}: An {R} package for disciplined convex optimization.
\newblock \emph{Journal of Statistical Software}, 94\penalty0 (14):\penalty0
  1--34, 2020.
\newblock \doi{10.18637/jss.v094.i14}.

\bibitem[Gao et~al.(2022)Gao, Xu, and Zhou]{gao2022towards}
W.~Gao, F.~Xu, and Z.-H. Zhou.
\newblock Towards convergence rate analysis of random forests for
  classification.
\newblock \emph{Artificial Intelligence}, 313:\penalty0 103788, 2022.

\bibitem[Geng and Huang(2000)]{geng2000robust}
Z.~Geng and L.~Huang.
\newblock Robust stability of the systems with mixed uncertainties under the
  iqc descriptions.
\newblock \emph{International Journal of Control}, 73\penalty0 (9):\penalty0
  776--786, 2000.

\bibitem[Giessing(2022)]{giessing2022inequalities}
A.~Giessing.
\newblock Inequalities for suprema of unbounded empirical processes.
\newblock \emph{Technical Report}, 2022.

\bibitem[Giessing and Wang(2023)]{giessing2021inference}
A.~Giessing and J.~Wang.
\newblock {Debiased inference on heterogeneous quantile treatment effects with
  regression rank scores}.
\newblock \emph{Journal of the Royal Statistical Society Series B: Statistical
  Methodology}, 85\penalty0 (5):\penalty0 1561--1588, 08 2023.

\bibitem[Gin{\'e} and Nickl(2021)]{gine2021mathematical}
E.~Gin{\'e} and R.~Nickl.
\newblock \emph{Mathematical foundations of infinite-dimensional statistical
  models}.
\newblock Cambridge university press, 2021.

\bibitem[Gorski et~al.(2007)Gorski, Pfeuffer, and Klamroth]{gorski2007biconvex}
J.~Gorski, F.~Pfeuffer, and K.~Klamroth.
\newblock Biconvex sets and optimization with biconvex functions: a survey and
  extensions.
\newblock \emph{Mathematical methods of operations research}, 66\penalty0
  (3):\penalty0 373--407, 2007.

\bibitem[Graham(2011)]{graham2011efficiency}
B.~S. Graham.
\newblock Efficiency bounds for missing data models with semiparametric
  restrictions.
\newblock \emph{Econometrica}, 79\penalty0 (2):\penalty0 437--452, 2011.

\bibitem[Guo and Chen(2016)]{guo2016tests}
B.~Guo and S.~X. Chen.
\newblock Tests for high dimensional generalized linear models.
\newblock \emph{Journal of the Royal Statistical Society Series B: Statistical
  Methodology}, 78\penalty0 (5):\penalty0 1079--1102, 2016.

\bibitem[Guo et~al.(2021)Guo, Rakshit, Herman, and Chen]{guo2021inference}
Z.~Guo, P.~Rakshit, D.~S. Herman, and J.~Chen.
\newblock Inference for the case probability in high-dimensional logistic
  regression.
\newblock \emph{The Journal of Machine Learning Research}, 22\penalty0
  (1):\penalty0 11480--11533, 2021.

\bibitem[Henmi and Eguchi(2004)]{henmi2004paradox}
M.~Henmi and S.~Eguchi.
\newblock A paradox concerning nuisance parameters and projected estimating
  functions.
\newblock \emph{Biometrika}, 91\penalty0 (4):\penalty0 929--941, 2004.

\bibitem[Higgins et~al.(2008)Higgins, White, and Wood]{higgins2008imputation}
J.~P. Higgins, I.~R. White, and A.~M. Wood.
\newblock Imputation methods for missing outcome data in meta-analysis of
  clinical trials.
\newblock \emph{Clinical trials}, 5\penalty0 (3):\penalty0 225--239, 2008.

\bibitem[Hinton(1990)]{hinton1990connectionist}
G.~E. Hinton.
\newblock Connectionist learning procedures.
\newblock In \emph{Machine learning}, pages 555--610. Elsevier, 1990.

\bibitem[Hitomi et~al.(2008)Hitomi, Nishiyama, and Okui]{hitomi2008puzzling}
K.~Hitomi, Y.~Nishiyama, and R.~Okui.
\newblock A puzzling phenomenon in semiparametric estimation problems with
  infinite-dimensional nuisance parameters.
\newblock \emph{Econometric Theory}, 24\penalty0 (6):\penalty0 1717--1728,
  2008.

\bibitem[Hoerl and Kennard(1970)]{hoerl1970ridge}
A.~E. Hoerl and R.~W. Kennard.
\newblock Ridge regression: applications to nonorthogonal problems.
\newblock \emph{Technometrics}, 12\penalty0 (1):\penalty0 69--82, 1970.

\bibitem[Huang and Knowles(2016)]{huang2016unforeseen}
H.~Huang and L.~L. Knowles.
\newblock {Unforeseen Consequences of Excluding Missing Data from
  Next-Generation Sequences: Simulation Study of RAD Sequences}.
\newblock \emph{Systematic Biology}, 65\penalty0 (3):\penalty0 357--365, 07
  2016.

\bibitem[Huang and Zhang(2012)]{huang2012estimation}
J.~Huang and C.-H. Zhang.
\newblock Estimation and selection via absolute penalized convex minimization
  and its multistage adaptive applications.
\newblock \emph{The Journal of Machine Learning Research}, 13\penalty0
  (1):\penalty0 1839--1864, 2012.

\bibitem[Imbens(2004)]{imbens2004nonparametric}
G.~W. Imbens.
\newblock Nonparametric estimation of average treatment effects under
  exogeneity: A review.
\newblock \emph{Review of Economics and Statistics}, 86\penalty0 (1):\penalty0
  4--29, 2004.

\bibitem[Jackson(1972)]{jackson1972critique}
J.~Jackson.
\newblock A critique of rees's theory of primordial gravitational radiation.
\newblock \emph{Monthly Notices of the Royal Astronomical Society},
  156\penalty0 (1):\penalty0 1P--5P, 1972.

\bibitem[Jankova and van~de Geer(2018)]{jankova2018semiparametric}
J.~Jankova and S.~van~de Geer.
\newblock Semiparametric efficiency bounds for high-dimensional models.
\newblock \emph{The Annals of Statistics}, 46\penalty0 (5):\penalty0
  2336--2359, 2018.

\bibitem[Javanmard and Lee(2020)]{javanmard2020flexible}
A.~Javanmard and J.~D. Lee.
\newblock A flexible framework for hypothesis testing in high dimensions.
\newblock \emph{Journal of the Royal Statistical Society Series B: Statistical
  Methodology}, 82\penalty0 (3):\penalty0 685--718, 2020.

\bibitem[Javanmard and Montanari(2014{\natexlab{a}})]{javanmard2014confidence}
A.~Javanmard and A.~Montanari.
\newblock Confidence intervals and hypothesis testing for high-dimensional
  regression.
\newblock \emph{The Journal of Machine Learning Research}, 15\penalty0
  (1):\penalty0 2869--2909, 2014{\natexlab{a}}.

\bibitem[Javanmard and Montanari(2014{\natexlab{b}})]{javanmard2014hypothesis}
A.~Javanmard and A.~Montanari.
\newblock Hypothesis testing in high-dimensional regression under the gaussian
  random design model: Asymptotic theory.
\newblock \emph{IEEE Transactions on Information Theory}, 60\penalty0
  (10):\penalty0 6522--6554, 2014{\natexlab{b}}.

\bibitem[Javanmard and Montanari(2018)]{javanmard2018debiasing}
A.~Javanmard and A.~Montanari.
\newblock Debiasing the lasso: Optimal sample size for gaussian designs.
\newblock \emph{The Annals of Statistics}, 46\penalty0 (6A):\penalty0
  2593--2622, 2018.

\bibitem[Kaiser(1987)]{kaiser1987clustering}
N.~Kaiser.
\newblock Clustering in real space and in redshift space.
\newblock \emph{Monthly Notices of the Royal Astronomical Society},
  227\penalty0 (1):\penalty0 1--21, 1987.

\bibitem[Koltchinskii(2011)]{koltchinskii2011oracle}
V.~Koltchinskii.
\newblock \emph{Oracle inequalities in empirical risk minimization and sparse
  recovery problems: {\'E}cole D’{\'E}t{\'e} de Probabilit{\'e}s de
  Saint-Flour XXXVIII-2008}, volume 2033.
\newblock Springer Science \& Business Media, 2011.

\bibitem[Kraljic et~al.(2018)Kraljic, Arnouts, Pichon, Laigle, de~la Torre,
  Vibert, Cadiou, Dubois, Treyer, Schimd, et~al.]{kraljic2018galaxy}
K.~Kraljic, S.~Arnouts, C.~Pichon, C.~Laigle, S.~de~la Torre, D.~Vibert,
  C.~Cadiou, Y.~Dubois, M.~Treyer, C.~Schimd, et~al.
\newblock Galaxy evolution in the metric of the cosmic web.
\newblock \emph{Monthly Notices of the Royal Astronomical Society},
  474\penalty0 (1):\penalty0 547--571, 2018.

\bibitem[Li(2020)]{sai2020debiasing}
S.~Li.
\newblock {Debiasing the debiased Lasso with bootstrap}.
\newblock \emph{Electronic Journal of Statistics}, 14\penalty0 (1):\penalty0
  2298 -- 2337, 2020.

\bibitem[Lin(2013)]{lin2013agnostic}
W.~Lin.
\newblock {Agnostic notes on regression adjustments to experimental data:
  Reexamining Freedman's critique}.
\newblock \emph{The Annals of Applied Statistics}, 7\penalty0 (1):\penalty0 295
  -- 318, 2013.

\bibitem[Lindsey(2004)]{lindsey2004statistical}
J.~K. Lindsey.
\newblock \emph{Statistical analysis of stochastic processes in time},
  volume~14.
\newblock Cambridge University Press, 2004.

\bibitem[Little and Rubin(2019)]{little2019statistical}
R.~J. Little and D.~B. Rubin.
\newblock \emph{Statistical Analysis with Missing Data}.
\newblock John Wiley \& Sons, 3rd edition, 2019.

\bibitem[Loh and Wainwright(2012)]{loh2012high}
P.-L. Loh and M.~J. Wainwright.
\newblock {High-dimensional regression with noisy and missing data: Provable
  guarantees with nonconvexity}.
\newblock \emph{The Annals of Statistics}, 40\penalty0 (3):\penalty0 1637 --
  1664, 2012.

\bibitem[Lok(2021)]{lok2021estimating}
J.~J. Lok.
\newblock How estimating nuisance parameters can reduce the variance (with
  consistent variance estimation).
\newblock \emph{arXiv preprint arXiv:2109.02690}, 2021.

\bibitem[Ma et~al.(2021)Ma, Tony~Cai, and Li]{ma2021global}
R.~Ma, T.~Tony~Cai, and H.~Li.
\newblock Global and simultaneous hypothesis testing for high-dimensional
  logistic regression models.
\newblock \emph{Journal of the American Statistical Association}, 116\penalty0
  (534):\penalty0 984--998, 2021.

\bibitem[Majewski et~al.(2017)Majewski, Schiavon, Frinchaboy, Prieto,
  Barkhouser, Bizyaev, Blank, Brunner, Burton, Carrera, et~al.]{APOGEE2017}
S.~R. Majewski, R.~P. Schiavon, P.~M. Frinchaboy, C.~A. Prieto, R.~Barkhouser,
  D.~Bizyaev, B.~Blank, S.~Brunner, A.~Burton, R.~Carrera, et~al.
\newblock The apache point observatory galactic evolution experiment (apogee).
\newblock \emph{The Astronomical Journal}, 154\penalty0 (3):\penalty0 94, 2017.

\bibitem[Malavasi et~al.(2022)Malavasi, Langer, Aghanim,
  Gal{\'a}rraga-Espinosa, and Gouin]{malavasi2022relative}
N.~Malavasi, M.~Langer, N.~Aghanim, D.~Gal{\'a}rraga-Espinosa, and C.~Gouin.
\newblock Relative effect of nodes and filaments of the cosmic web on the
  quenching of galaxies and the orientation of their spin.
\newblock \emph{Astronomy \& Astrophysics}, 658:\penalty0 A113, 2022.

\bibitem[Maraston and Str{\"o}mb{\"a}ck(2011)]{maraston2011stellar}
C.~Maraston and G.~Str{\"o}mb{\"a}ck.
\newblock Stellar population models at high spectral resolution.
\newblock \emph{Monthly Notices of the Royal Astronomical Society},
  418\penalty0 (4):\penalty0 2785--2811, 2011.

\bibitem[M{\"u}ller and van~de Geer(2015)]{muller2015partial}
P.~M{\"u}ller and S.~van~de Geer.
\newblock The partial linear model in high dimensions.
\newblock \emph{Scandinavian Journal of Statistics}, 42\penalty0 (2):\penalty0
  580--608, 2015.

\bibitem[M{\"u}ller and Keilegom(2012)]{muller2012efficient}
U.~U. M{\"u}ller and I.~V. Keilegom.
\newblock {Efficient parameter estimation in regression with missing
  responses}.
\newblock \emph{Electronic Journal of Statistics}, 6\penalty0 (none):\penalty0
  1200 -- 1219, 2012.

\bibitem[Negahban et~al.(2010)Negahban, Ravikumar, Wainwright, and
  Yu]{negahban2010unified}
S.~N. Negahban, P.~Ravikumar, M.~J. Wainwright, and B.~Yu.
\newblock A unified framework for high-dimensional analysis of $m$-estimators
  with decomposable regularizers.
\newblock \emph{arXiv preprint arXiv:1010.2731v1}, 2010.
\newblock URL \url{https://arxiv.org/abs/1010.2731v1}.

\bibitem[Negi and Wooldridge(2021)]{negi2021revisiting}
A.~Negi and J.~M. Wooldridge.
\newblock Revisiting regression adjustment in experiments with heterogeneous
  treatment effects.
\newblock \emph{Econometric Reviews}, 40\penalty0 (5):\penalty0 504--534, 2021.

\bibitem[Neyman(1959)]{neyman1959optimal}
J.~Neyman.
\newblock Optimal asymptotic tests of composite hypotheses.
\newblock \emph{Probability and statsitics}, pages 213--234, 1959.

\bibitem[Neyman(1979)]{neyman1979c}
J.~Neyman.
\newblock C($\alpha$) tests and their use.
\newblock \emph{Sankhy{\=a}: The Indian Journal of Statistics, Series A}, pages
  1--21, 1979.

\bibitem[Nickl and van~de Geer(2013)]{nickl2013confidence}
R.~Nickl and S.~van~de Geer.
\newblock {Confidence sets in sparse regression}.
\newblock \emph{The Annals of Statistics}, 41\penalty0 (6):\penalty0 2852 --
  2876, 2013.

\bibitem[Ning and Liu(2017)]{ning2017}
Y.~Ning and H.~Liu.
\newblock A general theory of hypothesis tests and confidence regions for
  sparse high dimensional models.
\newblock \emph{The Annals of Statistics}, 45\penalty0 (1):\penalty0 158--195,
  2017.

\bibitem[Pedregosa et~al.(2011)Pedregosa, Varoquaux, Gramfort, Michel, Thirion,
  Grisel, Blondel, Prettenhofer, Weiss, Dubourg, Vanderplas, Passos,
  Cournapeau, Brucher, Perrot, and Duchesnay]{scikit-learn}
F.~Pedregosa, G.~Varoquaux, A.~Gramfort, V.~Michel, B.~Thirion, O.~Grisel,
  M.~Blondel, P.~Prettenhofer, R.~Weiss, V.~Dubourg, J.~Vanderplas, A.~Passos,
  D.~Cournapeau, M.~Brucher, M.~Perrot, and E.~Duchesnay.
\newblock Scikit-learn: Machine learning in {P}ython.
\newblock \emph{Journal of Machine Learning Research}, 12:\penalty0 2825--2830,
  2011.

\bibitem[Platt(1999)]{platt1999probabilistic}
J.~Platt.
\newblock Probabilistic outputs for support vector machines and comparisons to
  regularized likelihood methods.
\newblock \emph{Advances in large margin classifiers}, 10\penalty0
  (3):\penalty0 61--74, 1999.

\bibitem[Ravikumar et~al.(2009)Ravikumar, Lafferty, Liu, and
  Wasserman]{ravikumar2009sparse}
P.~Ravikumar, J.~Lafferty, H.~Liu, and L.~Wasserman.
\newblock Sparse additive models.
\newblock \emph{Journal of the Royal Statistical Society: Series B (Statistical
  Methodology)}, 71\penalty0 (5):\penalty0 1009--1030, 2009.

\bibitem[Reid et~al.(2016)Reid, Tibshirani, and Friedman]{reid2016study}
S.~Reid, R.~Tibshirani, and J.~Friedman.
\newblock A study of error variance estimation in lasso regression.
\newblock \emph{Statistica Sinica}, pages 35--67, 2016.

\bibitem[Renegar(2001)]{Renegar2001Amvo}
J.~Renegar.
\newblock \emph{A mathematical view of interior-point methods in convex
  optimization}.
\newblock MOS-SIAM Series on Optimization. Society for Industrial and Applied
  Mathematics, 2001.

\bibitem[Robins and Rotnitzky(1995)]{robins1995semiparametric}
J.~M. Robins and A.~Rotnitzky.
\newblock Semiparametric efficiency in multivariate regression models with
  missing data.
\newblock \emph{Journal of the American Statistical Association}, 90\penalty0
  (429):\penalty0 122--129, 1995.

\bibitem[Robins et~al.(1992)Robins, Mark, and Newey]{robins1992estimating}
J.~M. Robins, S.~D. Mark, and W.~K. Newey.
\newblock Estimating exposure effects by modelling the expectation of exposure
  conditional on confounders.
\newblock \emph{Biometrics}, pages 479--495, 1992.

\bibitem[Robins et~al.(1994)Robins, Rotnitzky, and Zhao]{robins1994estimation}
J.~M. Robins, A.~Rotnitzky, and L.~P. Zhao.
\newblock Estimation of regression coefficients when some regressors are not
  always observed.
\newblock \emph{Journal of the American statistical Association}, 89\penalty0
  (427):\penalty0 846--866, 1994.

\bibitem[Robins et~al.(1995)Robins, Rotnitzky, and Zhao]{robins1995analysis}
J.~M. Robins, A.~Rotnitzky, and L.~P. Zhao.
\newblock Analysis of semiparametric regression models for repeated outcomes in
  the presence of missing data.
\newblock \emph{Journal of the american statistical association}, 90\penalty0
  (429):\penalty0 106--121, 1995.

\bibitem[Rosenbaum(1987)]{rosenbaum1987model}
P.~R. Rosenbaum.
\newblock Model-based direct adjustment.
\newblock \emph{Journal of the American statistical Association}, 82\penalty0
  (398):\penalty0 387--394, 1987.

\bibitem[Rosenbaum and Rubin(1983)]{rosenbaum1983central}
P.~R. Rosenbaum and D.~B. Rubin.
\newblock The central role of the propensity score in observational studies for
  causal effects.
\newblock \emph{Biometrika}, 70\penalty0 (1):\penalty0 41--55, 1983.

\bibitem[Rosenblatt(2015)]{rosenblatt2015short}
M.~Rosenblatt.
\newblock Short range and long range dependence.
\newblock \emph{Asymptotic Laws and Methods in Stochastics: A Volume in Honour
  of Mikl{\'o}s Cs{\"o}rg{\H{o}}}, pages 283--294, 2015.

\bibitem[Rubin(1974)]{rubin1974estimating}
D.~B. Rubin.
\newblock Estimating causal effects of treatments in randomized and
  nonrandomized studies.
\newblock \emph{Journal of educational Psychology}, 66\penalty0 (5):\penalty0
  688, 1974.

\bibitem[Shi et~al.(2021)Shi, Song, Lu, and Li]{shi2021statistical}
C.~Shi, R.~Song, W.~Lu, and R.~Li.
\newblock Statistical inference for high-dimensional models via recursive
  online-score estimation.
\newblock \emph{Journal of the American Statistical Association}, 116\penalty0
  (535):\penalty0 1307--1318, 2021.

\bibitem[Steinwart(2001)]{steinwart2001influence}
I.~Steinwart.
\newblock On the influence of the kernel on the consistency of support vector
  machines.
\newblock \emph{Journal of machine learning research}, 2\penalty0
  (Nov):\penalty0 67--93, 2001.

\bibitem[Su et~al.(2023)Su, Mou, Ding, and Wainwright]{su2023estimated}
F.~Su, W.~Mou, P.~Ding, and M.~Wainwright.
\newblock When is the estimated propensity score better? high-dimensional
  analysis and bias correction.
\newblock \emph{arXiv preprint arXiv:2303.17102}, 2023.

\bibitem[Sun and Zhang(2012)]{sun2012scaled}
T.~Sun and C.-H. Zhang.
\newblock Scaled sparse linear regression.
\newblock \emph{Biometrika}, 99\penalty0 (4):\penalty0 879--898, 2012.

\bibitem[Sun and Zhang(2013)]{sun2013sparse}
T.~Sun and C.-H. Zhang.
\newblock Sparse matrix inversion with scaled lasso.
\newblock \emph{The Journal of Machine Learning Research}, 14\penalty0
  (1):\penalty0 3385--3418, 2013.

\bibitem[Tibshirani(1996)]{tibshirani1996regression}
R.~Tibshirani.
\newblock Regression shrinkage and selection via the lasso.
\newblock \emph{Journal of the Royal Statistical Society: Series B
  (Methodological)}, 58\penalty0 (1):\penalty0 267--288, 1996.

\bibitem[Tsiatis(2007)]{tsiatis2007semiparametric}
A.~A. Tsiatis.
\newblock \emph{Semiparametric Theory and Missing Data}.
\newblock Springer Series in Statistics. Springer New York, 2007.

\bibitem[van~de Geer et~al.(2014)van~de Geer, B{\"u}hlmann, Ritov, and
  Dezeure]{van2014asymptotically}
S.~van~de Geer, P.~B{\"u}hlmann, Y.~Ritov, and R.~Dezeure.
\newblock On asymptotically optimal confidence regions and tests for
  high-dimensional models.
\newblock \emph{The Annals of Statistics}, 42\penalty0 (3):\penalty0
  1166--1202, 2014.

\bibitem[van~de Geer(2008)]{van2008high}
S.~A. van~de Geer.
\newblock High-dimensional generalized linear models and the lasso.
\newblock \emph{The Annals of Statistics}, 36\penalty0 (2):\penalty0 614--645,
  2008.

\bibitem[van~de Geer and B{\"u}hlmann(2009)]{van2009conditions}
S.~A. van~de Geer and P.~B{\"u}hlmann.
\newblock On the conditions used to prove oracle results for the lasso.
\newblock \emph{Electronic Journal of Statistics}, 3:\penalty0 1360--1392,
  2009.

\bibitem[van~der Vaart(2000)]{VDV2000}
A.~W. van~der Vaart.
\newblock \emph{Asymptotic Statistics}.
\newblock Cambridge Series in Statistical and Probabilistic Mathematics.
  Cambridge University Press, 2000.

\bibitem[van~der Vaart and Wellner(1996)]{van1996weak}
A.~W. van~der Vaart and J.~A. Wellner.
\newblock \emph{Weak Convergence and Empirical Processes: With Applications to
  Statistics}.
\newblock Springer Series in Statistics. Springer, 1996.

\bibitem[Vershynin(2018)]{vershynin2018high}
R.~Vershynin.
\newblock \emph{{High-dimensional Probability: An introduction with
  applications in data science}}, volume~47.
\newblock Cambridge university press, 2018.

\bibitem[Wagener and Dette(2013)]{wagener2013adaptive}
J.~Wagener and H.~Dette.
\newblock The adaptive lasso in high-dimensional sparse heteroscedastic models.
\newblock \emph{Mathematical Methods of Statistics}, 22:\penalty0 137--154,
  2013.

\bibitem[Wainwright(2019)]{wainwright2019high}
M.~J. Wainwright.
\newblock \emph{High-dimensional statistics: A non-asymptotic viewpoint},
  volume~48.
\newblock Cambridge University Press, 2019.

\bibitem[Wang et~al.(2019)Wang, Wang, Balakrishnan, and Singh]{wang2019rate}
Y.~Wang, J.~Wang, S.~Balakrishnan, and A.~Singh.
\newblock Rate optimal estimation and confidence intervals for high-dimensional
  regression with missing covariates.
\newblock \emph{Journal of Multivariate Analysis}, 174:\penalty0 104526, 2019.

\bibitem[Wilkinson et~al.(2017)Wilkinson, Maraston, Goddard, Thomas, and
  Parikh]{wilkinson2017firefly}
D.~M. Wilkinson, C.~Maraston, D.~Goddard, D.~Thomas, and T.~Parikh.
\newblock Firefly (fitting iteratively for likelihood analysis): a full
  spectral fitting code.
\newblock \emph{Monthly Notices of the Royal Astronomical Society},
  472\penalty0 (4):\penalty0 4297--4326, 2017.

\bibitem[Wright(2015)]{wright2015coordinate}
S.~J. Wright.
\newblock Coordinate descent algorithms.
\newblock \emph{Mathematical Programming}, 151\penalty0 (1):\penalty0 3--34,
  2015.

\bibitem[Wyss et~al.(2022)Wyss, Yanover, El-Hay, Bennett, Platt, Zullo, Sari,
  Wen, Ye, Yuan, et~al.]{wyss2022machine}
R.~Wyss, C.~Yanover, T.~El-Hay, D.~Bennett, R.~W. Platt, A.~R. Zullo, G.~Sari,
  X.~Wen, Y.~Ye, H.~Yuan, et~al.
\newblock Machine learning for improving high-dimensional proxy confounder
  adjustment in healthcare database studies: An overview of the current
  literature.
\newblock \emph{Pharmacoepidemiology and Drug Safety}, 31\penalty0
  (9):\penalty0 932--943, 2022.

\bibitem[Xia et~al.(2020)Xia, Nan, and Li]{xia2020revisit}
L.~Xia, B.~Nan, and Y.~Li.
\newblock A revisit to de-biased lasso for generalized linear models.
\newblock \emph{arXiv preprint arXiv:2006.12778}, 2020.

\bibitem[Xue et~al.(2021)Xue, Ma, and Li]{xue2021semi}
F.~Xue, R.~Ma, and H.~Li.
\newblock Semi-supervised statistical inference for high-dimensional linear
  regression with blockwise missing data.
\newblock \emph{arXiv preprint arXiv:2106.03344}, 2021.

\bibitem[Zhang(2010)]{zhang2010mcp}
C.-H. Zhang.
\newblock {Nearly unbiased variable selection under minimax concave penalty}.
\newblock \emph{The Annals of Statistics}, 38\penalty0 (2):\penalty0 894 --
  942, 2010.

\bibitem[Zhang and Zhang(2014)]{zhang2014confidence}
C.-H. Zhang and S.~S. Zhang.
\newblock Confidence intervals for low dimensional parameters in high
  dimensional linear models.
\newblock \emph{Journal of the Royal Statistical Society: Series B (Statistical
  Methodology)}, 76\penalty0 (1):\penalty0 217--242, 2014.

\bibitem[Zhang(2004)]{zhang2004}
H.~Zhang.
\newblock The optimality of naive bayes.
\newblock In V.~Barr and Z.~Markov, editors, \emph{Proceedings of the
  Seventeenth International Florida Artificial Intelligence Research Society
  Conference (FLAIRS 2004)}. AAAI Press, 2004.

\bibitem[Zhang and Cheng(2017)]{zhang2017simultaneous}
X.~Zhang and G.~Cheng.
\newblock Simultaneous inference for high-dimensional linear models.
\newblock \emph{Journal of the American Statistical Association}, 112\penalty0
  (518):\penalty0 757--768, 2017.

\bibitem[Zhang and Chen(2023)]{zhang2023linear}
Y.~Zhang and Y.-C. Chen.
\newblock Linear convergence of the subspace constrained mean shift algorithm:
  from euclidean to directional data.
\newblock \emph{Information and Inference: A Journal of the IMA}, 12\penalty0
  (1):\penalty0 210--311, 2023.

\bibitem[Zhang et~al.(2022)Zhang, de~Souza, and Chen]{zhang2022sconce}
Y.~Zhang, R.~S. de~Souza, and Y.-C. Chen.
\newblock Sconce: a cosmic web finder for spherical and conic geometries.
\newblock \emph{Monthly Notices of the Royal Astronomical Society},
  517\penalty0 (1):\penalty0 1197--1217, 2022.

\end{thebibliography}

\newpage

\title{\textsc{Supplementary Materials to ``Efficient Inference on High-Dimensional Linear Models with Missing Outcomes''}}

\author{Yikun Zhang$^{1,\ast}$, \;\; Alexander Giessing$^{2,\dagger}$, \;\;\text{and}\;  Yen-Chi Chen$^{1,\ddagger}$}
\date{
	{\small	$^1$Department of Statistics, University of Washington\\
		$^{\ast}$\href{mailto:yikun@uw.edu}{yikun@uw.edu}\; $^{\ddagger}$\href{mailto:yenchic@uw.edu}{yenchic@uw.edu}\\	
		$^2$Department of Statistics and Data Science, National University of Singapore\\
		$^{\dagger}$\href{mailto:giessing@nus.edu.sg}{giessing@nus.edu.sg} }\\~\\	
	\today}

\maketitle

\appendix

\setcounter{page}{1}

\noindent{\bf\LARGE Contents}

\startcontents[sections]
\printcontents[sections]{l}{1}{\setcounter{tocdepth}{2}}

\vspace{1cm}

The supplementary material contains auxiliary technical results, additional simulation studies, and proofs. We organize its content as follows.
\begin{itemize}
    \item {\bf \autoref{app:imple_debias} Implementation of the Proposed Debiasing Method} We explain the implementation steps of our debiasing method in detail.
    
    \item {\bf \autoref{app:suff_cond_dual} Sufficient Conditions for Assumption~\ref{assump:sparse_dual}} We provide sufficient conditions and motivating examples under which Assumption~\ref{assump:sparse_dual} holds.

    \item {\bf \autoref{app:prop_score_lasso} Generalized Linear Model with Lasso Regularization for Propensity Score Estimation} We showcase an example of estimating the propensity scores via the Lasso-type generalized linear regression and derive an explicit formula for the rate of consistency $r_{\pi}$ in Assumption~\ref{assump:prop_consistent_rate}.

    \item {\bf \autoref{app:add_simulation} Additional Simulation Results} We furnish extensive simulation results that are not covered by \autoref{sec:experiments} in the main paper.

    \item {\bf \autoref{app:Proofs} Proofs} We provide proofs for all the theorems and technical results in the main paper.
\end{itemize}

\section{Implementation of the Proposed Debiasing Method}
\label{app:imple_debias}

Our proposed debiasing method consists of three main estimation steps: (i) computing the Lasso pilot estimate \eqref{lasso_pilot}, (ii) estimating the propensity scores $\pi_i=\pi(X_i)=\mathrm{P}(R_i=1|X_i),i=1,...,n$, and (iii) solving the debiasing programs \eqref{debias_prog} and \eqref{debias_prog_dual}. We explain the implementation details and strategies of selecting two tuning parameters $\lambda>0$ in \eqref{lasso_pilot} and $\gamma >0$ in \eqref{debias_prog} as follows.

{\bf 1. Lasso pilot estimate:} We use the complete-case (or equivalently, observed) data $(X_i,Y_i) \in \mathbb{R}^d\times \mathbb{R}$ with $R_i=1$ for $i=1,...,n$ to construct the Lasso pilot estimate $\hat{\beta}$. In order to select the regularization parameter $\lambda>0$ in a data adaptive way, we adopt the scaled Lasso \citep{sun2012scaled} with its universal regularization parameter $\lambda_0=\sqrt{\frac{2\log d}{n}}$ as the initialization. Specifically, it provides an iterative algorithm to obtain the Lasso estimate $\hat{\beta}$ and a consistent estimator $\hat{\sigma}_{\epsilon}$ of the noise level $\sigma_{\epsilon}$ from the following jointly convex optimization problem:
$$\left(\hat{\beta}(\tilde{\lambda}), \hat{\sigma}_{\epsilon}(\tilde{\lambda})\right) = \argmin_{\beta\in \mathbb{R}^d, \sigma_{\epsilon} >0} \left[\frac{1}{2n\sigma_{\epsilon}} \sum_{i=1}^n R_i\left(Y_i - X_i^T\beta\right)^2 + \frac{\sigma_{\epsilon}}{2}+\tilde{\lambda}\norm{\beta}_1\right],$$
where the regularization parameter $\tilde{\lambda} >0$ is updated iteratively as well. We also utilize this estimated noise level $\hat{\sigma}_{\epsilon}$ to construct our asymptotic $1-\tau$ confidence interval \eqref{asym_ci}.

{\bf 2. Propensity score estimation:} When comparing our debiasing method with other existing high-dimensional inference methods, we utilize the Lasso-type logistic regression \eqref{lr_lasso} in \autoref{app:prop_score_lasso} to estimate the propensity scores $\pi_i=\mathrm{P}(R_i=1|X_i), i=1,...,n$, in which the regularization parameter $\zeta$ is selected on 40 logarithmically spacing values within the interval $\left[0.1\sqrt{\frac{\log d}{n}}, 300\sqrt{\frac{\log d}{n}}\right]$ via 5-fold cross-validations. However, as demonstrated in \autoref{subsec:nonpar_prop} through simulation studies, our debiasing method can perform even better when the propensity scores are estimated more accurately by other nonparametric methods, reflecting the theoretical results in \autoref{sec:theory}.

{\bf 3. Solving the debiasing program:} 
When our primal debiasing program is strictly feasible, strong duality holds by Slater's condition (Theorem 3.2.6 in \citealt{Renegar2001Amvo}), and the debiasing weights $\hat{w}_i(x), i=1,...,n$ can be estimated from either the primal program \eqref{debias_prog} or the dual program \eqref{debias_prog_dual} by the identity \eqref{primal_dual_sol2} in Proposition~\ref{prop:dual_debias}. We implement our debiasing method in both Python and R. In particular, we solve the primal program \eqref{debias_prog} by the Python package ``\texttt{CVXPY}'' \citep{diamond2016cvxpy,agrawal2018rewriting} or the R package ``\texttt{CVXR}'' \citep{cvxr2020}. For the dual program \eqref{debias_prog_dual}, we formulate and implement the classical coordinate descent algorithm \citep{wright2015coordinate} as:
\begin{equation}
	\label{coord_desc}
	\left[\hat{\ell}(x)\right]_j \gets \frac{\mathcal{S}_{\frac{\gamma}{n}}\left(-\frac{1}{2n} \sum_{i=1}^n \hat{\pi}_i\left(\sum_{k\neq j} X_{ik} X_{jk} \left[\hat{\ell}(x)\right]_k\right) -x_j\right)}{\frac{1}{2n} \sum_{i=1}^n \hat{\pi}_i X_{ij}^2} \quad \text{ for } \quad j=1,...,d,
\end{equation}
where $\mathcal{S}_{\frac{\gamma}{n}}(u) = \mathrm{sign}(u)\cdot \left(u-\frac{\gamma}{n}\right)_+$ is the soft-thresholding operator, and the above procedure will iterate cyclically from $j=1$ to $d$ until convergence. 

The tuning parameter $\gamma >0$ of our debiasing program is selected from 41 equally spacing values between 0 and $n\norm{x}_{\infty}$ via 5-fold cross-validations according to the dual problem \eqref{debias_prog_dual}, in which $x\in \mathbb{R}^d$ is the current query point. Here, the upper bound for the candidate values of $\gamma>0$ is set to $n\norm{x}_{\infty}$, because the optimal solution $\hat{w}(x)$ for the primal problem \eqref{debias_prog} becomes 0 when $\frac{\gamma}{n} \geq \norm{x}_{\infty}$. There are three different criteria for selecting the final value of $\gamma>0$ from the cross-validation results. 
\begin{enumerate}[label=(\roman*)]
	\item The first criterion selects the value of $\gamma>0$ that minimizes the average cross-validated risk of the dual debiasing program \eqref{debias_prog_dual}, which is known as the ``min-CV'' rule.
	
	\item The second criterion picks the value of $\gamma>0$ whose associated average cross-validated risk is at least one standard error (calculated across the 5-folded data) away from the smallest cross-validated risk and is also smaller than the one corresponding to the smallest average cross-validated risk; see Section 3.4.3 in \cite{BreiFrieStonOlsh84} and \cite{chen2021one} for details. We name this criterion as the ``1SE'' rule. The rationale behind the ``1SE'' rule is to improve the inference performance of our debiasing method by sacrificing some statistical efficiency (\emph{i.e.}, potentially yielding a larger value of the estimated asymptotic variance) for a less biased estimator. 
	
	\item The third criterion chooses the minimal value of $\gamma>0$ in which the primal program is feasible for all 5 folds of the data. We call it the ``min-feas'' rule, which would lead to the most conservative confidence interval for our debiasing method.
\end{enumerate}
We recommend the final value of $\gamma$ to be chosen as any value between the ``min-feas'' and ``1SE'' rules, depending how conservative that we need for the subsequent inference from our proposed debiasing program; recall from \autoref{subsec:sim_results} for our simulation results.

\section{Sufficient Conditions for Assumption~\ref{assump:sparse_dual}}
\label{app:suff_cond_dual}

We present several sufficient conditions in Lemma~\ref{lem:suff_sparse_dual} under which Assumption~\ref{assump:sparse_dual} on the sparse approximation to the population dual solution $\ell_0(x) \in \mathbb{R}^d$ in Definition~\ref{defn:popul_dual} holds and illuminate these sufficient conditions through some motivating examples.

\begin{lemma}[Sufficient conditions for the sparse approximation to $\ell_0(x)$]
	\label{lem:suff_sparse_dual}
	Under Assumption~\ref{assump:gram_lower_bnd}, we let $\left[\mathrm{E}(RXX^T) \right]^{-1}:=M=(\bm{m}_1,...,\bm{m}_d) \in \mathbb{R}^{d\times d}$, where $\bm{m}_k,k=1,...,d$ are columns of the matrix $M\in \mathbb{R}^{d\times d}$. We also denote by $M_{S,T} \in \mathbb{R}^{|S|\times |T|}$ the submatrix of $M$ indexed by $S,T \subset \{1,...,d\}$ and let $\sigma_{\min}(M_{S,T})$ be its smallest singular value.
	\begin{enumerate}[label=(\alph*)]
		\item If each column of $M$ has at most $m\geq 1$ nonzero entries and the query vector $x\in \mathbb{R}^d$ has at most $s_x\geq 1$ nonzero entries, then Assumption~\ref{assump:sparse_dual} holds with $s_{\ell}(x)=m\cdot s_x$ and $r_{\ell}=0$ for all $n\geq 1$.
		
		\item Suppose that each column $\bm{m}_k$ of $M$ lies in $\mathcal{C}_1(J_k,\vartheta)$ for some $J_k\subset \{1,...,d\}$ and $\vartheta\in [0,\infty)$. If the query vector $x\in \mathbb{R}^d$ has its support set as $\mathrm{support}(x)=S_x$ of size at most $s_x>0$ and an index set $J \subset \{1,...,d\}$ satisfies $\sigma_{\min}(M_{J,S_x}) >0$, then Assumption~\ref{assump:sparse_dual} holds with $s_{\ell}(x) = |J|\cdot R_{\ell}(n)$ and $r_{\ell}=O\left(\frac{(1+\vartheta) \cdot K_{J,x}}{\sqrt{R_{\ell}(n)}}\right)$ for some $R_{\ell}(n) \to \infty$ as $n\to \infty$, where $K_{J,x} = \max_{k\in S_x} \frac{\sqrt{s_x} \norm{M_{J,k}}_1}{\sigma_{\min}(M_{J,S_x})}$. 
		
		\item Suppose that each column $\bm{m}_k$ of $M$ lies in $\mathcal{C}_1(J_k,\vartheta)$ for some $J_k\subset \{1,...,d\}$ and $\vartheta \in [0,\infty)$. If the query vector $x\in \mathbb{R}^d$ lies in $\mathcal{C}_1(J,\vartheta_x)$ with the index set $J \subset \{1,...,d\}$ satisfying that $\kappa_{\min}(J,\vartheta_x) >0$ , then Assumption~\ref{assump:sparse_dual} holds with $s_{\ell}(x) = |J|\cdot R_{\ell}(n)$ and $r_{\ell}=O\left(\frac{(1+\vartheta) K_{J,x}}{\sqrt{R_{\ell}(n)}}\right)$ for some $R_{\ell}(n) \to \infty$ as $n\to \infty$, where $K_{J,x} = (1+\vartheta_x)\max\limits_{1\leq k \leq d}\frac{\sqrt{|J|}\cdot\norm{M_{J,k}}_1}{\kappa_{\min}(J,\vartheta_x)}$ with $\kappa_{\min}(J,\vartheta_x)=\inf\limits_{u\in \mathcal{C}_1(J,\vartheta_x)} \frac{\sqrt{|J|}\cdot \norm{M_Ju}_2}{\norm{u_J}_1}$ being the compatibility constant of the submatrix $M_J$.
	\end{enumerate}
\end{lemma}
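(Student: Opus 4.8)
The unifying observation is that, by the closed form \eqref{popul_dual_exact}, the population dual solution is an explicit linear image of the query point: $\ell_0(x) = -2Mx = -2\sum_{k\in\mathrm{supp}(x)} x_k\bm{m}_k$ with $M=[\mathrm{E}(RXX^T)]^{-1}$. Part~(a) is then immediate. If each column $\bm{m}_k$ has at most $m$ nonzero entries and $x$ has at most $s_x$, then $\mathrm{supp}(\ell_0(x))\subseteq\bigcup_{k\in\mathrm{supp}(x)}\mathrm{supp}(\bm{m}_k)$ has cardinality at most $m s_x$, so $\tilde\ell(x)=\ell_0(x)$ is itself $m s_x$-sparse and trivially feasible in \eqref{popul_dual_approx} with $r_\ell=0$; hence $s_\ell(x)=\norm{\tilde\ell(x)}_0\le m s_x$ for every $n$.

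For parts~(b) and~(c) I would argue in two steps. \emph{Step 1: an $L_1$-to-$L_2$ ratio bound for $\ell_0(x)$.} Starting from $\norm{\ell_0(x)}_1 \le 2\sum_k|x_k|\norm{\bm{m}_k}_1$ and using the column cone membership $\bm{m}_k\in\mathcal{C}_1(J_k,\vartheta)$ (so that $\norm{\bm{m}_k}_1\le(1+\vartheta)\norm{(\bm{m}_k)_{J_k}}_1$), together with Cauchy--Schwarz to convert $\sum_k|x_k|$ into $\sqrt{s_x}\norm{x}_2$ in case~(b) and $\norm{x}_1$ into $(1+\vartheta_x)\norm{x_J}_1$ in case~(c), one upper bounds $\norm{\ell_0(x)}_1$ in terms of $\norm{x}_2$ (resp.\ $\norm{x_J}_1$) and $\max_k\norm{M_{J,k}}_1$. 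The more delicate half is to \emph{lower} bound $\norm{\ell_0(x)}_2$ by the same quantity: in case~(b) one uses that $(\ell_0(x))_J=-2M_{J,S_x}x_{S_x}$, whence $\norm{\ell_0(x)}_2\ge\norm{(\ell_0(x))_J}_2\ge 2\sigma_{\min}(M_{J,S_x})\norm{x}_2$ (this is where $\sigma_{\min}(M_{J,S_x})>0$, hence $|J|\ge s_x$, is used); in case~(c) the compatibility constant plays the same role, $\norm{M_Ju}_2\ge\kappa_{\min}(J,\vartheta_x)\norm{u_J}_1/\sqrt{|J|}$ for $u\in\mathcal{C}_1(J,\vartheta_x)$. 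Combining the two halves yields $\norm{\ell_0(x)}_1\lesssim(1+\vartheta)\sqrt{|J|}\,K_{J,x}\,\norm{\ell_0(x)}_2$ with $K_{J,x}$ as defined in the statement.

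\emph{Step 2: best-term truncation.} Let $\tilde\ell(x)$ be the vector obtained from $\ell_0(x)$ by retaining the $T:=|J|\,R_\ell(n)$ coordinates of largest magnitude and zeroing the rest. The classical best-$T$-term estimate (the $j$-th largest entry of a vector is at most its $L_1$-norm divided by $j$, so $\sum_{j>T}|v|_{(j)}^2\le\norm{v}_1^2/T$) gives $\norm{\ell_0(x)-\tilde\ell(x)}_2\le\norm{\ell_0(x)}_1/\sqrt{T}$, which by Step~1 is $O\big((1+\vartheta)K_{J,x}/\sqrt{R_\ell(n)}\big)\cdot\norm{\ell_0(x)}_2$. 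Picking any $R_\ell(n)\to\infty$ slowly enough that this bound is $\le 1/2$ for all large $n$ makes the truncation feasible in \eqref{popul_dual_approx}; consequently the minimizer in \eqref{popul_dual_approx} obeys $s_\ell(x)=\norm{\tilde\ell(x)}_0\le T=|J|R_\ell(n)$ and the corresponding $r_\ell$ may be taken of order $(1+\vartheta)K_{J,x}/\sqrt{R_\ell(n)}\to 0$, which is exactly the claim. Since $|J|$ is fixed and $R_\ell(n)$ grows arbitrarily slowly, $s_\ell(x)\ll\min\{n,d\}$, so \autoref{assump:sparse_dual} holds.

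\emph{Main obstacle.} The crux is the $L_1$-to-$L_2$ ratio bound of Step~1, specifically routing the lower bound on $\norm{\ell_0(x)}_2$ through $\sigma_{\min}(M_{J,S_x})$ (resp.\ the compatibility constant of $M_J$) rather than through the crude $\norm{\ell_0(x)}_2\le 2\norm{x}_2/\kappa_R^2$, which would not produce a ratio that sharpens as $|J|$ grows. Some bookkeeping is also needed to reconcile the per-column cone supports $J_k$ with the single index set $J$ entering $M_{J,S_x}$ and $M_{J,k}$, and to verify that $\sigma_{\min}(M_{J,S_x})>0$ (resp.\ $\kappa_{\min}(J,\vartheta_x)>0$) legitimately underwrites the pseudoinverse-type estimates. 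Everything after Step~1 is routine truncation analysis.
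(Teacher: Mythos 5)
Your proposal is correct and follows essentially the same route as the paper's proof: it exploits the closed form $\ell_0(x)=-2Mx$, the column-wise cone condition, the $\sigma_{\min}(M_{J,S_x})$ (resp.\ compatibility-constant) lower bound, and a best-$T$-term truncation with $T=|J|R_\ell(n)$. The only cosmetic difference is that the paper first establishes $\ell_0(x)\in\mathcal{C}_1\bigl(J,(1+\vartheta)K_{J,x}\bigr)$ and then invokes the standard $s$-term approximation theorem together with $\norm{[\ell_0(x)]_J}_1\le\sqrt{|J|}\,\norm{\ell_0(x)}_2$, whereas you merge the same two inequalities directly into an $L_1$-to-$L_2$ ratio bound before truncating, which yields the identical rate $r_\ell=O\bigl((1+\vartheta)K_{J,x}/\sqrt{R_\ell(n)}\bigr)$.
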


\begin{proof}[Proof of Lemma~\ref{lem:suff_sparse_dual}]
The proof of Lemma~\ref{lem:suff_sparse_dual} are inspired and modified from Lemma 4 in \cite{giessing2021inference}.

\noindent (a) Let $S_x=\mathrm{support}(x) \subset \{1,...,d\}$ for the fixed query vector $x\in \mathbb{R}^d$. Given that $\max_{1\leq k \leq d} \norm{M_k}_0 \leq m$ and $\ell_0(x) = -2\left[\mathrm{E}\left(RXX^T\right)\right]^{-1}x \in \mathbb{R}^d$, we compute that
	\begin{align*}
		\norm{\ell_0(x)}_0 = \norm{Mx}_0=\norm{\sum_{k=1}^d \bm{m}_kx_k}_0 = \norm{\sum_{k\in S_x} \bm{m}_k x_k}_0 \leq \sum_{k\in S_x} \norm{\bm{m}_k}_0 \leq m\cdot s_x. 
	\end{align*}
	Thus, $\ell_0(x)$ by itself is sparse and we can take $\tilde{\ell}(x)=\ell_0(x)$ in Assumption~\ref{assump:sparse_dual} with $s_{\ell}(x)=m\cdot s_x$ and $r_{\ell}=0$ for all $n\geq 1$.\\
	
	\noindent (b) Since each column of $M$ lies in $\mathcal{C}_1(J_k,\vartheta)$ for some $J_k\subset \{1,...,d\}$ with $|J_k|=|J|$ and $\vartheta\in [0,\infty)$, we know that $\norm{M_{S_x^c,j}}_1 \leq (1+\vartheta) \norm{M_{S_x,j}}_1$ for all $j\in S_x$. Moreover, $\sigma_{\min}(M_{J,S_x}) = \inf_{\mathrm{support}(u)=S_x} \frac{\norm{M_J u}_2}{\norm{u}_2} >0$ by our assumption. Now, we compute that
	\begin{align*}
		\norm{\left[\ell_0(x)\right]_{J^c}}_1 &=2\norm{\left[Mx\right]_{J^c}}_1\\
		&= 2\norm{\sum_{k=1}^d M_{J^c,k} \cdot x_k}_1\\
		&\leq 2\sum_{k\in S_x} \norm{M_{J^c,k}}_1 |x_k|\\
		&\leq 2 \sum_{k\in S_x} \left(\norm{M_{J^c\cap J_k^c,k}}_1 + \norm{M_{J^c\cap J_k,k}}_1 \right) |x_k|\\
		&\leq 2 \sum_{k\in S_x} \left(\norm{M_{J_k^c,k}}_1 + \norm{M_{J_k,k}}_1 \right) |x_k|\\
		&\leq 2(1+\vartheta) \sum_{k\in S_x} \norm{M_{J_k,k}}_1 |x_k|\\
		&\leq 2(1+\vartheta) \sqrt{s_x} \cdot\max_{k\in S_x} \norm{M_{J_k,k}}_1 \norm{x}_2\\
		&\leq  2(1+\vartheta) \sqrt{s_x} \cdot\max_{k\in S_x} \norm{M_{J_k,k}}_1 \norm{(Mx)_J}_2 \sup_{\mathrm{support}(u)=S_x} \frac{\norm{u}_2}{\norm{(M u)_J}_2}\\
		&\leq \frac{2(1+\vartheta) \sqrt{s_x}\max_{k\in S_x} \norm{M_{J,k}}_1}{\sigma_{\min}(M_{J,S_x})} \cdot \norm{(Mx)_J}_1\\
		&\equiv (1+\vartheta) K_{J,x}  \norm{\left[\ell_0(x)\right]_{S_x}}_1,
	\end{align*}
	showing that $\ell_0(x) \in \mathcal{C}_1\left(J, (1+\vartheta) K_{J,x}\right)$. By Theorem 2.5 in \cite{foucart2013mathematical}, there exists a $s_{\ell}(x)$-sparse approximation $\tilde{\ell}(x)\in \mathbb{R}^d$ to the population dual solution $\ell_0(x)$ such that
	$$\norm{\tilde{\ell}(x) -\ell_0(x)}_2 \leq \frac{1}{2\sqrt{s_{\ell}(x)}} \norm{\ell_0(x)}_1.$$
	Combining with $\ell_0(x) \in \mathcal{C}_1\left(J, (1+\vartheta) K_{J,x}\right)$, we have that
	\begin{align*}
		\norm{\tilde{\ell}(x) -\ell_0(x)}_2 &\leq \frac{1}{2\sqrt{s_{\ell}(x)}}\left[ \norm{\left[\ell_0(x)\right]_J}_1 + \norm{\left[\ell_0(x)\right]_{J^c}}_1 \right] \\
		&\leq \frac{1+(1+\vartheta) K_{J,x}}{2\sqrt{s_{\ell}(x)}}\cdot \norm{\left[\ell_0(x)\right]_J}_1\\
		&\leq \frac{1+(1+\vartheta) K_{J,x}}{2} \sqrt{\frac{|J|}{s_{\ell}(x)}} \cdot\norm{\ell_0(x)}_2.
	\end{align*}
	Thus, when $s_{\ell}(x)=|J|\cdot R_{\ell}(n)$, it follows that
	$$\norm{\tilde{\ell}(x) -\ell_0(x)}_2 \leq \frac{1+(1+\vartheta) K_{J,x}}{2\sqrt{R_{\ell}(n)}} \cdot \norm{\ell_0(x)}_2,$$
	so we take $r_{\ell}=O\left(\frac{(1+\vartheta) K(J,x)}{\sqrt{R_{\ell}(n)}} \right)$ to fulfill Assumption~\ref{assump:sparse_dual}.\\
	
	\noindent (c) Since $x\in \mathcal{C}_1(J,\vartheta_x)$, it follows that $\norm{x_{J^c}}_1 \leq \vartheta_x \norm{x_J}_1$. Now, we compute that
	\begin{align*}
		\norm{\left[\ell_0(x)\right]_{J^c}}_1 &=2\norm{\left(M x\right)_{J^c}}_1 \\
		&= 2\norm{\sum_{k=1}^d M_{J^c,k} x_k}_1\\
		&\leq 2\max_{1\leq k \leq d} \norm{M_{J^c,k}}_1 \norm{x}_1\\
		&\leq 2(1+\vartheta_x) \max_{1\leq k \leq d} \norm{M_{J^c,k}}_1 \norm{x_J}_1\\
		&\leq 2(1+\vartheta_x) \max_{1\leq k \leq d} \norm{M_{J^c,k}}_1 \norm{(Mx)_J}_2 \sup_{u\in \mathcal{C}_1(J,a_0)} \frac{\norm{u_J}_1}{\norm{(Mu)_J}_2}\\
		&\leq 2(1+\vartheta) (1+\vartheta_x) \max_{1\leq k \leq d} \norm{M_{J_k,k}}_1 \norm{(Mx)_J}_2 \sup_{u\in \mathcal{C}_1(J,\vartheta_x)} \frac{\norm{u_J}_1}{\norm{M_J u}_2}\\
		&= 2(1+\vartheta) (1+\vartheta_x) \max_{1\leq k \leq d} \frac{\sqrt{|J|}\cdot \norm{M_{J_k,k}}_1}{\kappa_{\min}(J,\vartheta_x)} \cdot \norm{\left(Mx\right)_J}_1\\
		&\equiv (1+\vartheta)K_{J,x} \norm{\left[\ell_0(x)\right]_J}_1,
	\end{align*}
	showing that $\ell_0(x) \in \mathcal{C}_1\left(J,(1+\vartheta)K_{J,x}\right)$. Therefore, as in the proof of Part (b), we conclude that Assumption~\ref{assump:sparse_dual} holds with $s_{\ell}(x)=|J|\cdot R_{\ell}(n)$ and $r_{\ell} =O\left(\frac{(1+\vartheta) K_{J,x}}{\sqrt{R_{\ell}(n)}}\right)$.
\end{proof}

\begin{remark}
	The constants $\kappa_{\min}(M_{S_x,S_x})$ and $\kappa_{\min}(J,\vartheta_x)$ in (b) and (c) of Lemma~\ref{lem:suff_sparse_dual} are well-known in the statistical literature. \cite{sun2012scaled,sun2013sparse} introduced the eigenvalue version of $\sigma_{\min}(M_{J,S_x}) >0$ to study the least-square and precision matrix estimations after the scaled lasso selection. The compatibility constant $\kappa_{\min}(J,\vartheta_x)$ appears even more commonly in the literature (see Section 6.2.2 in \citealt{buhlmann2011statistics}), given that it is also related to our restricted eigenvalue condition \eqref{res_eigen_cond} as:
	$$ \kappa_{\min}(J,a_0)=\inf\limits_{u\in \mathcal{C}_1(J,\vartheta_x)} \frac{\sqrt{|J|} \cdot \norm{M_Ju}_2}{\norm{u_J}_1} \geq \inf\limits_{u\in \mathcal{C}_1(J,\vartheta_x)} \frac{\norm{M_Ju}_2}{\norm{u}_2}.$$
\end{remark}

Now, we illustrate these sufficient conditions in Lemma~\ref{lem:suff_sparse_dual} through the linear regression model \eqref{linear_reg} under both MCAR and MAR data settings and connect the sparsity structure of $\left[\mathrm{E}(RXX^T) \right]^{-1} \in \mathbb{R}^{d\times d}$ with the well-studied covariance matrix $\Sigma \in \mathbb{R}^{d\times d}$ of the covariate vector $X\in \mathbb{R}^d$ in the statistical literature.

\begin{example}[Linear regression model under the MCAR mechanism]
\label{example:linear_reg_mcar}
Recall the high-dimensional sparse linear regression model \eqref{linear_reg} in Assumption~\ref{assump:basic} as:
$$Y=X^T\beta +\epsilon \quad \text{ with } \quad \mathrm{E}(\epsilon|X)=0 \quad \text{ and }\quad \mathrm{E}\left(\epsilon^2|X\right)=\sigma_{\epsilon}^2.$$	
Furthermore, we suppose, without loss of generality, that the covariate vector $X \in \mathbb{R}^d$ is centered in this example, \emph{i.e.}, $\mathrm{E}(X)=0$. In order to study the sparsity structure of $$\ell_0(x)=-2\left[\mathrm{E}(RXX^T) \right]^{-1} x :=-2Mx,$$ 
it suffices to exploit the (weak) sparsity of $M= \left[\mathrm{E}(RXX^T) \right]^{-1} \in \mathbb{R}^{d\times d}$ by Lemma~\ref{lem:suff_sparse_dual} when the query point $x\in \mathbb{R}^d$ is sparse or lie in a cone of dominant coordinates $\mathcal{C}_1\left(J,\vartheta_x \right)$. In particular, under the MCAR setting, we know that $R\ind (Y,X)$, and the propensity score satisfies $\pi(X,Y)=\mathrm{P}(R=1|X,Y) = \mathrm{P}(R=1)$ so that 
	$$M = \left[\mathrm{E}(RXX^T) \right]^{-1}= \frac{1}{\mathrm{P}(R=1)}\cdot \Sigma^{-1}$$ 
	will be sparse if and only if the precision matrix $\Sigma^{-1}$ is sparse. Examples of sparse precision matrices include the auto-regressive model of order $q\geq 1$ (whose precision matrix is banded; see, \emph{e.g.}, Section 9.2 in \citealt{lindsey2004statistical}) and solutions to the graphical Lasso model \citep{friedman2008sparse} or other $L_1$-minimization programs \citep{cai2016estimating}. In addition, as the sufficient conditions in Lemma~\ref{lem:suff_sparse_dual}(b,c) only require each column of $\Sigma^{-1}$ to lie in $\mathcal{C}_1(J,\vartheta)$ for some $J\subset \{1,...,d\}$, any moving-average process of order $q\geq 1$ (whose precision matrix is Toeplitz; recall $\Sigma^{\mathrm{ar}}$ in \autoref{subsec:sim_design}) and other short-range dependent processes \citep{rosenblatt2015short} will have their precision matrices fulfilling the requirements as well. 
\end{example}

\begin{example}[Linear regression model under the MAR mechanism]
\label{example:linear_reg_mar}
Under the same linear model setup in Example~\ref{example:linear_reg_mcar}, we consider the MAR setting $R\ind Y|X$, in which the propensity score becomes $\pi(X) = \mathrm{P}(R=1|X)$. We consider several scenarios that relate the sparsity structures of $M= \left[\mathrm{E}(RXX^T) \right]^{-1} \in \mathbb{R}^{d\times d}$ to the covariance matrix $\Sigma=\mathrm{E}(XX^T) \in \mathbb{R}^{d\times d}$.
	\begin{enumerate}
		\item Suppose that the propensity score $\pi(X)$ depends on at most $1\leq s_{\pi} \ll d$ covariates in $X\in \mathbb{R}^d$, \emph{i.e.}, without loss of generality,
	\begin{equation}
		\label{sparse_prop_score}
		\pi(x)=\mathrm{P}\left(R=1|X=(x_1,...,x_d)^T\right) = \pi(x_1,...,x_{s_{\pi}}).
	\end{equation}
	We further assume that the first few covariates $X^{(1)},...,X^{(d_1)}$ with $d_1 \geq s_{\pi}$ are independent of the rest covariates $X^{(d_1+1)},...,X^{(d_1+d_2)}$ in the covariate vector $X=\left(X^{(1)},...,X^{(d)}\right)^T \in \mathbb{R}^d$ with $d=d_1+d_2$. Then, the covariance matrix $\Sigma$ is block diagonal as $\Sigma = \begin{pmatrix}
			\Sigma_1 & \bm{0}\\
			\bm{0} & \Sigma_2
	\end{pmatrix}$, where $\Sigma_1\in \mathbb{R}^{d_1\times d_1}$ and $\Sigma_2\in \mathbb{R}^{d_2\times d_2}$. Under the sparsity assumption \eqref{sparse_prop_score} on the propensity score $\pi(X)$, we know that
	$$\mathrm{E}(RXX^T)=\mathrm{E}\left(\pi(X)XX^T\right)= \begin{pmatrix}
				\tilde{\Sigma}_1 & \bm{0}\\
				\bm{0} & \Sigma_2
			\end{pmatrix} \quad \text{ and } \quad M=\left[\mathrm{E}(RXX^T)\right]^{-1} = \begin{pmatrix}
				\tilde{\Sigma}_1^{-1} & \bm{0}\\
				\bm{0} & \Sigma_2^{-1}
			\end{pmatrix}.$$
	Thus, the sufficient condition (a) in Lemma~\ref{lem:suff_sparse_dual} holds.
			
		\item More generally, the complete-case gram matrix $\mathrm{E}(RXX^T)= \begin{pmatrix}
        \tilde{\Sigma}_{11} & \tilde{\Sigma}_{12}\\
        \tilde{\Sigma}_{21} & \tilde{\Sigma}_{22}
    \end{pmatrix}$ has its inverse as:
    \begin{align*}
    M &=\left[\mathrm{E}(RXX^T) \right]^{-1} \\
    &= \begin{pmatrix}
        \tilde{\Sigma}_{11}^{-1} + \tilde{\Sigma}_{11}^{-1} \tilde{\Sigma}_{12} \left(\tilde{\Sigma}_{22} - \tilde{\Sigma}_{21} \tilde{\Sigma}_{11}^{-1} \tilde{\Sigma}_{22} \right)^{-1} \tilde{\Sigma}_{21} \tilde{\Sigma}_{11}^{-1}  &  -\tilde{\Sigma}_{11}^{-1} \tilde{\Sigma}_{12} \left(\tilde{\Sigma}_{22} - \tilde{\Sigma}_{21} \tilde{\Sigma}_{11}^{-1} \tilde{\Sigma}_{22} \right)^{-1}\\
        -\left(\tilde{\Sigma}_{22} - \tilde{\Sigma}_{21} \tilde{\Sigma}_{11}^{-1} \tilde{\Sigma}_{22} \right)^{-1} \tilde{\Sigma}_{21} \tilde{\Sigma}_{11}^{-1} & \left(\tilde{\Sigma}_{22} - \tilde{\Sigma}_{21} \tilde{\Sigma}_{11}^{-1} \tilde{\Sigma}_{22} \right)^{-1}
    \end{pmatrix}.
    \end{align*}
    If the entries of $\tilde{\Sigma}_{21} = \tilde{\Sigma}_{12}^T$ are relatively small in their absolute values, which happens when, \emph{e.g.}, $\Sigma=\mathrm{E}(XX^T)$ is diagonally dominant (such as banded, Toeplitz, or other sparse covariance matrices; recall $\Sigma^{\mathrm{cs}}$ and $\Sigma^{\mathrm{ar}}$ in \autoref{subsec:sim_design}) together with \eqref{sparse_prop_score}, then the off-diagonal block matrices of $M=\left[\mathrm{E}(RXX^T) \right]^{-1}$ given by
    $$-\left(\tilde{\Sigma}_{22} - \tilde{\Sigma}_{21} \tilde{\Sigma}_{11}^{-1} \tilde{\Sigma}_{22} \right)^{-1} \tilde{\Sigma}_{21} \tilde{\Sigma}_{11}^{-1} = -\left[\tilde{\Sigma}_{11}^{-1} \tilde{\Sigma}_{12} \left(\tilde{\Sigma}_{22} - \tilde{\Sigma}_{21} \tilde{\Sigma}_{11}^{-1} \tilde{\Sigma}_{22} \right)^{-1} \right]^T$$
    are also small in their absolute values compared with the diagonal block matrices of $M$ and therefore, the sufficient conditions (b) and (c) in Lemma~\ref{lem:suff_sparse_dual} apply.
    \end{enumerate}
\end{example}

\section{Generalized Linear Model with Lasso Regularization for Propensity Score Estimation}
\label{app:prop_score_lasso}

Recall that the consistency and asymptotic normality of our proposed debiasing method holds as long as the propensity scores are consistently estimated on the sample covariate vectors $X_i,i=1,...,n$ under a mild rate of convergence \eqref{rate_required}. While it is not necessary to estimate the propensity scores via parametric models for our debiasing method (see \autoref{subsec:nonpar_prop} for our simulation results), the generalized linear model always serves as a baseline approach and can approximate many complex nonparametric models by including as many (well-designed) covariates as possible. As a result, the covariate vector is often high-dimensional with $d\gg n$ in practice, and we consider estimating the propensity score $\pi(x)=\mathrm{P}(R=1|X=x)$ under the MAR assumption (Assumption~\ref{assump:basic}) with the data $\{(X_i,R_i)\}_{i=1}^n \subset \mathbb{R}^d \times \{0,1\}$ by the Lasso-type generalized linear regression defined as:
\begin{align}
	\label{glm_lasso}
	\begin{split}
		\hat{\theta}
		&=\argmin_{\theta \in \mathbb{R}^d} \left\{-\frac{1}{n} \sum_{i=1}^n R_i \log\left[\frac{\varphi(X_i^T\theta)}{1-\varphi(X_i^T\theta)} \right] -\frac{1}{n}\sum_{i=1}^n \log\left[1-\varphi(X_i^T \theta) \right]  + \zeta\norm{\theta}_1\right\}\\
		&:= \argmin_{\theta \in \mathbb{R}^d} \left[\mathcal{L}_{\varphi}(\theta) + \zeta\norm{\theta}_1 \right],
	\end{split}
\end{align}
where $\mathcal{L}_{\varphi}:\mathbb{R}^d \to \mathbb{R}$ is the negative log-likelihood function associated with a known link function $\varphi:\mathbb{R}\to (0,1)$ and $\zeta >0$ is a regularization parameter. In particular, when the logistic link function $\varphi_{\mathrm{log}}(u) = \frac{1}{1+e^{-u}}$ is applied, the above regression problem \eqref{glm_lasso} reduces to the Lasso-type logistic regression as:
\begin{equation}
	\label{lr_lasso}
	\hat{\theta} = \argmin_{\theta \in \mathbb{R}^d} \left\{-\frac{1}{n}\sum_{i=1}^n \left[R_i X_i^T \theta - \log\left(1+e^{X_i^T \theta} \right) \right] + \zeta\norm{\theta}_1 \right\}.
\end{equation}

With the estimate $\hat{\theta}$ from the Lasso-type generalized linear regression \eqref{glm_lasso}, it is also feasible to derive an explicit formula of $r_{\pi}=r_{\pi}(n,\delta)$ in Assumption~\ref{assump:prop_consistent_rate} for the estimated propensity scores $\hat{\pi}_i,i=1,...,n$ under the following regularity conditions.

\begin{assumption}[Sparse generalized linear propensity score model]
	\label{assump:glm_prop_score}
	Under Assumption~\ref{assump:basic}, the missingness variable $R\in \{0,1\}$ is generated from the following model as:
	\begin{equation}
		\label{prop_score_model}
		R|X \sim \mathrm{Bernoulli}\left(\varphi(X^T \theta_0) \right),
	\end{equation}
	where $\varphi: \mathbb{R} \to (0,1)$ is a known link function and $\theta_0\in \mathbb{R}^d$ is a high-dimensional sparse vector with support $S_{\theta}$ and $|S_{\theta}|=\norm{\theta_0}_0=s_{\theta} \ll d$. 
\end{assumption}

\begin{assumption}[$\mathcal{C}_1(S_{\theta},3)$-restricted eigenvalue condition on $\mathrm{E}(RXX^T)$]
\label{assump:res_eigen_theta}
The $\mathcal{C}_1(S_{\theta},3)$-restricted minimum eigenvalue of the complete-case population gram matrix $\mathrm{E}\left(RXX^T \right) \in \mathbb{R}^{d\times d}$ is bounded away from zero, \emph{i.e.}, there exists an absolute constant $\rho>0$ such that
$$\mathrm{E}\left[R(X^Tv)^2\right] > \rho \norm{v}_2^2 \quad \text{ for any } \quad v\in \mathcal{C}_1(S_{\theta},3),$$
where $S_{\theta}$ is the support of the $s_{\theta}$-sparse vector $\theta_0 \in \mathbb{R}^d$.
\end{assumption}

\begin{assumption}[Regular link function]
	\label{assump:link_func}
	The link function $\varphi$ satisfies the following regularity conditions.
	\begin{enumerate}[label=(\alph*)]
		\item $\varphi:\mathbb{R} \to [0,1]$ is twice differentiable, nondecreasing, and $L_{\varphi}$-Lipschitz on $\mathbb{R}$ as well as concave on $\mathbb{R}^+$ for some absolute constant $L_{\varphi}>0$; also, for any $u\in \mathbb{R}$, it holds that $\varphi(x)+\varphi(-x)=1$.
		\item There exists an absolute constant $A_{\varphi}>0$ such that $\frac{\varphi'(u)}{1-\varphi(u)} \leq A_{\varphi}$ for all $u\geq 0$.
		
		\item For $\mathcal{L}_{\varphi}(\theta)$ defined in \eqref{glm_lasso}, there exists an absolute constant $A_H>1$ such that the Hessian matrix $\nabla\nabla \mathcal{L}_{\varphi}(\theta)$ can be expressed as $\nabla\nabla \mathcal{L}_{\varphi}(\theta) = \frac{1}{n} \sum_{i=1}^n H(\theta; R_i,X_i) X_iX_i^T$
		for some $H(\theta;R_i,X_i) > 0$ satisfying 
		$$\left|\log H(\theta + b; R_i,X_i) -\log H(\theta;R_i,X_i) \right| \leq A_H\left(|X_i^T\theta|^2 + |X_i^T b|^2 + |X_i^T b| \right)$$ 
		for any $\theta, b\in \mathbb{R}^d$ and $i=1,...,n$.
	\end{enumerate}
\end{assumption}

The sparse generalized linear regression setup in Assumption~\ref{assump:glm_prop_score} is commonly imposed in the statistical literature to handle high-dimensional data \citep{van2008high,huang2012estimation,cai2021statistical}. Assumption~\ref{assump:res_eigen_theta} is the restricted eigenvalue condition over the cone $\mathcal{C}_1(S_{\theta},3)$; recall \eqref{res_eigen_cond}. In Assumption~\ref{assump:link_func}, the constants $A_{\varphi},A_H>0$ only depend on the link function $\varphi$. A similar condition is also imposed in \cite{cai2021statistical} to establish the consistency of $\hat{\theta}$ in \eqref{glm_lasso} and derive the asymptotic normality of their debiased estimator for $\hat{\theta}$. One can verify by direct calculations and Example 8 in \cite{huang2012estimation} that the logistic link function $\varphi_{\mathrm{log}}(u) = \frac{1}{1+e^{-u}}$ satisfies Assumption~\ref{assump:link_func}. Furthermore, Section 3 of the supplement in \cite{cai2021statistical} also verifies Assumption~\ref{assump:link_func} for the probit link function $\varphi_{\mathrm{probit}}(u)=\frac{1}{\sqrt{2\pi}}\int_{-\infty}^u e^{-\frac{t^2}{2}} dt$, the class of generalized logistic functions, as well as the CDFs of Student's $t_{\nu}$-distributions with any $\nu \in \mathbb{N}$ in their latent variable model (Example 3 in their supplement).

\begin{remark}
	\label{link_func_bound}
	Assumption~\ref{assump:link_func}(b) implies that $\max\left\{\frac{\varphi'(u)}{\varphi(u)}, \frac{\varphi'(u)}{1-\varphi(u)} \right\} \leq A_{\varphi}$
	for the same constant $A_{\varphi}>0$. This is because if $u\geq 0$, then by Assumption~\ref{assump:link_func}(a), $\varphi(u) \geq \frac{1}{2}$ and consequently, 
		$$\frac{\varphi'(u)}{\varphi(u)} \leq \frac{\varphi'(u)}{1-\varphi(u)}\leq A_{\varphi}.$$
		On the other hand, if $u<0$, then Assumption~\ref{assump:link_func}(a) also shows that $\varphi(u) \leq \frac{1}{2}$ and therefore,
		$$\frac{\varphi'(u)}{1-\varphi(u)} \leq \frac{\varphi'(u)}{\varphi(u)} = \frac{\varphi'(-u)}{1-\varphi(-u)} \leq A_{\varphi},$$
		where we use the condition that $\varphi(x)+\varphi(-x)=1$ and $\varphi'(x)-\varphi'(-x)=0$ in Assumption~\ref{assump:link_func}(a).
  
	As for Assumption~\ref{assump:link_func}(c), some algebra show that
	$$H(\theta;R_i,X_i)= -\frac{R_i\left[\varphi''(X_i^T\theta) \varphi(X_i^T\theta) - \left(\varphi'(X_i^T\theta)\right)^2\right]}{\left[\varphi(X_i^T\theta)\right]^2} + \frac{(1-R_i)\left[\varphi''(X_i^T\theta) \left(1-\varphi(X_i^T\theta)\right) + \left(\varphi'(X_i^T\theta) \right)^2\right]}{\left[1-\varphi(X_i^T\theta)\right]^2}.$$
	Hence, a sufficient condition for $H(\theta;R_i,X_i)>0$ is that 
	\begin{equation}
		\label{link_func_convex_cond}
		\left[\varphi'(X_i^T\theta)\right]^2 > \max\Big\{\varphi''(X_i^T\theta) \varphi(X_i^T \theta), \varphi''(X_i^T\theta)\left[\varphi(X_i^T\theta)-1\right] \Big\},
	\end{equation}
    which is also equivalent to $\left[\varphi'(X_i^T\theta)\right]^2 > \varphi''(X_i^T\theta) \varphi(X_i^T\theta)$ for all $X_i^T\theta < 0$. This condition can be easily verified for the logistic link function $\varphi_{\mathrm{log}}(u) = \frac{1}{1+e^{-u}}$ and the probit link function $\varphi_{\mathrm{probit}}(u)=\frac{1}{\sqrt{2\pi}}\int_{-\infty}^u e^{-\frac{t^2}{2}} dt$.
\end{remark}

Under the above regularity conditions, we derive the explicit formula for $r_{\pi}=r_{\pi}(n,\delta)$ as follows.

\begin{proposition}
	\label{prop:rel_const_glm}
Let $\delta \in (0,1)$ and $A_1,A_2,A_3,A_4>0$ be some absolute constants. Suppose that Assumptions~\ref{assump:sub_gaussian}, \ref{assump:glm_prop_score}, \ref{assump:res_eigen_theta}, and \ref{assump:link_func} hold as well as
$$d \geq s_{\theta} + 2, \quad A_1\phi_{s_{\theta}} \log\left(\frac{\phi_{s_{\theta}}}{\rho}\right) \sqrt{\frac{\phi_1 s_{\theta}\log d}{n}} \leq \frac{\rho}{8}, \; \text{ and }\;   A_2\exp\left[-\frac{n\rho^2}{\phi_{s_{\theta}}^2 \log^2\left(\frac{\phi_{s_{\theta}}}{\rho}\right)} - \phi_1 s_{\theta}\log d\right] \leq \delta.$$
If $\zeta =A_3 A_{\varphi} \sqrt{\frac{\phi_1\log(d/\delta)}{n}}$, then the rate of consistency $r_{\pi}=r_{\pi}(n,\delta)$ in Assumption~\ref{assump:prop_consistent_rate} can be taken as: 
$$r_{\pi}(n,\delta) = \frac{A_4 L_{\varphi} A_{\varphi}}{\rho} \sqrt{\frac{\phi_1 \phi_{s_{\theta}} s_{\theta} \log(d/\delta)}{n}\left[\log\left(\frac{n}{\delta}\right) + s_{\theta} \log\left(\frac{e d}{s_{\theta}} \right) \right]}.$$
\end{proposition}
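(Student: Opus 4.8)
The plan is to split the argument into two stages. \textbf{Stage one} is a standard $\ell_1$/$\ell_2$ oracle inequality for the penalized M-estimator $\hat\theta$ in \eqref{glm_lasso}. \textbf{Stage two} converts this parameter-space error into the uniform bound on $\max_{1\le i\le n}\abs{\hat\pi_i-\pi_i}=\max_i\abs{\varphi(X_i^T\hat\theta)-\varphi(X_i^T\theta_0)}$ using the Lipschitz continuity of $\varphi$, the restricted-cone structure of $\hat\theta-\theta_0$, and sub-Gaussian maximal inequalities for the design.

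For stage one I would follow the usual programme for penalized GLMs. First, \emph{gradient control}: a short calculation gives $\nabla\mathcal{L}_\varphi(\theta_0)=\frac1n\sum_i\bigl[(1-R_i)\tfrac{\varphi'}{1-\varphi}-R_i\tfrac{\varphi'}{\varphi}\bigr](X_i^T\theta_0)\,X_i$, which is mean zero, has pre-factor bounded by $A_\varphi$ by Remark~\ref{link_func_bound}, and coordinates $X_{ik}$ that are sub-Gaussian with variance proxy $\lesssim\phi_1$ by Assumption~\ref{assump:sub_gaussian}; a Bernstein/sub-exponential tail bound and a union bound over $k$ then yield $\norm{\nabla\mathcal{L}_\varphi(\theta_0)}_\infty\lesssim A_\varphi\sqrt{\phi_1\log(d/\delta)/n}$ with probability $1-\delta/2$, which is exactly why $\zeta=A_3A_\varphi\sqrt{\phi_1\log(d/\delta)/n}$ is the right order. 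Second, \emph{restricted strong convexity}: writing $\nabla\nabla\mathcal{L}_\varphi(\theta)=\frac1n\sum_iH(\theta;R_i,X_i)X_iX_i^T$ with $H>0$, a second-order Taylor expansion at $\theta_0$ reduces matters to lower bounding $v^T\nabla\nabla\mathcal{L}_\varphi(\bar\theta)v$ for $\bar\theta$ on the segment and $v\in\mathcal C_1(S_\theta,3)$; dropping the nonnegative $R_i=0$ terms and using the log-Lipschitz bound in Assumption~\ref{assump:link_func}(c) to keep $H(\bar\theta;1,X_i)$ bounded below on a localized neighborhood of $\theta_0$ reduces this to a restricted-eigenvalue statement for $\frac1n\sum_iR_iX_iX_i^T$, which follows from Assumption~\ref{assump:res_eigen_theta} together with the concentration of the sample complete-case gram matrix to $\mathrm{E}(RXX^T)$ (Lemma~\ref{lem:gram_mat_conc}) — provided the localization radius is small enough, which is precisely what the two side conditions $A_1\phi_{s_\theta}\log(\phi_{s_\theta}/\rho)\sqrt{\phi_1s_\theta\log d/n}\le\rho/8$ and $A_2\exp[-n\rho^2/(\phi_{s_\theta}^2\log^2(\phi_{s_\theta}/\rho))-\phi_1s_\theta\log d]\le\delta$ secure. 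Combining these with the basic inequality yields, on an event of probability $1-\delta$, that $\hat\theta-\theta_0\in\mathcal C_1(S_\theta,3)$ with $\norm{\hat\theta-\theta_0}_2\lesssim\frac{A_\varphi}{\rho}\sqrt{\phi_1s_\theta\log(d/\delta)/n}$ and $\norm{\hat\theta-\theta_0}_1\lesssim\sqrt{s_\theta}\norm{\hat\theta-\theta_0}_2$.

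For stage two, since $\varphi$ is $L_\varphi$-Lipschitz by Assumption~\ref{assump:link_func}(a), $\max_i\abs{\hat\pi_i-\pi_i}\le L_\varphi\max_i\abs{X_i^T(\hat\theta-\theta_0)}$, so it suffices to bound $\max_i\abs{X_i^Tv}$ uniformly over $v$ in the slab $\{v\in\mathcal C_1(S_\theta,3):\norm v_2\le r_2\}$, with $r_2$ the $\ell_2$ rate above. Splitting $v=v_{S_\theta}+v_{S_\theta^c}$ and using the cone bound $\norm{v_{S_\theta^c}}_1\le3\norm{v_{S_\theta}}_1\le3\sqrt{s_\theta}\norm v_2$, one gets $\abs{X_i^Tv}\le\bigl(\norm{(X_i)_{S_\theta}}_2+3\sqrt{s_\theta}\norm{X_i}_\infty\bigr)r_2$. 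A covering argument over $\mathbb S^{s_\theta-1}$ with Assumption~\ref{assump:sub_gaussian} gives $\max_i\norm{(X_i)_{S_\theta}}_2\lesssim\sqrt{\phi_{s_\theta}(s_\theta+\log(n/\delta))}$, and a union bound over the $nd$ entries gives $\max_i\norm{X_i}_\infty\lesssim\sqrt{\phi_1\log(nd/\delta)}$, both on an event of probability $1-\delta$. Substituting $r_2$ and using $s_\theta+\log(n/\delta)\lesssim\log(n/\delta)+s_\theta\log(ed/s_\theta)$ together with $\phi_1\le\phi_{s_\theta}$ (which lets the $\norm{X_i}_\infty$ contribution be absorbed into the same expression in the regime $d\gg n$) produces exactly the claimed $r_\pi(n,\delta)$, after adjusting the absolute constants and replacing $\delta$ by $\delta$ divided by a constant.

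The step I expect to be the main obstacle is the restricted strong convexity of $\mathcal{L}_\varphi$: unlike the bounded-design case, the Hessian weights $H(\theta;R_i,X_i)$ can degenerate as $\abs{X_i^T\theta}$ grows, so one must simultaneously localize $\hat\theta$ to a neighborhood of $\theta_0$ small enough that the weights on the relevant $X_i$'s stay bounded below, transfer the population restricted eigenvalue of $\mathrm{E}(RXX^T)$ to its empirical counterpart on the cone, and track how $A_\varphi$, $L_\varphi$, and $\rho$ propagate through; it is the interplay of localization and empirical restricted eigenvalue that forces the somewhat unusual sample-size conditions in the hypothesis. The final bookkeeping that collapses the several $\log$ factors into the stated single expression is routine but must be carried out carefully enough that the $s_\theta\log(ed/s_\theta)$ term — rather than a cruder $s_\theta\log d$ or an $s_\theta^2$ — is what actually appears.
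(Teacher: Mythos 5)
Your overall architecture (gradient control plus restricted strong convexity for $\hat\theta$, then Lipschitz transfer plus a uniform bound on $\max_i\sup_{v}|X_i^Tv|$ over the cone) matches the paper's, and your gradient bound and the use of $\zeta\asymp A_\varphi\sqrt{\phi_1\log(d/\delta)/n}$ are exactly right. The genuine gap is in the restricted strong convexity step. You correctly flag that the Hessian weights $H(\bar\theta;R_i,X_i)$ can degenerate, but your proposed remedy --- localize $\hat\theta$ near $\theta_0$ so that ``the weights on the relevant $X_i$'s stay bounded below,'' and then invoke Lemma~\ref{lem:gram_mat_conc} for the empirical restricted eigenvalue of $\tfrac1n\sum_iR_iX_iX_i^T$ --- does not close the argument for an unbounded sub-Gaussian design. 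The degeneracy is driven by $\max_{1\le i\le n}|X_i^T\theta_0|$, not by the localization radius: even at $\theta=\theta_0$, the log-Lipschitz condition of Assumption~\ref{assump:link_func}(c) only gives $H(\theta_0;R_i,X_i)\ge H(0;R_i,X_i)\exp(-A_H|X_i^T\theta_0|^2)$, and since $\max_i|X_i^T\theta_0|$ is of order $\sqrt{\log n}$ the minimum weight over samples is polynomially small in $n$; multiplying a uniform weight lower bound into the empirical RE constant would therefore destroy the $\rho/8$ curvature and the claimed rate. The paper instead uses a truncation device (its Lemmas~\ref{lem:prop_score_const} and \ref{lem:prop_score_cont}): it keeps only samples with $|X_i^T\theta_0|\le T_2$ and $|X_i^Tv|\le T_1$, with $T_1,T_2\asymp\sqrt{\phi_{s_\theta}\log(\phi_{s_\theta}/\rho)}$, bounds the weights below by a constant on this event, shows via Cauchy--Schwarz and sub-Gaussian tails that the truncated population quadratic form still exceeds $\rho/2$, and then concentrates the truncated empirical process uniformly over the cone by bounded differences, symmetrization, and the contraction principle applied to the Lipschitz truncation $h_{T_1}$. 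This truncation step is precisely what produces the $\log(\phi_{s_\theta}/\rho)$ factors in the hypothesis of the proposition; without it, your reduction to Lemma~\ref{lem:gram_mat_conc} does not deliver the stated curvature.

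A secondary issue is your stage two. The paper bounds $\max_i\sup_{v\in\mathcal C_1(S_\theta,3)\cap B_d(0,1)}|X_i^Tv|$ directly, by covering the cone-ball intersection with $10\cdot\mathrm{conv}(M_{d,\theta})$, $|M_{d,\theta}|\le\tfrac32(5ed/s_\theta)^{s_\theta}$ (Lemma~\ref{lem:size_dom_cone}), noting the supremum of the convex map $v\mapsto|X_i^Tv|$ is attained at the vertices (Lemma~\ref{lem:max_biconvex}), and union bounding over $n\,|M_{d,\theta}|$ sub-Gaussian variables with proxy $\phi_{s_\theta}$; this gives exactly $\sqrt{\phi_{s_\theta}\left[\log(n/\delta)+s_\theta\log(ed/s_\theta)\right]}$. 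Your split $|X_i^Tv|\le\left(\|(X_i)_{S_\theta}\|_2+3\sqrt{s_\theta}\|X_i\|_\infty\right)\|v\|_2$ produces the extra term $\sqrt{s_\theta\,\phi_1\log(nd/\delta)}$, whose $s_\theta\log(n/\delta)$ component cannot in general be absorbed into $\phi_{s_\theta}[\log(n/\delta)+s_\theta\log(ed/s_\theta)]$ (take an isotropic design where $\phi_{s_\theta}\asymp\phi_1$): the absorption you invoke would require $s_\theta\phi_1\lesssim\phi_{s_\theta}$. So as written your stage two yields a bound weaker than the stated $r_\pi$ by up to a $\sqrt{s_\theta}$ factor in some regimes; replacing the split by the cone-covering argument fixes this.
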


The proof of Proposition~\ref{prop:rel_const_glm} is in \autoref{subapp:proof_prop_score}, which relies on the consistency of $\norm{\hat{\theta}-\theta_0}_2$ obtained from the Lasso-type generalized linear regression \eqref{glm_lasso} as in Lemma~\ref{lem:prop_score_const}. It suggests that the rate of convergence for $r_{\pi}$ is of the order $O_P\left(\frac{s_{\theta}\log d}{\sqrt{n}}\right)$, provided that all other model parameters $L_{\varphi}, A_{\varphi}, \rho, \phi_1,\phi_{s_{\theta}} >0$ are independent of $d$ and $n$ with $\delta = \frac{1}{d}$ and $d>n$.

\section{Additional Simulation Results}
\label{app:add_simulation}

We have carried out simulations in \autoref{sec:experiments} of the main paper that compares our proposed debiasing inference method with those standard methods in the literature under various data generating processes and the MAR settings. Here, we supplement the full records of those simulation results mentioned in the main paper. Additionally, we conduct comparative experiments under the simpler MCAR setting for completeness and interpretations of Example~\ref{example:linear_reg_mcar}.

\subsection{Full Comparative Simulation Results}
\label{subapp:full_sim_res}

Here is an outline for all the comparative simulation results in this section.
\begin{itemize}
	\item \autoref{fig:AR_MAR_gauss} displays the comparative plots in terms of the three evaluation metrics and normality test under the Toeplitz (or auto-regressive) covariance design with Gaussian $\mathcal{N}(0,1)$ noise on a selected simulation setting.
	
    \item \autoref{table:cirsym_MAR_res} presents the simulation results under the circulant symmetric covariance matrix $\Sigma^{\mathrm{cs}}$, Gaussian noise $\mathcal{N}(0,1)$, and the MAR setting \eqref{MAR_correct} in \autoref{subsec:sim_design}.

    \item \autoref{table:AR_MAR_res} presents the simulation results under the Toeplitz (or auto-regressive) covariance matrix $\Sigma^{\mathrm{ar}}$, Gaussian noise $\mathcal{N}(0,1)$, and the MAR setting \eqref{MAR_correct}.

    \item \autoref{table:cirsym_MAR_laperr_res} presents the simulation results under the circulant symmetric covariance matrix $\Sigma^{\mathrm{cs}}$, $\mathrm{Laplace}\left(0, \frac{1}{\sqrt{2}}\right)$ distributed noise, and the MAR setting \eqref{MAR_correct}.

    \item \autoref{table:cirsym_MAR_terr_res} presents the simulation results under the circulant symmetric covariance matrix $\Sigma^{\mathrm{cs}}$, $t_2$ distributed noise, and the MAR setting \eqref{MAR_correct}.

    \item \autoref{table:AR_MAR_laperr_res} presents the simulation results under the Toeplitz (or auto-regressive) covariance matrix $\Sigma^{\mathrm{ar}}$, $\mathrm{Laplace}\left(0, \frac{1}{\sqrt{2}}\right)$ distributed noise, and the MAR setting \eqref{MAR_correct}.

    \item \autoref{table:AR_MAR_terr_res} presents the simulation results under the Toeplitz (or auto-regressive) covariance matrix $\Sigma^{\mathrm{ar}}$, $t_2$ distributed noise, and the MAR setting \eqref{MAR_correct}.
    
    \item \autoref{fig:Cir_Nonpar_Prop_Score1} displays the simulation results for our debiasing method with dense $\beta_0^{de}$ and sparse $x^{(2)}$ when the propensity scores are estimated by various machine learning methods under the new MAR setting \eqref{MAR_mis}.

    \item \autoref{table:CirSym_MAR_Non_Prop_Score} presents the simulation results for our proposed debiasing method when the propensity scores are estimated by various machine learning methods under the circulant symmetric covariance matrix $\Sigma^{\mathrm{cs}}$, Gaussian noises $\mathcal{N}(0,1)$, and the new MAR setting \eqref{MAR_mis} in \autoref{subsec:nonpar_prop}.
    
    \item \autoref{fig:Cir_Nonpar_Prop_qqplot} supplements the ``QQ-plots'' corresponding to the asymptotic normality plots for our proposed debiasing method in \autoref{fig:Cir_Nonpar_Prop_Score1} and \autoref{fig:Cir_Nonpar_Prop_Score2} when the propensity scores are estimated by various machine learning methods under the circulant symmetric covariance matrix $\Sigma^{\mathrm{cs}}$, Gaussian noises $\mathcal{N}(0,1)$, and the new MAR setting \eqref{MAR_mis}. 
\end{itemize}


\begin{figure}[!ht]
	\captionsetup[subfigure]{justification=centering}
	\begin{subfigure}[t]{0.49\linewidth}
		\centering
		\includegraphics[width=1\linewidth]{./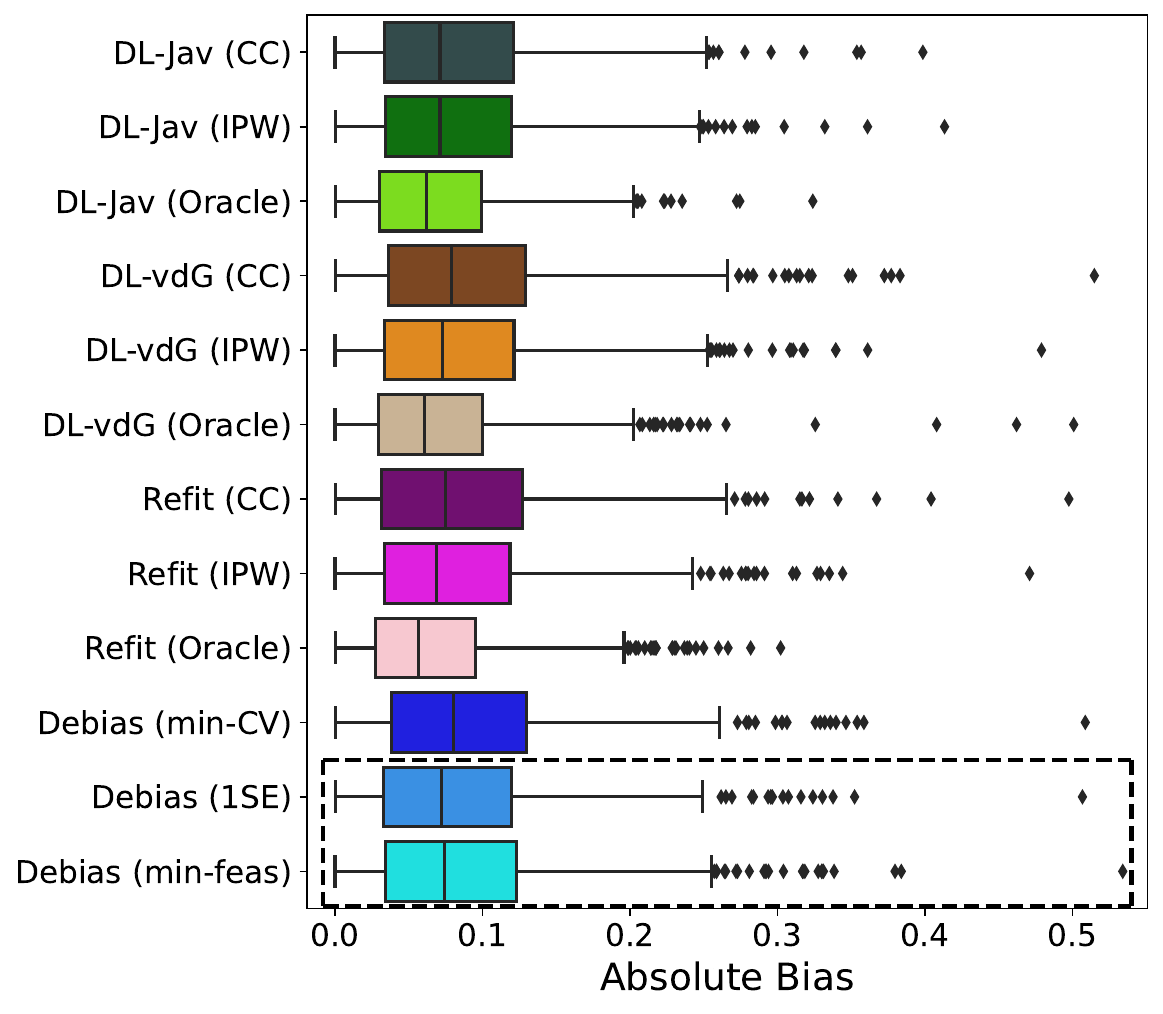}
	\end{subfigure}
	\hfil
	\begin{subfigure}[t]{0.49\linewidth}
		\centering		\includegraphics[width=1\linewidth]{./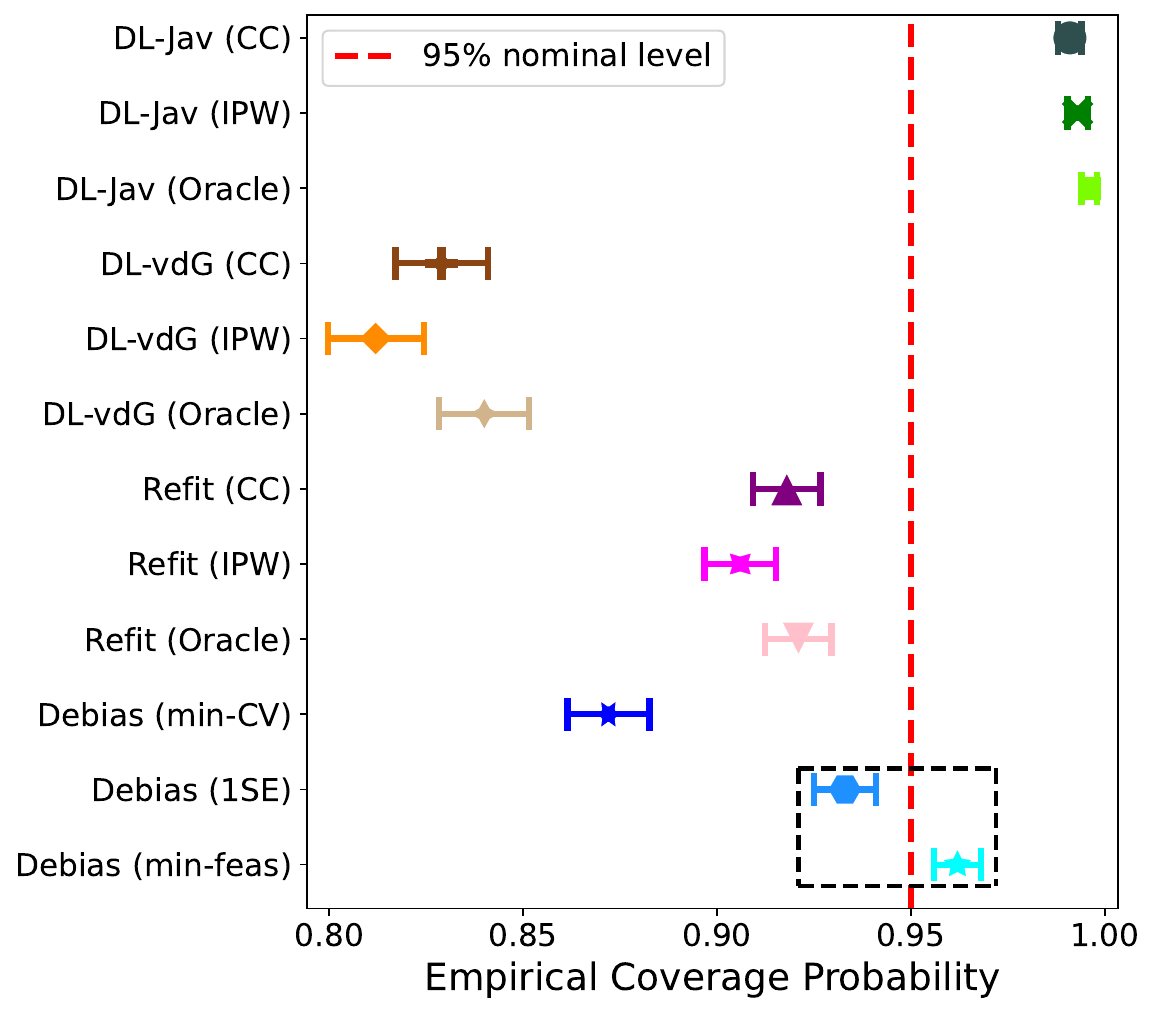}
	\end{subfigure}
	\begin{subfigure}[c]{0.485\linewidth}
		\centering
		\includegraphics[width=1\linewidth]{./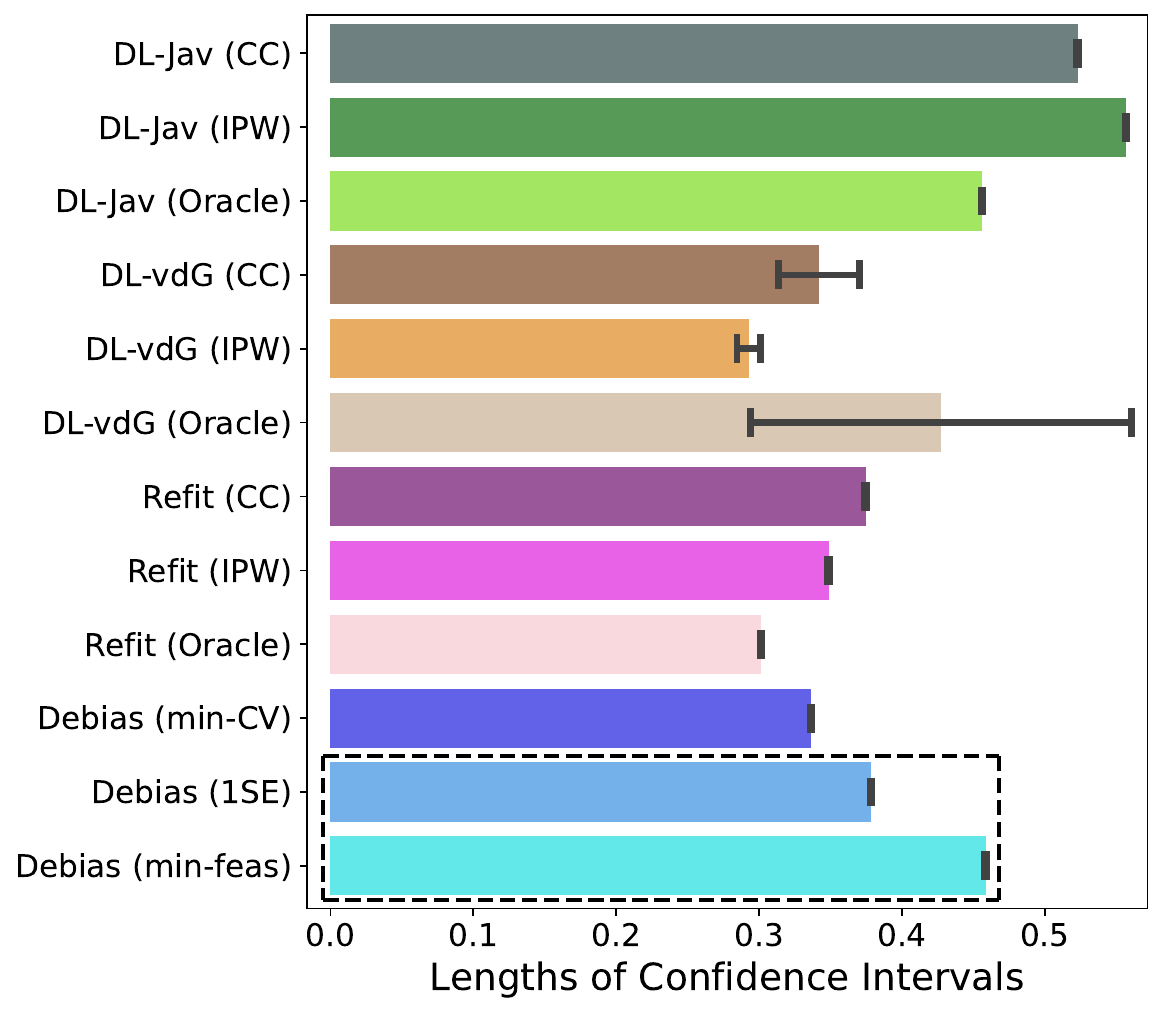}
	\end{subfigure}
	\hfil
	\begin{subfigure}[c]{0.495\linewidth}
		\centering		\includegraphics[width=1\linewidth]{./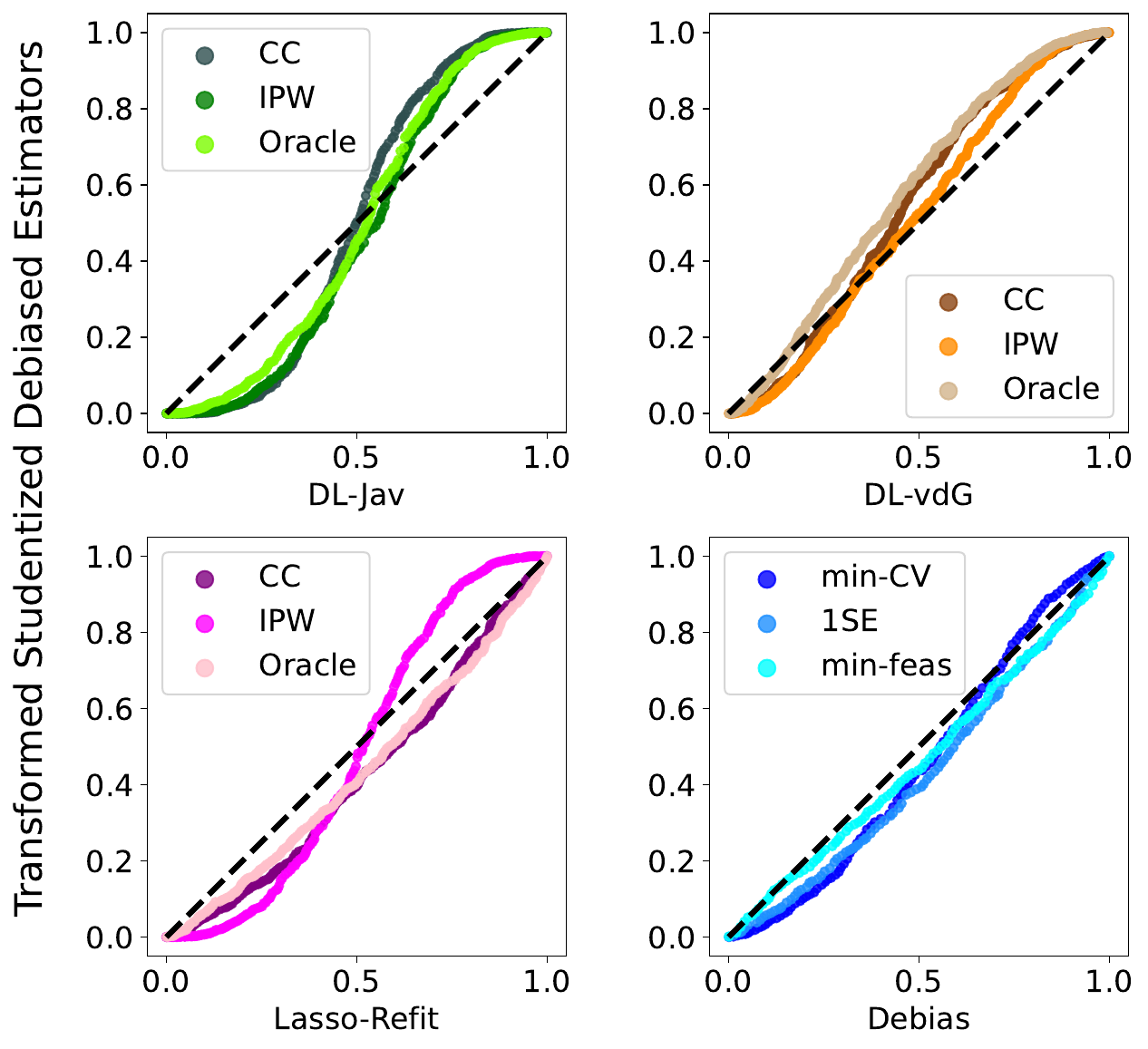}
	\end{subfigure}
	\caption{Simulation results with pseudo-dense $\beta_0^{pd}$ and sparse $x^{(2)}$ for the Toeplitz (or auto-regressive) covariance matrix $\Sigma^{\mathrm{ar}}$ under the Gaussian noise $\mathcal{N}(0,1)$ and the MAR setting \eqref{MAR_correct} in \autoref{subsec:sim_design}. {\bf Top Left:} Boxplots of the absolute bias. {\bf Top Right:} Coverage probabilities with standard error bars. {\bf Bottom Left:} Average lengths of confidence intervals with standard error bars. {\bf Bottom Right:} Four comparative ``QQ-plots'' of the transformed studentized debiased estimators $\Phi\left(\sqrt{n} \cdot\hat{\sigma}_n^{-1}(x) \left[\hat{m}(x) -m_0(x)\right] \right)$ obtained from different debiasing methods, where $\hat{\sigma}_n(x)^2$ is the estimated (asymptotic) variance of $\hat{m}(x)$ by each method and $\Phi(\cdot)$ is the CDF of $\mathcal{N}(0,1)$. We also highlight the recommended rules ``1SE'' and ``min-feas'' for our proposed debiasing method via dashed rectangles in the first three panels.}
	\label{fig:AR_MAR_gauss}
\end{figure}

\begin{figure}[h]
	\captionsetup[subfigure]{justification=centering}
	\begin{subfigure}[c]{0.995\linewidth}
		\centering		\includegraphics[width=1\linewidth]{./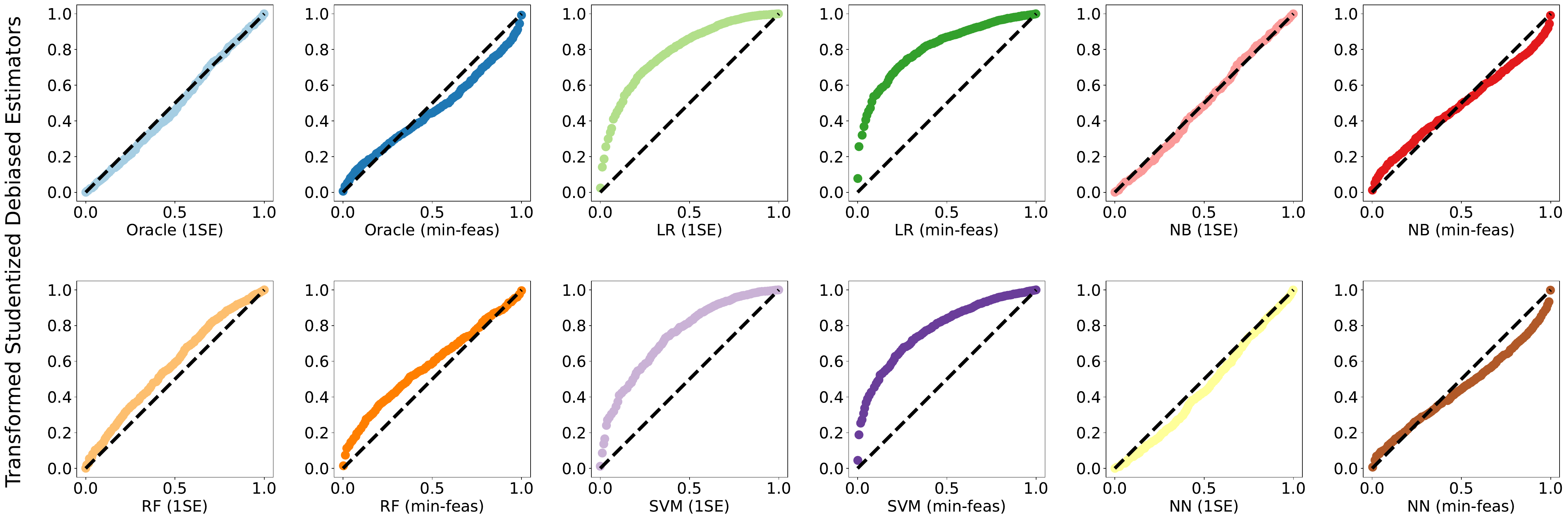}
		\caption{Simulation results with dense $\beta_0^{de}$ and sparse $x^{(2)}$.}
	\end{subfigure}
    \begin{subfigure}[c]{0.995\linewidth}
    	\centering		\includegraphics[width=1\linewidth]{./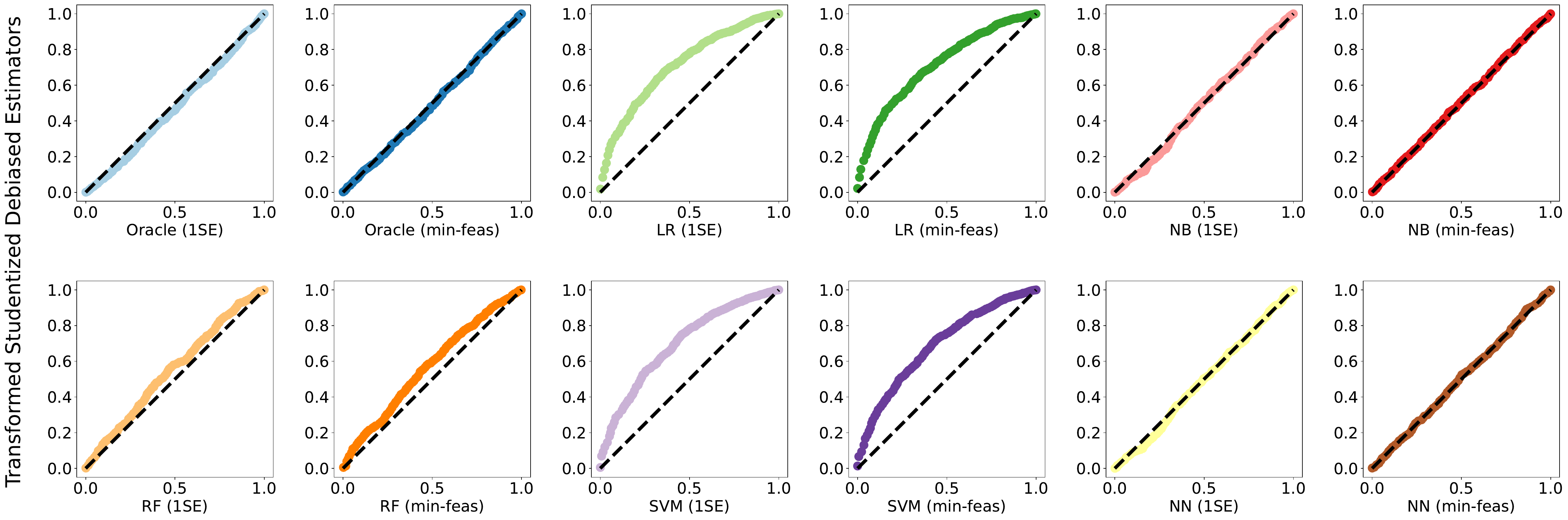}
    	\caption{Simulation results with sparse $\beta_0^{sp}$ and dense $x^{(4)}$.}
    \end{subfigure}
	\caption{``QQ-plots'' of the transformed studentized debiased estimators $\Phi\left(\sqrt{n} \cdot\hat{\sigma}_n^{-1}(x) \left[\hat{m}(x) -m_0(x)\right] \right)$ obtained from our debiasing method when the propensity scores are estimated by various machine learning methods under the new MAR setting \eqref{MAR_mis}, where $\Phi(\cdot)$ is the CDF of $\mathcal{N}(0,1)$.}
	\label{fig:Cir_Nonpar_Prop_qqplot}
\end{figure}

\begin{table}[hbtp!]
	\begin{center}
		\resizebox{\textwidth}{!}{	
\begin{tabular}{cccccccccccccccc}
	\hline \hline 
	& \begin{tabular}{@{}c@{}} DL-Jav \\ (CC)\end{tabular}  & \begin{tabular}{@{}c@{}} DL-Jav \\ (IPW)\end{tabular} & \begin{tabular}{@{}c@{}} DL-Jav \\ (Oracle)\end{tabular} 
	& \begin{tabular}{@{}c@{}} DL-vdG \\ (CC)\end{tabular}  & \begin{tabular}{@{}c@{}} DL-vdG \\ (IPW)\end{tabular} & \begin{tabular}{@{}c@{}} DL-vdG \\ (Oracle)\end{tabular}
	& \begin{tabular}{@{}c@{}} R-Proj \\ (CC)\end{tabular}  & \begin{tabular}{@{}c@{}} R-Proj \\ (IPW)\end{tabular}  & \begin{tabular}{@{}c@{}} R-Proj \\ (Oracle)\end{tabular} 
	& \begin{tabular}{@{}c@{}} Refit \\ (CC)\end{tabular} & \begin{tabular}{@{}c@{}} Refit \\ (IPW)\end{tabular} & \begin{tabular}{@{}c@{}} Refit \\ (Oracle)\end{tabular} 
	& \begin{tabular}{@{}c@{}} Debias \\ (min-CV)\end{tabular} & \begin{tabular}{@{}c@{}} Debias \\ (1SE)\end{tabular} & \begin{tabular}{@{}c@{}} Debias \\ (min-feas)\end{tabular} \\ \cline{2-16}
	& \multicolumn{15}{c}{\bf $d=1000$, $n = 900$, sparse $\beta_0^{sp}$, and $x^{(1)}$ so that $m(x^{(1)})=x^{(1)T}\beta_0^{sp} = \beta_{0,1}^{(a)}$} \\ \cline{2-16}
	Avg-Bias & 0.033 & 0.035 & 0.028 & 0.182 & 0.059 & 0.187 & 0.249 & 0.198 & 0.315 & 0.032 & 0.038 & 0.028 & 0.042 & 0.039 & 0.038 \\
	Avg-Cov & 0.966 & 0.967 & 0.962 & 0.895 & 0.883 & 0.904 & 0.962 & 0.958 & 0.940 & 0.946 & 0.889 & 0.948 & 0.852 & 0.898 & 0.914\\
	Avg-Len & 0.176 & 0.184 & 0.145 & 0.222 & 0.186 & 0.169 & 0.369 & 0.305 & 0.638 & 0.162 & 0.156 & 0.133 & 0.151 & 0.161 & 0.168\\
	\cline{2-16}
				
	& \multicolumn{15}{c}{\bf $d=1000$, $n = 900$, dense $\beta_0^{de}$, and $x^{(1)}$ so that $m(x^{(1)})=x^{(1)T}\beta_0^{de} = \beta_{0,1}^{(b)}$} \\ \cline{2-16}
	Avg-Bias & 0.144 & 0.150 & 0.108 & 0.186 & 0.162 & 0.442 & 0.098 & 0.104 & 0.207 & 0.175 & 0.187 & 0.139 & 0.224 & 0.191 & 0.172 \\
	Avg-Cov & 0.994 & 0.989 & 0.999 & 0.896 & 0.857 & 0.895 & 1.00 & 0.998 & 0.996 & 0.925 & 0.835 & 0.916 & 0.859 & 0.931 & 0.969 \\
	Avg-Len & 0.980 & 0.993 & 0.844 & 0.759 & 0.595 & 0.609 & 1.253 & 1.042 & 1.784 & 0.789 & 0.663 & 0.608 & 0.819 & 0.871 & 0.908 \\
	\cline{2-16}
				
	& \multicolumn{15}{c}{\bf $d=1000$, $n = 900$, pseudo-dense $\beta_0^{pd}$, and $x^{(1)}$ so that $m(x^{(1)})=x^{(1)T}\beta_0^{pd} = \beta_{0,1}^{(c)}$} \\ \cline{2-16}
	Avg-Bias & 0.044 & 0.048 & 0.035 & 0.063 & 0.051 & 0.088 & 0.068 & 0.057 & 0.088 & 0.048 & 0.050 & 0.037 & 0.066 & 0.059 & 0.055 \\
	Avg-Cov & 0.985 & 0.982 & 0.996 & 0.884 & 0.857 & 0.901 & 0.987 & 0.983 & 0.985 & 0.937 & 0.888 & 0.939 & 0.857 & 0.913 & 0.949\\
	Avg-Len & 0.284 & 0.291 & 0.237 & 0.229 & 0.190 & 0.359 & 0.404 & 0.351 & 0.527 & 0.222 & 0.199 & 0.171 & 0.241 & 0.256 & 0.267 \\
	\hline
				
	& \multicolumn{15}{c}{\bf $d=1000$, $n = 900$, sparse $\beta_0^{sp}$, and $x^{(2)}$ so that $m(x^{(2)})=x^{(2)T}\beta_0^{sp}$} \\ \cline{2-16}
	Avg-Bias & 0.042 & 0.042 & 0.037 & 0.143 & 0.126 & 0.109 & NA & NA & NA & 0.036 & 0.043 & 0.030 & 0.039 & 0.039 & 0.042 \\
	Avg-Cov & 0.982 & 0.995 & 0.988 & 0.810 & 0.847 & 0.828 & NA & NA & NA & 0.956 & 0.892 & 0.955 & 0.929 & 0.937 & 0.938\\
	Avg-Len & 0.263 & 0.312 & 0.236 & 0.232 & 0.247 & 0.187 & NA & NA & NA & 0.175 & 0.170 & 0.144 & 0.174 & 0.183 & 0.201\\
	\cline{2-16}
				
	& \multicolumn{15}{c}{\bf $d=1000$, $n = 900$, dense $\beta_0^{de}$, and $x^{(2)}$ so that $m(x^{(2)})=x^{(2)T}\beta_0^{de}$} \\ \cline{2-16}
	Avg-Bias & 0.170 & 0.184 & 0.132 & 0.352 & 0.226 & 0.241 & NA & NA & NA & 0.245 & 0.280 & 0.179 & 0.217 & 0.207 & 0.183\\
	Avg-Cov & 1.00 & 1.00 & 1.00 & 0.748 & 0.823 & 0.808 & NA & NA & NA & 0.870 & 0.719 & 0.896 & 0.921 & 0.940 & 0.990 \\
	Avg-Len & 1.659 & 1.853 & 1.514 & 1.011 & 0.791 & 0.763 & NA & NA & NA & 0.931 & 0.787 & 0.725 & 0.939 & 0.989 & 1.089\\
	\cline{2-16}
				
	& \multicolumn{15}{c}{\bf $d=1000$, $n = 900$, pseudo-dense $\beta_0^{pd}$, and $x^{(2)}$ so that $m(x^{(2)})=x^{(2)T}\beta_0^{pd}$} \\ \cline{2-16}
	Avg-Bias & 0.054 & 0.056 & 0.046 & 0.111 & 0.078 & 0.151 & NA & NA & NA & 0.059 & 0.065 & 0.046 & 0.067 & 0.065 & 0.060 \\
	Avg-Cov & 0.999 & 1.00 & 1.00 & 0.739 & 0.817 & 0.799 & NA & NA & NA & 0.920 & 0.858 & 0.917 & 0.904 & 0.927 & 0.965\\
	Avg-Len & 0.467 & 0.530 & 0.414 & 0.305 & 0.252 & 0.238 & NA & NA & NA & 0.266 & 0.239 & 0.205 & 0.277 & 0.291 & 0.321\\
	\hline
				
	& \multicolumn{15}{c}{\bf $d=1000$, $n = 900$, sparse $\beta_0^{sp}$, and $x^{(3)}$ so that $m(x^{(3)})=x^{(3)T}\beta_0^{sp} = \beta_{0,100}^{(a)}$} \\ \cline{2-16}
	Avg-Bias & 0.025 & 0.025 & 0.022 & 1.277 & 0.695 & 0.377 & 0.740 & 0.556 & 0.739 & 0.00 & 0.001 & 0.00 & 0.031 & 0.033 & 0.035 \\
	Avg-Cov & 0.992 & 0.994 & 0.989 & 0.938 & 0.926 & 0.921 & 0.960 & 0.963 & 0.937 & 1.00 & 0.993 & 1.00 & 0.949 & 0.944 & 0.944\\
	Avg-Len & 0.176 & 0.185 & 0.145 & 9.836 & 10.251 & 0.772 & 6.590 & 1.483 & 1.177 & 0.00 & 0.001 & 0.00 & 0.151 & 0.161 & 0.168\\
	\cline{2-16}
				
	& \multicolumn{15}{c}{\bf $d=1000$, $n = 900$, dense $\beta_0^{de}$, and $x^{(3)}$ so that $m(x^{(3)})=x^{(3)T}\beta_0^{de} = \beta_{0,100}^{(b)}$} \\ \cline{2-16}
	Avg-Bias & 0.103 & 0.104 & 0.090 & 0.487 & 0.378 & 0.211 & 0.093 & 0.096 & 0.084 & 0.188 & 0.192 & 0.192 & 0.191 & 0.174 & 0.161\\
	Avg-Cov & 0.998 & 0.998 & 0.999 & 0.910 & 0.890 & 0.904 & 1.00 & 0.997 & 0.997 & 0.001 & 0.056 & 0.007 & 0.901 & 0.948 & 0.972\\
	Avg-Len & 0.981 & 0.994 & 0.844 & 2.226 & 0.852 & 0.598 & 1.256 & 1.073 & 1.728 & 0.013 & 0.057 & 0.025 & 0.819 & 0.870 & 0.910 \\
	\cline{2-16}
				
	& \multicolumn{15}{c}{\bf $d=1000$, $n = 900$, pseudo-dense $\beta_0^{pd}$, and $x^{(3)}$ so that $m(x^{(3)})=x^{(3)T}\beta_0^{pd} = \beta_{0,100}^{(c)}$} \\ \cline{2-16}
	Avg-Bias & 0.031 & 0.031 & 0.028 & 0.218 & 0.579 & 0.344 & 0.122 & 0.561 & 0.316 & 0.039 & 0.041 & 0.040 & 0.053 & 0.052 & 0.051 \\
	Avg-Cov & 0.995 & 0.996 & 0.998 & 0.935 & 0.915 & 0.924 & 0.996 & 0.987 & 0.987 & 0.00 & 0.016 & 0.00 & 0.929 & 0.948 & 0.966 \\
	Avg-Len & 0.284 & 0.291 & 0.237 & 1.303 & 6.825 & 56.095 & 1.567 & 1.857 & 28.765 & 0.001 & 0.007 & 0.002 & 0.241 & 0.256 & 0.268 \\
	\hline
				
	& \multicolumn{15}{c}{\bf $d=1000$, $n = 900$, sparse $\beta_0^{sp}$, and $x^{(4)}$ so that $m(x^{(4)})=x^{(4)T}\beta_0^{sp}$} \\ \cline{2-16}
	Avg-Bias & 0.033 & 0.035 & 0.029 & 0.072 & 0.088 & 0.196 & NA & NA & NA & 0.031 & 0.037 & 0.025 & 0.034 & 0.034 & 0.034 \\
	Avg-Cov & 0.977 & 0.991 & 0.981 & 0.857 & 0.875 & 0.876 & NA & NA & NA & 0.955 & 0.906 & 0.963 & 0.918 & 0.936 & 0.945 \\
	Avg-Len & 0.197 & 0.234 & 0.180 & 0.187 & 0.200 & 0.180 & NA & NA & NA & 0.162 & 0.156 & 0.133 & 0.152 & 0.162 & 0.168\\
	\cline{2-16}
				
	& \multicolumn{15}{c}{\bf $d=1000$, $n = 900$, dense $\beta_0^{de}$, and $x^{(4)}$ so that $m(x^{(4)})=x^{(4)T}\beta_0^{de}$} \\ \cline{2-16}
	Avg-Bias & 0.143 & 0.152 & 0.109 & 0.235 & 0.169 & 0.335 & NA & NA & NA & 0.184 & 0.220 & 0.147 & 0.198 & 0.177 & 0.165 \\
	Avg-Cov & 1.00 & 0.998 & 1.00 & 0.849 & 0.865 & 0.872 & NA & NA & NA & 0.905 & 0.783 & 0.893 & 0.900 & 0.939 & 0.968\\
	Avg-Len & 1.212 & 1.350 & 1.122 & 0.817 & 0.642 & 0.637 & NA & NA & NA & 0.790 & 0.664 & 0.608 & 0.821 & 0.875 & 0.910 \\
	\cline{2-16}
				
	& \multicolumn{15}{c}{\bf $d=1000$, $n = 900$, pseudo-dense $\beta_0^{pd}$, and $x^{(4)}$ so that $m(x^{(4)})=x^{(4)T}\beta_0^{pd}$} \\ \cline{2-16}
	Avg-Bias & 0.045 & 0.047 & 0.036 & 0.076 & 0.056 & 0.108 & NA & NA & NA & 0.048 & 0.052 & 0.037 & 0.063 & 0.057 & 0.055\\
	Avg-Cov & 0.997 & 0.998 & 0.999 & 0.829 & 0.859 & 0.857 & NA & NA & NA & 0.928 & 0.874 & 0.934 & 0.880 & 0.927 & 0.949 \\
	Avg-Len & 0.343 & 0.389 & 0.309 & 0.246 & 0.205 & 0.195 & NA & NA & NA & 0.222 & 0.199 & 0.171 & 0.242 & 0.258 & 0.268\\
				
		\hline\hline
\end{tabular}}
	\caption{Simulation results under the circulant symmetric covariance matrix $\Sigma^{\mathrm{cs}}$, Gaussian noise $\mathcal{N}(0,1)$, and the MAR setting \eqref{MAR_correct}. 
	Nearly all the standard errors for the above average quantities are less than 0.01 except for those associated with ``DL-vdG'' or ``R-Proj'' and thus omitted.}
		\label{table:cirsym_MAR_res}
	\end{center}
\end{table}

\begin{table}[hbtp!]
	\begin{center}
	\resizebox{\textwidth}{!}{	
	\begin{tabular}{cccccccccccccccc}
	\hline \hline 
	& \begin{tabular}{@{}c@{}} DL-Jav \\ (CC)\end{tabular}  & \begin{tabular}{@{}c@{}} DL-Jav \\ (IPW)\end{tabular} & \begin{tabular}{@{}c@{}} DL-Jav \\ (Oracle)\end{tabular} 
	& \begin{tabular}{@{}c@{}} DL-vdG \\ (CC)\end{tabular} &
	\begin{tabular}{@{}c@{}} DL-vdG \\ (IPW)\end{tabular} &
	\begin{tabular}{@{}c@{}} DL-vdG \\ (Oracle)\end{tabular}
	& \begin{tabular}{@{}c@{}} R-Proj \\ (CC)\end{tabular}  & \begin{tabular}{@{}c@{}} R-Proj \\ (IPW)\end{tabular}  & \begin{tabular}{@{}c@{}} R-Proj \\ (Oracle)\end{tabular} 
	& \begin{tabular}{@{}c@{}} Refit \\ (CC)\end{tabular} & \begin{tabular}{@{}c@{}} Refit \\ (IPW)\end{tabular} & \begin{tabular}{@{}c@{}} Refit \\ (Oracle)\end{tabular} 
	& \begin{tabular}{@{}c@{}} Debias \\ (min-CV)\end{tabular} & \begin{tabular}{@{}c@{}} Debias \\ (1SE)\end{tabular} & \begin{tabular}{@{}c@{}} Debias \\ (min-feas)\end{tabular} \\ \cline{2-16}
	& \multicolumn{15}{c}{\bf $d=1000$, $n = 900$, sparse $\beta_0^{sp}$, and $x^{(1)}$ so that $m(x^{(1)})=x^{(1)T}\beta_0^{sp} = \beta_{0,1}^{(a)}$} \\ \cline{2-16}
	Avg-Bias & 0.071 & 0.072 & 0.062 & 0.636 & 0.641 & 0.565 & 0.579 & 0.481 & 0.608 & 0.073 & 0.074 & 0.060 & 0.074 & 0.075 & 0.089 \\
	Avg-Cov & 0.888 & 0.889 & 0.884 & 0.919 & 0.904 & 0.915 & 0.953 & 0.950 & 0.950 & 0.946 & 0.944 & 0.956 & 0.889 & 0.913 & 0.939 \\
	Avg-Len & 0.279 & 0.281 & 0.248 & 0.749 & 1.283 & 0.610 & 1.196 & 1.539 & 1.917 & 0.354 & 0.352 & 0.301 & 0.296 & 0.326 & 0.413\\
	\cline{2-16}
				
	& \multicolumn{15}{c}{\bf $d=1000$, $n = 900$, dense $\beta_0^{de}$, and $x^{(1)}$ so that $m(x^{(1)})=x^{(1)T}\beta_0^{de} = \beta_{0,1}^{(b)}$} \\ \cline{2-16}
	Avg-Bias & 0.284 & 0.291 & 0.157 & 0.295 & 0.296 & 0.567 & 0.297 & 0.219 & 0.372 & 0.428 & 0.257 & 0.178 & 0.468 & 0.405 & 0.358 \\
	Avg-Cov & 0.747 & 0.748 & 0.913 & 0.540 & 0.400 & 0.536 & 0.943 & 0.800 & 0.945 & 0.781 & 0.748 & 0.824 & 0.845 & 0.915 & 0.987 \\
	Avg-Len & 0.828 & 0.843 & 0.673 & 0.550 & 0.329 & 0.536 & 1.008 & 0.617 & 0.833 & 1.157 & 0.758 & 0.593 & 1.651 & 1.819 & 2.306 \\	
	\cline{2-16}
				
	& \multicolumn{15}{c}{\bf $d=1000$, $n = 900$, pseudo-dense $\beta_0^{pd}$, and $x^{(1)}$ so that $m(x^{(1)})=x^{(1)T}\beta_0^{pd} = \beta_{0,1}^{(c)}$} \\ \cline{2-16}	
	Avg-Bias & 0.092 & 0.093 & 0.074 & 0.122 & 0.260 & 0.132 & 0.118 & 0.181 & 0.265 & 0.098 & 0.095 & 0.075 & 0.103 & 0.096 & 0.101 \\
	Avg-Cov & 0.861 & 0.863 & 0.894 & 0.828 & 0.800 & 0.852 & 0.948 & 0.935 & 0.925 & 0.912 & 0.903 & 0.921 & 0.874 & 0.926 & 0.961 \\
	Avg-Len & 0.341 & 0.342 & 0.301 & 0.362 & 0.312 & 0.462 & 0.611 & 0.593 & 1.490 & 0.416 & 0.388 & 0.335 & 0.390 & 0.429 & 0.544 \\
	\hline
				
	& \multicolumn{15}{c}{\bf $d=1000$, $n = 900$, sparse $\beta_0^{sp}$, and $x^{(2)}$ so that $m(x^{(2)})=x^{(2)T}\beta_0^{sp}$} \\ \cline{2-16}
	Avg-Bias & 0.058 & 0.058 & 0.052 & 0.157 & 0.376 & 0.295 & NA & NA & NA & 0.048 & 0.049 & 0.041 & 0.056 & 0.060 & 0.070 \\
	Avg-Cov & 0.989 & 0.993 & 0.994 & 0.901 & 0.892 & 0.886 & NA & NA & NA & 0.948 & 0.948 & 0.952 & 0.934 & 0.942 & 0.956 \\
	Avg-Len & 0.354 & 0.388 & 0.319 & 0.372 & 0.605 & 0.320 & NA & NA & NA & 0.237 & 0.237 & 0.202 & 0.255 & 0.287 & 0.348 \\
	\cline{2-16}
				
	& \multicolumn{15}{c}{\bf $d=1000$, $n = 900$, dense $\beta_0^{de}$, and $x^{(2)}$ so that $m(x^{(2)})=x^{(2)T}\beta_0^{de}$} \\ \cline{2-16}
	Avg-Bias & 0.241 & 0.244 & 0.135 & 0.255 & 0.211 & 0.253 & NA & NA & NA & 0.376 & 0.244 & 0.177 & 0.386 & 0.336 & 0.310 \\
	Avg-Cov & 0.977 & 0.982 & 1.00 & 0.577 & 0.408 & 0.532 & NA & NA & NA & 0.698 & 0.693 & 0.750 & 0.852 & 0.939 & 0.981 \\
	Avg-Len & 1.367 & 1.457 & 1.074 & 0.518 & 0.309 & 0.490 & NA & NA & NA & 0.996 & 0.650 & 0.519 & 1.426 & 1.603 & 1.943 \\
	\cline{2-16}
				
	& \multicolumn{15}{c}{\bf $d=1000$, $n = 900$, pseudo-dense $\beta_0^{pd}$, and $x^{(2)}$ so that $m(x^{(2)})=x^{(2)T}\beta_0^{pd}$} \\ \cline{2-16}
	Avg-Bias & 0.083 & 0.083 & 0.069 & 0.096 & 0.184 & 0.199 & NA & NA & NA & 0.086 & 0.083 & 0.067 & 0.091 & 0.085 & 0.087 \\
	Avg-Cov & 0.991 & 0.993 & 0.996 & 0.829 & 0.812 & 0.840 & NA & NA & NA & 0.918 & 0.906 & 0.921 & 0.872 & 0.933 & 0.962 \\
	Avg-Len & 0.523 & 0.557 & 0.456 & 0.342 & 0.293 & 0.427 & NA & NA & NA & 0.375 & 0.349 & 0.301 & 0.336 & 0.378 & 0.458 \\
	\hline
				
	& \multicolumn{15}{c}{\bf $d=1000$, $n = 900$, sparse $\beta_0^{sp}$, and $x^{(3)}$ so that $m(x^{(3)})=x^{(3)T}\beta_0^{sp} = \beta_{0,100}^{(a)}$} \\ \cline{2-16}
	Avg-Bias & 0.057 & 0.057 & 0.051 & 0.969 & 0.754 & 1.471 & 0.597 & 0.926 & 0.733 & 0.00 & 0.000 & 0.00 & 0.080 & 0.091 & 0.101 \\
	Avg-Cov & 0.989 & 0.989 & 0.981 & 0.960 & 0.955 & 0.947 & 0.971 & 0.970 & 0.958 & 1.00 & 0.999 & 1.00 & 0.947 & 0.947 & 0.943 \\
	Avg-Len & 0.348 & 0.351 & 0.312 & 6.978 & 12.070 & 185.59 & 1.667 & 11.133 & 8.075 & 0.00 & 0.000 & 0.00 & 0.373 & 0.428 & 0.480 \\
	\cline{2-16}
				
	& \multicolumn{15}{c}{\bf $d=1000$, $n = 900$, dense $\beta_0^{de}$, and $x^{(3)}$ so that $m(x^{(3)})=x^{(3)T}\beta_0^{de} = \beta_{0,100}^{(b)}$} \\ \cline{2-16}
	Avg-Bias & 0.167 & 0.166 & 0.128 & 0.314 & 0.265 & 0.498 & 0.309 & 0.239 & 0.279 & 0.260 & 0.223 & 0.200 & 0.399 & 0.372 & 0.390 \\
	Avg-Cov & 0.978 & 0.982 & 0.994 & 0.758 & 0.541 & 0.613 & 0.989 & 0.888 & 0.954 & 0.251 & 0.349 & 0.391 & 0.960 & 0.986 & 0.992 \\
	Avg-Len & 1.055 & 1.073 & 0.858 & 2.134 & 1.141 & 0.647 & 1.304 & 0.792 & 1.097 & 0.411 & 0.376 & 0.331 & 2.084 & 2.392 & 2.681 \\
	\cline{2-16}
				
	& \multicolumn{15}{c}{\bf $d=1000$, $n = 900$, pseudo-dense $\beta_0^{pd}$, and $x^{(3)}$ so that $m(x^{(3)})=x^{(3)T}\beta_0^{pd} = \beta_{0,100}^{(c)}$} \\ \cline{2-16}
	Avg-Bias & 0.068 & 0.068 & 0.062 & 0.438 & 0.429 & 0.234 & 0.421 & 0.244 & 0.394 & 0.059 & 0.058 & 0.056 & 0.103 & 0.107 & 0.115 \\
	Avg-Cov & 0.989 & 0.989 & 0.986 & 0.919 & 0.917 & 0.930 & 0.962 & 0.946 & 0.963 & 0.302 & 0.341 & 0.377 & 0.943 & 0.960 & 0.969 \\
	Avg-Len & 0.432 & 0.433 & 0.382 & 14.155 & 7.920 & 1.433 & 2.658 & 0.774 & 1.953 & 0.136 & 0.146 & 0.143 & 0.492 & 0.564 & 0.633 \\
	\hline
				
	& \multicolumn{15}{c}{\bf $d=1000$, $n = 900$, sparse $\beta_0^{sp}$, and $x^{(4)}$ so that $m(x^{(4)})=x^{(4)T}\beta_0^{sp}$} \\ \cline{2-16}
	Avg-Bias & 0.050 & 0.050 & 0.042 & 0.123 & 0.216 & 0.174 & NA & NA & NA & 0.044 & 0.044 & 0.037 & 0.044 & 0.045 & 0.047 \\
	Avg-Cov & 1.00 & 1.00 & 1.00 & 0.872 & 0.879 & 0.859 & NA & NA & NA & 0.955 & 0.952 & 0.960 & 0.906 & 0.932 & 0.941 \\
	Avg-Len & 0.720 & 0.753 & 0.570 & 0.270 & 0.372 & 0.247 & NA & NA & NA & 0.214 & 0.213 & 0.182 & 0.180 & 0.208 & 0.222 \\
	\cline{2-16}
				
	& \multicolumn{15}{c}{\bf $d=1000$, $n = 900$, dense $\beta_0^{de}$, and $x^{(4)}$ so that $m(x^{(4)})=x^{(4)T}\beta_0^{de}$} \\ \cline{2-16}
	Avg-Bias & 0.172 & 0.177 & 0.095 & 0.193 & 0.147 & 0.174 & NA & NA & NA & 0.343 & 0.218 & 0.159 & 0.408 & 0.295 & 0.254 \\
	Avg-Cov & 1.00 & 1.00 & 1.00 & 0.684 & 0.556 & 0.683 & NA & NA & NA & 0.626 & 0.583 & 0.613 & 0.664 & 0.865 & 0.945 \\
	Avg-Len & 3.586 & 3.684 & 2.537 & 0.434 & 0.257 & 0.285 & NA & NA & NA & 0.728 & 0.460 & 0.359 & 1.004 & 1.161 & 1.238 \\
	\cline{2-16}
				
	& \multicolumn{15}{c}{\bf $d=1000$, $n = 900$, pseudo-dense $\beta_0^{pd}$, and $x^{(4)}$ so that $m(x^{(4)})=x^{(4)T}\beta_0^{pd}$} \\ \cline{2-16}
	Avg-Bias & 0.065 & 0.064 & 0.054 & 0.148 & 0.129 & 0.160 & NA & NA & NA & 0.064 & 0.061 & 0.050 & 0.075 & 0.062 & 0.059 \\
	Avg-Cov & 1.00 & 1.00 & 1.00 & 0.477 & 0.614 & 0.517 & NA & NA & NA & 0.887 & 0.880 & 0.886 & 0.793 & 0.906 & 0.941 \\
	Avg-Len & 1.295 & 1.327 & 1.021 & 0.287 & 0.248 & 0.241 & NA & NA & NA & 0.252 & 0.235 & 0.203 & 0.237 & 0.274 & 0.292 \\
				
	\hline\hline
	\end{tabular}}
	\caption{Simulation results under the Toeplitz (or auto-regressive) covariance matrix $\Sigma^{\mathrm{ar}}$, Gaussian noise $\mathcal{N}(0,1)$ and the MAR setting \eqref{MAR_correct}. Nearly all the standard errors for the above average quantities are less than 0.01 except for those associated with ``R-Proj'' and thus omitted.}
\label{table:AR_MAR_res}
	\end{center}
\end{table}

\begin{table}[hbtp!]
\begin{center}
	\resizebox{\textwidth}{!}{	
	\begin{tabular}{cccccccccccccccc}
		\hline \hline 
	& \begin{tabular}{@{}c@{}} DL-Jav \\ (CC)\end{tabular}  & \begin{tabular}{@{}c@{}} DL-Jav \\ (IPW)\end{tabular} & \begin{tabular}{@{}c@{}} DL-Jav \\ (Oracle)\end{tabular} 
	& \begin{tabular}{@{}c@{}} DL-vdG \\ (CC)\end{tabular}  & \begin{tabular}{@{}c@{}} DL-vdG \\ (IPW)\end{tabular} & \begin{tabular}{@{}c@{}} DL-vdG \\ (Oracle)\end{tabular} 
	& \begin{tabular}{@{}c@{}} R-Proj \\ (CC)\end{tabular}  & \begin{tabular}{@{}c@{}} R-Proj \\ (IPW)\end{tabular}  & \begin{tabular}{@{}c@{}} R-Proj \\ (Oracle)\end{tabular} 
    & \begin{tabular}{@{}c@{}} Refit \\ (CC)\end{tabular} & \begin{tabular}{@{}c@{}} Refit \\ (IPW)\end{tabular} & \begin{tabular}{@{}c@{}} Refit \\ (Oracle)\end{tabular} 
	& \begin{tabular}{@{}c@{}} Debias \\ (min-CV)\end{tabular} & \begin{tabular}{@{}c@{}} Debias \\ (1SE)\end{tabular} & \begin{tabular}{@{}c@{}} Debias \\ (min-feas)\end{tabular} \\ \cline{2-16}
	  & \multicolumn{15}{c}{\bf $d=1000$, $n = 900$, sparse $\beta_0^{sp}$, and $x^{(1)}$ so that $m(x^{(1)})=x^{(1)T}\beta_0^{sp} = \beta_{0,1}^{(a)}$} \\ \cline{2-16}
	Avg-Bias & 0.033 & 0.035 & 0.028 & 0.039 & 0.038 & 0.031 & 0.070 & 0.063 & 0.492 & 0.033 & 0.038 & 0.027 & 0.043 & 0.040 & 0.039 \\
	Avg-Cov & 0.965 & 0.965 & 0.959 & 0.905 & 0.881 & 0.924 & 0.969 & 0.969 & 0.928 & 0.944 & 0.897 & 0.953 & 0.859 & 0.901 & 0.911 \\
	Avg-Len & 0.176 & 0.185 & 0.145 & 0.162 & 0.152 & 0.133 & 0.334 & 0.329 & 0.926 & 0.161 & 0.156 & 0.133 & 0.151 & 0.161 & 0.167 \\
	\cline{2-16}
					
	& \multicolumn{15}{c}{\bf $d=1000$, $n = 900$, dense $\beta_0^{de}$, and $x^{(1)}$ so that $m(x^{(1)})=x^{(1)T}\beta_0^{de} = \beta_{0,1}^{(b)}$} \\ \cline{2-16}
	Avg-Bias & 0.148 & 0.157 & 0.108 & 0.183 & 0.163 & 0.133 & 0.093 & 0.207 & 0.088 & 0.183 & 0.189 & 0.141 & 0.227 & 0.198 & 0.179\\
	Avg-Cov & 0.992 & 0.989 & 0.999 & 0.892 & 0.851 & 0.917 & 0.998 & 0.999 & 0.996 & 0.916 & 0.840 & 0.919 & 0.846 & 0.917 & 0.958\\
	Avg-Len & 0.980 & 0.994 & 0.843 & 0.758 & 0.595 & 0.582 & 1.250 & 1.304 & 1.729 & 0.790 & 0.664 & 0.608 & 0.821 & 0.873 & 0.908 \\		\cline{2-16}
					
	& \multicolumn{15}{c}{\bf $d=1000$, $n = 900$, pseudo-dense $\beta_0^{pd}$, and $x^{(1)}$ so that $m(x^{(1)})=x^{(1)T}\beta_0^{pd} = \beta_{0,1}^{(c)}$} \\ \cline{2-16}	
	Avg-Bias & 0.045 & 0.048 & 0.035 & 0.053 & 0.050 & 0.040 & 0.056 & 0.078 & 0.083 & 0.049 & 0.051 & 0.036 & 0.063 & 0.058 & 0.055\\
	Avg-Cov & 0.987 & 0.985 & 0.993 & 0.897 & 0.859 & 0.906 & 0.992 & 0.981 & 0.980 & 0.936 & 0.880 & 0.932 & 0.868 & 0.918 & 0.945 \\
	Avg-Len & 0.284 & 0.291 & 0.236 & 0.223 & 0.190 & 0.173 & 0.387 & 0.362 & 0.543 & 0.222 & 0.199 & 0.171 & 0.241 & 0.256 & 0.267 \\
	\hline
					
	& \multicolumn{15}{c}{\bf $d=1000$, $n = 900$, sparse $\beta_0^{sp}$, and $x^{(2)}$ so that $m(x^{(2)})=x^{(2)T}\beta_0^{sp}$} \\ \cline{2-16}
	Avg-Bias & 0.041 & 0.041 & 0.035 & 0.068 & 0.055 & 0.050 & NA & NA & NA & 0.036 & 0.041 & 0.029 & 0.041 & 0.041 & 0.045\\
	Avg-Cov & 0.986 & 0.995 & 0.987 & 0.827 & 0.867 & 0.854 & NA & NA & NA & 0.951 & 0.907 & 0.957 & 0.910 & 0.922 & 0.930\\
	Avg-Len & 0.262 & 0.312 & 0.235 & 0.215 & 0.202 & 0.172 & NA & NA & NA & 0.175 & 0.169 & 0.144 & 0.173 & 0.183 & 0.201 \\
	\cline{2-16}
					
	& \multicolumn{15}{c}{\bf $d=1000$, $n = 900$, dense $\beta_0^{de}$, and $x^{(2)}$ so that $m(x^{(2)})=x^{(2)T}\beta_0^{de}$} \\ \cline{2-16}
	Avg-Bias & 0.177 & 0.191 & 0.130 & 0.352 & 0.230 & 0.225 & NA & NA & NA & 0.245 & 0.271 & 0.180 & 0.226 & 0.217 & 0.195\\
	Avg-Cov & 1.00 & 1.00 & 1.00 & 0.766 & 0.831 & 0.828 & NA & NA & NA & 0.869 & 0.746 & 0.894 & 0.903 & 0.932 & 0.978 \\
	Avg-Len & 1.657 & 1.853 & 1.512 & 1.008 & 0.790 & 0.754 & NA & NA & NA & 0.932 & 0.789 & 0.725 & 0.941 & 0.992 & 1.091 \\
	\cline{2-16}
					
	& \multicolumn{15}{c}{\bf $d=1000$, $n = 900$, pseudo-dense $\beta_0^{pd}$, and $x^{(2)}$ so that $m(x^{(2)})=x^{(2)T}\beta_0^{pd}$} \\ \cline{2-16}
	Avg-Bias & 0.054 & 0.056 & 0.045 & 0.102 & 0.072 & 0.068 & NA & NA & NA & 0.062 & 0.068 & 0.046 & 0.071 & 0.068 & 0.065\\
	Avg-Cov & 0.999 & 0.999 & 0.999 & 0.765 & 0.835 & 0.815 & NA & NA & NA & 0.906 & 0.838 & 0.932 & 0.889 & 0.909 & 0.956\\
	Avg-Len & 0.467 & 0.530 & 0.413 & 0.297 & 0.252 & 0.224 & NA & NA & NA & 0.266 & 0.238 & 0.205 & 0.277 & 0.291 & 0.321 \\
		\hline
					
	& \multicolumn{15}{c}{\bf $d=1000$, $n = 900$, sparse $\beta_0^{sp}$, and $x^{(3)}$ so that $m(x^{(3)})=x^{(3)T}\beta_0^{sp} = \beta_{0,100}^{(a)}$} \\ \cline{2-16}
	Avg-Bias & 0.025 & 0.025 & 0.021 & 0.032 & 0.032 & 0.026 & 0.430 & 1.097 & 0.407 & 0.00 & 0.000 & 0.000 & 0.033 & 0.034 & 0.036\\
	Avg-Cov & 0.989 & 0.993 & 0.986 & 0.958 & 0.941 & 0.948 & 0.960 & 0.958 & 0.930 & 1.00 & 0.998 & 0.999 & 0.952 & 0.951 & 0.948 \\
	Avg-Len & 0.176 & 0.185 & 0.145 & 0.162 & 0.152 & 0.133 & 1.138 & 14.143 & 0.509 & 0.00 & 0.000 & 0.000& 0.151 & 0.161 & 0.168\\
	\cline{2-16}
					
	& \multicolumn{15}{c}{\bf $d=1000$, $n = 900$, dense $\beta_0^{de}$, and $x^{(3)}$ so that $m(x^{(3)})=x^{(3)T}\beta_0^{de} = \beta_{0,100}^{(b)}$} \\ \cline{2-16}
	Avg-Bias & 0.105 & 0.104 & 0.087 & 0.170 & 0.134 & 0.127 & 0.134 & 0.108 & 0.226 & 0.190 & 0.194 & 0.193 & 0.200 & 0.180 & 0.168\\
	Avg-Cov & 0.997 & 0.997 & 0.999 & 0.924 & 0.910 & 0.932 & 0.998 & 0.999 & 0.995 & 0.001 & 0.060 & 0.005 & 0.900 & 0.935 & 0.964 \\
	Avg-Len & 0.981 & 0.993 & 0.843 & 0.758 & 0.595 & 0.583 & 1.638 & 52.948 & 2.199 & 0.015 & 0.063 & 0.028 & 0.820 & 0.872 & 0.910 \\
		\cline{2-16}
					
	& \multicolumn{15}{c}{\bf $d=1000$, $n = 900$, pseudo-dense $\beta_0^{pd}$, and $x^{(3)}$ so that $m(x^{(3)})=x^{(3)T}\beta_0^{pd} = \beta_{0,100}^{(c)}$} \\ \cline{2-16}
	Avg-Bias & 0.033 & 0.032 & 0.027 & 0.049 & 0.042 & 0.037 & 0.433 & 0.159 & 0.350 & 0.040 & 0.042 & 0.040 & 0.055 & 0.053 & 0.052 \\
	Avg-Cov & 0.994 & 0.995 & 0.997 & 0.930 & 0.918 & 0.932 & 0.998 & 0.989 & 0.984 & 0.001 & 0.011 & 0.001 & 0.915 & 0.934 & 0.953 \\
	Avg-Len & 0.284 & 0.291 & 0.236 & 0.223 & 0.190 & 0.173 & 5.491 & 1.625 & 0.909 & 0.001 & 0.007 & 0.003 & 0.241 & 0.256 & 0.267\\
		\hline
					
	& \multicolumn{15}{c}{\bf $d=1000$, $n = 900$, sparse $\beta_0^{sp}$, and $x^{(4)}$ so that $m(x^{(4)})=x^{(4)T}\beta_0^{sp}$} \\ \cline{2-16}
	Avg-Bias & 0.034 & 0.035 & 0.028 & 0.047 & 0.042 & 0.036 & NA & NA & NA & 0.034 & 0.038 & 0.027 & 0.036 & 0.036 & 0.036 \\
	Avg-Cov & 0.976 & 0.990 & 0.985 & 0.880 & 0.902 & 0.898 & NA & NA & NA & 0.948 & 0.892 & 0.948 & 0.897 & 0.927 & 0.933 \\
	Avg-Len & 0.197 & 0.234 & 0.180 & 0.174 & 0.164 & 0.141 & NA & NA & NA & 0.161 & 0.156 & 0.133 & 0.152 & 0.161 & 0.168 \\
	\cline{2-16}
					
   	& \multicolumn{15}{c}{\bf $d=1000$, $n = 900$, dense $\beta_0^{de}$, and $x^{(4)}$ so that $m(x^{(4)})=x^{(4)T}\beta_0^{de}$} \\ \cline{2-16}
	Avg-Bias & 0.146 & 0.157 & 0.108 & 0.230 & 0.171 & 0.157 & NA & NA & NA & 0.194 & 0.217 & 0.148 & 0.211 & 0.190 & 0.177 \\
	Avg-Cov & 1.00 & 1.00 & 1.00 & 0.852 & 0.870 & 0.883 & NA & NA & NA & 0.912 & 0.770 & 0.905 & 0.886 & 0.937 & 0.963 \\
	Avg-Len & 1.212 & 1.349 & 1.121 & 0.815 & 0.642 & 0.620 & NA & NA & NA & 0.789 & 0.662 & 0.608 & 0.823 & 0.976 & 0.911 \\
	\cline{2-16}
					
	& \multicolumn{15}{c}{\bf $d=1000$, $n = 900$, pseudo-dense $\beta_0^{pd}$, and $x^{(4)}$ so that $m(x^{(4)})=x^{(4)T}\beta_0^{pd}$} \\ \cline{2-16}
	Avg-Bias & 0.044 & 0.047 & 0.036 & 0.067 & 0.053 & 0.047 & NA & NA & NA & 0.051 & 0.053 & 0.038 & 0.067 & 0.061 & 0.058 \\
	Avg-Cov & 0.998 & 0.999 & 0.998 & 0.849 & 0.870 & 0.882 & NA & NA & NA & 0.920 & 0.858 & 0.928 & 0.853 & 0.924 & 0.941 \\
	Avg-Len & 0.344 & 0.389 & 0.309 & 0.240 & 0.205 & 0.184 & NA & NA & NA & 0.222 & 0.199 & 0.171 & 0.242 & 0.257 & 0.268 \\
					
	\hline\hline
\end{tabular}}
\caption{Simulation results under the circulant symmetric covariance matrix $\Sigma^{\mathrm{cs}}$, $\mathrm{Laplace}\left(0, \frac{1}{\sqrt{2}} \right)$ distributed noise, and the MAR setting \eqref{MAR_correct}. Nearly all the standard errors for the above average quantities are less than 0.01 except for those associated with ``R-Proj'' and thus omitted.}
\label{table:cirsym_MAR_laperr_res}
\end{center}
\end{table}

\begin{table}[hbtp!]
\begin{center}
\resizebox{\textwidth}{!}{	
\begin{tabular}{cccccccccccccccc}
	\hline \hline 
	& \begin{tabular}{@{}c@{}} DL-Jav \\ (CC)\end{tabular}  & \begin{tabular}{@{}c@{}} DL-Jav \\ (IPW)\end{tabular} & \begin{tabular}{@{}c@{}} DL-Jav \\ (Oracle)\end{tabular} 
	& \begin{tabular}{@{}c@{}} DL-vdG \\ (CC)\end{tabular}  & \begin{tabular}{@{}c@{}} DL-vdG \\ (IPW)\end{tabular} & \begin{tabular}{@{}c@{}} DL-vdG \\ (Oracle)\end{tabular} 
	& \begin{tabular}{@{}c@{}} R-Proj \\ (CC)\end{tabular}  & \begin{tabular}{@{}c@{}} R-Proj \\ (IPW)\end{tabular}  & \begin{tabular}{@{}c@{}} R-Proj \\ (Oracle)\end{tabular} 
	& \begin{tabular}{@{}c@{}} Refit \\ (CC)\end{tabular} & \begin{tabular}{@{}c@{}} Refit \\ (IPW)\end{tabular} & \begin{tabular}{@{}c@{}} Refit \\ (Oracle)\end{tabular} 
	& \begin{tabular}{@{}c@{}} Debias \\ (min-CV)\end{tabular} & \begin{tabular}{@{}c@{}} Debias \\ (1SE)\end{tabular} & \begin{tabular}{@{}c@{}} Debias \\ (min-feas)\end{tabular} \\ \cline{2-16}
	& \multicolumn{15}{c}{\bf $d=1000$, $n = 900$, sparse $\beta_0^{sp}$, and $x^{(1)}$ so that $m(x^{(1)})=x^{(1)T}\beta_0^{sp} = \beta_{0,1}^{(a)}$} \\ \cline{2-16}
	Avg-Bias & 0.099 & 0.103 & 0.086 & 0.118 & 0.112 & 0.098 & 0.189 & 0.198 & 0.452 & 0.099 & 0.110 & 0.087 & 0.125 & 0.117 & 0.110\\
	Avg-Cov & 0.946 & 0.960 & 0.955 & 0.888 & 0.890 & 0.902 & 0.948 & 0.945 & 0.923 & 0.957 & 0.916 & 0.949 & 0.867 & 0.918 & 0.938\\
	Avg-Len & 0.517 & 0.532 & 0.440 & 0.474 & 0.444 & 0.403 & 0.858 & 1.144 & 1.629 & 0.477 & 0.459 & 0.402 & 0.471 & 0.501 & 0.527\\
	\cline{2-16}
					
	& \multicolumn{15}{c}{\bf $d=1000$, $n = 900$, dense $\beta_0^{de}$, and $x^{(1)}$ so that $m(x^{(1)})=x^{(1)T}\beta_0^{de} = \beta_{0,1}^{(b)}$} \\ \cline{2-16}
	Avg-Bias & 0.181 & 0.187 & 0.141 & 0.220 & 0.198 & 0.170 & 0.177 & 0.177 & 0.471 & 0.215 & 0.229 & 0.175 & 0.259 & 0.227 & 0.206 \\
	Avg-Cov & 0.993 & 0.992 & 0.996 & 0.889 & 0.862 & 0.917 & 0.999 & 0.993 & 0.993 & 0.923 & 0.845 & 0.911 & 0.874 & 0.930 & 0.957 \\
	Avg-Len & 1.195 & 1.205 & 1.050 & 0.910 & 0.753 & 0.726 & 1.487 & 1.301 & 2.202 & 0.932 & 0.815 & 0.743 & 0.965 & 1.026 & 1.075 \\
	\cline{2-16}
					
	& \multicolumn{15}{c}{\bf $d=1000$, $n = 900$, pseudo-dense $\beta_0^{pd}$, and $x^{(1)}$ so that $m(x^{(1)})=x^{(1)T}\beta_0^{pd} = \beta_{0,1}^{(c)}$} \\ \cline{2-16}
	Avg-Bias & 0.109 & 0.112 & 0.092 & 0.129 & 0.120 & 0.103 & 0.168 & 0.174 & 0.298 & 0.113 & 0.121 & 0.097 & 0.140 & 0.129 & 0.121 \\
	Avg-Cov & 0.981 & 0.977 & 0.982 & 0.889 & 0.884 & 0.904 & 0.970 & 0.960 & 0.960 & 0.942 & 0.891 & 0.944 & 0.865 & 0.918 & 0.945 \\
	Avg-Len & 0.628 & 0.642 & 0.538 & 0.522 & 0.474 & 0.435 & 0.905 & 0.873 & 1.366 & 0.523 & 0.500 & 0.439 & 0.537 & 0.571 & 0.601 \\ 
	\hline
					
	& \multicolumn{15}{c}{\bf $d=1000$, $n = 900$, sparse $\beta_0^{sp}$, and $x^{(2)}$ so that $m(x^{(2)})=x^{(2)T}\beta_0^{sp}$} \\ \cline{2-16}
	Avg-Bias & 0.129 & 0.129 & 0.111 & 0.206 & 0.169 & 0.152 & NA & NA & NA & 0.111 & 0.125 & 0.101 & 0.122 & 0.122 & 0.128 \\
	Avg-Cov & 0.981 & 0.996 & 0.991 & 0.805 & 0.851 & 0.847 & NA & NA & NA & 0.946 & 0.906 & 0.934 & 0.928 & 0.934 & 0.943 \\
	Avg-Len & 0.767 & 0.893 & 0.714 & 0.631 & 0.590 & 0.522 & NA & NA & NA & 0.516 & 0.499 & 0.435 & 0.540 & 0.570 & 0.626 \\
	\cline{2-16}
					
	& \multicolumn{15}{c}{\bf $d=1000$, $n = 900$, dense $\beta_0^{de}$, and $x^{(2)}$ so that $m(x^{(2)})=x^{(2)T}\beta_0^{de}$} \\ \cline{2-16}
	Avg-Bias & 0.225 & 0.233 & 0.178 & 0.437 & 0.293 & 0.293 & NA & NA & NA & 0.304 & 0.337 & 0.228 & 0.257 & 0.250 & 0.233 \\
	Avg-Cov & 1.00 & 1.00 & 1.00 & 0.755 & 0.820 & 0.813 & NA & NA & NA & 0.862 & 0.748 & 0.879 & 0.913 & 0.943 & 0.968 \\
	Avg-Len & 1.971 & 2.201 & 1.852 & 1.211 & 1.002 & 0.940 & NA & NA & NA & 1.086 & 0.956 & 0.878 & 1.107 & 1.167 & 1.284 \\
	\cline{2-16}
					
	& \multicolumn{15}{c}{\bf $d=1000$, $n = 900$, pseudo-dense $\beta_0^{pd}$, and $x^{(2)}$ so that $m(x^{(2)})=x^{(2)T}\beta_0^{pd}$} \\ \cline{2-16}
	Avg-Bias & 0.149 & 0.148 & 0.127 & 0.265 & 0.202 & 0.185 & NA & NA & NA & 0.144 & 0.156 & 0.123 & 0.143 & 0.141 & 0.144 \\
	Avg-Cov & 0.997 & 0.999 & 0.996 & 0.713 & 0.794 & 0.785 & NA & NA & NA & 0.932 & 0.858 & 0.920 & 0.903 & 0.931 & 0.951 \\
	Avg-Len & 0.984 & 1.123 & 0.911 & 0.695 & 0.630 & 0.564 & NA & NA & NA & 0.620 & 0.592 & 0.520 & 0.617 & 0.651 & 0.715 \\
	\hline
					
	& \multicolumn{15}{c}{\bf $d=1000$, $n = 900$, sparse $\beta_0^{sp}$, and $x^{(3)}$ so that $m(x^{(3)})=x^{(3)T}\beta_0^{sp} = \beta_{0,100}^{(a)}$} \\ \cline{2-16}
	Avg-Bias & 0.076 & 0.075 & 0.067 & 0.094 & 0.093 & 0.082 & 0.810 & 0.566 & 0.628 & 0.00 & 0.002 & 0.00 & 0.097 & 0.102 & 0.106 \\
	Avg-Cov & 0.987 & 0.989 & 0.986 & 0.943 & 0.943 & 0.949 & 0.959 & 0.957 & 0.940 & 1.00 & 0.996 & 1.00 & 0.955 & 0.953 & 0.957 \\
	Avg-Len & 0.517 & 0.531 & 0.440 & 0.473 & 0.443 & 0.403 & 5.314 & 2.157 & 2.126 & 0.00 & 0.003 & 0.00 & 0.471 & 0.502 & 0.522 \\
	\cline{2-16}
					
	& \multicolumn{15}{c}{\bf $d=1000$, $n = 900$, dense $\beta_0^{de}$, and $x^{(3)}$ so that $m(x^{(3)})=x^{(3)T}\beta_0^{de} = \beta_{0,100}^{(b)}$} \\ \cline{2-16}
	Avg-Bias & 0.133 & 0.131 & 0.115 & 0.207 & 0.173 & 0.159 & 0.174 & 0.196 & 0.396 & 0.187 & 0.196 & 0.192 & 0.227 & 0.211 & 0.204 \\
	Avg-Cov & 0.998 & 0.997 & 0.996 & 0.928 & 0.912 & 0.927 & 0.999 & 0.996 & 0.995 & 0.001 & 0.034 & 0.002 & 0.908 & 0.951 & 0.972 \\
	Avg-Len & 1.194 & 1.203 & 1.050 & 0.909 & 0.752 & 0.726 & 1.757 & 2.160 & 2.170 & 0.008 & 0.053 & 0.021 & 0.966 & 1.028 & 1.070 \\
	\cline{2-16}
					
	& \multicolumn{15}{c}{\bf $d=1000$, $n = 900$, pseudo-dense $\beta_0^{pd}$, and $x^{(3)}$ so that $m(x^{(3)})=x^{(3)T}\beta_0^{pd} = \beta_{0,100}^{(c)}$} \\ \cline{2-16}
	Avg-Bias & 0.083 & 0.081 & 0.072 & 0.109 & 0.103 & 0.089 & 0.271 & 0.523 & 0.269 & 0.039 & 0.042 & 0.039 & 0.114 & 0.117 & 0.120 \\
	Avg-Cov & 0.996 & 0.997 & 0.994 & 0.944 & 0.936 & 0.947 & 0.974 & 0.969 & 0.963 & 0.00 & 0.001 & 0.00 & 0.943 & 0.954 & 0.963 \\
	Avg-Len & 0.628 & 0.641 & 0.538 & 0.522 & 0.473 & 0.436 & 7.329 & 6.314 & 1.360 & 0.000 & 0.005 & 0.000 & 0.538 & 0.573 & 0.596 \\
	\hline
					
	& \multicolumn{15}{c}{\bf $d=1000$, $n = 900$, sparse $\beta_0^{sp}$, and $x^{(4)}$ so that $m(x^{(4)})=x^{(4)T}\beta_0^{sp}$} \\ \cline{2-13}
	Avg-Bias & 0.102 & 0.105 & 0.089 & 0.144 & 0.125 & 0.114 & NA & NA & NA & 0.105 & 0.118 & 0.089 & 0.111 & 0.108 & 0.108\\
	Avg-Cov & 0.970 & 0.991 & 0.989 & 0.863 & 0.879 & 0.875 & NA & NA & NA & 0.952 & 0.912 & 0.950 & 0.907 & 0.933 & 0.939 \\
	Avg-Len & 0.575 & 0.671 & 0.545 & 0.509 & 0.479 & 0.429 & NA & NA & NA & 0.476 & 0.457 & 0.402 & 0.472 & 0.504 & 0.522 \\
	\cline{2-16}
					
	& \multicolumn{15}{c}{\bf $d=1000$, $n = 900$, dense $\beta_0^{de}$, and $x^{(4)}$ so that $m(x^{(4)})=x^{(4)T}\beta_0^{de}$} \\ \cline{2-16}
	Avg-Bias & 0.183 & 0.189 & 0.145 & 0.285 & 0.213 & 0.206 & NA & NA & NA & 0.224 & 0.256 & 0.179 & 0.237 & 0.218 & 0.207 \\
	Avg-Cov & 1.00 & 1.00 & 1.00 & 0.837 & 0.862 & 0.870 & NA & NA & NA & 0.914 & 0.807 & 0.904 & 0.889 & 0.947 & 0.965 \\
	Avg-Len & 1.448 & 1.613 & 1.380 & 0.979 & 0.813 & 0.773 & NA & NA & NA & 0.931 & 0.813 & 0.739 & 0.968 & 1.033 & 1.071 \\
	\cline{2-16}
					
	& \multicolumn{15}{c}{\bf $d=1000$, $n = 900$, pseudo-dense $\beta_0^{pd}$, and $x^{(4)}$ so that $m(x^{(4)})=x^{(4)T}\beta_0^{pd}$} \\ \cline{2-16}
	Avg-Bias & 0.113 & 0.114 & 0.098 & 0.171 & 0.140 & 0.129 & NA & NA & NA & 0.118 & 0.129 & 0.098 & 0.136 & 0.128 & 0.125 \\
	Avg-Cov & 0.989 & 0.997 & 0.996 & 0.830 & 0.876 & 0.858 & NA & NA & NA & 0.940 & 0.879 & 0.939 & 0.889 & 0.928 & 0.943 \\
	Avg-Len & 0.730 & 0.834 & 0.686 & 0.562 & 0.512 & 0.463 & NA & NA & NA & 0.527 & 0.503 & 0.438 & 0.539 & 0.575 & 0.596 \\
					
		\hline\hline
	\end{tabular}}
	\caption{Simulation results under the circulant symmetric covariance matrix $\Sigma^{\mathrm{cs}}$, $t_2$ distributed noise, and the MAR setting \eqref{MAR_correct}. Nearly all the standard errors for the above average quantities are less than 0.01 except for those associated with ``R-Proj'' and thus omitted.}
	\label{table:cirsym_MAR_terr_res}
	\end{center}
\end{table}

\begin{table}[hbtp!]
	\begin{center}
	\resizebox{\textwidth}{!}{	
	\begin{tabular}{cccccccccccccccc}
		\hline \hline 
	& \begin{tabular}{@{}c@{}} DL-Jav \\ (CC)\end{tabular}  & \begin{tabular}{@{}c@{}} DL-Jav \\ (IPW)\end{tabular} & \begin{tabular}{@{}c@{}} DL-Jav \\ (Oracle)\end{tabular} 
	& \begin{tabular}{@{}c@{}} DL-vdG \\ (CC)\end{tabular}  & \begin{tabular}{@{}c@{}} DL-vdG \\ (IPW)\end{tabular} & \begin{tabular}{@{}c@{}} DL-vdG \\ (Oracle)\end{tabular} 
	& \begin{tabular}{@{}c@{}} R-Proj \\ (CC)\end{tabular}  & \begin{tabular}{@{}c@{}} R-Proj \\ (IPW)\end{tabular}  & \begin{tabular}{@{}c@{}} R-Proj \\ (Oracle)\end{tabular} 
	& \begin{tabular}{@{}c@{}} Refit \\ (CC)\end{tabular} & \begin{tabular}{@{}c@{}} Refit \\ (IPW)\end{tabular} & \begin{tabular}{@{}c@{}} Refit \\ (Oracle)\end{tabular} 
	& \begin{tabular}{@{}c@{}} Debias \\ (min-CV)\end{tabular} & \begin{tabular}{@{}c@{}} Debias \\ (1SE)\end{tabular} & \begin{tabular}{@{}c@{}} Debias \\ (min-feas)\end{tabular} \\ \cline{2-16}
	& \multicolumn{15}{c}{\bf $d=1000$, $n = 900$, sparse $\beta_0^{sp}$, and $x^{(1)}$ so that $m(x^{(1)})=x^{(1)T}\beta_0^{sp} = \beta_{0,1}^{(a)}$} \\ \cline{2-16}
	Avg-Bias & 0.075 & 0.075 & 0.064 & 0.070 & 0.071 & 0.060 & 0.545 & 0.738 & 0.250 & 0.073 & 0.074 & 0.061 & 0.074 & 0.076 & 0.088 \\
	Avg-Cov & 0.859 & 0.869 & 0.877 & 0.915 & 0.913 & 0.933 & 0.939 & 0.946 & 0.944 & 0.945 & 0.940 & 0.947 & 0.881 & 0.914 & 0.942 \\
	Avg-Len & 0.279 & 0.281 & 0.247 & 0.308 & 0.305 & 0.266 & 0.967 & 1.735 & 2.550 & 0.353 & 0.351 & 0.300 & 0.295 & 0.325 & 0.412 \\
	\cline{2-16}
				
	& \multicolumn{15}{c}{\bf $d=1000$, $n = 900$, dense $\beta_0^{de}$, and $x^{(1)}$ so that $m(x^{(1)})=x^{(1)T}\beta_0^{de} = \beta_{0,1}^{(b)}$} \\ \cline{2-16}
	Avg-Bias & 0.288 & 0.289 & 0.157 & 0.278 & 0.227 & 0.160 & 0.205 & 0.236 & 0.377 & 0.411 & 0.246 & 0.176 & 0.457 & 0.394 & 0.363 \\
	Avg-Cov & 0.749 & 0.751 & 0.908 & 0.557 & 0.426 & 0.552 & 0.937 & 0.774 & 0.953 & 0.781 & 0.782 & 0.815 & 0.845 & 0.924 & 0.982 \\
	Avg-Len & 0.828 & 0.842 & 0.673 & 0.528 & 0.305 & 0.301 & 0.996 & 0.625 & 1.368 & 1.166 & 0.761 & 0.593 & 1.648 & 1.816 & 2.302 \\	
	\cline{2-16}
				
	& \multicolumn{15}{c}{\bf $d=1000$, $n = 900$, pseudo-dense $\beta_0^{pd}$, and $x^{(1)}$ so that $m(x^{(1)})=x^{(1)T}\beta_0^{pd} = \beta_{0,1}^{(c)}$} \\ \cline{2-16}	
	Avg-Bias & 0.094 & 0.095 & 0.077 & 0.098 & 0.095 & 0.076 & 0.318 & 0.127 & 0.251 & 0.097 & 0.095 & 0.080 & 0.102 & 0.096 & 0.103 \\
	Avg-Cov & 0.847 & 0.834 & 0.877 & 0.826 & 0.794 & 0.851 & 0.940 & 0.927 & 0.931 & 0.917 & 0.895 & 0.915 & 0.866 & 0.911 & 0.956 \\
	Avg-Len & 0.341 & 0.342 & 0.300 & 0.329 & 0.296 & 0.276 & 0.765 & 0.569 & 0.879 & 0.415 & 0.387 & 0.334 & 0.389 & 0.429 & 0.543 \\
	\hline
				
	& \multicolumn{15}{c}{\bf $d=1000$, $n = 900$, sparse $\beta_0^{sp}$, and $x^{(2)}$ so that $m(x^{(2)})=x^{(2)T}\beta_0^{sp}$} \\ \cline{2-16}
	Avg-Bias & 0.059 & 0.059 & 0.050 & 0.067 & 0.066 & 0.056 & NA & NA & NA & 0.050 & 0.051 & 0.043 & 0.060 & 0.064 & 0.075 \\
	Avg-Cov & 0.988 & 0.993 & 0.991 & 0.922 & 0.919 & 0.935 & NA & NA & NA & 0.942 & 0.930 & 0.936 & 0.916 & 0.929 & 0.937 \\
	Avg-Len & 0.353 & 0.387 & 0.319 & 0.290 & 0.287 & 0.248 & NA & NA & NA & 0.237 & 0.237 & 0.202 & 0.255 & 0.286 & 0.347 \\
	\cline{2-16}
				
	& \multicolumn{15}{c}{\bf $d=1000$, $n = 900$, dense $\beta_0^{de}$, and $x^{(2)}$ so that $m(x^{(2)})=x^{(2)T}\beta_0^{de}$} \\ \cline{2-16}
	Avg-Bias & 0.244 & 0.247 & 0.135 & 0.237 & 0.197 & 0.138 & NA & NA & NA & 0.377 & 0.247 & 0.179 & 0.390 & 0.341 & 0.328 \\
	Avg-Cov & 0.978 & 0.987 & 0.999 & 0.584 & 0.433 & 0.587 & NA & NA & NA & 0.683 & 0.688 & 0.757 & 0.866 & 0.941 & 0.981 \\
	Avg-Len & 1.365 & 1.455 & 1.074 & 0.497 & 0.287 & 0.281 & NA & NA & NA & 0.999 & 0.652 & 0.519 & 1.423 & 1.599 & 1.939 \\
		\cline{2-16}
				
	& \multicolumn{15}{c}{\bf $d=1000$, $n = 900$, pseudo-dense $\beta_0^{pd}$, and $x^{(2)}$ so that $m(x^{(2)})=x^{(2)T}\beta_0^{pd}$} \\ \cline{2-16}
	Avg-Bias & 0.083 & 0.084 & 0.068 & 0.089 & 0.085 & 0.071 & NA & NA & NA & 0.088 & 0.085 & 0.071 & 0.090 & 0.086 & 0.092 \\
	Avg-Cov & 0.991 & 0.992 & 0.994 & 0.830 & 0.804 & 0.849 & NA & NA & NA & 0.909 & 0.894 & 0.899 & 0.857 & 0.918 & 0.949 \\
	Avg-Len & 0.522 & 0.557 & 0.456 & 0.310 & 0.278 & 0.257 & NA & NA & NA & 0.373 & 0.348 & 0.301 & 0.336 & 0.377 & 0.458 \\
	\hline
				
	& \multicolumn{15}{c}{\bf $d=1000$, $n = 900$, sparse $\beta_0^{sp}$, and $x^{(3)}$ so that $m(x^{(3)})=x^{(3)T}\beta_0^{sp} = \beta_{0,100}^{(a)}$} \\ \cline{2-16}
	Avg-Bias & 0.056 & 0.056 & 0.053 & 0.074 & 0.073 & 0.066 & 0.700 & 1.163 & 0.562 & 0.00 & 0.000& 0.00 & 0.074 & 0.085 & 0.097\\
	Avg-Cov & 0.991 & 0.992 & 0.980 & 0.957 & 0.948 & 0.947 & 0.952 & 0.948 & 0.948 & 1.00 & 0.999 & 1.00 & 0.953 & 0.951 & 0.953 \\
	Avg-Len & 0.348 & 0.350 & 0.312 & 0.367 & 0.363 & 0.322 & 3.483 & 15.279 & 1.778 & 0.00 & 0.000 & 0.00 & 0.373 & 0.427 & 0.479 \\
	\cline{2-16}
				
	& \multicolumn{15}{c}{\bf $d=1000$, $n = 900$, dense $\beta_0^{de}$, and $x^{(3)}$ so that $m(x^{(3)})=x^{(3)T}\beta_0^{de} = \beta_{0,100}^{(b)}$} \\ \cline{2-16}
	Avg-Bias & 0.174 & 0.173 & 0.127 & 0.220 & 0.186 & 0.165 & 0.175 & 0.333 & 0.218 & 0.250 & 0.215 & 0.198 & 0.378 & 0.352 & 0.371 \\
	Avg-Cov & 0.974 & 0.979 & 0.992 & 0.777 & 0.561 & 0.621 & 0.993 & 0.893 & 0.975 & 0.237 & 0.352 & 0.371 & 0.954 & 0.986 & 0.997 \\
	Avg-Len & 1.054 & 1.071 & 0.858 & 0.629 & 0.363 & 0.364 & 1.287 & 0.803 & 1.125 & 0.384 & 0.370 & 0.329 & 2.081 & 2.386 & 2.676 \\
	\cline{2-16}
				
	& \multicolumn{15}{c}{\bf $d=1000$, $n = 900$, pseudo-dense $\beta_0^{pd}$, and $x^{(3)}$ so that $m(x^{(3)})=x^{(3)T}\beta_0^{pd} = \beta_{0,100}^{(c)}$} \\ \cline{2-16}
	Avg-Bias & 0.068 & 0.068 & 0.062 & 0.090 & 0.085 & 0.077 & 0.129 & 0.212 & 0.345 & 0.056 & 0.055 & 0.053 & 0.095 & 0.098 & 0.108 \\
	Avg-Cov & 0.983 & 0.984 & 0.988 & 0.911 & 0.904 & 0.904 & 0.984 & 0.966 & 0.959 & 0.302 & 0.330 & 0.363 & 0.954 & 0.970 & 0.977 \\
	Avg-Len & 0.431 & 0.433 & 0.381 & 0.392 & 0.352 & 0.333 & 0.793 & 1.870 & 1.138 & 0.134 & 0.142 & 0.135 & 0.491 & 0.563 & 0.631 \\
	\hline
				
	& \multicolumn{15}{c}{\bf $d=1000$, $n = 900$, sparse $\beta_0^{sp}$, and $x^{(4)}$ so that $m(x^{(4)})=x^{(4)T}\beta_0^{sp}$} \\ \cline{2-16}
	Avg-Bias & 0.050 & 0.050 & 0.041 & 0.065 & 0.059 & 0.052 & NA & NA & NA & 0.045 & 0.045 & 0.038 & 0.046 & 0.048 & 0.049 \\
	Avg-Cov & 1.00 & 1.00 & 1.00 & 0.876 & 0.909 & 0.893 & NA & NA & NA & 0.934 & 0.929 & 0.948 & 0.891 & 0.922 & 0.932 \\
	Avg-Len & 0.717 & 0.750 & 0.571 & 0.244 & 0.242 & 0.203 & NA & NA & NA & 0.214 & 0.213 & 0.182 & 0.179 & 0.207 & 0.221\\
	\cline{2-16}
				
	& \multicolumn{15}{c}{\bf $d=1000$, $n = 900$, dense $\beta_0^{de}$, and $x^{(4)}$ so that $m(x^{(4)})=x^{(4)T}\beta_0^{de}$} \\ \cline{2-16}
	Avg-Bias & 0.177 & 0.179 & 0.097 & 0.169 & 0.129 & 0.095 & NA & NA & NA & 0.340 & 0.214 & 0.159 & 0.403 & 0.296 & 0.257 \\
	Avg-Cov & 1.00 & 1.00 & 1.00 & 0.673 & 0.548 & 0.651 & NA & NA & NA & 0.632 & 0.575 & 0.614 & 0.647 & 0.872 & 0.943 \\
	Avg-Len & 3.579 & 3.678 & 2.536 & 0.417 & 0.242 & 0.230 & NA & NA & NA & 0.731 & 0.462 & 0.359 & 1.002 & 1.158 & 1.317 \\
	\cline{2-16}
				
	& \multicolumn{15}{c}{\bf $d=1000$, $n = 900$, pseudo-dense $\beta_0^{pd}$, and $x^{(4)}$ so that $m(x^{(4)})=x^{(4)T}\beta_0^{pd}$} \\ \cline{2-16}
    Avg-Bias & 0.067 & 0.066 & 0.055 & 0.133 & 0.100 & 0.101 & NA & NA & NA & 0.064 & 0.061 & 0.052 & 0.074 & 0.063 & 0.060 \\
    Avg-Cov & 1.00 & 1.00 & 1.00 & 0.509 & 0.628 & 0.538 & NA & NA & NA & 0.873 & 0.867 & 0.883 & 0.799 & 0.904 & 0.938 \\
    Avg-Len & 1.293 & 1.326 & 1.021 & 0.260 & 0.234 & 0.210 & NA & NA & NA & 0.251 & 0.234 & 0.203 & 0.236 & 0.273 & 0.292 \\
				
	\hline\hline
\end{tabular}}
\caption{Simulation results under the Toeplitz (or auto-regressive) covariance matrix $\Sigma^{\mathrm{ar}}$, $\mathrm{Laplace}\left(0, \frac{1}{\sqrt{2}} \right)$ distributed noise, and the MAR setting \eqref{MAR_correct}. Nearly all the standard errors for the above average quantities are less than 0.01 except for those associated with ``R-Proj'' and thus omitted.}
\label{table:AR_MAR_laperr_res}
\end{center}
\end{table}

\begin{table}[hbtp!]
\begin{center}
	\resizebox{\textwidth}{!}{	
	\begin{tabular}{cccccccccccccccc}
	\hline \hline 
	& \begin{tabular}{@{}c@{}} DL-Jav \\ (CC)\end{tabular}  & \begin{tabular}{@{}c@{}} DL-Jav \\ (IPW)\end{tabular} & \begin{tabular}{@{}c@{}} DL-Jav \\ (Oracle)\end{tabular} 
	& \begin{tabular}{@{}c@{}} DL-vdG \\ (CC)\end{tabular}  & \begin{tabular}{@{}c@{}} DL-vdG \\ (IPW)\end{tabular} & \begin{tabular}{@{}c@{}} DL-vdG \\ (Oracle)\end{tabular} 
	& \begin{tabular}{@{}c@{}} R-Proj \\ (CC)\end{tabular}  & \begin{tabular}{@{}c@{}} R-Proj \\ (IPW)\end{tabular}  & \begin{tabular}{@{}c@{}} R-Proj \\ (Oracle)\end{tabular} 
	& \begin{tabular}{@{}c@{}} Refit \\ (CC)\end{tabular} & \begin{tabular}{@{}c@{}} Refit \\ (IPW)\end{tabular} & \begin{tabular}{@{}c@{}} Refit \\ (Oracle)\end{tabular} 
	& \begin{tabular}{@{}c@{}} Debias \\ (min-CV)\end{tabular} & \begin{tabular}{@{}c@{}} Debias \\ (1SE)\end{tabular} & \begin{tabular}{@{}c@{}} Debias \\ (min-feas)\end{tabular} \\ \cline{2-16}
	& \multicolumn{15}{c}{\bf $d=1000$, $n = 900$, sparse $\beta_0^{sp}$, and $x^{(1)}$ so that $m(x^{(1)})=x^{(1)T}\beta_0^{sp} = \beta_{0,1}^{(a)}$} \\ \cline{2-16}
	Avg-Bias & 0.212 & 0.212 & 0.180 & 0.218 & 0.218 & 0.191 & 0.497 & 0.569 & 0.939 & 0.228 & 0.227 & 0.195 & 0.233 & 0.235 & 0.267 \\
	Avg-Cov & 0.877 & 0.873 & 0.902 & 0.910 & 0.917 & 0.920 & 0.944 & 0.939 & 0.956 & 0.939 & 0.933 & 0.949 & 0.897 & 0.911 & 0.937 \\
	Avg-Len & 0.820 & 0.821 & 0.750 & 0.903 & 0.894 & 0.808 & 2.284 & 2.156 & 3.056 & 1.068 & 1.060 & 0.918 & 0.926 & 1.020 & 1.294\\
	\cline{2-16}
				
	& \multicolumn{15}{c}{\bf $d=1000$, $n = 900$, dense $\beta_0^{de}$, and $x^{(1)}$ so that $m(x^{(1)})=x^{(1)T}\beta_0^{de} = \beta_{0,1}^{(b)}$} \\ \cline{2-16}
	Avg-Bias & 0.375 & 0.374 & 0.261 & 0.423 & 0.373 & 0.291 & 0.409 & 0.369 & 0.391 & 0.556 & 0.399 & 0.328 & 0.557 & 0.488 & 0.485 \\
	Avg-Cov & 0.810 & 0.816 & 0.899 & 0.678 & 0.614 & 0.729 & 0.950 & 0.887 & 0.961 & 0.779 & 0.815 & 0.843 & 0.861 & 0.944 & 0.988 \\
	Avg-Len & 1.237 & 1.243 & 1.051 & 1.097 & 0.842 & 0.845 & 2.024 & 1.589 & 2.093 & 1.519 & 1.277 & 1.103 & 2.110 & 2.324 & 2.948 \\
	\cline{2-16}
				
	& \multicolumn{15}{c}{\bf $d=1000$, $n = 900$, pseudo-dense $\beta_0^{pd}$, and $x^{(1)}$ so that $m(x^{(1)})=x^{(1)T}\beta_0^{pd} = \beta_{0,1}^{(c)}$} \\ \cline{2-16}	
	Avg-Bias & 0.229 & 0.230 & 0.191 & 0.239 & 0.236 & 0.204 & 0.326 & 0.428 & 0.395 & 0.260 & 0.256 & 0.216 & 0.262 & 0.258 & 0.286 \\
	Avg-Cov & 0.877 & 0.879 & 0.916 & 0.888 & 0.874 & 0.895 & 0.950 & 0.942 & 0.956 & 0.929 & 0.925 & 0.938 & 0.882 & 0.921 & 0.947 \\
	Avg-Len & 0.873 & 0.873 & 0.794 & 0.925 & 0.884 & 0.822 & 1.648 & 1.723 & 2.003 & 1.129 & 1.097 & 0.955 & 1.014 & 1.116 & 1.416 \\
	\hline
				
	& \multicolumn{15}{c}{\bf $d=1000$, $n = 900$, sparse $\beta_0^{sp}$, and $x^{(2)}$ so that $m(x^{(2)})=x^{(2)T}\beta_0^{sp}$} \\ \cline{2-16}
	Avg-Bias & 0.163 & 0.165 & 0.145 & 0.200 & 0.194 & 0.174 & NA & NA & NA & 0.162 & 0.159 & 0.133 & 0.187 & 0.199 & 0.227 \\
	Avg-Cov & 0.990 & 0.996 & 0.989 & 0.918 & 0.928 & 0.938 & NA & NA & NA & 0.942 & 0.939 & 0.953 & 0.919 & 0.941 & 0.953 \\
	Avg-Len & 1.036 & 1.130 & 0.964 & 0.851 & 0.841 & 0.752 & NA & NA & NA & 0.723 & 0.722 & 0.621 & 0.799 & 0.901 & 1.093 \\
	\cline{2-16}
				
	& \multicolumn{15}{c}{\bf $d=1000$, $n = 900$, dense $\beta_0^{de}$, and $x^{(2)}$ so that $m(x^{(2)})=x^{(2)T}\beta_0^{de}$} \\ \cline{2-16}
	Avg-Bias & 0.313 & 0.314 & 0.219 & 0.351 & 0.317 & 0.252 & NA & NA & NA & 0.474 & 0.366 & 0.294 & 0.458 & 0.409 & 0.407 \\
	Avg-Cov & 0.985 & 0.990 & 0.996 & 0.740 & 0.664 & 0.778 & NA & NA & NA & 0.729 & 0.742 & 0.790 & 0.888 & 0.958 & 0.988 \\
	Avg-Len & 1.896 & 2.015 & 1.565 & 1.034 & 0.792 & 0.787 & NA & NA & NA & 1.345 & 1.104 & 0.964 & 1.821 & 2.050 & 2.487 \\
	\cline{2-16}
				
	& \multicolumn{15}{c}{\bf $d=1000$, $n = 900$, pseudo-dense $\beta_0^{pd}$, and $x^{(2)}$ so that $m(x^{(2)})=x^{(2)T}\beta_0^{pd}$} \\ \cline{2-16}
	Avg-Bias & 0.194 & 0.196 & 0.166 & 0.225 & 0.218 & 0.195 & NA & NA & NA & 0.337 & 0.219 & 0.187 & 0.226 & 0.226 & 0.245 \\
	Avg-Cov & 0.984 & 0.991 & 0.989 & 0.880 & 0.876 & 0.896 & NA & NA& NA& 0.925 & 0.915 & 0.936 & 0.897 & 0.936 & 0.960 \\
	Avg-Len & 1.188 & 1.282 & 1.084 & 0.872 & 0.831 & 0.765 & NA & NA & NA & 0.970 & 0.948 & 0.836 & 0.874 & 0.986 & 1.196 \\
	\hline
				
	& \multicolumn{15}{c}{\bf $d=1000$, $n = 900$, sparse $\beta_0^{sp}$, and $x^{(3)}$ so that $m(x^{(3)})=x^{(3)T}\beta_0^{sp} = \beta_{0,100}^{(a)}$} \\ \cline{2-16}
	Avg-Bias & 0.165 & 0.165 & 0.163 & 0.224 & 0.224 & 0.194 & 0.818 & 0.814 & 0.862 & 0.00 & 0.001 & 0.00 & 0.247 & 0.282 & 0.318 \\
	Avg-Cov & 0.985 & 0.985 & 0.978 & 0.943 & 0.936 & 0.943 & 0.957 & 0.953 & 0.955 & 1.00 & 0.997 & 1.00 & 0.942 & 0.945 & 0.940 \\
	Avg-Len & 1.022 & 1.023 & 0.944 & 1.076 & 1.065 & 0.977 & 3.865 & 3.571 & 3.892 & 0.00 & 0.001 & 0.00 & 1.170 & 1.344 & 1.506 \\
	\cline{2-16}
				
	& \multicolumn{15}{c}{\bf $d=1000$, $n = 900$, dense $\beta_0^{de}$, and $x^{(3)}$ so that $m(x^{(3)})=x^{(3)T}\beta_0^{de} = \beta_{0,100}^{(b)}$} \\ \cline{2-16}
	Avg-Bias & 0.251 & 0.250 & 0.226 & 0.345 & 0.301 & 0.277 & 0.413 & 0.560 & 0.513 & 0.268 & 0.253 & 0.240 & 0.483 & 0.465 & 0.507 \\
	Avg-Cov & 0.976 & 0.973 & 0.977 & 0.867 & 0.811 & 0.863 & 0.989 & 0.949 & 0.960 & 0.239 & 0.325 & 0.327 & 0.968 & 0.992 & 0.996 \\
	Avg-Len & 1.565 & 1.571 & 1.333 & 1.307 & 1.003 & 1.022 & 2.625 & 2.187 & 2.728 & 0.481 & 0.512 & 0.454 & 2.665 & 3.058 & 3.428 \\
    \cline{2-16}
				
	& \multicolumn{15}{c}{\bf $d=1000$, $n = 900$, pseudo-dense $\beta_0^{pd}$, and $x^{(3)}$ so that $m(x^{(3)})=x^{(3)T}\beta_0^{pd} = \beta_{0,100}^{(c)}$} \\ \cline{2-16}
	Avg-Bias & 0.172 & 0.172 & 0.172 & 0.238 & 0.229 & 0.209 & 0.406 & 0.495 & 0.521 & 0.071 & 0.070 & 0.070 & 0.259 & 0.285 & 0.319 \\
	Avg-Cov & 0.980 & 0.984 & 0.971 & 0.931 & 0.928 & 0.934 & 0.965 & 0.959 & 0.948 & 0.139 & 0.181 & 0.171 & 0.95 & 0.957 & 0.960 \\
	Avg-Len & 1.094 & 1.094 & 1.003 & 1.103 & 1.052 & 0.994 & 2.628 & 2.263 & 2.614 & 0.140 & 0.170 & 0.144 & 1.280 & 1.470 & 1.647 \\
	\hline
				
	& \multicolumn{15}{c}{\bf $d=1000$, $n = 900$, sparse $\beta_0^{sp}$, and $x^{(4)}$ so that $m(x^{(4)})=x^{(4)T}\beta_0^{sp}$} \\ \cline{2-16}
	Avg-Bias & 0.143 & 0.143 & 0.124 & 0.204 & 0.185 & 0.169 & NA & NA & NA & 0.143 & 0.138 & 0.119 & 0.141 & 0.144 & 0.145 \\
	Avg-Cov & 1.00 & 1.00 & 1.00 & 0.873 & 0.903 & 0.869 & NA & NA & NA & 0.945 & 0.946 & 0.950 & 0.894 & 0.933 & 0.942 \\
	Avg-Len & 2.104 & 2.189 & 1.713 & 0.715 & 0.708 & 0.616 & NA & NA & NA & 0.650 & 0.650 & 0.560 & 0.563 & 0.650 & 0.695 \\
	\cline{2-16}
				
	& \multicolumn{15}{c}{\bf $d=1000$, $n = 900$, dense $\beta_0^{de}$, and $x^{(4)}$ so that $m(x^{(4)})=x^{(4)T}\beta_0^{de}$} \\ \cline{2-16}
	Avg-Bias & 0.220 & 0.220 & 0.163 & 0.385 & 0.272 & 0.284 & NA & NA & NA & 0.420 & 0.301 & 0.235 & 0.448 & 0.339 & 0.302 \\
	Avg-Cov & 1.00 & 1.00 & 1.00 & 0.631 & 0.686 & 0.623 & NA & NA& NA& 0.689 & 0.716 & 0.763 & 0.733 & 0.904 & 0.967 \\
	Avg-Len & 4.693 & 4.788 & 3.424 & 0.868 & 0.667 & 0.644 & NA & NA& NA& 0.985 & 0.797 & 0.684 & 1.283 & 1.483 & 1.582 \\
	\cline{2-16}
				
	& \multicolumn{15}{c}{\bf $d=1000$, $n = 900$, pseudo-dense $\beta_0^{pd}$, and $x^{(4)}$ so that $m(x^{(4)})=x^{(4)T}\beta_0^{pd}$} \\ \cline{2-16}
	Avg-Bias & 0.197 & 0.194 & 0.165 & 0.380 & 0.310 & 0.314 & NA & NA & NA & 0.166 & 0.159 & 0.133 & 0.168 & 0.160 & 0.157 \\
	Avg-Cov & 1.00 & 1.00 & 1.00 & 0.492 & 0.608 & 0.514 & NA & NA & NA & 0.920 & 0.913 & 0.925 & 0.863 & 0.927 & 0.946 \\
	Avg-Len & 2.629 & 2.717 & 2.115 & 0.732 & 0.700 & 0.626 & NA & NA & NA& 0.686 & 0.670 & 0.583 & 0.616 & 0.711 & 0.760 \\
	\hline\hline
\end{tabular}}
\caption{Simulation results under the Toeplitz (or auto-regressive) covariance matrix $\Sigma^{\mathrm{ar}}$, $t_2$ distributed noise, and the MAR setting as in \autoref{subsec:sim_design}. Nearly all the standard errors for the above average quantities are less than 0.01 except for those associated with ``R-Proj'' and thus omitted.}
\label{table:AR_MAR_terr_res}
\end{center}
\end{table}

\begin{figure}[hbtp!]
	\captionsetup[subfigure]{justification=centering}
	\begin{subfigure}[t]{0.49\linewidth}
		\centering
		\includegraphics[width=1\linewidth]{./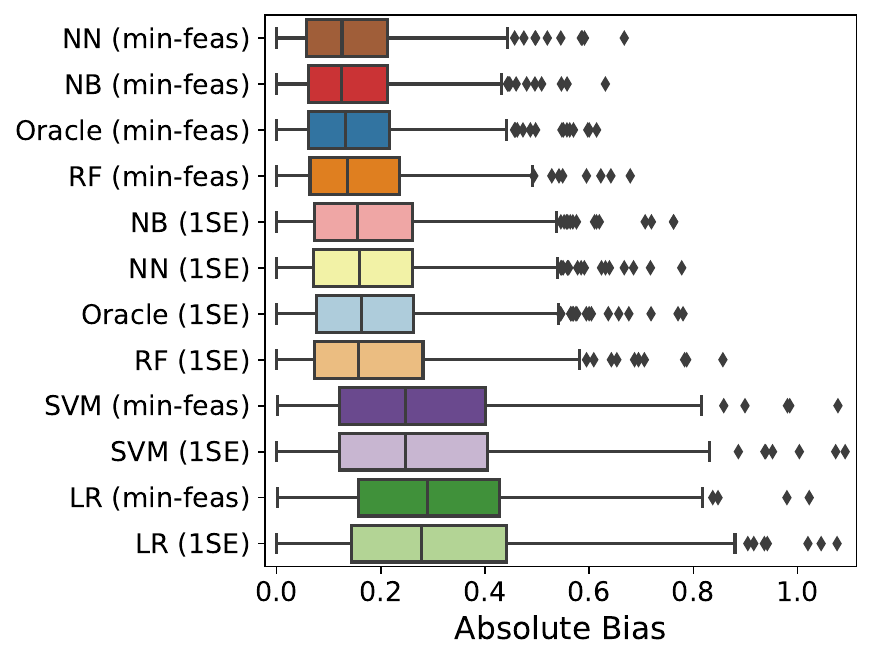}
	\end{subfigure}
	\hfil
	\begin{subfigure}[t]{0.49\linewidth}
		\centering		\includegraphics[width=1\linewidth]{./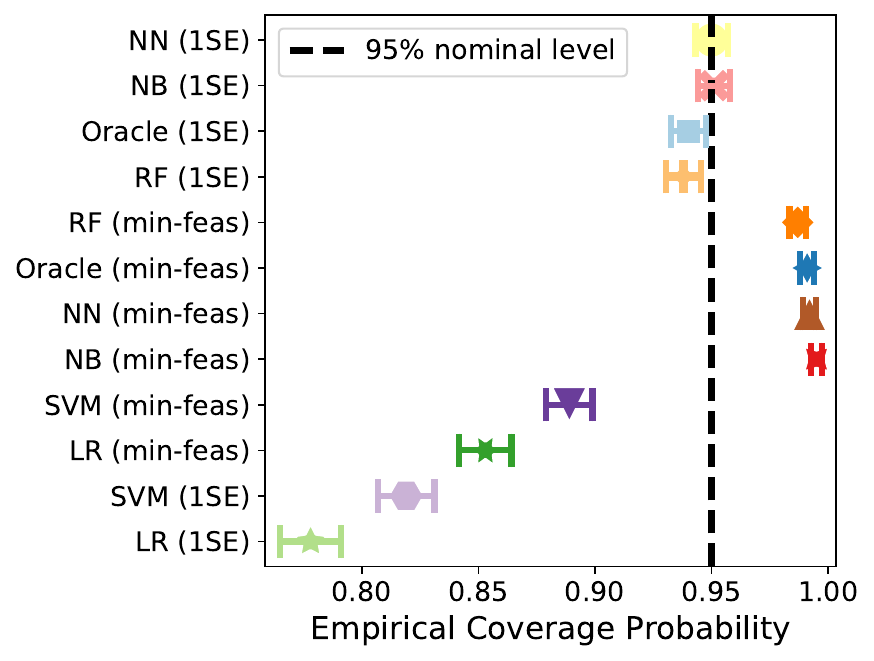}
	\end{subfigure}
	\begin{subfigure}[c]{0.49\linewidth}
		\centering
		\includegraphics[width=1\linewidth]{./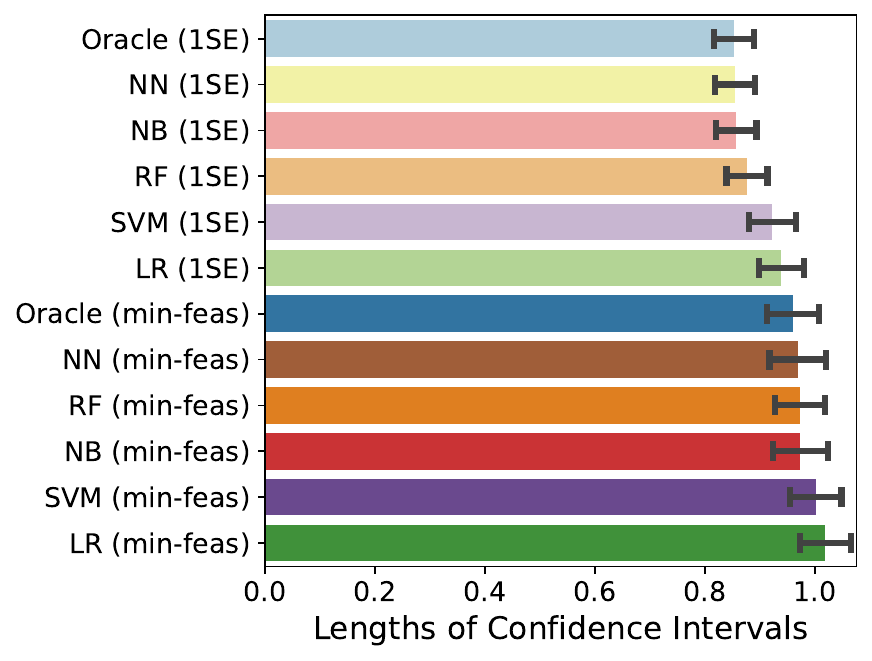}
	\end{subfigure}
	\hfil
	\begin{subfigure}[c]{0.49\linewidth}
		\centering		\includegraphics[width=1\linewidth]{./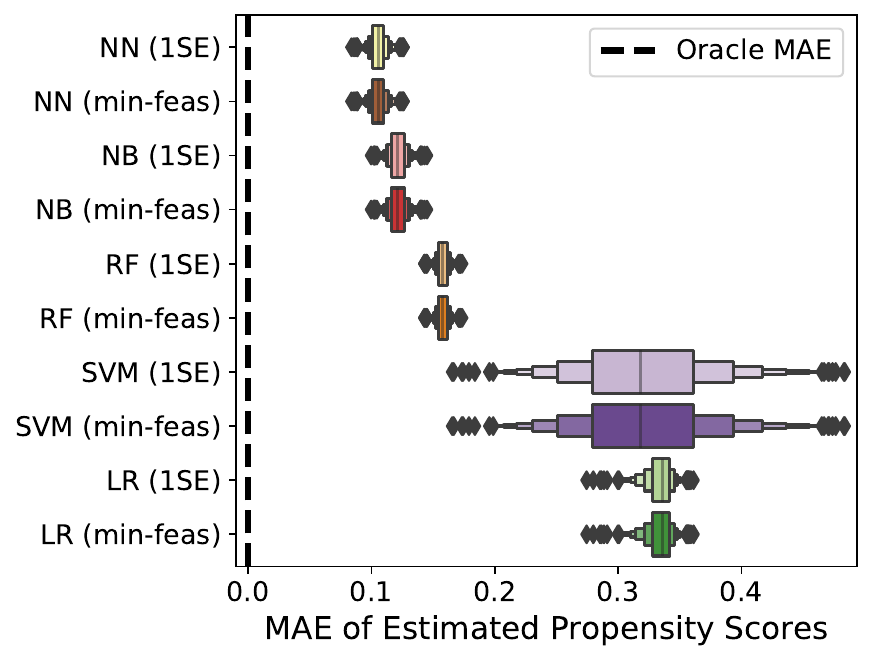}
	\end{subfigure}
	\begin{subfigure}[c]{0.995\linewidth}
		\centering		\includegraphics[width=1\linewidth]{./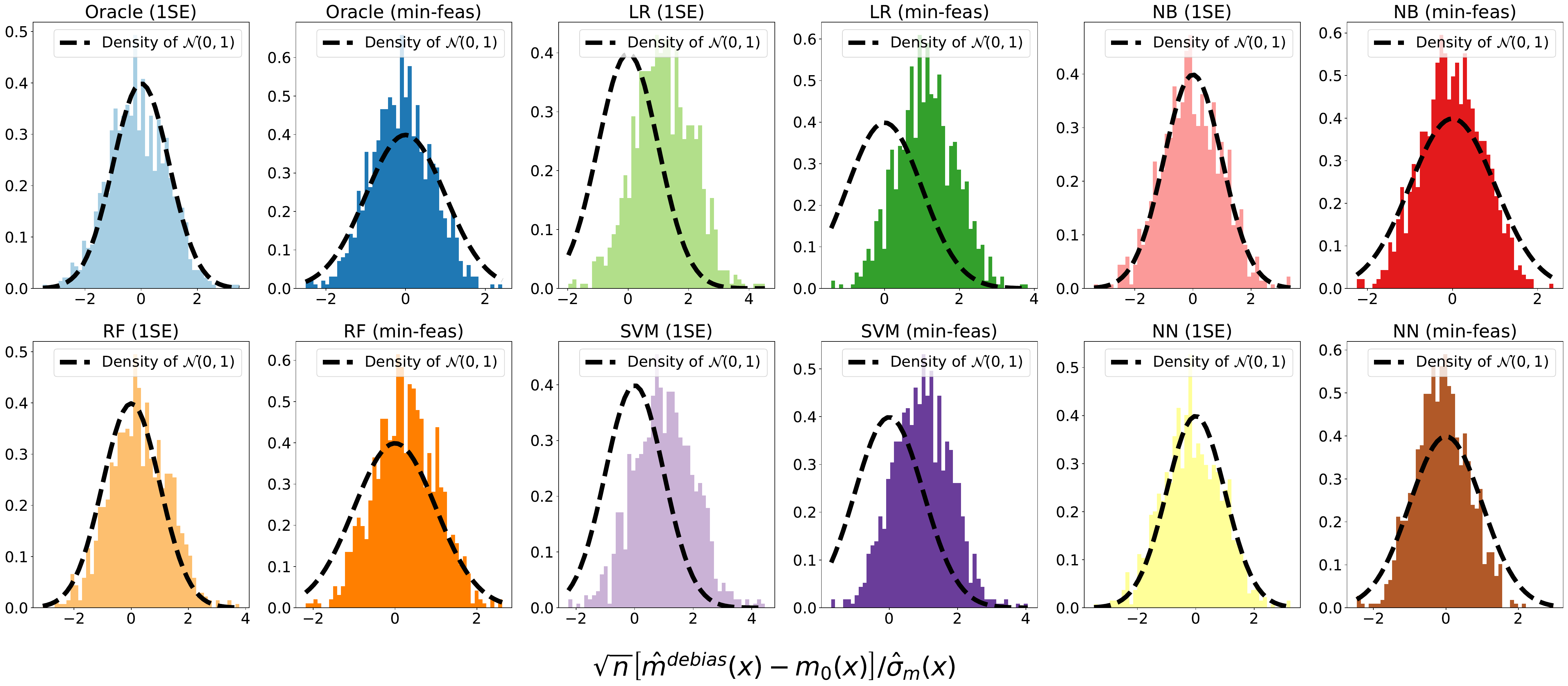}
	\end{subfigure}
	\caption{Simulation results for our debiasing method with dense $\beta_0^{de}$ and sparse $x^{(2)}$ when the propensity scores are estimated by various machine learning methods under the new MAR setting \eqref{MAR_mis}. {\bf Top Left:} Boxplots of the absolute bias. {\bf Top Right:} Coverage probabilities with standard error bars. {\bf Middle Left:} Average lengths of confidence intervals with standard deviation bars. {\bf Middle Right:} Letter-valued plots of the MAE of estimated propensity scores. {\bf Bottom Two Rows:} Histograms of the studentized debiased estimators under different methods. We also order the performance metrics according to their means in the first four panels.}
	\label{fig:Cir_Nonpar_Prop_Score1}
\end{figure}

\begin{table}[hbtp!]
	\begin{center}
	\resizebox{\textwidth}{!}{	
	\begin{tabular}{ccccccccccccc}
		\hline \hline 
		& \begin{tabular}{@{}c@{}} Oracle \\ (1SE)\end{tabular}  & \begin{tabular}{@{}c@{}} Oracle \\ (min-feas)\end{tabular} 
        & \begin{tabular}{@{}c@{}} LR \\ (1SE)\end{tabular} & \begin{tabular}{@{}c@{}} LR \\ (min-feas)\end{tabular}  
        & \begin{tabular}{@{}c@{}} NB \\ (1SE)\end{tabular}  & \begin{tabular}{@{}c@{}} NB \\ (min-feas)\end{tabular} 
		& \begin{tabular}{@{}c@{}} RF \\ (1SE)\end{tabular} & \begin{tabular}{@{}c@{}} RF \\ (min-feas)\end{tabular} 
       & \begin{tabular}{@{}c@{}} SVM \\ (1SE)\end{tabular} 
		& \begin{tabular}{@{}c@{}} SVM \\ (min-feas)\end{tabular} 
       & \begin{tabular}{@{}c@{}} NN \\ (1SE)\end{tabular} & \begin{tabular}{@{}c@{}} NN \\ (min-feas)\end{tabular} \\ \cline{1-13}
      {\bf Avg-MAE} & 0 & 0 & 0.334 & 0.334 & 0.121 & 0.121 & 0.158 & 0.158 & 0.321 & 0.321 & {\bf 0.106} & {\bf 0.106} \\
      \cline{1-13}
		& \multicolumn{12}{c}{\bf $d=1000$, $n = 900$, sparse $\beta_0^{sp}$, and $x^{(1)}$ so that $m(x^{(1)})=x^{(1)T}\beta_0^{sp} = \beta_{0,1}^{(a)}$} \\ \cline{2-13}
		Avg-Bias & 0.034 & 0.033 & 0.048 & 0.047 & 0.035 & 0.034 & 0.037 & 0.036 & 0.045 & 0.044 & 0.034 & {\bf 0.033} \\
		Avg-Cov & 0.902 & 0.928 & 0.815 & 0.824 & 0.899 & 0.924 & 0.885 & 0.918 & 0.828 & 0.846 & 0.901 & {\bf 0.931} \\
		Avg-Len & 0.140 & 0.150 & 0.157 & 0.160 & 0.139 & 0.151 & 0.145 & 0.153 & 0.154 & 0.158 & {\bf 0.134} & 0.139 \\
		\cline{2-13}
				
		& \multicolumn{12}{c}{\bf $d=1000$, $n = 900$, dense $\beta_0^{de}$, and $x^{(1)}$ so that $m(x^{(1)})=x^{(1)T}\beta_0^{de} = \beta_{0,1}^{(b)}$} \\ \cline{2-13}
        Avg-Bias & 0.172 & 0.144 & 0.234 & 0.226 & 0.174 & 0.142 & 0.181 & 0.157 & 0.221 & 0.211 & 0.173 & {\bf 0.138} \\
		Avg-Cov & 0.904 & 0.974 & 0.871 & 0.887 & 0.899 & 0.977 & 0.904 & {\bf 0.961} & 0.875 & 0.899 & 0.902 & 0.978 \\
		Avg-Len & 0.747 & 0.800 & 0.842 & 0.858 & 0.746 & 0.807 & 0.776 & 0.820 & 0.826 & 0.843 & {\bf 0.744} & 0.810 \\
		\cline{2-13}
				
		& \multicolumn{12}{c}{\bf $d=1000$, $n = 900$, pseudo-dense $\beta_0^{pd}$, and $x^{(1)}$ so that $m(x^{(1)})=x^{(1)T}\beta_0^{pd} = \beta_{0,1}^{(c)}$} \\ \cline{2-13}	
        Avg-Bias & 0.050 & 0.045 & 0.068 & 0.066 & 0.051 & 0.045 & 0.053 & 0.049 & 0.064 & 0.062 & 0.050 & {\bf 0.044} \\
		Avg-Cov & 0.913 & 0.966 & 0.861 & 0.879 & 0.909 & 0.968 & 0.911 & {\bf 0.959} & 0.873 & 0.889 & 0.912 & 0.973 \\
		Avg-Len & 0.218 & 0.233 & 0.245 & 0.250 & {\bf 0.217} & 0.235 & 0.226 & 0.239 & 0.241 & 0.246 & {\bf 0.217} & 0.236 \\
		\hline
				
		& \multicolumn{12}{c}{\bf $d=1000$, $n = 900$, sparse $\beta_0^{sp}$, and $x^{(2)}$ so that $m(x^{(2)})=x^{(2)T}\beta_0^{sp}$} \\ \cline{2-13}
        Avg-Bias & 0.035 & 0.037 & 0.050 & 0.053 & {\bf 0.035} & 0.038 & 0.036 & 0.038 & 0.047 & 0.050 & 0.036 & 0.038 \\
		Avg-Cov & 0.924 & 0.945 & 0.845 & 0.850 & 0.924 & 0.940 & 0.937 & {\bf 0.945} & 0.864 & 0.867 & 0.923 & 0.940 \\
		Avg-Len & 0.159 & 0.179 & 0.176 & 0.191 & {\bf 0.160} & 0.182 & 0.164 & 0.182 & 0.172 & 0.187 & {\bf 0.160} & 0.181 \\
		\cline{2-13}
				
		& \multicolumn{12}{c}{\bf $d=1000$, $n = 900$, dense $\beta_0^{de}$, and $x^{(2)}$ so that $m(x^{(2)})=x^{(2)T}\beta_0^{de}$} \\ \cline{2-13}
        Avg-Bias & 0.184 & 0.150 & 0.305 & 0.303 & 0.180 & 0.147 & 0.188 & 0.160 & 0.276 & 0.271 & 0.181 & {\bf 0.146} \\
		Avg-Cov & 0.940 & 0.991 & 0.778 & 0.853 & 0.951 & 0.995 & 0.938 & 0.987 & 0.819 & 0.889 & {\bf 0.950} & 0.992 \\
		Avg-Len & 0.853 & 0.960 & 0.939 & 1.019 & 0.857 & 0.974 & 0.877 & 0.973 & 0.923 & 1.002 & {\bf 0.855} & 0.969 \\
		\cline{2-13}
				
		& \multicolumn{12}{c}{\bf $d=1000$, $n = 900$, pseudo-dense $\beta_0^{pd}$, and $x^{(2)}$ so that $m(x^{(2)})=x^{(2)T}\beta_0^{pd}$} \\ \cline{2-13}
        Avg-Bias & 0.056 & 0.051 & 0.085 & 0.088 & 0.054 & {\bf 0.051} & 0.055 & 0.052 & 0.078 & 0.079 & 0.055 & {\bf 0.051} \\
		Avg-Cov & 0.922 & 0.972 & 0.813 & 0.837 & 0.923 & 0.976 & 0.931 & {\bf 0.967} & 0.836 & 0.866 & 0.920 & 0.968 \\
		Avg-Len & 0.249 & 0.280 & 0.274 & 0.297 & 0.250 & 0.284 & 0.256 & 0.284 & 0.269 & 0.292 & {\bf 0.249} & 0.282 \\
		\hline
				
		& \multicolumn{12}{c}{\bf $d=1000$, $n = 900$, sparse $\beta_0^{sp}$, and $x^{(3)}$ so that $m(x^{(3)})=x^{(3)T}\beta_0^{sp} = \beta_{0,100}^{(a)}$} \\ \cline{2-13}
        Avg-Bias & 0.0307 & 0.033 & 0.033 & 0.034 & {\bf 0.031} & 0.033 & {\bf 0.031} & 0.032 & 0.032 & 0.033 & {\bf 0.031} & 0.034 \\
		Avg-Cov & 0.946 & 0.939 & 0.939 & 0.936 & {\bf 0.948} & 0.938 & 0.945 & 0.944 & 0.941 & 0.938 & 0.946 & 0.938 \\
		Avg-Len & 0.1485 & 0.161 & 0.158 & 0.161 & {\bf 0.149} & 0.162 & 0.150 & 0.160 & 0.155 & 0.159 & {\bf 0.149} & 0.164 \\
				\cline{2-13}
				
		& \multicolumn{12}{c}{\bf $d=1000$, $n = 900$, dense $\beta_0^{de}$, and $x^{(3)}$ so that $m(x^{(3)})=x^{(3)T}\beta_0^{de} = \beta_{0,100}^{(b)}$} \\ \cline{2-13}
        Avg-Bias & 0.164 & 0.139 & 0.162 & 0.155 & 0.163 & 0.137 & 0.162 & 0.138 & 0.155 & 0.149 & 0.162 & {\bf 0.132} \\
		Avg-Cov & 0.945 & 0.991 & 0.967 & 0.976 & 0.947 & 0.991 & 0.947 & 0.989 & 0.968 & 0.978 & {\bf 0.950} & 0.993 \\
		Avg-Len & 0.794 & 0.859 & 0.843 & 0.862 & {\bf 0.795} & 0.867 & 0.804 & 0.858 & 0.829 & 0.849 & 0.796 & 0.875 \\
				\cline{2-13}
				
		& \multicolumn{12}{c}{\bf $d=1000$, $n = 900$, pseudo-dense $\beta_0^{pd}$, and $x^{(3)}$ so that $m(x^{(3)})=x^{(3)T}\beta_0^{pd} = \beta_{0,100}^{(c)}$} \\ \cline{2-13}
        Avg-Bias & 0.047 & 0.045 & 0.049 & 0.048 & 0.047 & 0.046 & 0.047 & {\bf 0.045} & 0.048 & 0.047 & 0.047 & {\bf 0.045} \\
		Avg-Cov & 0.956 & 0.973 & 0.966 & 0.967 & {\bf 0.957} & 0.974 & 0.963 & 0.974 & 0.966 & 0.967 & 0.959 & 0.975 \\
		Avg-Len & 0.232 & 0.250 & 0.246 & 0.251 & {\bf 0.232} & 0.253 & 0.234 & 0.250 & 0.242 & 0.247 & {\bf 0.232} & 0.255 \\
		\hline
				
		& \multicolumn{12}{c}{\bf $d=1000$, $n = 900$, sparse $\beta_0^{sp}$, and $x^{(4)}$ so that $m(x^{(4)})=x^{(4)T}\beta_0^{sp}$} \\ \cline{2-13}
        Avg-Bias & 0.030 & 0.031 & 0.041 & 0.041 & {\bf 0.030} & 0.031 & 0.031 & 0.032 & 0.038 & 0.038 & {\bf 0.030} & 0.031 \\
		Avg-Cov & 0.938 & 0.949 & 0.879 & 0.883 & 0.940 & {\bf 0.949} & 0.944 & 0.945 & 0.898 & 0.903 & 0.938 & 0.948 \\
		Avg-Len & 0.1392 & 0.149 & 0.158 & 0.161 & {\bf 0.139} & 0.150 & 0.145 & 0.153 & 0.155 & 0.158 & {\bf 0.139} & 0.151 \\
				\cline{2-13}
				
		& \multicolumn{12}{c}{\bf $d=1000$, $n = 900$, dense $\beta_0^{de}$, and $x^{(4)}$ so that $m(x^{(4)})=x^{(4)T}\beta_0^{de}$} \\ \cline{2-13}
        Avg-Bias & 0.166 & 0.145 & 0.193 & 0.192 & 0.165 & {\bf 0.139} & 0.157 & 0.141 & 0.180 & 0.177 & 0.166 & {\bf 0.139} \\
		Avg-Cov & 0.934 & 0.972 & 0.927 & 0.932 & 0.933 & 0.976 & {\bf 0.945} & 0.973 & 0.936 & {\bf 0.945} & 0.935 & 0.979 \\
		Avg-Len & 0.745 & 0.798 & 0.845 & 0.860 & 0.744 & 0.804 & 0.775 & 0.818 & 0.829 & 0.845 & {\bf 0.742} & 0.806 \\
				\cline{2-13}
				
		& \multicolumn{12}{c}{\bf $d=1000$, $n = 900$, pseudo-dense $\beta_0^{pd}$, and $x^{(4)}$ so that $m(x^{(4)})=x^{(4)T}\beta_0^{pd}$} \\ \cline{2-13}
		Avg-Bias & 0.050 & 0.046 & 0.055 & 0.056 & 0.050 & {\bf 0.045} & 0.047 & {\bf 0.045} & 0.053 & 0.052 & 0.050 & {\bf 0.045} \\
		Avg-Cov & 0.915 & 0.960 & 0.931 & 0.932 & 0.918 & 0.970 & {\bf 0.943} & 0.971 & 0.932 & 0.937 & 0.917 & 0.967 \\
		Avg-Len & 0.217 & 0.232 & 0.246 & 0.251 & 0.217 & 0.234 & 0.226 & 0.238 & 0.242 & 0.246 & {\bf 0.216} & 0.235 \\
		\hline\hline
	\end{tabular}}
	\caption{Simulation results for our proposed debiasing method when the propensity scores are estimated by various machine learning methods under the circulant symmetric covariance matrix $\Sigma^{\mathrm{cs}}$, Gaussian noises $\mathcal{N}(0,1)$, and the new MAR setting \eqref{MAR_mis}. Note that the ``Avg-MAE'' for propensity score estimation is independent of the choices of $\beta_0$ and $x$, so we place it on the top row. We also highlight the best results in terms of the evaluation metrics for each simulation design, excluding those with the oracle propensity scores, in boldface.}
	\label{table:CirSym_MAR_Non_Prop_Score}
	\end{center}
\end{table}

\subsection{Simulations Under the MCAR Setting}
\label{subapp:mcar_res}

We follow our comparative simulation designs in \autoref{subsec:sim_design} but define the missingness indicators $R_i,i=1,...,n$ through a MCAR mechanism as $\mathrm{P}(R_i=1)=0.7$ for $i=1,...,n$. Under the MCAR setting, both the circulant symmetric matrix $\Sigma^{\mathrm{cs}}$ and the Toeplitz matrix $\Sigma^{\mathrm{ar}}$ will have their inverses satisfying our sufficient conditions for the sparse approximation to the population dual solution $\ell_0(x)$ in Lemma~\ref{lem:suff_sparse_dual} as we have illuminated in Example~\ref{example:linear_reg_mcar} as well. The simulation results are shown in \autoref{table:cirsym_MCAR_res} and \autoref{table:AR_MCAR_res} under the Gaussian noise $\mathcal{N}(0,1)$ setting, and the conclusions are similar to the main takeaways from the MAR simulation studies in \autoref{subsec:sim_results}. We omit the simulation results for other types of the noise distribution, since they share similar patterns with the Gaussian noise setting and we have demonstrated in \autoref{subsec:heavy_noises} that our proposed debiasing method is robust to other heavy-tailed noise distributions.

\begin{table}[hbtp!]
	\begin{center}
	\resizebox{\textwidth}{!}{	
	\begin{tabular}{cccccccccccccccc}
	\hline \hline 
	& \begin{tabular}{@{}c@{}} DL-Jav \\ (CC)\end{tabular}  & \begin{tabular}{@{}c@{}} DL-Jav \\ (IPW)\end{tabular} & \begin{tabular}{@{}c@{}} DL-Jav \\ (Oracle)\end{tabular} 
	& \begin{tabular}{@{}c@{}} DL-vdG \\ (CC)\end{tabular}  & \begin{tabular}{@{}c@{}} DL-vdG \\ (IPW)\end{tabular} & \begin{tabular}{@{}c@{}} DL-vdG \\ (Oracle)\end{tabular} 
	& \begin{tabular}{@{}c@{}} R-Proj \\ (CC)\end{tabular}  & \begin{tabular}{@{}c@{}} R-Proj \\ (IPW)\end{tabular}  & \begin{tabular}{@{}c@{}} R-Proj \\ (Oracle)\end{tabular} 
	& \begin{tabular}{@{}c@{}} Refit \\ (CC)\end{tabular} & \begin{tabular}{@{}c@{}} Refit \\ (IPW)\end{tabular} & \begin{tabular}{@{}c@{}} Refit \\ (Oracle)\end{tabular} 
	& \begin{tabular}{@{}c@{}} Debias \\ (min-CV)\end{tabular} & \begin{tabular}{@{}c@{}} Debias \\ (1SE)\end{tabular} & \begin{tabular}{@{}c@{}} Debias \\ (min-feas)\end{tabular} \\ \cline{2-16}
	& \multicolumn{15}{c}{\bf $d=1000$, $n = 900$, sparse $\beta_0^{sp}$, and $x^{(1)}$ so that $m(x^{(1)})=x^{(1)T}\beta_0^{sp} = \beta_{0,1}^{(a)}$} \\ \cline{2-16}
	Avg-Bias & 0.033 & 0.033 & 0.028 & 0.192 & 0.340 & 0.187 & 0.104  & 0.283 & 0.315 & 0.033 & 0.034  & 0.028  & 0.043 & 0.039 & 0.039 \\ 
	Avg-Cov & 0.958 & 0.958 & 0.962 & 0.890 & 0.891 & 0.904 & 0.969 & 0.955 & 0.940 & 0.940 & 0.943 & 0.948 & 0.847 & 0.905 & 0.912\\
	Avg-Len & 0.173 & 0.173 & 0.145 & 0.201 & 0.200 & 0.169 & 0.379 & 0.376 & 0.638 & 0.159 & 0.157 & 0.133 & 0.149 & 0.161 & 0.164\\ 
    \cline{2-16}
				
	& \multicolumn{15}{c}{\bf $d=1000$, $n = 900$, dense $\beta_0^{de}$, and $x^{(1)}$ so that $m(x^{(1)})=x^{(1)T}\beta_0^{de} = \beta_{0,1}^{(b)}$} \\ \cline{2-16}
	Avg-Bias & 0.145 & 0.145 & 0.108 & 0.182 & 0.157 & 0.442 & 0.096 & 0.202 & 0.207 & 0.175 & 0.174 & 0.139 & 0.229 & 0.186 & 0.180\\ 
	Avg-Cov & 0.993 & 0.994 & 0.999 & 0.882 & 0.869 & 0.895 & 0.998 & 0.997 & 0.996 & 0.919 & 0.871 & 0.916 & 0.851 & 0.941 & 0.949 \\ 
	Avg-Len & 0.969 & 0.969 & 0.844 & 0.743 & 0.615 & 0.609 & 1.255 & 2.215 & 1.784 & 0.774 & 0.678 & 0.608 & 0.808 & 0.872 & 0.888\\ 
				\cline{2-16}
				
	& \multicolumn{15}{c}{\bf $d=1000$, $n = 900$, pseudo-dense $\beta_0^{pd}$, and $x^{(1)}$ so that $m(x^{(1)})=x^{(1)T}\beta_0^{pd} = \beta_{0,1}^{(c)}$} \\ \cline{2-16}
	Avg-Bias & 0.044 & 0.044 & 0.035 & 0.055 & 0.093 & 0.088 & 0.059 & 0.104 & 0.088 & 0.047 & 0.046 & 0.037 & 0.063 & 0.055 & 0.054\\  
	Avg-Cov & 0.985 & 0.985 & 0.996 & 0.888 & 0.889 & 0.901 & 0.989 & 0.985 & 0.985 & 0.936 & 0.932 & 0.939 & 0.855 & 0.937 & 0.942\\
	Avg-Len & 0.280 & 0.281 & 0.237 & 0.227 & 0.195 & 0.359 & 0.182 & 0.347 & 0.527 & 0.217 & 0.200 & 0.171 & 0.237 & 0.256 & 0.260 \\
		\hline
				
	& \multicolumn{15}{c}{\bf $d=1000$, $n = 900$, sparse $\beta_0^{sp}$, and $x^{(2)}$ so that $m(x^{(2)})=x^{(2)T}\beta_0^{sp}$} \\ \cline{2-16}
	Avg-Bias & 0.042 & 0.042 & 0.037 & 0.135 & 0.461 & 0.109 & NA & NA & NA & 0.035 & 0.036 & 0.030 & 0.037 & 0.038 & 0.040 \\
	Avg-Cov & 0.982 & 0.989 & 0.988 & 0.814 & 0.854 & 0.828 & NA & NA & NA & 0.941 & 0.933 & 0.955 & 0.924 & 0.939 & 0.950 \\
	Avg-Len & 0.258 & 0.286 & 0.236 & 0.262 & 0.262 & 0.189 & NA & NA & NA & 0.173 & 0.171 & 0.144 & 0.171 & 0.180 & 0.196\\ 
	\cline{2-16}
				
	& \multicolumn{15}{c}{\bf $d=1000$, $n = 900$, dense $\beta_0^{de}$, and $x^{(2)}$ so that $m(x^{(2)})=x^{(2)T}\beta_0^{de}$} \\ \cline{2-16}
	Avg-Bias & 0.172 & 0.172 & 0.132 & 0.332 & 0.222 & 0.241 & NA & NA & NA & 0.239 & 0.240 & 0.179 & 0.223 & 0.214 & 0.193\\
	Avg-Cov & 1.00 & 1.00 & 1.00 & 0.759 & 0.843 & 0.808 & NA & NA & NA & 0.859 & 0.812 & 0.896 & 0.905 & 0.932 & 0.967\\
	Avg-Len & 1.639 & 1.758 & 1.514 & 0.978 & 0.809 & 0.763 & NA & NA & NA & 0.913 & 0.807 & 0.724 & 0.922 & 0.975 & 1.057 \\ 
		\cline{2-16}
				
	& \multicolumn{15}{c}{\bf $d=1000$, $n = 900$, pseudo-dense $\beta_0^{pd}$, and $x^{(2)}$ so that $m(x^{(2)})=x^{(2)T}\beta_0^{pd}$} \\ \cline{2-16}
	Avg-Bias & 0.053 & 0.053 & 0.046 & 0.102 & 0.130 & 0.151 & NA & NA & NA & 0.058 & 0.057 & 0.046 & 0.066 & 0.064 & 0.061 \\
     Avg-Cov & 1.00 & 1.00 & 1.00 & 0.778 & 0.853 & 0.799 & NA & NA & NA & 0.920 & 0.920 & 0.917 & 0.901 & 0.926 & 0.962\\
   Avg-Len & 0.461 & 0.498 & 0.414 & 0.299 & 0.256 & 0.238 & NA & NA & NA & 0.260 & 0.240 & 0.205 & 0.270 & 0.286 & 0.310\\
		\hline
				
	& \multicolumn{15}{c}{\bf $d=1000$, $n = 900$, sparse $\beta_0^{sp}$, and $x^{(3)}$ so that $m(x^{(3)})=x^{(3)T}\beta_0^{sp}=\beta_{0,100}^{(a)}$} \\ \cline{2-16}
	Avg-Bias & 0.026 & 0.026 & 0.022 & 0.813 & 0.587 & 0.377 & 0.320 & 1.289 & 0.740 & 0.00 & 0.00& 0.00 & 0.032 & 0.034 & 0.035 \\
	Avg-Cov & 0.987 & 0.987 & 0.989 & 0.930 & 0.927 & 0.921 & 0.972 & 0.963 & 0.937 & 1.00 & 0.999 & 1.00 & 0.934 & 0.944 & 0.944\\
	Avg-Len & 0.173 & 0.173 & 0.145 & 5.996 & 4.376 & 0.771 & 2.341 & 5.584 & 1.177 & 0.00 & 0.00 & 0.00 & 0.149 & 0.161 & 0.165\\
		\cline{2-16}
				
	& \multicolumn{15}{c}{\bf $d=1000$, $n = 900$, dense $\beta_0^{de}$, and $x^{(3)}$ so that $m(x^{(3)})=x^{(3)T}\beta_0^{de}=\beta_{0,100}^{(b)}$} \\ \cline{2-16}
	Avg-Bias & 0.102 & 0.102 & 0.090 & 0.208 & 0.215 & 0.211 & 0.122 & 0.196 & 0.084 & 0.187 & 0.195 & 0.192 & 0.200 & 0.175 & 0.170 \\
	Avg-Cov & 0.998 & 0.998 & 0.999 & 0.920 & 0.909 & 0.904 & 0.998 & 0.997 & 0.997 & 0.003 & 0.050 & 0.007 & 0.897 & 0.957 & 0.965\\
	Avg-Len & 0.969 & 0.970 & 0.844 & 3.016 & 3.082 & 0.598 & 1.343 & 2.516 & 1.728 & 0.010 & 0.059 & 0.025 & 0.807 & 0.870 & 0.891\\
	\cline{2-16}
				
	& \multicolumn{15}{c}{\bf $d=1000$, $n = 900$, pseudo-dense $\beta_0^{pd}$, and $x^{(3)}$ so that $m(x^{(3)})=x^{(3)T}\beta_0^{pd}=\beta_{0,100}^{(c)}$} \\ \cline{2-16}
	Avg-Bias & 0.032 & 0.032 & 0.028 & 0.270 & 0.184 & 0.344 & 0.333 & 0.369 & 0.316 & 0.040 & 0.041 & 0.040 & 0.054 & 0.052 & 0.051\\
	Avg-Cov & 0.999 & 0.999 & 0.998 & 0.938 & 0.934 & 0.924 & 0.991 & 0.990 & 0.987 & 0.001 & 0.007 & 0.00 & 0.923 & 0.947 & 0.952\\
	Avg-Len & 0.281 & 0.281 & 0.237 & 1.422 & 1.623 & 56.095 & 2.153 & 1.071 & 28.765 & 0.001 & 0.006 & 0.002 & 0.237 & 0.255 & 0.261\\
	\hline
				
	& \multicolumn{15}{c}{\bf $d=1000$, $n = 900$, sparse $\beta_0^{sp}$, and $x^{(4)}$ so that $m(x^{(4)})=x^{(4)T}\beta_0^{sp}$} \\ \cline{2-16}
	Avg-Bias & 0.034 & 0.034 & 0.029 & 0.221 & 0.375 & 0.196 & NA & NA & NA & 0.032 & 0.033 & 0.025 & 0.034 & 0.034 & 0.035 \\
				
	Avg-Cov & 0.973 & 0.984 & 0.981 & 0.863 & 0.887 & 0.876 & NA & NA & NA & 0.959 & 0.949 & 0.963 & 0.920 & 0.938 & 0.941\\
	Avg-Len & 0.195 & 0.217 & 0.180 & 0.216 & 0.214 & 0.180 & NA & NA & NA & 0.159 & 0.157 & 0.133 & 0.150 & 0.162 & 0.165\\
	\cline{2-16}
		
	& \multicolumn{15}{c}{\bf $d=1000$, $n = 900$, dense $\beta_0^{de}$, and $x^{(4)}$ so that $m(x^{(4)})=x^{(4)T}\beta_0^{de}$} \\ \cline{2-16}
	Avg-Bias & 0.143 & 0.143 & 0.109 & 0.228 & 0.167 & 0.335 & NA & NA & NA & 0.184 & 0.194 & 0.147 & 0.196 & 0.174 & 0.170\\
    Avg-Cov & 1.00 & 1.00 & 1.00 & 0.831 & 0.880 & 0.872 & NA & NA & NA & 0.915 & 0.852 & 0.893 & 0.910 & 0.957 & 0.963\\
	Avg-Len & 1.203 & 1.299 & 1.122 & 0.799 & 0.661 & 0.637 & NA & NA & NA & 0.775 & 0.679 & 0.608 & 0.810 & 0.877 & 0.891\\ \cline{2-16}
				
	& \multicolumn{15}{c}{\bf $d=1000$, $n = 900$, pseudo-dense $\beta_0^{pd}$, and $x^{(4)}$ so that $m(x^{(4)})=x^{(4)T}\beta_0^{pd}$} \\ \cline{2-16}
	Avg-Bias & 0.043 & 0.043 & 0.036 & 0.068 & 0.107 & 0.108 & NA & NA & NA & 0.049 & 0.048 & 0.037 & 0.060 & 0.054 & 0.053\\
	Avg-Cov & 0.999 & 1.00 & 0.999 & 0.848 & 0.881 & 0.857 & NA & NA & NA & 0.930 & 0.904 & 0.934 & 0.887 & 0.942 & 0.948\\
	Avg-Len & 0.340 & 0.370 & 0.309 & 0.244 & 0.209 & 0.195 & NA & NA & NA & 0.217 & 0.200 & 0.171 & 0.238 & 0.257 & 0.261\\
				\hline\hline 
		\end{tabular}}
	\caption{Simulation results under the circulant symmetric covariance matrix $\Sigma^{\mathrm{cs}}$, Gaussian noise $\mathcal{N}(0,1)$, and the MCAR setting. Nearly all the standard errors for the above average quantities are less than 0.01 except for those associated with ``DL-vdG'' and ``R-Proj'' and thus omitted.}
	\label{table:cirsym_MCAR_res}
\end{center}
\end{table}

\begin{table}[hbtp!]
	\begin{center}
	\resizebox{\textwidth}{!}{	
	\begin{tabular}{cccccccccccccccc}
	\hline \hline 
	& \begin{tabular}{@{}c@{}} DL-Jav \\ (CC)\end{tabular}  & \begin{tabular}{@{}c@{}} DL-Jav \\ (IPW)\end{tabular} & \begin{tabular}{@{}c@{}} DL-Jav \\ (Oracle)\end{tabular} 
	& \begin{tabular}{@{}c@{}} DL-vdG \\ (CC)\end{tabular}  & \begin{tabular}{@{}c@{}} DL-vdG \\ (IPW)\end{tabular} & \begin{tabular}{@{}c@{}} DL-vdG \\ (Oracle)\end{tabular} 
	& \begin{tabular}{@{}c@{}} R-Proj \\ (CC)\end{tabular}  & \begin{tabular}{@{}c@{}} R-Proj \\ (IPW)\end{tabular}  & \begin{tabular}{@{}c@{}} R-Proj \\ (Oracle)\end{tabular} 
	& \begin{tabular}{@{}c@{}} Refit \\ (CC)\end{tabular} & \begin{tabular}{@{}c@{}} Refit \\ (IPW)\end{tabular} & \begin{tabular}{@{}c@{}} Refit \\ (Oracle)\end{tabular} 
	& \begin{tabular}{@{}c@{}} Debias \\ (min-CV)\end{tabular} & \begin{tabular}{@{}c@{}} Debias \\ (1SE)\end{tabular} & \begin{tabular}{@{}c@{}} Debias \\ (min-feas)\end{tabular} \\ \cline{2-16}
	& \multicolumn{15}{c}{\bf $d=1000$, $n = 900$, sparse $\beta_0^{sp}$, and $x^{(1)}$ so that $m(x^{(1)})=x^{(1)T}\beta_0^{sp} = \beta_{0,1}^{(a)}$} \\ \cline{2-16}
	Avg-Bias & 0.072 & 0.072 & 0.062 & 0.258 & 0.371 & 0.565 & 0.275 & 0.444 & 0.608 & 0.073 & 0.074 & 0.060 & 0.074 & 0.076 & 0.089 \\
	Avg-Cov & 0.870 & 0.870 & 0.884 & 0.899 & 0.891 & 0.915 & 0.963 & 0.955 & 0.950 & 0.948 & 0.944 & 0.956 & 0.883 & 0.914 & 0.929 \\
	Avg-Len & 0.282 & 0.282 & 0.248 & 0.439 & 0.620 & 0.610 & 0.885 & 1.413 & 1.917 & 0.360 & 0.354 & 0.301 & 0.301 & 0.332 & 0.420 \\
	\cline{2-16}
				
	& \multicolumn{15}{c}{\bf $d=1000$, $n = 900$, dense $\beta_0^{de}$, and $x^{(1)}$ so that $m(x^{(1)})=x^{(1)T}\beta_0^{de} = \beta_{0,1}^{(b)}$} \\ \cline{2-16}
	Avg-Bias & 0.298 & 0.297 & 0.157 & 0.297 & 0.236 & 0.567 & 0.204 & 0.200 & 0.372 & 0.454 & 0.264 & 0.178 & 0.489 & 0.430 & 0.404 \\
	Avg-Cov & 0.748 & 0.756 & 0.913 & 0.552 & 0.398 & 0.536 & 0.941 & 0.759 & 0.945 & 0.772 & 0.751 & 0.824 & 0.850 & 0.932 & 0.983 \\
	Avg-Len & 0.846 & 0.862 & 0.673 & 0.595 & 0.326 & 0.536 & 1.004 & 0.588 & 0.833 & 1.241 & 0.775 & 0.593 & 1.760 & 1.940 & 2.459 \\	
	\cline{2-16}
				
	& \multicolumn{15}{c}{\bf $d=1000$, $n = 900$, pseudo-dense $\beta_0^{pd}$, and $x^{(1)}$ so that $m(x^{(1)})=x^{(1)T}\beta_0^{pd} = \beta_{0,1}^{(c)}$} \\ \cline{2-16}	
	Avg-Bias & 0.095 & 0.095 & 0.074 & 0.104 & 0.105 & 0.132 & 0.137 & 0.174 & 0.265 & 0.097 & 0.095 & 0.075 & 0.105 & 0.098 & 0.106 \\
	Avg-Cov & 0.847 & 0.847 & 0.894 & 0.832 & 0.799 & 0.852 & 0.954 & 0.920 & 0.925 & 0.911 & 0.900 & 0.921 & 0.867 & 0.929 & 0.964 \\
	Avg-Len & 0.345 & 0.346 & 0.301 & 0.378 & 0.310 & 0.462 & 0.659 & 0.647 & 1.490 & 0.426 & 0.395 & 0.335 & 0.400 & 0.440 & 0.558\\
	\hline
				
	& \multicolumn{15}{c}{\bf $d=1000$, $n = 900$, sparse $\beta_0^{sp}$, and $x^{(2)}$ so that $m(x^{(2)})=x^{(2)T}\beta_0^{sp}$} \\ \cline{2-16}
	Avg-Bias & 0.059 & 0.059 & 0.052 & 0.114 & 0.298 & 0.295 & NA & NA & NA & 0.050 & 0.051 & 0.041 & 0.058 & 0.062 & 0.074 \\
	Avg-Cov & 0.982 & 0.990 & 0.993 & 0.900 & 0.893 & 0.886 & NA & NA & NA & 0.935 & 0.937 & 0.952 & 0.925 & 0.933 & 0.942 \\
	Avg-Len & 0.357 & 0.391 & 0.319 & 0.325 & 0.514 & 0.320 & NA & NA & NA & 0.241 & 0.241 & 0.202 & 0.260 & 0.292 & 0.354 \\
	\cline{2-16}
				
	& \multicolumn{15}{c}{\bf $d=1000$, $n = 900$, dense $\beta_0^{de}$, and $x^{(2)}$ so that $m(x^{(2)})=x^{(2)T}\beta_0^{de}$} \\ \cline{2-16}
	Avg-Bias & 0.257 & 0.258 & 0.135 & 0.262 & 0.210 & 0.253 & NA & NA & NA & 0.396 & 0.256 & 0.177 & 0.407 & 0.361 & 0.350 \\
	Avg-Cov & 0.967 & 0.979 & 1.00 & 0.600 & 0.435 & 0.532 & NA & NA & NA & 0.709 & 0.694 & 0.750 & 0.862 & 0.946 & 0.977 \\
	Avg-Len & 1.397 & 1.494 & 1.457 & 0.560 & 0.307 & 0.490 & NA & NA & NA & 1.067 & 0.662 & 0.519 & 1.519 & 1.708 & 2.071 \\
	\cline{2-16}
				
	& \multicolumn{15}{c}{\bf $d=1000$, $n = 900$, pseudo-dense $\beta_0^{pd}$, and $x^{(2)}$ so that $m(x^{(2)})=x^{(2)T}\beta_0^{pd}$} \\ \cline{2-16}
	Avg-Bias & 0.084 & 0.084 & 0.069 & 0.099 & 0.101 & 0.199 & NA & NA & NA & 0.087 & 0.084 & 0.067 & 0.091 & 0.087 & 0.091 \\
	Avg-Cov & 0.991 & 0.996 & 0.996 & 0.843 & 0.821 & 0.840 & NA & NA & NA & 0.915 & 0.908 & 0.921 & 0.867 & 0.914 & 0.951 \\
	Avg-Len & 0.530 & 0.565 & 0.456 & 0.355 & 0.292 & 0.427 & NA & NA & NA & 0.382 & 0.354 & 0.301 & 0.345 & 0.388 & 0.470 \\
	\hline
				
	& \multicolumn{15}{c}{\bf $d=1000$, $n = 900$, sparse $\beta_0^{sp}$, and $x^{(3)}$ so that $m(x^{(3)})=x^{(3)T}\beta_0^{sp} = \beta_{0,100}^{(a)}$} \\ \cline{2-16}
	Avg-Bias & 0.054 & 0.054 & 0.051 & 1.346 & 1.002 & 1.471 & 0.421 & 1.217 & 0.733 & 0.00 & 0.00 & 0.00 & 0.077 & 0.088 & 0.097 \\
	Avg-Cov & 0.990 & 0.990 & 0.981 & 0.954 & 0.953 & 0.947 & 0.972 & 0.967 & 0.958 & 1.00 & 1.00 & 1.00 & 0.953 & 0.956 & 0.955 \\
	Avg-Len & 0.352 & 0.352 & 0.312 & 12.996 & 6.586 & 185.59 & 3.970 & 8.027 & 8.075 & 0.00 & 0.00 & 0.00 & 0.380 & 0.436 & 0.489 \\
	\cline{2-16}
				
	& \multicolumn{15}{c}{\bf $d=1000$, $n = 900$, dense $\beta_0^{de}$, and $x^{(3)}$ so that $m(x^{(3)})=x^{(3)T}\beta_0^{de} = \beta_{0,100}^{(b)}$} \\ \cline{2-16}
	Avg-Bias & 0.173 & 0.173 & 0.128 & 0.291 & 0.749 & 0.498 & 0.178 & 0.221 & 0.279 & 0.255 & 0.217 & 0.200 & 0.399 & 0.371 & 0.392 \\
	Avg-Cov & 0.978 & 0.980 & 0.994 & 0.786 & 0.519 & 0.613 & 0.988 & 0.889 & 0.954 & 0.229 & 0.337 & 0.391 & 0.969 & 0.993 & 0.995 \\
	Avg-Len & 1.078 & 1.097 & 0.858 & 1.698 & 72.604 & 0.647 & 1.288 & 1.680 & 1.097 & 0.388 & 0.353 & 0.331 & 2.223 & 2.550 & 2.858 \\
	\cline{2-16}
				
	& \multicolumn{15}{c}{\bf $d=1000$, $n = 900$, pseudo-dense $\beta_0^{pd}$, and $x^{(3)}$ so that $m(x^{(3)})=x^{(3)T}\beta_0^{pd} = \beta_{0,100}^{(c)}$} \\ \cline{2-16}
	Avg-Bias & 0.065 & 0.065 & 0.062 & 0.212 & 0.196 & 0.234 & 0.302 & 0.209 & 0.394 & 0.059 & 0.057 & 0.056 & 0.103 & 0.107 & 0.117 \\
	Avg-Cov & 0.983 & 0.983 & 0.986 & 0.939 & 0.929 & 0.930 & 0.984 & 0.970 & 0.963 & 0.303 & 0.346 & 0.377 & 0.948 & 0.972 & 0.980 \\
	Avg-Len & 0.437 & 0.438 & 0.382 & 9.049 & 1.000 & 1.433 & 2.462 & 3.289 & 1.953 & 0.143 & 0.149 & 0.143 & 0.505 & 0.579 & 0.649 \\
	\hline
				
	& \multicolumn{15}{c}{\bf $d=1000$, $n = 900$, sparse $\beta_0^{sp}$, and $x^{(4)}$ so that $m(x^{(4)})=x^{(4)T}\beta_0^{sp}$} \\ \cline{2-16}
	Avg-Bias & 0.050 & 0.050 & 0.042 & 0.120 & 0.182 & 0.174 & NA & NA & NA & 0.045 & 0.045 & 0.037 & 0.045 & 0.046 & 0.048 \\
	Avg-Cov & 1.00 & 1.00 & 1.00 & 0.845 & 0.866 & 0.859 & NA & NA & NA & 0.944 & 0.938 & 0.960 & 0.890 & 0.924 & 0.931 \\
	Avg-Len & 0.736 & 0.767 & 0.570 & 0.274 & 0.309 & 0.247 & NA & NA & NA & 0.218 & 0.217 & 0.182 & 0.183 & 0.212 & 0.226 \\
	\cline{2-16}
				
	& \multicolumn{15}{c}{\bf $d=1000$, $n = 900$, dense $\beta_0^{de}$, and $x^{(4)}$ so that $m(x^{(4)})=x^{(4)T}\beta_0^{de}$} \\ \cline{2-16}
	Avg-Bias & 0.181 & 0.182 & 0.095 & 0.204 & 0.155 & 0.174 & NA & NA & NA & 0.376 & 0.228 & 0.159 & 0.436 & 0.323 & 0.285 \\
	Avg-Cov & 1.00 & 1.00 & 1.00 & 0.679 & 0.563 & 0.683 & NA & NA & NA & 0.616 & 0.560 & 0.613 & 0.650 & 0.867 & 0.944 \\
	Avg-Len & 3.694 & 3.792 & 2.537 & 0.471 & 0.259 & 0.285 & NA & NA & NA & 0.782 & 0.470 & 0.359 & 1.070 & 1.238 & 1.320 \\
	\cline{2-16}
				
	& \multicolumn{15}{c}{\bf $d=1000$, $n = 900$, pseudo-dense $\beta_0^{pd}$, and $x^{(4)}$ so that $m(x^{(4)})=x^{(4)T}\beta_0^{pd}$} \\ \cline{2-16}
	Avg-Bias & 0.065 & 0.065 & 0.054 & 0.147 & 0.126 & 0.160 & NA & NA & NA & 0.067 & 0.063 & 0.050 & 0.078 & 0.065 & 0.062 \\
	Avg-Cov & 1.00 & 1.00 & 1.00 & 0.491 & 0.623 & 0.517 & NA & NA & NA & 0.880 & 0.871 & 0.886 & 0.794 & 0.918 & 0.946 \\
	Avg-Len & 1.325 & 1.356 & 1.021 & 0.300 & 0.246 & 0.241 & NA & NA & NA & 0.258 & 0.239 & 0.203 & 0.243 & 0.281 & 0.300 \\
				
	\hline\hline
	\end{tabular}}
\caption{Simulation results under the Toeplitz (or auto-regressive) covariance matrix $\Sigma^{\mathrm{ar}}$, Gaussian noise $\mathcal{N}(0,1)$, and the MCAR setting. Nearly all the standard errors for the above average quantities are less than 0.01 except for those associated with ``DL-vdG'' and ``R-Proj'' and thus omitted.}
	\label{table:AR_MCAR_res}
\end{center}
\end{table}

\section{Proofs}
\label{app:Proofs}

This section consists of the proofs for theorems and propositions in the main paper as well as other auxiliary results.

\subsection{Additional Notations}
\label{subapp:add_notation}

We denote by $\{X_i\}_{i=1}^{\infty}$ a sequence of independent $\mathcal{X}$-valued random elements with common probability measure $\mathrm{P}$. In most cases of this paper, $\mathcal{X}$ is the Euclidean space $\mathbb{R}^d$. For any measure $Q$ and any real-valued $Q$-integrable function $f$ on the measurable space $(\mathcal{X},\mathcal{A})$, we write $Qf=\int f dQ$ and denote by $L_p(\mathcal{X},\mathcal{A},Q)$ with $p\in[1,\infty)$ as the space of all real-valued measurable functions $f$ on $(\mathcal{X},\mathcal{A})$ with finite $L_p(Q)$-norm, \emph{i.e.}, $\norm{f}_{Q,p} := \left(\int |f|^p dQ\right)^{\frac{1}{p}} < \infty$. The empirical measure $\mathbb{P}_n$ associated with observations $\{X_1,...,X_n\}$ is defined as the random measure on $(\mathcal{X},\mathcal{A})$ given by $\mathbb{P}_n=\frac{1}{n} \sum_{i=1}^n \delta_{X_i}$, where $\delta_x$ is the Dirac measure at point $x\in \mathcal{X}$. We denote the empirical process indexed by a class $\mathcal{F}$ by $\mathbb{G}_n(f)=\sqrt{n}\left(\mathbb{P}_n-\mathrm{P}\right) f = \frac{1}{\sqrt{n}} \sum_{i=1}^n \left(f(X_i)-\mathrm{P} f\right), f\in \mathcal{F}$. For any measure $Q$ on $(\mathcal{X},\mathcal{A})$, we set $\norm{Q}_{\mathcal{F}} = \sup_{f\in \mathcal{F}} |Qf|$. 

Analogous to the $\psi_{\alpha}$-Orlicz norm \eqref{Orlicz} for random variables, we define the $\psi_{\alpha}$-Orlicz norm for a real-valued $Q$-integrable function $f$ on $(\mathcal{X},\mathcal{A})$ as:
\begin{equation}
	\label{Orlicz_func}
	\norm{f}_{Q,\psi_{\alpha}} = \inf\left\{t>0: \int \psi_{\alpha}\left(\frac{|f|}{t} \right) dQ\leq 1\right\}
\end{equation}
for $\psi_{\alpha}(u)=\exp(u^{\alpha})-1$.

\subsection{Proofs of Propositions~\ref{prop:dual_debias} and \ref{prop:cond_MSE}}
\label{subapp:cond_MSE}

\begin{customprop}{1}
The dual form of the debiasing program \eqref{debias_prog} is given by
\begin{equation*}
	\min_{\ell(x) \in \mathbb{R}^d} \left\{\frac{1}{4n} \sum_{i=1}^n \hat{\pi}_i \left[X_i^T \ell(x)\right]^2 + x^T \ell(x) +\frac{\gamma}{n}\norm{\ell(x)}_1 \right\}.
\end{equation*}
If strong duality holds, then the solution $\hat{\bm{w}}(x) \in \mathbb{R}^n$ to the primal debiasing program \eqref{debias_prog} and the solution $\hat{\ell}(x) \in \mathbb{R}^d$ to its dual formulation \eqref{debias_prog_dual} satisfy the following identity as: 
\begin{equation*}
	\hat{w}_i(x) = -\frac{1}{2\sqrt{n}} \cdot X_i^T\hat{\ell}(x), \quad i=1,...,n.
\end{equation*}
\end{customprop}

\begin{proof}[Proof of Proposition~\ref{prop:dual_debias}]
	The proof is mainly adopted from Lemma 5 of \cite{giessing2021inference}. Recall from \eqref{debias_prog} that the debiasing program is a convex optimization problem and can be written as the matrix form:
	\begin{align}
		\label{debias_prog_mat}
		\begin{split}
			&\min_{\bm{w}:=\bm{w}(x)\in \mathbb{R}^n} \frac{1}{2}\bm{w}^T \hat{\Psi} \bm{w}\\
			&\text{subject to } \,A\bm{w} \preceq b,
		\end{split}
	\end{align}
	where $\hat{\Psi}=\Diag\left(2\hat{\pi}_1,...,2\hat{\pi}_n \right)\in \mathbb{R}^{n\times n}$, $A= \frac{1}{\sqrt{n}} \begin{pmatrix}
		-\bm{X}^T \hat{\bm{\Pi}}\\
		\bm{X}^T \hat{\bm{\Pi}}
	\end{pmatrix} \in \mathbb{R}^{2d\times n}$ with $\hat{\bm{\Pi}}=\Diag\left(\hat{\pi}_1,...,\hat{\pi}_n\right)\in \mathbb{R}^{n\times n}$ and $\bm{X}=(X_1,...,X_n)^T \in \mathbb{R}^{n\times d}$, as well as $b=\begin{pmatrix}
		\frac{\gamma}{n}\cdot \mathbbm{1}_d -x\\
		\frac{\gamma}{n}\cdot \mathbbm{1}_d +x
	\end{pmatrix} \in \mathbb{R}^{2d}$. This is because the box constraint $\norm{x- \frac{1}{\sqrt{n}}\sum_{i=1}^n w_i\cdot \hat{\pi}_i\cdot X_i }_{\infty} \leq \frac{\gamma}{n}$ in \eqref{debias_prog} is equivalent to
	\begin{align*}
		-\frac{\gamma}{n}\cdot\mathbbm{1}_d \leq x- \frac{1}{\sqrt{n}}\sum_{i=1}^n w_i\cdot \hat{\pi}_i\cdot X_i \leq \frac{\gamma}{n} \cdot \mathbbm{1}_d
		&\quad \text{ and } \quad \frac{1}{\sqrt{n}} \begin{pmatrix}
			-\bm{X}^T \hat{\bm{\Pi}}\\
			\bm{X}^T \hat{\bm{\Pi}}
		\end{pmatrix} \bm{w} \leq \begin{pmatrix}
			\frac{\gamma}{n}\cdot \mathbbm{1}_d -x\\
			\frac{\gamma}{n}\cdot \mathbbm{1}_d +x
		\end{pmatrix},
	\end{align*}
	where $\bm{w}=(w_1,...,w_n)^T \in \mathbb{R}^n$. The Lagrange dual function is given by
	\begin{equation}
		\label{lagrange_func}
		g(\xi) = \inf_{\bm{w}\in \mathbb{R}^n} \left[\frac{1}{2}\bm{w}^T \hat{\Psi} \bm{w} + \xi^T (A\bm{w}-b) \right] = -\frac{1}{2} \xi^T A\hat{\Psi}^{-1} A^T \xi - \xi^T b,
	\end{equation}
	where $\xi\in \mathbb{R}^{2d}$ and the infimum is attained at $\bm{w}=-\hat{\Psi}^{-1} A^T\xi$. It implies the dual of \eqref{debias_prog_mat} as:
	\begin{align*}
		&\max_{\xi:=\xi(x) \in \mathbb{R}^{2d}} \left[-\frac{1}{2} \xi^T A\hat{\Psi}^{-1} A^T \xi - \xi^T b \right] \\
		&\text{subject to } \,\xi \succeq \bm{0}.
	\end{align*}
    If strong duality holds, then the solution $\hat{\bm{w}}(x) \in \mathbb{R}^n$ to the primal problem can be related to the solution $\hat{\xi}(x)\in \mathbb{R}^{2d}$ to the dual problem as:
	\begin{equation}
		\label{primal_dual_sol}
		\hat{\bm{w}}(x) = -\hat{\Psi}^{-1} A^T \hat{\xi}(x).
	\end{equation}
	Since $\hat{\xi}(x)\in \mathbb{R}^{2d}$ is the vector of optimal Lagrangian multipliers associated with the constraints $A\bm{w}\preceq b$ in the primal problem \eqref{debias_prog_mat}, it satisfies the complementary slackness conditions
	$$\left[\hat{\xi}(x)\right]_k \geq 0 \quad \text{ and } \quad \left[A\hat{\bm{w}}(x)\right]_k < b_k \quad \implies \quad \left[\hat{\xi}(x)\right]_k=0,$$
	for any $ k=1,...,2d$. We write $\hat{\xi}(x) = \left(\bar{\mu}, \underline{\mu} \right)$ with $\bar{\mu},\underline{\mu} \in \mathbb{R}_+^d$. The above complementary slackness conditions indicate that
	\begin{equation}
		\label{comple_slack2}
		\bar{\mu}_k \cdot \underline{\mu}_k =0, \quad k=1,...,d,
	\end{equation}
	because by design, at least one of the inequalities $\left[A\hat{\bm{w}}(x)\right]_k \leq b_k$ and $\left[A\hat{\bm{w}}(x)\right]_{d+k} \leq b_{d+k}$ should be strict. Take $\hat{\ell}(x) := (-\bar{\mu}+\underline{\mu}) \in \mathbb{R}^d$. Equation \eqref{comple_slack2} implies that
	$$|\hat{\ell}_k(x)| = \bar{\mu}_k + \underline{\mu}_k, \quad k=1,...,d,$$
	and thus,
	$$-\frac{1}{2} \xi(x)^T A\hat{\Psi}^{-1} A^T \xi(x) - \xi(x)^T b = -\frac{1}{2n} \hat{\ell}(x)^T \bm{X}^T \hat{\bm{\Pi}} \hat{\Psi}^{-1} \hat{\bm{\Pi}} \bm{X} \hat{\ell}(x) -x^T \hat{\ell}(x) - \frac{\gamma}{n} \norm{\hat{\ell}(x)}_1.$$
	Since every $\ell(x)\in \mathbb{R}^d$ can be written as $\ell(x)=\left(-\bar{\mu}+\underline{\mu}\right)$ for some $\bar{\mu},\underline{\mu} \in \mathbb{R}_+^d$, the above equation implies that $\hat{\ell}(x)$ is the solution to the dual problem as:
	\begin{align*}
		&\min_{\ell(x) \in \mathbb{R}^d} \left[\frac{1}{2n} \ell(x)^T \bm{X}^T \hat{\bm{\Pi}} \hat{\Psi}^{-1} \hat{\bm{\Pi}}\bm{X} \ell(x) + x^T\ell(x) + \frac{\gamma}{n}\norm{\ell(x)}_1 \right] \\
		&\equiv \min_{\ell(x) \in \mathbb{R}^d} \left\{\frac{1}{4n} \sum_{i=1}^n \hat{\pi}_i\left[X_i^T \ell(x)\right]^2 + x^T\ell(x) + \frac{\gamma}{n}\norm{\ell(x)}_1 \right\}.
	\end{align*}
	Finally, we observe from \eqref{primal_dual_sol} that $\hat{\bm{w}}(x) = -\frac{1}{\sqrt{n}} \hat{\Psi}^{-1} \hat{\bm{\Pi}}\bm{X} \hat{\ell}(x)$ and therefore,
	$$\hat{w}_i(x) = -\frac{1}{2\sqrt{n}} \cdot X_i^T\hat{\ell}(x), \quad i=1,...,n.$$
	The proof is completed.
\end{proof}

\vspace{3mm}

\begin{customprop}{2}
Under Assumption~\ref{assump:basic}, the conditional mean squared error of $\sqrt{n}\,m^{\mathrm{debias}}(x;\bm{w})$ given $\bm{X}=(X_1,...,X_n)^T$ has the following decomposition as:
\begin{align*}
&\mathrm{E}\left[\left(\sqrt{n}\,m^{\mathrm{debias}}(x;\bm{w}) - \sqrt{n}\, m_0(x) \right)^2 \Big| \bm{X} \right] \\
	&= \underbrace{\sigma_{\epsilon}^2\sum_{i=1}^n w_i^2 \pi(X_i)}_{\text{Conditional variance I}} + \underbrace{\left(\beta_0 - \beta\right)^T \left[\sum_{i=1}^n w_i^2\pi(X_i)\left(1-\pi(X_i)\right)X_iX_i^T \right] \left(\beta_0 - \beta\right)}_{\text{Conditional variance II}} \\
	&\quad + \underbrace{\left[\left(\frac{1}{\sqrt{n}}\sum_{i=1}^n w_i \pi(X_i) X_i -x \right)^T \sqrt{n}\left(\beta_0 - \beta\right) \right]^2}_{\text{Conditional bias}}.
\end{align*}
\end{customprop}

\begin{proof}[Proof of Proposition~\ref{prop:cond_MSE}]
	Under the MAR condition in Assumption~\ref{assump:basic}, the conditional mean of \eqref{debias_generic} (scaled by a factor $\sqrt{n}$) given $\bm{X} = (X_1,...,X_n)^T$ can be written as:
	$$\mathrm{E}\left[\sqrt{n}\, m^{\text{debias}}(x;\bm{w}) \big|\bm{X} \right] = \sqrt{n}\,x^T\beta + \sum_{i=1}^n w_i \pi(X_i) X_i^T\left(\beta_0 - \beta\right).$$
	Then, we calculate the conditional mean squared error of \eqref{debias_generic} under model \eqref{linear_reg} with $m_0(x)=\mathrm{E}(Y|X=x)=x^T\beta_0$ as:
	\begin{align*}
		&\mathrm{E}\left[\left(\sqrt{n}\, m^{\text{debias}}(x;\bm{w}) - \sqrt{n}\, m_0(x) \right)^2 \Big| \bm{X} \right] \\
		&= \mathrm{E}\left\{\left[\sqrt{n} \, m^{\text{debias}}(x;\bm{w}) - \mathrm{E}\left(\sqrt{n}\, m^{\text{debias}}(x;\bm{w})\big|\bm{X} \right)\right]^2 \Big| \bm{X} \right\} \\
		&\quad + \left\{\mathrm{E}\left[\sqrt{n}\, m^{\text{debias}}(x;\bm{w}) \big|\bm{X}\right] -\sqrt{n}\, m_0(x) \right\}^2\\
		&= \mathrm{E}\left\{\left[\sum_{i=1}^n w_i (R_i-\pi(X_i)) X_i^T\left(\beta_0 - \beta\right) + \sum_{i=1}^n w_i R_i\epsilon_i \right]^2 \Bigg|\bm{X} \right\} \\
		&\quad + \left[\left(\frac{1}{\sqrt{n}}\sum_{i=1}^n w_i \pi(X_i) X_i -x \right)^T \sqrt{n}\left(\beta_0 - \beta\right) \right]^2\\
		&= \sum_{i=1}^n w_i^2 \pi(X_i) \left(1-\pi(X_i)\right) \left[X_i^T\left(\beta_0 - \beta\right) \right]^2 + \sigma_{\epsilon}^2 \sum_{i=1}^n w_i^2 \pi(X_i) \\
		&\quad + \left[\left(\frac{1}{\sqrt{n}}\sum_{i=1}^n w_i \pi(X_i) X_i -x \right)^T \sqrt{n}\left(\beta_0 - \beta\right) \right]^2\\
		&= \underbrace{\sigma_{\epsilon}^2\sum_{i=1}^n w_i^2 \pi(X_i)}_{\text{Conditional variance I}} + \underbrace{\left(\beta_0 - \beta\right)^T \left[\sum_{i=1}^n w_i^2\pi(X_i)\left(1-\pi(X_i)\right)X_iX_i^T \right] \left(\beta_0 - \beta\right)}_{\text{Conditional variance II}} \\
		&\quad + \underbrace{\left[\left(\frac{1}{\sqrt{n}}\sum_{i=1}^n w_i \pi(X_i) X_i -x \right)^T \sqrt{n}\left(\beta_0 - \beta\right) \right]^2}_{\text{Conditional bias}}.
	\end{align*}
	The result thus follows.
\end{proof}

\begin{remark}
Notice that we can upper bound the ``Conditional variance II'' term as:
\begin{align*}
&\sqrt{n}\left(\beta_0 - \beta\right)^T \left[\frac{1}{n}\sum_{i=1}^n w_i^2\pi(X_i)\left(1-\pi(X_i)\right)X_iX_i^T \right] \sqrt{n}\left(\beta_0 - \beta\right) \\
&\leq \left(\sqrt{n} \norm{\beta_0-\beta}_1 \right)^2 \max_{1\leq i \leq n}\norm{\sqrt{\pi(X_i)\left(1-\pi(X_i)\right)} \cdot w_i X_i}_{\infty}.
\end{align*}
Thus, for some finite-sample case when the ``Conditional variance II'' term is not negligible, one may consider optimizing the weights $\bm{w}=(w_1,...,w_n)^T\in \mathbb{R}^n$ with a more precise debiasing program as:
\begin{align*}
	&\min_{\bm{w} \in \mathbb{R}^n} \sum_{i=1}^n \hat{\pi}_iw_i^2 \\
	&\text{ subject to } \norm{x- \frac{1}{\sqrt{n}}\sum_{i=1}^n w_i \hat{\pi}_i X_i }_{\infty} \leq \frac{\gamma_1}{n} \quad\text{ and }\quad \max_{1\leq i \leq n}\norm{\sqrt{\hat{\pi}_i\left(1-\hat{\pi}_i\right)} \cdot w_i X_i}_{\infty} \leq \frac{\gamma_2}{n},
\end{align*}
where $\gamma_1,\gamma_2>0$ are two tuning parameters. However, it will be difficult to derive the dual form of the above debiasing program and tune the two parameters $\gamma_1,\gamma_2$ simultaneously. Given that there are no asymptotic efficiency gains for using the above program than our proposed one \eqref{debias_prog} by Lemma~\ref{lem:negligible_cond_var}, we still recommend our approach even for finite-sample cases.
\end{remark}

\subsection{Proof of \autoref{thm:lasso_const}}
\label{subapp:proof_consistency}

\begin{customthm}{3}[Consistency of the Lasso pilot estimate with complete-case data]
Let $\delta \in (0,1)$ and $A_1,A_2>0$ be some absolute constants. Suppose that Assumptions~\ref{assump:basic}, \ref{assump:sub_gaussian}, and \ref{assump:gram_lower_bnd} (more precisely, Eq.\eqref{res_eigen_cond}) hold and $n$ is sufficiently large so that 
$$\sqrt{\frac{s_{\beta}\log(ed/s_{\beta})}{n}} + \sqrt{\frac{\log(1/\delta)}{n}} < \min\left\{\frac{\kappa_R^2}{A_1 \phi_{s_{\beta}}}, 1 \right\}.$$
	\begin{enumerate}[label=(\alph*)]
		\item If $\lambda> \frac{4}{n}\norm{\sum_{i=1}^n R_iX_i\epsilon_i}_{\infty} >0$ and $d\geq s_{\beta}+2$, then with probability at least $1-\delta$,
		$$\norm{\hat{\beta}-\beta_0}_2 \lesssim \frac{\sqrt{s_{\beta}}\lambda}{\kappa_R^2}.$$ 
		
		\item If, in addition, Assumption~\ref{assump:subgau_error} holds and $\lambda = A_2\sigma_{\epsilon} \sqrt{\frac{\phi_1\log(d/\delta)}{n}}$, then with probability at least $1-\delta$,
		\begin{align*}
			\norm{\hat{\beta}-\beta_0}_2 &\lesssim \frac{ \sigma_{\epsilon}}{\kappa_R^2} \sqrt{\frac{\phi_1 s_{\beta}\log(d/\delta)}{n}}.
		\end{align*}
	\end{enumerate}
\end{customthm}

\begin{proof}[Proof of \autoref{thm:lasso_const}]
	(a) Since $\hat{\beta}$ minimizes the function $L(\beta) = \frac{1}{n}\sum_{i=1}^n R_i\left(Y_i -X_i^T\beta\right)^2 + \lambda\norm{\beta}_1$ with complete-case data, we know that 
	$$\frac{1}{n}\sum_{i=1}^n R_i\left(Y_i -X_i^T\hat{\beta}\right)^2 + \lambda\norm{\hat{\beta}}_1 \leq \frac{1}{n}\sum_{i=1}^n R_i\left(Y_i -X_i^T\beta_0\right)^2 + \lambda\norm{\beta_0}_1.$$
    Then, under the linear model \eqref{linear_reg} in Assumption~\ref{assump:basic}, it implies by direct calculations or Taylor's expansion that
	\begin{equation}
		\label{basic_inq}
	\frac{1}{n}\sum_{i=1}^n R_i\left(X_i^T\hat{v}\right)^2 \leq \frac{2}{n}\sum_{i=1}^n \epsilon_i R_i X_i^T \hat{v} + \lambda\left(\norm{\beta_0}_1 - \norm{\beta_0 + \hat{v}}_1 \right)
	\end{equation}
	with $\hat{v} = \hat{\beta}-\beta_0$. Notice that $\beta_{0,S_{\beta}^c} =0$, so we have $\norm{\beta_0}_1 = \norm{\beta_{0,S_{\beta}}}_1$ and 
	$$\norm{\beta_0 + \hat{v}}_1 = \norm{\beta_{0,S_{\beta}} + \hat{v}_{S_{\beta}}}_1 + \norm{\hat{v}_{S_{\beta}^c}}_1 \geq \norm{\beta_{0,S_{\beta}}}_1 - \norm{\hat{v}_{S_{\beta}}}_1 + \norm{\hat{v}_{S_{\beta}^c}}_1.$$
	Plugging this lower bound into \eqref{basic_inq}, we obtain that
	\begin{align}
	\label{basic_inq2}
	\begin{split}
		\frac{1}{n}\sum_{i=1}^n R_i\left(X_i^T\hat{v}\right)^2 &\leq \frac{2}{n}\sum_{i=1}^n \epsilon_i R_i X_i^T \hat{v} + \lambda \left(\norm{\hat{v}_{S_{\beta}}}_1 -\norm{\hat{v}_{S_{\beta}^c}}_1 \right)\\
		&\leq \frac{2\norm{\sum_{i=1}^n R_iX_i\epsilon_i}_{\infty}}{n} \cdot \norm{\hat{v}}_1 + \lambda \left(\norm{\hat{v}_{S_{\beta}}}_1 -\norm{\hat{v}_{S_{\beta}^c}}_1 \right),
	\end{split}
	\end{align}
	where we apply H{\"o}lder's inequality to obtain the second inequality. Under the condition that $\frac{4}{n} \norm{\sum_{i=1}^n R_iX_i\epsilon_i}_{\infty} < \lambda$, we can proceed \eqref{basic_inq2} as:
	\begin{align}
		\label{basic_inq3}
		\begin{split}
		0\leq \frac{1}{n}\sum_{i=1}^n R_i\left(X_i^T\hat{v}\right)^2 &\leq \frac{\lambda}{2} \left(\norm{\hat{v}_{S_{\beta}}}_1 + \norm{\hat{v}_{S_{\beta}^c}}_1 \right) + \lambda \left(\norm{\hat{v}_{S_{\beta}}}_1 -\norm{\hat{v}_{S_{\beta}^c}}_1 \right)\\
		&= \frac{3\lambda}{2} \norm{\hat{v}_{S_{\beta}}}_1 - \frac{\lambda}{2} \norm{\hat{v}_{S_{\beta}^c}}_1.
		\end{split}
	\end{align}
	Rearranging this inequality shows that $\norm{\hat{v}_{S_{\beta}^c}}_1 \leq 3\norm{\hat{v}_{S_{\beta}}}_1$. Thus, the error $\hat{v} =\hat{\beta} -\beta_0$ belongs to the cone $\mathcal{C}_1(S_{\beta},3)$. By Assumption~\ref{assump:gram_lower_bnd} (more precisely, Eq.\eqref{res_eigen_cond}) and Lemma~\ref{lem:gram_mat_conc} (taking $\vartheta=3$), we know that there exists an absolute constant $A_0>0$ such that with probability at least $1-\delta$,
	\begin{align*}
		&\frac{1}{n} \sum_{i=1}^n R_i \left(X_i^T \hat{v} \right)^2\\
		&\geq \frac{1}{n}\sum_{i=1}^n \mathrm{E}\left[R_i\left(X_i^T\hat{v}\right)^2 \right]- A_0 \phi_{s_{\beta}} \left[\sqrt{\frac{\log(1/\delta) + s_{\beta}\log(ed/s_{\beta})}{n}} + \frac{\log(1/\delta)  + s_{\beta}\log(ed/s_{\beta})}{n} \right] \norm{\hat{v}}_2^2\\
		&\geq \left[\kappa_R^2 - A_0\phi_{s_{\beta}} \left(\sqrt{\frac{\log(1/\delta) + s_{\beta}\log(ed/s_{\beta})}{n}} + \frac{\log(1/\delta) + s_{\beta}\log(ed/s_{\beta})}{n} \right) \right] \norm{\hat{v}}_2^2\\
		&\geq \left[\kappa_R^2 - 2A_0\phi_{s_{\beta}} \sqrt{\frac{\log(1/\delta) + s_{\beta}\log(ed/s_{\beta})}{n}} \right] \norm{\hat{v}}_2^2\\
		&\geq \frac{\kappa_R^2}{2} \norm{\hat{v}}_2^2,
	\end{align*} 
    where the last two inequalities hold under our growth condition 
    $$\sqrt{\frac{s_{\beta}\log(ed/s_{\beta})}{n}} + \sqrt{\frac{\log(1/\delta)}{n}} < \min\left\{\frac{\kappa_R^2}{A_1 \phi_{s_{\beta}}}, 1 \right\}$$
	with $A_1=4A_0 >0$ when $n$ is sufficiently large. Substituting the above inequality into \eqref{basic_inq3} yields that with probability at least $1-\delta$,
	\begin{align*}
	\frac{\kappa_R^2}{2}\norm{\hat{v}}_2^2 &\leq\frac{3\lambda}{2} \norm{\hat{v}_{S_{\beta}}}_1 - \frac{\lambda}{2} \norm{\hat{v}_{S_{\beta}^c}}_1 \\
	&\leq \frac{3}{2} \lambda \sqrt{s_{\beta}} \norm{\hat{v}}_2,
	\end{align*}
    where the second inequality is due to the Cauchy-Schwarz inequality. The result now follows by dividing $\norm{\hat{v}}_2$ on both sides of the inequality.\\
	
	\noindent(b) It suffices to show that $\frac{4}{n}\norm{\sum_{i=1}^n R_iX_i\epsilon_i}_{\infty}$ is upper bounded by $\sigma_{\epsilon} \sqrt{\frac{\phi_1 \log(d/\delta)}{n}}$ up to some absolute constant with probability at least $1-\delta$. Notice that
	$$\frac{4}{n}\norm{\sum_{i=1}^n R_iX_i\epsilon_i}_{\infty} = \max_{1\leq k \leq d} \left|\frac{1}{n} \sum_{i=1}^n 4\epsilon_i R_iX_{ik} \right|,$$
	where $X_{ik}$ is the $k$-th element of the covariate $X_i\in \mathbb{R}^d$ for $i=1,...,n$. By Assumptions~\ref{assump:sub_gaussian} and \ref{assump:subgau_error} together with Lemma~\ref{lem:Orlicz_prod} (see also Lemma 2.7.7 in \citealt{vershynin2018high}), we know that
	\begin{align*}
		\max_{1\leq k \leq d} \norm{4\epsilon_iR_i X_{ik}}_{\psi_1} &\leq \max_{1\leq k \leq d} \left[4\left(\max_{1\leq i\leq n} \norm{X_{ik}}_{\psi_2}\right) \norm{R_i \epsilon_i}_{\psi_2} \right]\\ 
		&\lesssim \sqrt{\phi_1} \cdot \sigma_{\epsilon}
	\end{align*}
	for all $i=1,...,n$. Hence, by the union bound and Bernstein's inequality (Corollary 2.8.3 in \citealt{vershynin2018high}), we obtain that for all $t>0$, there exists an absolute constant $A_3>0$ such that
	\begin{align*}
		\mathrm{P}\left(\frac{4}{n}\norm{\sum_{i=1}^n R_iX_i\epsilon_i}_{\infty} > t \right) &\leq d \cdot \max_{1\leq k \leq d} \mathrm{P}\left(\left|\frac{1}{n} \sum_{i=1}^n 4\epsilon_i R_i X_{ik} \right| > t \right)\\
		&\leq 2d \cdot \exp\left(-nA_3 \min\left\{\frac{t}{\sigma_{\epsilon} \sqrt{\phi_1}}, \frac{t^2}{\sigma_{\epsilon}^2\phi_1} \right\} \right).
	\end{align*}
	Thus, we conclude that with probability at least $1-\delta$,
	\begin{align*}
	\frac{4}{n}\norm{\sum_{i=1}^n R_iX_i\epsilon_i}_{\infty} &\lesssim \sigma_{\epsilon}\sqrt{\phi_1} \left[\sqrt{\frac{\log(d/\delta)}{n}} + \frac{\log(d/\delta)}{n} \right]\\
	&\lesssim \sigma_{\epsilon}\sqrt{\frac{\phi_1\log(d/\delta)}{n}},
	\end{align*}
    where we notice that $\sqrt{\frac{\log(d/\delta)}{n}}$ dominates the bound under our growth condition $\sqrt{\frac{s_{\beta}\log(ed/s_{\beta})}{n}} + \sqrt{\frac{\log(1/\delta)}{n}}<1$. Finally, the bound on $\norm{\hat{\beta} - \beta_0}_2$ follows from substituting $\lambda =A_2 \sigma_{\epsilon}\sqrt{\frac{\phi_1\log(d/\delta)}{n}}$ for some absolute constant into the upper bound of $\norm{\hat{\beta} -\beta_0}_2$ derived in (a).
\end{proof}

\subsection{Proof of \autoref{thm:dual_consist2}}
\label{subapp:proof_consist_dual}

Before presenting the proof of \autoref{thm:dual_consist2}, we first establish the consistency of $\hat{\ell}(x)$ to the sparse $r_{\ell}$-approximation $\tilde{\ell}(x)$ of the exact population dual solution \eqref{popul_dual_exact} under Assumption~\ref{assump:sparse_dual} and the known/oracle propensity scores $\pi_i=\pi(X_i), i=1,...,n$ in the following lemma. 

\begin{lemma}[Consistency of the dual solution $\hat{\ell}(x)$ when the propensity score is known]
	\label{lem:dual_consist}
	Let $\delta \in (0,1)$,  $x\in \mathbb{R}^d$ be a fixed covariate vector, and $A_1,A_2,A_3>0$ be some large absolute constants. Suppose that Assumptions~\ref{assump:basic}, \ref{assump:sub_gaussian}, \ref{assump:gram_lower_bnd}, and \ref{assump:sparse_dual} hold as well as
    $$\sqrt{\frac{s_{\ell}(x)\log(ed/s_{\ell}(x))}{n}} + \sqrt{\frac{\log(1/\delta)}{n}} < \min\left\{\frac{\kappa_R^2}{A_1\phi_{s_{\ell}(x)}}, 1 \right\} \quad \text{ and } \quad d\geq s_{\ell}(x)+2.$$
	If $\gamma = A_2 \frac{\sqrt{\phi_1}}{\kappa_R}\norm{x}_2 \sqrt{n\log(d/\delta)}$ and $r_{\ell} < \left(\frac{A_2^2 \,\kappa_R^6\, \phi_1}{A_3^2\norm{x}_2 \phi_{s_{\ell}(x)}}\right)^{\frac{1}{4}} \left(\frac{\log(d/\delta)}{n}\right)^{\frac{1}{4}}$, then with probability at least $1-\delta$,
    \begin{align*}
	    \norm{\hat{\ell}(x)  - \tilde{\ell}(x)}_2 &\lesssim \frac{\norm{x}_2}{\kappa_R^2}\left[\sqrt{\frac{\phi_1}{\kappa_R^2}} \sqrt{\frac{s_{\ell}(x) \log(d/\delta)}{n}} + \frac{\phi_{s_{\ell}(x)}}{\kappa_R^2} \cdot r_{\ell} \right].
	\end{align*}
\end{lemma}

\begin{proof}[Proof of Lemma~\ref{lem:dual_consist}]
	Since $\hat{\ell}(x)$ is the solution to the dual program \eqref{debias_prog_dual}, we know that
	$$\frac{1}{4n} \sum_{i=1}^n \pi(X_i)\left[X_i^T \hat{\ell}(x)\right]^2 + x^T\hat{\ell}(x) + \frac{\gamma}{n}\norm{\hat{\ell}(x)}_1 \leq \frac{1}{4n} \sum_{i=1}^n \pi(X_i)\left[X_i^T \tilde{\ell}(x)\right]^2 + x^T\tilde{\ell}(x) + \frac{\gamma}{n}\norm{\tilde{\ell}(x)}_1.$$
	Some rearrangements (or using Taylor's expansion) give us that
	\begin{align}
		\label{dual_taylor}
		\begin{split}
			&\left[\hat{\ell}(x) - \tilde{\ell}(x)\right]^T \left[\frac{1}{4n}\sum_{i=1}^n \pi(X_i) X_iX_i^T \right] \left[\hat{\ell}(x) - \tilde{\ell}(x)\right] \\
			&\quad + \left[\hat{\ell}(x) - \tilde{\ell}(x)\right]^T \left[\frac{1}{2n}\sum_{i=1}^n \pi(X_i) X_iX_i^T \right] \left[\tilde{\ell}(x) - \ell_0(x)\right]\\
			&\leq \left[\frac{1}{2n} \sum_{i=1}^n \pi(X_i) X_iX_i^T \ell_0(x) +x \right]^T \left[\tilde{\ell}(x) - \hat{\ell}(x) \right] + \frac{\gamma}{n}\left[\norm{\tilde{\ell}(x)}_1 - \norm{\hat{\ell}(x)}_1 \right].
		\end{split}
	\end{align}
	In the sequel, we write $\hat{\Delta}(x) = \hat{\ell}(x) - \tilde{\ell}(x)$. By H{\"o}lder's inequality and the (reverse) triangle's inequality together with the sparsity of $\tilde{\ell}(x)$ under Assumption~\ref{assump:sparse_dual}, the right-hand side of \eqref{dual_taylor} is upper bounded by
	\begin{align}
	\label{dual_taylor_bnd}
	\begin{split}
		&\left[\frac{1}{2n} \sum_{i=1}^n \pi(X_i) X_iX_i^T \ell_0(x) +x \right]^T \left[\tilde{\ell}(x) - \hat{\ell}(x) \right] + \frac{\gamma}{n}\left[\norm{\tilde{\ell}(x)}_1 - \norm{\hat{\ell}(x)}_1 \right] \\
		&\leq \norm{\frac{1}{2n} \sum_{i=1}^n \pi(X_i) X_iX_i^T \ell_0(x) +x}_{\infty} \norm{\hat{\Delta}(x)}_1 \\
		&\quad + \frac{\gamma}{n} \left[\norm{\left[\tilde{\ell}(x)\right]_{S_{\ell}(x)}}_1 - \norm{\left[\hat{\Delta}(x)\right]_{S_{\ell}(x)} + \left[\tilde{\ell}(x)\right]_{S_{\ell}(x)}}_1 - \norm{\left[\hat{\ell}(x)\right]_{S_{\ell}(x)^c}}_1\right]\\
		&\leq \frac{\gamma}{5n} \left[\norm{\left[\hat{\Delta}(x)\right]_{S_{\ell}(x)}}_1 + \norm{\left[\hat{\Delta}(x)\right]_{S_{\ell}(x)^c}}_1 \right] + \frac{\gamma}{n} \left[\norm{\left[\hat{\Delta}(x) \right]_{S_{\ell}(x)}}_1 - \norm{\left[\hat{\Delta}(x) \right]_{S_{\ell}(x)^c}}_1 \right]\\
		&= \frac{6\gamma}{5n} \norm{\left[\hat{\Delta}(x) \right]_{S_{\ell}(x)}}_1 - \frac{4\gamma}{5n} \norm{\left[\hat{\Delta}(x) \right]_{S_{\ell}(x)^c}}_1\\
		&\leq \frac{6\gamma}{5n} \sqrt{s_{\ell}(x)} \norm{\hat{\Delta}(x)}_2,
	\end{split}
	\end{align}
	where $\norm{\left[\hat{\Delta}(x) \right]_{S_{\ell}(x)}}_1 = \sum_{k\in S_{\ell}(x)} \left|\hat{\Delta}_k(x) \right|$. 
	
	Next, we know from Lemma~\ref{lem:res_cross_poly} with $a_0=5$ that 
	$$\hat{\ell}(x)-\tilde{\ell}(x) \in \mathcal{C}_1\left(S_{\ell}(x), 3 \right) \bigcup B_d^{(1)}\left(0,\, \frac{13 r_{\ell}^2 \norm{x}_2^2}{2 \kappa_R^4} \cdot \frac{\nu}{\gamma}\right) \equiv \mathcal{C}_1\left(S_{\ell}(x), 3 \right) \bigcup B_d^{(1)}(0,r_1),$$ 
	where $B_d^{(1)}(0,r) = \left\{v\in \mathbb{R}^d: \norm{v}_1 \leq r\right\}$ and $r_1\equiv \frac{13 r_{\ell}^2 \norm{x}_2^2}{2 \kappa_R^4} \cdot \frac{\nu}{\gamma}$. 
	
	Now, by Lemma~\ref{lem:gamma_rate}, we know that there exist absolute constants $A_2,A_2'>0$ such that with probability at least $1-\delta$,
	\begin{align*}
		\norm{\frac{1}{n}\sum_{i=1}^n \pi(X_i) X_iX_i^T \ell_0(x) + 2x}_{\infty} &\leq A_2' \frac{\sqrt{\phi_1}}{\kappa_R}\norm{x}_2 \left[\sqrt{\frac{\log(d/\delta)}{n}} + \frac{\log(d/\delta)}{n} \right] \\
		&\leq A_2 \frac{\sqrt{\phi_1}}{\kappa_R}\norm{x}_2 \sqrt{\frac{\log(d/\delta)}{n}}\equiv \frac{\gamma}{n},
	\end{align*}
	where we note that $\sqrt{\frac{\log(d/\delta)}{n}}$ asymptotically dominates the bound under our growth condition $\sqrt{\frac{s_{\ell}(x)\log(ed/s_{\ell}(x))}{n}} + \sqrt{\frac{\log(1/\delta)}{n}} < 1$. Similarly, we obtain from Lemma~\ref{lem:rel_poly_cost} that there exists an absolute constant $A_3>0$ such that with probability at least $1-\delta$,
	\begin{align*}
		\sup_{v\in \mathcal{C}_2\left(S_{\ell}(x), \vartheta_{\ell}\right)\cap B_d(0,1)} \left|\frac{1}{n} \sum_{i=1}^n \pi_i\left[X_i^Tv\right]^2 \right| &\leq \frac{2A_3}{13}\cdot \phi_{s_{\ell}(x)}\left[1+ \sqrt{\frac{\log(1/\delta) + s_{\ell}(x) \log\left(d/s_{\ell}(x)\right)}{n}}\right] \\
		&\leq \frac{2A_3}{13}\cdot \phi_{s_{\ell}(x)} \equiv \frac{\nu}{n}.
	\end{align*}
	Hence, with probability at least $1-\delta$,
	$$\frac{\kappa_R^2}{\norm{x}_2} \sqrt{\frac{2\gamma}{13\nu}} \geq \left(\frac{A_2^2 \,\kappa_R^6\, \phi_1}{A_3^2\norm{x}_2 \phi_{s_{\ell}(x)}}\right)^{\frac{1}{4}} \left(\frac{\log(d/\delta)}{n}\right)^{\frac{1}{4}} > r_{\ell},$$
	where the last inequality follows from our condition. Hence, with probability at least $1-\delta$, we have that $r_1= \frac{13 r_{\ell}^2 \norm{x}_2^2}{2 \kappa_R^4} \cdot \frac{\nu}{\gamma} < 1$, and 
	$$\left[\mathcal{C}_1\left(S_{\ell}(x), 3 \right) \bigcup B_d^{(1)}(0,r_1)\right] \bigcap \mathbb{S}^{d-1} = \mathcal{C}_1\left(S_{\ell}(x), 3 \right) \bigcap \mathbb{S}^{d-1}.$$
	
	Therefore, by Assumption~\ref{assump:gram_lower_bnd} together with Lemma~\ref{lem:dual_approx_cone} with $\vartheta_{\ell} = \frac{r_{\ell}}{\sqrt{1-r_{\ell}^2}} < 1$, the left-hand side of \eqref{dual_taylor} is lower bounded by
	\begin{align*}
		&\left[\hat{\ell}(x) - \tilde{\ell}(x)\right]^T \left[\frac{1}{4n}\sum_{i=1}^n \pi(X_i) X_iX_i^T \right] \left[\hat{\ell}(x) - \tilde{\ell}(x)\right] + \left[\hat{\ell}(x) - \tilde{\ell}(x)\right]^T \left[\frac{1}{2n}\sum_{i=1}^n \pi(X_i) X_iX_i^T \right] \left[\tilde{\ell}(x) - \ell_0(x)\right]\\
		&\geq \frac{\kappa_R^2}{4} \norm{\hat{\Delta}(x)}_2^2 - \underbrace{\sup_{u \in \mathcal{C}_1(S_{\ell}(x),3)\bigcap \mathbb{S}^{d-1}} \left|\frac{1}{4n} \sum_{i=1}^n \left\{\pi(X_i) \left(X_i^T u\right)^2 - \mathrm{E}\left[\pi(X_i) \left(X_i^T u\right)^2\right]\right\} \right| \norm{\hat{\Delta}(x)}_2^2}_{\textbf{Term I}}\\
		&\quad -\underbrace{\sup_{\substack{u \in \mathcal{C}_1(S_{\ell}(x),3)\bigcap \mathbb{S}^{d-1} \\v\in \mathcal{C}_2(S_{\ell}(x), 1) \cap \mathbb{S}^{d-1}}} \left|\frac{1}{2n} \sum_{i=1}^n \pi(X_i) \left(X_i^T v\right) \left[X_i^T \hat{\Delta}(x)\right] \right|\cdot r_{\ell}\norm{\ell_0(x)}_2\norm{\hat{\Delta}(x)}_2}_{\textbf{Term II}}.
	\end{align*}
   We now derive bounds on the remaining two terms. Recall our growth condition $\sqrt{\frac{s_{\ell}(x)\log(ed/s_{\ell}(x))}{n}} + \sqrt{\frac{\log(1/\delta)}{n}} < 1$. Using the same arguments in Lemma~\ref{lem:cone_ball_bnd}, we know that with probability at least $1-\delta$,
   \begin{align*}
   \textbf{Term I} &\leq \sup_{u \in \mathcal{C}_1(S_{\ell}(x),3)\bigcap \mathbb{S}^{d-1}} \left|\frac{1}{4n} \sum_{i=1}^n \left\{\pi(X_i) \left(X_i^T u\right)^2 - \mathrm{E}\left[\pi(X_i) \left(X_i^T u\right)^2\right]\right\} \right| \norm{\hat{\Delta}(x)}_2^2 \\
   &= \norm{\hat{\Delta}(x)}_2^2\sup_{v \in \mathcal{C}_1(S_{\ell}(x),3) \cap \mathbb{S}^{d-1}} \left|\frac{1}{4n} \sum_{i=1}^n \left\{\pi(X_i) \left(X_i^T v\right)^2 - \mathrm{E}\left[\pi(X_i) \left(X_i^Tv\right)^2\right]\right\} \right| \\
   &\lesssim \norm{\hat{\Delta}(x)}_2^2 \phi_{s_{\ell}(x)} \sqrt{\frac{\log(1/\delta)+s_{\ell}(x)\log(ed/s_{\ell}(x))}{n}}.
   \end{align*}
In addition, by Lemma~\ref{lem:cone_ball_bnd} again, we know that with probability at least $1-\delta$,
   \begin{align*}
   	\textbf{Term II} \lesssim r_{\ell}\norm{\ell_0(x)}_2\norm{\hat{\Delta}(x)}_2 \left[\phi_{s_{\ell}(x)} \sqrt{\frac{\log(1/\delta)+s_{\ell}(x)\log(ed/s_{\ell}(x))}{n}} + 1\right].
   \end{align*}
	Now, combining \eqref{dual_taylor} and \eqref{dual_taylor_bnd} with the above two bounds yields that there exists an absolute constant $A_0>0$ such that with probability at least $1-\delta$, 
	\begin{align*}
		&\left[\frac{\kappa_R^2}{4} - A_0\phi_{s_{\ell}(x)} \sqrt{\frac{\log(1/\delta)+s_{\ell}(x)\log(ed/s_{\ell}(x))}{n}} \right]\norm{\hat{\Delta}(x)}_2^2 \\
		&\quad - A_0\left[\sqrt{\frac{\log(1/\delta)+s_{\ell}(x)\log(ed/s_{\ell}(x))}{n}} + 1\right]\phi_{s_{\ell}(x)} \cdot r_{\ell} \norm{\ell_0(x)}_2\norm{\hat{\Delta}(x)}_2\\
		&\leq\frac{6\gamma}{5n} \sqrt{s_{\ell}(x)} \norm{\hat{\Delta}(x)}_2.
	\end{align*}
    By our growth condition $\sqrt{\frac{s_{\ell}(x)\log(ed/s_{\ell}(x))}{n}} + \sqrt{\frac{\log(1/\delta)}{n}} < \min\left\{\frac{\kappa_R^2}{A_1 \phi_{s_{\ell}(x)}}, 1 \right\}$ with $A_1=8A_0$, it follows that
    $$\frac{\kappa_R^2}{4} - A_0\phi_{s_{\ell}(x)} \sqrt{\frac{\log(1/\delta)+s_{\ell}(x)\log(ed/s_{\ell}(x))}{n}} \geq \frac{\kappa_R^2}{8},$$
    and therefore,
    \begin{align*}
    	\frac{\kappa_R^2}{8} \norm{\hat{\Delta}(x)}_2 &\leq \frac{6\gamma}{5n} \sqrt{s_{\ell}(x)} + 2A_0\left[\sqrt{\frac{\log(1/\delta)+s_{\ell}(x)\log(ed/s_{\ell}(x))}{n}} + 1\right]\phi_{s_{\ell}(x)} \cdot r_{\ell} \cdot \frac{\norm{x}_2}{\kappa_R^2}\\
    	&\lesssim \frac{\gamma}{n} \sqrt{s_{\ell}(x)} + \phi_{s_{\ell}(x)} \cdot r_{\ell} \cdot \frac{\norm{x}_2}{\kappa_R^2}
    \end{align*}
    Dividing both sides of the above inequality by $\frac{\kappa_R^2}{8}$ gives us that
	$$\norm{\hat{\ell}(x)  - \tilde{\ell}(x)}_2 \lesssim \frac{\gamma \sqrt{s_{\ell}(x)}}{n \cdot \kappa_R^2} + \frac{\phi_{s_{\ell}(x)}}{\kappa_R^2} \cdot \frac{\norm{x}_2}{\kappa_R^2} \cdot r_{\ell}.$$
    Finally, the bound on $\norm{\hat{\ell}(x) - \tilde{\ell}(x)}_2$ follows from substituting $\frac{\gamma}{n} \equiv A_2 \frac{\sqrt{\phi_1}}{\kappa_R}\norm{x}_2 \sqrt{\frac{\log(d/\delta)}{n}}$
	into the upper bound of $\norm{\hat{\ell}(x) - \tilde{\ell}(x)}_2$ in (a).
\end{proof}

\vspace{3mm}

With the above lemma, we can now formally prove the consistency of the dual solution $\hat{\ell}(x)$ (\autoref{thm:dual_consist2}). Notice that the major difference between the proof of Lemma~\ref{lem:dual_consist} and \autoref{thm:dual_consist2} is that the propensity scores now are estimated in \autoref{thm:dual_consist2}.

\begin{customthm}{4}[Consistency of the dual solution $\hat{\ell}(x)$]
	Let $\delta \in (0,1)$, $x\in \mathbb{R}^d$ be a fixed covariate vector, and $A_1,A_2,A_3>0$ be some large absolute constants. Suppose that Assumptions~\ref{assump:basic}, \ref{assump:sub_gaussian}, \ref{assump:gram_lower_bnd}, \ref{assump:prop_consistent_rate}, and \ref{assump:sparse_dual} hold
	as well as
	$$\sqrt{\frac{s_{\ell}(x)\log(ed/s_{\ell}(x))}{n}} + \sqrt{\frac{\log(1/\delta)}{n}} + r_{\pi} < \min\left\{\frac{\kappa_R^2}{A_1 \phi_{s_{\ell}(x)}}, 1 \right\} \quad \text{ and } \quad d\geq s_{\ell}(x) + 2.$$
     We denote $r_{\gamma}\equiv r_{\gamma}(n,\delta) = \frac{\sqrt{\phi_1}}{\kappa_R}\norm{x}_2\sqrt{\frac{\log(d/\delta)}{n}} + \frac{\sqrt{\phi_1 \phi_{s_{\ell}(x)}}\norm{x}_2}{\kappa_R^2} \cdot r_{\pi}$. If $\frac{\gamma}{n} = A_2\cdot r_{\gamma}$ and 
	$$r_{\ell} \leq \min\left\{\frac{1}{2},\, \left[\left(\frac{A_2^2 \,\kappa_R^6\, \phi_1}{A_3^2\norm{x}_2 \phi_{s_{\ell}(x)}}\right)^{\frac{1}{4}} \left(\frac{\log(d/\delta)}{n}\right)^{\frac{1}{4}} + \left(\frac{A_2^2 \phi_1 \kappa_R^4}{A_3^2 \phi_{s_{\ell}(x)} \norm{x}_2^2}\right)^{\frac{1}{4}} \sqrt{r_{\pi}}\right]\right\},$$
	then with probability at least $1-\delta$,
	\begin{align*}
		\norm{\hat{\ell}(x) - \tilde{\ell}(x)}_2
		&\lesssim \frac{\norm{x}_2}{\kappa_R^3}\left[ \sqrt{\frac{\phi_1 s_{\ell}(x)\log(d/\delta)}{n}} + \frac{\phi_{s_{\ell}(x)}}{\kappa_R} \cdot r_{\ell} + \frac{\sqrt{\phi_1 \phi_{s_{\ell}(x)} s_{\ell}(x)}}{\kappa_R} \cdot  r_{\pi} \right].
	\end{align*}
\end{customthm}

\begin{proof}[Proof of \autoref{thm:dual_consist2}]
Since $\hat{\ell}(x)$ minimizes the objective function of the dual problem \eqref{debias_prog_dual}, we know that
	$$\frac{1}{4n} \sum_{i=1}^n \hat{\pi}_i\left[X_i^T \hat{\ell}(x)\right]^2 + x^T\hat{\ell}(x) + \frac{\gamma}{n}\norm{\hat{\ell}(x)}_1 \leq \frac{1}{4n} \sum_{i=1}^n \hat{\pi}_i\left[X_i^T \tilde{\ell}(x)\right]^2 + x^T\tilde{\ell}(x) + \frac{\gamma}{n}\norm{\tilde{\ell}(x)}_1.$$
	Some rearrangements (or using Taylor's expansion) give us that
	\begin{align}
		\label{dual_taylor2}
		\begin{split}
			&\left[\hat{\ell}(x) - \tilde{\ell}(x)\right]^T \left[\frac{1}{4n}\sum_{i=1}^n \hat{\pi}_i X_iX_i^T \right] \left[\hat{\ell}(x) - \tilde{\ell}(x)\right] + \left[\hat{\ell}(x) - \tilde{\ell}(x)\right]^T \left[\frac{1}{2n}\sum_{i=1}^n \hat{\pi}_i X_iX_i^T \right] \left[\tilde{\ell}(x) - \ell_0(x)\right]\\
			&\leq \left[\frac{1}{2n} \sum_{i=1}^n \hat{\pi}_i X_iX_i^T \ell_0(x) +x \right]^T \left[\tilde{\ell}(x) - \hat{\ell}(x) \right] + \frac{\gamma}{n}\left[\norm{\tilde{\ell}(x)}_1 - \norm{\hat{\ell}(x)}_1 \right].
		\end{split}
	\end{align}
    We now reduce the above inequality to \eqref{dual_taylor} in Lemma~\ref{lem:dual_consist} so that we can apply parts of the results in Lemma~\ref{lem:dual_consist}. Some algebra show that the above inequality is equivalent to
    \begin{align}
    \label{dual_taylor3}
    \begin{split}
    	&\underbrace{\left[\hat{\ell}(x) - \tilde{\ell}(x)\right]^T \left[\frac{1}{4n}\sum_{i=1}^n \pi(X_i) X_iX_i^T \right] \left[\hat{\ell}(x) - \tilde{\ell}(x)\right] + \left[\hat{\ell}(x) - \tilde{\ell}(x)\right]^T \left[\frac{1}{2n}\sum_{i=1}^n \pi(X_i) X_iX_i^T \right] \left[\tilde{\ell}(x) - \ell_0(x)\right]}_{\textbf{Term I}}\\
    	&\quad + \underbrace{\left[\hat{\ell}(x) - \tilde{\ell}(x)\right]^T \left[\frac{1}{4n}\sum_{i=1}^n \left(\hat{\pi}_i-\pi_i\right) X_iX_i^T \right] \left[\hat{\ell}(x) - \tilde{\ell}(x)\right]}_{\textbf{Term II}}\\
    	&\quad + \underbrace{\left[\hat{\ell}(x) - \tilde{\ell}(x)\right]^T \left[\frac{1}{2n}\sum_{i=1}^n \left(\hat{\pi}_i - \pi_i\right) X_iX_i^T \right] \left[\tilde{\ell}(x) - \ell_0(x)\right]}_{\textbf{Term III}}\\
    	&\leq \left[\frac{1}{2n} \sum_{i=1}^n \hat{\pi}_i X_iX_i^T \ell_0(x) +x \right]^T \left[\tilde{\ell}(x) - \hat{\ell}(x) \right] + \frac{\gamma}{n}\left[\norm{\tilde{\ell}(x)}_1 - \norm{\hat{\ell}(x)}_1 \right].
    	\end{split}
    \end{align}
    We first obtain an upper bound on the right-hand side of the above inequality via the same arguments as in the proof of Lemma~\ref{lem:dual_consist}. We write $\hat{\Delta}(x) = \hat{\ell}(x) - \tilde{\ell}(x)$. By H{\"o}lder's inequality and the (reverse) triangle's inequality together with the sparsity of $\tilde{\ell}(x)$ under Assumption~\ref{assump:sparse_dual}, the right-hand side of \eqref{dual_taylor2} is upper bounded by
    \begin{align*}
    	&\left[\frac{1}{2n} \sum_{i=1}^n \hat{\pi}_i X_iX_i^T \ell_0(x) +x \right]^T \left[\tilde{\ell}(x) - \hat{\ell}(x) \right] + \frac{\gamma}{n}\left[\norm{\tilde{\ell}(x)}_1 - \norm{\hat{\ell}(x)}_1 \right] \\
    	&\leq \norm{\frac{1}{2n} \sum_{i=1}^n \hat{\pi}_i X_iX_i^T \ell_0(x) +x}_{\infty} \norm{\hat{\Delta}(x)}_1 \\
    	&\quad + \frac{\gamma}{n} \left[\norm{\left[\tilde{\ell}(x)\right]_{S_{\ell}(x)}}_1 - \norm{\left[\hat{\Delta}(x)\right]_{S_{\ell}(x)} + \left[\tilde{\ell}(x)\right]_{S_{\ell}(x)}}_1 - \norm{\left[\hat{\ell}(x)\right]_{S_{\ell}(x)^c}}_1\right]\\
    	&\leq \frac{\gamma}{5n} \left[\norm{\left[\hat{\Delta}(x)\right]_{S_{\ell}(x)}}_1 + \norm{\left[\hat{\Delta}(x)\right]_{S_{\ell}(x)^c}}_1 \right] + \frac{\gamma}{n} \left[\norm{\left[\hat{\Delta}(x) \right]_{S_{\ell}(x)}}_1 - \norm{\left[\hat{\Delta}(x) \right]_{S_{\ell}(x)^c}}_1 \right]\\
    	&\leq \frac{6\gamma}{5n} \sqrt{s_{\ell}(x)} \norm{\hat{\Delta}(x)}_2
    \end{align*}
    where $\norm{\left[\hat{\Delta}(x) \right]_{S_{\ell}(x)}}_1 = \sum_{k\in S_{\ell}(x)} \left|\hat{\Delta}_k(x) \right|$.
    
    Next, we derive lower bounds for the terms on the left-hand side of \eqref{dual_taylor3}. We know from Lemma~\ref{lem:res_cross_poly} by $a_0=5$ that 
	$$\hat{\ell}(x)-\tilde{\ell}(x) \in \mathcal{C}_1\left(S_{\ell}(x), 3 \right) \bigcup B_d^{(1)}\left(0,\, \frac{13 r_{\ell}^2 \norm{x}_2^2}{2 \kappa_R^4} \cdot \frac{\nu}{\gamma}\right) \equiv \mathcal{C}_1\left(S_{\ell}(x), 3 \right) \bigcup B_d^{(1)}(0,r_1),$$ 
	where $B_d^{(1)}(0,r) = \left\{v\in \mathbb{R}^d: \norm{v}_1 \leq r\right\}$ and $r_1 \equiv \frac{13 r_{\ell}^2 \norm{x}_2^2}{2 \kappa_R^4} \cdot \frac{\nu}{\gamma}$.
	
	Now, by Lemma~\ref{lem:rel_poly_cost}, there exist absolute constants $A_2,A_2'>0$ such that 
	\begin{align*}
		\norm{\frac{1}{n}\sum_{i=1}^n \hat{\pi}_i X_i X_i^T \ell_0(x) + 2x}_{\infty} &\leq A_2'\cdot \frac{\sqrt{\phi_1}}{\kappa_R}\norm{x}_2 \left[\sqrt{\frac{\log(d/\delta)}{n}} + \frac{\log(d/\delta)}{n} \right]\\
		&\quad + A_2'\cdot \frac{\sqrt{\phi_1 \phi_{s_{\ell}(x)}}\norm{x}_2}{\kappa_R^2} \left[1 + \sqrt{\frac{\log(d/\delta)}{n}} + \frac{\log(d/\delta)}{n}\right] r_{\pi}(n,\delta)\\
		&\lesssim A_2\cdot \frac{\sqrt{\phi_1}}{\kappa_R}\norm{x}_2\sqrt{\frac{\log(d/\delta)}{n}} + A_2\cdot \frac{\sqrt{\phi_1 \phi_{s_{\ell}(x)}}\norm{x}_2}{\kappa_R^2} \cdot r_{\pi}(n,\delta) \equiv \frac{\gamma}{n},
	\end{align*}
	where we apply our growth condition $\sqrt{\frac{s_{\ell}(x)\log(ed/s_{\ell}(x))}{n}} + \sqrt{\frac{\log(1/\delta)}{n}} < 1$. Similarly, we obtain from Lemma~\ref{lem:rel_poly_cost} that there exist absolute constants $A_3,A_3'>0$ such that with probability at least $1-\delta$,
	\begin{align*}
		&\sup_{v\in \mathcal{C}_2\left(S_{\ell}(x), \vartheta_{\ell}\right)\cap B_d(0,1)} \left|\frac{1}{n} \sum_{i=1}^n \hat{\pi}_i\left[X_i^Tv\right]^2 \right| \\
		&\leq \frac{2A_3'}{13}\cdot \phi_{s_{\ell}(x)}\left[1+ \sqrt{\frac{\log(1/\delta) + s_{\ell}(x) \log\left(d/s_{\ell}(x)\right)}{n}}\right] \left[1 + r_{\pi}(n,\delta)\right] \\
		&\leq \frac{2A_3}{13}\cdot \phi_{s_{\ell}(x)} \equiv \frac{\nu}{n},
	\end{align*}
	where we apply our growth condition $\sqrt{\frac{s_{\ell}(x)\log(ed/s_{\ell}(x))}{n}} + \sqrt{\frac{\log(1/\delta)}{n}} + r_{\pi} < 1$ again. Hence, with probability at least $1-\delta$,
	$$\frac{\kappa_R^2}{\norm{x}_2} \sqrt{\frac{2\gamma}{13\nu}} \geq \left(\frac{A_2^2 \,\kappa_R^6\, \phi_1}{A_3^2\norm{x}_2 \phi_{s_{\ell}(x)}}\right)^{\frac{1}{4}} \left(\frac{\log(d/\delta)}{n}\right)^{\frac{1}{4}} + \left(\frac{A_2^2 \phi_1 \kappa_R^4}{A_3^2 \phi_{s_{\ell}(x)} \norm{x}_2^2}\right)^{\frac{1}{4}} \sqrt{r_{\pi}} > r_{\ell},$$
	where the last inequality follows from our condition. In other words, with probability at least $1-\delta$, we have that $r_1= \frac{13 r_{\ell}^2 \norm{x}_2^2}{2 \kappa_R^4} \cdot \frac{\nu}{\gamma} < 1$, and 
	$$\left[\mathcal{C}_1\left(S_{\ell}(x), 3 \right) \bigcup B_d^{(1)}(0,r_1)\right] \bigcap \mathbb{S}^{d-1} = \mathcal{C}_1\left(S_{\ell}(x), 3 \right) \bigcap \mathbb{S}^{d-1}.$$

	$\bullet$ {\bf Term I:} This term coincides with the left-hand side of \eqref{dual_taylor} in the proof of Lemma~\ref{lem:dual_consist}. Thus, we conclude that there exists absolute constant $A,A_0>0$ such that with probability at least $1-\delta$, 
	\begin{align*}
	\textbf{Term I} &\geq \left[\frac{\kappa_R^2}{4} - A\phi_{s_{\ell}(x)} \sqrt{\frac{\log(1/\delta)+s_{\ell}(x)\log(ed/s_{\ell}(x))}{n}} \right]\norm{\hat{\Delta}(x)}_2^2 \\
	&\quad - A\left[\sqrt{\frac{\log(1/\delta)+s_{\ell}(x)\log(ed/s_{\ell}(x))}{n}} + 1\right]\phi_{s_{\ell}(x)} \cdot r_{\ell} \norm{\ell_0(x)}_2\norm{\hat{\Delta}(x)}_2\\
	&\geq \left[\frac{\kappa_R^2}{4} - A_0\phi_{s_{\ell}(x)} \sqrt{\frac{\log(1/\delta)+s_{\ell}(x)\log(ed/s_{\ell}(x))}{n}} \right]\norm{\hat{\Delta}(x)}_2^2 - A_0 \phi_{s_{\ell}(x)} \cdot r_{\ell} \norm{\ell_0(x)}_2\norm{\hat{\Delta}(x)}_2,
	\end{align*}
	where we use our growth condition $\sqrt{\frac{s_{\ell}(x)\log(ed/s_{\ell}(x))}{n}} + \sqrt{\frac{\log(1/\delta)}{n}} < 1$ to obtain the second inequality.
	
	$\bullet$ {\bf Term II:} By Lemmas~\ref{lem:gram_mat_conc} and \ref{lem:cone_ball_bnd}, we have that with probability at least $1-\delta$,
	\begin{align*}
		\left|\textbf{Term II}\right| &\leq \sup_{\hat{\Delta}(x) \in \mathcal{C}_1(S_{\ell},3) \bigcap \mathbb{S}^{d-1}} \frac{1}{4n} \sum_{i=1}^n \left|\hat{\pi}_i - \pi_i\right| \left(X_i^T \hat{\Delta}(x)\right)^2\\
		&\leq \sup_{\hat{\Delta}(x) \in \mathcal{C}_1(S_{\ell},3) \bigcap \mathbb{S}^{d-1}} \frac{r_{\pi}(n,\delta)}{4n} \sum_{i=1}^n  \left(X_i^T \hat{\Delta}(x)\right)^2 \\
		&\lesssim \phi_{s_{\ell}(x)} \cdot r_{\pi}(n,\delta) \norm{\hat{\Delta}(x)}_2^2 \left[\sqrt{\frac{\log(1/\delta)+s_{\ell}(x)\log(ed/s_{\ell}(x))}{n}} + 1\right]\\
		&\lesssim \phi_{s_{\ell}(x)} \cdot r_{\pi}(n,\delta) \norm{\hat{\Delta}(x)}_2^2,
	\end{align*}
	where we again use our growth condition $\sqrt{\frac{s_{\ell}(x)\log(ed/s_{\ell}(x))}{n}} + \sqrt{\frac{\log(1/\delta)}{n}} < 1$.
	
	$\bullet$ {\bf Term III:} By Assumption~\ref{assump:sparse_dual}, Lemmas~\ref{lem:dual_approx_cone} and \ref{lem:cone_ball_bnd} with $\vartheta_{\ell}=\frac{r_{\ell}}{\sqrt{1-r_{\ell}^2}} < 1$ and $r_{\ell} \leq \frac{1}{2}$, 
	\begin{align*}
	&\left|\textbf{Term III}\right| \\
	&\leq \sup_{\substack{u \in \mathcal{C}_1(S_{\ell},3) \bigcap \mathbb{S}^{d-1}\\v\in \mathcal{C}_2(S_{\ell}(x), 1) \cap \mathbb{S}^{d-1}}} \left[\frac{1}{2n} \sum_{i=1}^n \left|\hat{\pi}_i - \pi_i\right| \left|X_i^Tu \right| |X_i^Tv| \right] r_{\ell} \norm{\ell_0(x)}_2 \norm{\hat{\Delta}(x)}_2\\
	&\leq r_{\pi}(n,\delta) \sup_{\substack{u \in \mathcal{C}_1(S_{\ell},3) \bigcap \mathbb{S}^{d-1}\\v\in \mathcal{C}_2(S_{\ell}(x), 1) \cap \mathbb{S}^{d-1}}} \left[\frac{1}{2n} \sum_{i=1}^n \left|X_i^Tu \right| |X_i^Tv| \right] r_{\ell} \norm{\ell_0(x)}_2 \norm{\hat{\Delta}(x)}_2\\
	&\leq r_{\pi}(n,\delta) \cdot r_{\ell} \norm{\ell_0(x)}_2 \norm{\hat{\Delta}(x)}_2 \cdot \phi_{s_{\ell}(x)} \left[\sqrt{\frac{\log(1/\delta) + s\log(ed/s)}{n}} + 1 \right]
	\end{align*}
    Under the growth condition $\sqrt{\frac{s_{\ell}(x)\log(ed/s_{\ell}(x))}{n}} + \sqrt{\frac{\log(1/\delta)}{n}} < 1$, we have that with probability at least $1-\delta$, 
    $$\left|\textbf{Term III}\right| \lesssim \phi_{s_{\ell}(x)} \cdot r_{\pi}(n,\delta) \cdot r_{\ell} \norm{\ell_0(x)}_2 \norm{\hat{\Delta}(x)}_2.$$
    
    Combining \eqref{dual_taylor3} with the above three bounds yields that there exists an absolute constant $A_1>0$ such that with probability at least $1-\delta$,
    \begin{align*}
    	&\left[\frac{\kappa_R^2}{4} - \frac{A_1}{16}\phi_{s_{\ell}(x)} \sqrt{\frac{\log(1/\delta)+s_{\ell}(x)\log(ed/s_{\ell}(x))}{n}} -\frac{A_1}{16} \cdot \phi_{s_{\ell}(x)} \cdot r_{\pi}(n,\delta) \right]\norm{\hat{\Delta}(x)}_2^2 \\
    	&\quad - A_1 \left[1+r_{\pi}(n,\delta)\right] \phi_{s_{\ell}(x)} r_{\ell} \norm{\ell_0(x)}_2\norm{\hat{\Delta}(x)}_2\\
    	&\leq \frac{6\gamma}{5n} \sqrt{s_{\ell}(x)} \norm{\hat{\Delta}(x)}_2.
    \end{align*}
    By our growth condition $\sqrt{\frac{s_{\ell}(x)\log(ed/s_{\ell}(x))}{n}} + \sqrt{\frac{\log(1/\delta)}{n}} + r_{\pi} < \min\left\{\frac{\kappa_R^2}{A_1 \phi_{s_{\ell}(x)}}, 1 \right\}$, we conclude by dividing both sides of the inequality by $\norm{\hat{\Delta}(x)}_2$ and $\frac{\kappa_R^2}{8}$ that
    \begin{align*}
    \norm{\hat{\ell}(x) - \tilde{\ell}(x)}_2 &\leq \frac{48\gamma}{5\kappa_R^2 n} \sqrt{s_{\ell}(x)} + \frac{8A_1}{\kappa_R^2} \left[1+r_{\pi}(n,\delta)\right] \phi_{s_{\ell}(x)} r_{\ell} \norm{\ell_0(x)}_2\\
    &\lesssim \frac{\gamma \sqrt{s_{\ell}(x)}}{n \cdot \kappa_R^2} + \frac{\phi_{s_{\ell}(x)} \norm{x}_2}{\kappa_R^4} \cdot r_{\ell}.
    \end{align*}
    Finally, the bound for $\norm{\hat{\ell}(x)-\tilde{\ell}(x)}_2$ follows by substituting the asymptotic rate of $\frac{\gamma}{n}$ into the upper bound of $\norm{\hat{\ell}(x)-\tilde{\ell}(x)}_2$.
\end{proof}

\subsection{Proofs of Lemma~\ref{lem:strong_duality} and Lemma~\ref{lem:negligible_cond_var}}
\label{subapp:proofs_dual}

\begin{customlem}{5}[Sufficient conditions for strong duality]
Let $\delta \in (0,1)$. Under the assumptions of \autoref{thm:dual_consist2}, there exists $\gamma>0$ such that $\frac{\gamma}{n} \asymp r_{\gamma}\equiv r_{\gamma}(n,\delta)$ and strong duality as well as the relation \eqref{primal_dual_sol2} in Proposition~\ref{prop:dual_debias} hold with probability at least $1-\delta$.
\end{customlem}

\begin{proof}[Proof of Lemma~\ref{lem:strong_duality}]
Given that the primal problem \eqref{debias_prog} is convex, it suffices to verify that Slater's condition holds with probability at least $1-\delta$. That is, it suffices to show that the interior of the constraint set defined as 
$$\left\{\bm{w}\in \mathbb{R}^n:\norm{x-\frac{1}{\sqrt{n}} \sum_{i=1}^n w_i\cdot\hat{\pi}_i\cdot X_i}_{\infty} < \frac{\gamma}{n}\right\}$$ 
is nonempty for some $\gamma>0$ satisfying $\frac{\gamma}{n}\asymp r_{\gamma}$. To this end, we define
\begin{align*}
\bm{w}^*(x)=\left(w_1^*(x),...,w_n^*(x)\right)^T &= \frac{1}{\sqrt{n}} \bm{X}\left[\mathrm{E}\left(RXX^T\right) \right]^{-1} x \\
&=\left(\frac{1}{\sqrt{n}} X_1^T\left[\mathrm{E}\left(RXX^T\right) \right]^{-1} x,..., \frac{1}{\sqrt{n}} X_n^T\left[\mathrm{E}\left(RXX^T\right) \right]^{-1} x \right)^T
\end{align*}
with $\bm{X}=(X_1,...,X_n)^T \in \mathbb{R}^{n\times d}$. The inverse $\left[\mathrm{E}(RXX^T) \right]^{-1}=\left[\mathrm{E}\left(\pi(X) XX^T \right) \right]^{-1}$ is valid under Assumption~\ref{assump:gram_lower_bnd}. Then,
\begin{align*}
&\norm{x-\frac{1}{\sqrt{n}} \sum_{i=1}^n w_i^*(x)\cdot\hat{\pi}_i\cdot X_i}_{\infty} \\
	&\leq \underbrace{\norm{x-\frac{1}{\sqrt{n}} \sum_{i=1}^n w_i^*(x)\cdot\pi_i\cdot X_i}_{\infty}}_{\textbf{Term I}} + \underbrace{\norm{\frac{1}{\sqrt{n}} \sum_{i=1}^n w_i^*(x)\left(\hat{\pi}_i-\pi_i\right) X_i}_{\infty}}_{\textbf{Term II}}.
\end{align*}

$\bullet$ \textbf{Term I:} We know that $\mathrm{E}\left[\frac{1}{\sqrt{n}} \sum_{i=1}^n w_i^*(x)\cdot\pi_i\cdot X_i \right]=x$. Thus, by Assumption~\ref{assump:sub_gaussian} and Lemma 2.7.7 in \cite{vershynin2018high}, all the random variables 
$$x_1 - \frac{1}{\sqrt{n}} \sum_{i=1}^n \left(e_1^TX_i\right) \pi_i\cdot w_i^*(x)\,,\,...\,,\, x_d - \frac{1}{\sqrt{n}} \sum_{i=1}^n \left(e_d^TX_i\right) \pi_i\cdot w_i^*(x)$$
have zero means and sub-exponential tails (but neither identically distributed nor independent), where $e_1,...,e_d$ are the standard basis vectors in $\mathbb{R}^d$. Moreover, we can write
$$x_k - \frac{1}{\sqrt{n}} \sum_{i=1}^n \left(e_k^TX_i\right) \pi_i\cdot w_i^*(x) = \frac{1}{n}\sum_{i=1}^n\left(x_k - \pi(X_i)X_{ik} X_i^T \left[\mathrm{E}(X_1R_1X_1^T) \right]^{-1} x \right)$$
for all $k=1,...,d$. By Assumption~\ref{assump:sub_gaussian} and Lemma~\ref{lem:Orlicz_prod}, we know that 
\begin{align*}
	&\max_{1\leq k\leq d} \norm{x_k - \pi(X_i)X_{ik} X_i^T \left[\mathrm{E}(X_1R_1X_1^T) \right]^{-1} x}_{\psi_1} \\
	&\leq \max_{1\leq k\leq d}\left(\norm{\pi(X_i)X_{ik} X_i^T \left[\mathrm{E}(X_1R_1X_1^T) \right]^{-1} x}_{\psi_1} + \frac{1}{\log 2} \left|\mathrm{E}\left\{\pi(X_i)X_{ik} X_i^T \left[\mathrm{E}(X_1R_1X_1^T) \right]^{-1} x \right\} \right|\right)\\
	& \leq \left(1+\frac{1}{\log 2}\right) \max_{1\leq k\leq d} \norm{\pi(X_i)X_{ik} X_i^T \left[\mathrm{E}(X_1R_1X_1^T) \right]^{-1} x}_{\psi_1}\\
	&\lesssim \max_{1\leq k\leq d} \max_{1\leq i\leq n} \norm{X_{ik}}_{\psi_2}\norm{\pi(X_i)X_i^T \left[\mathrm{E}(X_1R_1X_1^T) \right]^{-1} x}_{\psi_2}\\
	&\lesssim \frac{\sqrt{\phi_1}}{\kappa_R} \norm{x}_2
\end{align*}
for all $1\leq i \leq n$. Therefore, by the union bound and Bernstein's inequality (Corollary 2.8.3 in \citealt{vershynin2018high}), we know that there exists an absolute constant $A_0>0$ such that for all $t>0$,
\begin{align*}
	&\mathrm{P}\left(\norm{x - \frac{1}{\sqrt{n}} \sum_{i=1}^n w_i^*(x)\cdot \pi_i\cdot X_i}_{\infty} > t \right) \\
	& \leq d\cdot \max_{1\leq k\leq d} \mathrm{P}\left\{\left| \frac{1}{n}\sum_{i=1}^n\left(x_k - \pi(X_i)X_{ik} X_i^T \left[\mathrm{E}(X_1R_1X_1^T) \right]^{-1} x \right) \right| > t \right\}\\
	&\leq 2d \cdot \exp\left(-nA_0 \min\left\{\frac{t \kappa_R}{\sqrt{\phi_1}\norm{x}_2},\, \frac{t^2 \kappa_R^2}{\phi_1\norm{x}_2^2} \right\} \right).
\end{align*}
In other words, with probability at least $1-\delta$,
\begin{align*}
\textbf{Term I}=\norm{x - \frac{1}{\sqrt{n}} \bm{X}^T \bm{\Pi} w^*(x)}_{\infty} &\lesssim \frac{\sqrt{\phi_1} \norm{x}_2}{\kappa_R} \left(\sqrt{\frac{\log(2d/\delta)}{n}} + \frac{\log(2d/\delta)}{n} \right)\\
&\lesssim \frac{\sqrt{\phi_1} \norm{x}_2}{\kappa_R} \sqrt{\frac{\log(d/\delta)}{n}},
\end{align*}
where we apply our growth condition  $\sqrt{\frac{s_{\ell}(x)\log(ed/s_{\ell}(x))}{n}} + \sqrt{\frac{\log(1/\delta)}{n}} < 1$ in the last asymptotic inequality.

\noindent $\bullet$ \textbf{Term II:} We know from the proof of Lemma~\ref{lem:gamma_rate} (specifically \eqref{dual_para_term_II}) that
\begin{align*}
\textbf{Term II} &= \norm{\frac{1}{\sqrt{n}} \sum_{i=1}^n w_i^*(x) \left(\hat{\pi}_i -\pi_i\right) X_i}_{\infty}\\
&=\norm{\frac{1}{n} \sum_{i=1}^n \left(\hat{\pi}_i-\pi_i\right)X_iX_i^T\left[\mathrm{E}\left(RXX^T\right) \right]^{-1} x}_{\infty}\\
	&= \norm{\frac{1}{2n} \sum_{i=1}^n \left(\hat{\pi}_i-\pi_i\right)X_iX_i^T\ell_0(x)}_{\infty}\\
	&\lesssim \frac{\sqrt{\phi_1 \phi_{s_{\ell}(x)}}\norm{x}_2}{\kappa_R^2} \cdot  r_{\pi}(n,\delta),
\end{align*}
where we again apply our growth condition  $\sqrt{\frac{s_{\ell}(x)\log(ed/s_{\ell}(x))}{n}} + \sqrt{\frac{\log(1/\delta)}{n}} < 1$ in the last asymptotic inequality.

In summary, we conclude that there exists an absolute constant $A_0>0$ such that with probability at least $1-\delta$, the constraint set is nonempty for all $\gamma >0$ satisfying
\begin{align*}
	\frac{\gamma}{n} &\gtrsim A_0\left[\frac{\sqrt{\phi_1}\norm{x}_2}{\kappa_R} \sqrt{\frac{\log(d/\delta)}{n}} + \frac{\sqrt{\phi_1 \cdot \phi_{s_{\ell}(x)}}\norm{x}_2}{\kappa_R^2} \cdot r_{\pi}(n,\delta) \right]\\
	&\equiv A_0\cdot r_{\gamma}(n,\delta)
\end{align*}
that is defined in \autoref{thm:dual_consist2}. In particular, there exists $\gamma >0$ such that $\frac{\gamma}{n} \asymp r_{\gamma}(n,\delta)$ and Slater's condition holds.
\end{proof}

\vspace{3mm}

\begin{customlem}{6}[Negligible conditional variance term]
Suppose that Assumptions~\ref{assump:basic}, \ref{assump:sub_gaussian}, \ref{assump:subgau_error}, \ref{assump:gram_lower_bnd}, \ref{assump:prop_consistent_rate}, and \ref{assump:sparse_dual} hold
as well as 
$$\left[s_{\beta} \log(d/s_{\beta}) + s_{\ell}(x) \log(d/s_{\ell}(x))\right]^2 =O\left(\sqrt{n}\right), \quad \frac{\norm{x}_2}{\kappa_R^2}\sqrt{\phi_{2s_{\ell}(x)} \phi_{2s_{\beta}}} = O(1), \quad \text{ and } \quad r_{\ell}=O(1).$$
If $\norm{\hat{\beta}-\beta_0}_2=o_P(1)$ and $\norm{\hat{\ell}(x) - \tilde{\ell}(x)}_2 =o_P(1)$, then the ``Conditional variance II'' term in Proposition~\ref{prop:cond_MSE} is asymptotically negligible in the sense that
\begin{align*}
	\left(\beta_0 - \hat{\beta}\right)^T \left[\sum_{i=1}^n \hat{w}_i(x)^2\hat{\pi}_i\left(1-\hat{\pi}_i\right)X_iX_i^T \right] \left(\beta_0 - \hat{\beta}\right) =o_P(1).
\end{align*}
\end{customlem}

\begin{proof}[Proof of Lemma~\ref{lem:negligible_cond_var}]
By Lemma~\ref{lem:strong_duality}, strong duality holds with probability at least $1-\delta$ under our assumptions. Moreover, by the proofs of \autoref{thm:lasso_const} and \autoref{thm:dual_consist2}, we know that with probability at least $1-\delta$,
$$\hat{\beta} -\beta_0 \in \mathcal{C}_1(S_{\beta}, 3), \quad \hat{\ell}(x) -\tilde{\ell}(x) \in \mathcal{C}_1\left(S_{\ell}(x), 3 \right) \bigcup B_d^{(1)}\left(0,\, \frac{13 r_{\ell}^2 \norm{x}_2^2}{2 \kappa_R^4} \cdot \frac{\nu}{\gamma}\right)\equiv \mathcal{C}_1\left(S_{\ell}(x), 3 \right) \bigcup B_d^{(1)}(0,r_1),$$
where $r_1\equiv \frac{13 r_{\ell}^2 \norm{x}_2^2}{2 \kappa_R^4} \cdot \frac{\nu}{\gamma} < 1$ under the condition in \autoref{thm:dual_consist2}. In other words, with probability at least $1-\delta$,
$$\left[\mathcal{C}_1\left(S_{\ell}(x), 3 \right) \bigcup B_d^{(1)}(0,r_1)\right] \bigcap \mathbb{S}^{d-1} = \mathcal{C}_1\left(S_{\ell}(x), 3 \right) \bigcap \mathbb{S}^{d-1}.$$
Thus, we derive from \eqref{primal_dual_sol2} that
\begin{align}
\label{neg_varI}
\begin{split}
&\left(\beta_0 - \hat{\beta}\right)^T \left[\sum_{i=1}^n \hat{w}_i(x)^2\hat{\pi}_i\left(1-\hat{\pi}_i\right)X_iX_i^T \right] \left(\beta_0 - \hat{\beta}\right) \\
&= \left(\beta_0 - \hat{\beta}\right)^T \left[\frac{1}{4n}\sum_{i=1}^n \hat{\pi}_i\left(1-\hat{\pi}_i\right) \left[X_i^T \hat{\ell}(x)\right]^2 X_iX_i^T \right] \left(\beta_0 - \hat{\beta}\right).
\end{split}
\end{align}
Notice that we do not assume the estimated propensity scores $\hat{\pi}_i, i=1,...,n$ to be within $[0,1]$. In order to eliminate the factor $\hat{\pi}_i\left(1-\hat{\pi}_i\right)$ in each summand, we observe the identity that $a(1-a)-b(1-b) = (1-2b)(a-b) - (a-b)^2$ for all $a,b\in \mathbb{R}$. Hence, with probability at least $1-\delta$, we have that
$$\left|\hat{\pi}_i\left(1-\hat{\pi}_i\right) - \pi_i(1-\pi_i)\right| \leq \left| \hat{\pi}_i -\pi_i \right| + \left| \hat{\pi}_i -\pi_i \right|^2 \lesssim r_{\pi}$$
for all $i=1,...,n$, where the last asymptotic inequality follows from Assumption~\ref{assump:prop_consistent_rate}. Moreover, since $\max_{1\leq i \leq n} \left|\pi_i (1-\pi_i) \right| \leq \frac{1}{4}$, we can upper bound \eqref{neg_varI} (up to an absolute constant) with probability at least $1-\delta$ by
$$\left(\beta_0 - \hat{\beta}\right)^T \left[\frac{(1+r_{\pi})}{16n}\sum_{i=1}^n \left[X_i^T \hat{\ell}(x)\right]^2 X_iX_i^T \right] \left(\beta_0 - \hat{\beta}\right).$$
We proceed to further upper bound this quantity as:
\begin{align}
\label{neg_varII}
	&\left(\beta_0 - \hat{\beta}\right)^T \left[\frac{(1+r_{\pi})}{16n}\sum_{i=1}^n \left[X_i^T \hat{\ell}(x)\right]^2 X_iX_i^T \right] \left(\beta_0 - \hat{\beta}\right) \nonumber\\
	&\leq (1+r_{\pi}) \left(\beta_0 - \hat{\beta}\right)^T \left[\frac{1}{16n}\sum_{i=1}^n \left[X_i^T \left(\hat{\ell}(x) - \tilde{\ell}(x)\right)\right]^2 X_iX_i^T + \frac{1}{16n}\sum_{i=1}^n \left[X_i^T \tilde{\ell}(x)\right]^2 X_iX_i^T \right] \left(\beta_0 - \hat{\beta}\right) \nonumber\\
\begin{split}
	&\lesssim (1+r_{\pi}) \norm{\hat{\beta}-\beta_0}_2^2 \Biggl[\sup_{\substack{u\in \mathcal{C}_1\left(S_{\ell}(x), 3 \right) \bigcap \mathbb{S}^{d-1} \\v\in \mathcal{C}_1(S_{\beta},3) \cap \mathbb{S}^{d-1}}} \frac{1}{n}\sum_{i=1}^n \left(X_i^Tu\right)^2 \left(X_i^T v\right)^2 \norm{\hat{\Delta}(x)}_2^2\\
	&\hspace{35mm} + \sup_{\substack{u\in \mathbb{S}^{d-1},\norm{u}_0\leq s_{\ell}(x)\\v\in \mathcal{C}_1(S_{\beta},3) \cap \mathbb{S}^{d-1}}} \frac{\norm{\tilde{\ell}(x)}_2^2}{n}\sum_{i=1}^n \left(X_i^Tu\right)^2 \left(X_i^T v\right)^2\Biggr],
\end{split}
\end{align}
where $\hat{\Delta}(x) = \hat{\ell}(x) -\tilde{\ell}(x)$. 

Now, by Lemma~\ref{lem:size_dom_cone}, there exist discrete sets $M_{d,\ell}, M_{d,\beta} \subset B_d(0,1)$ with cardinality $|M_{d,\ell}| \leq \frac{3}{2} \left(\frac{5ed}{s_{\ell}(x)} \right)^{s_{\ell}(x)}$ and $|M_{d,\beta}| \leq \frac{3}{2} \left(\frac{5ed}{s_{\beta}} \right)^{s_{\beta}}$ such that $\mathcal{C}_1(S_{\ell}(x),3) \cap \mathbb{S}^{d-1} \subset 10 \cdot \mathrm{conv}(M_{d,\ell})$, $\mathcal{C}_1(S_{\beta},3) \cap \mathbb{S}^{d-1} \subset 10 \cdot \mathrm{conv}(M_{d,\beta})$ and $\norm{u}_0 \leq s_{\ell}(x), \norm{v}_0 \leq s_{\beta}$ for all $u\in M_{d,\ell}, v\in M_{d,\beta}$. Moreover,  we know from Lemma~\ref{lem:max_biconvex} that the supremum of a biconvex function on $\mathrm{conv}(M_{d,\ell})$ or $\mathrm{conv}(M_{d,\beta})$ is attained at its vertices, which are contained in $M_{d,\ell}$ or $M_{d,\beta}$. Hence, we can upper bound the above display as:
\begin{align*}
&(1+r_{\pi}) \norm{\hat{\beta}-\beta_0}_2^2 \Bigg[\sup_{\substack{u\in M_{d,\ell} \\v\in M_{d,\beta}}} \frac{\norm{\hat{\ell}(x) -\tilde{\ell}(x)}_2^2}{n}\sum_{i=1}^n \left(X_i^Tu\right)^2 \left(X_i^T v\right)^2  \\
&\hspace{35mm} + \sup_{\substack{u\in M_{d,\ell}\\v\in M_{d,\beta}}} \frac{\norm{\tilde{\ell}(x)}_2^2}{n}\sum_{i=1}^n \left(X_i^Tu\right)^2 \left(X_i^T v\right)^2\Bigg]\\
&\leq (1+r_{\pi}) \norm{\hat{\beta}-\beta_0}_2^2 \left[\norm{\hat{\ell}(x) -\tilde{\ell}(x)}_2^2 + (1+r_{\ell})^2 \norm{\ell_0(x)}_2^2\right] \\
&\quad \times \left[\sup_{\substack{u\in M_{d,\ell}\\v\in M_{d,\beta}}}\mathrm{E}\left[(X^Tu)^2 (X^Tv)^2\right] + \frac{1}{\sqrt{n}}\norm{\mathbb{G}_n}_{\mathcal{F}} \right],
\end{align*}
where
$$\mathcal{F}=\left\{f_{u,v}: \mathbb{R}^d \to \mathbb{R}: f_{u,v}(X) = \left(X^Tu\right)^2 \left(X^Tv\right)^2 - \mathrm{E}\left[\left(X^Tu\right)^2 \left(X^Tv\right)^2 \right],\, u\in M_{d,\ell}, v\in M_{d,\beta}\right\}$$
and we utilize Assumption~\ref{assump:sparse_dual} to obtain that
$$\norm{\tilde{\ell}(x)}_2 \leq \norm{\tilde{\ell}(x)-\ell_0(x)}_2 + \norm{\ell_0(x)}_2 \leq (1+r_{\ell}) \norm{\ell_0(x)}_2$$
in the last inequality. Now, by Assumption~\ref{assump:sub_gaussian}, we know from Cauchy-Schwarz inequality that
\begin{align*}
\sup_{\substack{u\in M_{d,\ell}\\v\in M_{d,\beta}}}\mathrm{E}\left[(X^Tu)^2 (X^Tv)^2\right] & \leq \sup_{u\in M_{d,\ell}} \sqrt{\mathrm{E}\left[(X^Tu)^4\right]}  \sup_{v\in M_{d,\beta}} \sqrt{\mathrm{E}\left[(X^Tv)^4\right]}\\
&\lesssim \sup_{u\in M_{d,\ell}}\norm{X^Tu}_{\psi_2}^2 \sup_{v\in M_{d,\beta}}\norm{X^Tv}_{\psi_2}^2\\
&\lesssim \phi_{s_{\ell}(x)}  \phi_{s_{\beta}}.
\end{align*}
Furthermore, for any $f_{u_1,v_1},f_{u_2,v_2} \in \mathcal{F}$ and $\alpha=\frac{1}{2}$, we have that 
\begin{align*}
&\norm{f_{u_1,v_1} - f_{u_2,v_2}}_{\mathrm{P},\psi_{\alpha}} \\
&= \norm{\left(X^Tu_1\right)^2 \left(X^Tv_1\right)^2 - \mathrm{E}\left[\left(X^Tu_1\right)^2 \left(X^Tv_1\right)^2 \right] - \left(X^Tu_2\right)^2 \left(X^Tv_2\right)^2 + \mathrm{E}\left[\left(X^Tu_2\right)^2 \left(X^Tv_2\right)^2 \right]}_{\psi_{\alpha}}\\
&\stackrel{\text{(v)}}{\lesssim} \norm{\left(X^Tu_1\right)^2 \left(X^Tv_1\right)^2 - \left(X^Tu_2\right)^2 \left(X^Tv_2\right)^2}_{\psi_{\alpha}} \\
&\lesssim \norm{\left[\left(X^Tu_1\right)^2 - \left(X^Tu_2\right)^2 \right] \left(X^Tv_1\right)^2}_{\psi_{\alpha}} + \norm{\left(X^Tu_2\right)^2\left[\left(X^Tv_1\right)^2 - \left(X^Tv_2\right)^2\right]}_{\psi_{\alpha}}\\
&\stackrel{\text{(vi)}}{\leq} \norm{X^T(u_1+u_2)}_{\psi_2} \norm{X^T(u_1-u_2)}_{\psi_2} \norm{X^Tv_1}_{\psi_2}^2 + \norm{X^Tu_2}_{\psi_2}^2 \norm{X^T(v_1+v_2)}_{\psi_2} \norm{X^T(v_1-v_2)}_{\psi_2}\\
&\stackrel{\text{(vii)}}{\lesssim} \phi_{2s_{\ell}(x)} \phi_{s_{\beta}} \norm{u_1-u_2}_2 + \phi_{s_{\ell}(x)} \phi_{2s_{\beta}} \norm{v_1-v_2}_2\\
&\lesssim \phi_{2s_{\ell}(x)} \phi_{2s_{\beta}} \sqrt{\norm{u_1-u_2}_2^2 + \norm{v_1-v_2}_2^2},
\end{align*}
where the asymptotic inequality (v) follows from the fact that
$$\norm{\mathrm{E}\left[\left(X^Tu_1\right)^2 \left(X^Tv_1\right)^2 - \left(X^Tu_2\right)^2 \left(X^Tv_2\right)^2\right]}_{\psi_{\alpha}} \lesssim \norm{\left(X^Tu_1\right)^2 \left(X^Tv_1\right)^2 - \left(X^Tu_2\right)^2 \left(X^Tv_2\right)^2}_{\psi_{\alpha}}$$
because for any random variable $X$ with $\norm{X}_{\psi_{\alpha}}<\infty$, we know that $\mathrm{P}\left(|X| > t\right) \leq 2e^{-\frac{t^{\alpha}}{\norm{X}_{\psi_{\alpha}}^{\alpha}}}$ and
\begin{align*}
\mathrm{E}|X|^p &= \int_0^{\infty} \mathrm{P}\left(|X| > t\right) pt^{p-1} dt\\
&\leq 2\int_0^{\infty} pt^{p-1} e^{-\frac{t^{\alpha}}{\norm{X}_{\psi_{\alpha}}^{\alpha}}} dt\\
&=\frac{2p}{\alpha} \norm{X}_{\psi_{\alpha}}^p\int_0^{\infty} u^{\frac{p}{\alpha}-1} e^{-u}du \quad \text{ as we take } u=\frac{t^{\alpha}}{\norm{X}_{\psi_{\alpha}}^{\alpha}}\\
&= \frac{2p}{\alpha} \cdot\Gamma\left(\frac{p}{\alpha}\right)\cdot \norm{X}_{\psi_{\alpha}}^p,
\end{align*}
where $\Gamma(\cdot)$ is the Gamma function. In addition, the inequality (vi) is due to Lemma~\ref{lem:Orlicz_prod} and (vii) uses the fact that $\norm{u_1\pm u_2}_0\leq 2s_{\ell}(x)$ and $\norm{v_1\pm v_2}_0 \leq 2s_{\beta}$ for any $u_1,u_2 \in M_{d,\ell}$ and $v_1,v_2\in M_{d,\beta}$. In particular, $\norm{f_{u,v}-f_{0,0}}_{\mathrm{P},\psi_{\alpha}} = \norm{f_{u,v}}_{\mathrm{P},\psi_{\alpha}}$ also satisfies the above bound. The diameter of $\mathcal{F}$ with respect to the norm
$$\varrho\left((u_1,v_1), (u_2,v_2)\right) := \sqrt{\norm{u_1-u_2}_2^2 + \norm{v_1-v_2}_2^2}$$
is at most 2. Moreover, by Lemma~\ref{lem:size_dom_cone}, the covering number satisfies that for any $\varepsilon>0$,
$$N\left(\varepsilon, M_{d,\ell}\times M_{d,\beta}, \varrho \right) \leq |M_{d,\ell}| \times |M_{d,\beta}| \leq \frac{9}{4} \left(\frac{5ed}{s_{\ell}(x)}\right)^{s_{\ell}(x)} \left(\frac{5ed}{s_{\beta}}\right)^{s_{\beta}}.$$
Thus, by Lemma~\ref{lem:empirical_bound} and Remark~\ref{remark:index_class}, we obtain that with probability at least $1-\delta$, 
\begin{align*}
\frac{1}{\sqrt{n}} \norm{\mathbb{G}_n}_{\mathcal{F}} & \lesssim \frac{1}{\sqrt{n}} \norm{\mathbb{G}_n}_{\mathcal{F}_{\eta}}\\
&\lesssim \frac{\phi_{2s_{\ell}(x)}  \phi_{2s_{\beta}}}{\sqrt{n}}\Biggl[\int_0^2 \sqrt{\log N\left(\varepsilon, M_{d,\ell}\times M_{d,\beta}, \varrho\right)} d\varepsilon \\
&\quad + \frac{1}{\sqrt{n}} \int_0^2 \left[\log N\left(\varepsilon, M_{d,\ell}\times M_{d,\beta}, \varrho\right)\right]^2 d\varepsilon + 2\left(\sqrt{\log(1/\delta)} + \frac{\left[\log(1/\delta)\right]^2}{\sqrt{n}}\right)\Biggr]\\
&\lesssim \frac{\phi_{2s_{\ell}(x)}  \phi_{2s_{\beta}}}{\sqrt{n}}  \Biggl[\sqrt{s_{\ell}(x)\log\left(\frac{ed}{s_{\ell}(x)}\right) + s_{\beta}\log\left(\frac{ed}{s_{\beta}}\right)} \\
&\quad + \frac{1}{\sqrt{n}} \left(s_{\ell}(x)\log\left(\frac{ed}{s_{\ell}(x)}\right) + s_{\beta}\log\left(\frac{ed}{s_{\beta}}\right) \right)^2 + \sqrt{\log(1/\delta)} + \frac{\left[\log(1/\delta)\right]^2}{\sqrt{n}}\Biggr]\\
&\lesssim \phi_{2s_{\ell}(x)} \phi_{2s_{\beta}} \left[\sqrt{ \frac{s_{\ell}(x)\log(ed/s_{\ell(x)} ) }{n} + \frac{s_\beta\log(ed/s_\beta ) }{n} } \right. \\
&\quad \left. +  \left(\frac{s_{\ell}(x)\log(ed/s_{\ell(x)} ) }{\sqrt{n}} + \frac{s_\beta\log(ed/s_\beta ) }{\sqrt{n}}\right)^2 + \sqrt{\frac{\log(1/\delta)}{n}} + \left(\frac{\log(1/\delta)}{\sqrt{n}} \right)^2\right].
\end{align*}
Set $\delta = \frac{1}{\max\{d,n\}}$. By our growth conditions $\left[s_{\beta} \log(d/s_{\beta}) + s_{\ell}(x) \log(d/s_{\ell}(x))\right]^2 =O\left(\sqrt{n}\right)$ and $\frac{\norm{x}_2}{\kappa_R^2}\sqrt{\phi_{2s_{\ell}(x)} \phi_{2s_{\beta}}} = O(1)$, we conclude that $\frac{1}{\sqrt{n}}\norm{\mathbb{G}_n}_{\mathcal{F}} = O_P(1)$.

Finally, combining our above results with \autoref{thm:lasso_const} and \autoref{thm:dual_consist2}, we conclude that
\begin{align*}
&\left(\beta_0 - \hat{\beta}\right)^T \left[\sum_{i=1}^n \hat{w}_i(x)^2\hat{\pi}_i\left(1-\hat{\pi}_i\right)X_iX_i^T \right] \left(\beta_0 - \hat{\beta}\right)\\
&\lesssim \norm{\hat{\beta}-\beta_0}_2^2 \left[\norm{\hat{\ell}(x) -\tilde{\ell}(x)}_2^2 + (1+r_{\ell})^2 \frac{\norm{x}_2^2}{\kappa_R^4}\right]\left[\phi_{s_{\ell}(x)} \phi_{s_{\beta}} + \frac{1}{\sqrt{n}}\norm{\mathbb{G}_n}_{\mathcal{F}} \right]\\
&=o_P(1) \left[o_P(1) + (1+r_{\ell})^2 \frac{\norm{x}_2^2}{\kappa_R^4}\right] \left[\phi_{s_{\ell}(x)} \phi_{s_{\beta}} + O_P(1)\right]\\
&=o_P(1)
\end{align*}
by our growth condition $\frac{\norm{x}_2}{\kappa_R^2}\sqrt{\phi_{2s_{\ell}(x)} \phi_{2s_{\beta}}} = O(1)$ again.
\end{proof}

\subsection{Proof of \autoref{thm:asym_normal}}
\label{subapp:proof_asym_normal}

\begin{customthm}{7}[Asymptotic normality of the debiased estimator]
Let $x\in \mathbb{R}^d$ be a fixed query point and $s_{\max}=\max\left\{s_{\beta},s_{\ell}(x)\right\}$. Suppose that Assumptions~\ref{assump:basic}, \ref{assump:sub_gaussian}, \ref{assump:subgau_error}, \ref{assump:gram_lower_bnd}, \ref{assump:prop_consistent_rate}, and \ref{assump:sparse_dual} hold and $\lambda, \gamma, r_{\ell} >0$ are specified as in \autoref{thm:lasso_const}(b) and \autoref{thm:dual_consist2}, respectively. Furthermore, we assume that $\sigma_{\epsilon}\phi_1\phi_{s_{\max}}^{3/2}\norm{x}_2=O(1)$, 
\begin{align*}
\frac{(1+\kappa_R^2)s_{\max}\log(nd)}{\kappa_R^4} = o\left(\sqrt{n}\right), \quad \text{ and } \quad \frac{(1+\kappa_R^4)\sqrt{\log(nd)}}{\kappa_R^6}\left(\sqrt{s_{\max}}\cdot r_{\ell} +s_{\max}r_{\pi}\right)=o(1).
\end{align*} 
Then, when $\sigma_m^2(x) = \lim_{n\to\infty} \sigma_{\epsilon}^2 \cdot x^T\left[\mathrm{E}\left(R XX^T\right)\right]^{-1}x$ exists, we have that
$$\sqrt{n}\left[\hat{m}^{\mathrm{debias}}(x;\hat{\bm{w}}) - m_0(x)\right] \stackrel{d}{\to} \mathcal{N}\left(0,\, \sigma_m^2(x) \right).$$
\end{customthm}

\begin{proof}[Proof of \autoref{thm:asym_normal}]
We proceed our calculations in \eqref{bias_var_dual} and explicate the ``Bias term'' as:
\begin{align*}
	\sqrt{n}\left[\hat{m}^{\text{debias}}(x;\hat{\bm{w}}) - m_0(x)\right]
	&= -\frac{1}{2\sqrt{n}} \sum_{i=1}^n R_i\epsilon_i X_i^T\hat{\ell}(x) + \left[\frac{1}{2n} \sum_{i=1}^n R_i X_i^T\hat{\ell}(x) X_i +x \right]^T \sqrt{n}\left(\hat{\beta} -\beta_0 \right)\\
	&= \underbrace{-\frac{1}{2\sqrt{n}} \sum_{i=1}^n R_i\epsilon_i X_i^T\ell_0(x)}_{\text{``Stochastic term''}} + \underbrace{\frac{1}{2\sqrt{n}} \sum_{i=1}^n R_i\epsilon_i X_i^T\left[\ell_0(x) - \hat{\ell}(x)\right]}_{\textbf{Bias I}} \\
    &\quad + \underbrace{\left[\frac{1}{2n} \sum_{i=1}^n R_i X_i^T\ell_0(x) X_i +x \right]^T \sqrt{n}\left(\hat{\beta} -\beta_0 \right)}_{\textbf{Bias II}}\\
	&\quad + \underbrace{\left[\frac{1}{2n} \sum_{i=1}^n R_i X_i^T\left[\hat{\ell}(x) - \ell_0(x)\right] X_i \right]^T \sqrt{n}\left(\hat{\beta} -\beta_0 \right)}_{\textbf{Bias III}}.
\end{align*}
Notice that the ``Stochastic term'' can be written as:
$$-\frac{1}{2\sqrt{n}} \sum_{i=1}^n R_i\epsilon_i X_i^T\ell_0(x) = \frac{1}{\sqrt{n}}\sum_{i=1}^n R_i\epsilon_i X_i^T \left[\mathrm{E}\left(R XX^T\right)\right]^{-1} x$$
according to $\ell_0(x) = -2\left[\mathrm{E}\left(R XX^T\right)\right]^{-1} x$ in \eqref{popul_dual_exact}, where $R_i\epsilon_i X_i^T\left[\mathrm{E}\left(R XX^T\right)\right]^{-1} x,i=1,...,n$ are i.i.d. random variables with mean 0 and variance $\sigma_{m,d}^2(x) = \sigma_{\epsilon}^2\cdot x^T\left[\mathrm{E}(R XX^T)\right]^{-1}x < \infty$. Moreover, the limit $\sigma_m^2(x) = \lim_{n\to\infty}\sigma_{\epsilon}^2\cdot x^T\left[\mathrm{E}(R XX^T)\right]^{-1}x$ exists and is finite by Assumption~\ref{assump:gram_lower_bnd}. Now, we shall verify the condition of the Lyapunov central limit theorem. Specifically, each random variable $R_i\epsilon_i X_i^T \left[\mathrm{E}\left(R XX^T\right)\right]^{-1} x$ has a zero mean, is subexponential, and satisfies that
\begin{align*}
\left[\mathbb{E}\left|R_i\epsilon_i X_i^T \left[\mathrm{E}\left(R XX^T\right)\right]^{-1} x\right|^3 \right]^{\frac{1}{3}} &\lesssim \norm{R_i\epsilon_i X_i^T \left[\mathrm{E}\left(R XX^T\right)\right]^{-1} x}_{\psi_1} \\
&\lesssim \norm{\epsilon_i}_{\psi_2} \norm{R_i X_i^T \left[\mathrm{E}\left(R XX^T\right)\right]^{-1} x}_{\psi_2}\\
&\lesssim \sigma_{\epsilon} \sqrt{x^T\left[\mathrm{E}(R XX^T)\right]^{-1}x} = \sigma_{m,d}(x)
\end{align*}
by Proposition 2.7.1 and Lemma 2.7.7 in \cite{vershynin2018high}. Thus, 
\begin{align*}
\frac{1}{\sigma_{m,d}^3(x)}\sum_{i=1}^n \mathbb{E}\left|\frac{1}{\sqrt{n}}  R_i\epsilon_i X_i^T \left[\mathrm{E}\left(R XX^T\right)\right]^{-1} x\right|^3 &\lesssim \frac{1}{\sqrt{n}} = o(1), 
\end{align*}
and the Lyapunov's condition holds. As a result,
$$\frac{1}{\sqrt{n}} \sum_{i=1}^n R_i\epsilon_i X_i^T\left[\mathrm{E}\left(R XX^T\right)\right]^{-1} x \stackrel{d}{\to} \mathcal{N}\left(0,\, \sigma_m^2(x) \right).$$
It remains to show that ``$\textbf{Bias I} + \textbf{Bias II} + \textbf{Bias III}$'' is of order $o_P(1)$, \emph{i.e.}, converging to 0 in probability.

Let $\delta \in (0,1)$. Notice that our assumptions guarantee that with probability at least $1-\delta$, the strong duality holds as well as 
\begin{align*}
&\hat{\beta}-\beta_0 \in \mathcal{C}_1(S_{\beta}, 3), \quad \tilde{\ell}(x) - \ell_0(x) \in \mathcal{C}_2(S_{\ell}(x),1)\\ 
&\text{ and } \quad \hat{\ell}(x) - \tilde{\ell}(x) \in \mathcal{C}_1(S_{\ell}(x),3) \bigcup B_d^{(1)}\left(0,\, \frac{13 r_{\ell}^2 \norm{x}_2^2}{2 \kappa_R^4} \cdot \frac{\nu}{\gamma}\right) \equiv \mathcal{C}_1\left(S_{\ell}(x),3\right) \bigcup B_d^{(1)}(0,r_1), 
\end{align*}
recalling our arguments in \autoref{thm:lasso_const}, \autoref{thm:dual_consist2}, Lemma~\ref{lem:dual_approx_cone}, and Lemma~\ref{lem:res_cross_poly}. Moreover, with probability at least $1-\delta$, we have that $r_1\equiv \frac{13 r_{\ell}^2 \norm{x}_2^2}{2 \kappa_R^4} \cdot \frac{\nu}{\gamma} < 1$ under the condition in \autoref{thm:dual_consist2}. In other words, 
$$\left[\mathcal{C}_1\left(S_{\ell}(x), 3 \right) \bigcup B_d^{(1)}(0,r_1)\right] \bigcap \mathbb{S}^{d-1} = \mathcal{C}_1\left(S_{\ell}(x), 3 \right) \bigcap \mathbb{S}^{d-1}$$
with probability at least $1-\delta$. 
We will use these results repeatedly in the following proof.\\

\noindent$\bullet$ \textbf{Bias I:} We compute that
\begin{align}
\label{bias_I}
\begin{split}
	\textbf{Bias I} &= \frac{1}{2\sqrt{n}} \sum_{i=1}^n R_i\epsilon_i X_i^T\left[\ell_0(x) - \hat{\ell}(x)\right]\\
	&= \frac{1}{2\sqrt{n}} \sum_{i=1}^n R_i\epsilon_i X_i^T\left[\ell_0(x) - \tilde{\ell}(x)\right] + \frac{1}{2\sqrt{n}} \sum_{i=1}^n R_i\epsilon_i X_i^T\left[\tilde{\ell}(x) - \hat{\ell}(x)\right]\\
	&\leq \underbrace{\sqrt{n} \cdot \frac{1}{2n} \sum_{i=1}^n R_i\epsilon_i X_i^T\left[\ell_0(x) - \tilde{\ell}(x)\right]}_{\textbf{Term I}}\\
	&\quad +\sqrt{n} \underbrace{\sup_{u \in \left[\mathcal{C}_1(S_{\ell}(x),3) \bigcup B_d^{(1)}\left(0,\, r_1\right)\right] \bigcap \mathbb{S}^{d-1}} \left|\frac{1}{2n} \sum_{i=1}^n R_i\epsilon_i X_i^T u \right|\norm{ \hat{\Delta}(x)}_2}_{\textbf{Term II}},
\end{split}
\end{align}
where $\hat{\Delta}(x) = \hat{\ell}(x) -\tilde{\ell}(x)$. 

As for \textbf{Term I}, we note that under Assumptions~\ref{assump:sub_gaussian} and \ref{assump:subgau_error}, $R_i\epsilon_i X_i^T\left[\ell_0(x) - \hat{\ell}(x)\right], i=1,...,n$ are i.i.d. centered sub-exponential random variables. In addition,
\begin{align*}
	\max_{1\leq i \leq n} \norm{R_i\epsilon_i X_i^T\left[\ell_0(x) - \tilde{\ell}(x)\right]}_{\psi_1} &\leq \max_{1\leq i \leq n} \norm{R_i\epsilon_i}_{\psi_2} \norm{X_i^T\left[\ell_0(x) - \tilde{\ell}(x)\right]}_{\psi_2}\\
	&\lesssim \max_{1 \leq i \leq n} \sigma_{\epsilon} \left\{\mathrm{E}\left[\left(\ell_0(x) - \tilde{\ell}(x)\right)^T X_iX_i^T \left(\ell_0(x) - \tilde{\ell}(x)\right)\right]\right\}^{\frac{1}{2}}\\
	&\leq \max_{1 \leq i \leq n}\sigma_{\epsilon} \cdot \sup_{v \in \mathcal{C}_2(S_{\ell}(x),1)\cap B_d\left(0, 10r_{\ell}\norm{\ell_0(x)}_2\right)} \left[\mathrm{E}\left(X_i^T v\right)^2\right]^{\frac{1}{2}}\\
	&\lesssim \max_{1 \leq i \leq n} \sigma_{\epsilon}r_{\ell} \norm{\ell_0(x)}_2 \sup_{v \in \mathcal{C}_2(S_{\ell}(x),1)\cap B_d\left(0,1\right)} \left[\mathrm{E}\left(X_i^T v\right)^2\right]^{\frac{1}{2}}\\
	&\lesssim \sigma_{\epsilon} \sqrt{\phi_{s_{\ell}(x)}} \cdot r_{\ell} \norm{\ell_0(x)}_2,
\end{align*}
where we use Lemma~\ref{lem:Orlicz_prod} in the first inequality, Lemma~\ref{lem:dual_approx_cone} in the third inequality, and Lemma~\ref{lem:size_dom_cone} in the fifth inequality. By Bernstein's inequality, we bound the probability
\begin{align*}
	&\mathrm{P}\left(\left|\frac{1}{2n} \sum_{i=1}^n R_i\epsilon_i X_i^T\left[\ell_0(x) - \tilde{\ell}(x)\right]\right| > t \right) \\
	& \leq 2\exp\left(-nA_0\min\left\{\frac{t^2}{\sigma_{\epsilon}^2 \phi_{s_{\ell}(x)} \cdot r_{\ell}^2 \norm{\ell_0(x)}_2^2}, \frac{t}{\sigma_{\epsilon} \sqrt{\phi_{s_{\ell}(x)}} \cdot r_{\ell} \norm{\ell_0(x)}_2} \right\} \right)
\end{align*}
for some absolute constant $A_0 >0$. It implies that
\begin{align*}
&\textbf{Term I} \lesssim \sigma_{\epsilon} \sqrt{\phi_{s_{\ell}(x)}} \cdot r_{\ell} \norm{\ell_0(x)}_2 \left[\sqrt{\log(1/\delta)} + \frac{\log(1/\delta)}{\sqrt{n}} \right].
\end{align*}

As for \textbf{Term II}, by Assumptions~\ref{assump:sub_gaussian} and \ref{assump:subgau_error}, $R_i\epsilon_i X_i^Tv, i=1,...,n$ are independent, centered, and sub-exponential random variables for all $v\in \left[\mathcal{C}_1(S_{\ell}(x),3) \bigcup B_d^{(1)}\left(0,\, r_1\right)\right] \bigcap \mathbb{S}^{d-1}$ with
\begin{align*}
	\max_{1\leq i \leq n} \norm{R_i\epsilon_i X_i^Tv}_{\psi_1} &\leq \max_{1\leq i \leq n} \norm{R_i\epsilon_i}_{\psi_2} \norm{X_i^Tv}_{\psi_2}\\
	&\lesssim \sigma_{\epsilon} \sqrt{\phi_{s_{\ell}(x)}}.
\end{align*}
Thus, by Lemma~\ref{lem:gram_mat_conc}, union bound, and Bernstein's inequality, we obtain that with probability at least $1-\delta$,
\begin{align*}
	&\sup_{v \in \left[\mathcal{C}_1(S_{\ell}(x),3) \bigcup B_d^{(1)}\left(0,\, r_1\right)\right] \bigcap \mathbb{S}^{d-1}} \left|\frac{1}{2n} \sum_{i=1}^n R_i\epsilon_i X_i^Tv \right| \\
	&\lesssim \sigma_{\epsilon} \sqrt{\phi_{s_{\ell}(x)}} \left[\sqrt{\frac{\log(1/\delta) + s_{\ell}(x) \log(ed/s_{\ell}(x))}{n}} + \frac{\log(1/\delta) + s_{\ell}(x) \log(ed/s_{\ell}(x))}{n} \right].
\end{align*}

Plugging the result back into \eqref{bias_I}, we conclude that with probability at least $1-\delta$,
\begin{align*}
	&\textbf{Bias I} \\
	&\lesssim \sigma_{\epsilon} \sqrt{\phi_{s_{\ell}(x)}} \cdot r_{\ell} \norm{\ell_0(x)}_2 \left[\sqrt{\log(1/\delta)} + \frac{\log(1/\delta)}{\sqrt{n}} \right]\\
	&\quad + \sqrt{n} \norm{\hat{\ell}(x) -\tilde{\ell}(x)}_2 \sigma_{\epsilon} \sqrt{\phi_{s_{\ell}(x)}} \left[\sqrt{\frac{\log(1/\delta) + s_{\ell}(x) \log(ed/s_{\ell}(x))}{n}} + \frac{\log(1/\delta) + s_{\ell}(x) \log(ed/s_{\ell}(x))}{n} \right]\\
	&\lesssim \sigma_{\epsilon} \sqrt{\phi_{s_{\ell}(x)}}\left[\frac{\norm{x}_2 r_{\ell}}{\kappa_R^2} \sqrt{\log(1/\delta)} + \norm{\hat{\ell}(x) - \tilde{\ell}(x)}_2 \sqrt{\log(1/\delta) + s_{\ell}(x) \log(d/s_{\ell}(x))}\right] 
\end{align*}
when $\log(1/\delta) =o(n)$ and $s_{\max}\left(\log n + \log d\right) = o(\sqrt{n})$.\\

\noindent$\bullet$ \textbf{Bias II:} Notice that
	$$\textbf{Bias II} \leq \norm{\frac{1}{2n} \sum_{i=1}^n R_i X_i^T\ell_0(x) X_i +x}_{\infty} \sqrt{n}\norm{\hat{\beta} -\beta_0}_1.$$
	By \autoref{thm:lasso_const}, we know that with probability at least $1-\delta$,
	\begin{align*}
		\sqrt{n}\norm{\hat{\beta}-\beta_0}_1 \leq 4\sqrt{n s_{\beta}}\norm{\hat{\beta} - \beta_0}_2 & \lesssim \frac{\sigma_{\epsilon}\sqrt{\phi_1} s_{\beta}}{\kappa_R^2} \sqrt{\log\left(d/\delta\right)}.
	\end{align*} 
   Additionally, by Lemma~\ref{lem:gamma_rate}, we know that with probability at least $1-\delta$,
	\begin{align*}
		\norm{\frac{1}{2n} \sum_{i=1}^n R_i X_i^T\ell_0(x) X_i +x}_{\infty} &\lesssim \frac{\sqrt{\phi_1}}{\kappa_R} \norm{x}_2 \left[\sqrt{\frac{\log(d/\delta)}{n}} + \frac{\log(d/\delta)}{n} \right]
	\end{align*}
   Thus, we obtain that with probability at least $1-\delta$,
   \begin{align*}
   	\textbf{Bias II} &\lesssim \frac{\sigma_{\epsilon}\phi_1 s_{\beta} \norm{x}_2}{\kappa_R^3} \cdot \frac{\log(d/\delta)}{\sqrt{n}},
   \end{align*}
provided that $s_{\max}\left(\log n + \log d\right) = o(\sqrt{n})$.\\
	
	\noindent $\bullet$ \textbf{Bias III:} Notice that 
	\begin{align}
		\label{bias_III}
			&\textbf{Bias III} \nonumber\\ 
			&= \left[\frac{1}{2n} \sum_{i=1}^n R_i X_i^T\left[\hat{\ell}(x) - \ell_0(x)\right] X_i \right]^T \sqrt{n}\left(\hat{\beta} -\beta_0 \right) \nonumber\\
			&= \left[\hat{\ell}(x) - \tilde{\ell}(x)\right]^T \left[\frac{1}{2n}\sum_{i=1}^n \left(R_iX_iX_i^T - \mathrm{E}\left[R_iX_iX_i^T\right]\right)\right] \sqrt{n}\left(\hat{\beta} -\beta_0 \right) \nonumber\\
			&\quad + \left[\tilde{\ell}(x) - \ell_0(x)\right]^T \left[\frac{1}{2n}\sum_{i=1}^n \left(R_iX_iX_i^T - \mathrm{E}\left[R_iX_iX_i^T\right]\right)\right] \sqrt{n}\left(\hat{\beta} -\beta_0 \right) \nonumber\\
			&\quad + \frac{\sqrt{n}}{2} \left[\hat{\ell}(x) - \tilde{\ell}(x)\right]^T \mathrm{E}\left(R_iX_iX_i^T\right) \left(\hat{\beta} -\beta_0 \right) + \frac{\sqrt{n}}{2} \left[\tilde{\ell}(x) - \ell_0(x)\right]^T \mathrm{E}\left(R_iX_iX_i^T\right) \left(\hat{\beta} -\beta_0 \right) \nonumber\\
			&\leq \norm{\hat{\ell}(x)-\tilde{\ell}(x)}_2 \sqrt{n}\norm{\hat{\beta}-\beta_0}_2 \underbrace{\sup_{\substack{u\in \mathcal{C}_1(S_{\ell}(x),3) \cap \mathbb{S}^{d-1}\\v\in \mathcal{C}_1(S_{\beta},3) \cap \mathbb{S}^{d-1}}} \left|\frac{1}{2n}\sum_{i=1}^n \left\{R_i(X_i^Tu)(X_i^Tv) -\mathrm{E}\left[R_i(X_i^Tu)(X_i^Tv)\right]\right\} \right|}_{\textbf{Term III}} \nonumber\\
			&\quad + r_{\ell}\norm{\ell_0(x)}_2 \sqrt{n}\norm{\hat{\beta}-\beta_0}_2 \underbrace{\sup_{\substack{u\in \mathcal{C}_2(S_{\ell}(x),1) \cap \mathbb{S}^{d-1}\\v\in \mathcal{C}_1(S_{\beta},3) \cap \mathbb{S}^{d-1}}} \left|\frac{1}{2n}\sum_{i=1}^n \left\{R_i(X_i^Tu)(X_i^Tv) -\mathrm{E}\left[R_i(X_i^Tu)(X_i^Tv)\right]\right\} \right|}_{\textbf{Term IV}} \nonumber\\
			&\quad + \frac{\sqrt{n}}{2} \norm{\hat{\ell}(x)-\tilde{\ell}(x)}_2 \norm{\hat{\beta}-\beta_0}_2 \underbrace{\sup_{\substack{u\in \mathcal{C}_1(S_{\ell}(x),3) \cap \mathbb{S}^{d-1}\\v\in \mathcal{C}_1(S_{\beta},3) \cap \mathbb{S}^{d-1}}} \left|\mathrm{E}\left[R_i(X_i^Tu) (X_i^Tv)\right]\right|}_{\textbf{Term V}}  \nonumber\\
		\begin{split}
			&\quad + \frac{\sqrt{n}}{2} r_{\ell}\norm{\ell_0(x)}_2 \norm{\hat{\beta}-\beta_0}_2 \underbrace{\sup_{\substack{u\in \mathcal{C}_2(S_{\ell}(x),1) \cap \mathbb{S}^{d-1}\\v\in \mathcal{C}_1(S_{\beta},3) \cap \mathbb{S}^{d-1}}} \left|\mathrm{E}\left[R_i(X_i^Tu) (X_i^Tv)\right]\right|}_{\textbf{Term VI}}.
		\end{split}
	\end{align}
   By Lemmas~\ref{lem:gram_mat_conc} and \ref{lem:cone_ball_bnd}, we know that with probability at least $1-\delta$,
   \begin{align*}
       \textbf{Term III},\, \textbf{Term IV} \lesssim \sqrt{\phi_{s_{\ell}(x)} \phi_{s_{\beta}}} \sqrt{\frac{\log(1/\delta) + s_{\beta}\log(ed/s_{\beta}) + s_{\ell}(x) \log(ed/s_{\ell}(x))}{n}}
   \end{align*}
   when $s_{\max}\left(\log n + \log d\right) = o(\sqrt{n})$ and $\log(1/\delta) = o(n)$. By Lemma~\ref{lem:size_dom_cone}, we can find discrete sets $M_{d,\ell}, M_{d,\beta} \subset B_d(0,1)$ with cardinalities $|M_{d,\ell}|\leq \frac{3}{2}\left(\frac{5ed}{s_{\ell}(x)}\right)^{s_{\ell}(x)}$ and $|M_{d,\beta}|\leq \frac{3}{2}\left(\frac{5ed}{s_{\beta}}\right)^{s_{\beta}}$ such that 
	\begin{align*}
	&\mathcal{C}_1(S_{\ell}(x),3) \cap \mathbb{S}^{d-1} \subset 10 \cdot \mathrm{conv}(M_{d,\ell}), \quad \mathcal{C}_2(S_{\ell}(x),1) \cap \mathbb{S}^{d-1} \subset 6 \cdot \mathrm{conv}(M_{d,\ell}),\\
	& \text{ and } \quad  \mathcal{C}_1(S_{\beta},3) \cap \mathbb{S}^{d-1} \subset 10\cdot \mathrm{conv}(M_{d,\beta})
	\end{align*}
	with $\norm{u}_0\leq s_{\ell}(x), \norm{v}_0\leq s_{\beta}$ for all $u\in M_{d,\ell}, v\in M_{d,\beta}$. Thus, 
	\begin{align*}
		\textbf{Term V}, \textbf{Term VI} \lesssim \sqrt{\phi_{s_{\ell}(x)} \phi_{s_{\beta}}}.
	\end{align*}
	 Again, we know from \autoref{thm:lasso_const} that with probability at least $1-\delta$
	\begin{align*}
		\sqrt{n}\norm{\hat{\beta}-\beta_0}_2 \lesssim \frac{\sigma_{\epsilon}}{\kappa_R^2} \sqrt{\phi_1s_{\beta}\log(d/\delta)}.
	\end{align*}
 Thus, from \eqref{bias_III}, we conclude that with probability at least $1-\delta$,
	\begin{align*}
		\textbf{Bias III} 
		&\lesssim \frac{\sigma_{\epsilon}\sqrt{\phi_1 \phi_{s_{\beta}} \phi_{s_{\ell}(x)} s_{\beta}\log(d/\delta)}}{\kappa_R^2} \left[\norm{\hat{\ell}(x) -\tilde{\ell}(x)}_2 + \frac{\norm{x}_2 r_{\ell}}{\kappa_R^2} \right] \\
		&\quad \times \left[1 + \sqrt{\frac{\log(1/\delta) + s_{\beta}\log(ed/s_{\beta}) + s_{\ell}(x) \log(ed/s_{\ell}(x))}{n}} \right]\\
		&\lesssim \frac{\sigma_{\epsilon}\sqrt{\phi_1 \phi_{s_{\beta}} \phi_{s_{\ell}(x)} s_{\beta}\log(d/\delta)}}{\kappa_R^2} \left[\norm{\hat{\ell}(x) -\tilde{\ell}(x)}_2 + \frac{\norm{x}_2 r_{\ell}}{\kappa_R^2} \right]
	\end{align*}
    when $s_{\max}\left(\log n + \log d\right) = o(\sqrt{n})$ and $\log(1/\delta) = o(n)$.\\
	
	In summary, we can bound the bias terms with probability at least $1-\delta$ as:
	\begin{align*}
		&\textbf{Bias I} + \textbf{Bias II} + \textbf{Bias III} \\
		&\lesssim \sigma_{\epsilon} \sqrt{\phi_{s_{\ell}(x)}}\left[\frac{\norm{x}_2 r_{\ell}}{\kappa_R^2} \sqrt{\log(1/\delta)} + \norm{\hat{\ell}(x) - \tilde{\ell}(x)}_2 \sqrt{\log(1/\delta) + s_{\ell}(x) \log(d/s_{\ell}(x))}\right] \\
		&\quad + \frac{\sigma_{\epsilon}\phi_1 s_{\beta} \norm{x}_2}{\kappa_R^3} \cdot \frac{\log(d/\delta)}{\sqrt{n}} + \frac{\sigma_{\epsilon}\sqrt{\phi_1 \phi_{s_{\beta}} \phi_{s_{\ell}(x)} s_{\beta}\log(d/\delta)}}{\kappa_R^2} \left[\norm{\hat{\ell}(x) -\tilde{\ell}(x)}_2 + \frac{\norm{x}_2 r_{\ell}}{\kappa_R^2} \right].
	\end{align*}
	Recalling the result of \autoref{thm:dual_consist2} that
	\begin{align*}
		&\norm{\hat{\ell}(x) - \tilde{\ell}(x)}_2 \lesssim \frac{\norm{x}_2}{\kappa_R^3}\left[ \sqrt{\frac{\phi_1 s_{\ell}(x)\log(d/\delta)}{n}} + \frac{\phi_{s_{\ell}(x)}}{\kappa_R} \cdot r_{\ell} + \frac{\sqrt{\phi_1 \phi_{s_{\ell}(x)} s_{\ell}(x)}}{\kappa_R} \cdot  r_{\pi} \right]
	\end{align*}
and setting $\delta=\frac{1}{n}$, we obtain that
\begin{align*}
	&\textbf{Bias I} + \textbf{Bias II} + \textbf{Bias III} \\
	&=O_P\left(\frac{\sigma_{\epsilon}\phi_1 \norm{x}_2 s_{\beta}}{\kappa_R^3} \cdot \frac{\log(nd)}{\sqrt{n}} \right) \\
	&\quad + O_P\left(\frac{\sigma_{\epsilon}\norm{x}_2\sqrt{\phi_{s_{\ell}(x)} \log n}}{\kappa_R^2} \cdot r_{\ell} \right) + O_P\left(\frac{\sigma_{\epsilon}\norm{x}_2\sqrt{\phi_1 \phi_{s_{\beta}} \phi_{s_{\ell}(x)} s_{\beta} \log(nd)}}{\kappa_R^4} \cdot r_{\ell}\right) \\
	&\quad + O_P\left(\sigma_{\epsilon}\sqrt{\phi_{s_{\ell}(x)}\left[\log n + s_{\ell}(x) \log(d/s_{\ell}(x)) \right]} + \frac{\sigma_{\epsilon} \sqrt{\phi_1 \phi_{s_{\beta}} \phi_{s_{\ell}(x)} s_{\beta} \log(nd)}}{\kappa_R^2} \right) \norm{\hat{\ell}(x) - \tilde{\ell}(x)}_2\\
	&= O_P\left(\frac{\sigma_{\epsilon}\phi_1 \norm{x}_2 s_{\max}}{\kappa_R^3} \cdot \frac{\log(nd)}{\sqrt{n}} \right) + O_P\left(\frac{\sigma_{\epsilon}\phi_1 \phi_{s_{\max}}(1+\kappa_R^2)\norm{x}_2}{\kappa_R^4} \cdot \frac{s_{\max}\log(nd)}{\sqrt{n}}\right)\\
	&\quad + O_P\left(\frac{\sigma_{\epsilon}\norm{x}_2\sqrt{\phi_{s_{\max}} \log n}}{\kappa_R^2} \cdot r_{\ell} \right) + O_P\left(\frac{\sigma_{\epsilon}\norm{x}_2\left(1+\kappa_R^2 \right)\sqrt{\phi_1 \phi_{s_{\max}}^2 s_{\max} \log(nd)}}{\kappa_R^6} \cdot r_{\ell}\right)\\
	&\quad + O_P\left(\frac{\sigma_{\epsilon}\phi_1 \phi_{s_{\max}}^{\frac{3}{2}} \left(1+\kappa_R^2 \right) \norm{x}_2 s_{\max} \sqrt{\log(nd)}}{\kappa_R^6} \cdot r_{\pi}\right)\\
	&= O_P\left(\frac{\sigma_{\epsilon}\phi_1 \phi_{s_{\max}}(1 +\kappa_R^2)\norm{x}_2}{\kappa_R^4} \cdot \frac{s_{\max}\log(nd)}{\sqrt{n}}\right) \\
	&\quad + O_P\left(\frac{\sigma_{\epsilon}\sqrt{\phi_1} \phi_{s_{\max}}\left(1+ \kappa_R^4 \right) \norm{x}_2\sqrt{s_{\max} \log(nd)}}{\kappa_R^6} \cdot r_{\ell}\right)\\
	&\quad + O_P\left(\frac{\sigma_{\epsilon}\phi_1 \phi_{s_{\max}}^{\frac{3}{2}} \left(1+\kappa_R^2 \right) \norm{x}_2 s_{\max} \sqrt{\log(nd)}}{\kappa_R^6} \cdot r_{\pi}\right)\\
	&=o_P(1)
\end{align*}
as long as $\sigma_{\epsilon}\phi_1\phi_{s_{\max}}^{\frac{3}{2}}\norm{x}_2=O(1)$, $\frac{(1+\kappa_R^2)s_{\max}\log(nd)}{\kappa_R^4} = o\left(\sqrt{n}\right)$, and $\frac{(1+\kappa_R^4)\sqrt{\log(nd)}}{\kappa_R^6}\left(\sqrt{s_{\max}}\cdot r_{\ell} +s_{\max}r_{\pi}\right) =o(1)$. The result follows.
\end{proof}

\subsection{Proof of Proposition~\ref{prop:var_const_est}}
\label{subapp:proof_var_const_est}

\begin{customprop}{8}[Consistent estimate of the asymptotic variance]
Let $x\in \mathbb{R}^d$ be a fixed query point. Suppose that Assumptions~\ref{assump:basic}, \ref{assump:sub_gaussian}, \ref{assump:gram_lower_bnd}, \ref{assump:prop_consistent_rate}, and \ref{assump:sparse_dual} hold and $\gamma, r_{\ell} >0$ are specified as in \autoref{thm:dual_consist2}. Furthermore, we assume that $\norm{x}_2^2 \phi_{s_{\ell}(x)}^2 = O(1)$, $\frac{(1+\kappa_R^3)}{\kappa_R^5} \sqrt{\frac{s_{\ell}(x)\log(nd)}{n}} = o(1)$, and $\frac{(1+\kappa_R^4)}{\kappa_R^6} \left[r_{\ell} + r_{\pi}\sqrt{s_{\ell}(x)}\right]=o(1)$. Then,
\begin{align*}
	&\left|\sum_{i=1}^n \hat{\pi}_i \hat{w}_i(x)^2 - x^T\left[\mathrm{E}\left(RXX^T\right)\right]^{-1}x \right|=o_P(1).
\end{align*}
\end{customprop}

\begin{proof}[Proof of Proposition~\ref{prop:var_const_est}]
By Taylor's expansion and the relations between the primal solution $\hat{\bm{w}}(x)=\left(\hat{w}_1(x),...,\hat{w}_n(x)\right)^T \in \mathbb{R}^n$ and the dual solution $\hat{\ell}(x) \in \mathbb{R}^d$ in \eqref{primal_dual_sol2}, we know that
	\begin{align*}
		&\left|\sum_{i=1}^n \hat{\pi}_i \hat{w}_i(x)^2 - x^T\left[\mathrm{E}\left(RXX^T\right)\right]^{-1}x \right|\\ 
		&= \left|\frac{1}{4n}\sum_{i=1}^n \hat{\pi}_i \left[X_i^T \hat{\ell}(x)\right]^2 - x^T\left[\mathrm{E}\left(RXX^T\right)\right]^{-1}x \right|\\
		&\leq \left|\frac{1}{4n} \sum_{i=1}^n \hat{\pi}_i \left[X_i^T \ell_0(x)\right]^2 - x^T\left[\mathrm{E}\left(RXX^T\right)\right]^{-1}x \right| + \left|\left[\frac{1}{2n}\sum_{i=1}^n \hat{\pi}_i X_iX_i^T \ell_0(x)\right]^T \left[\hat{\ell}(x)-\ell_0(x) \right] \right| \\
		&\quad + \left|\frac{1}{4n}\sum_{i=1}^n \hat{\pi}_i \left[X_i^T\left(\hat{\ell}(x) -\ell_0(x)\right)\right]^2\right|\\
		&\leq \underbrace{\left|\frac{1}{4n} \sum_{i=1}^n \pi(X_i) \left[X_i^T \ell_0(x)\right]^2 - x^T\left[\mathrm{E}\left(RXX^T\right)\right]^{-1}x \right|}_{\textbf{Term I}} + \underbrace{\left|\frac{1}{4n}\sum_{i=1}^n \left(\hat{\pi}_i -\pi_i\right)\left[X_i^T\ell_0(x)\right]^2\right|}_{\textbf{Term II}}\\
		&\quad + \underbrace{\left|\left[\frac{1}{2n}\sum_{i=1}^n \pi(X_i) X_iX_i^T \ell_0(x)\right]^T \left[\hat{\ell}(x)-\ell_0(x) \right] \right|}_{\textbf{Term III}} + \underbrace{\left|\left[\frac{1}{2n}\sum_{i=1}^n \left(\hat{\pi}_i -\pi_i\right) X_iX_i^T \ell_0(x)\right]^T \left[\hat{\ell}(x)-\ell_0(x) \right] \right|}_{\textbf{Term IV}} \\
		&\quad + \underbrace{\left|\frac{1}{4n}\sum_{i=1}^n \pi(X_i) \left[X_i^T\left(\hat{\ell}(x) -\tilde{\ell}(x)\right)\right]^2\right|}_{\textbf{Term V}} + \underbrace{\left|\frac{1}{4n}\sum_{i=1}^n \left(\hat{\pi}_i -\pi_i\right) \left[X_i^T\left(\hat{\ell}(x) -\tilde{\ell}(x)\right)\right]^2\right|}_{\textbf{Term VI}}\\
		&\quad + \underbrace{\left|\frac{1}{2n}\sum_{i=1}^n \pi(X_i)\left[X_i^T\left(\hat{\ell}(x) -\tilde{\ell}(x)\right)\right] \left[X_i^T\left(\tilde{\ell}(x)-\ell_0(x)\right)\right]\right|}_{\textbf{Term VII}} \\
		&\quad + \underbrace{\left|\frac{1}{2n}\sum_{i=1}^n \left(\hat{\pi}_i -\pi_i\right)\left[X_i^T\left(\hat{\ell}(x) -\tilde{\ell}(x)\right)\right] \left[X_i^T\left(\tilde{\ell}(x)-\ell_0(x)\right)\right]\right|}_{\textbf{Term VIII}}\\
		&\quad + \underbrace{\left|\frac{1}{4n}\sum_{i=1}^n \pi(X_i) \left[X_i^T\left(\tilde{\ell}(x)-\ell_0(x)\right)\right]^2\right|}_{\textbf{Term IX}} + \underbrace{\left|\frac{1}{4n}\sum_{i=1}^n \left(\hat{\pi}_i -\pi_i\right) \left[X_i^T\left(\tilde{\ell}(x)-\ell_0(x)\right)\right]^2\right|}_{\textbf{Term X}}.         
	\end{align*}
	We know from Assumption~\ref{assump:sparse_dual}, Lemma~\ref{lem:dual_approx_cone} and Lemma~\ref{lem:res_cross_poly} with $a_0=5$ that 
	\begin{equation}
		\label{geom_const1}
		\norm{\tilde{\ell}(x) -\ell_0(x)}_2 \leq r_{\ell} \norm{\ell_0(x)}_2, \quad \ell_0(x) \in \mathcal{C}_2\left(S_{\ell}(x),\vartheta_{\ell}\right)
	\end{equation}
	with $\vartheta_{\ell}=\frac{r_{\ell}}{\sqrt{1-r_{\ell}^2}}$, and 
	\begin{equation}
		\label{geom_const2}
		\hat{\ell}(x)-\tilde{\ell}(x) \in \mathcal{C}_1\left(S_{\ell}(x), 3 \right) \bigcup B_d^{(1)}\left(0,\, \frac{13 r_{\ell}^2 \norm{x}_2^2}{2 \kappa_R^4} \cdot \frac{\nu}{\gamma}\right) \equiv \mathcal{C}_1\left(S_{\ell}(x), 3 \right) \bigcup B_d^{(1)}(0,r_1),
	\end{equation}
	where $B_d^{(1)}(0,r) = \left\{v\in \mathbb{R}^d: \norm{v}_1 \leq r\right\}$. Moreover, with probability at least $1-\delta$, we have that $r_1\equiv \frac{13 r_{\ell}^2 \norm{x}_2^2}{2 \kappa_R^4} \cdot \frac{\nu}{\gamma} < 1$ under the condition in \autoref{thm:dual_consist2}. In other words, 
	$$\left[\mathcal{C}_1\left(S_{\ell}(x), 3 \right) \bigcup B_d^{(1)}(0,r_1)\right] \bigcap \mathbb{S}^{d-1} = \mathcal{C}_1\left(S_{\ell}(x), 3 \right) \bigcap \mathbb{S}^{d-1}$$
	with probability at least $1-\delta$. Now, we bound these ten terms separately.\\
	
	\noindent$\bullet$ \textbf{Term I:} Note that $\ell_0(x)=-2\left[\mathrm{E}\left(R XX^T\right)\right]^{-1}x$ and 
	$$\pi(X_i) \left[X_i^T\ell_0(x)\right]^2 - x^T\left[\mathrm{E}\left(RXX^T\right)\right]^{-1}x, i=1,...,n$$ 
	are independent, centered, and sub-exponential random variables with
	\begin{align*}
		\max_{1\leq i \leq n} \norm{\pi(X_i)\left[X_i^T\ell_0(x)\right]^2 - x^T\left[\mathrm{E}\left(RXX^T\right)\right]^{-1}x}_{\psi_1}
		&\lesssim \max_{1\leq i \leq n}\norm{\sqrt{\pi(X_i)} \cdot X_i^T \left[\mathrm{E}\left(\pi(X)XX^T\right)\right]^{-1}x}_{\psi_2}^2\\
		&\lesssim \frac{\norm{x}_2^2}{\kappa_R^2}.
	\end{align*}
	By Bernstein's inequality (Corollary 2.8.3 in \citealt{vershynin2018high}), we know that with probability at least $1-\delta$,
	\begin{align*}
		\textbf{Term I} &=\left|\frac{1}{4n} \sum_{i=1}^n \pi(X_i) \left[X_i^T \ell_0(x)\right]^2 -x^T\left\{\mathrm{E}\left[\pi(X) XX^T\right]\right\}^{-1}x \right| \\
		&\lesssim \frac{\norm{x}_2^2}{\kappa_R^2} \left[\sqrt{\frac{\log(1/\delta)}{n}} + \frac{\log(1/\delta)}{n} \right]\\
		&\lesssim \frac{\norm{x}_2^2}{\kappa_R^2} \sqrt{\frac{\log(1/\delta)}{n}}
	\end{align*}
	when $\log(1/\delta)=o(n)$.\\
	
	\noindent$\bullet$ \textbf{Term II:} By \eqref{geom_const1} and Lemma~\ref{lem:gram_mat_conc} that with probability at least $1-\delta$,
	\begin{align*}
		&\textbf{Term II} = \left|\frac{1}{4n}\sum_{i=1}^n \left(\hat{\pi}_i -\pi_i\right)\left[X_i^T\ell_0(x)\right]^2\right| \\
		&\lesssim r_{\pi}(n,\delta) \norm{\ell_0(x)}_2^2 \sup_{v\in \mathcal{C}_2(S_{\ell}(x),\vartheta_{\ell})\cap \mathbb{S}^{d-1}} \left|\frac{1}{4n}\sum_{i=1}^n \left(X_i^Tv\right)^2\right|\\
		&\lesssim \frac{r_{\pi}(n,\delta)\norm{x}_2^2}{\kappa_R^4} \left\{\sup_{v \in \mathcal{C}_2(S_{\ell}(x),\vartheta_{\ell})\cap \mathbb{S}^{d-1}}\left|\frac{1}{4n}\sum_{i=1}^n \left\{\left(X_i^Tv\right)^2 - \mathrm{E}\left[(X_i^Tv)^2\right]\right\}\right| + \sup_{v \in \mathcal{C}_2(S_{\ell}(x),\vartheta_{\ell})\cap \mathbb{S}^{d-1}} \mathrm{E}\left[(X_1^Tv)^2\right] \right\}\\
		&\lesssim \frac{r_{\pi}(n,\delta)\norm{x}_2^2}{\kappa_R^4} \left[\phi_{s_{\ell}(x)} \sqrt{\frac{\log(1/\delta) + s_{\ell}(x)\log(ed/s_{\ell}(x))}{n}} + \phi_{s_{\ell}(x)} \right]\\
		&\lesssim \frac{\phi_{s_{\ell}(x)}\norm{x}_2^2}{\kappa_R^4} \cdot r(n,\delta),
	\end{align*}
	when $\log d=o(n)$ and $\log(1/\delta)=o(n)$.\\
	
	\noindent$\bullet$ \textbf{Term III:} We know from the geometric constraints \eqref{geom_const1} and \eqref{geom_const2} as well as Lemmas~\ref{lem:gram_mat_conc}, \ref{lem:dual_approx_cone}, and \ref{lem:cone_ball_bnd} that with probability at least $1-\delta$,
	\begin{align*}
		&\textbf{Term III} \\
		&= \left|\left[\frac{1}{2n}\sum_{i=1}^n \pi(X_i) X_iX_i^T \ell_0(x)\right]^T \left[\hat{\ell}(x)-\ell_0(x) \right] \right|\\
		&\leq \left|\left[\frac{1}{2n}\sum_{i=1}^n \pi(X_i) X_iX_i^T \ell_0(x)\right]^T \left[\hat{\ell}(x)-\tilde{\ell}(x) \right] \right| + \left|\left[\frac{1}{2n}\sum_{i=1}^n \pi(X_i) X_iX_i^T \ell_0(x)\right]^T \left[\tilde{\ell}(x)-\ell_0(x) \right] \right|\\
		&\leq \left|\left[\frac{1}{2n}\sum_{i=1}^n \left\{\pi(X_i) X_iX_i^T \ell_0(x) -\mathrm{E}\left[\pi(X_i)X_iX_i^T \ell_0(x)\right] \right\}\right]^T \left[\hat{\ell}(x)-\tilde{\ell}(x) \right] \right| \\
		&\quad + \left|\left[\hat{\ell}(x) -\tilde{\ell}(x)\right]^T \mathrm{E}\left[\pi(X_i)X_iX_i^T\ell_0(x)\right] \right|\\
		&\quad + \left|\left[\frac{1}{2n}\sum_{i=1}^n \left\{\pi(X_i) X_iX_i^T \ell_0(x) -\mathrm{E}\left[\pi(X_i)X_iX_i^T \ell_0(x)\right] \right\}\right]^T \left[\tilde{\ell}(x) -\ell_0(x)\right] \right| \\
		&\quad + \left|\left[\tilde{\ell}(x) -\ell_0(x)\right]^T \mathrm{E}\left[\pi(X_i)X_iX_i^T\ell_0(x)\right] \right|\\
		&\leq \norm{\hat{\ell}(x)-\tilde{\ell}(x)}_2 \norm{\ell_0(x)}_2\\
		&\quad \times  \sup_{\substack{u\in \mathcal{C}_1(S_{\ell}(x),3) \cap \mathbb{S}^{d-1}\\ v\in \mathcal{C}_2(S_{\ell}(x),\vartheta_{\ell}) \cap \mathbb{S}^{d-1}}} \Bigg\{\left|\frac{1}{2n}\sum_{i=1}^n \left\{\pi(X_i) (X_i^Tu)(X_i^Tv) -\mathrm{E}\left[\pi(X_i) (X_i^Tu) (X_i^Tv)\right] \right\} \right| \\
		&\hspace{45mm} + \left|\mathrm{E}\left[\pi(X_i) (X_i^Tu) (X_i^Tv)\right]\right| \Bigg\}\\
		&\quad + \norm{\tilde{\ell}(x) - \ell_0(x)}_2 \norm{\ell_0(x)}_2\\
		&\quad\quad \times \sup_{\substack{u\in \mathcal{C}_1(S_{\ell}(x),3) \cap \mathbb{S}^{d-1}\\ v\in \mathcal{C}_2(S_{\ell}(x),1) \cap \mathbb{S}^{d-1}}} \Bigg\{\left|\frac{1}{2n}\sum_{i=1}^n \left\{\pi(X_i) (X_i^Tu)(X_i^Tv) -\mathrm{E}\left[\pi(X_i) (X_i^Tu) (X_i^Tv)\right] \right\} \right| \\
		&\hspace{45mm} + \left|\mathrm{E}\left[\pi(X_i) (X_i^Tu) (X_i^Tv)\right]\right| \Bigg\}\\
		&\lesssim \left[\norm{\ell_0(x)}_2 \norm{\hat{\ell}(x) -\tilde{\ell}(x)}_2 + r_{\ell} \norm{\ell_0(x)}_2^2 \right]\left[\phi_{s_{\ell}(x)}\sqrt{\frac{\log(1/\delta)+s_{\ell}(x)\log(ed/s_{\ell}(x))}{n}} + \phi_{s_{\ell}(x)} \right]\\
		&\lesssim \frac{\phi_{s_{\ell}(x)} \norm{x}_2}{\kappa_R^2} \left[\norm{\hat{\ell}(x) -\tilde{\ell}(x)}_2 + \frac{\norm{x}_2 r_{\ell}}{\kappa_R^2} \right]
	\end{align*}
    when $\log d=o(n)$ and $\log(1/\delta)=o(n)$. \\

    \noindent$\bullet$ \textbf{Term IV:} By Assumption~\ref{assump:prop_consistent_rate}, Lemmas~\ref{lem:size_dom_cone}, \ref{lem:gram_mat_conc}, and \ref{lem:cone_ball_bnd}, we obtain that with probability at least $1-\delta$,
    \begin{align*}
    	&\textbf{Term IV} = \left|\left[\frac{1}{2n}\sum_{i=1}^n \left(\hat{\pi}_i -\pi_i\right) X_iX_i^T \ell_0(x)\right]^T \left[\hat{\ell}(x)-\ell_0(x) \right] \right|\\
    	&\lesssim r_{\pi}(n,\delta)\Biggl[\norm{\hat{\ell}(x) -\tilde{\ell}(x)}_2 \sup_{v \in \mathcal{C}_1(S_{\ell}(x),3)\cap \mathbb{S}^{d-1}} \frac{1}{2n}\sum_{i=1}^n \left|X_i^T\ell_0(x) \cdot X_i^Tv\right| + \frac{1}{2n}\sum_{i=1}^n \left|X_i^T\ell_0(x) \cdot X_i^T\left[\tilde{\ell}(x) -\ell_0(x)\right]\right| \Biggr]\\
    	&\lesssim \frac{r_{\pi}(n,\delta) \cdot \norm{x}_2}{\kappa_R^2} \Biggl[\norm{\hat{\ell}(x) -\tilde{\ell}(x)}_2 \sup_{\substack{v \in \mathcal{C}_1(S_{\ell}(x),3) \cap \mathbb{S}^{d-1}\\ u\in \mathcal{C}_2(S_{\ell}(x),\vartheta_{\ell})\cap \mathbb{S}^{d-1}}} \frac{1}{2n}\sum_{i=1}^n \left|X_i^Tu \cdot X_i^Tv\right| \\
    	&\quad\quad + \sup_{u \in \mathcal{C}_2(S_{\ell}(x),\vartheta_{\ell})\cap \mathbb{S}^{d-1}} \frac{1}{2n}\sum_{i=1}^n \left|X_i^Tu \cdot X_i^T\left[\tilde{\ell}(x) -\ell_0(x)\right]\right| \Biggr]\\
    	&\stackrel{\text{(i)}}{\lesssim} \frac{r_{\pi}(n,\delta) \cdot \norm{x}_2}{\kappa_R^2} \left[\norm{\hat{\ell}(x) -\tilde{\ell}(x)}_2 + \norm{\tilde{\ell}(x) -\ell_0(x)}_2 \right]\\
    	&\quad \times  \left[\sup_{v \in \mathcal{C}_1(S_{\ell}(x),3) \cap \mathbb{S}^{d-1}} \frac{1}{n}\sum_{i=1}^n \left(X_i^Tv \right)^2 + \sup_{u\in \mathcal{C}_2(S_{\ell}(x),\vartheta_{\ell})\cap \mathbb{S}^{d-1}} \frac{1}{n}\sum_{i=1}^n \left(X_i^Tu\right)^2 \right] \\
    	&\lesssim \frac{r_{\pi}(n,\delta) \cdot \norm{x}_2}{\kappa_R^2} \left[\norm{\hat{\ell}(x) -\tilde{\ell}(x)}_2 + \norm{\tilde{\ell}(x) -\ell_0(x)}_2 \right] \\
    	&\quad \times \Biggl\{\sup_{v \in \mathcal{C}_1(S_{\ell}(x),3)\cap \mathbb{S}^{d-1}} \frac{1}{n}\sum_{i=1}^n \left\{\left(X_i^Tv \right)^2 - \mathrm{E}\left[\left(X_i^Tv \right)^2\right] \right\} + \sup_{v \in \mathcal{C}_1(S_{\ell}(x),3) \cap \mathbb{S}^{d-1}} \mathrm{E}\left[\left(X_1^Tv \right)^2\right] \\
    	&\quad \quad + \sup_{u \in \mathcal{C}_2(S_{\ell}(x),\vartheta_{\ell})\cap \mathbb{S}^{d-1}} \frac{1}{n}\sum_{i=1}^n \left\{\left(X_i^Tu \right)^2 - \mathrm{E}\left[\left(X_i^Tu \right)^2\right] \right\} + \sup_{u \in \mathcal{C}_2(S_{\ell}(x),\vartheta_{\ell})\cap \mathbb{S}^{d-1}} \mathrm{E}\left[\left(X_1^Tu \right)^2\right] \Biggr\} \\
    	&\lesssim \frac{r_{\pi}(n,\delta) \cdot \norm{x}_2}{\kappa_R^2} \left(\norm{\hat{\ell}(x) -\tilde{\ell}(x)}_2 + \frac{r_{\ell} \norm{x}_2}{\kappa_R^2} \right) \phi_{s_{\ell}(x)} \left[\sqrt{\frac{s_{\ell}(x)\log(ed/s_{\ell}(x))}{n}} + 1\right] \\
    	&\lesssim \frac{r_{\pi}(n,\delta) \cdot \norm{x}_2}{\kappa_R^2} \left(\norm{\hat{\ell}(x) -\tilde{\ell}(x)}_2 + \frac{r_{\ell} \norm{x}_2}{\kappa_R^2} \right) \phi_{s_{\ell}(x)}
    \end{align*}
    when $\log d=o(n)$ and $\log(1/\delta)=o(n)$, where we use the fact that $|a|\cdot|b|\leq \frac{a^2+b^2}{2}$ in the (asymptotic) inequality (i). \\

	\noindent$\bullet$ \textbf{Term V and VI:} Under \eqref{geom_const2}, Assumption~\ref{assump:prop_consistent_rate}, Lemmas~\ref{lem:size_dom_cone}, \ref{lem:gram_mat_conc}, and \ref{lem:cone_ball_bnd}, we have that with probability at least $1-\delta$,
	\begin{align*}
		&\textbf{Term V} + \textbf{Term VI} = \left|\frac{1}{4n}\sum_{i=1}^n \pi(X_i) \left[X_i^T\left(\hat{\ell}(x) -\tilde{\ell}(x)\right)\right]^2\right| + \left|\frac{1}{4n}\sum_{i=1}^n \left(\hat{\pi}_i -\pi_i\right) \left[X_i^T\left(\hat{\ell}(x) -\tilde{\ell}(x)\right)\right]^2\right|\\
		&\leq \left(1+r_{\pi}(n,\delta)\right) \left|\frac{1}{4n}\sum_{i=1}^n \left[X_i^T\left(\hat{\ell}(x) -\tilde{\ell}(x)\right)\right]^2\right|\\
		&\leq \left(1+r_{\pi}(n,\delta)\right)\norm{\hat{\ell}(x) -\tilde{\ell}(x)}_2^2 \\
		&\quad \quad \times \sup_{v\in \left[\mathcal{C}_1(S_{\ell}(x),3) \cup \{e_1,...,e_d\} \right] \cap \mathbb{S}^{d-1}}\left[\left|\frac{1}{4n}\sum_{i=1}^n \left\{(X_i^Tv)^2 -\mathrm{E}\left[(X_i^Tv)^2\right] \right\}\right| + \left|\frac{1}{4n}\mathrm{E}\left[(X_1^T v)^2\right]\right| \right]\\
		&\lesssim \left(1+r_{\pi}(n,\delta)\right)\norm{\hat{\ell}(x) -\tilde{\ell}(x)}_2^2\phi_{s_{\ell}(x)}\left[\sqrt{\frac{\log(1/\delta) + s_{\ell}(x)\log(ed/s_{\ell}(x))}{n}} + 1 \right] \\
		&\lesssim \left(1+r_{\pi}(n,\delta)\right)\norm{\hat{\ell}(x) -\tilde{\ell}(x)}_2^2\phi_{s_{\ell}(x)}
	\end{align*}
	when $\log d=o(n)$ and $\log(1/\delta)=o(n)$.\\

	\noindent$\bullet$ \textbf{Term VII:} Under the geometric constraints \eqref{geom_const1}, \eqref{geom_const2}, Lemmas~\ref{lem:gram_mat_conc}, \ref{lem:dual_approx_cone}, and \ref{lem:cone_ball_bnd}, we know that with probability at least $1-\delta$,
	\begin{align*}
		&\textbf{Term VII} =\left|\frac{1}{2n}\sum_{i=1}^n \pi(X_i)\left[X_i^T\left(\hat{\ell}(x) -\tilde{\ell}(x)\right)\right] \left[X_i^T\left(\tilde{\ell}(x)-\ell_0(x)\right)\right]\right|\\
		&\lesssim \norm{\hat{\ell}(x)-\tilde{\ell}(x)}_2 r_{\ell}\norm{\ell_0(x)}_2 \\
		&\quad \times \sup_{\substack{u\in \mathcal{C}_1(S_{\ell}(x),3)\cap \mathbb{S}^{d-1}\\v\in \mathcal{C}_2(S_{\ell}(x),1) \cap \mathbb{S}^{d-1}}}\left[\left|\frac{1}{2n}\sum_{i=1}^n \left\{\pi(X_i)(X_i^Tv)^2 -\mathrm{E}\left[\pi(X_i)(X_i^Tv)^2\right] \right\}\right| + \left|\frac{1}{2n}\mathrm{E}\left[\pi(X_1)(X_1^T v)^2\right]\right| \right]\\
		&\lesssim \norm{\hat{\ell}(x) -\tilde{\ell}(x)}_2 \frac{\norm{x}_2r_{\ell}}{\kappa_R^2} \cdot \phi_{s_{\ell}(x)}  \left[\sqrt{\frac{\log(1/\delta) + s_{\ell}(x)\log(ed/s_{\ell}(x))}{n}} + 1\right]\\
		& \lesssim \norm{\hat{\ell}(x) -\tilde{\ell}(x)}_2 \frac{\norm{x}_2r_{\ell}}{\kappa_R^2} \cdot \phi_{s_{\ell}(x)}
	\end{align*}
	when $\log d=o(n)$ and $\log(1/\delta)=o(n)$.\\
	
	\noindent$\bullet$ \textbf{Term VIII:} We bound \textbf{Term VIII} using the similar arguments as \textbf{Term IV} as:
	\begin{align*}
		&\textbf{Term VIII} = \left|\frac{1}{2n}\sum_{i=1}^n \left(\hat{\pi}_i -\pi_i\right)\left[X_i^T\left(\hat{\ell}(x) -\tilde{\ell}(x)\right)\right] \left[X_i^T\left(\tilde{\ell}(x)-\ell_0(x)\right)\right]\right|\\
		&\lesssim r_{\pi}(n,\delta) \norm{\hat{\ell}(x) - \tilde{\ell}(x)}_2 \\
		&\quad \times \Biggl\{\sup_{v \in \mathcal{C}_1(S_{\ell}(x),3)\cap \mathbb{S}^{d-1}} \frac{1}{2n}\sum_{i=1}^n \left[\left|X_i^T v\cdot X_i^T\left[\tilde{\ell}(x)-\ell_0(x)\right]\right| - \mathrm{E}\left|X_i^T v \cdot X_i^T\left[\tilde{\ell}(x)-\ell_0(x)\right]\right|\right] \\
		&\quad + \sup_{v \in \mathcal{C}_1(S_{\ell}(x),3) \cap \mathbb{S}^{d-1}} \mathrm{E}\left|X_i^T v \cdot X_i^T\left[\tilde{\ell}(x)-\ell_0(x)\right]\right| \Biggr\}\\
		&\lesssim r_{\pi}(n,\delta)\cdot \norm{\hat{\ell}(x)-\tilde{\ell}(x)}_2\frac{r_{\ell}\norm{x}_2}{\kappa_R^2} \cdot \phi_{s_{\ell}(x)} 
	\end{align*}
	with probability at least $1-\delta$ when $\log d=o(n)$ and $\log(1/\delta)=o(n)$.\\
	
	\noindent$\bullet$ \textbf{Term IX and X:} Under \eqref{geom_const1}, Assumption~\ref{assump:prop_consistent_rate}, Lemma~\ref{lem:dual_approx_cone}, and Lemma~\ref{lem:gram_mat_conc}, we have that with probability at least $1-\delta$,
	\begin{align*}
		&\textbf{Term IX}+\textbf{Term X} = \left|\frac{1}{4n}\sum_{i=1}^n \pi(X_i) \left[X_i^T\left(\tilde{\ell}(x)-\ell_0(x)\right)\right]^2\right| + \left|\frac{1}{4n}\sum_{i=1}^n \left(\hat{\pi}_i -\pi_i\right) \left[X_i^T\left(\tilde{\ell}(x)-\ell_0(x)\right)\right]^2\right|\\
		&\leq \left(1+r_{\pi}(n,\delta)\right) \left|\frac{1}{4n}\sum_{i=1}^n \left[X_i^T\left(\tilde{\ell}(x)-\ell_0(x)\right)\right]^2\right|\\
		&\leq \left(1+r_{\pi}(n,\delta)\right) r_{\ell}^2\norm{\ell_0(x)}_2^2 \sup_{v\in \mathcal{C}_2(S_{\ell}(x),1) \cap \mathbb{S}^{d-1}}\left[\left|\frac{1}{4n}\sum_{i=1}^n \left\{(X_i^Tv)^2 -\mathrm{E}\left[(X_i^Tv)^2\right] \right\}\right| + \left|\frac{1}{4n}\mathrm{E}\left[(X_1^T v)^2\right]\right| \right]\\
		&\lesssim \left(1+r_{\pi}(n,\delta)\right) \frac{\norm{x}_2^2 r_{\ell}^2}{\kappa_R^4} \cdot \phi_{s_{\ell}(x)} \left[\sqrt{\frac{\log(1/\delta) + s_{\ell}(x)\log(ed/s_{\ell}(x))}{n}} + 1 \right]\\
		&\lesssim \left(1+r_{\pi}(n,\delta)\right) \frac{\norm{x}_2^2 r_{\ell}^2}{\kappa_R^4} \cdot \phi_{s_{\ell}(x)}
	\end{align*}
	when $\log d=o(n)$ and $\log(1/\delta)=o(n)$.\\
	
	\noindent$\bullet$ \textbf{Conclusion:} Combining our above calculations yields that with probability at least $1-\delta$,
	\begin{align*}
	    &\left|\sum_{i=1}^n \hat{\pi}_i \hat{w}_i(x)^2 - x^T\left\{\mathrm{E}\left[\pi(X_1) X_1X_1^T\right]\right\}^{-1}x \right| \\
	    &\lesssim \frac{\norm{x}_2^2}{\kappa_R^2}\sqrt{\frac{\log(1/\delta)}{n}} + \frac{\phi_{s_{\ell}(x)} \norm{x}_2^2}{\kappa_R^4} \left(r_{\pi} + r_{\ell} + r_{\pi} r_{\ell} + r_{\ell}^2 + r_{\pi} r_{\ell}^2\right) \\
	    &\quad + \frac{\phi_{s_{\ell}(x)} \norm{x}_2}{\kappa_R^2} \norm{\hat{\ell}(x) - \tilde{\ell}(x)}_2\left(1 + r_{\ell} + r_{\pi} r_{\ell}\right) + \phi_{s_{\ell}(x)}\left(1+r_{\pi}\right) \norm{\hat{\ell}(x) - \tilde{\ell}(x)}_2.
	\end{align*}
	By setting $\delta=\frac{1}{nd}$ with our bound for $\norm{\hat{\ell}(x) - \tilde{\ell}(x)}_2$ in \autoref{thm:dual_consist2}, we conclude that
	\begin{align*}
		&\left|\sum_{i=1}^n \hat{\pi}_i \hat{w}_i(x)^2 - x^T\left\{\mathrm{E}\left[\pi(X_1) X_1X_1^T\right]\right\}^{-1}x \right| \\
		&= O_P\left(\frac{\norm{x}_2^2}{\kappa_R^2}\sqrt{\frac{\log(nd)}{n}} \right) + O_P\left(\frac{\phi_{s_{\ell}(x)} \norm{x}_2^2}{\kappa_R^4} \left(r_{\pi} + r_{\ell}\right) \right) \\
		&\quad + O_P\left(\phi_{s_{\ell}(x)}\left[1+r_{\pi} + \frac{\norm{x}_2}{\kappa_R^2} (1+r_{\ell})\right]\right) \norm{\hat{\ell}(x) - \tilde{\ell}(x)}_2\\
		&= O_P\left(\frac{\norm{x}_2^2}{\kappa_R^2}\sqrt{\frac{\log(nd)}{n}} \right) + O_P\left(\frac{\phi_{s_{\ell}(x)} \norm{x}_2^2}{\kappa_R^4} \left(r_{\pi} + r_{\ell}\right) \right) \\
		&\quad + O_P\left(\frac{\phi_{s_{\ell}(x)}\norm{x}_2}{\kappa_R^3}\left[1+r_{\pi} + \frac{\norm{x}_2}{\kappa_R^2} (1+r_{\ell})\right] \left[\sqrt{\frac{\phi_1 s_{\ell}(x) \log(nd)}{n}} + \frac{\phi_{s_{\ell}(x)} r_{\ell}}{\kappa_R} + \frac{\sqrt{\phi_1\phi_{s_{\ell}(x)} s_{\ell}(x)}}{\kappa_R} \cdot r_{\pi}\right]\right)\\
		&= o_P(1)
	\end{align*}
	when $\norm{x}_2^2 \phi_{s_{\ell}(x)}^2 = O(1)$, $\frac{(1+\kappa_R^3)}{\kappa_R^5} \sqrt{\frac{s_{\ell}(x)\log(nd)}{n}} = o(1)$, and $\frac{(1+\kappa_R^4)}{\kappa_R^6} \left[r_{\ell} + r_{\pi}\sqrt{s_{\ell}(x)}\right]=o(1)$.
	Therefore, $\sum_{i=1}^n \hat{\pi}_i \hat{w}_i(x)^2$ is a consistent estimator of the asymptotic variance $x^T\left[\mathrm{E}\left(R X_1X_1^T\right)\right]^{-1}x$.
\end{proof}

\subsection{Proof of Proposition~\ref{prop:rel_const_glm}}
\label{subapp:proof_prop_score}

Before presenting the proof of Proposition~\ref{prop:rel_const_glm}, we first establish the consistency of $\hat{\theta}$ from the Lasso-type generalized linear regression in the following lemma, which has its own interest. 

\begin{lemma}[Consistency of propensity score estimation with Lasso-type generalized linear regression]
\label{lem:prop_score_const}
Let $\delta \in (0,1)$ and $A_1,A_2,A_3 >0$ be some absolute constants. Suppose that Assumptions~\ref{assump:sub_gaussian}, \ref{assump:glm_prop_score}, \ref{assump:res_eigen_theta}, and \ref{assump:link_func} hold as well as
    $$d \geq s_{\theta} + 2, \quad A_1\phi_{s_{\theta}} \log\left(\frac{\phi_{s_{\theta}}}{\rho}\right) \sqrt{\frac{\phi_1 s_{\theta}\log d}{n}} \leq \frac{\rho}{8}, \; \text{ and }\;   A_2\exp\left[-\frac{n\rho^2}{\phi_{s_{\theta}}^2 \log^2\left(\frac{\phi_{s_{\theta}}}{\rho}\right)} - \phi_1 s_{\theta}\log d\right] \leq \delta.$$
\begin{enumerate}[label=(\alph*)]
	\item If $\zeta \geq \norm{\frac{2}{n}\sum\limits_{i=1}^n \frac{\varphi'(X_i^T\theta_0)\left[\varphi(X_i^T\theta_0) -R_i \right]X_i}{\varphi(X_i^T\theta_0) \left[1-\varphi(X_i^T\theta_0) \right]}}_{\infty}$, then with probability at least $1-\delta$,
	$$\norm{\hat{\theta} - \theta_0}_2 \lesssim \frac{\sqrt{s_{\theta}} \zeta}{\rho}.$$ 
		
	\item If, in addition, $\zeta = A_3 A_{\varphi} \sqrt{\frac{\phi_1\log(d/\delta)}{n}}$, then with probability at least $1-\delta$,
	$$\norm{\hat{\theta} - \theta_0}_2 \lesssim \frac{A_{\varphi}}{\rho} \sqrt{\frac{\phi_1 s_{\theta} \log(d/\delta)}{n}}.$$
\end{enumerate}
\end{lemma}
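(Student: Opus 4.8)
The plan is to prove Lemma~\ref{lem:prop_score_const} by adapting the basic-inequality argument used for Theorem~\ref{thm:lasso_const}, replacing the quadratic Lasso loss by the convex negative log-likelihood $\mathcal{L}_{\varphi}$. For part (a), I would start from the optimality of $\hat{\theta}$ in the penalized program \eqref{glm_lasso}, which yields the basic inequality
$\mathcal{L}_{\varphi}(\hat{\theta}) + \zeta\norm{\hat{\theta}}_1 \le \mathcal{L}_{\varphi}(\theta_0) + \zeta\norm{\theta_0}_1$. Writing $\hat{v}=\hat{\theta}-\theta_0$ and performing a second-order Taylor expansion of $\mathcal{L}_{\varphi}$ around $\theta_0$, the left-hand side is bounded below by $\nabla\mathcal{L}_{\varphi}(\theta_0)^T\hat{v} + \tfrac12 \hat{v}^T\nabla\nabla\mathcal{L}_{\varphi}(\tilde\theta)\hat{v}$ for some $\tilde\theta$ on the segment. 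Using H{\"o}lder's inequality on the gradient term with the condition $\zeta \ge \norm{\frac{2}{n}\sum_i \frac{\varphi'(X_i^T\theta_0)[\varphi(X_i^T\theta_0)-R_i]X_i}{\varphi(X_i^T\theta_0)[1-\varphi(X_i^T\theta_0)]}}_{\infty}$ (this score vector is exactly $\nabla\mathcal{L}_{\varphi}(\theta_0)$, up to the factor absorbed in $\zeta$), together with the usual splitting $\norm{\theta_0}_1-\norm{\theta_0+\hat v}_1 \le \norm{\hat v_{S_\theta}}_1 - \norm{\hat v_{S_\theta^c}}_1$, forces $\hat v$ into the cone $\mathcal{C}_1(S_\theta,3)$.

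The next step is to control the Hessian $\nabla\nabla\mathcal{L}_{\varphi}(\tilde\theta)=\frac1n\sum_i H(\tilde\theta;R_i,X_i)X_iX_i^T$ from below uniformly over this cone. Here I would invoke Assumption~\ref{assump:link_func}(c): the log-Lipschitz bound on $H$ gives $H(\tilde\theta;R_i,X_i) \ge H(\theta_0;R_i,X_i)\exp(-A_H(|X_i^T\theta_0|^2 + |X_i^T\hat v|^2 + |X_i^T\hat v|))$, and on the event that $\norm{\hat v}_2$ is small (which we bootstrap) combined with the sub-Gaussian control of $\max_i |X_i^T\theta_0|$ and $\max_i |X_i^T\hat v|$ (using Assumption~\ref{assump:sub_gaussian} and a union bound over the low-dimensional cone, or a peeling argument), this exponential factor is bounded away from zero. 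Then Assumption~\ref{assump:res_eigen_theta} (restricted eigenvalue on $\mathrm{E}(RXX^T)$, transferred to the empirical gram matrix via Lemma~\ref{lem:gram_mat_conc}) yields $\hat v^T\nabla\nabla\mathcal{L}_{\varphi}(\tilde\theta)\hat v \gtrsim \rho\norm{\hat v}_2^2$. The growth conditions $A_1\phi_{s_\theta}\log(\phi_{s_\theta}/\rho)\sqrt{\phi_1 s_\theta\log d/n}\le \rho/8$ and the exponential tail condition on $\delta$ are precisely what makes the deviation terms in Lemma~\ref{lem:gram_mat_conc} and the log-Lipschitz exponent negligible. Combining the lower and upper bounds on the basic inequality and dividing by $\norm{\hat v}_2$ gives $\norm{\hat\theta-\theta_0}_2 \lesssim \sqrt{s_\theta}\,\zeta/\rho$.

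For part (b), I would bound $\norm{\nabla\mathcal{L}_{\varphi}(\theta_0)}_\infty$ explicitly: each coordinate is an average of independent mean-zero terms (mean zero because $\mathrm{E}[R_i|X_i]=\varphi(X_i^T\theta_0)$ under Assumption~\ref{assump:glm_prop_score}), and by Remark~\ref{link_func_bound} the multiplier $\frac{\varphi'(X_i^T\theta_0)}{\varphi(X_i^T\theta_0)[1-\varphi(X_i^T\theta_0)]}$ is uniformly bounded by $2A_\varphi$, so each term is sub-Gaussian with parameter $\lesssim A_\varphi\sqrt{\phi_1}$ (Lemma~\ref{lem:Orlicz_prod}); a union bound plus a Hoeffding/Bernstein-type inequality gives $\norm{\nabla\mathcal{L}_{\varphi}(\theta_0)}_\infty \lesssim A_\varphi\sqrt{\phi_1\log(d/\delta)/n}$ with probability $\ge 1-\delta$. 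Plugging the choice $\zeta = A_3 A_\varphi\sqrt{\phi_1\log(d/\delta)/n}$ into part (a) finishes the bound. The main obstacle I anticipate is the Hessian lower bound: unlike the Lasso case where the ``Hessian'' is just the gram matrix, here the data-dependent weights $H(\theta;R_i,X_i)$ must be shown not to degenerate along the estimation path, which requires a self-bootstrapping (peeling) argument coupling the control of $\norm{\hat v}_2$ with the exponential factor in Assumption~\ref{assump:link_func}(c)—this is where the precise form of the growth conditions is essential, and care is needed to ensure the argument does not become circular.
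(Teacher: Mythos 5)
Your overall skeleton is right and matches the paper in most places: the basic inequality plus Taylor expansion, the use of $\zeta \ge \|2\nabla\mathcal{L}_{\varphi}(\theta_0)\|_{\infty}$ with H\"older and the reverse triangle inequality to force $\hat\theta-\theta_0 \in \mathcal{C}_1(S_{\theta},3)$, and your part (b) (mean-zero score by $\mathrm{E}[R_i|X_i]=\varphi(X_i^T\theta_0)$, the uniform bound $\max\{\varphi'/\varphi,\varphi'/(1-\varphi)\}\le A_{\varphi}$ from Remark~\ref{link_func_bound}, sub-Gaussianity of each coordinate via Assumption~\ref{assump:sub_gaussian}, then union bound and Hoeffding) is essentially identical to the paper's. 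The problem is the step you yourself flag: the lower bound on the weighted curvature term $\frac1n\sum_i\int_0^1(1-t)H(\theta_0+t\hat\Delta;R_i,X_i)(X_i^T\hat\Delta)^2\,dt$. Your plan is to bound the exponential penalty in Assumption~\ref{assump:link_func}(c) using $\max_{1\le i\le n}|X_i^T\theta_0|$ and $\max_i|X_i^T\hat v|$. Under sub-Gaussianity these maxima are of order $\sqrt{\log n}$ (times $\|\theta_0\|_2\sqrt{\phi_{s_\theta}}$), so the factor $\exp\left(-A_H\max_i|X_i^T\theta_0|^2\right)$ is polynomially small in $n$; the resulting curvature constant would not be $\rho/8$ but something decaying like $n^{-c}$, and the claimed rate $\sqrt{s_{\theta}}\,\zeta/\rho$ with an $n$-free constant does not follow. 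Your fallback — a ``self-bootstrapping (peeling) argument'' keeping $\|\hat v\|_2$ small so the $|X_i^T\hat v|$ term is harmless — is acknowledged but never executed, and it does not fix the $\max_i|X_i^T\theta_0|$ issue at all, since that quantity has nothing to do with $\|\hat v\|_2$. So as written the key step fails.

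The paper avoids this entirely by a per-observation truncation rather than a uniform-in-$i$ bound: it inserts indicators $\mathbbm{1}\{\sqrt{A_H}\,t|X_i^Tv|\le T_1\}$ and $\mathbbm{1}\{|X_i^T\theta_0|\le T_2\}$ with thresholds $T_1,T_2 \asymp \sqrt{\phi_{s_{\theta}}\log(\phi_{s_{\theta}}/\rho)}$ that do not grow with $n$, so on the retained observations the exponential factor is a constant $C_{\bm T}>0$. It then proves (Lemma~\ref{lem:prop_score_cont}) that the truncated quadratic form, uniformly over $\mathcal{C}_1(S_{\theta},3)\cap\mathbb{S}^{d-1}$, still has expectation at least $\rho/2$ — the truncation costs at most $\rho/2$ by Cauchy--Schwarz and sub-Gaussian tails, which is what fixes $T_1,T_2$ — and concentrates around its mean up to $K_1 t\sqrt{\phi_1\log d/n}$ via bounded differences, symmetrization, and the contraction principle applied to the truncation function $h_{T_1}$. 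This is exactly what the hypotheses $A_1\phi_{s_{\theta}}\log(\phi_{s_{\theta}}/\rho)\sqrt{\phi_1 s_{\theta}\log d/n}\le\rho/8$ and the exponential condition on $\delta$ are calibrated for; no transfer of the restricted eigenvalue condition to the sample gram matrix via Lemma~\ref{lem:gram_mat_conc} and no localization of $\|\hat v\|_2$ is needed. To repair your proposal you would need to replace the $\max_i$ control by such a truncation-plus-empirical-process argument (or an equivalent local RSC device); without it the curvature bound, and hence both parts of the lemma, do not go through.
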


\begin{proof}[Proof of Lemma~\ref{lem:prop_score_const}]
	The proof is partially adopted from Theorem 2 in \cite{cai2021statistical}.
	
	\noindent (a) Since $\hat{\theta}$ minimizes the objective function in \eqref{glm_lasso}, we know that
	\begin{align*}
		&-\frac{1}{n} \sum_{i=1}^n R_i \log\left[\varphi(X_i^T\hat{\theta})\right] -\frac{1}{n}\sum_{i=1}^n (1-R_i)\log\left[1-\varphi(X_i^T \hat{\theta}) \right]  + \zeta\norm{\hat{\theta}}_1\\ 
		&\leq -\frac{1}{n} \sum_{i=1}^n R_i \log\left[\varphi(X_i^T\theta_0)\right] -\frac{1}{n}\sum_{i=1}^n (1-R_i)\log\left[1-\varphi(X_i^T \theta_0) \right]  + \zeta\norm{\theta_0}_1.
	\end{align*}
	By Taylor's expansion on $\mathcal{L}_{\varphi}(\theta)$ at $\theta_0$ in the direction of $\hat{\Delta}=\hat{\theta} -\theta_0$, we have that
	\begin{align}
		\label{glm_taylor}
		\begin{split}
			\int_0^1 (1-t)\cdot \hat{\Delta}^T \left[\nabla\nabla \mathcal{L}_{\varphi}(\theta_0+t\hat{\Delta})\right] \hat{\Delta} \,dt 
			& \leq -\nabla\mathcal{L}_{\varphi}(\theta_0)^T \hat{\Delta} + \zeta \left[\norm{\theta_0}_1 - \norm{\theta_0 + \hat{\Delta}}_1\right].
		\end{split}
	\end{align}
	By Assumption~\ref{assump:glm_prop_score}(c), we know that the left-hand side of \eqref{glm_taylor} can be lower bounded as:
	\begin{align*}
		\int_0^1 (1-t)\cdot \hat{\Delta}^T \left[\nabla\nabla \mathcal{L}_{\varphi}(\theta_0+t\hat{\Delta})\right] \hat{\Delta} \,dt
		&=\frac{1}{n} \sum_{i=1}^n \int_0^1 (1-t) \cdot H(\theta_0+t\hat{\Delta};R_i,X_i) \left(X_i^T\hat{\Delta}\right)^2 dt >0.
	\end{align*}
	It thus implies that
	\begin{equation}
		\label{glm_taylor2}
		-\nabla\mathcal{L}_{\varphi}(\theta_0)^T \hat{\Delta} + \zeta \left[\norm{\theta_0}_1 - \norm{\theta_0 + \hat{\Delta}}_1\right] >0.
	\end{equation}
	In addition, we know that
	\begin{align*}
		\nabla\mathcal{L}_{\varphi}(\theta_0) &= -\frac{1}{n}\sum_{i=1}^n \frac{R_iX_i \cdot \varphi'(X_i^T \theta_0)}{\varphi(X_i^T\theta_0)} +\frac{1}{n} \sum_{i=1}^n \frac{(1-R_i)X_i \cdot \varphi'(X_i^T\theta_0)}{1-\varphi(X_i^T\theta_0)}\\
		&= \frac{1}{n} \sum_{i=1}^n \frac{\varphi'(X_i^T\theta_0)\left[\varphi(X_i^T\theta_0) -R_i\right] X_i}{\varphi(X_i^T\theta_0) \left[1- \varphi(X_i^T\theta_0)\right]}.
	\end{align*}
	By H{\"o}lder's inequality and the (reverse) triangle's inequality together with the sparsity of $\theta_0$ under Assumption~\ref{assump:link_func}, we know from \eqref{glm_taylor2} that
	\begin{align*}
		0 & \leq \norm{\nabla\mathcal{L}_{\varphi}(\theta_0)}_{\infty} \norm{\hat{\Delta}}_1 + \zeta\left[\norm{\theta_{0,S_{\theta}}}_1 - \norm{\theta_{0,S_{\theta}} + \hat{\Delta}_{S_{\theta}}}_1 - \norm{\hat{\Delta}_{S_{\theta}^c}}_1 \right]\\
		&\leq \frac{\zeta}{2} \left[\norm{\hat{\Delta}_{S_{\theta}}}_1 + \norm{\hat{\Delta}_{S_{\theta}^c}}_1\right] + \zeta \left[\norm{\hat{\Delta}_{S_{\theta}}}_1 - \norm{\hat{\Delta}_{S_{\theta}^c}}_1 \right]\\
		&= \frac{3\zeta}{2} \norm{\hat{\Delta}_{S_{\theta}}}_1 - \frac{\zeta}{2} \norm{\hat{\Delta}_{S_{\theta}^c}}_1,
	\end{align*}
	where $\norm{\left[\hat{\Delta}_{S_{\theta}} \right]}_1 = \sum_{k \in S_{\theta}} \left|\hat{\Delta}_k \right|$. It implies that $\norm{\hat{\Delta}_{S_{\theta}^c}}_1 \leq 3 \norm{\hat{\Delta}_{S_{\theta}}}_1$ and thus, the estimation error $\hat{\Delta}=\hat{\theta} -\theta_0$ lies in the cone $\mathcal{C}_1(S_{\theta},3)$. 
	
	Now, by Assumption~\ref{assump:link_func}(c), we know that for all $1\leq i\leq n$,
	$$\left|\log H(\theta_0+t\hat{\Delta}; R_i,X_i) - \log H(\theta_0;R_i,X_i) \right| \leq A_H\left(|X_i^T\theta_0|^2 + t^2|X_i^T \hat{\Delta}|^2 + t|X_i^T\hat{\Delta}|\right),$$
	which implies that for all $1\leq i\leq n$,
	\begin{align*}
	\exp\left[-A_H\left(|X_i^T\theta_0|^2 + t^2|X_i^T \hat{\Delta}|^2 + t|X_i^T\hat{\Delta}|\right) \right] &\leq \frac{H(\theta_0+t\hat{\Delta}; R_i,X_i)}{H(\theta_0; R_i,X_i)} \\
	&\leq \exp\left[A_H\left(|X_i^T\theta_0|^2 + t^2|X_i^T \hat{\Delta}|^2 + t|X_i^T\hat{\Delta}|\right) \right].
	\end{align*}
	Let $K_d(S_{\theta}) = \mathcal{C}_1(S_{\theta},3)\cap \mathbb{S}^{d-1}$. Thus, for two constants $T_1,T_2>0$ that will be determined in Lemma~\ref{lem:prop_score_cont}, we obtain that
	{\footnotesize\begin{align*}
		&\frac{1}{n} \sum_{i=1}^n \int_0^1 (1-t) \cdot H(\theta_0+t\hat{\Delta};R_i,X_i) \left(X_i^T\hat{\Delta}\right)^2 dt \\
		&\geq \inf_{v\in K_d(S_{\theta})}\left[\frac{1}{n} \sum_{i=1}^n \int_0^1 (1-t) \cdot H(\theta_0+tv;R_i,X_i) \cdot \left(X_i^Tv\right)^2 dt\right] \norm{\hat{\Delta}}_2^2\\
		&\geq \inf_{v\in K_d(S_{\theta})} \left\{\frac{1}{n} \sum_{i=1}^n \int_0^1 (1-t) \cdot H(\theta_0;R_i,X_i) \exp\left[-A_H\left(|X_i^T\theta_0|^2 + t^2|X_i^T v|^2 + t|X_i^Tv|\right) \right]\left(X_i^Tv\right)^2 dt \right\} \norm{\hat{\Delta}}_2^2\\
		&\stackrel{\text{(i)}}{\geq} \inf_{v\in K_d(S_{\theta})}\Big\{\frac{1}{n} \sum_{i=1}^n \int_0^1 (1-t) \cdot H(\theta_0;R_i,X_i) \exp\left[-A_H\left(|X_i^T\theta_0|^2 + t^2|X_i^T v|^2 + t|X_i^Tv|\right) \right]\left(X_i^Tv\right)^2\\
		&\quad \times \mathbbm{1}_{\{\sqrt{A_H}\cdot t|X_i^T v| \leq T_1\}}dt \Big\} \norm{\hat{\Delta}}_2^2\\
		&\stackrel{\text{(ii)}}{\geq} \inf_{v\in K_d(S_{\theta})}\left\{\frac{1}{n} \sum_{i=1}^n \int_0^1 (1-t) \cdot H(\theta_0;R_i,X_i) \exp\left[-A_H|X_i^T\theta_0|^2 -T_1^2 - \sqrt{A_H}T_1 \right]\left(X_i^Tv\right)^2 \cdot \mathbbm{1}_{\left\{\sqrt{A_H} \cdot t |X_i^T v| \leq T_1 \right\}}dt \right\} \norm{\hat{\Delta}}_2^2\\
		&\stackrel{\text{(iii)}}{\geq} \inf_{v\in K_d(S_{\theta})}\left\{\frac{1}{n} \sum_{i=1}^n \int_0^1 (1-t) \cdot H(\theta_0;R_i,X_i) \exp\left[-A_H|X_i^T\theta_0|^2 -T_1^2 - \sqrt{A_H}T_1 \right]\left(X_i^Tv\right)^2 \cdot \mathbbm{1}_{\left\{\sqrt{A_H} |X_i^T v| \leq T_1\right\}}dt \right\} \norm{\hat{\Delta}}_2^2\\
		& =\inf_{v\in K_d(S_{\theta})}\left\{\frac{1}{2n} \sum_{i=1}^n H(\theta_0;R_i,X_i) \exp\left[-A_H|X_i^T\theta_0|^2 -T_1^2 - \sqrt{A_H}T_1 \right]\left(X_i^Tv\right)^2 \cdot \mathbbm{1}_{\left\{\sqrt{A_H} |X_i^T v| \leq T_1\right\}} \right\} \norm{\hat{\Delta}}_2^2\\
		& \geq \inf_{v\in K_d(S_{\theta})}\left\{\frac{1}{2n} \sum_{i=1}^n H(\theta_0;R_i,X_i) \exp\left[-A_HT_2^2 -T_1^2 - \sqrt{A_H}T_1 \right]\left(X_i^Tv\right)^2 \cdot \mathbbm{1}_{\left\{\sqrt{A_H} |X_i^T v| \leq T_1\right\}} \cdot \mathbbm{1}_{\{|X_i^T \theta_0| \leq T_2\}} \right\} \norm{\hat{\Delta}}_2^2,
	\end{align*}}%
	where we multiple the indicator $\mathbbm{1}_{\{\sqrt{A_H}\cdot t|X_i^T v| \leq T_1\}} \in \{0,1\} \leq 1$ to derive (i), plugging in the upper bound $T_1$ in the indicator $\mathbbm{1}_{\{\sqrt{A_H}\cdot t|X_i^T v| \leq T_1\}}$ in the inequality (ii), and take the largest value $t=1$ in $\mathbbm{1}_{\{\sqrt{A_H} \cdot t |X_i^T v| \leq T_1\}}$ to obtain (iii). Notice that by Assumption~\ref{assump:link_func}(a,c), there exists some constant $c_{\bm{T}}'>0$ that only depends on $\bm{T}=(T_1,T_2)$ such that in the region $\left\{X_i\in \mathbb{R}^d:|X_i^T\theta_0| \leq T_2\right\}$,
	$$H(\theta_0;R_i,X_i) \geq c_{\bm{T}}'$$
	for all $i=1,...,n$. In detail, this is because 
	\begin{align*}
		H(\theta_0;R_i,X_i) \geq H(0;R_i,X_i) \exp\left(-A_H |X_i^T\theta_0|^2\right),
	\end{align*}
    while $H(0;R_i,X_i) >0$ and $X_i^T\theta_0$ is bounded within the region $\left\{X_i\in \mathbb{R}^d:|X_i^T\theta_0| \leq T_2\right\}$. Therefore, we can proceed the above display as:
	\begin{align*}
		&\frac{1}{n} \sum_{i=1}^n \int_0^1 (1-t) \cdot H(\theta_0+t\hat{\Delta};R_i,X_i) \left(X_i^T\hat{\Delta}\right)^2 dt \\
		&\geq \inf_{v\in K_d(S_{\theta})}\left\{\frac{c_{\bm{T}}'}{2n} \sum_{i=1}^n \exp\left(-A_HT_2^2 -T_1^2 - \sqrt{A_H}T_1 \right)\left(X_i^Tv\right)^2 \cdot \mathbbm{1}_{\left\{\sqrt{A_H} |X_i^T v| \leq T_1 \right\}} \cdot \mathbbm{1}_{\left\{|X_i^T \theta_0| \leq T_2\right\}} \right\} \norm{\hat{\Delta}}_2^2\\
		&\equiv \inf_{v\in K_d(S_{\theta})}\left[\frac{C_{\bm{T}}}{n} \sum_{i=1}^n \left(X_i^Tv\right)^2 \cdot \mathbbm{1}_{\left\{\sqrt{A_H} |X_i^T v| \leq T_1 \right\}} \cdot \mathbbm{1}_{\left\{|X_i^T \theta_0| \leq T_2\right\}} \right] \norm{\hat{\Delta}}_2^2,
	\end{align*}
	where $C_{\bm{T}} = \frac{c_{\bm{T}}'}{2} \exp\left(-A_HT_2^2 - T_1^2 - \sqrt{A_H} T_1 \right)>0$ is a constant that depends on $A_H>0$ (thus, the link function $\varphi$) and the values of $\bm{T}=(T_1,T_2)$. In addition, we consider the truncation function with $T_1>0$ such that
	\begin{equation}
		\label{trunc_func}
		h_{T_1}(u) = 
		\begin{cases}
			u^2 & \text{ if } 0\leq |u| \leq \frac{T_1}{2\sqrt{A_H}},\\
			\left(\frac{T_1}{\sqrt{A_H}} -u \right)^2 & \text{ if } \frac{T_1}{2\sqrt{A_H}} \leq |u| \leq \frac{T_1}{\sqrt{A_H}},\\
			0 & \text{ otherwise},
		\end{cases}
	\end{equation}
	and denote that $f_v(x) = x^T v \cdot \mathbbm{1}_{\left\{|X_i^T \theta_0| \leq T_2\right\}}$ and $g_v(x) = h_{T_1}\left(f_v(x)\right)$. Then, we know that
	\begin{align*}
		&\inf_{v\in K_d(S_{\theta})}\left[\frac{C_{\bm{T}}}{n} \sum_{i=1}^n \left(X_i^Tv\right)^2 \cdot \mathbbm{1}_{\left\{\sqrt{A_H} |X_i^T v| \leq T_1 \right\}} \cdot \mathbbm{1}_{\left\{|X_i^T \theta_0| \leq T_2\right\}} \right] \norm{\hat{\Delta}}_2^2 \\
		&\geq \inf_{v\in K_d(S_{\theta})}\left[\frac{C_{\bm{T}}}{n} \sum_{i=1}^n g_v(X_i) \right] \norm{\hat{\Delta}}_2^2.
	\end{align*}
	By Lemma~\ref{lem:prop_score_cont} below, we can take 
	$$T_1= \bar{A}_3\sqrt{A_H \phi_{s_{\theta}}\log\left(\frac{\phi_{s_{\theta}}}{\rho} \right)} \quad \text{ and } \quad T_2 = \bar{A}_4 \norm{\theta_0}_2 \sqrt{\phi_{s_{\theta}}\log\left(\frac{\phi_{s_{\theta}}}{\rho} \right)}$$ 
	so that
	\begin{align*}
		&\mathrm{P}\left(\inf_{v\in K_d(S_{\theta},\vartheta),\norm{v}_1=t} \mathbb{P}_n\left[g_v(X)\right] \geq \frac{\rho}{4}- \bar{A}_1 K_1 t\sqrt{\frac{\phi_1\log d}{n}} \right) \leq \bar{A}_2 \exp\left(-\frac{n\rho^2}{K_1^2} -\phi_1 t^2\log d\right)
	\end{align*}
    with $K_1=\bar{A}_3\phi_{s_{\theta}} \log\left(\frac{\phi_{s_{\theta}}}{\rho}\right)$, where $\bar{A}_1,\bar{A}_2,\bar{A}_3,\bar{A}_4>0$ are some absolute constants. Notice that for any $v\in K_d(S_{\theta})$, we know that
	$$1\leq \norm{v}_1\leq 4 \norm{v_{S_{\theta}}}_1 \leq 4\sqrt{s_{\theta}}\norm{v}_2 = 4\sqrt{s_{\theta}}.$$
	Therefore, under the condition that
	$$A_1\phi_{s_{\theta}} \log\left(\frac{\phi_{s_{\theta}}}{\rho}\right) \sqrt{\frac{\phi_1 s_{\theta}\log d}{n}} \leq \frac{\rho}{8} \quad \text{ and }\quad   A_2\exp\left[-\frac{n\rho^2}{\phi_{s_{\theta}}^2 \log^2\left(\frac{\phi_{s_{\theta}}}{\rho}\right)} - \phi_1 s_{\theta}\log d\right] \leq \delta$$
 for some absolute constants $A_1,A_2>0$, we have that with probability at least $1-\delta$
	\begin{align*}
		&\frac{1}{n} \sum_{i=1}^n \int_0^1 (1-t) \cdot H(\theta_0+t\hat{\Delta}) \left(X_i^T\hat{\Delta}\right)^2 dt \\
		&\geq \inf_{v\in K_d(S_{\theta})}\left[\frac{C_{\bm{T}}}{n} \sum_{i=1}^n \left(X_i^Tv\right)^2 \cdot \mathbbm{1}_{\left\{\sqrt{A_H} |X_i^T v| \leq T_1 \right\}} \cdot \mathbbm{1}_{\left\{|X_i^T \theta_0| \leq T_2\right\}} \right] \norm{\hat{\Delta}}_2^2\\ 
		&\geq \frac{\rho}{8} \norm{\hat{\Delta}}_2^2.
	\end{align*}
	Substituting the above inequality to \eqref{glm_taylor} yields that with probability at least $1-\delta$,
	\begin{align*}
		\frac{\rho}{8} \norm{\hat{\Delta}}_2^2 & \leq \frac{\zeta}{2} \norm{\hat{\Delta}}_1 + \zeta \left[\norm{\hat{\Delta}_{S_{\theta}}}_1 - \norm{\hat{\Delta}_{S_{\theta}^c}}_1 \right]\\
		&\leq \frac{3\zeta}{2} \norm{\hat{\Delta}_{S_{\theta}}}_1\\
		&\leq \frac{3\zeta \sqrt{s_{\theta}}}{2} \norm{\hat{\Delta}}_2.
	\end{align*}
	The result follows by dividing $\norm{\hat{\Delta}}_2$ on both side of the above inequality and some rearrangements.\\

	\noindent (b) Notice that by law of total expectation,
	$$\mathrm{E}\left\{\frac{2}{n}\sum_{i=1}^n \frac{\varphi'(X_i^T\theta_0)\left[\varphi(X_i^T\theta_0) -R_i \right]X_i}{\varphi(X_i^T\theta_0) \left[1-\varphi(X_i^T\theta_0) \right]}\right\} = \mathrm{E}\left\{\frac{2}{n}\sum_{i=1}^n \frac{X_i \cdot\varphi'(X_i^T\theta_0)\cdot \mathrm{E}\left[\varphi(X_i^T\theta_0) -R_i |X_i\right]}{\varphi(X_i^T\theta_0) \left[1-\varphi(X_i^T\theta_0) \right]}\right\} =0.$$
	In addition, 
	$$\norm{\frac{2}{n}\sum_{i=1}^n \frac{\varphi'(X_i^T\theta_0)\left[\varphi(X_i^T\theta_0) -R_i \right]X_i}{\varphi(X_i^T\theta_0) \left[1-\varphi(X_i^T\theta_0) \right]}}_{\infty} = \max_{1\leq k \leq d} \left|\frac{2}{n}\sum_{i=1}^n \frac{\varphi'(X_i^T\theta_0)\left[\varphi(X_i^T\theta_0) -R_i \right]}{\varphi(X_i^T\theta_0) \left[1-\varphi(X_i^T\theta_0) \right]} \cdot X_{ik} \right|$$
	and by (a) and (b) in Assumption~\ref{assump:link_func} together with Remark~\ref{link_func_bound}, we know that
	\begin{align*}
		\frac{\varphi'(X_i^T\theta_0)\left[\varphi(X_i^T\theta_0) -R_i \right]}{\varphi(X_i^T\theta_0) \left[1-\varphi(X_i^T\theta_0) \right]} \cdot X_{ik} &= \left[\frac{R_i \varphi'(X_i^T\theta_0)}{\varphi(X_i^T\theta_0)} - \frac{(1-R_i) \varphi'(X_i^T\theta_0)}{1-\varphi(X_i^T\theta_0)}\right]X_{ik}\\
		&\leq \max\left\{\frac{\varphi'(X_i^T\theta_0)}{\varphi(X_i^T\theta_0)},\, \frac{\varphi'(X_i^T\theta_0)}{1-\varphi(X_i^T\theta_0)} \right\}|X_{ik}|\\
		&\leq A_{\varphi}|X_{ik}|,
	\end{align*}
	showing that for any fixed $k$, $\frac{\varphi'(X_i^T\theta_0)\left[\varphi(X_i^T\theta_0) -R_i \right]}{\varphi(X_i^T\theta_0) \left[1-\varphi(X_i^T\theta_0) \right]} \cdot X_{ik}, i=1,...,n$ are independent and sub-Gaussian random variables with zero means. By Assumptions~\ref{assump:sub_gaussian}, we also have that
	\begin{align*}
		\max_{1\leq k \leq d} \norm{\frac{\varphi'(X_i^T\theta_0)\left[\varphi(X_i^T\theta_0) -R_i \right]}{\varphi(X_i^T\theta_0) \left[1-\varphi(X_i^T\theta_0) \right]} X_{ik}}_{\psi_2} &\lesssim \max_{1\leq k \leq d} A_{\varphi} \norm{X_{ik}}_{\psi_2}\\
		&\lesssim A_{\varphi} \sqrt{\phi_1}
	\end{align*}
	for all $1\leq i \leq n$. Therefore, by the general Hoeffding's inequality (Theorem 2.6.2 in \citealt{vershynin2018high}), we know that there exists an absolute constant $A_0>0$ such that for all $t>0$,
	\begin{align*}
		&\mathrm{P}\left(\norm{\frac{2}{n}\sum_{i=1}^n \frac{X_i \varphi'(X_i^T\theta_0)\left[\varphi(X_i^T\theta_0) -R_i \right]}{\varphi(X_i^T\theta_0) \left[1-\varphi(X_i^T\theta_0) \right]}}_{\infty} >t \right) \\
		&\leq d \cdot \max_{1\leq k \leq d} \mathrm{P}\left(\left|\frac{2}{n}\sum_{i=1}^n \frac{\varphi'(X_i^T\theta_0)\left[\varphi(X_i^T\theta_0) -R_i \right]}{\varphi(X_i^T\theta_0) \left[1-\varphi(X_i^T\theta_0) \right]} \cdot X_{ik} \right| > t \right)\\
		&\leq 2d \cdot \exp\left(-\frac{nA_0 t^2}{A_{\varphi} \phi_1} \right).
	\end{align*}
	Hence, we conclude that with probability at least $1-\delta$,
	\begin{align*}
	\norm{\frac{2}{n}\sum_{i=1}^n \frac{X_i \varphi'(X_i^T\theta_0)\left[\varphi(X_i^T\theta_0) -R_i \right]}{\varphi(X_i^T\theta_0) \left[1-\varphi(X_i^T\theta_0) \right]}}_{\infty} &\lesssim A_{\varphi} \sqrt{\frac{\phi_1\log(d/\delta)}{n}}.
	\end{align*}
    Finally, the bound on $\norm{\hat{\theta} - \theta_0}_2$ follows from (a), where we substitute $\zeta = A_3 A_{\varphi} \sqrt{\frac{\phi_1\log(d/\delta)}{n}}$ for some absolute constant $A_3>0$ into the upper bound of $\norm{\hat{\theta} - \theta_0}_2$.
\end{proof}

\vspace{3mm}

\begin{lemma}
	\label{lem:prop_score_cont}
	Let $\bar{A}_1,\bar{A}_2, \bar{A}_3>0$ be two absolute constants. Under Assumptions~\ref{assump:sub_gaussian} and \ref{assump:res_eigen_theta}, we have that
	\begin{align*}
		&\mathrm{P}\left(\inf_{v\in K_d(S_{\theta},\vartheta), \norm{v}_1=t} \mathbb{P}_n\left[g_v(X)\right] \leq \frac{\rho}{4}- \bar{A}_1 K_1 t\sqrt{\frac{\phi_1 \log d}{n}} \right) \leq \bar{A}_2 \exp\left(-\frac{n\rho^2}{K_1^2} - \phi_1 t^2\log d\right),
	\end{align*}
	where $\mathbb{P}_n$ is the empirical expectation over $n$ covariate vectors $\{X_1,...,X_n\}$, $K_d(S_{\theta}) = \mathcal{C}_1(S_{\theta},\vartheta) \cap \mathbb{S}^{d-1}$, $K_1=\bar{A}_3\phi_{s_{\theta}} \log\left(\frac{\phi_{s_{\theta}}}{\rho}\right)$, and the function $g_v$ is defined in \eqref{trunc_func}. 
\end{lemma}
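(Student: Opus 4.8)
The plan is to write, for each fixed $t$, $\inf_{v}\mathbb{P}_n[g_v(X)]\ge\inf_{v}\mathrm{E}[g_v(X)]-\sup_{v}\bigl|(\mathbb{P}_n-\mathrm{E})[g_v(X)]\bigr|$, where $\inf_v,\sup_v$ range over $v\in K_d(S_{\theta},\vartheta)$ with $\norm{v}_1=t$, and then to establish (i) a deterministic lower bound $\inf_v\mathrm{E}[g_v(X)]\ge\rho/2$ and (ii) that the empirical fluctuation exceeds $\rho/4+\bar A_1K_1t\sqrt{\phi_1\log d/n}$ only with the stated probability. Throughout I use that $T_1$ and $T_2$ (fixed in the calling proof of Lemma~\ref{lem:prop_score_const}) are of the order $T_1\asymp\sqrt{A_H\phi_{s_{\theta}}\log(\phi_{s_{\theta}}/\rho)}$ and $T_2\asymp\norm{\theta_0}_2\sqrt{\phi_{s_{\theta}}\log(\phi_{s_{\theta}}/\rho)}$ with sufficiently large absolute constants.

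\textbf{The expectation.} Since $g_v(x)=h_{T_1}(x^Tv)\,\mathbbm{1}_{\{|x^T\theta_0|\le T_2\}}$ and $h_{T_1}(u)\ge u^2-u^2\mathbbm{1}_{\{2\sqrt{A_H}\,|u|>T_1\}}$, one gets $\mathrm{E}[g_v]\ge\mathrm{E}[(X^Tv)^2]-\mathrm{E}\bigl[(X^Tv)^2\mathbbm{1}_{\{2\sqrt{A_H}\,|X^Tv|>T_1\}}\bigr]-\mathrm{E}\bigl[(X^Tv)^2\mathbbm{1}_{\{|X^T\theta_0|>T_2\}}\bigr]$. The first term is bounded below by $\rho$ on the cone: Assumption~\ref{assump:res_eigen_theta} gives $\mathrm{E}[R(X^Tv)^2]>\rho\norm{v}_2^2=\rho$, and $R\le1$ forces $\mathrm{E}[(X^Tv)^2]\ge\rho$. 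For the two truncation losses, Cauchy--Schwarz gives $\mathrm{E}[(X^Tv)^2\mathbbm{1}_E]\le[\mathrm{E}(X^Tv)^4]^{1/2}[\mathrm{P}(E)]^{1/2}$, and Assumption~\ref{assump:sub_gaussian} together with the standard cone max-eigenvalue estimate $v^T\Sigma v\lesssim\phi_{s_{\theta}}$ for $v\in\mathcal{C}_1(S_{\theta},\vartheta)\cap\mathbb{S}^{d-1}$ gives $\mathrm{E}(X^Tv)^4\lesssim\phi_{s_{\theta}}^2$, $\norm{X^Tv}_{\psi_2}\lesssim\sqrt{\phi_{s_{\theta}}}$, and $\norm{X^T\theta_0}_{\psi_2}\lesssim\sqrt{\phi_{s_{\theta}}}\,\norm{\theta_0}_2$. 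Feeding in the sub-Gaussian tails and the stated orders of $T_1,T_2$ makes each loss $\le\rho/8$ once the absolute constants in $T_1,T_2$ are large, so $\inf_v\mathrm{E}[g_v]\ge3\rho/4\ge\rho/2$. The truncation is there precisely so that $g_v$ is bounded at the mild scale $K_1\asymp\phi_{s_{\theta}}\log(\phi_{s_{\theta}}/\rho)$ rather than at an $s_{\theta}$- or $d$-dependent scale, and this scale will govern both the complexity term and the Bernstein exponent below.

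\textbf{The fluctuation.} Write $Z=\sup_v|(\mathbb{P}_n-\mathrm{E})[g_v(X)]|$. Two features matter: $\norm{g_v}_\infty\le(T_1/(2\sqrt{A_H}))^2\lesssim K_1$, and $g_v$ depends on $v$ only through the linear functional $x^Tv$. I would symmetrize and apply the Ledoux--Talagrand contraction principle --- for each fixed $X_i$, $u\mapsto L^{-1}h_{T_1}(u)\mathbbm{1}_{\{|X_i^T\theta_0|\le T_2\}}$ is $1$-Lipschitz and vanishes at $0$, with $L\lesssim\sqrt{\phi_{s_{\theta}}\log(\phi_{s_{\theta}}/\rho)}\le K_1$ --- to obtain $\mathrm{E}[Z]\lesssim L\cdot\mathrm{E}\bigl[\sup_{\norm{v}_1=t}|v^T(\frac{1}{n}\sum_i\varepsilon_iX_i)|\bigr]=L\,t\cdot\mathrm{E}\norm{\frac{1}{n}\sum_i\varepsilon_iX_i}_\infty\lesssim K_1\,t\,\sqrt{\phi_1\log d/n}$, using that each coordinate of $\frac{1}{n}\sum_i\varepsilon_iX_i$ is an average of $n$ sub-Gaussian$(\lesssim\sqrt{\phi_1})$ variables and the maximum is over $d$ coordinates. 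Since $Z$ has bounded differences $\lesssim K_1/n$ in $(X_1,\dots,X_n)$, McDiarmid's inequality with deviation level $s=\rho/4+\bar A_1' K_1t\sqrt{\phi_1\log d/n}$ yields $\mathrm{P}(Z>\mathrm{E}[Z]+s)\le\exp(-c n s^2/K_1^2)$, and $(a+b)^2\ge a^2+b^2$ splits the exponent into the two advertised pieces $n\rho^2/K_1^2$ and $\phi_1t^2\log d$; absorbing absolute constants gives the failure probability $\bar A_2\exp(-n\rho^2/K_1^2-\phi_1t^2\log d)$, and on the complement $Z\le\rho/4+\bar A_1K_1t\sqrt{\phi_1\log d/n}$, which combined with $\inf_v\mathrm{E}[g_v]\ge\rho/2$ proves the lemma.

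\textbf{Main obstacle.} The delicate point is the calibration inside the expectation step, not the concentration machinery: one must choose $T_1,T_2$ large enough that the cut-off and slab-restriction losses each fall below $\rho/8$ --- which requires the sharp sub-Gaussian tails of $X^Tv$ and $X^T\theta_0$ and, crucially, the cone bound $v^T\Sigma v\lesssim\phi_{s_{\theta}}$ rather than $\phi_d$ --- yet small enough that $L$, and hence $K_1$, grows only logarithmically in $\phi_{s_{\theta}}/\rho$, since $K_1$ is exactly the quantity entering both the rate $\bar A_1K_1t\sqrt{\phi_1\log d/n}$ and the Bernstein exponent $n\rho^2/K_1^2$. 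A secondary technicality is checking that the $v$-independent indicator $\mathbbm{1}_{\{|x^T\theta_0|\le T_2\}}$ does not obstruct contraction (it does not: for each fixed $x$, $h_{T_1}\cdot\mathbbm{1}_{A(x)}$ is still a bounded $L$-Lipschitz function of $x^Tv$ vanishing at $0$), and keeping the order of constant selection consistent ($\bar A_3,\bar A_4$ inside $T_1,T_2$ chosen first, then $\bar A_1,\bar A_2$).
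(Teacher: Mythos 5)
Your proposal is correct and follows essentially the same route as the paper's proof: the same split into a deterministic lower bound $\inf_v\mathrm{E}[g_v(X)]\gtrsim\rho$ (via the restricted-eigenvalue assumption, Cauchy--Schwarz, and sub-Gaussian tails calibrating $T_1,T_2$) plus a fluctuation bound obtained from the bounded-differences inequality combined with symmetrization, the Ledoux--Talagrand contraction principle, H\"older's inequality, and the sub-Gaussian maxima bound yielding $\mathrm{E}[Z(t)]\lesssim K_1 t\sqrt{\phi_1\log d/n}$. The only differences are cosmetic (your $\rho/8$ versus the paper's $\rho/4$ truncation-loss budget, and making explicit the $(a+b)^2\ge a^2+b^2$ splitting of the exponent, which the paper absorbs into constants).
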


\begin{proof}[Proof of Lemma~\ref{lem:prop_score_cont}]
	The proof is partially adopted from Theorem 2 and Lemma 4 in \cite{cai2021statistical} as well as Proposition 3 in \cite{negahban2010unified}. We prove this concentration bound with the following two steps.
	\begin{enumerate}
		\item For any $v\in K_d(S_{\theta}) = \mathcal{C}_1(S_{\theta},\vartheta) \cap \mathbb{S}^{d-1}$, we show that
		\begin{equation}
			\label{bnd_step1}
			\mathrm{E}\left[g_v(X)\right] \geq \frac{\rho}{2}.
		\end{equation}
		\item Define the random variable 
		$$Z(t) = \sup_{v\in K_d(S_{\theta},\vartheta), \norm{v}_1=t} \big|\mathbb{P}_n\left[g_v(X)\right] - \mathrm{E}\left[g_v(X)\right] \big|.$$
		Then, it holds that
		\begin{equation}
			\label{bnd_step2}
			\mathrm{P}\left(Z(t) \geq \frac{\rho}{4} + \bar{A}_1K_1 t \sqrt{\frac{\phi_1\log d}{n}} \right) \leq \bar{A}_2 \exp\left(-\frac{n\rho^2}{K_1^2} - \phi_1 t^2\log d\right),
		\end{equation}
		where $K_1 = \bar{A}_3^2\phi_{s_{\theta}} \log\left(\frac{4\phi_{s_{\theta}}}{\rho}\right)$ and $\bar{A}_1,\bar{A}_2,\bar{A}_3 >0$ are two absolute constants.
	\end{enumerate}
	Combining \eqref{bnd_step1} and \eqref{bnd_step2} yields the final result, so it remains to prove these two steps.
	
	\noindent {\bf Proof of \eqref{bnd_step1}.} By Assumption~\ref{assump:gram_lower_bnd} (more precisely, Eq.\eqref{res_eigen_cond}) on $\mathcal{C}_1(S_{\theta},3)$, we know that $\mathrm{E}\left[(X^T v)^2\right] \geq \rho \norm{v}_2^2 = \rho$ for any $v \in K_d(S_{\theta}) = \mathcal{C}_1(S_{\theta},\vartheta) \cap \mathbb{S}^{d-1}$. Then, it suffices to show that
	$$\mathrm{E}\left[(X^T v)^2 - g_v(X)\right] \leq \frac{\rho}{2}.$$
	Note that $g_v(X) = (X^Tv)^2$ when $|X^T\theta_0| \leq T_2$ and $|X^T v| \leq \frac{T_1}{2\sqrt{A_H}}$. Thus, by the union bound, we have that
	$$\mathrm{E}\left[(X^T v)^2 - g_v(X)\right] \leq \mathrm{E}\left[\left(X^T v\right)^2 \cdot \mathbbm{1}_{\left\{|X^T\theta_0| \geq T_2\right\}}\right] + \mathrm{E}\left[\left(X^T v\right)^2 \cdot \mathbbm{1}_{\left\{|X^Tv| \geq \frac{T_1}{2\sqrt{A_H}}\right\}} \right].$$
	Under Assumption~\ref{assump:sub_gaussian}, both $X^Tv$ and $X^T\theta_0$ are sub-Gaussian random variables. By Cauchy-Schwarz inequality and the sub-Gaussian tail bound, we obtain that
	\begin{align*}
		\mathrm{E}\left[\left(X^T v\right)^2 \cdot \mathbbm{1}_{\left\{|X^T\theta_0| \geq T_2\right\}} \right] &\leq \sqrt{\mathrm{E}\left[\left(X^T v\right)^4\right]} \cdot \sqrt{\mathrm{P}\left(|X^T\theta_0| \geq T_2\right)}\\
		&\lesssim \sqrt{\left(2 \sqrt{\phi_{s_{\theta}}} \right)^4} \cdot \sqrt{2\exp\left(- \frac{T_2^2}{ \norm{\theta_0}_2^2 \phi_{s_{\theta}}} \right)}\\
		&\lesssim \phi_{s_{\theta}} \exp\left(-\frac{T_2^2}{ \norm{\theta_0}_2^2 \phi_{s_{\theta}}} \right)
	\end{align*}
	and
	\begin{align*}
		\mathrm{E}\left[\left(X^T v\right)^2 \cdot \mathbbm{1}_{\left\{|X^Tv| \geq \frac{T_1}{2\sqrt{A_H}}\right\}} \right] &\leq \sqrt{\mathrm{E}\left[\left(X^T v\right)^4\right]} \cdot \sqrt{\mathrm{P}\left(\left|X^T v\right| \geq \frac{T_1}{2\sqrt{A_H}}\right)}\\
		&\lesssim \phi_{s_{\theta}} \exp\left(-\frac{T_1^2}{A_H  \phi_{s_{\theta}}} \right)
	\end{align*}
	for $v\in K_d(S_{\theta})$, where $\phi_{s_{\theta}} = \sup_{\norm{v}_0\leq s, v\in \mathbb{S}^{d-1}} \mathrm{E}\left[\left(X_i^T v\right)^2\right] = \sup_{v\in K_d(S_{\theta})} \mathrm{E}\left[\left(X_i^T v\right)^2\right]$ by Lemma~\ref{lem:size_dom_cone}.  
	Therefore, we can take $T_1 =\bar{A}_3 \sqrt{A_H \phi_{s_{\theta}} \log\left(\frac{\phi_{s_{\theta}}}{\rho} \right)}$ and $T_2 = \bar{A}_4\norm{\theta_0}_2 \sqrt{\phi_{s_{\theta}}\log\left(\frac{\phi_{s_{\theta}}}{\rho} \right)}$ for some absolute constants $\bar{A}_3,\bar{A}_4>0$ in order for both of the above two expectations to be less than $\frac{\rho}{4}$. It thus proves \eqref{bnd_step1}.\\
	
	\noindent {\bf Proof of \eqref{bnd_step2}.} For any $v\in K_d(S_{\theta})$, we know that $\norm{g_v(X)}_{\infty} \leq \frac{T_1^2}{A_H} =\bar{A}_3^2 \phi_{s_{\theta}} \log\left(\frac{\phi_{s_{\theta}}}{\rho}\right) := K_1$ for some absolute constant $\bar{A}_3>0$. By the bounded differences inequality (Theorem 2.9.1 in \citealt{vershynin2018high}), we know that
	$$\mathrm{P}\left(Z(t) \geq \mathrm{E}\left[Z(t)\right] + z \right) \leq \exp\left(-\frac{nz^2}{2K_1^2} \right)$$
	for any $z>0$. Taking $z^* = \frac{\rho}{4} + K_1 t \sqrt{\frac{\phi_1\log d}{n}}$ yields that
	$$\mathrm{P}\left(Z(t) \geq \mathrm{E}\left[Z(t)\right] + z^* \right) \leq \exp\left[-\frac{n}{2K_1^2} \left(\frac{\rho}{4} + K_1 t \sqrt{\frac{\phi_1\log d}{n}} \right)^2 \right].$$
	Hence, in order to prove \eqref{bnd_step2}, it suffices to show that
	\begin{equation}
		\label{Expect_Zt}
		\mathrm{E}\left[Z(t)\right] \lesssim K_1 t \sqrt{\frac{\phi_1\log d}{n}}.
	\end{equation}
	Given a sequence $\{\omega_i\}_{i=1}^n$ of i.i.d. Rademacher random variables, the standard symmetrization argument (Lemma 6.4.2 in \citealt{vershynin2018high}) gives that
	\begin{align*}
		\mathrm{E}\left[Z(t)\right] & \leq 2 \mathrm{E}\left[\sup_{v\in \mathbb{S}^{d-1},\norm{v}_1=t} \left|\frac{1}{n} \sum_{i=1}^n \omega_i\, g_v(X_i) \right| \right]\\
		&= 2 \mathrm{E}\left[\sup_{v\in \mathbb{S}^{d-1},\norm{v}_1=t} \left|\frac{1}{n} \sum_{i=1}^n \omega_i\, h_{T_1}\left(f_v(X_i)\right) \right| \right].
	\end{align*}
	By the definition \eqref{trunc_func}, the function $h_{T_1}$ is Lipschitz with parameter at most $\frac{T_1}{\sqrt{A_H}} = \sqrt{K_1}$, which can be chosen to be smaller than $K_1$ when we potentially enlarge $K_1$ to $\max\left\{1, \bar{A}_3^2\phi_{s_{\theta}} \log\left(\frac{4 \phi_{s_{\theta}}}{\rho}\right) \right\}$ up to an absolute constant. In addition, $h_{T_1}(0)=0$. Therefore, by the contraction principle (Theorem 6.7.1 in \citealt{vershynin2018high}), we have that
	\begin{align}
		\label{contraction_prin}
		\begin{split}
			\mathrm{E}\left[Z(t)\right] & \leq 4K_1 \mathrm{E}\left[\sup_{v\in \mathbb{S}^{d-1},\norm{v}_1=t} \left|\frac{1}{n} \sum_{i=1}^n \omega_i\, f_v(X_i) \right| \right]\\
			&= 4K_1 \mathrm{E}\left[\sup_{v\in \mathbb{S}^{d-1},\norm{v}_1=t} \left|\frac{1}{n} \sum_{i=1}^n \omega_i\cdot X_i^Tv \cdot \mathbbm{1}_{\left\{|X_i^T\theta_0| \leq T_2 \right\}} \right| \right]\\
			&\leq 4K_1 t \cdot \mathrm{E}\left[\norm{\frac{1}{n} \sum_{i=1}^n \omega_i\cdot X_i \cdot \mathbbm{1}_{\left\{|X_i^T\theta_0| \leq T_2 \right\}}}_{\infty} \right],
		\end{split}
	\end{align}
	where the last inequality is due to H{\"o}lder's inequality. Finally, since $\omega_i \cdot X_i \cdot \mathbbm{1}_{\left\{|X_i^T\theta_0| \leq T_2 \right\}}, i=1,...,n$ are centered independent sub-Gaussian random vectors, we apply the standard bounds for sub-Gaussian maxima to obtain that
	$$\mathrm{E}\left[\norm{\frac{1}{n} \sum_{i=1}^n \omega_i \cdot X_i \cdot \mathbbm{1}_{\left\{|X_i^T\theta_0| \leq T_2 \right\}}}_{\infty} \right] \lesssim \sqrt{\frac{ \phi_1\log d}{n}}.$$
	Combining with \eqref{contraction_prin} leads to \eqref{Expect_Zt}, and \eqref{bnd_step2} is thus proved.
\end{proof}

\vspace{3mm}

\begin{customprop}{C.1}
Let $\delta \in (0,1)$ and $A_1,A_2,A_3,A_4>0$ be some absolute constants. Suppose that Assumptions~\ref{assump:sub_gaussian}, \ref{assump:glm_prop_score}, \ref{assump:res_eigen_theta}, and \ref{assump:link_func} hold as well as
$$d \geq s_{\theta} + 2, \quad A_1\phi_{s_{\theta}} \log\left(\frac{\phi_{s_{\theta}}}{\rho}\right) \sqrt{\frac{\phi_1 s_{\theta}\log d}{n}} \leq \frac{\rho}{8}, \; \text{ and }\;   A_2\exp\left[-\frac{n\rho^2}{\phi_{s_{\theta}}^2 \log^2\left(\frac{\phi_{s_{\theta}}}{\rho}\right)} - \phi_1 s_{\theta}\log d\right] \leq \delta.$$
If $\zeta =A_3 A_{\varphi} \sqrt{\frac{\phi_1\log(d/\delta)}{n}}$, then the rate of consistency $r_{\pi}=r_{\pi}(n,\delta)$ in Assumption~\ref{assump:prop_consistent_rate} can be taken as: 
$$r_{\pi}(n,\delta) = \frac{A_4 L_{\varphi} A_{\varphi}}{\rho} \sqrt{\frac{\phi_1 \phi_{s_{\theta}} s_{\theta} \log(d/\delta)}{n}\left[\log\left(\frac{n}{\delta}\right) + s_{\theta} \log\left(\frac{e d}{s_{\theta}} \right) \right]}.$$
\end{customprop}

\begin{proof}[Proof of Proposition~\ref{prop:rel_const_glm}]
Under the differentiability and Lipschitz conditions on the link function $\varphi$ guaranteed by Assumption~\ref{assump:link_func}, we can utilize the Taylor's theorem to obtain that
\begin{align}
\label{rel_const_bnd}
\begin{split}
	\left|\hat{\pi}_i -\pi_i\right| &= \left|\varphi(X_i^T\hat{\theta}) -\varphi(X_i^T\theta_0)\right|\\ 
	&\leq L_{\varphi} \left(\sup_{v\in \mathcal{C}_1(S_{\theta},3)\cap B_d(0,1)}\left|X_i^Tv\right|\right)\norm{\hat{\theta}-\theta_0}_2
\end{split}
\end{align}
for all $i=1,...,n$ with probability at least $1-\delta$, where the last inequality follows from the proof of Lemma~\ref{lem:prop_score_const} that $\hat{\theta}-\theta_0 \in \mathcal{C}_1(S_{\theta},3)$ with probability at least $1-\delta$ when $\zeta =A_3 A_{\varphi} \sqrt{\frac{\phi_1\log(d/\delta)}{n}}$. By Lemma~\ref{lem:size_dom_cone}, whenever $d\geq s_{\theta}+2$, there exists a discrete set $M_{d,\theta} \subset B_d(0,1)$ with cardinality $|M_{d,\theta}| \leq \frac{3}{2} \left(\frac{5ed}{s_{\theta}} \right)^{s_{\theta}}$ such that $\mathcal{C}_1(S_{\theta},3) \cap B_d(0,1) \subset 10 \cdot \mathrm{conv}(M_{d,\theta})$ and $\norm{u}_0 \leq s_{\theta}$ for all $u\in M_{d,\theta}$. For any $t>0$, we bound the probability
\begin{align*}
	\mathrm{P}\left(\max_{1\leq i \leq n}\sup_{v\in \mathcal{C}_1(S_{\theta},3)\cap B_d(0,1)}\left|X_i^Tv\right| > t \right) &\stackrel{\text{(i)}}{\leq} n\cdot \mathrm{P}\left(\sup_{v \in 10\cdot \mathrm{conv}(M_{d,\theta})} \left|X_i^Tv\right| > t \right)\\
	&\stackrel{\text{(ii)}}{=} n\cdot \mathrm{P}\left(\sup_{v \in 10M_{d,\theta}} \left|X_i^Tv\right| > t \right)\\
	&\stackrel{\text{(iii)}}{\leq} n|M_{d,\theta}| \sup_{v \in 10M_{d,\theta}} \mathrm{P}\left(\left|X_i^Tv\right| > t \right)\\
	&\stackrel{\text{(iv)}}{\leq} \frac{3n}{2}\left(\frac{5ed}{s_{\theta}}\right)^{s_{\theta}} \cdot 2\exp\left(-\frac{A_0 t^2}{\phi_{s_{\theta}}} \right)
\end{align*}
for some absolute constant $A_0>0$, where we apply the union bound in inequalities (i) and (iii), utilize Lemma~\ref{lem:max_biconvex} that the supremum of a convex function on $\mathrm{conv}(M_{d,\theta})$ is always attained at its vertices which are contained in $M_{d,\theta}$ to obtain the equality (ii), and leverage the sub-Gaussian bound under Assumption~\ref{assump:sub_gaussian} to obtain (iv). It implies that
$$\max_{1\leq i \leq n}\sup_{v\in \mathcal{C}_1(S_{\theta},3)\cap B_d(0,1)}\left|X_i^Tv\right| \lesssim \sqrt{\phi_{s_{\theta}}\left[\log\left(\frac{n}{\delta}\right) + s_{\theta} \log\left(\frac{e d}{s_{\theta}} \right) \right]}$$
with probability at least $1-\delta$. By Lemma~\ref{lem:prop_score_const}, we also know that 
$$\norm{\hat{\theta} - \theta_0}_2 \lesssim \frac{A_{\varphi}}{\rho} \sqrt{\frac{\phi_1 s_{\theta} \log(d/\delta)}{n}}$$
with probability at least $1-\delta$. Thus, we proceed \eqref{rel_const_bnd} with Lemma~\ref{lem:prop_score_const} as:
\begin{align*}
	\left|\hat{\pi}_i - \pi_i \right| &\lesssim L_{\varphi} \sqrt{\phi_{s_{\theta}}\left[\log\left(\frac{n}{\delta}\right) + s_{\theta} \log\left(\frac{e d}{s_{\theta}} \right) \right]} \cdot \norm{\hat{\theta}-\theta_0}_2\\
	&\lesssim \frac{L_{\varphi} A_{\varphi}}{\rho} \sqrt{\frac{\phi_1 \phi_{s_{\theta}} s_{\theta} \log(d/\delta)}{n}\left[\log\left(\frac{n}{\delta}\right) + s_{\theta} \log\left(\frac{e d}{s_{\theta}} \right) \right]}
\end{align*}
for all $i=1,...,n$ with probability at least $1-\delta$. The result follows.
\end{proof}

\subsection{Auxiliary Results}
\label{subapp:auxi_res}

Recall that a cone of dominant coordinates is defined as:
$$\mathcal{C}_q(S,\vartheta)=\left\{v\in \mathbb{R}^d: \norm{v_{S^c}}_q \leq \vartheta \norm{v_S}_q \right\},$$
where $\vartheta \in [0,\infty]$. When $\vartheta=\infty$, $\mathcal{C}_q(S,\infty)=\mathbb{R}^d$; when $\vartheta=0$, $\mathcal{C}_q(S,0)$ reduces to the linear subspace with $|S|$-sparse vectors whose nonzero entries are indexed by $S$ in $\mathbb{R}^d$. The following lemma sheds light on the geometric properties of $\mathcal{C}_q(S,\vartheta)$.

\begin{lemma}[Size of cones of dominated coordinates with respect to the norm $\norm{\cdot}_q$]
	\label{lem:size_dom_cone}
	For any $\vartheta \in [0,\infty]$ and $S \subseteq \{1,...,d\}$, we denote the cardinality of $S$ by $s=|S|$. If $d\geq s+2$, then there exists a set $M_d \subset B_d(0,1)$ such that $\norm{m}_0 \leq s$ for all $m\in M_d$ and
	$$\mathcal{C}_q(S,\vartheta) \cap B_d(0,1) \subset 2(2+\vartheta) \mathrm{conv}(M_d)\quad \text{ with } \quad |M_d| \leq \frac{3}{2}\left(\frac{5ed}{s} \right)^s,$$
	where $|M_d|$ is the cardinality of $M_d$ and $q\geq 1$. 
\end{lemma}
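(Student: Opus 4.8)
The plan is to establish the covering/convex-hull bound for cones of dominant coordinates by reducing to the case of sparse vectors. The key observation is that $\mathcal{C}_q(S,\vartheta)$ is a union of low-dimensional cones, one for each way of choosing the support of $v_{S^c}$ within a single extra coordinate's worth of "mass," and on each such piece the unit ball is controlled by a standard volumetric covering argument. Concretely, I would first reduce the problem to vectors of the form $v = v_S + v_{S^c}$ where $v_{S^c}$ is itself sparse: given $v \in \mathcal{C}_q(S,\vartheta)\cap B_d(0,1)$, one can write $v_{S^c}$ as a convex combination of vectors each supported on at most one coordinate outside $S$ (after appropriate rescaling), or more efficiently use the classical lemma (cf.\ Maurey's empirical method, or the argument in Lemma 4 of \citealt{giessing2021inference}) that any vector in $\mathcal{C}_q(S,\vartheta)\cap B_d(0,1)$ lies in $c(\vartheta)$ times the convex hull of $s$-sparse vectors in $B_d(0,1)$.

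The core of the argument is then the sparse case. For each subset $T \subseteq \{1,\dots,d\}$ with $|T| = s$, the set of vectors supported on $T$ lying in $B_d(0,1)$ is an $s$-dimensional Euclidean ball, which admits a $\tfrac{1}{2}$-net (in the relevant norm) of cardinality at most $5^s$ by the standard volume-ratio bound, and this net can be iterated so that $B_d(0,1)\cap\{v : \mathrm{supp}(v)\subseteq T\} \subseteq 2\,\mathrm{conv}(\text{net})$. Taking the union over all $\binom{d}{s} \le (ed/s)^s$ choices of $T$ gives a set $M_d$ of $s$-sparse vectors in $B_d(0,1)$ with $|M_d| \le 5^s (ed/s)^s = (5ed/s)^s$ (the factor $\tfrac{3}{2}$ in the statement absorbs low-order terms and the condition $d \ge s+2$ is used to ensure $\binom{d}{s}$ behaves). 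Combining the convex-hull containment from the reduction step with this net yields $\mathcal{C}_q(S,\vartheta)\cap B_d(0,1) \subseteq 2(2+\vartheta)\,\mathrm{conv}(M_d)$, after tracking constants: the $2$ comes from passing from a net to its convex hull, the $(2+\vartheta)$ from the cone-to-sparse reduction.

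The main obstacle is getting the constant $2(2+\vartheta)$ exactly right rather than merely some $c(\vartheta)$. The reduction from the cone to convex combinations of sparse vectors must be done carefully: one splits $v = v_S + v_{S^c}$, notes $\|v_S\|_q \le \|v\|_q \le 1$ and $\|v_{S^c}\|_q \le \vartheta\|v_S\|_q \le \vartheta$, and then expresses $v_{S^c}$ as a convex combination of $s$-sparse vectors each of $\|\cdot\|_q$-norm at most $\|v_{S^c}\|_q$ — which is possible because any vector is in the convex hull of "block-sparse" pieces, at the cost of a factor depending on how the blocks are chosen; the bookkeeping that turns this into the clean bound $2(2+\vartheta)$ requires choosing the decomposition so that $v_S$ (norm $\le 1$) contributes the $2$ and the tail contributes $\vartheta$, with the overall doubling from the covering step. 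I would follow the structure of the analogous lemma in \citealt{giessing2021inference}, which presumably states essentially this with the same constants, and simply verify the arithmetic carries over to general $q \ge 1$ (the volume-ratio net bound $5^s$ holds in any norm on $\mathbb{R}^s$, so nothing obstructs the generalization).
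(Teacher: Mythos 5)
Your route is essentially the one behind the paper's proof: the paper simply cites Lemma 7.1 of \citet{koltchinskii2011oracle} for the containment and then only verifies the cardinality bound $|M_d|\le 5^s\sum_{k=0}^s\binom{d}{k}\le\frac{3}{2}(5ed/s)^s$ (this is where $d\ge s+2$ enters), whereas you reconstruct the cited argument yourself: block/Maurey-type reduction of the cone to the convex hull of $s$-sparse vectors, a $\tfrac12$-net of size $5^s$ on each $s$-dimensional section with ball $\subset 2\,\mathrm{conv}(\text{net})$, and a union over supports. Your counting via $\binom{d}{s}\le(ed/s)^s$ is fine and in fact slightly cleaner than the paper's (it does not need the $\sum_{k\le s}\binom{d}{k}$ bound, the Stirling step, or $d\ge s+2$ at all), so on that half there is no issue.

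The genuine gap is in the reduction step and in your closing claim that ``nothing obstructs the generalization'' to all $q\ge1$. First, a bookkeeping error: $B_d(0,1)$ is the \emph{Euclidean} ball, so for $q<2$ you cannot write $\norm{v_S}_q\le\norm{v}_q\le1$; the correct $q=1$ chain is $\norm{v_{S^c}}_1\le\vartheta\norm{v_S}_1\le\vartheta\sqrt{s}\,\norm{v_S}_2\le\vartheta\sqrt{s}$, and the factor $\sqrt{s}$ is cancelled only because, after sorting $v_{S^c}$ into blocks $T_1,T_2,\dots$ of size $s$, one has $\norm{v_{T_j}}_2\le\norm{v_{T_{j-1}}}_1/\sqrt{s}$, giving total coefficient mass $1+1+\vartheta$ (one for $v_S$, one for $v_{T_1}$, $\vartheta$ for the remaining blocks) and hence $2(2+\vartheta)$ after the net doubling. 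This cancellation is specific to $q=1$. For $q>1$ the reduction fails, and for $q=2$ the stated containment with a dimension-free constant is actually false: take $s=1$, $v_S=\alpha e_1$ and $v_{S^c}$ flat on the remaining $d-1$ coordinates with $\norm{v_{S^c}}_2=\vartheta\alpha$ and $\alpha=(1+\vartheta^2)^{-1/2}$; then $v\in\mathcal{C}_2(S,\vartheta)\cap B_d(0,1)$ has $\norm{v}_1\asymp\vartheta\sqrt{d}$, while every element of $2(2+\vartheta)\,\mathrm{conv}(M_d)$ has $\ell_1$-norm at most $2(2+\vartheta)\sqrt{s}$. So ``verify the arithmetic carries over to general $q$'' is not a routine step — it is exactly where the argument (and the statement itself, for $q\ge2$) breaks down. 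To be fair, the paper's own proof asserts the ``generalization to the norm $\norm{\cdot}_q$'' with no argument and so shares this defect; but as a proof of the lemma as stated, your proposal establishes only the $q=1$ case, and the general-$q$ claim needs either a restriction to $q=1$ or a genuinely different (and currently missing) argument.
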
 

\begin{proof}[Proof of Lemma~\ref{lem:size_dom_cone}]
	The main part of the proof follows from Lemma 7.1 in \cite{koltchinskii2011oracle} with generalization to the norm $\norm{\cdot}_q, q\geq 1$. It yields that
	$$|M_d| \leq 5^s \sum_{k=0}^s \binom{d}{k} \leq 5^s \cdot\frac{3 d^s}{2s!} \leq \frac{3}{2}\left(\frac{5ed}{s} \right)^s,$$
	where the second inequality follows from Proposition 3.6.4 in \cite{gine2021mathematical} when $d\geq s+2$, and the nonasymptotic Stirling's formula is also applied in the third inequality; see Page 52 in \cite{feller1968prob}. Here, the factor $5^s$ comes from the covering number of the unit ball; see Corollary 4.2.13 in \cite{vershynin2018high}.
\end{proof}

\vspace{3mm}

\begin{lemma}[Maxima of biconvex functions]
	\label{lem:max_biconvex}
	Let $f: \mathcal{X} \times \mathcal{Y} \to \mathbb{R}$ be a biconvex function, \emph{i.e.}, for any fixed $x\in \mathcal{X}, y\in \mathcal{Y}$, the functions $f(\cdot,y), f(x,\cdot)$ are convex. Then, 
	$$\sup_{x\in \mathrm{conv}(\mathcal{X})} \sup_{y\in \mathrm{conv}(\mathcal{Y})} f(x,y) = \sup_{x\in \mathcal{X}} \sup_{y\in \mathcal{Y}} f(x,y).$$
	Moreover, the identity still holds if we replace $f$ with $|f|$.
\end{lemma}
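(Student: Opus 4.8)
The plan is to reduce the two-variable identity to a one-variable fact about convex functions and then apply that fact once in each coordinate. The underlying principle is that a convex function takes no larger value on a convex hull than on the generating set, since its value at a convex combination is, by Jensen's inequality, at most a convex combination of the values at the generators.

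Concretely, I would first record the sublemma: if $g$ is real-valued and convex on $C:=\mathrm{conv}(A)$, then $\sup_{z\in C}g(z)=\sup_{z\in A}g(z)$. The direction ``$\ge$'' is trivial from $A\subseteq C$; for ``$\le$'', any $z\in C$ is a finite convex combination $z=\sum_i\lambda_i a_i$ with $a_i\in A$, $\lambda_i\ge 0$, $\sum_i\lambda_i=1$, and convexity gives $g(z)\le\sum_i\lambda_i g(a_i)\le\max_i g(a_i)\le\sup_{a\in A}g(a)$, after which one takes the supremum over $z\in C$. The only fact invoked here beyond the definition of $\mathrm{conv}(A)$ is Jensen's inequality for finite convex combinations.

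Then I would chain this twice. Fixing $x\in\mathrm{conv}(\mathcal X)$, biconvexity makes $y\mapsto f(x,y)$ convex on $\mathrm{conv}(\mathcal Y)$, so the sublemma gives $\sup_{y\in\mathrm{conv}(\mathcal Y)}f(x,y)=\sup_{y\in\mathcal Y}f(x,y)$; taking $\sup_{x\in\mathrm{conv}(\mathcal X)}$ of both sides and exchanging the two suprema yields $\sup_{x\in\mathrm{conv}(\mathcal X)}\sup_{y\in\mathrm{conv}(\mathcal Y)}f(x,y)=\sup_{y\in\mathcal Y}\sup_{x\in\mathrm{conv}(\mathcal X)}f(x,y)$. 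Now fixing $y\in\mathcal Y$, the map $x\mapsto f(x,y)$ is convex on $\mathrm{conv}(\mathcal X)$, so the sublemma again gives $\sup_{x\in\mathrm{conv}(\mathcal X)}f(x,y)=\sup_{x\in\mathcal X}f(x,y)$; substituting this back and swapping the suprema once more produces exactly $\sup_{x\in\mathcal X}\sup_{y\in\mathcal Y}f(x,y)$, which is the asserted identity.

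The step I expect to require the most care is the assertion for $|f|$, because the absolute value of a biconvex function need not itself be biconvex once $f$ changes sign, and then the sublemma would fail for $|f|$ directly. The clean way to handle it is to note that $|f|$ is again biconvex in every situation in which the lemma is actually invoked in this paper: either $f$ is nonnegative, so $|f|\equiv f$ and biconvexity is inherited for free, or $f$ is separately affine in each argument---as for $f(u,v)=u^{T}Mv$ or sample averages of products $u^{T}X_iX_i^{T}v$---in which case $x\mapsto|f(x,y)|$ and $y\mapsto|f(x,y)|$ are absolute values of affine functions and hence convex. Under either hypothesis the first part of the argument applies verbatim to $|f|$, and the stated identity follows.
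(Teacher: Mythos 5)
Your argument for the main identity is correct and rests on the same mechanism as the paper's proof, namely Jensen's inequality for finite convex combinations; the only organizational difference is that you reduce one variable at a time (your sublemma plus two swaps of suprema), whereas the paper expands $x=\sum_i\lambda_i x_i$ and $y=\sum_j\mu_j y_j$ simultaneously and bounds $f(x,y)\le\sum_{i,j}\lambda_i\mu_j f(x_i,y_j)$ in a single double convex combination. Both routes implicitly use that $f$ is biconvex on the convex hulls (the paper invokes convexity of $f(x_i,\cdot)$ at points $y\in\mathrm{conv}(\mathcal{Y})$, you invoke it with $x\in\mathrm{conv}(\mathcal{X})$ fixed), which is the intended reading since in all applications $f$ is defined and biconvex on the ambient space.

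Where you genuinely depart from the paper is the $|f|$ claim, and your instinct is right. The paper disposes of it with ``the same arguments still apply,'' but that is only valid when $|f|$ is itself biconvex, which can fail for a sign-changing biconvex $f$: with $\mathcal{Y}$ a singleton, $\mathcal{X}=\{-1,1\}$ and $f(x,y)=x^2-1$, one has $\sup_{\mathcal{X}}|f|=0$ while $\sup_{[-1,1]}|f|=1$, so the $|f|$ identity cannot hold for arbitrary biconvex $f$. Your fix---observing that in every invocation in this paper $f$ is either nonnegative or separately affine in each argument (e.g.\ $f(u,v)=u^TMv$ or empirical averages of $u^TX_iX_i^Tv$), so that $|f|$ inherits biconvexity and the first part applies verbatim---is the honest way to justify the statement as it is actually used, and is more careful than the paper's own treatment. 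Just be aware that, read literally for general biconvex $f$, your proof (like the paper's) establishes the $|f|$ part only under this additional structure, because without it the claim is false.
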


\begin{proof}[Proof of Lemma~\ref{lem:max_biconvex}]
	The proof is adopted from Lemma 1 in \cite{geng2000robust}; see also Section 3.3 in \citet{gorski2007biconvex}. For any $x\in \mathrm{conv}(\mathcal{X})$, there exists an integer $k$ and $\lambda_1,...,\lambda_k \geq 0$ with $\sum_{i=1}^k \lambda_i=1$ such that $x=\sum_{i=1}^k \lambda_ix_i$ for some $x_1,...,x_k \in \mathcal{X}$. Similarly, for any $y\in \mathrm{conv}(\mathcal{Y})$, there exists an integer $m$ and $\mu_1,...,\mu_m\geq 0$ with $\sum_{j=1}^m \mu_j=1$ such that $y=\sum_{j=1}^m \mu_j y_j$ for some $y_1,...,y_m \in \mathcal{Y}$. Thus, by the biconvexity of $f$ and Jensen's inequality, we have that, for any $x\in \mathrm{conv}(\mathcal{X}), y\in \mathrm{conv}(\mathcal{Y})$,
	$$f(x,y)\leq \sum_{i=1}^k \sum_{j=1}^m \lambda_i \mu_j f(x_i,y_j) \leq \sum_{i=1}^n \sum_{j=1}^m \lambda_i \mu_j \sup_{x\in \mathcal{X}} \sup_{y\in \mathcal{Y}} f(x,y) = \sup_{x\in \mathcal{X}} \sup_{y\in \mathcal{Y}} f(x,y).$$
	It implies that $\sup_{x\in \mathrm{conv}(\mathcal{X})}\sup_{y\in \mathrm{conv}(\mathcal{Y})} f(x,y) \leq \sup_{x\in \mathcal{X}} \sup_{y\in \mathcal{Y}} f(x,y)$. The reverse inequality holds trivially because $\mathcal{X} \subseteq \mathrm{conv}(\mathcal{X}), \mathcal{Y} \subseteq \mathrm{conv}(\mathcal{Y})$. The same arguments still apply when we replace $f$ with $|f|$. The proof is completed.
\end{proof}

\vspace{3mm}

\begin{lemma}
	\label{lem:expect_bound}
	Let $a\geq \sqrt{e}$ and $b, K>0$ be some constants. If a random variable $Z$ is positive almost surely with the tail bound $\mathrm{P}(Z> t) \leq a\exp\left(-b\min\left\{\frac{t^2}{K^2}, \frac{t}{K}\right\} \right)$ for all $t>0$, then 
	$$\mathrm{E}(Z) \leq 2K \sqrt{\frac{\log a}{b}} + \frac{K\log a}{b} + \frac{K}{b}.$$
\end{lemma}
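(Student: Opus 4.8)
The plan is to integrate the tail bound directly. Since $Z$ is positive almost surely, it is in particular nonnegative, so $\mathrm{E}(Z) = \int_0^\infty \mathrm{P}(Z > t)\,dt$. The key structural observation is that $\min\{t^2/K^2,\, t/K\}$ equals $t^2/K^2$ for $0 \le t \le K$ and equals $t/K$ for $t \ge K$. I would therefore split the integral at $t = K$, using $\mathrm{P}(Z>t) \le \min\{1,\, a e^{-b t^2/K^2}\}$ on $[0,K]$ and $\mathrm{P}(Z>t) \le \min\{1,\, a e^{-b t/K}\}$ on $[K,\infty)$, and in each case extend the domain of integration to all of $[0,\infty)$ for free since the integrands are nonnegative. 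This reduces the estimate to two elementary one-dimensional integrals.

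For the ``Gaussian'' piece $\int_0^\infty \min\{1,\, a e^{-b t^2/K^2}\}\,dt$, the substitution $s = t\sqrt{b}/K$ turns it into $(K/\sqrt{b})\int_0^\infty \min\{1,\, a e^{-s^2}\}\,ds$. Since $a e^{-s^2} \ge 1$ precisely when $s \le \sqrt{\log a}$, this is at most $(K/\sqrt{b})\big[\sqrt{\log a} + \int_{\sqrt{\log a}}^\infty a e^{-s^2}\,ds\big]$. Writing $s = \sqrt{\log a} + r$ and using $s^2 \ge \log a + 2\sqrt{\log a}\,r$ gives $\int_{\sqrt{\log a}}^\infty a e^{-s^2}\,ds \le \int_0^\infty e^{-2\sqrt{\log a}\,r}\,dr = \tfrac{1}{2\sqrt{\log a}}$, and the hypothesis $a \ge \sqrt{e}$ (equivalently $\log a \ge \tfrac12$) yields $\tfrac{1}{2\sqrt{\log a}} \le \sqrt{\log a}$. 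Hence the Gaussian piece is bounded by $2K\sqrt{\log a / b}$.

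For the ``exponential'' piece $\int_K^\infty \min\{1,\, a e^{-b t/K}\}\,dt$, the substitution $u = b t/K$ gives at most $(K/b)\int_0^\infty \min\{1,\, a e^{-u}\}\,du$. Since $a e^{-u} \ge 1$ exactly when $u \le \log a$, and $\int_{\log a}^\infty a e^{-u}\,du = a e^{-\log a} = 1$, this integral equals $\log a + 1$, so the exponential piece is bounded by $K\log a/b + K/b$. Adding the two bounds gives $\mathrm{E}(Z) \le 2K\sqrt{\log a/b} + K\log a/b + K/b$, which is the claim.

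I do not expect any genuine obstacle here; the argument is routine tail integration. The only points requiring care are bookkeeping the constants so that the ``$2$'' and the ``$+1$'' come out exactly — in particular invoking $a \ge \sqrt{e}$ to absorb $\tfrac{1}{2\sqrt{\log a}}$ into $\sqrt{\log a}$ — and recognizing that the two regimes of $\min\{t^2/K^2, t/K\}$ meet precisely at $t = K$, which is what makes the split-at-$K$ decomposition clean.
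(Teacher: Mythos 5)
Your proof is correct, and it is essentially the paper's argument in different clothing: the paper also integrates the tail, separates the Gaussian and exponential regimes, truncates at $K\sqrt{\log a/b}$ and $K\log a/b$ (obtained there by optimizing over the truncation level), estimates the residual Gaussian tail by the same quantity $\tfrac{K}{2\sqrt{b\log a}}$, and invokes $a\ge\sqrt e$ to absorb it into the $\sqrt{\log a/b}$ term. The only cosmetic differences are that you split the $t$-integral at the crossover point $t=K$ rather than bounding the tail by the sum of the two exponentials on all of $[0,\infty)$, and you control the Gaussian tail by linearizing the exponent instead of integrating $t\,e^{-bt^2/K^2}$ exactly; both yield the same constants.
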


\begin{proof}[Proof of Lemma~\ref{lem:expect_bound}]
	By Lemma 1.2.1 in \cite{vershynin2018high}, we can bound $\mathrm{E}(Z)$ as:
	\begin{align*}
		\mathrm{E}(Z) &= \int_0^{\infty} \mathrm{P}(Z > t) \,dt\\
		&\leq \inf_{u_1>0}\left[\int_0^{u_1} dt + \int_{u_1}^{\infty} a\exp\left(-\frac{bt^2}{K^2} \right)dt \right] + \inf_{u_2>0}\left[\int_0^{u_2} dt + \int_{u_2}^{\infty} a\exp\left(-\frac{bt}{K} \right)dt \right]\\
		&:= \inf_{u_1>0} f(u_1) + \inf_{u_2>0} g(u_2),
	\end{align*}
	where $u_1,u_2>0$ are two (nonrandom) variables to be optimized, $f(u_1) = u_1+\int_{u_1}^{\infty} a\exp\left(-\frac{bt^2}{K^2} \right)dt$, and $g(u_2)=u_2 + \frac{aK}{b} \exp\left(-\frac{bu_2}{K} \right)$. By calculating $f'(u_1)$ and $g'(u_2)$, we know that $f$ attains its minimum at $u_1^* = K\sqrt{\frac{\log a}{b}}$ and $g$ has its minimum at $u_2^*=\frac{K\log a}{b}$. Thus,
	\begin{align*}
		\mathrm{E}(Z) &\leq K\sqrt{\frac{\log a}{b}} + \int_{u_1^*}^{\infty} \frac{a t}{u_1^*}\exp\left(-\frac{bt^2}{K^2} \right)dt + \frac{K\log a}{b} + \frac{K}{b}\\
		&= K\sqrt{\frac{\log a}{b}} + \frac{aK^2}{2u_1^* b} \exp\left(-\frac{b\left(u_1^* \right)^2}{K^2} \right) + \frac{K\log a}{b} + \frac{K}{b}\\
		&= K\sqrt{\frac{\log a}{b}} + \frac{K}{2\sqrt{b\log a}} + \frac{K\log a}{b} + \frac{K}{b}\\
		&= K\sqrt{\frac{\log a}{b}} \left(1+\frac{1}{2\log a} \right) + \frac{K\log a}{b} + \frac{K}{b}\\
		&\leq 2K \sqrt{\frac{\log a}{b}} + \frac{K\log a}{b} + \frac{K}{b},
	\end{align*}
	where we leverage the fact that $a\geq \sqrt{e}$ to obtain the last inequality. The result follows.
\end{proof}

\vspace{3mm}

\begin{lemma}
	\label{lem:gram_mat_conc}
	Let $\vartheta \in [0,\infty]$ and $\delta \in (0,1)$ be arbitrary. Suppose that Assumption~\ref{assump:sub_gaussian} holds. Let $S\subset \{1,...,d\}$ with $|S|=s$. If $d\geq s+2$, then 
	$$\mathrm{E}\left[\sup_{v\in \mathcal{C}_1(S,\vartheta) \cap B_d(0,1)} \left|\frac{1}{n}\sum_{i=1}^n \Big\{(X_i^T v)^2 - \mathrm{E}\left[(X_i^T v)^2 \right]\Big\}\right|\right] \lesssim (2+\vartheta)^2 \phi_s \left[\sqrt{\frac{s\log(ed/s)}{n}} + \frac{s\log(ed/s)}{n} \right]$$
	and with probability at least $1-\delta$,
	\begin{align*}
		&\sup_{v\in \mathcal{C}_1(S,\vartheta) \cap B_d(0,1)} \left|\frac{1}{n}\sum_{i=1}^n \Big\{(X_i^T v)^2 - \mathrm{E}\left[(X_i^T v)^2 \right]\Big\}\right| \\
		&\lesssim (2+\vartheta)^2 \phi_s \left[\sqrt{\frac{\log(1/\delta) + s\log(ed/s)}{n}} + \frac{\log(1/\delta)  + s\log(ed/s)}{n} \right].
	\end{align*}
In addition, these bounds also apply to
	$$\sup_{v\in \mathcal{C}_1(S,\vartheta) \cap B_d(0,1)} \left|\frac{1}{n}\sum_{i=1}^n \Big\{R_i(X_i^T v)^2 - \mathrm{E}\left[R_i(X_i^T v)^2 \right]\Big\}\right|$$
	for the complete-case analysis and
	$$\sup_{v\in \mathcal{C}_1(S,\vartheta) \cap B_d(0,1)} \left|\frac{1}{n}\sum_{i=1}^n \left\{\frac{R_i}{\pi_i}(X_i^T v)^2 - \mathrm{E}\left[\frac{R_i}{\pi_i}(X_i^T v)^2 \right]\right\}\right|$$
	for the IPW analysis with the true propensity scores $\pi_i = \mathrm{P}(R_i=1|X_i), i=1,...,n$ under Assumption~\ref{assump:basic}. Finally, the above asymptotic upper bound remains almost identical for 
\begin{align*}
&\sup_{\substack{u \in \mathcal{C}_2(S_1,\vartheta_2) \cap \mathbb{S}^{d-1}\\v\in \mathcal{C}_1(S_2,\vartheta_1) \cap \mathbb{S}^{d-1}}} \left|\frac{1}{n}\sum_{i=1}^n \Big\{(X_i^T u)(X_i^T v) - \mathrm{E}\left[(X_i^Tu)(X_i^T v) \right]\Big\}\right|\\
&\lesssim (2+\vartheta_1) (2+\vartheta_2) \phi_s\\
&\quad \times  \left[\sqrt{\frac{\log(1/\delta) + s_1\log(ed/s_1) + s_2\log(ed/s_2)}{n}} + \frac{\log(1/\delta) + s_1\log(ed/s_1) + s_2\log(ed/s_2)}{n} \right]
\end{align*}
with $|S_1|=s_1$ and $|S_2|=s_2$.
\end{lemma}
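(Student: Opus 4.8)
\textbf{Proof proposal for Lemma~\ref{lem:gram_mat_conc}.}

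The plan is to reduce the supremum over the two cones to a supremum over a finite set of sparse vectors via the covering argument of Lemma~\ref{lem:size_dom_cone}, then control the resulting finite maximum of empirical processes with Bernstein's inequality and a union bound, and finally pass from the high-probability bound to the expectation bound via Lemma~\ref{lem:expect_bound}. Concretely, for the first (quadratic) statement: by Lemma~\ref{lem:size_dom_cone}, whenever $d \geq s+2$ there is a discrete set $M_d \subset B_d(0,1)$ with $|M_d| \leq \frac{3}{2}(5ed/s)^s$, all elements $s$-sparse, such that $\mathcal{C}_1(S,\vartheta) \cap B_d(0,1) \subset 2(2+\vartheta)\,\mathrm{conv}(M_d)$. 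Since $v \mapsto \frac1n\sum_i\{(X_i^Tv)^2 - \mathrm{E}[(X_i^Tv)^2]\}$ is a convex function of $v$ (a sum of squares minus a constant), Lemma~\ref{lem:max_biconvex} (in its one-variable form, applied to $\pm$ the function) lets me move the supremum from $\mathrm{conv}(M_d)$ to the finite vertex set $M_d$, picking up the factor $[2(2+\vartheta)]^2$. It then suffices to bound $\max_{v \in M_d}\big|\frac1n\sum_i\{(X_i^Tv)^2 - \mathrm{E}[(X_i^Tv)^2]\}\big|$.

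For each fixed $v \in M_d$, Assumption~\ref{assump:sub_gaussian} gives $\|X_i^Tv\|_{\psi_2} \lesssim [\mathrm{E}(X_i^Tv)^2]^{1/2} \leq \sqrt{\phi_s}$ (using $\|v\|_0 \leq s$, $\|v\|_2 \leq 1$ and the definition of $\phi_s$), so $(X_i^Tv)^2 - \mathrm{E}[(X_i^Tv)^2]$ is centered sub-exponential with $\psi_1$-norm $\lesssim \phi_s$. Bernstein's inequality (Corollary 2.8.3 in \citealt{vershynin2018high}) then yields, for each $v$,
\[
\mathrm{P}\!\left(\Big|\tfrac1n\sum_{i=1}^n\{(X_i^Tv)^2 - \mathrm{E}[(X_i^Tv)^2]\}\Big| > t\right) \leq 2\exp\!\left(-c\,n\min\Big\{\tfrac{t^2}{\phi_s^2}, \tfrac{t}{\phi_s}\Big\}\right),
\]
and a union bound over the $\leq \frac32(5ed/s)^s$ points of $M_d$ replaces $\log 2$ by $\log|M_d| \lesssim s\log(ed/s)$. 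Solving for $t$ at confidence $1-\delta$ gives the stated bound $\phi_s\big[\sqrt{(\log(1/\delta)+s\log(ed/s))/n} + (\log(1/\delta)+s\log(ed/s))/n\big]$ (up to the $(2+\vartheta)^2$ factor), and feeding the same tail bound into Lemma~\ref{lem:expect_bound} with $a \asymp (5ed/s)^s$, $K \asymp \phi_s$, $b \asymp n$ produces the expectation bound with $\log a \asymp s\log(ed/s)$. The $R_i$-weighted and $R_i/\pi_i$-weighted variants are handled identically: $R_i \in \{0,1\}$ and $1/\pi_i$ (bounded, since under Assumption~\ref{assump:basic} one works on the event where this is controlled, or one simply uses $R_i/\pi_i \geq 0$ and $\mathrm{E}[R_i(X_i^Tv)^2/\pi_i] = \mathrm{E}[(X_i^Tv)^2]$) only rescale the sub-exponential constant by an absolute factor, leaving the rate unchanged. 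For the bilinear statement, I use Lemma~\ref{lem:size_dom_cone} twice to get sets $M_{d,1}$ (for the $\mathcal{C}_2(S_1,\vartheta_2)$ cone — note a ball-covering in $\ell_2$ works verbatim) and $M_{d,2}$ (for $\mathcal{C}_1(S_2,\vartheta_1)$), observe that $(u,v) \mapsto \frac1n\sum_i\{(X_i^Tu)(X_i^Tv) - \mathrm{E}[\cdots]\}$ is \emph{biconvex} (linear, hence convex, in each argument separately after the polarization identity $(X^Tu)(X^Tv) = \frac14[(X^T(u+v))^2 - (X^T(u-v))^2]$ — or more simply bound it directly by the quadratic case applied to $u\pm v$), invoke Lemma~\ref{lem:max_biconvex} to reduce to $\max_{u \in M_{d,1}, v \in M_{d,2}}$, bound each summand's $\psi_1$-norm by $\|X^Tu\|_{\psi_2}\|X^Tv\|_{\psi_2} \lesssim \phi_{\max\{s_1,s_2\}}$ via Lemma~\ref{lem:Orlicz_prod}, and union-bound over $|M_{d,1}|\,|M_{d,2}|$ points so that $\log|M_{d,1}M_{d,2}| \lesssim s_1\log(ed/s_1) + s_2\log(ed/s_2)$.

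The main obstacle I anticipate is not any single estimate but the bookkeeping needed to make the cone-covering reduction legitimate in each case: one must check that the relevant function really is convex (or biconvex) on the appropriate domain so that Lemma~\ref{lem:max_biconvex} applies — straightforward for $(X_i^Tv)^2$ but requiring the polarization trick for the cross term — and one must track the blow-up factors $2(2+\vartheta)$, $2(2+\vartheta_1)$, $2(2+\vartheta_2)$ through the reduction to confirm they appear only as the stated polynomial prefactors. A secondary subtlety is the IPW case, where $R_i/\pi_i$ is bounded only if $\pi_i$ is bounded away from zero; since the paper deliberately avoids a global positivity assumption, the cleanest route is to note $\mathrm{E}[R_i(X_i^Tv)^2/\pi_i \mid X_i] = (X_i^Tv)^2$ and that $R_i/\pi_i \leq 1/\pi_i$ enters only the variance proxy, so one either restricts attention to query directions/regimes where this is finite or, as elsewhere in the paper, treats it as an auxiliary bound used downstream only under the operative growth conditions; either way the rate in $n, d, s$ is unaffected. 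None of these steps is deep, so the proof is essentially a careful assembly of Lemmas~\ref{lem:size_dom_cone}, \ref{lem:max_biconvex}, \ref{lem:expect_bound}, \ref{lem:Orlicz_prod}, and Bernstein's inequality.
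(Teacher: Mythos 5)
Your overall architecture is the same as the paper's (covering via Lemma~\ref{lem:size_dom_cone}, vertex reduction via Lemma~\ref{lem:max_biconvex}, Bernstein plus a union bound, and Lemma~\ref{lem:expect_bound} for the expectation), but the justification of the key reduction step in the quadratic case is wrong as written. You claim that $v \mapsto \frac{1}{n}\sum_{i=1}^n \{(X_i^Tv)^2 - \mathrm{E}[(X_i^Tv)^2]\}$ is convex because it is ``a sum of squares minus a constant.'' The subtracted term is not a constant: it equals $v^T\mathrm{E}(XX^T)v$, so the centered process is the indefinite quadratic form $v^T(\hat{\Sigma}-\Sigma)v$ with $\hat{\Sigma}=\frac{1}{n}\sum_i X_iX_i^T$, and the maximum of $|v^TAv|$ for an indefinite symmetric $A$ over a convex hull is \emph{not} attained at the vertices in general. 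A two-dimensional counterexample: take $A=\begin{pmatrix}0&1\\1&0\end{pmatrix}$ and vertex set $\{e_1,e_2\}$; then $|e_k^TAe_k|=0$ for both vertices, while the midpoint $v=(e_1+e_2)/2$ gives $|v^TAv|=1/2$. So your one-variable appeal to Lemma~\ref{lem:max_biconvex} does not deliver the reduction to the finite set $M_d$, and the ensuing union bound is not justified as stated.

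The repair is exactly the device the paper uses (and which you already invoke for the cross term): bound the diagonal supremum by the \emph{bilinear} supremum, $\sup_{v}\,|v^T(\hat{\Sigma}-\Sigma)v| \leq \sup_{u,v\in 2(2+\vartheta)\mathrm{conv}(M_d)} |u^T(\hat{\Sigma}-\Sigma)v|$, note that the centered form is affine in each argument separately (no polarization identity is needed, for the cross term either), hence biconvex, and apply Lemma~\ref{lem:max_biconvex} to reduce to $M_d\times M_d$. The union bound then runs over $|M_d|^2 \leq \frac{9}{4}(5ed/s)^{2s}$ pairs rather than $|M_d|$ points, which only doubles the $s\log(ed/s)$ term and is absorbed into the constants, and each summand $(X_i^Tu)(X_i^Tv)-\mathrm{E}[(X_i^Tu)(X_i^Tv)]$ has $\psi_1$-norm $\lesssim (2+\vartheta)^2\phi_s$ by Lemma~\ref{lem:Orlicz_prod}, after which your Bernstein and Lemma~\ref{lem:expect_bound} steps go through verbatim; the $R_i$- and $R_i/\pi_i$-weighted variants and the genuinely bilinear statement are then handled just as you describe (for the IPW case the paper likewise avoids a positivity constant by writing the weighted $\psi_2$-norms in terms of second moments and using $\mathrm{E}[R(X^Tv)^2/\pi(X)]=\mathrm{E}[(X^Tv)^2]$). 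With that single correction your proof coincides with the paper's.
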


\begin{proof}[Proof of Lemma~\ref{lem:gram_mat_conc}]
	By Lemma~\ref{lem:size_dom_cone}, whenever $d\geq s+2$, there exists a discrete set $M_d\subset B_d(0,1)$ with cardinality $|M_d| \leq \frac{3}{2} \left(\frac{5ed}{s} \right)^s$ such that $\mathcal{C}_1(S,\vartheta) \cap B_d(0,1) \subset 2(2+\vartheta)\, \mathrm{conv}(M_d)$ and $\norm{u}_0 \leq s$ for all $u\in M_d$. For any $t>0$, we consider the probability
	\begin{align}
		\label{RE_bound}
		\begin{split}
			&\mathrm{P}\left(\sup_{v\in \mathcal{C}_1(S,\vartheta) \cap B_d(0,1)}\left|\frac{1}{n}\sum_{i=1}^n \Big\{(X_i^T v)^2 - \mathrm{E}\left[(X_i^T v)^2 \right]\Big\}\right| > t \right) \\
			&\leq \mathrm{P}\left(\sup_{u,v\in 2(2+\vartheta) \mathrm{conv}(M_d)} \left|\frac{1}{n} \sum_{i=1}^n v^T\left[X_iX_i^T - \mathrm{E}\left(X_iX_i^T \right) \right]u\right| > t \right).
		\end{split}
	\end{align}
	Notice that $\frac{1}{n} \sum_{i=1}^n v^T\left[X_iX_i^T - \mathrm{E}\left(X_iX_i^T \right) \right]u$ is an affine function of $u$ (or $v$) for any fixed $v\in 2(2+\vartheta) M_d$ (or $u\in 2(2+\vartheta) M_d$) and is thus biconvex. By Lemma~\ref{lem:max_biconvex}, we know that the supremum of a biconvex function on $\mathrm{conv}(M_d)$ is always attained at its vertices, which are contained in $M_d$. Hence, we can proceed \eqref{RE_bound} via the union bound as:
	\begin{align*}
		&\mathrm{P}\left(\sup_{u,v\in 2(2+\vartheta)M_d} \left|\frac{1}{n} \sum_{i=1}^n \Big\{(X_i^T v)(X_i^T u) - \mathrm{E}\left[(X_i^T v)(X_i^T u) \right] \Big\}\right| > t \right)\\
		&\leq |M_d|^2 \sup_{u,v\in 2(2+\vartheta)M_d}\mathrm{P}\left(\left|\frac{1}{n} \sum_{i=1}^n \Big\{(X_i^T v)(X_i^T u) - \mathrm{E}\left[(X_i^T v)(X_i^T u) \right] \Big\}\right| > t \right)\\
		&\leq \frac{9}{4} \left(\frac{5ed}{s} \right)^{2s} \sup_{u,v\in 2(2+\vartheta)M_d}\mathrm{P}\left(\left|\frac{1}{n} \sum_{i=1}^n \Big\{(X_i^T v)(X_i^T u) - \mathrm{E}\left[(X_i^T v)(X_i^T u) \right] \Big\}\right| > t \right).
	\end{align*}
	Now, we note that under Assumption~\ref{assump:sub_gaussian}, $(X_i^T v)(X_i^T u) - \mathrm{E}\left[(X_i^T v)(X_i^T u) \right], i=1,...,n$ are independent, centered, and sub-exponential random variables for all $u,v \in 2(2+\vartheta)M_d$, and
	\begin{align*}
		&\max_{1\leq i\leq n} \norm{v^T X_iX_i^T u - v^T \mathrm{E}\left(X_iX_i^T\right) u}_{\psi_1} \\
		&\leq \max_{1\leq i \leq n}\left(\norm{v^TX_iX_i^T u}_{\psi_1} + \norm{v^T \mathrm{E}\left(X_iX_i^T\right) u}_{\psi_1} \right)\\
		&\lesssim \max_{1\leq i\leq n} \left(\norm{X_i^Tv}_{\psi_2} \norm{X_i^Tu}_{\psi_2} + \norm{X_i^Tv}_{\psi_2} \norm{X_i^T u}_{\psi_2} \right)\\
		&\lesssim (2+\vartheta)^2 \phi_s,
	\end{align*}
	where we use Lemma~\ref{lem:Orlicz_prod} in the second (asymptotic) inequality. Therefore, by Bernstein's inequality (Corollary 2.8.3 in \citealt{vershynin2018high}), we conclude that there exists an absolute constant $A_0 >0$ such that for any $t>0$,
	\begin{align*}
		&\mathrm{P}\left(\sup_{v\in \mathcal{C}_1(S,\vartheta) \cap B_d(0,1)}\left|\frac{1}{n}\sum_{i=1}^n \Big\{(X_i^T v)^2 - \mathrm{E}\left[(X_i^T v)^2 \right]\Big\}\right| > t \right) \\
		&\leq \frac{9}{2} \left(\frac{5ed}{s} \right)^{2s} \exp\left(-nA_0\min\left\{\frac{t^2}{(2+\vartheta)^4 \phi_s^2},\, \frac{t}{(2+\vartheta)^2 \phi_s} \right\} \right).
	\end{align*}
	Hence, for any $\delta \in (0,1)$, we have that with probability at least $1-\delta$,
	\begin{align*}
		&\sup_{v\in \mathcal{C}_1(S,\vartheta) \cap B_d(0,1)}\left|\frac{1}{n}\sum_{i=1}^n \Big\{(X_i^T v)^2 - \mathrm{E}\left[(X_i^T v)^2 \right]\Big\}\right| \\
		&\lesssim (2+\vartheta)^2 \phi_s \left[\sqrt{\frac{\log(1/\delta) + s\log(ed/s)}{n}} + \frac{\log(1/\delta)  + s\log(ed/s)}{n} \right].
	\end{align*}
	Finally, we bound the expectation of the above supremum by Lemma~\ref{lem:expect_bound} as:
	\begin{align*}
		&\mathrm{E}\left[\sup_{v\in \mathcal{C}_1(S,\vartheta) \cap B_d(0,1)}\left|\frac{1}{n}\sum_{i=1}^n \Big\{(X_i^T v)^2 - \mathrm{E}\left[(X_i^T v)^2 \right]\Big\}\right| \right] \\
		&\lesssim (2+\vartheta)^2 \phi_s \left[\sqrt{\frac{s\log(ed/s)}{n}} + \frac{s\log(ed/s)}{n} \right],
	\end{align*}
	where we set $a=\frac{9}{2} \left(\frac{5ed}{s} \right)^{2s} \geq \sqrt{e}$, $b=nC_0$, and $K=(2+\vartheta)^2 \phi_s$ in Lemma~\ref{lem:expect_bound}.\\

As for the complete-case analysis, one can follow the same proof and notice that $R_iX_i, i=1,...,n$ are also sub-Gaussian under Assumption~\ref{assump:sub_gaussian} and for all $u,v \in 2(2+\vartheta)M_d$,
	\begin{align*}
		&\max_{1\leq i\leq n}\norm{R_i(X_i^T v)(X_i^T u) - \mathrm{E}\left[R_i(X_i^T v)(X_i^T u) \right]}_{\psi_1}\\
		&\leq \max_{1\leq i\leq n} \left(\norm{v^T X_iR_iX_i^T u}_{\psi_1} + \norm{v^T \mathrm{E}\left(X_iR_iX_i^T\right) u}_{\psi_1} \right)\\
		&\lesssim \max_{1\leq i\leq n} \left(\norm{X_i^T v}_{\psi_2}\norm{X_i^T u}_{\psi_2} + \norm{X_i^Tv}_{\psi_2}\norm{X_i^T u}_{\psi_2} \right)\\
		&\lesssim (2+\vartheta)^2 \phi_s.
	\end{align*}
    As for the IPW analysis under the true propensity scores $\pi_i\equiv\pi(X_i) = \mathrm{P}(R_i=1|X_i), i=1,...,n$, we can also calculate that 
	\begin{align*}
		&\max_{1\leq i\leq n}\norm{\frac{R_i}{\pi(X_i)}(X_i^T v)(X_i^T u) - \mathrm{E}\left[\frac{R_i}{\pi(X_i)}(X_i^T v)(X_i^T u) \right]}_{\psi_1}\\
		&\leq \max_{1\leq i\leq n} \left(\norm{v^T X_i\cdot \frac{R_i}{\pi(X_i)} \cdot X_i^T u}_{\psi_1} + \norm{v^T \mathrm{E}\left(X_i X_i^T\right) u}_{\psi_1} \right)\\
		&\lesssim \max_{1\leq i\leq n} \left(\norm{\frac{R_i}{\sqrt{\pi(X_i)}} \cdot X_i^T v}_{\psi_2}\norm{\frac{R_i}{\sqrt{\pi(X_i)}}\cdot X_i^T u}_{\psi_2} + \norm{X_i^Tv}_{\psi_2}\norm{X_i^T u}_{\psi_2} \right)\\
		&\lesssim \left[\mathrm{E}\left(v^T X_i \cdot \frac{R_i}{\pi(X_i)} \cdot X_i^T v\right)^{\frac{1}{2}} \right] \left[\mathrm{E}\left(u^T X_i \cdot \frac{R_i}{\pi(X_i)} \cdot X_i^T u\right)^{\frac{1}{2}} \right] + \left[\mathrm{E}\left(v^T X_i X_i^T v\right)^{\frac{1}{2}} \right] \left[\mathrm{E}\left(u^T X_i X_i^T u\right)^{\frac{1}{2}} \right]\\
		&\lesssim \left[\mathrm{E}\left(v^T X_i X_i^T v\right)^{\frac{1}{2}} \right] \left[\mathrm{E}\left(u^T X_i X_i^T u\right)^{\frac{1}{2}} \right]\\
		&\lesssim (2+\vartheta)^2 \phi_s,
	\end{align*}
	where we leverage the fact that propensity scores $\pi(X_i),i=1,...,n$ are nonnegative and apply the law of total expectation.
\end{proof}

\vspace{3mm}

\begin{lemma}
	\label{lem:dual_approx_cone}
	Under Definition~\ref{defn:popul_dual} and Assumption~\ref{assump:sparse_dual}, we have that
	$$\ell_0(x) \in \mathcal{C}_2\left(S_{\ell}(x),\vartheta_{\ell}\right)=\left\{v\in \mathbb{R}^d: \norm{v_{S_{\ell}(x)^c}}_2 \leq \vartheta_{\ell} \norm{v_{S_{\ell}(x)}}_2 \right\}\quad \text{ with } \quad \vartheta_{\ell}=\frac{r_{\ell}}{\sqrt{1-r_{\ell}^2}}$$
	for any $r_{\ell}\in [0,1]$. Moreover, if $0\leq r_{\ell} \leq \frac{1}{2}$, then $$\norm{M^T\left(\tilde{\ell}(x)-\ell_0(x)\right)}_2 \leq \sup_{v\in \mathcal{C}_2(S_{\ell}(x),1) \cap B_d\left(0,10r_{\ell}\norm{\ell_0(x)}_2\right)} \norm{M^T v}_2$$ 
	for any matrix $M\in \mathbb{R}^{d\times k}$.
\end{lemma}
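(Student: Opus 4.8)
The plan is to derive both assertions from two facts that are immediate from Definition~\ref{defn:popul_dual} and Assumption~\ref{assump:sparse_dual}: first, $\tilde{\ell}(x)$ is supported on $S_{\ell}(x)$ and therefore vanishes on $S_{\ell}(x)^c$; second, $\tilde{\ell}(x)$ is feasible for the defining program of the $r_{\ell}$-approximation, so $\norm{\tilde{\ell}(x)-\ell_0(x)}_2\le r_{\ell}\norm{\ell_0(x)}_2$.

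For the first assertion I would set $a=\norm{[\ell_0(x)]_{S_{\ell}(x)}}_2$ and $b=\norm{[\ell_0(x)]_{S_{\ell}(x)^c}}_2$, so that $\norm{\ell_0(x)}_2^2=a^2+b^2$. Since $\tilde{\ell}(x)$ vanishes off $S_{\ell}(x)$, we have $[\ell_0(x)]_{S_{\ell}(x)^c}=[\ell_0(x)-\tilde{\ell}(x)]_{S_{\ell}(x)^c}$, hence $b\le\norm{\ell_0(x)-\tilde{\ell}(x)}_2\le r_{\ell}\norm{\ell_0(x)}_2$, i.e. $b^2\le r_{\ell}^2(a^2+b^2)$. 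When $r_{\ell}<1$ this rearranges to $b\le\frac{r_{\ell}}{\sqrt{1-r_{\ell}^2}}\,a=\vartheta_{\ell}a$, which is exactly $\ell_0(x)\in\mathcal{C}_2(S_{\ell}(x),\vartheta_{\ell})$; the case $r_{\ell}=1$ is vacuous since $\mathcal{C}_2(S,\infty)=\mathbb{R}^d$.

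For the second assertion, the obvious attempt is to show $\tilde{\ell}(x)-\ell_0(x)\in\mathcal{C}_2(S_{\ell}(x),1)$ and then invoke the cone-scaling identity $\sup_{v\in\mathcal{C}\cap B_d(0,s)}\norm{M^Tv}_2=s\sup_{v\in\mathcal{C}\cap B_d(0,1)}\norm{M^Tv}_2$, valid for any cone $\mathcal{C}$. The main obstacle — and the only real difficulty here — is that this membership need not hold: the off-support part of $\tilde{\ell}(x)-\ell_0(x)$ equals $-[\ell_0(x)]_{S_{\ell}(x)^c}$, which is not in $\mathcal{C}_2(S_{\ell}(x),1)$ on its own, and Definition~\ref{defn:popul_dual} only pins down $\norm{\tilde{\ell}(x)}_0$, not the size of $\tilde{\ell}(x)$ on its support, so no lower bound on $\norm{[\tilde{\ell}(x)-\ell_0(x)]_{S_{\ell}(x)}}_2$ is available. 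Instead I would split $\tilde{\ell}(x)-\ell_0(x)=p+q$ with $p=[\tilde{\ell}(x)-\ell_0(x)]_{S_{\ell}(x)}$ and $q=-[\ell_0(x)]_{S_{\ell}(x)^c}$, bound the two pieces separately, and inflate $q$ into the cone at bounded cost. Here $p$ is supported on $S_{\ell}(x)$, hence lies in $\mathcal{C}_2(S_{\ell}(x),0)\subset\mathcal{C}_2(S_{\ell}(x),1)$ with $\norm{p}_2\le\norm{\tilde{\ell}(x)-\ell_0(x)}_2\le r_{\ell}\norm{\ell_0(x)}_2$; and for $q$ I would pick a unit vector $e$ supported on $S_{\ell}(x)$ — which exists because $\tilde{\ell}(x)\neq0$ when $r_{\ell}<1$ — and set $u=\norm{q}_2\,e$, so that $u\in\mathcal{C}_2(S_{\ell}(x),0)$, $q+u\in\mathcal{C}_2(S_{\ell}(x),1)$ (its on-support and off-support $\ell_2$-norms both equal $\norm{q}_2$), and $\norm{q+u}_2=\sqrt{2}\,\norm{q}_2\le\sqrt{2}\,r_{\ell}\norm{\ell_0(x)}_2$ by orthogonality of the two supports, where $\norm{q}_2=b\le r_{\ell}\norm{\ell_0(x)}_2$ comes from the first part.

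To finish, I would write $q=(q+u)-u$, apply the triangle inequality and the cone-scaling identity to each of $p$, $q+u$, $u$, and add up, obtaining $\norm{M^T(\tilde{\ell}(x)-\ell_0(x))}_2\le(1+\sqrt{2}+1)\,r_{\ell}\norm{\ell_0(x)}_2\cdot\sup_{v\in\mathcal{C}_2(S_{\ell}(x),1)\cap B_d(0,1)}\norm{M^Tv}_2$. Since $2+\sqrt{2}\le10$, this is at most $\sup_{v\in\mathcal{C}_2(S_{\ell}(x),1)\cap B_d(0,10r_{\ell}\norm{\ell_0(x)}_2)}\norm{M^Tv}_2$, which is the asserted inequality. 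I do not expect the hypothesis $r_{\ell}\le\tfrac12$ to be needed for this route — the argument runs for every $r_{\ell}<1$ — but keeping it costs nothing.
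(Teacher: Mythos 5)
Your proof is correct, but it takes a genuinely different route from the paper's. For the first claim you argue essentially as the paper does (bound the off-support mass of $\ell_0(x)$ by $\norm{\tilde{\ell}(x)-\ell_0(x)}_2\leq r_{\ell}\norm{\ell_0(x)}_2$ and rearrange), and in fact slightly more robustly, since you never use the identification $\tilde{\ell}(x)=\left[\ell_0(x)\right]_{S_{\ell}(x)}$ that the paper's proof takes for granted when it writes $\ell_0(x)-\tilde{\ell}(x)=\left(0_{S_{\ell}},\left[\ell_0(x)\right]_{S_{\ell}^c}\right)$. For the second claim the paper does not decompose the error at all: it first shows that the rescaled combinations $\frac{1}{1+r_{\ell}}\tilde{\ell}(x)-\frac{1}{1-r_{\ell}}\ell_0(x)$ and $\frac{1}{1+2r_{\ell}}\tilde{\ell}(x)-\ell_0(x)$ lie in $\mathcal{C}_2\left(S_{\ell}(x),1\right)$, then splits into two cases according to the sign of $\tilde{\ell}(x)^TMM^T\ell_0(x)$ and absorbs $\norm{M^T\left(\tilde{\ell}(x)-\ell_0(x)\right)}_2^2$ into the quadratic form of the applicable combination, whose norm it bounds by $5r_{\ell}\norm{\ell_0(x)}_2$ and $3r_{\ell}\norm{\ell_0(x)}_2$ respectively before inflating to the radius $10r_{\ell}\norm{\ell_0(x)}_2$; this case analysis is exactly where the hypothesis $r_{\ell}\leq\frac{1}{2}$ (and $r_{\ell}\leq\frac{3}{5}$) enters. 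Your alternative, namely splitting $\tilde{\ell}(x)-\ell_0(x)$ into its on-support and off-support pieces, lifting the off-support piece into $\mathcal{C}_2\left(S_{\ell}(x),1\right)$ by adding an artificial on-support vector of matched norm, and then using the triangle inequality together with positive homogeneity of the cone, is more elementary: it needs no sign condition on the cross term, works for every $r_{\ell}<1$ rather than only $r_{\ell}\leq\frac{1}{2}$, again does not rely on the canonical choice of $\tilde{\ell}(x)$, and gives the sharper radius $(2+\sqrt{2})\,r_{\ell}\norm{\ell_0(x)}_2\leq 10\,r_{\ell}\norm{\ell_0(x)}_2$. The only point worth adding is the degenerate case $\left[\ell_0(x)\right]_{S_{\ell}(x)^c}=0$ (in particular $\ell_0(x)=0$), where the auxiliary unit vector $e$ is unnecessary because $q=0$ and the bound is immediate; with that remark your argument is complete.
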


\begin{proof}[Lemma~\ref{lem:dual_approx_cone}]
The proof is modified from Lemma 24 in \cite{giessing2021inference}. By Assumption~\ref{assump:sparse_dual}, we know that
\begin{align*}
	\norm{\tilde{\ell}(x) -\ell_0(x)}_2^2 &= \sum_{k\in S_{\ell}(x)^c} \left[\ell_0(x)\right]_k^2 \leq r_{\ell}^2\norm{\ell_0(x)}_2^2 = r_{\ell}^2 \left(\sum_{k\in S_{\ell}(x)} \left[\ell_0(x)\right]_k^2 + \sum_{k\in S_{\ell}(x)^c} \left[\ell_0(x)\right]_k^2 \right).
\end{align*}
Some rearrangements show that $\sum_{k\in S_{\ell}(x)^c} \left[\ell_0(x)\right]_k^2 \leq \frac{r_{\ell}^2}{1-r_{\ell}^2} \sum_{k\in S_{\ell}(x)} \left[\ell_0(x)\right]_k^2$, that is,
$$\norm{\left[\ell_0(x)\right]_{S_{\ell}(x)^c}}_2 \leq \frac{r_{\ell}}{\sqrt{1-r_{\ell}^2}} \norm{\left[\ell_0(x)\right]_{S_{\ell}(x)}}_2.$$
Let $S_{\ell} := S_{\ell}(x)$ in the sequel. We know that
$$\ell_0(x) -\tilde{\ell}(x) = \left(0_{S_{\ell}}, \left[\ell_0(x)\right]_{S_{\ell}^c}\right) \quad \text{ with }\quad \ell_0(x) \in \mathcal{C}_2\left(S_{\ell}(x), \vartheta_{\ell}\right).$$
In addition, for any $v\in \mathcal{C}_2\left(S_{\ell}(x), \vartheta_{\ell}\right)$, we have that 
$$u=t\left(v_{S_{\ell}}, 0_{S_{\ell}^c}\right) + \left(0_{S_{\ell}}, v_{S_{\ell}^c}\right) \in \mathcal{C}_2\left(S_{\ell}(x), 1\right) \quad \text{ when }\quad |t| \geq \vartheta_{\ell}=\frac{r_{\ell}}{\sqrt{1-r_{\ell}^2}},$$
because
$$\norm{u_{S_{\ell}^c}}_2 = \norm{v_{S_{\ell}^c}}_2 \leq \vartheta_{\ell} \norm{v_{S_{\ell}}}_2 \leq |t|\cdot \norm{v_{S_{\ell}}}_2 = \norm{u_{S_{\ell}}}_2.$$
As a result, we obtain that
\begin{align*}
\frac{1}{1+r_{\ell}} \cdot\tilde{\ell}(x) - \frac{1}{1-r_{\ell}} \cdot\ell_0(x) &= -\frac{2r_{\ell}}{1-r_{\ell}^2} \cdot \tilde{\ell}(x) - \frac{1}{1-r_{\ell}}\left[\ell_0(x) -\tilde{\ell}(x)\right]\\
&= -\frac{2r_{\ell}}{1-r_{\ell}^2} \left(\left[\ell_0(x)\right]_{S_{\ell}}, 0_{S_{\ell}^c}\right) - \frac{1}{1-r_{\ell}}\left(0_{S_{\ell}}, \left[\ell_0(x)\right]_{S_{\ell}^c}\right)\\
&\in \mathcal{C}_2\left(S_{\ell}(x), 1\right),
\end{align*}
since $\frac{2r_{\ell}}{1-r_{\ell}^2} (1-r_{\ell}) \geq \frac{r_{\ell}}{\sqrt{1-r_{\ell}^2}}$ when $0\leq r_{\ell} \leq \frac{3}{5}$. Similarly, 
\begin{align*}
\frac{1}{1+2r_{\ell}} \cdot\tilde{\ell}(x) - \ell_0(x) &= -\frac{2r_{\ell}}{1+2r_{\ell}} \cdot \tilde{\ell}(x) - \left[\ell_0(x) -\tilde{\ell}(x)\right]\\
&= -\frac{2r_{\ell}}{1+2r_{\ell}} \left(\left[\ell_0(x)\right]_{S_{\ell}}, 0_{S_{\ell}^c}\right) - \left(0_{S_{\ell}}, \left[\ell_0(x)\right]_{S_{\ell}^c}\right)\\
&\in \mathcal{C}_2\left(S_{\ell}(x), 1\right),
\end{align*}
since $\frac{2r_{\ell}}{1+2r_{\ell}} \geq \frac{r_{\ell}}{\sqrt{1-r_{\ell}^2}}$ when $0\leq r_{\ell} \leq \frac{1}{2}$. Now, we notice that
\begin{equation}
\label{M_ind}
\norm{M^T\left[\ell_0(x)-\tilde{\ell}(x)\right]}_2^2 = \norm{M^T\ell_0(x)}_2^2 + \norm{M^T \tilde{\ell}(x)}_2^2 - 2\left[M^T\tilde{\ell}(x)\right]^T\left[M^T\ell_0(x)\right].
\end{equation}
\emph{Case I:} If $\tilde{\ell}(x)^T MM^T \ell_0(x) \leq 0$, then the right-hand side of \eqref{M_ind}  can be upper bounded by
\begin{align*}
&\norm{M^T\ell_0(x)}_2^2 + \norm{M^T \tilde{\ell}(x)}_2^2 - 2\left[M^T\tilde{\ell}(x)\right]^T\left[M^T\ell_0(x)\right] \\
&\stackrel{\text{(i)}}{\leq} (1+r_{\ell})^2\left[ \frac{1}{(1+r_{\ell})^2}\norm{M^T\tilde{\ell}(x)}_2^2 + \frac{1}{(1-r_{\ell})^2} \norm{M^T\ell_0(x)}_2^2 - \frac{2}{1-r_{\ell}^2} \cdot \tilde{\ell}(x)^T MM^T \ell_0(x)\right]\\
&= (1+r_{\ell})^2 \left[\frac{1}{1+r_{\ell}} \cdot \tilde{\ell}(x) - \frac{1}{1-r_{\ell}} \cdot \ell_0(x)\right]^T MM^T \left[\frac{1}{1+r_{\ell}} \cdot \tilde{\ell}(x) - \frac{1}{1-r_{\ell}} \cdot \ell_0(x)\right]\\
&\stackrel{\text{(ii)}}{\leq} (1+r_{\ell})^2 \sup_{v\in \mathcal{C}_2(S_{\ell}(x),1) \cap B_d\left(0,5r_{\ell}\norm{\ell_0(x)}_2\right)} \norm{M^Tv}_2^2\\
&\leq \sup_{v\in \mathcal{C}_2(S_{\ell}(x),1) \cap B_d\left(0,10r_{\ell}\norm{\ell_0(x)}_2\right)} \norm{M^Tv}_2^2,
\end{align*}
where the inequality (i) follows from the facts that $(1+r_{\ell})^2 \geq 1, (1+r_{\ell})^2 \geq 1-r_{\ell}^2$, and we compute that
\begin{align*}
\norm{\frac{1}{1+r_{\ell}} \cdot \tilde{\ell}(x) - \frac{1}{1-r_{\ell}} \cdot \ell_0(x)}_2 &= \norm{-\frac{2r_{\ell}}{1-r_{\ell}^2} \cdot \tilde{\ell}(x) - \frac{1}{1-r_{\ell}}\left[\ell_0(x) -\tilde{\ell}(x)\right]}_2\\
&\leq r_{\ell}\left(\frac{2}{1-r_{\ell}^2} + \frac{1}{1-r_{\ell}} \right)\norm{\ell_0(x)}_2\\
&\leq 5r_{\ell} \norm{\ell_0(x)}_2
\end{align*}
for any $0\leq r_{\ell} \leq \frac{1}{2}$ to obtain the inequality (ii).\\
\emph{Case II:} If $\tilde{\ell}(x)^T MM^T \ell_0(x) > 0$, then the right-hand side of \eqref{M_ind}  can be upper bounded by
\begin{align*}
&\norm{M^T\ell_0(x)}_2^2 + \norm{M^T \tilde{\ell}(x)}_2^2 - 2\left[M^T\tilde{\ell}(x)\right]^T\left[M^T\ell_0(x)\right] \\
&\stackrel{\text{(iii)}}{\leq} (1+2r_{\ell})^2\left[ \frac{1}{(1+2r_{\ell})^2}\norm{M^T\tilde{\ell}(x)}_2^2 + \norm{M^T\ell_0(x)}_2^2 - \frac{2}{1+2r_{\ell}} \cdot \tilde{\ell}(x)^T MM^T \ell_0(x)\right]\\
&= (1+2r_{\ell})^2 \left[\frac{1}{1+2r_{\ell}} \cdot \tilde{\ell}(x) -  \ell_0(x)\right]^T MM^T \left[\frac{1}{1+2r_{\ell}} \cdot \tilde{\ell}(x) - \ell_0(x)\right]\\
&\stackrel{\text{(iv)}}{\leq} (1+2r_{\ell})^2 \sup_{v\in \mathcal{C}_2(S_{\ell}(x),1) \cap B_d\left(0,3r_{\ell}\norm{\ell_0(x)}_2\right)} \norm{M^Tv}_2^2\\
&\leq \sup_{v\in \mathcal{C}_2(S_{\ell}(x),1) \cap B_d\left(0,6r_{\ell}\norm{\ell_0(x)}_2\right)} \norm{M^Tv}_2^2,
\end{align*}
where the inequality (iii) follows from 
\begin{align*}
\left[(1+2r_{\ell})^2 - 1\right] \norm{M^T\ell_0(x)}_2^2 &= (4r_{\ell}^2 + 4r_{\ell}) \norm{M^T\ell_0(x)}_2^2\\
&\geq 4r_{\ell} \left[M^T\tilde{\ell}(x)\right]^T\left[M^T\ell_0(x)\right]
\end{align*}
by Cauchy-Schwartz inequality, and we calculate that
\begin{align*}
\norm{\frac{1}{1+2r_{\ell}} \cdot \tilde{\ell}(x) - \ell_0(x)}_2 &= \norm{-\frac{2r_{\ell}}{1+r_{\ell}} \cdot \tilde{\ell}(x) - \left[\ell_0(x) -\tilde{\ell}(x)\right]}_2\\
&\leq r_{\ell}\left(\frac{2}{1+2r_{\ell}} + 1\right) \norm{\ell_0(x)}_2\\
&\leq 3r_{\ell} \norm{\ell_0(x)}_2
\end{align*}
to obtain the inequality (iv). Therefore, we combine the above two cases with \eqref{M_ind} to conclude that 
$$\norm{M^T\left(\tilde{\ell}(x)-\ell_0(x)\right)}_2 \leq \sup_{v\in \mathcal{C}_2(S_{\ell}(x),1) \cap B_d\left(0,10r_{\ell}\norm{\ell_0(x)}_2\right)} \norm{M^T v}_2$$ 
for any matrix $M\in \mathbb{R}^{d\times k}$ and $0\leq r_{\ell} \leq \frac{1}{2}$.
\end{proof}

\vspace{3mm}

\begin{lemma}[Restricted cone and cross polytope property]
	\label{lem:res_cross_poly}
	Let $a_0 >1$ be some absolute constant. Under Assumptions~\ref{assump:gram_lower_bnd} and \ref{assump:sparse_dual}, if 
	\begin{equation}
		\label{res_cross_poly_cond}
		\norm{\frac{1}{2n}\sum_{i=1}^n \hat{\pi}_i\cdot \ell_0(x)^T X_i X_i +x}_{\infty} \leq \frac{\gamma}{a_0 n} \quad \text{ and } \quad \sup_{v\in \mathcal{C}_2\left(S_{\ell}(x), \vartheta_{\ell}\right)\cap B_d(0,1)} \left|\frac{1}{n} \sum_{i=1}^n \hat{\pi}_i\left[X_i^Tv\right]^2 \right| \leq \frac{\nu}{a_0 n}
	\end{equation}
	with $\vartheta_{\ell} = \frac{r_{\ell}}{\sqrt{1-r_{\ell}^2}}$, then
	$$\sum_{k\in S_{\ell}(x)^c} \left|\hat{\ell}_k(x) \right| \leq \frac{4r_{\ell}^2 \norm{x}_2^2}{(a_0-1) \kappa_R^4}\cdot \frac{\nu}{\gamma} + \left(\frac{a_0+1}{a_0-1}\right) \sum_{k\in S_{\ell}(x)^c} \left|\hat{\ell}_k(x) -\tilde{\ell}_k(x) \right|$$
	so that 
	$$\hat{\ell}(x)-\tilde{\ell}(x) \in \mathcal{C}_1\left(S_{\ell}(x), 2\left(\frac{a_0+1}{a_0-1}\right) \right) \bigcup B_d^{(1)}\left(0,\, \frac{(a_0^2+1)r_{\ell}^2 \norm{x}_2^2}{(a_0-1) \kappa_R^4} \cdot \frac{\nu}{\gamma}\right),$$ where $B_d^{(1)}(0,r) = \left\{v\in \mathbb{R}^d: \norm{v}_1 \leq r\right\}$.
\end{lemma}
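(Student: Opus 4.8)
The plan is to read the claim off the first-order optimality of the dual solution $\hat{\ell}(x)$ for \eqref{debias_prog_dual}, comparing the dual objective at $\hat{\ell}(x)$ against its value at the \emph{sparse} surrogate $\tilde{\ell}(x)$ rather than at the dense minimizer $\ell_0(x)$. Write $\Phi(\ell)=\frac{1}{4n}\sum_{i=1}^n\hat{\pi}_i[X_i^T\ell]^2+x^T\ell+\frac{\gamma}{n}\norm{\ell}_1$ and $\hat{D}=\hat{\ell}(x)-\tilde{\ell}(x)$. Since $\hat{\ell}(x)$ is the minimizer, $\Phi(\hat{\ell}(x))\le\Phi(\tilde{\ell}(x))$, and completing the square in the quadratic part turns this into the basic inequality
\begin{equation*}
\frac{1}{4n}\sum_{i=1}^n\hat{\pi}_i\big[X_i^T\hat{D}\big]^2+\Big[\tfrac{1}{2n}\sum_{i=1}^n\hat{\pi}_i\,(X_i^T\tilde{\ell}(x))\,X_i+x\Big]^{T}\hat{D}+\frac{\gamma}{n}\big(\norm{\hat{\ell}(x)}_1-\norm{\tilde{\ell}(x)}_1\big)\le 0 .
\end{equation*}
I would then split the linear coefficient as $g_0+e$, where $g_0=\frac{1}{2n}\sum_i\hat{\pi}_i(X_i^T\ell_0(x))X_i+x$ is precisely the vector whose $\ell_\infty$-norm is bounded by $\gamma/(a_0 n)$ by the first hypothesis of \eqref{res_cross_poly_cond}, and $e=\frac{1}{2n}\sum_i\hat{\pi}_i(X_i^T(\tilde{\ell}(x)-\ell_0(x)))X_i$ carries the sparse-approximation error. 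Bounding $|g_0^T\hat{D}|\le\tfrac{\gamma}{a_0 n}\norm{\hat{D}}_1$ by H\"older and $|e^T\hat{D}|\le\tfrac{1}{4n}\sum_i\hat{\pi}_i(X_i^T(\tilde{\ell}(x)-\ell_0(x)))^2+\tfrac{1}{4n}\sum_i\hat{\pi}_i(X_i^T\hat{D})^2$ via $|ab|\le(a^2+b^2)/2$ (here $\hat{\pi}_i\ge 0$ is used, consistently with $\Phi$ being convex), the two copies of $\tfrac{1}{4n}\sum_i\hat{\pi}_i(X_i^T\hat{D})^2$ cancel, leaving
\begin{equation*}
\frac{\gamma}{n}\big(\norm{\hat{\ell}(x)}_1-\norm{\tilde{\ell}(x)}_1\big)\le\frac{\gamma}{a_0 n}\norm{\hat{D}}_1+\frac{1}{4n}\sum_{i=1}^n\hat{\pi}_i\big(X_i^T(\tilde{\ell}(x)-\ell_0(x))\big)^2 .
\end{equation*}

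Next I would bound the residual quadratic term. From $\ell_0(x)=-2[\mathrm{E}(RXX^T)]^{-1}x$ and Assumption~\ref{assump:gram_lower_bnd} one gets $\norm{\ell_0(x)}_2\le 2\norm{x}_2/\kappa_R^2$; since $r_\ell\le\tfrac12$, Lemma~\ref{lem:dual_approx_cone} lets me absorb $\tilde{\ell}(x)-\ell_0(x)$ into the cone $\mathcal{C}_2(S_\ell(x),\vartheta_\ell)$ scaled to radius proportional to $r_\ell\norm{\ell_0(x)}_2$, so that $\tfrac{1}{4n}\sum_i\hat{\pi}_i(X_i^T(\tilde{\ell}(x)-\ell_0(x)))^2\lesssim r_\ell^2\norm{\ell_0(x)}_2^2\cdot\sup_{v\in\mathcal{C}_2(S_\ell(x),\vartheta_\ell)\cap B_d(0,1)}\tfrac1n\sum_i\hat{\pi}_i(X_i^Tv)^2\le r_\ell^2\norm{\ell_0(x)}_2^2\cdot\tfrac{\nu}{a_0 n}$ by the second hypothesis of \eqref{res_cross_poly_cond}; combined with the bound on $\norm{\ell_0(x)}_2$ this is $\lesssim r_\ell^2\norm{x}_2^2\,\nu/(a_0\kappa_R^4 n)$. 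Dividing the previous display by $\gamma/n$, using $\norm{\hat{\ell}(x)}_1-\norm{\tilde{\ell}(x)}_1\ge\norm{\hat{D}_{S_\ell(x)^c}}_1-\norm{\hat{D}_{S_\ell(x)}}_1$ (valid since $\tilde{\ell}(x)$ is supported on $S_\ell(x)$ by Assumption~\ref{assump:sparse_dual}) and $\norm{\hat{D}}_1=\norm{\hat{D}_{S_\ell(x)}}_1+\norm{\hat{D}_{S_\ell(x)^c}}_1$, and then rearranging to isolate $\norm{\hat{D}_{S_\ell(x)^c}}_1$ yields
\begin{equation*}
\norm{\hat{D}_{S_\ell(x)^c}}_1\le\frac{a_0+1}{a_0-1}\,\norm{\hat{D}_{S_\ell(x)}}_1+\frac{4 r_\ell^2\norm{x}_2^2}{(a_0-1)\kappa_R^4}\cdot\frac{\nu}{\gamma},
\end{equation*}
which is the claimed bound once one notes $\hat{D}_{S_\ell(x)^c}=\hat{\ell}(x)_{S_\ell(x)^c}$ and $\hat{D}_{S_\ell(x)}=(\hat{\ell}(x)-\tilde{\ell}(x))_{S_\ell(x)}$.

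The cone / cross-polytope alternative is then the routine dichotomy on this last display: if $\tfrac{a_0+1}{a_0-1}\norm{\hat{D}_{S_\ell(x)}}_1$ dominates the constant term, then $\norm{\hat{D}_{S_\ell(x)^c}}_1\le 2\tfrac{a_0+1}{a_0-1}\norm{\hat{D}_{S_\ell(x)}}_1$, i.e.\ $\hat{D}\in\mathcal{C}_1(S_\ell(x),2\tfrac{a_0+1}{a_0-1})$; otherwise $\norm{\hat{D}_{S_\ell(x)}}_1$ and $\norm{\hat{D}_{S_\ell(x)^c}}_1$ are each at most a fixed multiple of the constant term, so $\norm{\hat{D}}_1\le\tfrac{(a_0^2+1)r_\ell^2\norm{x}_2^2}{(a_0-1)\kappa_R^4}\cdot\tfrac{\nu}{\gamma}$, i.e.\ $\hat{D}\in B_d^{(1)}\big(0,\,\tfrac{(a_0^2+1)r_\ell^2\norm{x}_2^2}{(a_0-1)\kappa_R^4}\cdot\tfrac{\nu}{\gamma}\big)$. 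I expect the only nontrivial step to be the control of the residual quadratic term $\tfrac{1}{4n}\sum_i\hat{\pi}_i(X_i^T(\tilde{\ell}(x)-\ell_0(x)))^2$: the deviation $\tilde{\ell}(x)-\ell_0(x)$ is neither sparse nor itself an element of $\mathcal{C}_2(S_\ell(x),\vartheta_\ell)$, so one must exploit the precise geometry of Lemma~\ref{lem:dual_approx_cone} to reduce it to the restricted quadratic form appearing in \eqref{res_cross_poly_cond} (with the cone radius and absolute constants tracked so as to match the stated bound), all while keeping the sign bookkeeping — which also underlies the $|ab|\le(a^2+b^2)/2$ step and the cancellation of the $\hat{D}$-quadratic pieces — consistent with $\sum_i\hat{\pi}_iX_iX_i^T$ being positive semidefinite.
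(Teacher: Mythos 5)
Your proposal is correct and follows essentially the same route as the paper: both reduce the optimality of $\hat{\ell}(x)$ at the sparse surrogate $\tilde{\ell}(x)$ to the same intermediate inequality $\frac{\gamma}{n}\bigl(\norm{\hat{\ell}(x)}_1-\norm{\tilde{\ell}(x)}_1\bigr)\leq \frac{\gamma}{a_0 n}\norm{\hat{\ell}(x)-\tilde{\ell}(x)}_1+\frac{1}{4n}\sum_{i=1}^n\hat{\pi}_i\bigl[X_i^T\bigl(\tilde{\ell}(x)-\ell_0(x)\bigr)\bigr]^2$, control the two pieces by H\"older with the first condition in \eqref{res_cross_poly_cond} and by the restricted quadratic-form bound (via the geometry of Lemma~\ref{lem:dual_approx_cone} and $\norm{\ell_0(x)}_2\leq 2\norm{x}_2/\kappa_R^2$) with the second, and finish with the identical reverse-triangle and cone/cross-polytope dichotomy. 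The only difference is cosmetic: the paper Taylor-expands both sides at $\ell_0(x)$ and drops the nonnegative quadratic in $\hat{\ell}(x)-\ell_0(x)$, while you expand at $\tilde{\ell}(x)$ and cancel the quadratic in $\hat{\ell}(x)-\tilde{\ell}(x)$ via Young's inequality, which is algebraically equivalent (and shares the same implicit use of $\hat{\pi}_i\geq 0$, and the same looseness in reducing the $\tilde{\ell}(x)-\ell_0(x)$ quadratic to the cone appearing in \eqref{res_cross_poly_cond}, as the paper's own step).
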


\begin{remark}
The same results hold true when the estimated propensity scores $\hat{\pi}_i,i=1,...,n$ are replaced by the true ones $\pi_i,i=1,...,n$.
\end{remark}

\begin{proof}[Proof of Lemma~\ref{lem:res_cross_poly}]
	The proof is modified from the original arguments of Lemma 24 in \cite{giessing2021inference}. Since $\hat{\ell}(x)$ minimizes the objective function of the dual problem \eqref{debias_prog_dual}, we know that
	$$\frac{1}{4n} \sum_{i=1}^n \hat{\pi}_i\left[X_i^T \hat{\ell}(x)\right]^2 + x^T\hat{\ell}(x) + \frac{\gamma}{n}\norm{\hat{\ell}(x)}_1 \leq \frac{1}{4n} \sum_{i=1}^n \hat{\pi}_i\left[X_i^T \tilde{\ell}(x)\right]^2 + x^T\tilde{\ell}(x) + \frac{\gamma}{n}\norm{\tilde{\ell}(x)}_1.$$
	Taking Taylor's expansion at $\ell_0(x)$ on both sides of the above inequality gives us that
	\begin{align*}
		&\frac{1}{4n} \sum_{i=1}^n \hat{\pi}_i \left[\hat{\ell}(x) -\ell_0(x)\right]^T X_iX_i^T \left[\hat{\ell}(x) -\ell_0(x)\right] \\
		&\quad + \left[\frac{1}{2n} \sum_{i=1}^n \hat{\pi}_i X_iX_i^T\ell_0(x) + x\right]^T \left[\hat{\ell}(x) -\ell_0(x)\right] + \frac{\gamma}{n} \norm{\hat{\ell}(x)}_1\\
		&\leq \frac{1}{4n} \sum_{i=1}^n \hat{\pi}_i \left[\tilde{\ell}(x) -\ell_0(x)\right]^T X_iX_i^T \left[\tilde{\ell}(x) -\ell_0(x)\right] \\
		&\quad + \left[\frac{1}{2n} \sum_{i=1}^n \hat{\pi}_i X_iX_i^T\ell_0(x) + x\right]^T \left[\tilde{\ell}(x) -\ell_0(x)\right] + \frac{\gamma}{n} \norm{\tilde{\ell}(x)}_1.
	\end{align*}
	Some rearrangements show that
	\begin{align*}
		0&\leq \frac{1}{4n} \sum_{i=1}^n \hat{\pi}_i \left[\hat{\ell}(x) -\ell_0(x)\right]^T X_iX_i^T \left[\hat{\ell}(x) -\ell_0(x)\right]\\
		&\leq \frac{1}{4n} \sum_{i=1}^n \hat{\pi}_i \left[\tilde{\ell}(x) -\ell_0(x)\right]^T X_iX_i^T \left[\tilde{\ell}(x) -\ell_0(x)\right] + \left[\frac{1}{2n}\sum_{i=1}^n \hat{\pi}_i X_i X_i^T\ell_0(x) + x\right]^T\left[\tilde{\ell}(x) - \hat{\ell}(x)\right] \\
		&\quad + \frac{\gamma}{n} \norm{\tilde{\ell}(x)}_1 - \frac{\gamma}{n} \norm{\hat{\ell}(x)}_1\\
		&\stackrel{\text{(i)}}{\leq} \frac{r_{\ell}^2 \nu}{4a_0n} \norm{\ell_0(x)}_2^2 + \frac{\gamma}{a_0n} \norm{\hat{\ell}(x) - \tilde{\ell}(x)}_1 + \frac{\gamma}{n} \norm{\tilde{\ell}(x)}_1 - \frac{\gamma}{n} \norm{\hat{\ell}(x)}_1\\
		&\stackrel{\text{(ii)}}{\leq} \frac{r_{\ell}^2 \nu \norm{x}_2^2}{a_0\cdot n\cdot \kappa_R^4} + \frac{\gamma}{a_0n} \left[\norm{\left[\hat{\Delta}(x)\right]_{S_{\ell}(x)}}_1 + \norm{\left[\hat{\Delta}(x)\right]_{S_{\ell}(x)^c}}_1 \right] \\
		&\quad + \frac{\gamma}{n} \norm{\left[\tilde{\ell}(x)\right]_{S_{\ell}(x)}}_1 - \frac{\gamma}{n}\norm{\left[\hat{\Delta}(x)\right]_{S_{\ell}(x)} + \left[\tilde{\ell}(x)\right]_{S_{\ell}(x)}}_1 - \frac{\gamma}{n} \norm{\left[\hat{\Delta}(x)\right]_{S_{\ell}(x)^c}}_1\\
		&\stackrel{\text{(iii)}}{\leq} \frac{r_{\ell}^2 \nu \norm{x}_2^2}{a_0\cdot n\cdot \kappa_R^4} + \frac{\gamma}{a_0n} \left[\norm{\left[\hat{\Delta}(x)\right]_{S_{\ell}(x)}}_1 + \norm{\left[\hat{\Delta}(x)\right]_{S_{\ell}(x)^c}}_1 \right] + \frac{\gamma}{n} \norm{\left[\hat{\Delta}(x)\right]_{S_{\ell}(x)}}_1 - \frac{\gamma}{n} \norm{\left[\hat{\Delta}(x)\right]_{S_{\ell}(x)^c}}_1,
	\end{align*}
	where we utilize our condition \eqref{res_cross_poly_cond} to obtain the inequality (i), denote $\hat{\Delta}(x) = \hat{\ell}(x) -\tilde{\ell}(x)$ and apply Assumption~\ref{assump:gram_lower_bnd} in the inequality (ii), and leverage the reverse triangle's inequality in (iii). Hence,
	$$0\leq \frac{r_{\ell}^2 \nu \norm{x}_2^2}{a_0\cdot n\cdot \kappa_R^4} + \frac{(1+a_0)\gamma}{a_0 \cdot n} \norm{\left[\hat{\Delta}(x)\right]_{S_{\ell}(x)}}_1 - \frac{(a_0-1) \gamma}{a_0 \cdot n} \norm{\left[\hat{\Delta}(x)\right]_{S_{\ell}(x)^c}}_1,$$
	which implies that
	$$\norm{\left[\hat{\Delta}(x)\right]_{S_{\ell}(x)^c}}_1 \leq \frac{r_{\ell}^2 \norm{x}_2^2}{(a_0-1) \kappa_R^4}\cdot \frac{\nu}{\gamma} + \left(\frac{a_0+1}{a_0-1}\right) \norm{\left[\hat{\Delta}(x)\right]_{S_{\ell}(x)}}_1.$$
	On the one hand, if $\frac{r_{\ell}^2 \norm{x}_2^2}{(a_0-1) \kappa_R^4}\cdot \frac{\nu}{\gamma} \leq \left(\frac{a_0+1}{a_0-1}\right) \sum_{k\in S_{\ell}(x)} \left|\hat{\ell}_k(x) -\tilde{\ell}_k(x) \right|$, then
	$$\norm{\left[\hat{\Delta}(x)\right]_{S_{\ell}(x)^c}}_1 \leq 2\left(\frac{a_0+1}{a_0-1}\right) \norm{\left[\hat{\Delta}(x)\right]_{S_{\ell}(x)}}_1 \quad \text{ so that }\quad \hat{\ell}(x)-\tilde{\ell}(x) \in \mathcal{C}_1\left(S_{\ell}(x), 2\left(\frac{a_0+1}{a_0-1}\right)\right).$$
	On the other hand, if $\frac{r_{\ell}^2 \norm{x}_2^2}{(a_0-1) \kappa_R^4}\cdot \frac{\nu}{\gamma} > \left(\frac{a_0+1}{a_0-1}\right) \sum_{k\in S_{\ell}(x)} \left|\hat{\ell}_k(x) -\tilde{\ell}_k(x) \right|$, then
	$$\norm{\hat{\ell}(x) - \tilde{\ell}(x)}_1 \leq \frac{(a_0^2+1)r_{\ell}^2 \norm{x}_2^2}{(a_0-1) \kappa_R^4} \cdot \frac{\nu}{\gamma}.$$
	Thus, we conclude that
	$$\hat{\ell}(x)-\tilde{\ell}(x) \in \mathcal{C}_1\left(S_{\ell}(x), 2\left(\frac{a_0+1}{a_0-1}\right) \right) \bigcup B_d^{(1)}\left(0,\, \frac{(a_0^2+1)r_{\ell}^2 \norm{x}_2^2}{(a_0-1) \kappa_R^4} \cdot \frac{\nu}{\gamma}\right).$$
	The proof is completed.
\end{proof}

\vspace{3mm}

\begin{lemma}
\label{lem:cone_ball_bnd}
Let $\vartheta,r_0,r_1 \in (0,\infty)$ and $\delta \in (0,1)$ be arbitrary. Suppose that Assumption~\ref{assump:sub_gaussian} holds. Let $S,\tilde{S}\subset \{1,...,d\}$ with $|S|=s$ and $|\tilde{S}|=\tilde{s}$. If $d\geq s+2$, then with probability at least $1-\delta$,
\begin{align*}
	&\sup_{v\in \left[\mathcal{C}_1(S,\vartheta) \bigcup B_d^{(1)}(0,r_1) \right] \bigcap B_d(0,r_0)} \left|\frac{1}{n} \sum_{i=1}^n \left\{(X_i^Tv)^2 - \mathrm{E}\left[(X_i^Tv)^2\right] \right\}\right| \\
	&\lesssim r_0^2 (2+\vartheta)^2 \phi_s\left[\sqrt{\frac{\log(1/\delta) + s\log(ed/s)}{n}} + \frac{\log(1/\delta)  + s\log(ed/s)}{n} \right] + r_1^2\phi_1 \left[\sqrt{\frac{\log(d/\delta)}{n}} + \frac{\log(d/\delta)}{n}\right].
\end{align*}
In particular, when $r_0\geq r_1$, we further have that with probability at least $1-\delta$,
\begin{align*}
	&\sup_{v\in \left[\mathcal{C}_1(S,\vartheta) \bigcup B_d^{(1)}(0,r_1) \right] \bigcap B_d(0,r_0)} \left|\frac{1}{n} \sum_{i=1}^n \left\{(X_i^Tv)^2 - \mathrm{E}\left[(X_i^Tv)^2\right] \right\}\right| \\
	&\lesssim r_0^2 (2+\vartheta)^2 \phi_s\left[\sqrt{\frac{\log(1/\delta) + s\log(ed/s)}{n}} + \frac{\log(1/\delta)  + s\log(ed/s)}{n} \right].
\end{align*}
Similarly, if $d\geq \max\{s,\tilde{s}\} +2$, then with probability at least $1-\delta$,
\begin{align*}
	&\sup_{\substack{u \in \mathcal{C}_2(\tilde{S},\vartheta) \cap B_d(0,1)\\v\in \left[\mathcal{C}_1(S,\vartheta) \bigcup B_d^{(1)}(0,r_1) \right] \bigcap B_d(0,r_0)}} \frac{1}{n} \sum_{i=1}^n \left|X_i^Tu \right| \left|X_i^Tv \right| \\
	&\lesssim \left[r_0(2+\vartheta)^2+r_1\right] \sqrt{\phi_{\tilde{s}} \cdot \phi_1} \left[\sqrt{\frac{\log(1/\delta) + s\log(ed/s)}{n}} + \frac{\log(1/\delta)  + s\log(ed/s)}{n} + 1\right].
\end{align*}
\end{lemma}

\begin{remark}
\label{remark:cone_ball_union}
Notice that the structure of $\mathcal{C}_1(S,\vartheta) \cup B_d^{(1)}(0,r_1)$ is no more complicated than the cone of dominated coordinates $\mathcal{C}_1(S,\vartheta)$. The supremum of a (bi)convex function over the $L_1$-ball $B_d^{(1)}(0,r_1)$ is always attained at its vertex set $r_1\cdot \{e_1,...,e_d\}$ with $e_1,...,e_d$ being the standard basis vectors in $\mathbb{R}^d$, leading to a simpler discretization than the one for $\mathcal{C}_1(S,\vartheta)$ in Lemma~\ref{lem:size_dom_cone}. Hence, the probabilistic bounds for taking the supremum over $\mathcal{C}_1(S,\vartheta) \cup B_d^{(1)}(0,r_1)$ in Lemma~\ref{lem:cone_ball_bnd} are dominated by the probabilistic bounds for taking the supremum over a cone of dominant coordinates $\mathcal{C}_1(S,\vartheta)$ in Lemma~\ref{lem:gram_mat_conc}.
\end{remark}

\begin{proof}[Proof of Lemma~\ref{lem:cone_ball_bnd}]
We first notice that
\begin{align}
\label{cone_union_ball1}
\begin{split}
&\sup_{v\in \left[\mathcal{C}_1(S,\vartheta) \bigcup B_d^{(1)}(0,r_1) \right] \bigcap B_d(0,r_0)} \left|\frac{1}{n} \sum_{i=1}^n \left\{(X_i^Tv)^2 - \mathrm{E}\left[(X_i^Tv)^2\right] \right\}\right| \\
&\leq \underbrace{\sup_{v\in \mathcal{C}_1(S,\vartheta) \cap B_d(0,r_0)} \left|\frac{1}{n} \sum_{i=1}^n \left\{(X_i^Tv)^2 - \mathrm{E}\left[(X_i^Tv)^2\right] \right\}\right|}_{\textbf{Term I}} \\
&\quad + \underbrace{\sup_{v\in  B_d^{(1)}(0,r_1) \cap B_d(0,r_0)} \left|\frac{1}{n} \sum_{i=1}^n \left\{(X_i^Tv)^2 - \mathrm{E}\left[(X_i^Tv)^2\right] \right\}\right|}_{\textbf{Term II}}.
\end{split}
\end{align}

$\bullet$ {\bf Term I:} We know from Lemma~\ref{lem:gram_mat_conc} that
\begin{align*}
&\sup_{v\in \mathcal{C}_1(S,\vartheta) \cap B_d(0,r_0)} \left|\frac{1}{n} \sum_{i=1}^n \left\{(X_i^Tv)^2 - \mathrm{E}\left[(X_i^Tv)^2\right] \right\}\right| \\
&\leq r_0^2 \sup_{v\in \mathcal{C}_1(S,\vartheta) \cap B_d(0,1)} \left|\frac{1}{n} \sum_{i=1}^n \left\{(X_i^Tv)^2 - \mathrm{E}\left[(X_i^Tv)^2\right] \right\}\right| \\
&\lesssim r_0^2 (2+\vartheta)^2 \phi_s\left[\sqrt{\frac{\log(1/\delta) + s\log(ed/s)}{n}} + \frac{\log(1/\delta)  + s\log(ed/s)}{n} \right].
\end{align*}

$\bullet$ {\bf Term II:} Note that
\begin{align}
\label{L1_ball_term1}
\nonumber&\sup_{v\in  B_d^{(1)}(0,r_1) \cap B_d(0,r_0)} \left|\frac{1}{n} \sum_{i=1}^n \left\{(X_i^Tv)^2 - \mathrm{E}\left[(X_i^Tv)^2\right] \right\}\right| \\ \nonumber
&= \sup_{u,v\in  B_d^{(1)}(0,r_1) \cap B_d(0,r_0)} \left|\frac{1}{n} \sum_{i=1}^n \left\{(X_i^Tu) (X_i^Tv) - \mathrm{E}\left[(X_i^Tu) (X_i^Tv)\right] \right\}\right|\\ 
&\leq \sup_{u,v \in B_d^{(1)}(0,r_1)} \left|\frac{1}{n} \sum_{i=1}^n \left\{(X_i^Tu) (X_i^Tv) - \mathrm{E}\left[(X_i^Tu) (X_i^Tv)\right] \right\}\right|.
\end{align}
Since $\frac{1}{n} \sum_{i=1}^n \left\{(X_i^Tu) (X_i^Tv) - \mathrm{E}\left[(X_i^Tu) (X_i^Tv)\right] \right\}$ is a biconvex function with respect to $u,v$, we know from Lemma~\ref{lem:max_biconvex} that the supremum is attained at the vertex set 
$$\left\{x\in \mathbb{R}^d: \norm{x}_0=1, \norm{x}_1=\norm{x}_2=r_1 \right\} = r_1 \cdot \left\{e_1,...,e_d\right\}$$ 
of the $L_1$-ball $B_d^{(1)}(0,r_1)$, where $e_1,...,e_d$ are the standard basis vectors in $\mathbb{R}^d$. We further upper bound \eqref{L1_ball_term1} as:  
\begin{align*}
&\sup_{v\in  B_d^{(1)}(0,r_1)} \left|\frac{1}{n} \sum_{i=1}^n \left\{(X_i^Tv)^2 - \mathrm{E}\left[(X_i^Tv)^2\right] \right\}\right| \\
&\leq r_1^2 \sup_{u,v \in \{e_1,...,e_d\}} \left|\frac{1}{n} \sum_{i=1}^n \left\{(X_i^Tu) (X_i^Tv) - \mathrm{E}\left[(X_i^Tu) (X_i^Tv)\right] \right\}\right|\\
&= r_1^2 \max_{1\leq j, k \leq d} \left|\frac{1}{n} \sum_{i=1}^n \left[ X_{ij}X_{ik} - \mathrm{E}\left(X_{ij}X_{ik}\right)\right]\right|.
\end{align*}
Under Assumption~\ref{assump:sub_gaussian}, we note that $X_{ij}X_{ik} - \mathrm{E}\left(X_{ij}X_{ik}\right), i=1,...,n$ are independent, centered, and sub-exponential random variables for all $j,k=1,...,d$, and
\begin{align*}
\max_{1\leq j, k \leq d}\norm{X_{ij}X_{ik} - \mathrm{E}\left(X_{ij}X_{ik}\right)}_{\psi_1} &\leq \max_{1\leq j, k \leq d}\left[\norm{X_{ij} X_{ik}}_{\psi_1} + \norm{\mathrm{E}\left(X_{ij} X_{ik}\right)}_{\psi_1}\right] \\
&\lesssim \max_{1\leq j, k \leq d}\norm{X_{ij}}_{\psi_2} \norm{X_{ik}}_{\psi_2}\\
&\lesssim \phi_1.
\end{align*}
Thus, by Bernstein's inequality (Corollary 2.8.3 in \citealt{vershynin2018high}) and the union bound, we obtain that there exists an absolute constant $A_0>0$ such that for any $t>0$,
\begin{align*}
\mathrm{P}\left(\max_{1\leq j, k \leq d} \left|\frac{1}{n} \sum_{i=1}^n \left[ X_{ij}X_{ik} - \mathrm{E}\left(X_{ij}X_{ik}\right)\right]\right| > t\right) &\leq 2d^2 \exp\left(-nA_0 \min\left\{\frac{t^2}{\phi_1^2}, \frac{t}{\phi_1}\right\}\right).
\end{align*}
Hence, for any $\delta \in (0,1)$, we have that with probability at least $1-\delta$,
$$\max_{1\leq j, k \leq d} \left|\frac{1}{n} \sum_{i=1}^n \left[ X_{ij}X_{ik} - \mathrm{E}\left(X_{ij}X_{ik}\right)\right]\right| \lesssim \phi_1 \left(\sqrt{\frac{\log(d/\delta)}{n}} + \frac{\log(d/\delta)}{n}\right).$$
Plugging the bounds for \textbf{Term I} and \textbf{Term II} into \eqref{cone_union_ball1}, we conclude that with probability at least $1-\delta$,
\begin{align*}
&\sup_{v\in \left[\mathcal{C}_1(S,\vartheta) \bigcup B_d^{(1)}(0,r_1) \right] \bigcap B_d(0,r_0)} \left|\frac{1}{n} \sum_{i=1}^n \left\{(X_i^Tv)^2 - \mathrm{E}\left[(X_i^Tv)^2\right] \right\}\right| \\
&\lesssim r_0^2 (2+\vartheta)^2 \phi_s\left[\sqrt{\frac{\log(1/\delta) + s\log(ed/s)}{n}} + \frac{\log(1/\delta)  + s\log(ed/s)}{n} \right] + r_1^2\phi_1 \left(\sqrt{\frac{\log(d/\delta)}{n}} + \frac{\log(d/\delta)}{n}\right).
\end{align*}
In particular, when $d\geq s+2$ and $r_0 \geq r_1$, we have that $\log\left(\frac{1}{\delta}\right) + s\log\left(\frac{ed}{s}\right) \geq \log\left(\frac{d}{\delta}\right)$ and therefore, with probability at least $1-\delta$,
\begin{align*}
&\sup_{v\in \left[\mathcal{C}_1(S,\vartheta) \bigcup B_d^{(1)}(0,r_1) \right] \bigcap B_d(0,r_0)} \left|\frac{1}{n} \sum_{i=1}^n \left\{(X_i^Tv)^2 - \mathrm{E}\left[(X_i^Tv)^2\right] \right\}\right| \\
&\lesssim r_0^2 (2+\vartheta)^2 \phi_s\left[\sqrt{\frac{\log(1/\delta) + s\log(ed/s)}{n}} + \frac{\log(1/\delta)  + s\log(ed/s)}{n} \right].
\end{align*}
Similarly, we notice that
\begin{align}
\label{cone_union_ball2}
\begin{split}
&\sup_{\substack{u \in \mathcal{C}_2(\tilde{S},\vartheta) \cap B_d(0,1)\\v\in \left[\mathcal{C}_1(S,\vartheta) \bigcup B_d^{(1)}(0,r_1) \right] \bigcap B_d(0,r_0)}} \frac{1}{n} \sum_{i=1}^n \left|X_i^Tu\right| \left|X_i^Tv\right| \\
&\leq \underbrace{\sup_{\substack{u \in \mathcal{C}_2(\tilde{S},\vartheta) \cap B_d(0,1)\\v\in \mathcal{C}_1(S,\vartheta) \cap B_d(0,r_0)}} \frac{1}{n} \sum_{i=1}^n \left|X_i^Tu\right| \left|X_i^Tv\right|}_{\textbf{Term III}} + \underbrace{\sup_{\substack{u \in \mathcal{C}_2(\tilde{S},\vartheta) \cap B_d(0,1)\\v\in B_d^{(1)}(0,r_1) \cap B_d(0,r_0)}} \frac{1}{n} \sum_{i=1}^n \left|X_i^Tu\right| \left|X_i^Tv\right|}_{\textbf{Term IV}}.
\end{split}
\end{align}

$\bullet$ {\bf Term III:} We know from Lemma~\ref{lem:gram_mat_conc} that 
\begin{align*}
&\sup_{\substack{u \in \mathcal{C}_2(\tilde{S},\vartheta) \cap B_d(0,1)\\v\in \mathcal{C}_1(S,\vartheta) \cap B_d(0,r_0)}} \frac{1}{n} \sum_{i=1}^n \left|X_i^Tu\right| \left|X_i^Tv\right| \\
&\leq r_0 \sup_{\substack{u \in \mathcal{C}_2(\tilde{S},\vartheta) \cap B_d(0,1)\\v\in \mathcal{C}_1(S,\vartheta) \cap B_d(0,1)}} \left|\frac{1}{n} \sum_{i=1}^n \left[\left|X_i^Tu\right| \left|X_i^Tv\right| - \mathrm{E}\left(|X_i^Tu| |X_i^Tv|\right)\right]\right| + r_0 \sup_{\substack{u \in \mathcal{C}_2(\tilde{S},\vartheta) \cap B_d(0,1)\\v\in \mathcal{C}_1(S,\vartheta) \cap B_d(0,1)}} \mathrm{E}\left(|X_i^Tu| |X_i^Tv|\right)\\
&\lesssim r_0 (2+\vartheta)^2 \sqrt{\phi_s \cdot \phi_{\tilde{s}}} \left[\sqrt{\frac{\log(1/\delta) + s\log(ed/s)}{n}} + \frac{\log(1/\delta)  + s\log(ed/s)}{n} + 1 \right].
\end{align*}

$\bullet$ {\bf Term IV:} Note that 
\begin{align*}
\sup_{\substack{u \in \mathcal{C}_2(\tilde{S},\vartheta) \cap B_d(0,1)\\v\in B_d^{(1)}(0,r_1) \cap B_d(0,r_0)}} \frac{1}{n} \sum_{i=1}^n \left|X_i^Tu\right| \left|X_i^Tv\right| &\leq \sup_{\substack{u \in \mathcal{C}_2(\tilde{S},\vartheta) \cap B_d(0,1)\\v\in B_d^{(1)}(0,r_1)}} \frac{1}{n} \sum_{i=1}^n \left|X_i^Tu\right| \left|X_i^Tv\right|.
\end{align*}
Since $\frac{1}{n} \sum_{i=1}^n \left|X_i^Tu\right| \left|X_i^Tv\right|$ is a biconvex function with respect to $u,v$, we know from Lemma~\ref{lem:max_biconvex} that the supremum is attain at the vertex set 
$$\left\{x\in \mathbb{R}^d: \norm{x}_0=1, \norm{x}_1=\norm{x}_2=r_1 \right\} = r_1 \cdot \left\{e_1,...,e_d\right\}$$ 
of the $L_1$-ball $B_d^{(1)}(0,r_1)$, where $e_1,...,e_d$ are the standard basis vectors in $\mathbb{R}^d$. Hence,
\begin{align*}
\sup_{\substack{u \in \mathcal{C}_2(\tilde{S},\vartheta) \cap B_d(0,1)\\v\in B_d^{(1)}(0,r_1) \cap B_d(0,r_0)}} \frac{1}{n} \sum_{i=1}^n \left|X_i^Tu\right| \left|X_i^Tv\right| &\leq \sup_{\substack{u \in \mathcal{C}_2(\tilde{S},\vartheta) \cap B_d(0,1)\\v\in B_d^{(1)}(0,r_1)}} \frac{1}{n} \sum_{i=1}^n \left|X_i^Tu\right| \left|X_i^Tv\right| \\
&\leq r_1 \sup_{\substack{u \in \mathcal{C}_2(\tilde{S},\vartheta) \cap B_d(0,1)\\v\in \{e_1,...,e_d\}}} \frac{1}{n} \sum_{i=1}^n \left|X_i^Tu\right| \left|X_i^Tv\right| \\
&= r_1 \sup_{u \in \mathcal{C}_2(\tilde{S},\vartheta) \cap B_d(0,1)} \max_{1\leq k \leq d} \frac{1}{n} \sum_{i=1}^n \left[|X_i^Tu| |X_{ik}| - \mathrm{E}\left(|X_i^Tu| |X_{ik}|\right) \right] \\
&\quad + r_1 \sup_{u \in \mathcal{C}_2(\tilde{S},\vartheta) \cap B_d(0,1)} \max_{1\leq k \leq d} \mathrm{E}\left(|X_i^Tu| |X_{ik}|\right).
\end{align*}
Again, under Assumption~\ref{assump:sub_gaussian}, $|X_i^Tu| |X_{ik}| - \mathrm{E}\left(|X_i^Tu| |X_{ik}|\right),i=1,...,n$ are independent, centered, and sub-exponential random variables for all $k=1,...,d$, and
\begin{align*}
\max_{1\leq k \leq d}\norm{|X_i^Tu| |X_{ik}| - \mathrm{E}\left(|X_i^Tu| |X_{ik}|\right)}_{\psi_1} &\leq \max_{1\leq k \leq d} \left[\norm{(X_i^Tu) X_{ik}}_{\psi_1} + \norm{\mathrm{E}\left[(X_i^Tu) X_{ik}\right]}_{\psi_1} \right]\\
&\lesssim \max_{1\leq k \leq d} \norm{X_i^Tu}_{\psi_2} \norm{X_{ik}}_{\psi_2} \\
&\lesssim \sqrt{\phi_{\tilde{s}} \cdot \phi_1}.
\end{align*}
Thus, by Bernstein's inequality, Lemma~\ref{lem:size_dom_cone}, and the union bound, we obtain that there exists an absolute constant $A_0>0$ such that for any $t>0$,
\begin{align*}
&\mathrm{P}\left(\sup_{u \in \mathcal{C}_2(\tilde{S},\vartheta) \cap B_d(0,1)} \max_{1\leq k \leq d} \left|\frac{1}{n} \sum_{i=1}^n \left[|X_i^Tu| |X_{ik}| - \mathrm{E}\left(|X_i^Tu| |X_{ik}|\right) \right]\right| > t\right) \\
&\leq 3d\left(\frac{5ed}{s}\right)^s \exp\left(-nA_0 \min\left\{\frac{t^2}{\phi_{\tilde{s}} \cdot \phi_1}, \frac{t}{\sqrt{\phi_{\tilde{s}} \cdot \phi_1}}\right\}\right).
\end{align*}
Hence, for any $\delta \in (0,1)$, we have that with probability at least $1-\delta$, 
\begin{align*}
&\sup_{u \in \mathcal{C}_2(\tilde{S},\vartheta) \cap B_d(0,1)} \max_{1\leq k \leq d} \left|\frac{1}{n} \sum_{i=1}^n \left[|X_i^Tu| |X_{ik}| - \mathrm{E}\left(|X_i^Tu| |X_{ik}|\right) \right]\right| \\
&\lesssim \sqrt{\phi_{\tilde{s}} \cdot \phi_1} \left[\sqrt{\frac{\log(1/\delta) + s\log(ed/s)}{n}} + \frac{\log(1/\delta)  + s\log(ed/s)}{n} \right].
\end{align*}
In addition, by Lemma~\ref{lem:size_dom_cone}, we also have that
\begin{align*}
\sup_{u \in \mathcal{C}_2(\tilde{S},\vartheta) \cap B_d(0,1)} \max_{1\leq k \leq d} \mathrm{E}\left(|X_i^Tu| |X_{ik}|\right) \lesssim \sqrt{\phi_{\tilde{s}} \cdot \phi_1}.
\end{align*}
Therefore, 
\begin{align*}
&\sup_{\substack{u \in \mathcal{C}_2(\tilde{S},\vartheta) \cap B_d(0,1)\\v\in B_d^{(1)}(0,r_1) \cap B_d(0,r_0)}} \frac{1}{n} \sum_{i=1}^n \left|X_i^Tu \right| \left|X_i^Tv \right| \\
&\lesssim r_1 \sqrt{\phi_{\tilde{s}} \cdot \phi_1} \left[\sqrt{\frac{\log(1/\delta) + s\log(ed/s)}{n}} + \frac{\log(1/\delta)  + s\log(ed/s)}{n} + 1\right].
\end{align*}
Finally, plugging the bounds for \textbf{Term III} and \textbf{Term IV} into \eqref{cone_union_ball2}, we conclude that with probability at least $1-\delta$, 
\begin{align*}
&\sup_{\substack{u \in \mathcal{C}_2(\tilde{S},\vartheta) \cap B_d(0,1)\\v\in \left[\mathcal{C}_1(S,\vartheta) \bigcup B_d^{(1)}(0,r_1) \right] \bigcap B_d(0,r_0)}} \frac{1}{n} \sum_{i=1}^n \left|X_i^Tu \right| \left|X_i^Tv \right| \\
&\lesssim \left[r_0(2+\vartheta)^2+r_1\right] \sqrt{\phi_{\tilde{s}} \cdot \phi_1} \left[\sqrt{\frac{\log(1/\delta) + s\log(ed/s)}{n}} + \frac{\log(1/\delta)  + s\log(ed/s)}{n} + 1\right].
\end{align*}
The results follow.
\end{proof}

\vspace{3mm}

We know from (b) of \autoref{thm:dual_consist2} that
\begin{align*}
	\frac{\gamma}{a_0n} &\asymp r_{\gamma}(n,\delta) = \frac{\sqrt{\phi_1}}{\kappa_R}\norm{x}_2\sqrt{\frac{\log(d/\delta)}{n}} + \frac{\sqrt{\phi_1 \phi_{s_{\ell}(x)}}\norm{x}_2}{\kappa_R^2} \cdot r_{\pi}
\end{align*}
will satisfy the bound in \eqref{res_cross_poly_cond} with probability at least $1-\delta$. The following lemma suggests a (probabilistic) lower bound for $\frac{\nu}{a_0n}$ in \eqref{res_cross_poly_cond} using the similar arguments. It is worth noting that the lower bound for $\frac{\gamma}{a_0n}$ can converge to 0 as $n\to \infty$ under certain conditions, while the lower bound for $\frac{\nu}{a_0n}$ is asymptotically non-vanishing.  

\begin{lemma}
	\label{lem:rel_poly_cost}
	Let $\delta\in (0,1)$ and $x\in \mathbb{R}^d$ be a fixed covariate vector. Suppose that Assumption~\ref{assump:sub_gaussian} holds. Then, with probability at least $1-\delta$,
	\begin{align*}
		&\sup_{v\in \mathcal{C}_2\left(S_{\ell}(x), \vartheta_{\ell}\right)\cap B_d(0,1)} \left|\frac{1}{n} \sum_{i=1}^n \pi_i\left[X_i^Tv\right]^2 \right| \\
		&\lesssim (2+\vartheta_{\ell})^2 \phi_{s_{\ell}(x)}\left[1+\sqrt{\frac{\log(\frac{1}{\delta})+s_{\ell}(x)\log\left(\frac{ed}{s_{\ell}(x)}\right)}{n}} + \frac{\log(\frac{1}{\delta})+s_{\ell}(x)\log\left(\frac{ed}{s_{\ell}(x)}\right)}{n}\right].
	\end{align*}
	If, in addition, Assumption~\ref{assump:prop_consistent_rate} holds, then with probability at least $1-\delta$,
	\begin{align*}
		&\sup_{v\in \mathcal{C}_2\left(S_{\ell}(x), \vartheta_{\ell}\right)\cap B_d(0,1)} \left|\frac{1}{n} \sum_{i=1}^n \hat{\pi}_i\left[X_i^Tv\right]^2 \right| \\
		&\lesssim \left[1+r_{\pi}(n,\delta)\right] (2+\vartheta_{\ell})^2 \phi_{s_{\ell}(x)}\left[1+\sqrt{\frac{\log(\frac{1}{\delta})+s_{\ell}(x)\log\left(\frac{ed}{s_{\ell}(x)}\right)}{n}} + \frac{\log(\frac{1}{\delta})+s_{\ell}(x)\log\left(\frac{ed}{s_{\ell}(x)}\right)}{n}\right].
	\end{align*}
\end{lemma}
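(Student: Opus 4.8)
The plan is to mimic the proof of Lemma~\ref{lem:gram_mat_conc}, decomposing the weighted empirical quadratic form into a deterministic population part and a mean-zero stochastic fluctuation, while exploiting that $0\le\pi_i\le1$ so that weighting by $\pi_i$ can only shrink quantities. Writing $\sup_v$ for $\sup_{v\in\mathcal{C}_2(S_{\ell}(x),\vartheta_{\ell})\cap B_d(0,1)}$, the first step is the identity, valid for each such $v$,
\[
\frac{1}{n}\sum_{i=1}^n \pi_i\bigl[X_i^Tv\bigr]^2 = \frac{1}{n}\sum_{i=1}^n\Bigl\{\pi_i(X_i^Tv)^2 - \mathrm{E}\bigl[\pi(X)(X^Tv)^2\bigr]\Bigr\} + \mathrm{E}\bigl[\pi(X)(X^Tv)^2\bigr].
\]
For the deterministic term, since $\pi(X)\in[0,1]$ we have $\mathrm{E}[\pi(X)(X^Tv)^2]\le\mathrm{E}[(X^Tv)^2]=v^T\mathrm{E}(XX^T)v$. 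Applying Lemma~\ref{lem:size_dom_cone} with $q=2$ (so that $\mathcal{C}_2(S_{\ell}(x),\vartheta_{\ell})\cap B_d(0,1)\subset 2(2+\vartheta_{\ell})\,\mathrm{conv}(M_d)$ for a discrete set $M_d\subset B_d(0,1)$ of $s_{\ell}(x)$-sparse vectors, which exists since $d\ge s_{\ell}(x)+2$) together with Lemma~\ref{lem:max_biconvex} applied to the bilinear form $(u,v)\mapsto u^T\mathrm{E}(XX^T)v$ to push the supremum onto the vertices in $M_d$, I would bound $\sup_{v}\mathrm{E}[(X^Tv)^2]\le 4(2+\vartheta_{\ell})^2\sup_{u\in M_d}\mathrm{E}[(X^Tu)^2]\le 4(2+\vartheta_{\ell})^2\phi_{s_{\ell}(x)}$ by Assumption~\ref{assump:sub_gaussian} and the definition of the $s$-sparse maximum eigenvalue. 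This is the source of the asymptotically non-vanishing term $(2+\vartheta_{\ell})^2\phi_{s_{\ell}(x)}$ (the leading ``$1$'' inside the bracket).

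Next I would control the stochastic fluctuation uniformly over the cone. Reducing again via Lemma~\ref{lem:size_dom_cone} and Lemma~\ref{lem:max_biconvex} to a finite supremum over pairs $(u,v)$ of $s_{\ell}(x)$-sparse vectors in $2(2+\vartheta_{\ell})M_d$, I would note that because $0\le\pi_i\le1$ the summands $\pi_i(X_i^Tu)(X_i^Tv)-\mathrm{E}[\pi_i(X_i^Tu)(X_i^Tv)]$ are i.i.d., centered, and sub-exponential with $\psi_1$-norm at most $\|(X_i^Tu)(X_i^Tv)\|_{\psi_1}\lesssim\|X_i^Tu\|_{\psi_2}\|X_i^Tv\|_{\psi_2}\lesssim(2+\vartheta_{\ell})^2\phi_{s_{\ell}(x)}$, using Lemma~\ref{lem:Orlicz_prod} and Assumption~\ref{assump:sub_gaussian} exactly as in the proof of Lemma~\ref{lem:gram_mat_conc}. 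A union bound over the roughly $(5ed/s_{\ell}(x))^{2s_{\ell}(x)}$ pairs combined with Bernstein's inequality (Corollary~2.8.3 in \citealt{vershynin2018high}) then yields, with probability at least $1-\delta$,
\[
\sup_{v}\Bigl|\frac{1}{n}\sum_{i=1}^n\bigl\{\pi_i(X_i^Tv)^2 - \mathrm{E}[\pi_i(X_i^Tv)^2]\bigr\}\Bigr|\lesssim (2+\vartheta_{\ell})^2\phi_{s_{\ell}(x)}\Biggl[\sqrt{\frac{\log(1/\delta)+s_{\ell}(x)\log(ed/s_{\ell}(x))}{n}} + \frac{\log(1/\delta)+s_{\ell}(x)\log(ed/s_{\ell}(x))}{n}\Biggr].
\]
Adding this to the deterministic bound from the first paragraph gives the first displayed inequality.

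For the estimated-propensity version, I would condition on the event $\{\max_{1\le i\le n}|\hat\pi_i-\pi_i|\le r_{\pi}(n,\delta)\}$, which has probability at least $1-\delta$ by Assumption~\ref{assump:prop_consistent_rate}; on this event $|\hat\pi_i|\le 1+r_{\pi}(n,\delta)$, so $\bigl|\frac{1}{n}\sum_i\hat\pi_i(X_i^Tv)^2\bigr|\le(1+r_{\pi}(n,\delta))\,\frac{1}{n}\sum_i(X_i^Tv)^2$, and it remains to bound $\sup_v\frac{1}{n}\sum_i(X_i^Tv)^2$. This is the $\pi_i\equiv1$ special case already proved (equivalently, Lemma~\ref{lem:gram_mat_conc} with trivial weights), giving $\sup_v\frac{1}{n}\sum_i(X_i^Tv)^2\lesssim(2+\vartheta_{\ell})^2\phi_{s_{\ell}(x)}[1+\sqrt{\cdots}+\cdots]$ on an event of probability $1-\delta$; intersecting the two events and absorbing constants yields the second claim. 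I do not expect a genuine obstacle here: the only delicate points are (i) handling the \emph{non}-sparse cone $\mathcal{C}_2(S_{\ell}(x),\vartheta_{\ell})$ via the covering argument of Lemma~\ref{lem:size_dom_cone}, and (ii) isolating the deterministic population contribution, which — in contrast to the centered quantities treated in Lemma~\ref{lem:gram_mat_conc} — is exactly what forces the asymptotically non-vanishing leading term in the bound.
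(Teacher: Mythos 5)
Your proposal is correct and follows essentially the same route as the paper's proof: reduce the supremum over the cone $\mathcal{C}_2(S_{\ell}(x),\vartheta_{\ell})\cap B_d(0,1)$ via Lemma~\ref{lem:size_dom_cone} and Lemma~\ref{lem:max_biconvex}, split off the non-vanishing population term bounded by $(2+\vartheta_{\ell})^2\phi_{s_{\ell}(x)}$, control the centered fluctuation by the Bernstein/union-bound argument of Lemma~\ref{lem:gram_mat_conc}, and handle $\hat\pi_i$ on the event from Assumption~\ref{assump:prop_consistent_rate} by the factor $1+r_{\pi}(n,\delta)$. The only (immaterial) difference is that you center the $\pi_i$-weighted sum around its own expectation, whereas the paper first uses $0\le\pi_i\le 1$ and $|\hat\pi_i-\pi_i|\le r_\pi$ to pass to the unweighted sum and then centers; both give the stated bound.
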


\begin{proof}[Proof of Lemma~\ref{lem:rel_poly_cost}]
	We only prove the second statement, as the proof of the first statement is already contained in the following proof. By Assumption~\ref{assump:prop_consistent_rate}, we know that with probability $1-\delta$,
	\begin{align*}
		&\sup_{v\in \mathcal{C}_2\left(S_{\ell}(x), \vartheta_{\ell}\right)\cap B_d(0,1)} \left|\frac{1}{n} \sum_{i=1}^n \hat{\pi}_i\left[X_i^Tv\right]^2 \right| \\
		&\leq \sup_{v\in \mathcal{C}_2\left(S_{\ell}(x), \vartheta_{\ell}\right)\cap B_d(0,1)} \left|\frac{1}{n} \sum_{i=1}^n \pi_i\left[X_i^Tv\right]^2 \right| + \sup_{v\in \mathcal{C}_2\left(S_{\ell}(x), \vartheta_{\ell}\right)\cap B_d(0,1)} \left|\frac{1}{n} \sum_{i=1}^n \left(\hat{\pi}_i-\pi_i\right)\left[X_i^Tv\right]^2 \right| \\
		&\leq \left[1+r_{\pi}(n,\delta)\right] \sup_{v\in \mathcal{C}_2\left(S_{\ell}(x), \vartheta_{\ell}\right)\cap B_d(0,1)} \left|\frac{1}{n} \sum_{i=1}^n \left(X_i^Tv\right)^2 \right|\\
		&\leq \left[1+r_{\pi}(n,\delta)\right] \sup_{v\in \mathcal{C}_2\left(S_{\ell}(x), \vartheta_{\ell}\right)\cap B_d(0,1)}\left[\left|\frac{1}{n}\sum_{i=1}^n \left\{\left(X_i^Tv\right)^2 - \mathrm{E}\left[ \left(X_i^Tv\right)^2\right]\right\} \right| + \left|v^T \mathrm{E}\left[ X_1X_1^T\right] v\right| \right].
	\end{align*}
	By Lemma~\ref{lem:gram_mat_conc}, we know that 
	\begin{align*}
		&\sup_{v\in \mathcal{C}_2\left(S_{\ell}(x), \vartheta_{\ell}\right)\cap B_d(0,1)} \left|\frac{1}{n}\sum_{i=1}^n \left\{\left(X_i^Tv\right)^2 - \mathrm{E}\left[\left(X_i^Tv\right)^2\right]\right\} \right| \\
		&\lesssim (2+\vartheta_{\ell})^2 \phi_{s_{\ell}(x)}\left[\sqrt{\frac{\log(\frac{1}{\delta})+s_{\ell}(x)\log\left(\frac{ed}{s_{\ell}(x)}\right)}{n}} + \frac{\log(\frac{1}{\delta})+s_{\ell}(x)\log\left(\frac{ed}{s_{\ell}(x)}\right)}{n}\right]
	\end{align*}
	with probability at least $1-\delta$. In addition, we know from Lemma~\ref{lem:size_dom_cone} that there exists a discrete set $M_{d,\ell} \subset B_d(0,1)$ such that $\mathcal{C}_2\left(S_{\ell}(x),\vartheta_{\ell}\right) \cap B_d(0,1) \subset 2(2+\vartheta_{\ell}) \cdot \mathrm{conv}(M_{d,\ell})$ with $\norm{v}_0\leq s_{\ell}(x)$ for all $v\in M_{d,\ell}$. By Lemma~\ref{lem:max_biconvex}, we obtain that
	\begin{align*}
		\sup_{v\in \mathcal{C}_2\left(S_{\ell}(x), \vartheta_{\ell}\right)\cap B_d(0,1)} \left|v^T \mathrm{E}\left[X_1X_1^T\right] v\right| &\leq \sup_{v \in 2(2+\vartheta_{\ell}) \mathrm{conv}(M_{d,\ell})} \left|v^T \mathrm{E}\left[X_1X_1^T\right] v\right|\\
		&= \sup_{v \in 2(2+\vartheta_{\ell}) \cdot M_{d,\ell}} \left|v^T \mathrm{E}\left[X_1X_1^T\right] v\right|\\
		&\leq 4(2+\vartheta_{\ell})^2 \phi_{s_{\ell}(x)}.
	\end{align*}
	Therefore, we conclude that
	\begin{align*}
		&\sup_{v\in \mathcal{C}_2\left(S_{\ell}(x), \vartheta_{\ell}\right)\cap B_d(0,1)} \left|\frac{1}{n} \sum_{i=1}^n \hat{\pi}_i\left[X_i^Tv\right]^2 \right| \\
		&\lesssim \left[1+r_{\pi}(n,\delta)\right] (2+\vartheta_{\ell})^2 \phi_{s_{\ell}(x)}\left[\sqrt{\frac{\log(\frac{1}{\delta})+s_{\ell}(x)\log\left(\frac{ed}{s_{\ell}(x)}\right)}{n}} + \frac{\log(\frac{1}{\delta})+s_{\ell}(x)\log\left(\frac{ed}{s_{\ell}(x)}\right)}{n}\right]\\
		&\quad + 4(2+\vartheta_{\ell})^2 \phi_{s_{\ell}(x)}\\
		&\lesssim \left[1+r_{\pi}(n,\delta)\right] (2+\vartheta_{\ell})^2 \phi_{s_{\ell}(x)}\left[1+\sqrt{\frac{\log(\frac{1}{\delta})+s_{\ell}(x)\log\left(\frac{ed}{s_{\ell}(x)}\right)}{n}} + \frac{\log(\frac{1}{\delta})+s_{\ell}(x)\log\left(\frac{ed}{s_{\ell}(x)}\right)}{n}\right]
	\end{align*}
	with probability at least $1-\delta$. The result follows.
\end{proof}

\vspace{3mm}

\begin{lemma}
\label{lem:gamma_rate}
Let $\delta \in (0,1)$ and $x$ be a fixed covariate vector. Suppose that Assumptions~\ref{assump:basic}, \ref{assump:sub_gaussian} and \ref{assump:gram_lower_bnd} hold. Then, with probability at least $1-\delta$, 
$$\norm{\frac{1}{n}\sum_{i=1}^n \tilde{Z}_i X_iX_i^T \ell_0(x) + 2x}_{\infty} \lesssim \frac{\sqrt{\phi_1}}{\kappa_R}\norm{x}_2 \left[\sqrt{\frac{\log(d/\delta)}{n}} + \frac{\log(d/\delta)}{n} \right],$$
where the above statement holds when $\tilde{Z}_i = \pi(X_i), i=1,...,n$ or $\tilde{Z}_i =R_i, i=1,...,n$. If, in addition, Assumption~\ref{assump:prop_consistent_rate} holds, then with probability at least $1-\delta$,
\begin{align*}
	\norm{\frac{1}{n}\sum_{i=1}^n \hat{\pi}_i X_i X_i^T \ell_0(x) + 2x}_{\infty} &\lesssim \frac{\sqrt{\phi_1}}{\kappa_R}\norm{x}_2 \left[\sqrt{\frac{\log(d/\delta)}{n}} + \frac{\log(d/\delta)}{n} \right]\\
	&\quad + \frac{\sqrt{\phi_1 \phi_{s_{\ell}(x)}}\norm{x}_2}{\kappa_R^2} \left[1 + \sqrt{\frac{\log(d/\delta)}{n}} + \frac{\log(d/\delta)}{n}\right] r_{\pi}(n,\delta).
\end{align*}
\end{lemma}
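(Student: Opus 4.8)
The plan is to read this as a coordinate-wise concentration statement. By the defining identity $\ell_0(x)=-2[\mathrm{E}(RXX^T)]^{-1}x$ in~\eqref{popul_dual_exact} and the MAR assumption (which gives $\mathrm{E}[RXX^T]=\mathrm{E}[\pi(X)XX^T]$), for both $\tilde Z=\pi(X)$ and $\tilde Z=R$ one has $\mathrm{E}[\tilde Z_iX_iX_i^T\ell_0(x)]=\mathrm{E}[\pi(X)XX^T]\ell_0(x)=-2x$. Hence $\frac1n\sum_i\tilde Z_iX_iX_i^T\ell_0(x)+2x$ is exactly the deviation of an i.i.d. sample mean from its expectation, whose $k$-th coordinate is $\frac1n\sum_i(\xi_{ik}-\mathrm{E}\xi_{ik})$ with $\xi_{ik}=\tilde Z_iX_{ik}(X_i^T\ell_0(x))$.

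First I would bound the Orlicz norm $K:=\max_i\|\xi_{ik}\|_{\psi_1}$ uniformly over $k$. Writing $\tilde Z_i=\sqrt{\tilde Z_i}\cdot\sqrt{\tilde Z_i}$ (legitimate since $\tilde Z_i\geq 0$) and invoking Lemma~\ref{lem:Orlicz_prod} gives $\|\xi_{ik}\|_{\psi_1}\leq\|\sqrt{\tilde Z_i}X_{ik}\|_{\psi_2}\,\|\sqrt{\tilde Z_i}X_i^T\ell_0(x)\|_{\psi_2}$. Since $\sqrt{\tilde Z_i}\leq 1$ and $X$ is sub-Gaussian (Assumption~\ref{assump:sub_gaussian}), $\|\sqrt{\tilde Z_i}X_{ik}\|_{\psi_2}\leq\|X_{ik}\|_{\psi_2}\lesssim\sqrt{\mathrm{E}(X_{ik}^2)}\leq\sqrt{\phi_1}$; and, by the same type of sub-Gaussian argument used in the IPW part of the proof of Lemma~\ref{lem:gram_mat_conc}, $\|\sqrt{\tilde Z_i}X_i^T\ell_0(x)\|_{\psi_2}^2\lesssim\mathrm{E}[\pi(X)(X^T\ell_0(x))^2]=\ell_0(x)^T\mathrm{E}[\pi(X)XX^T]\ell_0(x)=-2\ell_0(x)^Tx=4\,x^T[\mathrm{E}(RXX^T)]^{-1}x\leq 4\|x\|_2^2/\kappa_R^2$ by Assumption~\ref{assump:gram_lower_bnd}. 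Thus $K\lesssim\frac{\sqrt{\phi_1}}{\kappa_R}\|x\|_2$. A union bound over the $d$ coordinates together with Bernstein's inequality for independent centered sub-exponential variables (Corollary 2.8.3 in Vershynin) then gives, with probability at least $1-\delta$, $\|\cdot\|_\infty\lesssim K\big[\sqrt{\log(d/\delta)/n}+\log(d/\delta)/n\big]$, which is the first claim after substituting the bound on $K$.

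For the estimated-propensity version I would split $\frac1n\sum\hat\pi_iX_iX_i^T\ell_0(x)+2x=\big(\frac1n\sum\pi_iX_iX_i^T\ell_0(x)+2x\big)+\frac1n\sum(\hat\pi_i-\pi_i)X_iX_i^T\ell_0(x)$; the first summand is handled by the first part. For the second, Assumption~\ref{assump:prop_consistent_rate} gives $\max_i|\hat\pi_i-\pi_i|\leq r_\pi(n,\delta)$ on an event of probability at least $1-\delta$, so its $k$-th coordinate is at most $r_\pi\cdot\frac1n\sum_i|X_{ik}|\,|X_i^T\ell_0(x)|\leq r_\pi\big(\frac1n\sum_iX_{ik}^2\big)^{1/2}\big(\frac1n\sum_i(X_i^T\ell_0(x))^2\big)^{1/2}$ by Cauchy--Schwarz. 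The factor $\frac1n\sum_iX_{ik}^2$ is controlled uniformly in $k$ by $\phi_1\big(1+\sqrt{\log(d/\delta)/n}+\log(d/\delta)/n\big)$ (Bernstein plus a union bound, using $\|X_{ik}^2\|_{\psi_1}\lesssim\phi_1$). For $\frac1n\sum_i(X_i^T\ell_0(x))^2$, write $\ell_0(x)=\|\ell_0(x)\|_2\,v$ with $v\in\mathbb{S}^{d-1}$; by Lemma~\ref{lem:dual_approx_cone}, $\ell_0(x)\in\mathcal{C}_2(S_\ell(x),\vartheta_\ell)$ with $\vartheta_\ell=r_\ell/\sqrt{1-r_\ell^2}$, so $v$ lies in the cone-and-sphere over which Lemma~\ref{lem:gram_mat_conc} bounds the empirical second moment by $\phi_{s_\ell(x)}\big(1+\sqrt{\cdots}+\cdots\big)$; combining with $\|\ell_0(x)\|_2\leq 2\|x\|_2/\kappa_R^2$ yields $\big(\frac1n\sum_i(X_i^T\ell_0(x))^2\big)^{1/2}\lesssim\frac{\|x\|_2}{\kappa_R^2}\sqrt{\phi_{s_\ell(x)}}\big(1+\sqrt{\cdots}\big)$. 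Multiplying the three pieces produces the extra term $\frac{\sqrt{\phi_1\phi_{s_\ell(x)}}\|x\|_2}{\kappa_R^2}\big[1+\sqrt{\log(d/\delta)/n}+\log(d/\delta)/n\big]r_\pi$, as claimed.

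I expect the main obstacle to be the weighted sub-Gaussian bound $\|\sqrt{\tilde Z_i}X_i^T\ell_0(x)\|_{\psi_2}^2\lesssim\mathrm{E}[\pi(X)(X^T\ell_0(x))^2]$: this is the step where the $\tilde Z_i$-weighting must be absorbed without losing the crucial cancellation $\mathrm{E}[\pi(X)XX^T]\ell_0(x)=-2x$ (in particular, no positivity of $\pi$ is available), and it is the part most exposed to technical subtleties, being the exact analogue of the weighting manipulations in the proof of Lemma~\ref{lem:gram_mat_conc}. Everything else — the union-bound/Bernstein chaining and, in the $\hat\pi$ case, lining the empirical second moment up with the cone from Lemma~\ref{lem:dual_approx_cone} — is routine given the earlier results.
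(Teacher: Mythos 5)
Your proposal follows essentially the same route as the paper's proof: the same coordinate-wise zero-mean observation via $\mathrm{E}\left[\pi(X)XX^T\right]\ell_0(x)=-2x$, the same Orlicz-product bound (Lemma~\ref{lem:Orlicz_prod}) plus Bernstein's inequality and a union bound over the $d$ coordinates for the oracle term, and the same splitting off of $\frac{1}{n}\sum_{i=1}^n(\hat{\pi}_i-\pi_i)X_iX_i^T\ell_0(x)$, controlled by $r_{\pi}$ times an empirical second-moment bound obtained through the cone membership of $\ell_0(x)$ (Lemmas~\ref{lem:size_dom_cone}, \ref{lem:max_biconvex}, \ref{lem:gram_mat_conc}, \ref{lem:dual_approx_cone}). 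The only cosmetic differences are that you absorb the weight symmetrically as $\sqrt{\tilde{Z}_i}\cdot\sqrt{\tilde{Z}_i}$ and handle the estimation-error term by empirical Cauchy--Schwarz on $\frac{1}{n}\sum_i X_{ik}^2$ and $\frac{1}{n}\sum_i\left(X_i^T\ell_0(x)\right)^2$, while the paper keeps the weight on one factor and centers $\left|X_{ik}X_i^T\ell_0(x)\right|$ around its population mean (bounded by population Cauchy--Schwarz); both yield the identical rates, and the "weighted sub-Gaussian" step you flag as delicate is precisely the step the paper itself takes in its Term I and in the IPW part of Lemma~\ref{lem:gram_mat_conc}.
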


\begin{proof}[Proof of Lemma~\ref{lem:gamma_rate}]
We only prove the second statement, as the proof of the first statement is already contained in the following proof. Notice that 
\begin{align*}
	\norm{\frac{1}{n}\sum_{i=1}^n \hat{\pi}_i X_i X_i^T \ell_0(x) + 2x}_{\infty} &\leq \underbrace{\norm{\frac{1}{n}\sum_{i=1}^n \pi_i X_i X_i^T \ell_0(x) + 2x}_{\infty}}_{\textbf{Term I}} + \underbrace{\norm{\frac{1}{n}\sum_{i=1}^n \left(\hat{\pi}_i - \pi_i\right) X_i X_i^T \ell_0(x)}_{\infty}}_{\textbf{Term II}}.
\end{align*}

$\bullet$ \textbf{Term I:} Recall from \eqref{popul_dual_exact} that $\ell_0(x) := -2\left[\mathrm{E}\left(R XX^T\right)\right]^{-1} x = -2\left[\mathrm{E}\left(\pi(X) XX^T\right)\right]^{-1} x$ and thus,
\begin{align*}
	\norm{\frac{1}{n}\sum_{i=1}^n \pi(X_i) X_iX_i^T \ell_0(x) + 2x}_{\infty} &= \max_{1\leq k \leq d}\left|-\frac{2}{n}\sum_{i=1}^n \pi(X_i) X_{ik}X_i^T \mathrm{E}\left[\pi(X_i)X_iX_i^T\right]^{-1} x + 2x_k \right|,
\end{align*}
and all the random variables
$$2x_1 - \frac{2}{n}\sum_{i=1}^n \pi(X_i) \cdot X_{i1}X_i^T \mathrm{E}\left[\pi(X_i)X_iX_i^T\right]^{-1} x,\dots, 2x_d - \frac{2}{n}\sum_{i=1}^n \pi(X_i) \cdot X_{id}X_i^T \mathrm{E}\left[\pi(X_i)X_iX_i^T\right]^{-1} x$$
have zero means and sub-exponential tails (but neither identically distributed nor independent). By Assumption~\ref{assump:sub_gaussian} and Lemma~\ref{lem:Orlicz_prod}, we know that
\begin{align*}
	&\max_{1\leq k \leq d} \norm{x_k - \pi(X_i) \cdot X_{ik}X_i^T \mathrm{E}\left[\pi(X_i)X_iX_i^T\right]^{-1} x}_{\psi_1}\\
	&\lesssim \max_{1\leq k \leq d} \norm{\pi(X_i) \cdot X_{ik}X_i^T \mathrm{E}\left[\pi(X_i)X_iX_i^T\right]^{-1} x}_{\psi_1}\\
	&\lesssim \max_{1\leq k \leq d} \max_{1\leq i \leq n}\norm{X_{ik}}_{\psi_2} \norm{\pi(X_i)X_i^T \mathrm{E}\left[\pi(X_i)X_iX_i^T\right]^{-1} x}_{\psi_2}\\
	&\lesssim \frac{\sqrt{\phi_1}}{\kappa_R} \norm{x}_2
\end{align*}
for all $1\leq i \leq n$. Therefore, by the union bound and Bernstein's inequality (Corollary 2.8.3 in \citealt{vershynin2018high}), we know that for all $t>0$, there exists an absolute constant $A_1>0$ such that
\begin{align*}
	&\mathrm{P}\left(\norm{\frac{1}{n}\sum_{i=1}^n \pi(X_i) X_iX_i^T \ell_0(x) + 2x}_{\infty} > t \right) \\
	& \leq d \cdot \max_{1\leq k \leq d} \mathrm{P}\left(\left|\frac{2}{n}\sum_{i=1}^n \left[\pi(X_i) \cdot X_{ik}X_i^T \mathrm{E}\left[\pi(X_i)X_iX_i^T\right]^{-1} x - x_k\right] \right| >t \right)\\
	&\leq 2d \cdot \exp\left(-nA_1\min\left\{\frac{t\kappa_R}{\sqrt{\phi_1}\norm{x}_2},\, \frac{t^2\kappa_R^2}{\phi_1\norm{x}_2^2} \right\} \right).
\end{align*}
Hence, we conclude that with probability at least $1-\delta$,
\begin{align*}
	\norm{\frac{1}{n}\sum_{i=1}^n \pi(X_i) X_iX_i^T \ell_0(x) + 2x}_{\infty} &\lesssim \frac{\sqrt{\phi_1}}{\kappa_R}\norm{x}_2 \left[\sqrt{\frac{\log(d/\delta)}{n}} + \frac{\log(d/\delta)}{n} \right].
\end{align*}

$\bullet$ \textbf{Term II:} Note that under Assumption~\ref{assump:prop_consistent_rate},
\begin{align*}
	&\norm{\frac{1}{n}\sum_{i=1}^n \left(\hat{\pi}_i - \pi_i\right) X_i X_i^T \ell_0(x)}_{\infty} \\
	&= \max_{1\leq k \leq d} \left|\frac{1}{n}\sum_{i=1}^n \left(\hat{\pi}_i -\pi_i\right)X_{ik}X_i^T\ell_0(x) \right|\\
	&\lesssim r_{\pi}(n,\delta) \cdot \max_{1\leq k \leq d} \left(\frac{1}{n}\sum_{i=1}^n \left|X_{ik} X_i^T\ell_0(x) \right| \right)\\
	&\lesssim r_{\pi}(n,\delta) \left[\max_{1\leq k \leq d} \mathrm{E}\left|X_{ik}X_i^T\ell_0(x)\right| + \max_{1\leq k \leq d}\left|\frac{1}{n}\sum_{i=1}^n \Big(\left|X_{ik} X_i^T\ell_0(x) \right| -\mathrm{E}\left|X_{ik} X_i^T\ell_0(x) \right| \Big) \right|\right]
\end{align*}
with probability at least $1-\delta$. By Lemma~\ref{lem:dual_approx_cone}, we know that $\ell_0(x)\in \mathcal{C}_2(S_{\ell}(x),\vartheta_{\ell})$ with $\vartheta_{\ell}=\frac{r_{\ell}}{\sqrt{1-r_{\ell}^2}}$. Hence, by Cauchy-Schwarz inequality,
\begin{align*}
	\max_{1\leq k \leq d} \mathrm{E}\left|X_{ik}X_i^T\ell_0(x)\right| &\leq \max_{1\leq k \leq d} \sqrt{\mathrm{E} X_{ik}^2} \cdot \sqrt{\mathrm{E}\left[(X_i^T\ell_0(x))^2\right]} \\
	&\lesssim \sqrt{\phi_1} \cdot \sqrt{\sup_{v \in \mathcal{C}_2(S_{\ell}(x),\vartheta_{\ell})\cap \mathbb{S}^{d-1}} \mathrm{E}\left[(X_i^Tv)^2\right]} \cdot\norm{\ell_0(x)}_2\\
	&\stackrel{\text{(i)}}{\lesssim}\sqrt{\phi_1} \cdot \sqrt{\sup_{v \in 2(2+\vartheta_{\ell})\cdot \mathrm{conv}(M_{d,\ell})} \mathrm{E}\left[(X_i^Tv)^2\right]} \cdot \frac{\norm{x}_2}{\kappa_R^2}\\
	&\stackrel{\text{(ii)}}{\asymp} \sqrt{\phi_1} \cdot \sqrt{\sup_{v \in M_{d,\ell}} \mathrm{E}\left[(X_i^Tv)^2\right]} \cdot \frac{\norm{x}_2}{\kappa_R^2}\\
	&\lesssim \sqrt{\phi_1 \phi_{s_{\ell}(x)}} \cdot \frac{\norm{x}_2}{\kappa_R^2},
\end{align*}
where we argue by Lemma~\ref{lem:size_dom_cone} in the (asymptotic) inequality (i) that there exists a discrete set $M_{d,\ell} \subset B_d(0,1)$ with cardinality $|M_{d,\ell}| \leq \frac{3}{2} \left(\frac{5ed}{s_{\ell}(x)} \right)^{s_{\ell}(x)}$ such that $\mathcal{C}_1(S_{\ell}(x),3) \cap \mathbb{S}^{d-1} \subset 10 \cdot \mathrm{conv}(M_{d,\ell})$ and $\norm{u}_0 \leq s_{\ell}(x)$ for all $u\in M_{d,\ell}$. Also, in the asymptotic equality (ii), we leverage Lemma~\ref{lem:max_biconvex} to show that the supremum of a convex function on $\mathrm{conv}(M_{d,\ell})$ is attained at its vertices, which are contained in $M_{d,\ell}$.

Now, we also know that all the random variables 
$$\frac{1}{n}\sum_{i=1}^n \Big[\left|X_{i1} X_i^T\ell_0(x) \right| -\mathrm{E}\left|X_{i1} X_i^T\ell_0(x) \right| \Big],\,...,\, \frac{1}{n}\sum_{i=1}^n \Big[\left|X_{id} X_i^T\ell_0(x) \right| -\mathrm{E}\left|X_{id} X_i^T\ell_0(x) \right| \Big]$$
have zero means and sub-exponential tails (but neither identically distributed nor independent). By Assumption~\ref{assump:sub_gaussian} and Lemma~\ref{lem:Orlicz_prod}, we know that for all $1\leq i \leq n$,
\begin{align*}
	&\max_{1\leq k \leq d} \norm{\left|X_{ik} X_i^T\ell_0(x) \right| -\mathrm{E}\left|X_{ik} X_i^T\ell_0(x) \right|}_{\psi_1}\\
	&\lesssim \max_{1\leq k \leq d} \norm{X_{ik} X_i^T \mathrm{E}\left[\pi(X_i)X_iX_i^T\right]^{-1} x}_{\psi_1}\\
	&\lesssim \max_{1\leq k \leq d} \max_{1\leq i \leq n}\norm{X_{ik}}_{\psi_2} \norm{X_i^T \mathrm{E}\left[\pi(X_i)X_iX_i^T\right]^{-1} x}_{\psi_2}\\
	&\lesssim \frac{\sqrt{\phi_1 \phi_{s_{\ell}(x)}}}{\kappa_R^2} \norm{x}_2.
\end{align*}
Therefore, by the union bound and Bernstein's inequality (Corollary 2.8.3 in \citealt{vershynin2018high}), we know that there exists an absolute constant $A_2>0$ such that for all $t>0$, 
\begin{align*}
	&\mathrm{P}\left(\max_{1\leq k \leq d}\left|\frac{1}{n}\sum_{i=1}^n \Big(\left|X_{ik} X_i^T\ell_0(x) \right| -\mathrm{E}\left|X_{ik} X_i^T\ell_0(x) \right| \Big) \right| > t \right) \\
	& \leq d \cdot \max_{1\leq k \leq d} \mathrm{P}\left(\left|\frac{1}{n}\sum_{i=1}^n \Big(\left|X_{ik} X_i^T\ell_0(x) \right| -\mathrm{E}\left|X_{ik} X_i^T\ell_0(x) \right| \Big) \right| >t \right)\\
	&\leq 2d \cdot \exp\left(-nA_2\min\left\{\frac{t\kappa_R^2}{\sqrt{\phi_1 \phi_{s_{\ell}(x)}}\norm{x}_2},\, \frac{t^2\kappa_R^4}{\phi_1 \phi_{s_{\ell}(x)}\norm{x}_2^2} \right\} \right).
\end{align*}
Hence, we conclude that with probability at least $1-\delta$,
\begin{align*}
	&\max_{1\leq k \leq d}\left|\frac{1}{n}\sum_{i=1}^n \Big(\left|X_{ik} X_i^T\ell_0(x) \right| -\mathrm{E}\left|X_{ik} X_i^T\ell_0(x) \right| \Big) \right| \\
	&\lesssim \frac{\sqrt{\phi_1 \phi_{s_{\ell}(x)}}}{\kappa_R^2}\norm{x}_2 \left[\sqrt{\frac{\log(d/\delta)}{n}} + \frac{\log(d/\delta)}{n} \right]
\end{align*} 
Therefore, we conclude that with probability at least $1-\delta$,
\begin{align}
\label{dual_para_term_II}
\begin{split}
     \textbf{Term II} &= \norm{\frac{1}{n}\sum_{i=1}^n \left(\hat{\pi}_i - \pi_i\right) X_i X_i^T \ell_0(x)}_{\infty}\\
		&\lesssim \frac{\sqrt{\phi_1 \phi_{s_{\ell}(x)}}\norm{x}_2}{\kappa_R^2} \left[1 + \sqrt{\frac{\log(d/\delta)}{n}} + \frac{\log(d/\delta)}{n}\right] r_{\pi}(n,\delta).
\end{split}
\end{align}
Combining the bounds for \textbf{Term I} and \textbf{Term II} yields that with probability at least $1-\delta$,
\begin{align*}
	\norm{\frac{1}{n}\sum_{i=1}^n \hat{\pi}_i X_i X_i^T \ell_0(x) + 2x}_{\infty} &\lesssim \frac{\sqrt{\phi_1}}{\kappa_R}\norm{x}_2 \left[\sqrt{\frac{\log(d/\delta)}{n}} + \frac{\log(d/\delta)}{n} \right]\\
	&\quad + \frac{\sqrt{\phi_1 \phi_{s_{\ell}(x)}}\norm{x}_2}{\kappa_R^2} \left[1 + \sqrt{\frac{\log(d/\delta)}{n}} + \frac{\log(d/\delta)}{n}\right] r_{\pi}(n,\delta).
\end{align*}
The proof is completed.
\end{proof}

\vspace{3mm}

\begin{lemma}[Orlicz-norm of products of random variables]
	\label{lem:Orlicz_prod}
	Let $X_1,...,X_N$ be random variables with finite $\psi_{\alpha}$-Orlicz norm \eqref{Orlicz}. Then, for any $N\geq 1$, $\norm{\prod_{i=1}^N X_i}_{\psi_{\frac{\alpha}{N}}} \leq \prod_{i=1}^N \norm{X_i}_{\psi_{\alpha}}$.
\end{lemma}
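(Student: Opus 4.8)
The plan is to work directly from the defining infimum of the Orlicz norm in \eqref{Orlicz} and to verify that the candidate value $t:=\prod_{i=1}^N\norm{X_i}_{\psi_\alpha}$ meets the constraint $\mathrm{E}\bigl[\exp\bigl(|\prod_i X_i|^{\alpha/N}/t^{\alpha/N}\bigr)\bigr]\le 2$ that defines $\norm{\prod_i X_i}_{\psi_{\alpha/N}}$. First I would dispose of the degenerate case: if some $\norm{X_i}_{\psi_\alpha}=0$, then $X_i=0$ almost surely, so the product vanishes a.s. and both sides of the claimed inequality are $0$. Otherwise set $t_i:=\norm{X_i}_{\psi_\alpha}>0$. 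Letting $t\downarrow t_i$ along a sequence, the integrands $\exp(|X_i|^\alpha/t^\alpha)$ increase pointwise to $\exp(|X_i|^\alpha/t_i^\alpha)$, so monotone convergence gives $\mathrm{E}[\exp(|X_i|^\alpha/t_i^\alpha)]\le 2$ for each $i$. Note also that $\alpha\in(0,2]$ and $N\ge 1$ force $\alpha/N\in(0,2]$, so the left-hand functional is the one in \eqref{Orlicz}.

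The core is a two-step pointwise inequality. Writing $a_i:=|X_i|^\alpha/t_i^\alpha\ge 0$, I would first note the identity
\[
\frac{\bigl|\prod_{i=1}^N X_i\bigr|^{\alpha/N}}{\bigl(\prod_{i=1}^N t_i\bigr)^{\alpha/N}}=\prod_{i=1}^N\Bigl(\frac{|X_i|^\alpha}{t_i^\alpha}\Bigr)^{1/N}=\prod_{i=1}^N a_i^{1/N},
\]
and then apply the arithmetic--geometric mean inequality $\prod_{i=1}^N a_i^{1/N}\le \frac1N\sum_{i=1}^N a_i$. Exponentiating and rewriting the sum as a product yields the pointwise bound
\[
\exp\Bigl(\prod_{i=1}^N a_i^{1/N}\Bigr)\le \exp\Bigl(\tfrac1N\sum_{i=1}^N a_i\Bigr)=\prod_{i=1}^N\bigl(\exp(a_i)\bigr)^{1/N}.
\]
Taking expectations and invoking the generalized H\"older inequality with exponents $p_1=\cdots=p_N=N$ (so that $\sum_i 1/p_i=1$) gives $\mathrm{E}\bigl[\prod_i(\exp a_i)^{1/N}\bigr]\le\prod_i\bigl(\mathrm{E}[\exp a_i]\bigr)^{1/N}\le\prod_i 2^{1/N}=2$. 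Hence the constraint in \eqref{Orlicz} is satisfied at $t=\prod_i t_i$, and therefore $\norm{\prod_{i=1}^N X_i}_{\psi_{\alpha/N}}\le\prod_{i=1}^N\norm{X_i}_{\psi_\alpha}$, as claimed.

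There is essentially no serious obstacle; the result is a routine consequence of AM--GM together with H\"older. The only point requiring a little care is the justification that $\mathrm{E}[\exp(|X_i|^\alpha/t_i^\alpha)]\le 2$ holds at the infimum $t_i=\norm{X_i}_{\psi_\alpha}$ rather than merely for $t>t_i$; this is where the monotone convergence step enters. If one prefers to avoid it entirely, an equivalent route is to run the whole argument with each $t_i$ replaced by $t_i+\varepsilon$, obtaining $\mathrm{E}\bigl[\exp\bigl(|\prod_i X_i|^{\alpha/N}/(\prod_i(t_i+\varepsilon))^{\alpha/N}\bigr)\bigr]\le 2$, and then letting $\varepsilon\downarrow 0$ by monotone convergence in this final inequality; either way the conclusion follows.
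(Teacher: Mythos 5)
Your proof is correct and follows essentially the same route as the paper: the key step in both is the AM--GM/Young bound $\prod_i a_i^{1/N}\le \frac1N\sum_i a_i$ applied pointwise to $a_i=|X_i|^\alpha/\norm{X_i}_{\psi_\alpha}^\alpha$. The only (cosmetic) differences are that you finish with generalized H\"older with exponents $p_i=N$ while the paper instead applies Young's inequality a second time to $\prod_i(\exp a_i)^{1/N}$ and uses linearity of expectation, and that you spell out the monotone-convergence argument showing the constraint $\mathrm{E}[\exp(|X_i|^\alpha/t_i^\alpha)]\le 2$ holds at the infimum, which the paper absorbs into its normalization $\norm{X_i}_{\psi_\alpha}=1$.
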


\begin{proof}[Proof of Lemma~\ref{lem:Orlicz_prod}]
	The result generalizes Lemma 2.7.7 in \cite{vershynin2018high}, which follows from the Young's inequality as: $\prod_{i=1}^N x_i^{\lambda_i} \leq \sum_{i=1}^N \lambda_i x_i$ for all $x_i,\lambda_i\geq 0, i=1,...,N$ with $\sum_{i=1}^N \lambda_i=1$. Without the loss of generality, we assume that $\norm{X_i}_{\psi_{\alpha}} =1$ for all $i=1,...,N$. Then,
	\begin{align*}
		\mathrm{E}\left[\exp\left(\prod_{i=1}^N |X_i|^{\frac{\alpha}{N}} \right)\right] \leq \mathrm{E}\left[\exp\left(\frac{1}{N} \sum_{i=1}^N |X_i|^{\alpha} \right)\right] &= \mathrm{E}\left[\prod_{i=1}^N \exp\left(\frac{1}{N} |X_i|^{\alpha}\right) \right] \\
		&\leq \frac{1}{N} \sum_{i=1}^N \mathrm{E}\left[\exp\left(|X_i|^{\alpha}\right) \right] \leq 2,
	\end{align*}
	where we apply the Young's inequality to obtain the first and last inequalities.
\end{proof}

\vspace{3mm}

\begin{lemma}[\citealt{giessing2022inequalities}]
	\label{lem:empirical_bound}
	Let $\mathcal{F} \subset L_1(\mathcal{X},\mathcal{A},\mathrm{P})$ and $\varrho$ be a pseudo-metric on $\mathcal{F}\times \mathcal{F}$. For $\eta >0$, we define $\mathcal{F}_{\eta}=\left\{f-g: f,g \in \mathcal{F},\varrho(f,g) \leq \eta \right\}$. Suppose that for $\alpha\in (0,2]$, there exists a constant $K>0$ such that for all $f,g\in \mathcal{F}$,
	$$\norm{\left(f-\mathrm{P}f\right) -\left(g-\mathrm{P} g\right)}_{\mathrm{P},\psi_{\alpha}} \leq K \cdot \varrho(f,g).$$
	Then, the following statements hold true:
	\begin{enumerate}[label=(\alph*)]
		\item If $\alpha \in (0,1]$, then for all $t>0$, with probability at least $1-e^{-t}$,
		\begin{align*}
			\norm{\mathbb{G}_n}_{\mathcal{F}_{\eta}} \leq C_{\alpha} K\left[\int_0^{\eta} \sqrt{\log N(\epsilon,\mathcal{F},\varrho)} \, d\epsilon + n^{-\frac{1}{2}} \int_0^{\eta} \big[\log N(\epsilon,\mathcal{F},\varrho) \big]^{\frac{1}{\alpha}} d\epsilon + \eta\left(\sqrt{t} + n^{-\frac{1}{2}} t^{\frac{1}{\alpha}} \right)\right],
		\end{align*}
		where $C_{\alpha} >0$ is a constant depending only on $\alpha$.
		
		\item If $\alpha \in (1,2]$, then for all $t>0$, with probability at least $1-e^{-t}$,
		\begin{align*}
			\norm{\mathbb{G}_n}_{\mathcal{F}_{\eta}} \leq C_{\alpha} K\left[\int_0^{\eta} \sqrt{\log N(\epsilon,\mathcal{F},\varrho)} \, d\epsilon + n^{-\frac{1}{2}+\frac{1}{\beta}} \int_0^{\eta} \big[\log N(\epsilon,\mathcal{F},\varrho) \big]^{\frac{1}{\alpha}} d\epsilon + \eta\left(\sqrt{t} + n^{-\frac{1}{2}+\frac{1}{\beta}} t^{\frac{1}{\alpha}} \right)\right],
		\end{align*}
		where $C_{\alpha} >0$ is a constant depending only on $\alpha$ and $\frac{1}{\alpha}+\frac{1}{\beta}=1$.
	\end{enumerate}
\end{lemma}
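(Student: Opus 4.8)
The plan is to prove (a) and (b) together by a single generic-chaining argument: first convert the one-summand hypothesis into a two-regime increment bound for $\{\mathbb{G}_n(f):f\in\mathcal{F}\}$, then telescope along nested $\varrho$-nets of $\mathcal{F}$ restricted to the diameter-$\eta$ difference class $\mathcal{F}_\eta$. No symmetrization is needed because we control increments directly. The starting observation is that for $f,g\in\mathcal{F}$ one has $\mathbb{G}_n(f)-\mathbb{G}_n(g)=n^{-1/2}\sum_{i=1}^n\xi_i$ with $\xi_i:=(f-\mathrm{P}f)(X_i)-(g-\mathrm{P}g)(X_i)$ i.i.d., centered, and $\|\xi_1\|_{\mathrm{P},\psi_\alpha}\le K\varrho(f,g)$ by hypothesis. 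A Bernstein--Orlicz inequality for sums of independent centered $\psi_\alpha$ variables (for $\alpha\in(0,1]$ the sub-Weibull Bernstein bound; for $\alpha\in(1,2]$ the corresponding Rosenthal-type estimate) then gives a mixed tail $\mathrm{P}(|\mathbb{G}_n(f)-\mathbb{G}_n(g)|>u)\le 2\exp\big(-c\min\{u^2/(K\varrho)^2,\,(u/(c_nK\varrho))^\alpha\}\big)$, with $c_n=n^{-1/2}$ when $\alpha\in(0,1]$ and $c_n=n^{-1/2+1/\beta}$ when $\alpha\in(1,2]$; this degenerates to a genuinely subgaussian process ($c_n=1$) at $\alpha=2$ and to the classical sub-exponential Bernstein regime ($c_n=n^{-1/2}$) at $\alpha=1$. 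Thus the process has subgaussian increments with respect to $d_2:=K\varrho$ and $\psi_\alpha$ increments with respect to the proportional, $n$-dependent metric $d_\alpha:=c_nK\varrho$.

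Second, I would run the chaining on $\mathcal{F}_\eta$. Fix nested minimal $\varrho$-nets $\mathcal{F}_0\subset\mathcal{F}_1\subset\cdots$ of $\mathcal{F}$ at scales $\eta_k:=\eta 2^{-k}$ with $|\mathcal{F}_k|=N(\eta_k,\mathcal{F},\varrho)$ and projections $\pi_k$; for $f,g$ with $\varrho(f,g)\le\eta$ telescope $\mathbb{G}_n(f)-\mathbb{G}_n(g)=\sum_k[\mathbb{G}_n(\pi_k f)-\mathbb{G}_n(\pi_{k-1}f)]-\sum_k[\mathbb{G}_n(\pi_k g)-\mathbb{G}_n(\pi_{k-1}g)]$, started at scale $\eta$ so that no separate base-link term survives. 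At each scale $k$ union-bound the at most $|\mathcal{F}_k|^2$ links via the mixed tail, allotting each a failure probability $\asymp 2^{-k}e^{-t}/|\mathcal{F}_k|^2$; the corresponding deviation level is $u_k\asymp \eta_{k-1}K\big(\sqrt{\log N(\eta_k,\mathcal{F},\varrho)+k+t}+c_n(\log N(\eta_k,\mathcal{F},\varrho)+k+t)^{1/\alpha}\big)$. Summing $u_k$ over $k$, the subgaussian halves give the geometric-sum bound $\lesssim K\big(\int_0^\eta\sqrt{\log N(\epsilon,\mathcal{F},\varrho)}\,d\epsilon+\eta\sqrt t\big)$, while the $\psi_\alpha$ halves --- for which inverting $\exp(-(u/d_\alpha)^\alpha)=\epsilon$ yields $u\asymp d_\alpha(\log(1/\epsilon))^{1/\alpha}$, hence the extra exponent $1/\alpha$ on the entropy --- give $\lesssim c_nK\big(\int_0^\eta(\log N(\epsilon,\mathcal{F},\varrho))^{1/\alpha}\,d\epsilon+\eta t^{1/\alpha}\big)$. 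Collecting all constants into $C_\alpha$ (which must absorb the Bernstein--Orlicz constants, the chaining geometric-series constants, and, for $\alpha<1$, the constant in the quasi-triangle inequality for $\psi_\alpha$, using the equivalent genuine norm alluded to after \eqref{Orlicz}) yields exactly the displayed bounds (a) with $c_n=n^{-1/2}$ and (b) with $c_n=n^{-1/2+1/\beta}$.

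The main obstacle will be establishing the precise two-regime tail of the partial sum $\sum_i\xi_i$ with the correct exponent in $c_n$ across the full range $\alpha\in(0,2]$ --- in particular pinning down the crossover scale $n^{1/2-1/\alpha}$ for $\alpha\in(1,2)$ so that it collapses correctly to the subgaussian case at $\alpha=2$ and to the sub-exponential Bernstein case at $\alpha=1$; this requires sharp Rosenthal-type moment bounds for sums of $\psi_\alpha$ variables rather than a crude union over moments. A secondary technical point is the bookkeeping needed to make the two tail regimes decouple cleanly through the chaining union bound, so that the final answer splits into two entropy integrals with the stated powers $1/2$ and $1/\alpha$ and no cross terms, together with the observation that $\mathcal{F}_\eta$ is a difference class, so its relevant ``diameter'' in $d_2$ and $d_\alpha$ is controlled by $2\eta$ up to metric constants and the chain can be started at scale $\eta$ rather than at the diameter of $\mathcal{F}$.
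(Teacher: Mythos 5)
There is nothing in the paper to compare your argument against: Lemma~\ref{lem:empirical_bound} is imported verbatim from \cite{giessing2022inequalities} and the supplement gives no proof of it, so your proposal can only be judged on its own terms. On those terms it follows the standard (and, as far as I can tell, the intended) route: reduce the hypothesis to a mixed-tail increment bound for $\mathbb{G}_n(f)-\mathbb{G}_n(g)$ that is sub-Gaussian at scale $K\varrho(f,g)$ and $\psi_\alpha$ at scale $c_nK\varrho(f,g)$ with $c_n=n^{-1/2}$ for $\alpha\in(0,1]$ and $c_n=n^{-1/2+1/\beta}$ for $\alpha\in(1,2]$, then run a dyadic chaining with a union bound at each scale, so that the sub-Gaussian halves produce the Dudley integral and the $\psi_\alpha$ halves produce the entropy integral with exponent $1/\alpha$ scaled by $c_n$. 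Your identification of the crossover scale ($n^{1/2-1/\alpha}$ for $\alpha\in(1,2]$, collapsing correctly at $\alpha=1$ and $\alpha=2$) is right, and it is exactly what Gluskin--Kwapie\'n/Lata\l{}a-type moment bounds $\norm{\sum_i\xi_i}_p\lesssim_\alpha\sqrt{pn}\,\norm{\xi}_{\psi_\alpha}+p^{1/\alpha}n^{1/\beta}\norm{\xi}_{\psi_\alpha}$ deliver after converting moments to tails; you correctly flag that this one-dimensional inequality is the genuine content of the lemma and must be supplied (it is precisely what the cited reference provides), not re-derived by a crude union over moments.

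Two points need repair before the sketch is a proof. First, your telescoping identity is incomplete: summing $\mathbb{G}_n(\pi_kf)-\mathbb{G}_n(\pi_{k-1}f)$ over $k\ge1$ recovers $\mathbb{G}_n(f)-\mathbb{G}_n(\pi_0f)$, so the coarsest-scale link $\mathbb{G}_n(\pi_0f)-\mathbb{G}_n(\pi_0g)$ does survive and must be bounded separately by a union bound over the at most $N(\eta,\mathcal{F},\varrho)^2$ pairs of net points at distance $\lesssim3\eta$; this is harmless, since its contribution $\eta\bigl(\sqrt{\log N(\eta,\mathcal{F},\varrho)+t}+c_n(\log N(\eta,\mathcal{F},\varrho)+t)^{1/\alpha}\bigr)$ is absorbed by the monotonicity bound $\eta\sqrt{\log N(\eta,\mathcal{F},\varrho)}\le\int_0^\eta\sqrt{\log N(\epsilon,\mathcal{F},\varrho)}\,d\epsilon$ (and its $1/\alpha$ analogue) plus the stated $\eta(\sqrt{t}+c_nt^{1/\alpha})$ term, but the claim that ``no base-link term survives'' is wrong as written. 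Second, chaining over an infinite class requires either separability of $\mathcal{F}$ (so the chain converges pointwise a.s.) or a statement in outer probability via finite subclasses and a limiting argument, and for $\alpha<1$ the quasi-norm constant in the $\psi_\alpha$ triangle inequality must be tracked into $C_\alpha$, as you note. With these adjustments your outline reproduces both parts (a) and (b) of the lemma.
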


\begin{remark}[\citealt{giessing2021inference}]
\label{remark:index_class}
If $\mathcal{F}=\{f_t:t\in T\}$ and $\norm{\left(f_s-\mathrm{P} f_s\right) - \left(f_t -\mathrm{P}f_t\right)}_{\mathrm{P},\psi_{\alpha}} \leq K\cdot \varrho(s,t)$ for all $f_s,f_t\in \mathcal{F}$ and a pseudo-metric $\varrho$ on $T\times T$, where $K>0$ is some constant, then we can replace $N(\epsilon,\mathcal{F},\varrho)$ by $N(\epsilon,T,\varrho)$.
\end{remark}

\end{document}